\documentclass[11pt,letterpaper]{article}
\usepackage[letterpaper,left=1in,right=1in,top=1in,bottom=1.0in,footskip=0.4in]{geometry}

\usepackage{enumitem} 
\setlist{noitemsep,topsep=0pt,parsep=0pt} 

\usepackage[margin=1cm]{caption} 
\usepackage{todonotes}
\usepackage{subcaption}
\usepackage{lineno}
\usepackage{amsmath,amsthm}
\usepackage{graphicx}
\usepackage{amsfonts}
\usepackage{enumitem}
\usepackage{mathtools}
\usepackage{empheq}
\usepackage{enumitem}
\usepackage{float}
\usepackage{tikz}
\usetikzlibrary{positioning, shapes.misc}
\usepackage{mathtools}
\usepackage{nicematrix}
\usepackage{comment}
\usepackage{tikz}
\usetikzlibrary{arrows,backgrounds,calc,fit,decorations.pathreplacing,decorations.markings,shapes.geometric}

\tikzstyle{internal}=[circle,draw,fill=gray!30,minimum size=6pt,inner sep=0pt]

\tikzstyle{external} = [shape=circle]

\usepackage{subcaption}
\tikzset{every fit/.append style=text badly centered}

\usepackage{tyson}
\usepackage{braket}
\usepackage{bm}

\usepackage[bookmarks=true,hypertexnames=false]{hyperref}
\hypersetup{colorlinks=true, citecolor=blue, linkcolor=red, urlcolor=blue}
\newcommand{\Holant}{\operatorname{Holant}}
\newcommand{\PlHolant}{\operatorname{Pl-Holant}}
\newcommand{\holant}[2]{\ensuremath{\Holant\left(#1\mid #2\right)}}
\newcommand{\plholant}[2]{\ensuremath{\PlHolant\left(#1\mid #2\right)}}

\newcommand{\GH}{\operatorname{\#GH}}
\newcommand{\CSP}
{\operatorname{\#CSP}}
\newcommand{\PlCSP}{\operatorname{Pl-\#CSP}}

\newcommand{\BiGH}{\operatorname{Bi-\#GH}}

\newcommand{\numP}{{\rm \#P}}

\newcommand{\M}{\mathscr{M}}
\newcommand{\ii}{\mathfrak{i}}

\newenvironment{remark}{\medskip{\bfseries \noindent Remark:}}{\par\medskip}{\par\medskip}

\renewcommand{\thefigure}{\thesection.\arabic{figure}}

\newenvironment{claimproof}[1]{%
  \begin{proof}[\textbf{Proof of Claim #1}]%
  \def\customqed{End of Proof Claim #1}%
}{%
  \end{proof}%
}

\def\partIPlCSP2SecNum{5}
\def\partIPl2CSPSecPageNum{24}

\def\borderColor{blue!60}
\def\scale{0.6}
\def\nodeDist{1.4cm}

\title{New Planar Algorithms and a Full Complexity Classification of the Eight-Vertex Model}

\author{
Jin-Yi Cai\thanks{Department of Computer Sciences,
University of Wisconsin-Madison.}\\
\texttt{jyc@cs.wics.edu}
\and
Austen Fan\footnotemark[1]\\
\texttt{afan@cs.wisc.edu}
\and
Shuai Shao\thanks{School of Computer Science and Technology \& Hefei National Laboratory, University of Science and Technology of China. Supported by the Quantum Science and Technology $-$ National Science and Technology Major Project (QNMP), 2021ZD0302901.}\\
\texttt{shao10@ustc.edu.cn}
\and
Zhuxiao Tang\footnotemark[1]\\
\texttt{zztang@wisc.edu}
}

\date{} 

\begin{document}
\maketitle



\pagenumbering{arabic}

\thispagestyle{empty}
\begin{abstract}
We prove a complete complexity classification theorem for the planar eight-vertex model.
For \emph{every} parameter setting 
in ${\mathbb C}$ for the eight-vertex model,
the partition function is
either (1) computable in 
P-time for every graph, or
(2) \#P-hard for general graphs but computable in 
P-time for
planar graphs, or
(3) \#P-hard  even for planar graphs.
The classification has an explicit criterion.
In (2), we discover new P-time computable eight-vertex models
on planar graphs beyond 
Kasteleyn's algorithm 
for counting planar perfect matchings.\footnote{This is also known as the FKT algorithm.
Fisher and  Temperley~\cite{TF61}, and Kasteleyn~\cite{Kasteleyn1961} independently discovered
this algorithm on the grid graph. Subsequently Kasteleyn~\cite{Kasteleyn1967} 
generalized this to all
planar graphs.}  
They are obtained by a combinatorial transformation to the planar {\sc Even Coloring} problem followed by a holographic transformation to the tractable cases in the planar six-vertex model.
%
In the process, we also encounter non-local connections between the planar eight vertex model and the bipartite Ising model, conformal lattice interpolation and M\"{o}bius transformation
from complex analysis. The proof also makes use of cyclotomic fields.
\end{abstract}

\newpage
\setcounter{page}{1}
\tableofcontents
\addtocontents{toc}{\protect\setcounter{tocdepth}{2}}

\newpage

\setcounter{figure}{0}

\section{Introduction}

\renewcommand{\thefigure}{\arabic{figure}}

The study of partition functions in statistical mechanics for various physical models has a long history.
One considers a set of particles connected by bonds, with physical laws imposing various local constraints on  valid local configurations, associated with suitable weights. 
A global configuration $\sigma$ is valid if it satisfies
all local constraints, and the weight $w(\sigma)$ is the product of local weights.
The partition function is then the sum of the weights $w(\sigma)$ over all valid global configurations
$\sigma$.

Well-known examples of partition functions from physics
include the Ising model, the Potts model, the hardcore model, the
{six-vertex model}, and the eight-vertex model.
Most of these are \emph{spin systems}~\cite{jerrum-sinclair,
goldberg-jerrum-patterson,goldberg-jerrum-potts,lu-ying-li}, where each particle takes a spin value such as $+$ or $-$, modeled as a Boolean variable, and local interactions are expressed via binary (edge) constraint functions. 
These models are nicely captured by the \emph{counting constraint satisfaction problem} (\#CSP) framework~\cite{bulatov,dyer-richerby, bulatov2012csp, cai-chen-lu-nonnegative-csp, caichen}.
Some physical systems are more naturally formulated as orientation problems,
and these can be modeled by the more general \emph{Holant problems}~\cite{Cai-Lu-Xia}, of which
\#CSP is a special case.
In this paper, we study the eight vertex model, which is an orientation problem.

\begin{figure}[h!]
\centering
\begin{subfigure}[b]{0.115\linewidth}
\centering\includegraphics[width=\linewidth]{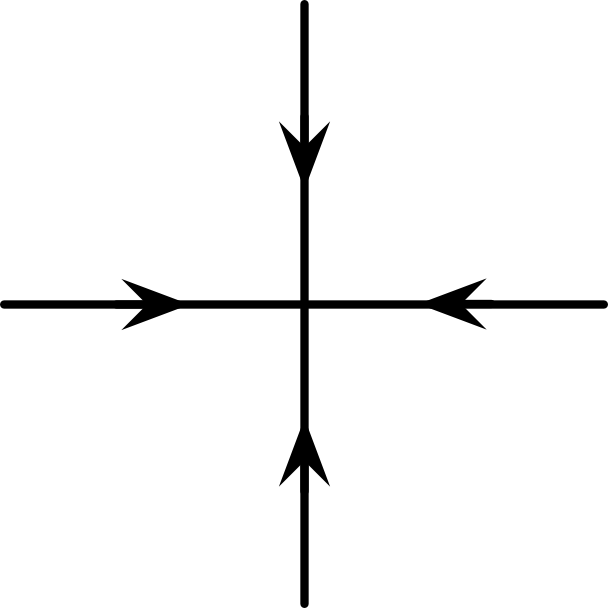}\caption{$w_1$}
\label{fig:orientations_1}
\end{subfigure}
\begin{subfigure}[b]{0.115\linewidth}
\centering\includegraphics[width=\linewidth]{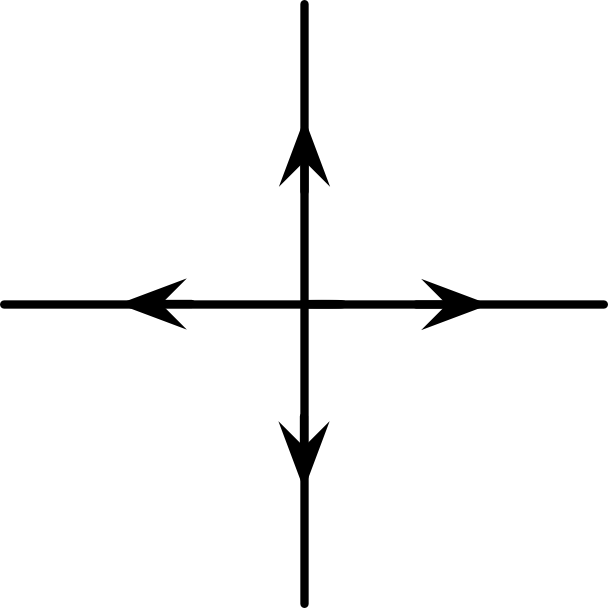}\caption{$w_2$}
\label{fig:orientations_2}
\end{subfigure}
\begin{subfigure}[b]{0.115\linewidth}
\centering\includegraphics[width=\linewidth]{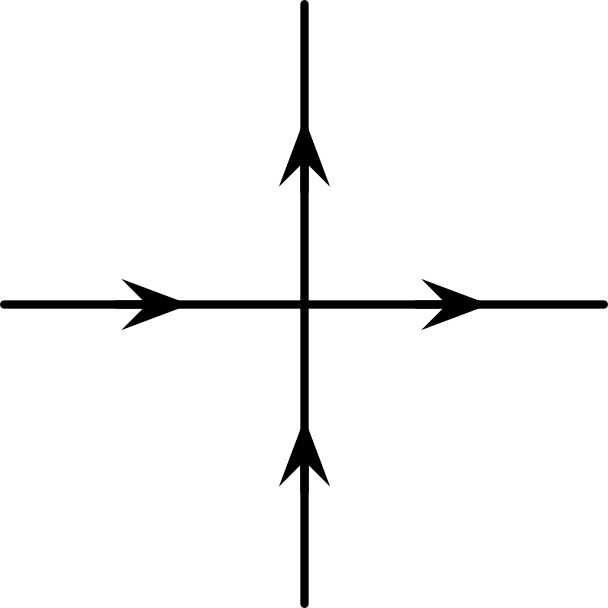}\caption{$w_3$}
\label{fig:orientations_3}
\end{subfigure}
\begin{subfigure}[b]{0.115\linewidth}
\centering\includegraphics[width=\linewidth]{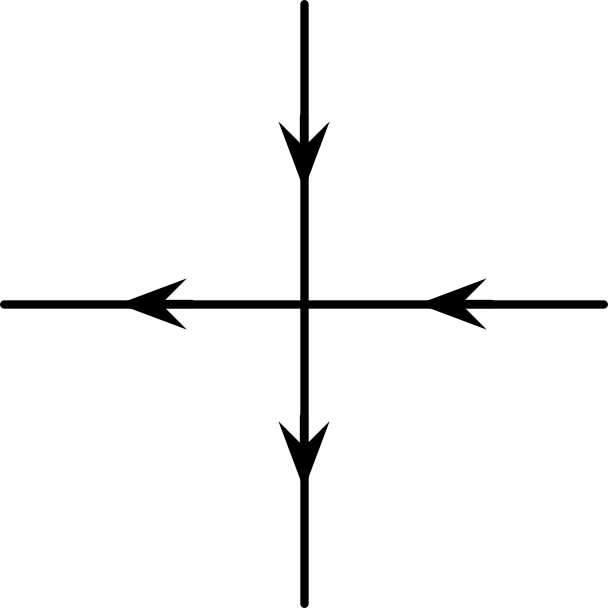}\caption{$w_4$}
\label{fig:orientations_4}
\end{subfigure}
\begin{subfigure}[b]{0.115\linewidth}
\centering\includegraphics[width=\linewidth]{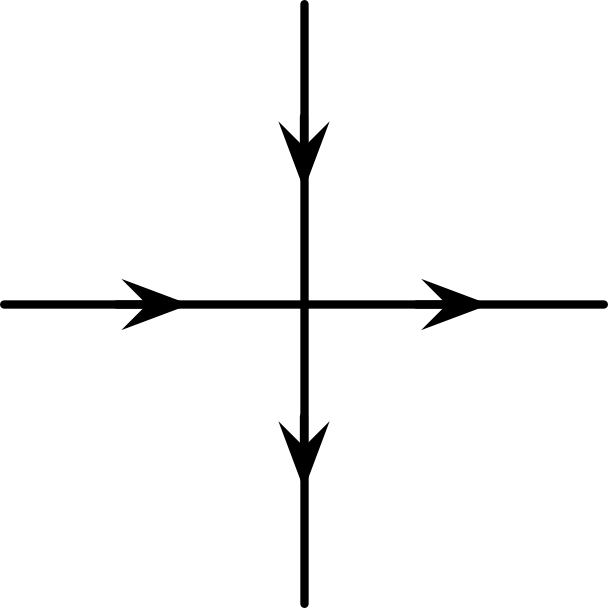}\caption{$w_5$}
\label{fig:orientations_5}
\end{subfigure}
\begin{subfigure}[b]{0.115\linewidth}
\centering\includegraphics[width=\linewidth]{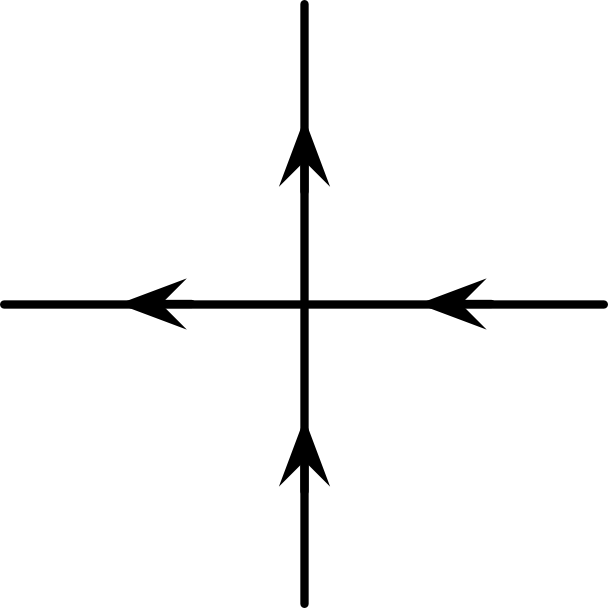}\caption{$w_6$}
\label{fig:orientations_6}
\end{subfigure}
\begin{subfigure}[b]{0.115\linewidth}
\centering\includegraphics[width=\linewidth]{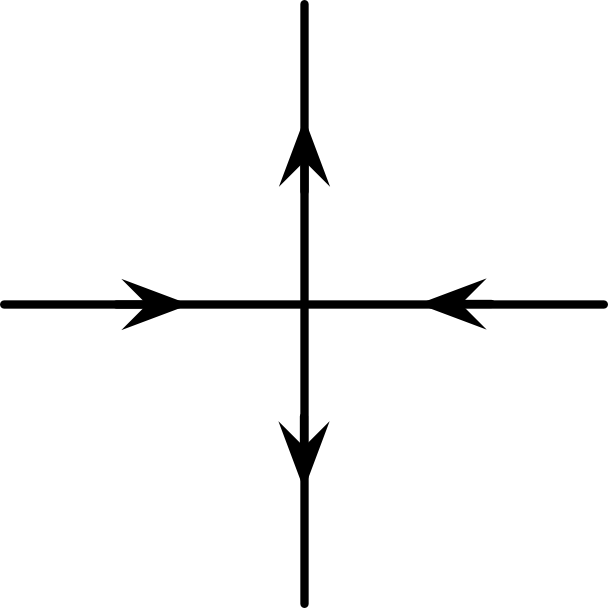}\caption{$w_7$}
\label{fig:orientations_7}
\end{subfigure}
\begin{subfigure}[b]{0.115\linewidth}
\centering\includegraphics[width=\linewidth]{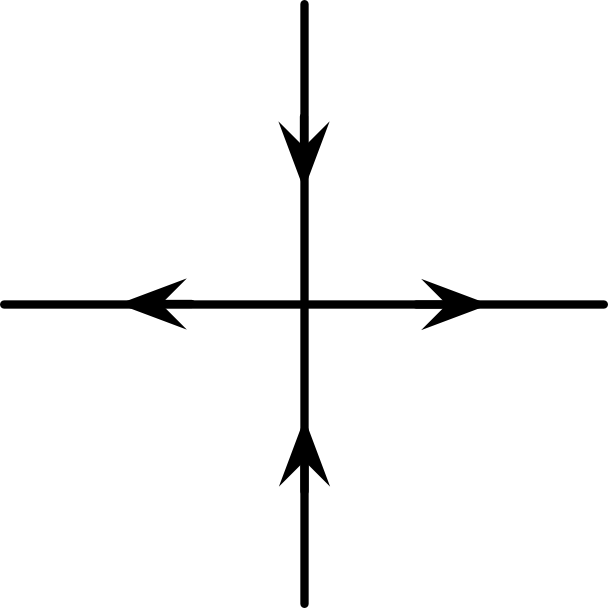}\caption{$w_8$}
\label{fig:orientations_8}
\end{subfigure}
\caption{Valid configurations of the eight-vertex model.}\label{fig: eight vertex orientations}
\end{figure}

The eight-vertex model was first introduced by Sutherland~\cite{sutherland1970two} and by Fan and  Wu~\cite{fan1970general}. 
It generalizes the six-vertex model~\cite{Pauling}, encompassing notable  physical systems such as the water-ice model, the KDP model, and the Rys F-model (for antiferroelectrics).
Traditionally, the model is defined on a square lattice with specific boundary conditions. 
More generally we consider the model on
4-regular graphs.
Given a 4-regular graph $G=(V, E)$, an orientation $\sigma$ assigns a direction to each edge.  An orientation is valid for the eight-vertex model if every vertex has even in-degree (equivalently, even out-degree). 
Such orientations are called even orientations.
The valid local configurations at each vertex, {\bf eight} in total, are illustrated in Figure~\ref{fig: eight vertex orientations}.
Each configuration type $i$ is associated with a (possibly complex) parameter $w_i$, called its Boltzmann weight.\footnote{In classical statistical mechanics, the weights $w_i$ are typically non-negative real numbers. However, complex weights arise naturally in quantum settings. For example, the partition function with complex parameters is closely related to quantum circuit output amplitudes~\cite{de2011quantum}.}
Given an even orientation $\sigma$, let $n_i$ denote the number of vertices in configuration type $i$. The weight of $\sigma$ is defined as $w(\sigma) = \prod_{1 \leq i \leq 8} w_i^{n_i}$. The partition function of the eight-vertex model on $G$ is then
\[ Z_{\rm 8V}(G) = \sum_{\sigma \in {\mathscr{O}}(G)} w(\sigma), \]
where ${\mathscr{O}}(G)$ denotes the set of all even orientations of $G$.
The six-vertex model arises as a special case where $w_1 = w_2 = 0$, i.e., only configurations with in-degree (and out-degree) exactly 2 are allowed; this corresponds to Eulerian orientations.

Computing the partition function is, in general, a \#P-hard problem due to the exponential number of configurations. However, for certain special parameter settings 
(and these are among the most interesting cases),
the partition function can be computed in polynomial time; such cases are called tractable.
In particular, over planar structures physicists
had discovered some remarkable algorithms, such as
the FKT algorithm~\cite{TF61, Kasteleyn1961, Kasteleyn1967}.
From the perspective of computational complexity, the goal is to classify all parameter regimes into either tractable or \#P-hard cases (with nothing in between!).
Without taking into account for planarity,
a complexity dichotomy theorem
for the eight-vertex model on general graphs has been established~\cite{CaiF17}. 
In this classification, some tractable cases for planar structures, including
the FKT cases, are \#P-hard (on general graphs). 
%
Previously, a complete complexity classification~\cite{CaiFS21} has been proved for the six-vertex model including for the planar graphs: the problem is
either (1) tractable for general graphs, or
(2)  tractable for
planar graphs (but \#P-hard for general graphs), or 
(3) \#P-hard  even for planar graphs.
The planar tractable case (2) includes those solvable via holographic transformations and the FKT algorithm.
The central question is this:
\emph{Are there further tractable planar cases of the eight-vertex model yet to be discovered, and
can we classify all eight-vertex models?}


In this paper, we answer this question affirmatively.
We identify a previously unknown tractable type for the planar eight-vertex model. 
This signifies that to achieve a
complete complexity classification will be challenging.
Nonetheless, we establish a complete classification: the eight-vertex model is either   (1) tractable for general graphs, or
(2)  tractable for
planar graphs (but \#P-hard for general graphs), or 
(3) \#P-hard  even for planar graphs.
The planar tractable case (2) includes those solvable by holographic transformations and the FKT algorithm, all the known tractable cases of the planar six-vertex model, and a newly discovered tractable type.

The new tractable type is discovered in a highly symmetric setting, namely when $w_1=w_2=a$, $w_3=w_4=b$, $w_5=w_6=c$, and $w_7=w_8=d$ (Lemma~\ref{lm: b=pm c, d=pm 1}).
This case also has a physical interpretation, as it corresponds to the ``zero field case" of the eight-vertex model~\cite{baxter1971eightvertex}.
In this setting, by a local ``combinatorial transformation", the planar eight-vertex model is equivalent to the planar {\sc even coloring} problem, which asks for the (weighted) number of green-red edge colorings of a planar graph such that every vertex has an even number of incident green edges.
After that, we apply a holographic transformation by $HZ=\frac{1}{2}\left[\begin{smallmatrix}   
1+\mathfrak{i} & 1-\mathfrak{i}\\
   1- \mathfrak{i} &1+\mathfrak{i}
\end{smallmatrix}\right]$, transforming the {\sc even coloring} problem back into an orientation problem.
For suitable parameters $a,b,c,d$, this orientation problem turns out to be a tractable case of the planar six-vertex model (which is provably not subsumed by holographic transformation plus FKT algorithm).
The above ``combinatorial transformation" plus holographic transformation paradigm also has the potential to find new tractable cases in general counting even orientation problems on planar graphs.

The classification of the six-vertex and eight-vertex models is not only  interesting
in its own right, but more importantly, it serves as foundational building blocks in  the
broader classification program for Holant problems.
The dichotomy for the eight-vertex model without the planarity constraint~\cite{CaiF17}
was already used as a crucial basic case for the  dichotomy of real-valued Holant problems on general graphs~\cite{real-holant-focs2021}.
 However, from the very beginning of the Holant framework~\cite{Val08}, {\bf planar} tractability, enabled by tools such as the FKT algorithm, has played a central role.
For symmetric constraint functions (a.k.a. signatures),
a complete classification has been achieved, including the planar setting~\cite{cai2015holant}.
However, most signatures are not symmetric, and a full understanding of Holant problems must account for asymmetric signatures.
Holographic algorithms ultimately rely on the FKT algorithm, which is for a Holant problem (namely, counting perfect matchings), and yet it has been shown to serve as a “universal” algorithmic
 paradigm for planar \#CSP~\cite{caifu16}.
But \cite{cai2015holant} shows an intriguing fact:
For Holant problem themselves, FKT is  \emph{not}  universal.
The previous work~\cite{CaiFS21} and our current work reinforce this by showing that FKT is not universal for Holant problems with asymmetric signatures.
The results proved in this paper will provide further building blocks for classifying the complexity of planar Holant problems with asymmetric signatures.

\section{Preliminaries}
Let $\mathfrak{i}$ denote a square root of $-1$.
Let $\zeta_m$ denote $e^{\frac{2\pi \mathfrak{i}}{m}}$ for $m\in \Z_+$.
Let 
$H = \frac{1}{\sqrt{2}}\left[\begin{smallmatrix} 1 & 1 \\ 1 & -1 \end{smallmatrix} \right]$, $Z = \frac{1}{\sqrt{2}}\left[\begin{smallmatrix} 1 & 1 \\ \mathfrak{i} & -\mathfrak{i} \end{smallmatrix} \right]$.
We denote by $\le_T^p$ and $\equiv_T^p$  polynomial time Turing reduction and equivalence, respectively.

\subsection{Definitions and notations}

A {\it constraint function} $f$, or a {\it signature}, of arity $k$
is a map $\{0,1\}^k  \rightarrow \mathbb{C}$.
Fix a set $\mathcal{F}$ of constraint functions. 
A {\it signature grid}
$\Omega=(G, \pi)$
 is a tuple, where $G = (V,E)$
is a graph called the {\it underlying graph} of $\Omega$, $\pi$ labels each $v\in V$ with a function
$f_v\in\mathcal{F}$ of arity ${\operatorname{deg}(v)}$,
and labels the incident edges
$E(v)$ at $v$ with input variables of $f_v$.
We consider all 0-1 edge assignments $\sigma$ on $E$,
each gives an evaluation
$\prod_{v\in V}f_v(\sigma|_{E(v)})$, where $\sigma|_{E(v)}$
denotes the restriction of $\sigma$ to $E(v)$. 
The counting problem on an instance $\Omega$ is to compute the {\it partition function}
\begin{equation}\label{eq: def of Holant value}
\Holant_{\Omega}=\Holant({\Omega};\mathcal{F})=\sum_{\sigma:E\rightarrow\{0, 1\}}\prod_{v\in V}f_v(\sigma|_{E_{(v)}}).
\end{equation}
The Holant problem $\Omega \mapsto \Holant_{\Omega}$ is a computational problem parameterized by a set $\mathcal{F}$ and is denoted by Holant$(\mathcal{F})$.
If $\mathcal{F}=\{f\}$ is a single set, for simplicity, we write $\{f\}$ as $f$ directly and also write $\{\mathcal{F}\} \cup \{g\}$ as $\mathcal{F}\cup g$ or $\mathcal{F},g$.
We denote
$\PlHolant(\mathcal{F})$ by the restriction of $\PlHolant(\mathcal{F})$ to planar graphs $G$.
We use $\holant{\mathcal{F}}{\mathcal{G}}$ to denote the Holant problem over signature grids with a bipartite graph $H = (U,V,E)$,
where each vertex in $U$ or $V$ is assigned a signature in $\mathcal{F}$ or $\mathcal{G}$
respectively.
$\plholant{\mathcal{F}}{\mathcal{G}}$ denotes its planar restriction.

Counting constraint satisfaction problems (\#CSP)
can be defined as a special case of Holant problems.
An instance of $\CSP(\mathcal{F})$ is presented
as a bipartite graph.
There is one node for each variable and for each occurrence
of constraint functions respectively.
Connect a constraint node to  a variable node if the
variable appears in that occurrence
of constraint, with a labeling on the edges
for the order of these variables.
This bipartite graph is also known as the \emph{constraint graph}.
If we attach each variable node with an \textsc{Equality} function,
and consider every edge as a variable, then
the \#CSP is just the Holant problem on this bipartite graph.
Thus
$\CSP(\mathcal{F}) \equiv_T^p \holant{\mathcal{EQ}}{\mathcal{F}}$,
where $\mathcal{EQ} = \{{=}_1, {=}_2, {=}_3, \dotsc\}$ is the set of \textsc{Equality} signatures of all arities.
Similarly, $\CSP^2(\mathcal{F})$ can be defined as $\holant{\mathcal{EQ}_2}{\mathcal{F}}$, where $\mathcal{EQ}_2=\{=_2,=_4,\ldots\}$ is the set of {\sc Equality} signatures of even arities, i.e., every variable appears an even number of times.

A spin system on $G = (V, E)$ has a variable for every $v \in V$
and a binary function $g$ for every edge $e \in E$.
The partition function is $$\sum_{\sigma: V \rightarrow\{0, 1\}}
\prod_{(u, v) \in E} g(\sigma(u), \sigma(v)).$$
Spin systems are special cases of $\CSP(\mathcal{F})$ where $\mathcal{F}$ consists of a single binary function.
They are also called graph homomorphisms.
We denote by $\GH(g)$ the spin system with the constraint function $g$, and $\BiGH(g)$ by its restriction on bipartite graphs.


\begin{figure}[h!]
\centering
\includegraphics[width=0.3\linewidth]{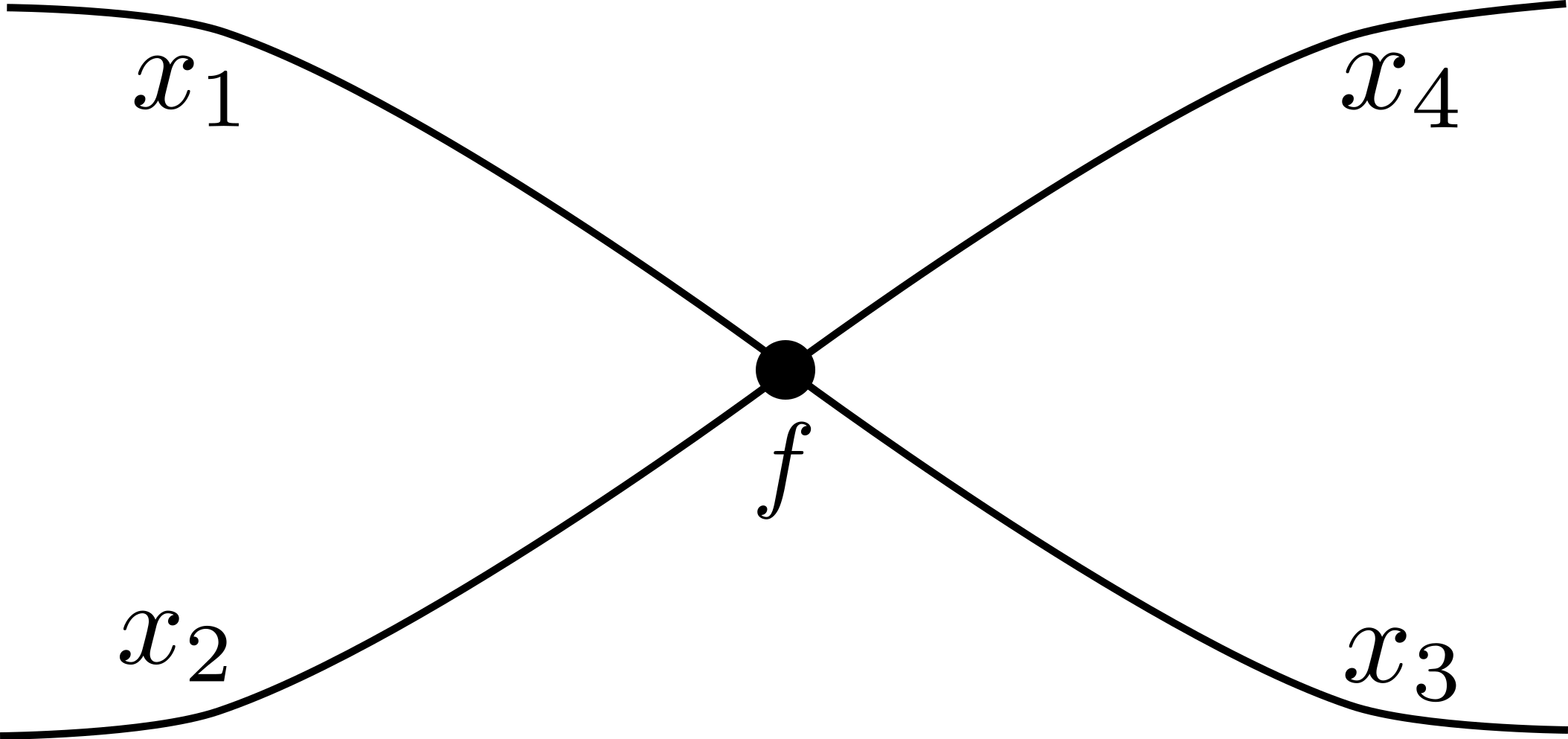}
\caption{A 4-ary signature $f$ with four edge inputs labeled counterclockwise as $x_1, x_2, x_3, x_4$.}
\label{fig:4ary-signature}
\end{figure}

For a 4-ary signature $f$, we consistently label its four edge inputs as $x_1,x_2,x_3,x_4$ in counterclockwise order, as shown in Figure~\ref{fig:4ary-signature}.
A signature $f$ of arity 4 has the signature matrix $M(f)=M_{x_{1} x_{2}, x_{4} x_{3}}(f)=\left[\begin{smallmatrix}f_{0000} & f_{0010} & f_{0001} & f_{0011} \\ f_{0100} & f_{0110} & f_{0101} & f_{0111} \\ f_{1000} & f_{1010} & f_{1001} & f_{1011} \\ f_{1100} & f_{1110} & f_{1101} & f_{1111}\end{smallmatrix}\right]$. 
Notice the order reversal $x_4x_3$. This is for the convenience of
connecting two signatures $f$ and $g$ in a planar fashion: the variables $x_4, x_3$ of $f$ are connected to $x_1, x_2$ of $g$  respectively. If $(i,j,k,\ell)$ is a 
permutation of $(1,2,3,4)$,
then the $4 \times 4$ matrix $M_{x_ix_j, x_\ell x_{k}}(f)$ lists the 16 values
 with row  index $x_ix_j \in\{0, 1\}^2$
and column index
$x_{\ell}x_{k} \in\{0, 1\}^2$ in lexicographic order.
We say that $f$ satisfies \emph{arrow reversal equality}, if $f(x_1,x_2,x_3,x_4)=f(\overline{x_1},\overline{x_2},\overline{x_3},\overline{x_4})$ for any $x_1,x_2,x_3,x_4\in \{0,1\}$, where $\overline{x_i}$ is the complement of $x_i$.
A binary signature $g$ has the signature vector
$(g_{00}, g_{01}, g_{10}, g_{11})^T$ (as a truth table)
and the signature matrix $M(g)=M_{x_1, x_2}(g)=\left[
\begin{smallmatrix}
g_{00} & g_{01}\\
g_{10} & g_{11}\\
\end{smallmatrix}
\right].$
We use $\neq_2$ to denote the binary {\sc Disequality} signature $(0,1, 1, 0)^{T}$; it
has the signature matrix 
$\left[\begin{smallmatrix}
0 & 1 \\
 1 & 0
\end{smallmatrix}\right]$.
Let
$N=
\left[\begin{smallmatrix}
0 & 1 \\
 1 & 0
\end{smallmatrix}\right]
\otimes
\left[\begin{smallmatrix}
0 & 1 \\
 1 & 0
\end{smallmatrix}\right]
=
\left[\begin{smallmatrix}
 0 & 0 & 0 & 1 \\
 0 & 0 & 1 & 0 \\
 0 & 1 & 0 & 0 \\
 1 & 0 & 0 & 0 \\
\end{smallmatrix}\right]$; it is 
the signature matrix of the double {\sc Disequality}, which is the function of connecting two pairs of edges by $\neq_2$. 
The \textbf{planar eight-vertex model} is the problem $\plholant{\neq_2}{f}$,
where $M(f)=
\left[\begin{smallmatrix}
a & 0 & 0 & b \\
0 & c & d & 0 \\
0 & w & z & 0 \\
y & 0 & 0 & x
\end{smallmatrix}\right]$.
The \emph{outer matrix}  of $M(f)$ 
is the submatrix
$
\left[\begin{smallmatrix}
a & b\\
y & x
\end{smallmatrix}\right]$, and is 
 denoted by $M_{\rm{Out}}(f)$.
The \emph{inner matrix} of $M(f)$ is
$
\left[\begin{smallmatrix}
c & d\\
w & z
\end{smallmatrix}\right]$, and is
 denoted by $M_{\rm{In}}(f)$.
The \emph{support} of a signature $f$ is the set of inputs on which $f$ is nonzero.

Let $\mathcal{S}$ denote the \emph{crossover} signature, i.e., the 4-ary signature with the signature matrix $M(\mathcal{S}) = \left[\begin{smallmatrix} 
1 & 0 & 0 & 0 \\ 
0 & 0 & 1 & 0 \\
0 & 1 & 0 & 0 \\ 
0 & 0 & 0 & 1  \end{smallmatrix}\right]$,
which is the indicator function of $(x_1=x_3)\land(x_2=x_4)$.
Getting $\mathcal{S}$ in a signature set $\mathcal{F}$ makes the problems
$\holant{=_2}{\mathcal{F}}$ and $\plholant{=_2}{\mathcal{F}}$
equivalent.
Also, let $\mathcal{S'}$ be the \emph{disequality crossover} signature, i.e., the 4-ary signature with the signature matrix $M(\mathcal{S}') = \left[\begin{smallmatrix} 
0 & 0 & 0 & 1 \\ 
0 & 1 & 0 & 0 \\
0 & 0 & 1 & 0 \\ 
1 & 0 & 0 & 0  \end{smallmatrix}\right]$, which is the indicator function of $(x_1 \ne x_3)\land(x_2 \ne x_4)$.
Note that $M(\mathcal{S})=M(\mathcal{S'})N$.

\begin{lemma}\label{lm: swap gate}
Let $\mathcal{F},\mathcal{G}$ be two arbitrary signature sets, then
$$
\holant{\mathcal{F}}{\mathcal{G}} \leq_T^p \plholant{\mathcal{F},(=_2)^{\otimes2}}{\mathcal{G,S}}.
$$
\end{lemma}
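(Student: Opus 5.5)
Given an instance $\Omega$ of $\holant{\mathcal{F}}{\mathcal{G}}$ with bipartite underlying graph $H=(U,V,E)$, we draw $H$ in the plane in the standard way. Vertices are placed at distinct points and edges are drawn as polygonal curves in general position, so that any two edges meet in finitely many transversal crossings. No three edges pass through a common crossing, and no crossing occurs at a vertex. Such a drawing with polynomially many crossings is computable in polynomial time; for instance, place vertices on a circle and draw edges as chords, which gives at most $\binom{|E|}{2}$ crossings. We then planarize by replacing every crossing point with a new vertex carrying $\mathcal{S}$. The two strands of the crossing must be assigned to the inputs of $\mathcal{S}$ so that opposite inputs $x_1,x_3$ belong to one strand and $x_2,x_4$ to the other. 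In the counterclockwise cyclic order around the crossing, the two strands alternate, so this assignment is consistent with the planar embedding. Since $\mathcal{S}$ is the indicator of $(x_1=x_3)\land(x_2=x_4)$, each crossing vertex just transmits the values along both strands.

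The difficulty is preserving bipartiteness. Vertices of $\mathcal{S}$ lie on the right side ($\mathcal{G}$-side), so every edge segment of the new graph must join a left vertex to a right vertex. An original edge $e=(u,v)$ with $u\in U$, $v\in V$ that passes through crossings $c_1,\dots,c_k$ (in order from $u$ to $v$) is now subdivided into the path $u,c_1,c_2,\dots,c_k,v$. The segment $u c_1$ is fine, but $c_i c_{i+1}$ and $c_k v$ join two right-side vertices. We fix this with $(=_2)^{\otimes 2}$, a left-side 4-ary signature equal to two independent copies of $=_2$. On each edge $e$ we pair consecutive bad segments: between $c_1,c_2$, between $c_2,c_3$, and so on. For each crossing $c$, we place one vertex of $(=_2)^{\otimes 2}$ near $c$. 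It intercepts the segment of strand $1$ leaving $c$ in the forward direction and the segment of strand $2$ leaving $c$ in its forward direction, inserting one $=_2$ on each. The two variables of one copy of $=_2$ must be adjacent in the cyclic order of the $(=_2)^{\otimes 2}$ vertex, i.e.\ $(x_1,x_2)$ and $(x_3,x_4)$ as in the matrix convention. Geometrically, this vertex sits in the corner region of $c$ between the two outgoing half-strands. Both segments are split there, in a small neighborhood, without creating crossings.

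We then check that every segment alternates sides. Along edge $e$, the sequence becomes
\[
u,\ c_1,\ (=),\ c_2,\ (=),\ \dots,\ c_k,\ (=),\ v,
\]
where each $(=)$ is a copy of $=_2$ supplied by the $(=_2)^{\otimes 2}$ vertex attached to the preceding crossing. This alternates right and left correctly, since $v\in V$ is on the right. A single $=_2$ on a segment does not change the value, because it merely identifies its two endpoints' edge variables.

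Every crossing contributes exactly one forward outgoing segment per strand. Hence the $(=_2)^{\otimes 2}$ gadgets are in bijection with the crossings, and each one covers exactly the two bad segments it needs. The resulting planar bipartite signature grid has Holant value equal to $\Holant_\Omega$ and polynomial size, which gives the reduction.

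The main obstacle is the local planar bookkeeping. We must verify that the $(=_2)^{\otimes 2}$ vertex can be placed in the angular sector at $c$ bounded by the two forward half-strands, with its inputs in cyclic order pairing $x_1$ with $x_2$ and $x_3$ with $x_4$. This works because the sector is a face corner of the planarized drawing, so both half-segments can be rerouted through the new vertex without introducing crossings.
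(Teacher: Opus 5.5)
Your construction is correct and is essentially the paper's: the paper places the $\mathcal{F}$-vertices on the left and the $\mathcal{G}$-vertices on the right, replaces each crossing by $\mathcal{S}$, and attaches one $(=_2)^{\otimes 2}$ on the right side of each crossing, which is exactly your insertion of $=_2$ on the forward (toward-$V$) outgoing segment of each strand. Your extra bookkeeping spells out what the paper leaves implicit: the two forward half-strands at a crossing are always cyclically adjacent, since opposite half-edges belong to the same strand, and the sequence $u, c_1, (=), \dots, c_k, (=), v$ alternates sides.
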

\begin{proof}
Let $\Omega$ be an instance of $\holant{\mathcal{F}}{\mathcal{G}}$.
We place those vertices in $\Omega$ labeled by signatures in $\mathcal{F}$ on the left and those labeled by signatures in $\mathcal{G}$ on the right. 
Without loss of generality, we may assume that no more than two edges in $\Omega$ intersect at one point. 
We then construct an instance
 $\Omega'$ of $\plholant{\mathcal{F},(=_2)^{\otimes2}}{\mathcal{G,S}}$, by replacing every intersection point with a crossover signature $\mathcal{S}$, plus a $(=_2)^{\otimes2}$ connected on the right.
This doesn't change the Holant value and preserves bipartiteness.
\end{proof}

\begin{lemma}\label{lm:disequality crossover}
    Let $\mathcal{F}$ be an arbitrary signature set, then 
    $\holant{\neq_2}{\mathcal{F}}\le_T^p\plholant{\neq_2}{\mathcal{F\cup S'}}$.
\end{lemma}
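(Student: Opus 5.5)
The plan is to mimic the proof of Lemma~\ref{lm: swap gate}. The difference is that the crossing gadget $\mathcal{S}'$ negates the bit that passes through it, and we will absorb that negation into extra $\neq_2$ vertices. Let $\Omega$ be an instance of $\holant{\neq_2}{\mathcal{F}}$. Every left vertex has degree $2$ and carries $\neq_2$, so $\Omega$ can be read as a graph $G$ whose vertices carry signatures from $\mathcal{F}$, with each edge of $G$ subdivided once by a $\neq_2$ vertex. An edge of $G$ with endpoints $u,v$ then forces the bit at $v$ to be the complement of the bit at $u$.

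First, draw $G$ in the plane in general position. Vertices are placed at distinct points, and edges are drawn as, say, straight segments or polygonal arcs. Every crossing is a transversal crossing of exactly two edges in their interiors, and no crossing lies at a vertex. There are at most $O(|E|^2)$ crossings, and such a drawing can be computed in polynomial time.

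Second, consider an edge $e=(u,v)$ of $G$ whose drawing meets crossing points $p_1,\dots,p_k$ in order from $u$ to $v$. Put an $\mathcal{S}'$ vertex (on the right side) at each crossing point. Replace $e$ by the path
\[ u - (\neq_2) - p_1 - (\neq_2) - p_2 - \cdots - p_k - (\neq_2) - v. \]
This uses $k+1$ disequality vertices on the left side, so the graph stays bipartite. At a crossing point, the two crossing edges enter and leave through opposite positions. Since the inputs of a 4-ary signature are labeled counterclockwise, opposite positions are exactly the pairs $(x_1,x_3)$ and $(x_2,x_4)$. We therefore label the four half-edges at $p_j$ so that the two halves of each edge are assigned to one of these pairs. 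Then $\mathcal{S}'$ imposes $x_1\neq x_3$ and $x_2\neq x_4$, and the two crossing edges do not interact. The resulting instance $\Omega'$ is planar and belongs to $\plholant{\neq_2}{\mathcal{F}\cup\mathcal{S}'}$.

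Third, check that $\Holant_{\Omega'}=\Holant_{\Omega}$. Along the path replacing $e$ there are $k+1$ disequalities from $\neq_2$ and $k$ disequalities from $\mathcal{S}'$, giving $2k+1$ negations in total. Every assignment with nonzero weight therefore forces the bit at $v$ to be the complement of the bit at $u$. Each gadget contributes weight $1$, and all intermediate bits are determined by the bit at $u$. So the map from nonzero-weight assignments of $\Omega'$ to those of $\Omega$ is a weight-preserving bijection, which gives the reduction.

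The only delicate point, which I expect to be the main obstacle, is the parity bookkeeping: one must verify that the input positions of $\mathcal{S}'$ match the geometry of the crossing, namely that opposite inputs are the paired ones. Once that is confirmed, the argument is identical to that of Lemma~\ref{lm: swap gate}.
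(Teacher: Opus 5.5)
Your proof is correct and is essentially the paper's construction: an $\mathcal{S}'$ at each crossing, and one extra $\neq_2$ per crossing to keep the grid bipartite and the parity right. The paper checks this by first inserting $\mathcal{S}$ and then using $M(\mathcal{S})=M(\mathcal{S}')N$, whereas you count the $2k+1$ negations along each edge directly, which amounts to the same thing.
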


\begin{proof}
    Let $\Omega$ be an instance of $\holant{\neq_2}{\mathcal{F}}$.
    As in Lemma~\ref{lm: swap gate}, we place those vertices in $\Omega$ labeled by $\neq_2$ on the left and those labeled by signatures in $\mathcal{F}$ on the right, assuming that no more than two edges in $\Omega$ intersect at one point.
    For any point where two edges in $\Omega$ intersect, replace it with a crossover signature $\mathcal{S}$.
    This doesn't change the Holant value.
    Since $M(\mathcal{S})=M(\mathcal{S'})N$, if we further replace every $\mathcal{S}$ with a disequality crossover $\mathcal{S}'$ connected with a $(\neq_2)^{\otimes2}$ on the right, the Holant value remains the same.
    Thus, we obtain an instance $\Omega'$ of $\plholant{\neq_2}{\mathcal{F\cup S'}}$, and $\holant{\Omega;\neq_2}{\mathcal{F}}=\plholant{\Omega';\neq_2}{\mathcal{F\cup S'}}.$
    The lemma follows.
\end{proof}

\begin{remark}
    By Lemma~\ref{lm: swap gate} and Lemma~\ref{lm:disequality crossover}, once we realize $\mathcal{S}$ on RHS and $(=_2)^{\otimes2}$ on LHS, or we realize $\mathcal{S'}$ on RHS, then we can reduce the $\Holant$ problem to the corresponding $\PlHolant$ problem.
    In particular, the \numP-hardness for the $\Holant$ problem applies to the corresponding $\PlHolant$ problem.
\end{remark}

\begin{lemma}\label{lm:let a=x}
Let $f$ be a 4-ary signature with the signature matrix $M(f) = \left[\begin{smallmatrix} a & 0 & 0 & b \\ 0 & c & d & 0 \\ 0 & w & z & 0 \\ y & 0 & 0 &x \end{smallmatrix} \right]$, then 
$$
\plholant{\neq_2}{f} \equiv^p_{T} \plholant{\neq_2}{\Tilde{f}},
$$
where $M(\Tilde{f}) = \left[\begin{smallmatrix} \Tilde{a} & 0 & 0 & b \\ 0 & c & d & 0 \\ 0 & w & z & 0 \\ y & 0 & 0 & \Tilde{a} \end{smallmatrix} \right]$ with $\Tilde{a} = \sqrt{ax}$.
\end{lemma}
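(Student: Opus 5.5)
The plan is a global counting argument. In any planar eight-vertex instance, the number of vertices at which $f$ takes value $a$ equals the number at which it takes value $x$, so $a$ and $x$ occur only through the product $ax$. Write $\Omega$ for an instance of $\plholant{\neq_2}{f}$ with underlying bipartite graph $H=(U,V,E)$. Each $u\in U$ carries $\neq_2$, so every edge of the original 4-regular graph is subdivided by a disequality. Equivalently, each assignment $\sigma$ contributing a nonzero term gives an orientation: on each subdivided edge, one half has value $0$ and the other value $1$. At a vertex $v\in V$ with signature $f$, the input $0000$ (weight $a$) and the input $1111$ (weight $x$) are then the two ``all in'' and ``all out'' configurations, while the other supported inputs have exactly two $0$s and two $1$s.

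The key step is a counting identity. Fix such a $\sigma$. Sum, over all $v\in V$, the number of $1$s among the four edges at $v$. Each edge of the original graph contributes exactly one $1$ to this total, coming from whichever endpoint half receives value $1$ under $\neq_2$. Hence the total is $|U| = 2|V|$, since each vertex of $V$ has degree $4$ and each vertex of $U$ has degree $2$. On the other hand, the total is $4n_x + 0\cdot n_a + 2(|V|-n_a-n_x)$, where $n_a$ and $n_x$ count the vertices at inputs $0000$ and $1111$ respectively. Setting the two expressions equal gives $2|V| = 2|V| + 2n_x - 2n_a$, so $n_a=n_x$. Every term $w(\sigma)$ therefore contains $a^{n_a}x^{n_a}=(ax)^{n_a}$. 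This equals $\tilde a^{2n_a}=\tilde a^{n_a}\tilde a^{n_x}$ for $\tilde a=\sqrt{ax}$, with either choice of square root. All other weights $b,c,d,w,z,y$ are unchanged.

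The conclusion follows because both signatures are evaluated on the same grid. Replacing $f$ by $\tilde f$ everywhere in $\Omega$ leaves the graph, and hence planarity, unchanged and preserves every term, so $\Holant_\Omega$ on the two sides is identical. This gives both directions of $\equiv^p_T$ at once, since the instance map is the identity. The only point needing care is the counting identity, in particular that edges not between $U$ and $V$ do not occur, which bipartiteness guarantees; I do not expect a real obstacle.
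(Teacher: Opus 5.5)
Your proof is correct and follows essentially the same route as the paper: the paper notes that every valid orientation has equally many sources and sinks, so $\plholant{\Omega;\neq_2}{f}$ depends on $a$ and $x$ only through $ax$, and then substitutes $\tilde a=\sqrt{ax}$ on the identical instance. Your counting identity $\sum_{v\in V}\#\{\text{1s at } v\} = |U| = 2|V|$ spells out the sources-equal-sinks fact, which the paper states without proof.
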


\begin{proof}
For any signature grid $\Omega$
with a 4-regular graph $G$ for $\plholant{\neq_2}{f}$, any valid orientation on $G$ must have an equal number of sources and sinks.
Hence the value $\plholant{\Omega;\neq_2}{f}$
as a polynomial in $a$ and $x$ is in fact a polynomial in the product $ax$.
So we can replace $(a,x)$ by any $(\tilde{a}, \tilde{x})$
such that $\tilde{a}\tilde{x} = ax$.
In particular,
let $\tilde{f}$  be a 4-ary signature with the signature matrix
$M(\tilde{f})=\left[\begin{smallmatrix}
\tilde{a} & 0 & 0 & b\\
0 & c & d & 0\\
0 & w & z & 0\\
y & 0 & 0 & \tilde{a}
\end{smallmatrix}\right]$ where $\tilde{a}=\sqrt{ax}$,
then
$\plholant{\Omega;\neq_2}{f}=\plholant{\Omega;\neq_2}{\tilde{f}}$.
Thus, $\plholant{\neq_2}{f}\equiv^p_{T} \plholant{\neq_2}{\tilde{f}}.
$
\end{proof}

\begin{definition}\label{def: redundant}
 A 4-ary signature is non-singular redundant iff in one of its  four $4\times4$ signature matrices,
  the middle two rows are identical and the
 middle two columns are identical, and the determinant
\[ \det \left[\begin{matrix}
f_{0000} & f_{0010} & f_{0011}\\
f_{0100} & f_{0110} & f_{0111}\\
f_{1100} & f_{1110} & f_{1111}
\end{matrix}\right] \not = 0. \]
 \end{definition}
\begin{theorem}\cite{caiguowilliams13}
\label{thm:redundant}
If $f$ is a non-singular redundant signature, 
 then $\operatorname{Pl-Holant}(\neq_2|f)$ is $\SHARPP$-hard. 
\end{theorem}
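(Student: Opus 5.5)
The statement is the theorem of Cai, Guo and Williams~\cite{caiguowilliams13}. The plan is to show that a non-singular redundant $f$ lets us simulate, in the planar setting, a family of binary or symmetric signatures whose Holant problem is known to be \#P-hard on planar graphs.

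\textbf{Reduction to a symmetric problem.} Assume the redundant matrix is $M_{x_1x_2,x_4x_3}(f)$, possibly after rotating labels, which preserves planarity. The middle two rows of this matrix coincide, and so do the middle two columns. Hence $f$ depends on the pair $(x_1,x_2)$ only through the Hamming weight $x_1+x_2\in\{0,1,2\}$, and likewise for $(x_3,x_4)$. So $f$ is determined by the compressed $3\times 3$ matrix $\tilde M$ of the definition, indexed by weights, and $\det\tilde M\neq 0$.

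Now pair the two edges $x_1,x_2$ of each copy of $f$ and view them as one "thick" edge carrying a weight in $\{0,1,2\}$; do the same for $x_3,x_4$. A chain of such gadgets, with $\neq_2$ inserted on each strand, is planar, since the two strands run side by side. Composing $k$ copies yields the compressed matrix $(\tilde M D)^k$, where $D$ encodes the $\neq_2$ connections. On weights, $D$ is the reversal $w\mapsto 2-w$, with multiplicity $2$ in the middle entry coming from the two weight-1 inputs. Since $\tilde M$ is invertible, $\tilde M D$ is invertible.

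\textbf{Interpolation.} We use the standard polynomial interpolation along the powers $(\tilde M D)^k$, after a Jordan decomposition $\tilde M D=P\Lambda P^{-1}$. This realizes, in the interpolation sense, every signature of the form $P\,\mathrm{diag}(\lambda_1^{k},\lambda_2^{k},\lambda_3^{k})P^{-1}$, and by linearity any $P\,\mathrm{diag}(t_1,t_2,t_3)P^{-1}$, provided the eigenvalues satisfy no multiplicative lattice relations. In particular we can then realize a 4-ary signature depending only on the weights, with an arbitrary prescribed compressed matrix commuting with $\tilde MD$. From that family we aim to obtain a symmetric 4-ary signature, for example $[1,0,t,0,1]$-type or $[a,0,1,0,b]$-type up to the $\neq_2$ setting. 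Its planar Holant is \#P-hard by the planar dichotomy for symmetric signatures, or alternatively by reduction to counting planar Eulerian orientations or planar perfect matchings with weights via a known hard case.

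\textbf{Main obstacle.} The hardest step is when eigenvalue ratios are roots of unity, or when $\tilde MD$ is not diagonalizable. There the interpolation lattice degenerates. We would handle these cases by first modifying $f$ with a small planar gadget, such as connecting two copies of $f$ in a different rotation, to perturb the eigenvalues. Failing that, we would use the finitely many realizable powers directly and check hardness of the resulting specific signatures case by case. Non-singularity guarantees that all eigenvalues are nonzero, which makes this finite case analysis possible.
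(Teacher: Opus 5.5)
The paper does not prove this statement; it imports it from Cai--Guo--Williams. Judged on its own, your sketch has a genuine gap at its central step. With $D=\left[\begin{smallmatrix}0&0&1\\0&2&0\\1&0&0\end{smallmatrix}\right]$, chains of $f$ along the pairs plus interpolation give only redundant $h$ with $\tilde M_hD\in\mathbb{C}[\tilde MD]$; it is $\tilde M_hD$, not $\tilde M_h$, that commutes with $\tilde MD$. When the eigenvalues of $\tilde MD$ are distinct, this family is the span of $D^{-1},\tilde M,\tilde MD\tilde M$. A symmetric signature $[f_0,\dots,f_4]$ has Hankel compressed matrix $(f_{i+j})_{0\le i,j\le 2}$. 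That is four linear conditions on a three-dimensional family, and generically the only solution is $0$. For instance, with $\tilde M=\left[\begin{smallmatrix}1&1&0\\0&1&0\\0&0&1\end{smallmatrix}\right]$, the Hankel conditions on $\alpha D^{-1}+\beta\tilde M+\gamma\tilde MD\tilde M$ force $\alpha=\beta=\gamma=0$. So ``from that family we aim to obtain a symmetric signature'' has no mechanism behind it.

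The degenerate cases are not a side issue either. In general the only rotation preserving redundancy is by $\pi$, and it just replaces $\tilde M$ by $\tilde M^{T}$. Take $\tilde M=D^{-1}$, i.e.\ $M(f)=\left[\begin{smallmatrix}0&0&0&1\\0&\frac12&\frac12&0\\0&\frac12&\frac12&0\\1&0&0&0\end{smallmatrix}\right]$, which is nonsingular redundant with $\tilde MD=I$ and $f^{\pi}=f$. Every chain of copies of $f$ and $f^{\pi}$ is then $f$ again, so no perturbation occurs, and your fallback of checking the realizable signatures directly is just the theorem for this $f$. Moreover, matrices $\tilde MD$ of finite projective order come in continuous families, so there is no finite case analysis. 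A working argument needs gadgets that split the two edges of a pair (equivalently, that use $f^{\pi/2}$, which is not redundant), and the proposal contains none.

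Second, your hardness source is circular. For $t\neq 0,\pm1$, the compressed matrix of $[1,0,t,0,1]$ has determinant $t(1-t^2)\neq 0$, so that signature is itself a nonsingular redundant signature. The planar dichotomy for symmetric signatures gets its hardness for such signatures from precisely this theorem. Weighted planar perfect matchings are in P by FKT, so they cannot serve as a hard base either. A proof needs two things. First, an independently established planar \#P-hard problem: for example, counting Eulerian orientations of planar 4-regular graphs, $\plholant{\neq_2}{[0,0,1,0,0]}$. This is equivalent under $Z$ to $\PlHolant([3,0,1,0,3])$ and is hard via the medial-graph link to the Tutte polynomial at $(3,3)$ and Vertigan's planar hardness result. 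Second, a reduction from that problem that works for every nonsingular $\tilde M$, including the case $\tilde MD=I$ above. Neither is supplied.
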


\subsection{Gadget construction}
One basic notion used throughout the paper is gadget construction.
We say a signature $f$ is \emph{constructible} or \emph{realizable} from a signature set $\mathcal{F}$
if there is a gadget with some dangling edges such that each vertex is assigned a signature from $\mathcal{F}$,
and the resulting graph,
when viewed as a black-box signature with inputs on the dangling edges,
is exactly $f$.
If $f$ is realizable from a set $\mathcal{F}$,
then we can freely add $f$ into $\mathcal{F}$ while preserving the complexity.
\begin{figure}[!htbp]
 \centering
 \begin{tikzpicture}[scale=\scale,transform shape,node distance=\nodeDist,semithick]
  \node[external]  (0)                     {};
  \node[internal]  (1) [below right of=0]  {};
  \node[external]  (2) [below left  of=1]  {};
  \node[internal]  (3) [above       of=1]  {};
  \node[internal]  (4) [right       of=3]  {};
  \node[internal]  (5) [below       of=4]  {};
  \node[internal]  (6) [below right of=5]  {};
  \node[internal]  (7) [right       of=6]  {};
  \node[internal]  (8) [below       of=6]  {};
  \node[internal]  (9) [below       of=8]  {};
  \node[internal] (10) [right       of=9]  {};
  \node[internal] (11) [above right of=6]  {};
  \node[internal] (12) [below left  of=8]  {};
  \node[internal] (13) [left        of=8]  {};
  \node[internal] (14) [below left  of=13] {};
  \node[external] (15) [left        of=14] {};
  \node[internal] (16) [below left  of=5]  {};
  \path let
         \p1 = (15),
         \p2 = (0)
        in
         node[external] (17) at (\x1, \y2) {};
  \path let
         \p1 = (15),
         \p2 = (2)
        in
         node[external] (18) at (\x1, \y2) {};
  \node[external] (19) [right of=7]  {};
  \node[external] (20) [right of=10] {};
  \path (1) edge                             (5)
            edge[bend left]                 (11)
            edge[bend right]                (13)
            edge node[near start] (e1) {}   (17)
            edge node[near start] (e2) {}   (18)
        (3) edge                             (4)
        (4) edge[out=-45,in=45]              (8)
        (5) edge[bend right, looseness=0.5] (13)
            edge[bend right, looseness=0.5]  (6)
        (6) edge[bend left]                  (8)
            edge[bend left]                  (7)
            edge[bend left]                 (14)
        (7) edge node[near start] (e3) {}   (19)
       (10) edge[bend right, looseness=0.5] (12)
            edge[bend left,  looseness=0.5] (11)
            edge node[near start] (e4) {}   (20)
       (12) edge[bend left]                 (16)
       (14) edge node[near start] (e5) {}   (15)
            edge[bend right]                (12)
       (16) edge[bend left,  looseness=0.5]  (9)
            edge[bend right, looseness=0.5]  (3);
  \begin{pgfonlayer}{background}
   \node[draw=\borderColor,thick,rounded corners,fit = (3) (4) (9) (e1) (e2) (e3) (e4) (e5),inner sep=0pt,transform shape=false] {};
  \end{pgfonlayer}
 \end{tikzpicture}
 \caption{An $\mathcal{F}$-gate with 5 dangling edges.}
 \label{fig:Fgate}
\end{figure}
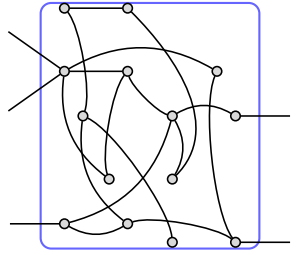

Formally,
this notion is defined by an $\mathcal{F}$-gate.
An $\mathcal{F}$-gate is similar to a signature grid $(G, \pi)$ for $\Holant(\mathcal{F})$ except that $G = (V,E,D)$ is a graph with some dangling edges $D$.
The dangling edges define external variables for the $\mathcal{F}$-gate
(see Figure~\ref{fig:Fgate} for an example).
We denote the regular edges in $E$ by $1, 2, \dotsc, m$ and the dangling edges in $D$ by $m+1, \dotsc, m+n$.
Then we can define a function $f$ for this $\mathcal{F}$-gate as
\[
f(y_1, \dotsc, y_n) = \sum_{x_1, \dotsc, x_m \in \{0, 1\}} H(x_1, \dotsc, x_m, y_1, \dotsc, y_n),
\]
where $(y_1, \dotsc, y_n) \in \{0, 1\}^n$ is an assignment on the dangling edges
and $H(x_1, \dotsc, x_m, y_1, \dotsc, y_n)$ is the value of the signature grid on an assignment of all edges in $G$,
which is the product of evaluations at all vertices in $V$.
We also call this function $f$ the signature of the $\mathcal{F}$-gate.

An $\mathcal{F}$-gate is planar if the underlying graph $G$ is a planar graph,
and the dangling edges,
ordered counterclockwise corresponding to the order of the input variables,
are in the outer face in a planar embedding.
A planar $\mathcal{F}$-gate can be used in a planar signature grid as if it is just a single vertex with the particular signature.

Using  planar $\mathcal{F}$-gates,
we can reduce one planar Holant problem to another.
Suppose $g$ is the signature of some planar $\mathcal{F}$-gate.
Then $\PlHolant(\mathcal{F}, g) \leq_T \PlHolant(\mathcal{F})$.
The reduction is simple.
Given an instance of $\PlHolant(\mathcal{F}, g)$,
by replacing every occurrence of $g$ by the $\mathcal{F}$-gate,
we get an instance of $\PlHolant(\mathcal{F})$.
Since the signature of the $\mathcal{F}$-gate is $g$,
the Holant values for these two signature grids are identical.

The simplest gadget construction is just \textbf{rotation}, i.e., cyclically renaming the variables of $f$.
The signature $f$ with matrix
$M(f)=M_{x_{1} x_{2}, x_{4} x_{3}}(f)=\left[\begin{smallmatrix}
a & 0 & 0 & b \\
0 & c & d & 0 \\
0 & w & z & 0 \\
y & 0 & 0 & x
\end{smallmatrix}\right]$
has other three rotated forms $f^{\frac{\pi}{2}},f^{\pi},f^{\frac{3\pi}{2}}$,
having signature matrices 
$M(f^{\frac{\pi}{2}}):=M_{x_2x_3, x_1x_4}(f) = \left[\begin{smallmatrix} a & 0 & 0 & z \\ 0 & b & w & 0 \\ 0 & d & y & 0 \\ c & 0 & 0 & x  \end{smallmatrix}\right]$, 
$M(f^{{\pi}}):=M_{x_3x_4, x_2x_1}(f) = \left[\begin{smallmatrix} a & 0 & 0 & y \\ 0 & z & d & 0 \\ 0 & w & c & 0 \\ b & 0 & 0 & x  \end{smallmatrix}\right]$, 
and $M(f^{\frac{3\pi}{2}}):=M_{x_4x_1, x_3x_2}(f) = \left[\begin{smallmatrix} a & 0 & 0 & c \\ 0 & y & w & 0 \\ 0 & d & b & 0 \\ z & 0 & 0 & x  \end{smallmatrix}\right]$, respectively.
One we have one of them, from it we can realize the other three.
In the proof, we denote this fact by \emph{rotational symmetry}.
The movement of signature entries under rotation by $\frac{\pi}{2}$ is illustrated
in Figure~\ref{fig:rotate_asymmetric_signature}.
Note that no matter in which signature matrix, the pair $(d, w)$ (and only
$(d, w)$) is always in the inner matrix. 
We call $(d, w)$ the inner pair, and $(b, y)$, $(c, z)$ 
the outer pairs. 

\begin{figure}[h]
 \centering
 \def\capWidth{6cm}
 \captionsetup[subfigure]{width=\capWidth}
 \tikzstyle{entry} = [internal, inner sep=2pt]
   \begin{tikzpicture}[scale=\scale,transform shape,>=stealth,node distance=\nodeDist,semithick]
    \node[entry] (11)               {};
    \node[entry] (12) [right of=11] {};
    \node[entry] (13) [right of=12] {};
    \node[entry] (14) [right of=13] {};
    \node[entry] (21) [below of=11] {};
    \node[entry] (22) [right of=21] {};
    \node[entry] (23) [right of=22] {};
    \node[entry] (24) [right of=23] {};
    \node[entry] (31) [below of=21] {};
    \node[entry] (32) [right of=31] {};
    \node[entry] (33) [right of=32] {};
    \node[entry] (34) [right of=33] {};
    \node[entry] (41) [below of=31] {};
    \node[entry] (42) [right of=41] {};
    \node[entry] (43) [right of=42] {};
    \node[entry] (44) [right of=43] {};
    \node[external] (nw) [above left  of=11] {};
    \node[external] (ne) [above right of=14] {};
    \node[external] (sw) [below left  of=41] {};
    \node[external] (se) [below right of=44] {};
    \path (13) edge[->, dotted]                (12)
          (12) edge[->, dotted]                (21)
          (21) edge[->, dotted]                (31)
          (31) edge[->, dotted,out=65,in=-155] (13)
          (42) edge[->, dashed]                (43)
          (43) edge[->, dashed]                (34)
          (34) edge[->, dashed]                (24)
          (24) edge[->, dashed,out=-115,in=25] (42)
          (14) edge[->, very thick]            (22)
          (22) edge[->, very thick]            (41)
          (41) edge[->, very thick]            (33)
          (33) edge[->, very thick]            (14)
          (23) edge[<->]                      (32);
    \path (nw.west) edge (sw.west)
          (ne.east) edge (se.east)
          (nw.west) edge (nw.east)
          (sw.west) edge (sw.east)
          (ne.west) edge (ne.east)
          (se.west) edge (se.east);
   \end{tikzpicture}
 \caption{The movement of the entries in the signature matrix of an arity~4 signature under a clockwise rotation of the input edges.
  The Hamming weight two entries are in the two solid cycles (one has length 4 and the other one is a swap).}
 \label{fig:rotate_asymmetric_signature}
\end{figure}
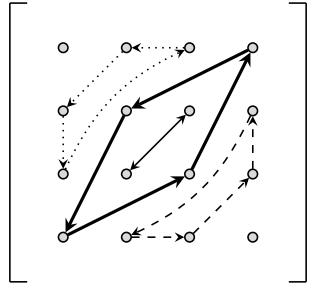

The most commonly used gadget in this paper is \textbf{connection} between two 4-ary signatures via $(\neq_2)^{\otimes2}$.
Suppose $f_1$ and $f_2$ have signature matrices
 $ M_{x_ix_j, x_\ell x_{k}}(f_1)$ and $
M_{x_{s}x_{t}, x_{v}x_{u}}(f_2)$, where $(i, j, k, \ell)$ and $(s, t, u, v)$ are rotations of $(1, 2, 3, 4)$.
By connecting $x_\ell$ with $x_{s}$, $x_{k}$ with $x_t$,
both using  {\sc Disequality}  $(\not =_2)$, we get a signature $f_3$
of arity 4 with the signature matrix
$$M(f_3)=M_{x_ix_j, x_\ell x_{k}}(f_1) N M_{x_{s}x_{t}, x_{v}x_{u}}(f_2)$$
by matrix product
with row index $x_ix_j$ and column index $x_{v}x_{u}$ (see Figure \ref{111}).

\begin{figure}[!htbp]
\centering
                \includegraphics[height=1.4in]{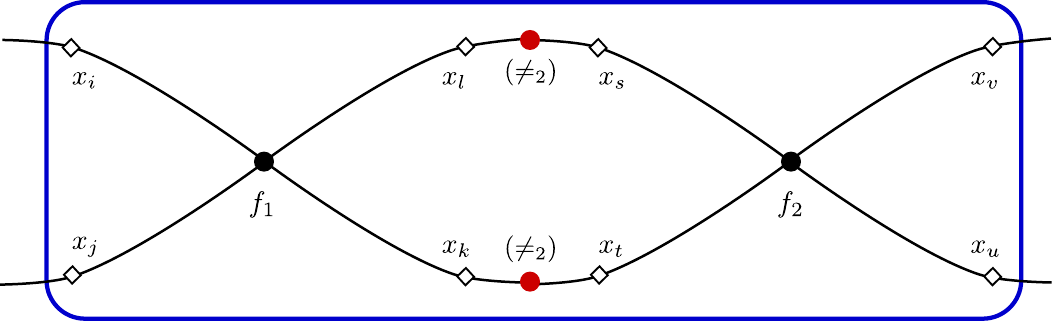}
        \caption{Connect variables $x_\ell$, $x_k$ of $f_1$ with variables $x_s$, $x_t$ of $f_2$ both using $\neq_2$.}
        \label{111}
        \end{figure}
Since this connection preserves planarity and bipartiteness,
for any signature set 
$\mathcal{F}$ containing $f_1$ and $f_2$, we have $\plholant{\neq_2}{\mathcal{F}\cup f_3}\le_T^p\plholant{\neq_2}{\mathcal{F}}$.
In the following, when we say connecting signatures $f$ and $g$ via $(\neq_2)^{\otimes2}$, we mean the above gadget where variables $x_4,x_3$ of $f$ are connected with variables $x_1,x_2$ of $g$ using $\neq_2$, respectively.

Another commonly used gadget is \textbf{looping} by a binary signature. 
Let $g$ be a binary signature with the signature vector $g(x_1, x_2)=(g_{00}, g_{01}, g_{10}, g_{11})^T$, 
and also $g(x_2, x_1)=(g_{00}, g_{10}, g_{01}, g_{11})^T$.  
Without other specification, $g$  denotes $g(x_1, x_2)$. 
Let $f$  be a signature of arity $4$ with the signature matrix
 $M_{x_ix_j, x_\ell x_k}(f)$ and $(s, t)$ be a permutation of $(1, 2)$. 
By connecting $x_\ell$ with $x_s$ and $x_{k}$ with $x_t$, both using {\sc Disequality} $(\neq_{2})$, 
 we get a binary signature with the signature matrix $M_{x_i x_j, x_k x_{\ell}}Ng{(x_s, x_t)}$ as a matrix product with index $x_i x_j$ (see Figure \ref{222}). 
 \begin{figure}[!htbp]
 \centering
		\includegraphics[height=1.4in]{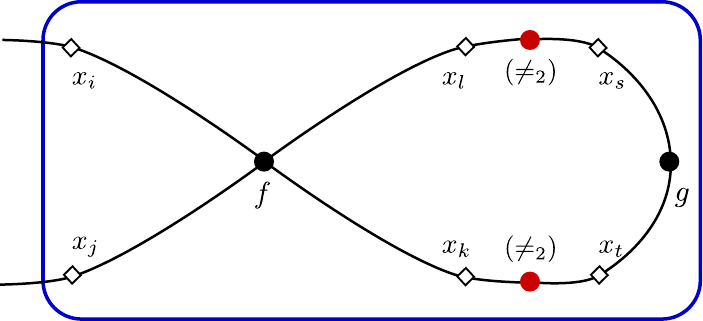}
	\caption{Connect variables $x_\ell$, $x_k$ of $f$ with variables $x_s$, $x_t$ of $g$ both using $\neq_2$.}\label{222}
	\end{figure}
 If $g_{00}=g_{11}$, 
 then $N(g_{00}, g_{01}, g_{10}, g_{11})^T=(g_{11}, g_{10}, g_{01}, g_{00})^T=(g_{00}, g_{10}, g_{01}, g_{11})^T$, and similarly, $N(g_{00}, g_{10}, g_{01}, g_{11})^T=(g_{00}, g_{01}, g_{10}, g_{11})^T$. 
 Therefore, $M_{x_i x_j, x_\ell x_k}Ng{(x_s, x_t)}=M_{x_i x_j, x_\ell x_{k}}g{(x_t, x_s)}$, 
 which means that
 connecting variables $x_\ell$, $x_k$ of $f$ with, respectively,
 variables $x_s$, $x_t$ of $g$
 using $N$ is equivalent to connecting them directly without $N$. 
Hence, in the setting
Pl-Holant$( \not =_2 \mid f, g)$ we can form
$M_{x_i x_j, x_\ell x_{k}}(f)g{(x_t, x_s)}$,
which is technically $M_{x_i x_j, x_\ell x_{k}}Ng{(x_s, x_t)}$,
provided that $g_{00}=g_{11}$.
Note that for a  binary signature $g$,
we can rotate it by $\pi$ without violating
planarity, and so  both $g(x_s, x_t)$ and $g(x_t, x_s)$ can be freely used once we get one of them.

The last kind of gadget construction we want to emphasize is \textbf{binary modification}.
A binary modification to the variable $x_i$ of $f(x_i,x_j,x_k,x_l)$ using the binary signature $g(x_s,x_t)$ means connecting the variable $x_i$ of $f$ to the variable $x_t$ of $g$ by {\sc Disequality} $(\neq_2)$ (see Figure~\ref{figure: binary_modification}).

 \begin{figure}[!htbp]
 \centering
		\includegraphics[height=1.4in]{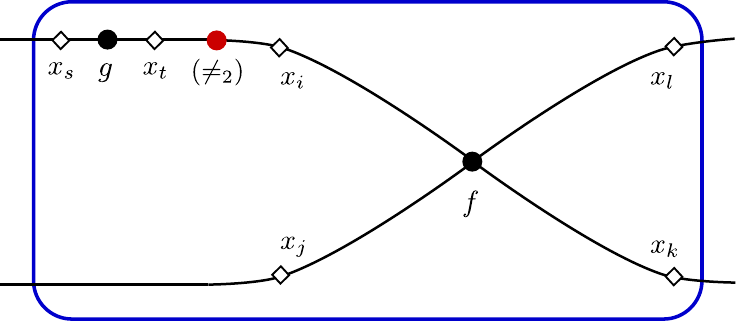}
	\caption{Connect variable $x_t$ of $g$ with variable $x_i$ of $f$ using $\neq_2$.}\label{figure: binary_modification}
	\end{figure}
Note that if $g(x_1,x_2) = (0,1,t,0)^T$ is a weighted binary \textsc{Disequality}; a modification to $x_1$ of $f$ using $g$ amounts to multiplying $t$ to every entry of $f$ where the index $x_i=1$. 
Thus, for $M(f) = \left[\begin{smallmatrix} a & 0 & 0 & b \\ 0 & c & d & 0 \\ 0 & w & z & 0 \\ y & 0 & 0 &x \end{smallmatrix} \right] = \left[\begin{smallmatrix}f_{0000} & 0 & 0 & f_{0011} \\ 0 & f_{0110} & f_{0101} & 0 \\ 0 & f_{1010} & f_{1001} & 0 \\ f_{1100} & 0 & 0 & f_{1111}\end{smallmatrix}\right]$, binary modification to the variable $x_1$ of $f$ using the binary signature $(0,1,t,0)^T$ we get a signature $f'$ with the signature matrix $M(f') = \left[\begin{smallmatrix}f_{0000} & 0 & 0 & f_{0011} \\ 0 & f_{0110} & f_{0101} & 0 \\ 0 & tf_{1010} & tf_{1001} & 0 \\ tf_{1100} & 0 & 0 & tf_{1111}\end{smallmatrix}\right]$. Similarly, modifications to $x_2, x_3, x_4$ of $f$ using $g$ give the following signatures respectively, 

$$ 
\left[\begin{smallmatrix}f_{0000} & 0 & 0 & f_{0011} \\ 0 & tf_{0110} & tf_{0101} & 0 \\ 0 & f_{1010} & f_{1001} & 0 \\ tf_{1100} & 0 & 0 & tf_{1111}\end{smallmatrix}\right];
\left[\begin{smallmatrix}f_{0000} & 0 & 0 & tf_{0011} \\ 0 & tf_{0110} & f_{0101} & 0 \\ 0 & tf_{1010} & f_{1001} & 0 \\ f_{1100} & 0 & 0 & tf_{1111}\end{smallmatrix}\right];
\left[\begin{smallmatrix}f_{0000} & 0 & 0 & tf_{0011} \\ 0 & f_{0110} & tf_{0101} & 0 \\ 0 & f_{1010} & tf_{1001} & 0 \\ f_{1100} & 0 & 0 & tf_{1111}\end{smallmatrix}\right].
$$

Binary modification is widely used in our proofs throughout this section and subsequent sections, especially in Lemma~\ref{lm: outer full inner degenerate}, Lemma~\ref{lm: with-any-binary}, Theorem~\ref{thm: with (0,1,0,0)} and many lemmas in Section~\ref{sec: At least a zero entry and at most a zero pair}.

\begin{lemma}\label{lm:inverse binary}
Let $g=(0,1,t,0)^T$ be a binary signature, where $t\neq 0$. Then for any signature set $\mathcal{F}$, we have 
$$
\plholant{\neq_2}{\mathcal{F}, g, (0,1,t^{-1},0)^T} \leq_T^p\plholant{\neq_2}{\mathcal{F},g}
$$
\end{lemma}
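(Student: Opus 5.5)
The plan is to realize $(0,1,t^{-1},0)^T$, up to a nonzero global constant, directly from $g$ by a rotation. No interpolation should be needed.

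First, recall the remark made when looping was introduced: a binary signature can be rotated by $\pi$ without violating planarity, so both $g(x_1,x_2)$ and $g(x_2,x_1)$ are available once we have one of them. Concretely, a vertex of degree $2$ on the right-hand side can have its two incident edges relabeled (swapped). In the bipartite setting the vertex stays on the right, and its two neighbors are still $\neq_2$ vertices, so the bipartite structure of $\plholant{\neq_2}{\cdot}$ is preserved. Swapping the variables turns $g=(g_{00},g_{01},g_{10},g_{11})^T=(0,1,t,0)^T$ into
$$g(x_2,x_1)=(g_{00},g_{10},g_{01},g_{11})^T=(0,t,1,0)^T=t\cdot(0,1,t^{-1},0)^T.$$

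Next, I will turn this into the reduction. Let $\Omega$ be an instance of $\plholant{\neq_2}{\mathcal{F}, g, (0,1,t^{-1},0)^T}$, and suppose the signature $(0,1,t^{-1},0)^T$ occurs at $k$ vertices. I form $\Omega'$ by relabeling each such vertex with $g$, with its two input variables taken in reversed order; this keeps the planar embedding unchanged. Every evaluation then picks up exactly a factor $t$ at each of these $k$ vertices, so
$$\Holant_{\Omega}=t^{-k}\,\Holant_{\Omega'}.$$
Since $t\neq 0$ and $t^{-k}$ is computable in polynomial time, one oracle call to $\plholant{\neq_2}{\mathcal{F},g}$ gives $\Holant_{\Omega}$.

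The only point needing care is checking that the variable swap is a legitimate planar operation: the cyclic order of the two dangling edges of a binary vertex is immaterial up to reflection of a neighborhood in the embedding, which is exactly the rotation by $\pi$ noted earlier. Everything else is immediate.
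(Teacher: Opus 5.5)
Your proof is correct and is the same argument as the paper's: rotating $g$ by $\pi$ gives $(0,t,1,0)^T=t\,(0,1,t^{-1},0)^T$, and the nonzero scalar $t$ contributes only a computable global factor $t^{k}$. One small correction: for a binary vertex, swapping the two inputs is already a cyclic rotation of its two edges, so no reflection of the embedding is involved.
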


\begin{proof}
    Rotate $g$ by $\pi$ we get the signature $(0,t,1,0)^T$.
    Dividing $g$ by $t$ doesn't affect complexity.
    The lemma follows.
\end{proof}

\begin{lemma}\label{lm: outer full inner degenerate}
Let $f$ be a 4-ary signature with the signature matrix $M(f) = \left[\begin{smallmatrix} a & 0 & 0 & b \\ 0 & c & d & 0 \\ 0 & w & z & 0 \\ y & 0 & 0 &a \end{smallmatrix} \right]$. where $czdw \neq 0$. If $f$ satisfies the following conditions,
\begin{itemize}
    \item 
    $M_{\rm{Out}}(f)=\left[\begin{smallmatrix} a & b \\ y & a  \end{smallmatrix}\right]$ has full rank;
    \item $M_{\rm{In}}(f)=\left[\begin{smallmatrix} c & d \\ w & z  \end{smallmatrix}\right]$ is degenerate;
    \item $c+d \neq 0$ or $z+d \neq 0$;
\end{itemize}
then \plholant{\neq_2}{f} is \numP-hard.
\end{lemma}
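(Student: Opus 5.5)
The plan is to reduce to Theorem~\ref{thm:redundant}. I will realize, by planar gadgets, a signature $f'$ that is non-singular redundant. The tool is binary modification of $f$ by weighted disequalities $(0,1,t,0)^T$, which rescales rows and columns of $M(f)$. I choose the weights to make the rank-one inner matrix have two identical rows and two identical columns. Row and column scalings by nonzero factors keep the outer matrix of full rank.

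\emph{Step 1: get weighted disequalities by looping.} Connect $x_4,x_3$ of $f$ to each other through $\neq_2$, which amounts to applying $M(f)N$ to the vector of $\neq_2$. This gives the binary signature on $(x_1,x_2)$ equal to $M(f)(0,1,1,0)^T=(0,\,c+d,\,w+z,\,0)^T$. Looping the rows instead, or looping a rotated form of $f$, gives $(0,\,c+w,\,d+z,\,0)^T$.

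Since $M_{\rm In}(f)$ is degenerate with $cdwz\neq0$, we have $cz=dw$, hence $w=cz/d$. Then
\[
w+z=\tfrac{z}{d}(c+d),\qquad c+w=\tfrac{c}{d}(d+z).
\]
So if $c+d\neq 0$, the first looping is a genuine weighted disequality $(0,1,t,0)^T$ up to scalar, with $t=z/d=w/c\neq 0$. If $d+z\neq 0$, the second one has ratio $(d+z)/(c+w)=d/c=z/w$. If only one of $c+d$, $z+d$ is nonzero, I rotate $f$ by $\pi$ to swap the roles of the two outer pairs. Rotation by $\pi$ exchanges $c\leftrightarrow z$, $b\leftrightarrow y$ and fixes the inner pair. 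This should make both ratios available, and I expect this bookkeeping to be the main obstacle. If rotation does not supply the second ratio directly, I will use that after one modification the new loop values change. I will then re-loop the modified signature. Lemma~\ref{lm:inverse binary} supplies the inverse weights as needed.

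\emph{Step 2: equalize the inner rows and columns.} The inner matrix is $\left[\begin{smallmatrix} c & d\\ w & z\end{smallmatrix}\right]$, whose second row is $(w/c)$ times the first. A binary modification on $x_1$ with weight $c/w$ scales the third row of $M(f)$, so the two inner rows become $(c,d)$ and $(c,d)$. That weight is the inverse of $t$ from Step 1, available by Lemma~\ref{lm:inverse binary}. Similarly, a modification on $x_3$ or $x_4$ with weight $d/c$ or $c/d$ makes the two inner columns equal. The resulting inner matrix is $\left[\begin{smallmatrix} c' & c'\\ c' & c'\end{smallmatrix}\right]$ with $c'\neq 0$.

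\emph{Step 3: check the determinant.} Each modification multiplies rows or columns of $M(f)$ by nonzero constants. So the modified outer matrix is $D_1\left[\begin{smallmatrix} a & b\\ y & a\end{smallmatrix}\right]D_2$ with $D_1,D_2$ invertible diagonal, and it still has full rank. The $3\times3$ determinant in Definition~\ref{def: redundant} is, up to sign, $c'$ times the determinant of the modified outer matrix, which is nonzero. Hence $f'$ is non-singular redundant, and $\plholant{\neq_2}{f'}$ is \numP-hard by Theorem~\ref{thm:redundant}. All gadgets used (rotation, looping, binary modification) are planar and bipartite, so $\plholant{\neq_2}{f'}\le_T^p\plholant{\neq_2}{f}$, and $\plholant{\neq_2}{f}$ is \numP-hard.
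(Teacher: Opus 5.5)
Your proposal is correct and follows the paper's proof. The paper rotates by $\pi$ to assume $c+d\neq0$, loops $x_3,x_4$ to get $(0,1,k,0)^T$ with $k=w/c$, inverts it and modifies $x_1$. It then re-loops $x_1,x_2$ of the modified signature, which always yields $2c\,(0,1,d/c,0)^T$ since $c\neq0$, and modifies $x_3$ to reach a non-singular redundant signature. Note that this re-loop is not merely a fallback: when exactly one of $c+d$, $d+z$ vanishes, rotating by $\pi$ does not supply the column ratio $d/c$ directly, so the paper uses the re-loop in every case.
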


\begin{proof}
We prove the lemma for $c+d \neq 0$. The case $z+d \neq 0$ can be obtained from $M(f^\pi)=M_{x_3x_4, x_2x_1}(f) = \left[\begin{smallmatrix} a & 0 & 0 & y \\ 0 & z & d & 0 \\ 0 & w & c & 0 \\ b & 0 & 0 &a \end{smallmatrix} \right]$.

Since $\left[\begin{smallmatrix} c & d \\ w & z  \end{smallmatrix}\right]$ is degenerate and $czdw \neq 0$, there exists $k\neq 0$ such that $\left[\begin{smallmatrix} c & d \\ w & z  \end{smallmatrix}\right] = \left[\begin{smallmatrix} c & d \\ ck & dk  \end{smallmatrix}\right]$.

By connecting the variables $x_3$ and $x_4$ of $f$ using $\neq_2$, we get a binary signature $(c+d)(0,1,k,0)^T$. By Lemma~\ref{lm:inverse binary}, we have $(0,1,k^{-1},0)^T$. 
By doing a binary modification to the variable $x_1$ of $f$ using $(0,1,k^{-1},0)^T$, we get a signature $f_1$ with the signature matrix $M(f_1) = \left[\begin{smallmatrix} a & 0 & 0 & b \\ 0 & c & d & 0 \\ 0 & c & d & 0 \\ \frac{y}{k} & 0 & 0 & \frac{a}{k} \end{smallmatrix} \right]$. Then by connecting the variables $x_1$ and $x_2$ of $f_1$ using $\neq_2$, we get the binary signature $2c(0,1,\frac{d}{c},0)^T$. By doing a binary modification to the variable $x_3$ of $f_1$ using $(0,1,\frac{d}{c},0)^T$, we get $f_2$ with the signature matrix $M(f_2) = \left[\begin{smallmatrix} a & 0 & 0 & \frac{bd}{c} \\ 0 & d & d & 0 \\ 0 & d & d & 0 \\ \frac{y}{k} & 0 & 0 & \frac{ad}{kc} \end{smallmatrix} \right]$, a signature in redundant form. Since $\det \left[\begin{smallmatrix} a & b \\ y & a  \end{smallmatrix}\right] \neq 0$, the compressed signature matrix of $f_2$ has full rank. Thus \plholant{\neq_2}{f_2} is \numP-hard by Theorem~\ref{thm:redundant}. 
So \plholant{\neq_2}{f} is \numP-hard.
\end{proof}

The above lemma was applied in Lemma~\ref{lm: degenerate inner matrices} to obtain \numP-hardness.

\subsection{Holographic transformation}\label{sec: holographic transformation}

Holographic transformation is used throughout this paper as a fundamental technique for complexity analysis and algorithm design. 
To introduce the idea of holographic transformation,
it is convenient to consider bipartite graphs.
For a general graph,
we can always transform it into a bipartite graph while preserving the Holant value,
as follows.
For each edge in the graph,
we replace it by a path of length two.
(This operation is called the \emph{2-stretch} of the graph and yields the edge-vertex incidence graph.)
Each new vertex is assigned the binary \textsc{Equality} signature $(=_2) = [1,0,1]$.

For an invertible $2$-by-$2$ matrix $T \in {\rm GL}_2({\mathbb{C}})$
 and a signature $f$ of arity $n$, written as
a column vector (contravariant tensor) $f \in \mathbb{C}^{2^n}$, we denote by
$T^{-1}f = (T^{-1})^{\otimes n} f$ the transformed signature.
  For a signature set $\mathcal{F}$,
define $T^{-1} \mathcal{F} = \{T^{-1}f \mid  f \in \mathcal{F}\}$ the set of
transformed signatures.
For signatures written as
 row vectors (covariant tensors) we define
$f T$ and  $\mathcal{F} T$ similarly.
Whenever we write $T^{-1} f$ or $T^{-1} \mathcal{F}$,
we view the signatures as column vectors;
similarly for $f T$ or $\mathcal{F} T$ as row vectors.

The holographic transformation defined by $T$ is the following operation:
given a signature grid $\Omega = (H, \pi)$ of $\holant{\mathcal{F}}{\mathcal{G}}$,
for the same bipartite graph $H$,
we get a new signature grid $\Omega' = (H, \pi')$ of $\holant{\mathcal{F} T}{T^{-1} \mathcal{G}}$ by replacing each signature in
$\mathcal{F}$ or $\mathcal{G}$ with the corresponding signature in $\mathcal{F} T$ or $T^{-1} \mathcal{G}$.

\begin{theorem}[Valiant's Holant Theorem~\cite{Val08}]\label{thm: holographic transformation}
 For any $T \in {\rm GL}_2({\mathbb{C}})$,
  \[\Holant(\Omega; \mathcal{F} \mid \mathcal{G}) = \Holant(\Omega'; \mathcal{F} T \mid T^{-1} \mathcal{G}).\]
\end{theorem}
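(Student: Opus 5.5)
The plan is to prove the identity pointwise: for every edge assignment of the transformed grid, regroup the sum so that on each edge the two matrices $T$ and $T^{-1}$ meet and cancel. Everything takes place on the fixed bipartite graph $H=(U,V,E)$ of $\Omega$. Each vertex $u\in U$ carries $f_u\in\mathcal{F}$, viewed as a covariant tensor, i.e.\ a row vector in $(\mathbb{C}^2)^{\otimes \deg u}$. Each vertex $v\in V$ carries $g_v\in\mathcal{G}$, viewed as a contravariant tensor, i.e.\ a column vector. Since each edge joins exactly one vertex of $U$ to one vertex of $V$, the Holant sum can be written as a single tensor contraction:
\[
\Holant(\Omega;\mathcal{F}\mid\mathcal{G}) = \Bigl(\bigotimes_{u\in U} f_u\Bigr)\Bigl(\bigotimes_{v\in V} g_v\Bigr).
\]
Here the tensor factors on both sides are reindexed by edges, so the left factor is a row vector in $(\mathbb{C}^2)^{\otimes |E|}$ and the right factor is a column vector in the same space. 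The product is their inner product, summing over all $\sigma:E\to\{0,1\}$ of $\prod_u f_u(\sigma|_{E(u)})\prod_v g_v(\sigma|_{E(v)})$, which is exactly definition~\eqref{eq: def of Holant value}.

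The key step is to insert the identity $I=TT^{-1}$ on every edge. This gives $I^{\otimes|E|} = T^{\otimes |E|}(T^{-1})^{\otimes|E|}$, so
\[
\Holant = \Bigl(\bigotimes_u f_u\Bigr) T^{\otimes|E|}\,(T^{-1})^{\otimes|E|}\Bigl(\bigotimes_v g_v\Bigr).
\]
Since every edge is incident to exactly one $u$, the factor $T^{\otimes|E|}$ splits as $\bigotimes_u T^{\otimes\deg u}$, compatibly with the edge reindexing. Similarly, every edge is incident to exactly one $v$, so $(T^{-1})^{\otimes |E|}$ splits as $\bigotimes_v (T^{-1})^{\otimes \deg v}$. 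By the mixed-product property of Kronecker products, we get
\[
\Bigl(\bigotimes_u f_u\Bigr)T^{\otimes|E|} = \bigotimes_u (f_uT), \qquad (T^{-1})^{\otimes|E|}\Bigl(\bigotimes_v g_v\Bigr)=\bigotimes_v (T^{-1}g_v).
\]
The right-hand side is by definition the Holant of $\Omega'$ on the same graph with signatures $f_uT$ and $T^{-1}g_v$. This proves the theorem.

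The only delicate point, and the main obstacle, is bookkeeping: the permutation of tensor factors from vertex order to edge order must be handled consistently on both sides. I would handle this by noting that $T^{\otimes m}$ commutes with any permutation of the $m$ tensor factors, since it acts identically on each factor, so the reindexing is harmless. The assignment of variable labels at each vertex also needs care: $f_uT$ means applying $T$ to each input, which is independent of the ordering. An equivalent elementary check is to expand in coordinates, $\sum_\sigma\prod_e\sum_{\tau_e,\rho_e}T_{\tau_e\sigma_e}\ldots$, and use $\sum_{\sigma_e} T_{\tau\sigma}(T^{-1})_{\sigma\rho}=\delta_{\tau\rho}$ on each edge. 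I would include this coordinate version as the formal argument.
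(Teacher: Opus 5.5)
Your proof is correct. The paper gives no proof of its own and simply cites \cite{Val08}; your argument is the standard proof of that cited result: you write the Holant as the contraction of $\bigotimes_u f_u$ with $\bigotimes_v g_v$ over the edge indices, insert $TT^{-1}$ on every edge, and absorb $T$ into the $U$-side signatures and $T^{-1}$ into the $V$-side ones, with the identity $\sum_{\sigma_e}T_{\tau\sigma_e}(T^{-1})_{\sigma_e\rho}=\delta_{\tau\rho}$ as the formal check.
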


Therefore,
a holographic transformation does not change the complexity of the Holant problem in the bipartite setting. This theorem also holds for planar instances.

\paragraph{Transformable}

\begin{definition} \label{def: prelim: pair transformable}
 We say a pair of signature set $(\mathcal{F},\mathcal{G})$ is $\mathscr{C}$-transformable
 if there exists a $T \in \rm{GL}_2(\mathbb{C})$ such that
 $\mathcal{F} T \in \mathscr{C}$ and $T^{-1}\mathcal{G} \subseteq \mathscr{C}$.
\end{definition}

\begin{definition} \label{def:prelim:trans}
 We say a signature set $\mathcal{F}$ is $\mathscr{C}$-transformable if the pair $(\neq_2,\mathcal{F})$ is $\mathscr{C}$-transformable.
\end{definition}

Definition~\ref{def:prelim:trans} is important because if $\PlHolant(\mathscr{C})$ is tractable,
then $\PlHolant(\neq_2 \mid\mathcal{F})$ is tractable for any $\mathscr{C}$-transformable set $\mathcal{F}$.

\subsection{Polynomial interpolation}
Polynomial interpolation is
a powerful reduction technique pioneered in \cite{Valiant79} to show \numP-completeness of various problems, and is widely used in our proof.
In Section~\ref{subsubsec: regular interpolation}, we give a simple illustration of this method.
In Section~\ref{subsuebsec: mobius interpolation}, we briefly introduce the use of M\"{o}bius transformation in complex analysis for interpolation.
In Section~\ref{subsubsec: conformal lattice interpolation}, we introduce conformal lattice interpolation, which captures the essence of interpolation technique  
that we will employ in this paper under many guises.

\subsubsection{Regular interpolation}\label{subsubsec: regular interpolation}
\begin{lemma}\label{lm: interpolate S for N=2}
    Let $f$ be a 4-ary signature with the signature matrix
    $M(f) = \left[\begin{smallmatrix}
        1 & 0 & 0 & k\\
        0 & k & 1 & 0\\
        0 & 1 & 0 & 0\\
        0 & 0 & 0 & 1
    \end{smallmatrix}\right]$ with $k\neq 0$.
    Then $$\plholant{\neq_2}{f,\mathcal{S}}\le_T^p \plholant{\neq_2}{f}.$$
\end{lemma}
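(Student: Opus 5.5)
The plan is to realize $\mathcal{S}$ by interpolation from a family of planar gadgets built by chaining copies of $f$ (and its rotations) through $(\neq_2)^{\otimes2}$. The connection rule of Section~2.2 lets us compute their signature matrices as matrix products. Here
$$M(f)N=\left[\begin{smallmatrix} k&0&0&1\\0&1&k&0\\0&0&1&0\\1&0&0&0\end{smallmatrix}\right],$$
which is block diagonal with respect to the outer indices $\{00,11\}$ and the inner indices $\{01,10\}$. The outer block is $A=\left[\begin{smallmatrix}k&1\\1&0\end{smallmatrix}\right]$ and the inner block is the Jordan block $J=\left[\begin{smallmatrix}1&k\\0&1\end{smallmatrix}\right]$.

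\textbf{Step 1: the gadget family.} Chaining $s$ copies gives $(M(f)N)^{s}$. Its inner block is $J^{s}=\left[\begin{smallmatrix}1&sk\\0&1\end{smallmatrix}\right]$, whose entries are affine in $s$ because $k\neq0$. Its outer block is $A^{s}=P\,\mathrm{diag}(\lambda_1^s,\lambda_2^s)P^{-1}$, where $\lambda_1+\lambda_2=k$ and $\lambda_1\lambda_2=-1$; the eigenvalues are distinct since $k^2+4\neq0$ generically (the case $k=\pm 2\mathfrak{i}$ is treated separately as a Jordan block). To reach the target shape of $M(\mathcal{S})$ (outer $I$, inner swap), I would finish the chain with a fixed planar gadget, for instance one copy of $f$ or of the rotated form $f^{\pi}$. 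This gives $g_s$ with $M(g_s)=L\,(M(f)N)^{s}\,R$ for fixed block-diagonal $L,R$. I would choose $L,R$ so that the target $M(\mathcal{S})$ lies in the span of $\{L\,\mathrm{diag}(X,Y)\,R\}$, where $X$ ranges over $\mathrm{span}\{I,A\}$ and $Y$ over $\mathrm{span}\{I,J\}$. This is necessary because, by Cayley--Hamilton, the matrices $A^s$ span only $\mathrm{span}\{I,A\}$ and the $J^s$ span only $\mathrm{span}\{I,J\}$. The choice of the closing gadget is exactly what makes the target reachable. For example, $M(f)$ alone has outer block $\left[\begin{smallmatrix}1&k\\0&1\end{smallmatrix}\right]$, which is not symmetric, so mixing in $f^{\pi}$ (outer block $\left[\begin{smallmatrix}1&0\\k&1\end{smallmatrix}\right]$) will be needed.

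\textbf{Step 2: interpolation.} Take an instance $\Omega$ of $\plholant{\neq_2}{f,\mathcal{S}}$ with $n$ occurrences of $\mathcal{S}$, and replace each by $g_s$. Write each $M(g_s)$ in the basis of $L\,\mathrm{diag}(\cdot,\cdot)\,R$ with coordinates $\lambda_1^{s},\lambda_2^{s}$ (outer) and $1,s$ (inner). Then $\Holant(\Omega_s)$ is a polynomial expression
$$\sum_{i+j+\ell+m=n} c_{ij\ell m}\,\lambda_1^{si}\lambda_2^{sj}\,s^{m},$$
with the $c_{ij\ell m}$ independent of $s$. Choose the combination $(\alpha,\beta,\gamma,\delta)$ of these coordinates that yields $M(\mathcal{S})$. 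Then $\Holant(\Omega)$ is the corresponding weighted sum of the $c$'s. Computing $\Holant(\Omega_s)$ for polynomially many $s$ and solving a generalized Vandermonde system (exponentials times polynomials in $s$) recovers the $c$'s, and hence $\Holant(\Omega)$.

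\textbf{Main obstacle.} The difficulty is nondegeneracy of the outer eigenvalues. The system is solvable only if the numbers $\lambda_1^{i}\lambda_2^{j}$ with $i+j\le n$ are pairwise distinct. Since $\lambda_1\lambda_2=-1$, this amounts to $\lambda_1^2$ not being a root of unity.
\begin{itemize}
\item If $\lambda_1^2$ is a root of unity, then $A^s$ is periodic in $s$ (up to sign), so finitely many $s$ already realize every matrix in $\mathrm{span}\{I,A\}$ as a linear combination. In that case I would obtain the needed outer block by a finite linear combination of the realizable gadgets.
\item The inner part is always fine, since the Jordan block gives the independent functions $1$ and $s$.
\end{itemize}
I would also need to check that the outer and inner coordinates are jointly independent, so that the target combination is reachable simultaneously on both blocks. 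Concretely, this means verifying that $L\,\mathrm{diag}(X,Y)\,R=M(\mathcal{S})$ has a solution with $X\in\mathrm{span}\{I,A\}$ and $Y\in\mathrm{span}\{I,J\}$. This reduces to a small linear-algebra computation in $k$, where $k\neq0$ is what is used. I expect this compatibility check, together with the root-of-unity case, to be where the real work lies.
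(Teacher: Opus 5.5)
Your proposal has genuine gaps, and the step you defer as ``a small linear-algebra computation'' is exactly where it breaks.

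\textbf{The nondegeneracy criterion is false.} Since $\lambda_1\lambda_2=\det A=-1$, you always have $\lambda_1^{i+2}\lambda_2^{j+2}=\lambda_1^{i}\lambda_2^{j}$. So the numbers $\lambda_1^i\lambda_2^j$ are never pairwise distinct once $n\ge 4$, and your generalized Vandermonde system is always singular. At best you recover sums over cosets of $\langle(2,2)\rangle$. You would then still have to verify that the target coordinates respect $(\alpha\beta)^2=\gamma^4$.

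\textbf{The root-of-unity fallback is not a valid step.} The Holant value of $\Omega_s$ is a degree-$n$ form in the substituted signature. What you need is $M(\mathcal S)^{\otimes n}\in\mathrm{span}\{M(g_s)^{\otimes n}\}$, not merely $M(\mathcal S)\in\mathrm{span}\{M(g_s)\}$, and a linear combination of gadgets is not a gadget. When $\lambda_1$ is a root of unity (e.g.\ $k=\mathfrak{i}$, where $\lambda_{1,2}=e^{\mathfrak{i}\pi/6},e^{5\mathfrak{i}\pi/6}$), the reachable outer block is confined, up to a scalar shared with the inner block, to the finitely many powers $A^r$.

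\textbf{Suitable $L,R$ of the kind you suggest do not exist for generic $k$.} Write $U=\left[\begin{smallmatrix}1&k\\0&1\end{smallmatrix}\right]$, $V=U^{T}$, $P=\left[\begin{smallmatrix}0&1\\1&0\end{smallmatrix}\right]$.
\begin{itemize}
\item The (outer, inner) blocks of $f,f^{\pi/2},f^{\pi},f^{3\pi/2},N$ are $(U,UP),(V,UP),(V,PU),(U,PU),(P,P)$, and $M(\mathcal S)$ is $(I,P)$.
\item Take $L,R$ to be chains of rotated copies. Your inner requirement says a positive word in $U,V$ must commute with $U$. A $V$ makes its $(2,1)$ entry a nonzero polynomial in $k$, so at transcendental $k$ only $f$ and $f^{\pi/2}$ may appear.
\item The outer requirement then becomes $wB=Aw$ for a positive word $w$ in $A=UP=P+kE_{11}$ and $B=PU=P+kE_{22}$.
\item Comparing coefficients of $k^{1}$ shows $wB-Aw\neq 0$ as a polynomial for every such word: the counts of $E_{11}$ and $E_{22}$ contributions have the wrong parity. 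Hence it fails at transcendental $k$.
\end{itemize}
In particular the closings you name ($f$, or mixing in $f^{\pi}$) fail.

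\textbf{The missing idea is to alternate rotations inside the chain itself.} The paper chains $2s+1$ copies with the odd-indexed ones rotated to $f^{\pi/2}$. Each period $M(f^{\pi/2})NM(f)N$ has blocks $(V^2,U^2)$, both unipotent. So $M(f_{2s+1})=(M(f^{\pi/2})NM(f)N)^sM(f^{\pi/2})$ agrees with $M(\mathcal S)$ except at the two entries (inputs $0110$, $1100$) where it equals $(2s+1)k$ and $\mathcal S$ is $0$. Stratifying by the number $j$ of copies receiving those inputs gives $\sum_j((2s+1)k)^ja_j$, with $a_0$ the desired Holant value. An ordinary Vandermonde system in the distinct nonzero numbers $(2s+1)k$ then recovers $a_0$, with no eigenvalue or root-of-unity analysis needed.
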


\begin{figure}[!htbp]
\centering
\includegraphics[height=1.0in]{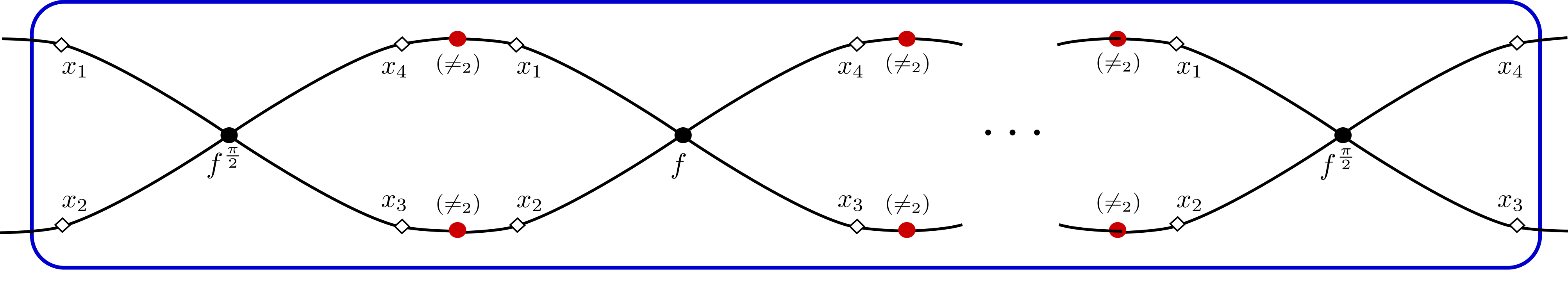}
\caption{A chain of $2s+1$ copies of signature $f$, where odd-indexed copies are rotated by $\frac{\pi}{2}$, connected by double {\sc Disequality} $N$}
\label{2s+1}
\end{figure}

\begin{proof}
     Rotate $f$ by $\frac{\pi}{2}$ we get $f^{\frac{\pi}{2}}$ with the signature matrix $M(f^\frac{\pi}{2}) = \left[\begin{smallmatrix}
        1 & 0 & 0 & 0\\
        0 & k & 1 & 0\\
        0 & 1 & 0 & 0\\
        k & 0 & 0 & 1
    \end{smallmatrix}\right]$.
     We construct a series of gadgets ${f}_{2s+1}$ by a chain of $2s+1$ many copies of $f$, with odd-indexed copies rotated by $\frac{\pi}{2}$, linked by the double {\sc Disequality} $N$ (see Figure~\ref{2s+1}).
     Clearly $f_{2s+1}$ has the signature matrix 
     $$
     M(f_{2s+1})=(M(f^\frac{\pi}{2})NM(f)N)^sM(f^\frac{\pi}{2}) = \left[\begin{matrix}
        1 & 0 & 0 & 0\\
        0 & (2s+1)k & 1 & 0\\
        0 & 1 & 0 & 0\\
        (2s+1)k & 0 & 0 & 1
    \end{matrix}\right].$$ 
    Given any signature grid $\Omega$ of $\plholant{\neq_2}{f,\mathcal{S}}$ where $\mathcal{S}$ appears $m$ times, we construct the signature grid $\Omega_s$ of $\plholant{\neq_2}{f}$ by replacing every $\mathcal{S}$ with a copy of the gadget $f_{2s+1}$.
    We divide $\Omega_{s}$ into two parts.
    One part consists of $m$ signatures $f_{2s+1}$ and its signature is represented by $(f_{2s+1})^{\otimes{m}}$.
    Here we rewrite $(f_{2s+1})^{\otimes{m}}$ as a column vector.
    The other part is the rest of $\Omega_{s}$, and its signature is represented by $A$ which is a tensor expressed as a row vector. 
    Then, the Holant value of $\Omega_{s}$ is the dot product $\langle A, (f_{2s+1})^{\otimes{m}}\rangle$, which is a summation over $4m$ bits. 
    That is, a sum over all $0, 1$ values for  the $4m$ edges connecting the two parts. 
    We can stratify all $0, 1$ assignments of these $4m$ bits having a nonzero evaluation of a term in Pl-Holant$_{\Omega_{s}}$ into the following categories:
    \begin{itemize}
    \item
    There are $i$ many copies of $f_{2s+1}$ receiving inputs $0000, 0101, 1010$, or $1111$;
    
    \item
    There are $j$ many copies of $f_{2s+1}$ receiving inputs $0110$ or $1100$;
    \end{itemize}
    where $i+j= m$.

    For any assignment in the category with parameter $(i,j)$, the evaluation of $f_{2s+1}$ is $((2s+1)k)^j$.
    Let $a_{j}$ be the summation of values of the part $A$ over all assignments in the category $(i,j)$.
    Note that $c_{j}$ is independent from the value of $s$ since we view the gadget $f_{2s+1}$ as a block. 
    Then, we rewrite the dot product summation and get 
    \begin{equation}\label{eq: linear system}
    \PlHolant_{\Omega_s} =\langle A, (f_{2s+1})^{\otimes{m}}\rangle
    = \sum\limits_{0 \leq j \leq m} ((2s+1)k)^ja_j
    \end{equation}
    Under this stratification, we have $\PlHolant_{\Omega}= \PlHolant(\Omega; \neq_2 \mid f, \mathcal{S})=a_0$.
    Viewing $a_j$'s as variables, we can then solve the system of linear equations~\ref{eq: linear system} given $\PlHolant_{\Omega_s}$ for $0\le s\le m$, since the coefficient matrix is a Vandermonde matrix of full rank ($k\neq0$). 
    Thus, we can compute $\PlHolant_{\Omega}$ in polynomial time.
    Therefore, $\plholant{\neq_2}{f,\mathcal{S}}\le_T^p\plholant{\neq_2}{f}$.
\end{proof}

The above lemma will be used in Lemma~\ref{lm: 1bcdw1} where there are two zero entries in $M(f)$.
Using similar argument, we can prove the following lemma which will be used in Lemma~\ref{lm:case d=w} where there are two zero pairs in $M(f)$.

\begin{lemma}\label{lm:interpolate S}
Let $f$ be a 4-ary signature with the signature matrix $M(f) = \left[\begin{smallmatrix}1 & 0 & 0 & 0 \\ 0 & 0 & d & 0 \\ 0 & d & 0 & 0 \\ 0 & 0 & 0 & 1 \end{smallmatrix} \right]$, where $d\neq 0$ is not a root of unity. Then for any signature $\mathcal{F}$ containing $f$, we have
$$
\plholant{\neq_2}{\mathcal{F} \cup \mathcal{S}} \leq_T^p \plholant{\neq_2}{\mathcal{F}}.
$$
\end{lemma}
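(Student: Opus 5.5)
The plan is to follow the chain-gadget and Vandermonde interpolation of Lemma~\ref{lm: interpolate S for N=2}, with $d$ itself serving as the interpolation parameter. The target $\mathcal{S}$ has exactly the same support as $f$: $M(\mathcal{S})$ is $M(f)$ with $d$ replaced by $1$. So it suffices to realize the whole family of signatures $f_n$ of this shape, with $d$ replaced by $d^n$, for many distinct values of $d^n$, and then interpolate to the value $1$.

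\textbf{Step 1: the gadget.} Connect $n$ copies of $f$ in a chain via the double {\sc Disequality} $N$; no rotation is needed. The resulting signature matrix is $(M(f)N)^{n-1}M(f)$. Right-multiplying by $N$ reverses the column order, so
\[ M(f)N=\left[\begin{smallmatrix}0&0&0&1\\0&d&0&0\\0&0&d&0\\1&0&0&0\end{smallmatrix}\right], \qquad (M(f)N)^2=\mathrm{diag}(1,d^2,d^2,1). \]
Hence for $n=2s+1$ we get
\[ M(f_{2s+1})=\left[\begin{smallmatrix}1&0&0&0\\0&0&d^{2s+1}&0\\0&d^{2s+1}&0&0\\0&0&0&1\end{smallmatrix}\right]. \]
This is planar, preserves bipartiteness, and uses only $f$ and $\neq_2$. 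Hence $\plholant{\neq_2}{\mathcal{F}\cup f_{2s+1}}\le_T^p\plholant{\neq_2}{\mathcal{F}}$.

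\textbf{Step 2: stratification.} Let $\Omega$ be an instance of $\plholant{\neq_2}{\mathcal{F}\cup\mathcal{S}}$ with $m$ occurrences of $\mathcal{S}$. Form $\Omega_s$ by replacing each occurrence of $\mathcal{S}$ with $f_{2s+1}$. Write the Holant value as $\langle A,(f_{2s+1})^{\otimes m}\rangle$, where $A$ is the signature of the rest of the grid. Stratify the nonzero assignments by $j$, the number of copies of $f_{2s+1}$ that receive one of the two inputs carrying the entry $d^{2s+1}$; the other $m-j$ copies receive $0000$ or $1111$ and contribute the value $1$. Let $a_j$ be the total contribution of $A$ in stratum $j$; it is independent of $s$. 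Then
\[ \PlHolant_{\Omega_s}=\sum_{j=0}^m (d^{2s+1})^j a_j, \qquad \PlHolant_{\Omega}=\sum_{j=0}^m a_j, \]
the latter because $\mathcal{S}$ takes the value $1$ on every input in its support.

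\textbf{Step 3: solving the system.} Query $\PlHolant_{\Omega_s}$ for $s=0,1,\dots,m$. This gives a linear system whose coefficient matrix is Vandermonde in the nodes $x_s=d^{2s+1}$. These nodes are pairwise distinct: $x_s=x_{s'}$ would force $d^{2(s-s')}=1$, which is impossible for $s\neq s'$ since $d$ is not a root of unity. So the matrix is invertible, we recover every $a_j$, and we output $\sum_j a_j$. This is the only place the hypothesis on $d$ enters.

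The only real obstacle is making the chain product correct, so that the odd-length chain returns to exactly the support pattern of $\mathcal{S}$ rather than its diagonal variant. Once the computation in Step~1 is verified, the rest is the routine argument already given for Lemma~\ref{lm: interpolate S for N=2}.
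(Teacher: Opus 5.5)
Your proposal is correct and follows the paper's proof essentially step for step. You use the same odd-length chain $M(f)(NM(f))^{2s}$, which puts $d^{2s+1}$ in place of $d$, the same stratification by the number of copies receiving $0101$ or $1010$, and the same Vandermonde inversion; your check that the nodes $d^{2s+1}$ are pairwise distinct is precisely the full-rank condition the paper draws from $d$ not being a root of unity.
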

\begin{proof}
We interpolate the crossover $\mathcal{S}$ from $f$. 
Given any signature grid $\Omega$ of $\plholant{\neq_2}{\mathcal{F} \cup \mathcal{S}}$ where $\mathcal{S}$ appears $n$ times, we construct signature grids $\Omega_s$ of $\plholant{\neq_2}{\mathcal{F}}$, $0 \leq s \leq n$, by replacing every occurrence of $\mathcal{S}$ with the gadget $f_s$, which consists of $2s+1$ copies of $f$ and $2s$ copies $N$ linked up in a path, whose signature matrix is
$$M(f_s) = M(f)(NM(f))^{2s} = \left[\begin{smallmatrix}
    1 & 0 & 0 & 0 \\
    0 & 0 & d^{2s+1} & 0 \\
    0 & d^{2s+1} & 0 & 0 \\
    0 & 0 & 0 & 1
\end{smallmatrix}\right].$$


If we stratify the assignments in $\Omega$ according to
the number of  occurrences of $\mathcal{S}$  assigned
0000 or 1111 being  exactly $i$,  and the number of occurrences of $\mathcal{S}$  assigned
0101 or 1010 being  exactly  $j$, then we have
$$
\Holant(\Omega) = \sum\limits_{i+j=n} a_j,
$$
where $a_j$ includes all contributions from $\mathcal{F}  \cup \mathcal{S}$ other
than $\mathcal{S}$.
Similarly,
\begin{equation}\label{eq: Vandermonde system for S}
   \Holant(\Omega_s) = \sum\limits_{i+j=n} (d^j)^{2s+1}a_j .
\end{equation}

We view $a_j$ as variables and Equation~(\ref{eq: Vandermonde system for S}) as a linear system. Since $d\neq 0$ is not a root of unity, this  is a Vandermonde system of full rank and thus we can solve for $a_j$ and calculate $\Holant(\Omega)$ in polynomial time. This proves $
\plholant{\neq_2}{\mathcal{F} \cup \mathcal{S}} \leq_T^p \plholant{\neq_2}{\mathcal{F}}.$
\end{proof}

\subsubsection{Interpolation via M\"{o}bius transformation}\label{subsuebsec: mobius interpolation}
A M\"{o}bius transformation~\cite{ahlfors} is
a mapping of the form ${\mathfrak z}
\mapsto \frac{a\mathfrak z+b}{c\mathfrak z+d}$,
where $\det \left[\begin{smallmatrix}
a & b \\
c & d
\end{smallmatrix}\right] \not =0$.
It is a bijective map of the extended complex plane $\widehat{\mathbb{C}}
= \mathbb{C} \cup \{ \infty \}$ to itself.
The projective order of a M\"obius transformation is defined as the projective order of the matrix 
$\left[\begin{smallmatrix}
a & b \\
c & d
\end{smallmatrix}\right] \in \mathrm{PGL_2}(\mathbb{C})$ that represents the transformation.
A M\"{o}bius transformation maps the unit circle $S^1=\{\mathfrak{z}
\in \mathbb{C} \mid |\mathfrak{z}|=1\}$ to itself iff it is of the form
$\varphi(\mathfrak z)=e^{i\theta}\frac{\mathfrak{z}+\lambda}{1+\overline{\lambda} \mathfrak{z}}$ denoted by $\mathcal{M}(\lambda, e^{i\theta})$, where $|\lambda|\neq 1$. 
A M\"{o}bius transformation is determined by its values on any 3
distinct points. 

The technique of using M\"{o}bius transformation to do polynomial interpolation was developed in the complexity classification of the (non-planar) eight-vertex model \cite{CaiF17} and the planar six-vertex model \cite{CaiFS21}.
The idea is to use M\"obius transformation (of infinite projective order) to generate polynomially many binary signatures $g_i=(0,1,t_i,0)^T$ as ``raw materials" for interpolation.

We generalize the corresponding results in \cite{CaiF17} and \cite{CaiFS21} to the planar eight-vertex model setting (Lemma~\ref{lm: get any binary from a binary not fifth root}, Corollaries~\ref{cor: get (0,1,0,0) from a binary third or fourth root},~\ref{cor:get any binary when exactly three nonzero}).
This generalization is nontrivial. 
In particular, to handle a M\"obius transformation of projective order 2 in Lemma~\ref{lm: get any binary from a binary not fifth root}, we use limit analysis and properties of cyclotomic fields.

These results (Lemma~\ref{lm: get any binary from a binary not fifth root}, Corollaries~\ref{cor: get (0,1,0,0) from a binary third or fourth root},~\ref{cor:get any binary when exactly three nonzero}) show that under certain conditions, we can realize arbitrary or specific binary signatures, allowing us to freely perform binary modifications. 
Corollary~\ref{cor:get any binary when exactly three nonzero} is first applied in Lemma~\ref{lm: 1bcdwy1}. Lemma~\ref{lm: get any binary from a binary not fifth root} and Corollary~\ref{cor: get (0,1,0,0) from a binary third or fourth root} are used later in Lemmas~\ref{lm: with-any-binary} and~\ref{thm: with (0,1,0,0)}.

\subsubsection{Conformal lattice interpolation}\label{subsubsec: conformal lattice interpolation}
\begin{definition}[Lattice]
    Let $k\in \Z_+$ and $\mathbf{x}=(x_1,x_2,\ldots,x_k)\in \C^k$.
    Define the lattice corresponding to $\mathbf{x}$ as $L_{\mathbf{x}}:=\{(j_1,j_2,\dots,j_k) \in \mathbb{Z}^k \mid x_1^{j_1}x_2^{j_2}\cdots x_k^{j_k} = 1\}$.
    We assume $0^0=1$ by convention.
\end{definition}

Note that $L_{\mathbf{x}}$ is a subgroup of the free group $\Z^k$.
So $L_{\mathbf{x}}$ is also free.
Define $\mathrm{rank}(L_{\mathbf{x}})$ as the number of generators of $L_{\mathbf{x}}$.
Clearly, $0\le \mathrm{rank}(L_{\mathbf{x}})\le k$.
If $L_{\mathbf{x}}$ is generated by $\{\mathbf{b}_1,\mathbf{b}_2,\ldots,\mathbf{b}_r\}$, where $r=\mathrm{rank}(L_{\mathbf{x}})$ and $\mathbf{b}_i\in \C^k$ for $1\leq i\leq r$, we denote $L_{\mathbf{x}}$ as $\braket{\mathbf{b}_1,\mathbf{b}_2,\ldots,\mathbf{b}_r}$ and say that $\{\mathbf{b}_1,\mathbf{b}_2,\ldots,\mathbf{b}_r\}$ is a lattice basis for $L_{\mathbf{x}}$.

The following lemma characterizes the situation where the lattice has full rank. 
Although it is not directly related to polynomial interpolation, we include it here as it will be used in Section~\ref{sec: three or opposite pairs}.

\begin{lemma}\label{lm: lattice full rank equiv roots of unity}
     Let $k\in \Z_+$ and $\mathbf{x}=(x_1,x_2,\ldots,x_k)\in \C^k$. Then $\mathrm{rank}(L_{\mathbf{x}})=k$ is equivalent to that $x_1,x_2,\ldots,x_k$ are all roots of unity.
\end{lemma}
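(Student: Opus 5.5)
We prove the two directions separately, using only that $L_{\mathbf{x}}$ is a subgroup of $\Z^k$ of rank at most $k$.

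For the direction ``all $x_i$ are roots of unity $\Rightarrow$ $\mathrm{rank}(L_{\mathbf{x}})=k$'', let $n_i\ge 1$ be the order of $x_i$, so that $x_i^{n_i}=1$. Then the vectors $n_i\mathbf{e}_i$ ($1\le i\le k$), where $\mathbf{e}_i$ is the $i$-th standard basis vector of $\Z^k$, all lie in $L_{\mathbf{x}}$. These $k$ vectors are linearly independent over $\mathbb{Q}$, so the subgroup they generate has rank $k$. Since $L_{\mathbf{x}}$ contains this subgroup and sits inside $\Z^k$, we get $k\le \mathrm{rank}(L_{\mathbf{x}})\le k$.

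For the converse, assume $\mathrm{rank}(L_{\mathbf{x}})=k$. Then $L_{\mathbf{x}}$ is a full-rank subgroup of $\Z^k$, and so it has finite index in $\Z^k$; call this index $n\ge 1$. This can be seen from Smith normal form, or from the fact that $\Z^k/L_{\mathbf{x}}$ is finitely generated of rank $0$ and hence finite. By Lagrange's theorem in the finite quotient, $n\mathbf{v}\in L_{\mathbf{x}}$ for every $\mathbf{v}\in\Z^k$. In particular $n\mathbf{e}_i\in L_{\mathbf{x}}$, which means $x_i^{n}=1$ for each $i$, so every $x_i$ is a root of unity.

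The one point needing care is the convention when some $x_i=0$. In that case $x_i^{j}$ is undefined for $j<0$, equals $0$ for $j>0$, and equals $1$ for $j=0$ by the stated convention $0^0=1$. So every vector of $L_{\mathbf{x}}$ has $i$-th coordinate $0$. Thus $L_{\mathbf{x}}$ lies in a coordinate hyperplane and has rank less than $k$, while $x_i=0$ is not a root of unity. Both sides of the equivalence are false, and the statement still holds. In the converse argument we may therefore assume all $x_i\neq 0$, where $L_{\mathbf{x}}$ is genuinely the kernel of the homomorphism $\Z^k\to\C^*$, $\mathbf{j}\mapsto \prod_i x_i^{j_i}$, and the finite-index argument applies directly. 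The main step is the full rank $\Rightarrow$ finite index fact, which is standard structure theory of finitely generated abelian groups.
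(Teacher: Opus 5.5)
Your proof is correct and follows the same route as the paper. In one direction the vectors $n_i\mathbf{e}_i$ force rank $k$; in the other, full rank makes $\Z^k/L_{\mathbf{x}}$ (the image of $\mathbf{j}\mapsto\prod_i x_i^{j_i}$) finite, so each $x_i$, as the image of $\mathbf{e}_i$, has finite order. Your explicit handling of $x_i=0$ is a small improvement: the paper tacitly assumes all $x_i\neq 0$ when it maps into $\C^*$, yet zero entries such as $(u,v,w)=(1,0,0)$ do occur in its later applications.
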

\begin{proof}
    First, suppose $\mathbf{x}=(x_1,x_2,\ldots,x_k)=(\zeta_{p_1},\zeta_{p_2},\ldots,\zeta_{p_k})$.
    Then $\mathbf{b}_i=(0,\ldots,p_i,\ldots,0)\in L_{\mathbf{x}}$ for $1\le i \le k$, where $\mathbf{b}_i$ has $p_i$ at the $i$-th position and 0 at other positions.
    Since $\mathbf{b}_1,\mathbf{b}_2,\ldots,\mathbf{b}_k$ are linearly independent, we have $\mathrm{rank}(L_{\mathbf{x}})\ge k$.
    So $\mathrm{rank}(L_{\mathbf{x}})= k$.
    Conversely, suppose $\mathrm{rank}(L_\mathbf{x})=k$.
    Define the group homomorphism $\phi:\Z^k\to \C^*,(j_1,j_2,\ldots,j_k)\mapsto x_1^{j_1} x_2^{j_2}\cdots x_k^{j_k}$, where $\C^*$ is the multiplicative group of $\C$.
    Then $\mathrm{Ker}(\phi)=L_{\mathbf{x}}$ and so $\mathrm{Im}(\phi)\simeq \Z^k/L_{\mathbf{x}}$.
    Since $\mathrm{rank}(L_\mathbf{x})=k$, $\mathrm{Im}(\phi)$ is finite.
    This implies that $x_1,x_2,\ldots,x_k$ are all roots of unity.
\end{proof}

The following interpolation lemma is what we call ``conformal lattice interpolation'', which is a generalization of Lemma 5.1 in~\cite{CaiFX18}. 
It will be used in Lemma~\ref{lm:interpolation xyz}, which serves as a foundation for establishing \numP-hardness.
Informally and intuitively, the lemma states that we can evaluate a multivariate polynomial at $\mathbf{y}=(y_1,y_2,\ldots,y_k)$ by evaluating it at $\mathbf{x}=(x_1,x_2,\ldots,x_k)$ as long as the lattice of multiplicative relations of $\mathbf{y}$ is ``finer'' than that of $\mathbf{x}$.

\begin{lemma}[Conformal lattice interpolation]\label{lm:general interpolation}
    Let $k\in \Z_+$, $\mathbf{x}=(x_1,x_2,\ldots,x_k)\in \C^k$ and  $\mathbf{y}=(y_1,y_2,\ldots,y_k)\in \C^k$.
    Suppose 
    $L_{\mathbf{x}} \subseteq L_{\mathbf{y}}$.
    Given the numbers $$N_l(x_1, x_2, \ldots, x_k) := \sum\limits_{\substack{j_1,j_2,\dots, j_k \geq 0 \\ j_1+j_2+\dots+j_k\leq m}} (x_1^{j_1}x_2^{j_2}\cdots x_k^{j_k})^l z_{j_1,j_2,\dots,j_k}$$ for $l = 1,2,\dots,  \binom{m+k}{k}$,  we can compute $N_1(y_1, y_2, \ldots, y_k)$ in time polynomial in $m$.
\end{lemma}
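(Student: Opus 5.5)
The plan is to group the unknowns $z_{\mathbf{j}}$ by the value of the monomial $\mathbf{x}^{\mathbf{j}} := x_1^{j_1}\cdots x_k^{j_k}$, recover the grouped sums by a Vandermonde system, and then reassemble them with the weights $\mathbf{y}^{\mathbf{j}}$. Let $J=\{\mathbf{j}\in\Z_{\ge 0}^k : j_1+\cdots+j_k\le m\}$, so $|J|=\binom{m+k}{k}$, which is polynomial in $m$ for fixed $k$. Partition $J$ into classes according to the value $\mathbf{x}^{\mathbf{j}}$. Two indices $\mathbf{j},\mathbf{j}'$ lie in the same class iff $\mathbf{x}^{\mathbf{j}-\mathbf{j}'}=1$, i.e. $\mathbf{j}-\mathbf{j}'\in L_{\mathbf{x}}$. (The convention $0^0=1$ takes care of zero coordinates: if some $x_i=0$, equality of the monomials is still exactly the lattice condition, since the exponent vectors are nonnegative and the convention is used consistently.) Let the distinct values be $\lambda_1,\ldots,\lambda_r$ with $r\le|J|$, and set $Z_t=\sum_{\mathbf{j}:\mathbf{x}^{\mathbf{j}}=\lambda_t} z_{\mathbf{j}}$. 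Then $N_l(\mathbf{x})=\sum_{t=1}^r \lambda_t^l Z_t$ for $l=1,\ldots,|J|$.

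Next I would solve for the $Z_t$. The coefficient matrix $(\lambda_t^l)$ with $1\le l\le r$ is a Vandermonde matrix scaled column-wise by $\lambda_t$. It is nonsingular as long as the $\lambda_t$ are distinct and nonzero. If some $\lambda_t=0$, that term contributes $0$ to every $N_l$ with $l\ge1$, so $Z_t$ cannot be recovered this way; I treat that class separately in the compatibility step. For the nonzero $\lambda_t$ we solve for $Z_t$ in polynomial time using the first $r\le|J|$ equations. The $\lambda_t$ themselves are computable from $\mathbf{x}$, so we can identify the classes.

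The key step is compatibility. If $\mathbf{j},\mathbf{j}'$ lie in the same class, then $\mathbf{j}-\mathbf{j}'\in L_{\mathbf{x}}\subseteq L_{\mathbf{y}}$, hence $\mathbf{y}^{\mathbf{j}}=\mathbf{y}^{\mathbf{j}'}$. So each class has a well-defined value $\mu_t$, and $N_1(\mathbf{y})=\sum_t \mu_t Z_t$ is computable. This is the step I expect to be the main obstacle, because of the zero case. A class with $\lambda_t=0$ consists of $\mathbf{j}$ having $j_i>0$ for some $x_i=0$. We need $\mu_t=0$ there, or else we need to know that $Z_t$ is not required. I would argue this from the hypothesis. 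If $x_i=0$, then $e_i\notin L_{\mathbf{x}}$, but $0\cdot e_i$ is trivially in it, which gives no information. So I would instead interpret the lattice inclusion under the $0^0$ convention as forcing $y_i=0$ whenever $x_i=0$. If that does not follow directly, I would add it as a harmless assumption satisfied in all applications, or handle it by restricting to the coordinates where $x_i\ne0$. With $y_i=0$ for every such $i$, each zero-class has $\mu_t=0$, and the unknown $Z_t$ drops out. Finally, I would check that all arithmetic involves polynomially many operations in $m$.
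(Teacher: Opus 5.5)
Your core argument is the paper's proof: group the unknowns by the value of $\mathbf{x}^{\mathbf{j}}$, recover the grouped sums from a Vandermonde system, and use $L_{\mathbf{x}}\subseteq L_{\mathbf{y}}$ to see that $\mathbf{y}^{\mathbf{j}}$ is constant on each group. The paper groups by cosets of $L_{\mathbf{x}}$, which is the same thing for nonzero values. The argument is complete when every $x_i\neq 0$.

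The gap is the zero case, and your repair there fails, because the inclusion $L_{\mathbf{x}}\subseteq L_{\mathbf{y}}$ does not force $y_i=0$ when $x_i=0$. Take $k=1$, $m=1$, $x_1=0$, $y_1=1$. Then $L_{\mathbf{x}}=\{0\}\subseteq\mathbb{Z}=L_{\mathbf{y}}$. But $N_l(0)=z_0$ for every $l\ge 1$, while $N_1(1)=z_0+z_1$, so the given numbers do not even determine the target. Hence the lemma as stated is false when some $x_i=0$, and no argument can close that case. Restricting to the coordinates with $x_i\neq 0$ only produces the sum of $\mathbf{y}^{\mathbf{j}}z_{\mathbf{j}}$ over those $\mathbf{j}$ with $j_i=0$ whenever $x_i=0$, which is a different quantity.

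Nor is the extra hypothesis automatically satisfied where the paper applies the lemma. In Lemma~\ref{lm:interpolation xyz}, $v=0$ when $b=1$, $u=0$ when $c=-d$, and $w=0$ when $c=d$. For instance, $b=1$, $c=d=2$ gives $(u,v,w)=(2,0,0)$ and $L_{(u,v,w)}=\{0\}$. The rank-$0$ branch then interpolates a target with $y=1$, even though this $f$ is in $\M$.

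Your parenthetical, that equal monomial values correspond exactly to cosets of $L_{\mathbf{x}}$, is also false for the value $0$: $0^1=0^2$ but $1-2\notin L_{(0)}$. This is harmless only because you treat the zero class separately.

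The paper's proof has the same blind spot. It asserts that the matrix $(x_T^{\,l})$, indexed by $l$ and by the cosets $T$, has full rank. That fails as soon as some coset has $x_T=0$: that column vanishes, and several cosets can share the value $0$. The correct statement needs an extra hypothesis such as \emph{$y_i=0$ whenever $x_i=0$}; in particular, it suffices that all $x_i\neq 0$.

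Under that hypothesis your proof goes through:
\begin{itemize}
\item Classes of value $0$ contribute nothing to $N_1(\mathbf{y})$.
\item The classes with distinct nonzero values $\lambda_1,\ldots,\lambda_{r'}$ are recovered, because $\det(\lambda_t^{\,l})_{1\le l,t\le r'}=\prod_t\lambda_t\prod_{s<t}(\lambda_t-\lambda_s)\neq 0$.
\item Grouping by value rather than by coset is exactly what makes the distinctness needed for full rank automatic.
\end{itemize}
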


\begin{proof}
We treat
$N_l = \sum\limits_{\substack{j_1,j_2,\dots, j_k \geq 0 \\ j_1+j_2+\dots+j_k\leq m}} (x_1^{j_1}x_2^{j_2}\cdots x_k^{j_k})^l z_{j_1,j_2,\dots,j_k}$
(where $1 \le {\ell} \le {m+k \choose k}$)
as a system of linear equations with unknowns $z_{j_1,...,j_k}$.
The coefficient vector of the first equation is $(x_1^{j_1} x_2^{j_2}\cdots x_k^{j_k})$,
indexed by the tuple $(j_1,j_2,\ldots,j_k)$, where $0 \le j_1,j_2,\ldots,j_k \le m$ and $j_1+j_2+\ldots+j_k\leq m$.
 The coefficient matrix of the linear system is a Vandermonde matrix,
with row index ${\ell}$ and column index $(j_1,j_2,...,j_k)$.
However, this Vandermonde matrix is in general rank deficient.
If $(j_1,j_2,\ldots,j_k)-(j_1',j_2',\ldots,j_k') \in L_{\mathbf{x}}$, then columns $(j_1,j_2,\ldots,j_k)$ and $(j_1',j_2',\ldots,j_k')$ have the same value.

We can combine the identical columns $(j_1,...,j_k)$ and $(j_1',...,j_k')$
if $(j_1,...,j_k)-(j_1',...,j_k') \in L_{\mathbf{x}}$, since for each coset $T$ of $L_{\mathbf{x}}$,
the value $x_1^{j_1} x_2^{j_2}\cdots x_k^{j_k}$ is the same.  
Thus, the sum
$$\sum\limits_{\substack{j_1,j_2,\dots, j_k \geq 0 \\ j_1+j_2+\dots+j_k\leq m}} (x_1^{j_1}x_2^{j_2}\cdots x_k^{j_k})^l z_{j_1,j_2,\dots,j_k}$$ can be
written as
$$\sum\limits_{T} x_T^l 
\sum\limits_{(j_1,j_2,\ldots,j_k)\in T\cap C}{z_{j_1,j_2,\dots,j_k}},
$$
where $x_T$ is the value of  $x_1^{j_1}x_2^{j_2}\cdots x_k^{j_k}$ 
for $(j_1, j_2, \ldots, j_k) \in T$, and the sum over $T$ is for all cosets $T$ of $L_{\mathbf{x}}$
having a non-empty intersection
with 
$C = \{ (j_1,j_2,\ldots, j_k) \mid 0 \le j_1,j_2,\ldots,j_k \le m, ~j_1+j_2+\ldots+j_k\leq m\}$.
Now  the coefficient matrix, indexed by $1 \le {\ell} \le {m+k \choose k}$
for the  rows and the cosets $T$ with $T \cap C \not = \emptyset$
for the columns,
has full rank.
And so we can solve 
$\left(\sum_{ (j_1,j_2,\ldots,j_k)\in T\cap C}{z_{j_1,j_2,\dots,j_k}}\right)$
for each coset $T$ with
$T \cap C \not = \emptyset$.
Notice that for the sum
$$\sum\limits_{\substack{j_1,j_2,\cdots, j_k \geq 0 \\ j_1+j_2+\dots+j_k\leq m}} y_1^{j_1}y_2^{j_2}\cdots y_k^{j_k} z_{j_1,j_2,\dots,j_k}$$ we also have a similar
expression
 $$\sum\limits_{T} y_T 
 \sum\limits_{(j_1,j_2,\ldots,j_k)\in T\cap C}{z_{j_1,j_2,\dots,j_k}},
 $$
 since $y_1^{j_1}y_2^{j_2}...y_k^{j_k}$ on each coset $T$ of $L_{\mathbf{y}}$ is also constant by $L_{\mathbf{x}}\subseteq L_{\mathbf{y}}$.
The lemma follows.
\end{proof}

\subsection{Tractable signature sets}\label{subsec: tractable signature sets}
We list some sets of signatures that are known to be tractable for various counting problems \cite{jcbook}.

\paragraph{Affine signatures}

\begin{definition}
For a signature $f$ of arity $n$, the support of $f$ is
$$
\operatorname{supp}(f) = \{ (x_1,x_2,\dots,x_n) \in \mathbb{Z}_2^n \mid f(x_1,x_2,\dots,x_n) \neq 0\}.
$$
\end{definition}

\begin{definition}\label{def: affine signatures}
A signature $f(x_1,\dots,x_n)$ of arity $n$ is \textbf{affine} if it has the form 
$$
\lambda\cdot \chi_{AX=0} \cdot \mathfrak{i}^{Q(X)},
$$
where $\lambda \in \mathbb{C}$, $X=(x_1,x_2,\dots,x_n,1)$, $A$ is a matrix over $\mathbb{Z}_2$, $Q(x_1,x_2, \dots,x_n) \in \mathbb{Z}_4[x_1,x_2,\dots,x_n]$ is a quadratic (total degree of most 2) multilinear polynomial with the additional requirement that the coefficients of all cross terms are even, i.e., $Q$ has the form
$$
Q(x_1,x_2, \dots,x_n) = a_0 +\sum\limits_{k=1}^{n}a_kx_k + \sum\limits_{1\leq i < j \leq n} 2b_{ij}x_ix_j,
$$
and $\chi$ is a 0-1 indicator function such that $\chi_{AX=0}$ is 1 iff $AX=0$. We use $\mathscr{A}$ to denote the set of all affine signatures.
\end{definition}

The following lemma follows directly from the definition.

\begin{lemma}\label{lm:is binary affine?}
Let $g$ be a binary signature with support of size 4. Then, $g\in \mathscr{A}$ iff $g$ has the signature matrix $M(g) = \lambda\left[\begin{smallmatrix} \mathfrak{i}^a & \mathfrak{i}^b \\ \mathfrak{i}^c & \mathfrak{i}^d\end{smallmatrix} \right]$ for some nonzero $\lambda \in \mathbb{C}, a,b,c,d\in \mathbb{N}$ and $a+b+c+d \equiv 0 \mod 2$.
\end{lemma}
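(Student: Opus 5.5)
The plan is to unwind Definition~\ref{def: affine signatures} in the binary case, where $n=2$, and then prove the two directions separately. For a binary $g$ with full support, the indicator $\chi_{AX=0}$ must equal $1$ on all of $\{0,1\}^2$. So we may take $A$ to be the zero matrix, and $g$ is affine iff
\[
g(x_1,x_2)=\lambda\,\mathfrak{i}^{Q(x_1,x_2)}, \qquad Q = a_0 + a_1x_1 + a_2x_2 + 2b_{12}x_1x_2 ,
\]
with $a_0,a_1,a_2,b_{12}\in\mathbb{Z}_4$. Taking $\lambda\neq 0$ is automatic, since the support is nonempty.

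For the forward direction, write down the four entries:
\[
g_{00}=\lambda\mathfrak{i}^{a_0},\quad
g_{01}=\lambda\mathfrak{i}^{a_0+a_2},\quad
g_{10}=\lambda\mathfrak{i}^{a_0+a_1},\quad
g_{11}=\lambda\mathfrak{i}^{a_0+a_1+a_2+2b_{12}} .
\]
Their exponents sum to $4a_0+2a_1+2a_2+2b_{12}$, which is even. Lifting the exponents to nonnegative integers does not change parity modulo $2$, because reduction mod $4$ preserves parity. This gives the stated form with $a+b+c+d\equiv 0 \pmod 2$.

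For the converse, suppose $M(g)=\lambda\left[\begin{smallmatrix}\mathfrak{i}^a & \mathfrak{i}^b\\ \mathfrak{i}^c & \mathfrak{i}^d\end{smallmatrix}\right]$ with $a+b+c+d$ even. Set $a_0=a$, $a_2=b-a$ and $a_1=c-a$, all taken mod $4$. We then need $2b_{12}\equiv d-b-c+a \pmod 4$. This is solvable exactly when $d-b-c+a$ is even, and that is equivalent to $a+b+c+d$ being even. Choosing $b_{12}=(d-b-c+a)/2 \bmod 2$ realizes $g$ as $\lambda\,\mathfrak{i}^{Q}$ with $\chi\equiv 1$, so $g\in\mathscr{A}$.

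There is no real obstacle here. The only point needing care is that the cross-term coefficient is forced to be even, and this is precisely what produces the parity condition.
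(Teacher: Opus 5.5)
Your proof is correct and is the direct unwinding of Definition~\ref{def: affine signatures} that the paper has in mind; the paper gives no argument beyond saying the lemma ``follows directly from the definition.'' You correctly identify the even cross-term coefficient $2b_{12}$ as the source of the parity condition, in both directions.
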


\begin{lemma}\label{lm:is 1cz1 affine?}
Let $f$ be a 4-ary signature with the signature matrix $M(f) = \left[\begin{smallmatrix} a & 0 & 0 & 0\\ 0 & c & 0 & 0 \\ 0 & 0 & z & 0 \\ 0 & 0 & 0 & x \end{smallmatrix}\right]$ with $aczx \neq 0$, then $f \in \mathscr{A}$ iff the binary signature $g = (a,c,z,x)^T$ is affine.
\end{lemma}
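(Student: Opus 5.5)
The plan is to verify both directions directly from Definition~\ref{def: affine signatures}, writing the support of $f$ explicitly. With $x_1x_2,x_4x_3$ indexing, the nonzero entries of $M(f)$ sit at $0000$ (value $a$), $0110$ (value $c$), $1001$ (value $z$) and $1111$ (value $x$). So $\operatorname{supp}(f)=\{x_1=x_4,\ x_2=x_3\}$, which is an affine subspace. On the support, $f$ depends only on the pair $(x_1,x_2)$, and it equals $g(x_1,x_2)$ with $g=(a,c,z,x)^T$. In other words,
\[
f(x_1,x_2,x_3,x_4)=\chi_{x_1=x_4}\,\chi_{x_2=x_3}\,g(x_1,x_2).
\]

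For ($\Leftarrow$), suppose $g\in\mathscr{A}$. Since $g$ has full support, $g=\lambda\,\mathfrak{i}^{Q(x_1,x_2)}$ for some $Q$ of the allowed form (constant term, linear terms, and an even cross term $2b_{12}x_1x_2$). Take $A$ to encode the two linear equations $x_1+x_4=0$ and $x_2+x_3=0$, and keep the same $Q$, viewed as a polynomial in all four variables. Then $f=\lambda\chi_{AX=0}\,\mathfrak{i}^{Q}$, so $f\in\mathscr{A}$.

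For ($\Rightarrow$), suppose $f=\lambda\chi_{AX=0}\,\mathfrak{i}^{Q(x_1,\dots,x_4)}$. Restrict to the assignments $x_4=x_1$, $x_3=x_2$; this restriction is a bijection onto $\operatorname{supp}(f)$. It gives $g(x_1,x_2)=\lambda\,\mathfrak{i}^{Q(x_1,x_2,x_2,x_1)}$, where the characteristic function equals $1$ on these points because they lie in the support.

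The one real check is that $Q'(x_1,x_2):=Q(x_1,x_2,x_2,x_1)$ is again of the allowed form modulo $4$.
\begin{itemize}
\item A cross term such as $2b_{14}x_1x_4$ becomes $2b_{14}x_1^2=2b_{14}x_1$, using multilinear reduction $x^2=x$ for $0$-$1$ values. This is a linear term, which is allowed.
\item Every surviving cross term between $x_1$ and $x_2$ keeps an even coefficient.
\item The linear terms combine into linear terms.
\end{itemize}
Hence $Q'$ is quadratic multilinear with even cross coefficients, so $g\in\mathscr{A}$.

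As a sanity check, one can alternatively invoke Lemma~\ref{lm:is binary affine?}: writing $g$ entries as $\lambda\mathfrak{i}^{e}$, the parity condition $\sum e\equiv 0 \pmod 2$ is preserved because the restriction introduces only even contributions from cross terms. I do not anticipate any real obstacle; the only care needed is the multilinear reduction $x^2=x$ and tracking the index ordering $x_1x_2,x_4x_3$ so that the support is correctly identified as $x_1=x_4$, $x_2=x_3$.
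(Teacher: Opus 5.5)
Your proof is correct and uses the same idea as the paper, namely the factorization $f(x_1,x_2,x_3,x_4)=g(x_1,x_2)\,\chi_{x_2=x_3}\,\chi_{x_1=x_4}$. The paper simply states that the lemma follows from this identity, while you spell out both directions, including the check that substituting $x_3=x_2$, $x_4=x_1$ into $Q$ leaves all cross-term coefficients even.
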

\begin{proof}
    Note that $f(x_1,x_2,x_3,x_4)=g(x_1,x_2)\cdot \chi_{x_2=x_3}\cdot \chi_{x_1=x_4}$, the lemma follows.
\end{proof}

\paragraph{Product-type signatures}
\begin{definition}\label{def: product type}
A signature on a set of variables $X$ is of \textbf{product type} if it can be expressed as a product of unary functions, binary equality functions $([1,0,1])$, and binary disequality functions $([0,1,0])$, each on one or two variables of $X$. We use $\mathscr{P}$ to denote the set of product-type functions.
\end{definition}

The following lemma follows directly from the definition.
\begin{lemma}\label{lm: is binary product?}
    Let $f=\left[\begin{smallmatrix}
    f_{00}& f_{01}\\
    f_{10}& f_{11}
\end{smallmatrix}\right]
    =\left[\begin{smallmatrix}
    \alpha & \beta\\
    \gamma & \delta
\end{smallmatrix}\right]$
be a binary signature. Then $f\in \mathscr{P}$ iff $\alpha\delta-\beta\gamma=0$, or $\alpha=\delta=0$, or $\beta=\gamma=0$.
\end{lemma}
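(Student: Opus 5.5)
We prove both directions straight from Definition~\ref{def: product type}, arguing on the $2\times 2$ matrix $M(f)=\left[\begin{smallmatrix}\alpha&\beta\\ \gamma&\delta\end{smallmatrix}\right]$ of the binary signature $f(x_1,x_2)$.

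For the ``if'' direction we exhibit a factorization in each case.
\begin{itemize}
\item If $\alpha\delta-\beta\gamma=0$, then $M(f)$ has rank at most one. Hence $M(f)=uv^T$ for some $u,v\in\mathbb{C}^2$, that is, $f(x_1,x_2)=u(x_1)v(x_2)$ is a product of two unary functions.
\item If $\alpha=\delta=0$, then $f(x_1,x_2)=\chi_{x_1\neq x_2}\cdot u(x_1)$ with $u=(\beta,\gamma)$. This is a binary disequality times a unary function.
\item If $\beta=\gamma=0$, then $f(x_1,x_2)=\chi_{x_1=x_2}\cdot u(x_1)$ with $u=(\alpha,\delta)$. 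This is a binary equality times a unary function.
\end{itemize}
In all three cases $f\in\mathscr{P}$.

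For the ``only if'' direction, let $f$ be a product of unary functions on $x_1$ or $x_2$, binary equalities and binary disequalities. Each binary factor acts either on one variable twice or on the pair $\{x_1,x_2\}$.

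A factor on a single variable contributes only a unary function. Indeed, $(=_2)(x,x)\equiv 1$ and $(\neq_2)(x,x)\equiv 0$; the latter makes $f\equiv 0$, which is degenerate.

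Merging all unary factors, $f=u(x_1)v(x_2)\cdot h(x_1,x_2)$. Here $h$ is a product of copies of $=_2$ and $\neq_2$ on the pair $(x_1,x_2)$, or $h\equiv 1$ if there are none. Such a product is $1$, $\chi_{=}$, $\chi_{\neq}$, or $0$ (if both kinds appear).

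\begin{itemize}
\item If $h\equiv 1$ or $h\equiv 0$, then $M(f)=uv^T$ or $M(f)=0$, so $\alpha\delta-\beta\gamma=0$.
\item If $h=\chi_{=}$, then the off-diagonal entries vanish, so $\beta=\gamma=0$.
\item If $h=\chi_{\neq}$, then the diagonal entries vanish, so $\alpha=\delta=0$.
\end{itemize}

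The only point needing care is this case analysis on $h$, including the degenerate cases where a factor is applied to a single variable twice. No real obstacle arises; the statement follows directly from the definition.
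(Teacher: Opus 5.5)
Your proof is correct. It is exactly the direct unfolding of Definition~\ref{def: product type} that the paper has in mind when it says the lemma ``follows directly from the definition'' (the paper gives no further argument), and your treatment of binary factors applied to a single variable and of the degenerate product $h\equiv 0$ covers the only edge cases.
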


\begin{lemma}\label{lm:is product type?}
Let $f$ be a 4-ary signature. 
If $M(f) = \left[\begin{smallmatrix}1 & 0 & 0 & 0 \\ 0 & c & 0 & 0 \\ 0 & 0 & z & 0 \\ 0 & 0 & 0 & 1\end{smallmatrix}\right]$ with $cz \neq 0$, then $f \in \mathscr{P}$ iff $cz = 1.$ 
Similarly, if
$M(f) = \left[\begin{smallmatrix} 
1 & 0 & 0 & 0 \\ 
0 & 0 & d & 0 \\ 
0 & w & 0 & 0 \\ 
0 & 0 & 0 & 1
\end{smallmatrix} \right]$ with $dw \neq 0$, then $f \in \mathscr{P}$ iff $dw=1$.
\end{lemma}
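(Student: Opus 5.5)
The plan is to read off the support of $f$ and express $f$ as a product of a binary signature with binary equalities, as in Lemma~\ref{lm:is 1cz1 affine?}, and then apply Lemma~\ref{lm: is binary product?}.

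\emph{First matrix.} With row index $x_1x_2$ and column index $x_4x_3$, the nonzero entries sit at $0000$, $0110$, $1001$, $1111$. These are exactly the inputs with $x_1=x_4$ and $x_2=x_3$. Hence
$$f(x_1,x_2,x_3,x_4)=g(x_1,x_2)\cdot\chi_{x_2=x_3}\cdot\chi_{x_1=x_4},\qquad M(g)=\left[\begin{smallmatrix}1&c\\ z&1\end{smallmatrix}\right].$$
If $cz=1$, then $\det M(g)=0$, so $g\in\mathscr{P}$ by Lemma~\ref{lm: is binary product?}. Multiplying by the two binary equalities keeps $f$ in $\mathscr{P}$.

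\emph{Second matrix.} Here the nonzero entries sit at $0000$, $0101$, $1010$, $1111$. These are exactly the inputs with $x_1=x_3$ and $x_2=x_4$. So
$$f=g(x_1,x_2)\cdot\chi_{x_1=x_3}\cdot\chi_{x_2=x_4},\qquad M(g)=\left[\begin{smallmatrix}1&d\\ w&1\end{smallmatrix}\right],$$
and the sufficiency direction goes through exactly as before, with $dw=1$ giving $\det M(g)=0$.

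\emph{Converse.} Suppose $f\in\mathscr{P}$. I pin $x_3$ and $x_4$ to the values forced by $x_1$ and $x_2$. I also need that $\mathscr{P}$ is closed under this restriction, i.e.\ under substituting a variable by another variable, or by a pinned value, in each factor. Under these substitutions each unary, equality or disequality factor becomes either a unary factor, an equality or disequality on $(x_1,x_2)$, or a constant. This is the step needing the most care, since the substitution can also produce factors such as $\chi_{x_1=x_1}$ or $\chi_{x_1\neq x_1}$; these are the constants $1$ and $0$, which are harmless.

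Thus $g\in\mathscr{P}$. By Lemma~\ref{lm: is binary product?}, either $\det M(g)=0$, or both diagonal entries vanish, or both off-diagonal entries vanish. The diagonal entries are $1$, and $cz\neq0$ (respectively $dw\neq0$) rules out vanishing off-diagonal entries. So $\det M(g)=0$, which gives $cz=1$ (respectively $dw=1$).
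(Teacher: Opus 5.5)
Your proof is correct. The paper gives no argument of its own for this lemma. It cites Lemma~2.10 of \cite{CaiF17} for the first matrix and asserts that the same proof carries over to the second.

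Your argument is a self-contained replacement that handles both matrices the same way:
\begin{itemize}
\item You factor $f$ as a binary $g$ times two binary equalities. This is exactly the decomposition used in Lemma~\ref{lm:is 1cz1 affine?}, and your supports $x_1=x_4,\ x_2=x_3$ and $x_1=x_3,\ x_2=x_4$ are read off correctly.
\item Sufficiency follows from $\det M(g)=1-cz=0$ (respectively $1-dw=0$).
\item For necessity, $\mathscr{P}$ is closed under identifying variables. Hence $g(x_1,x_2)=f(x_1,x_2,x_2,x_1)$ (respectively $f(x_1,x_2,x_1,x_2)$) lies in $\mathscr{P}$. Lemma~\ref{lm: is binary product?} then forces $cz=1$ (respectively $dw=1$), because the diagonal of $M(g)$ is $1$ and the off-diagonal entries are nonzero.
\end{itemize}

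Compared with the citation, your route rests only on lemmas stated in the paper and makes the ``same proof generalizes'' claim explicit. The citation only buys brevity.

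One minor wording point: what you do to $x_3,x_4$ is identification with $x_1,x_2$, not pinning to constants. So the ``pinned value'' part of your closure claim is not needed. Also, a constant-$0$ factor such as $\chi_{x_1\neq x_1}$ cannot actually occur, since $f_{0000}=1$.
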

\begin{proof}
Lemma 2.10 in~\cite{CaiF17} proved the first case.
The same proof easily generalizes for the second case. 
\end{proof}

\begin{remark}\label{rmk: affine or product imply support power of 2}
    By Definition~\ref{def: affine signatures} and Definition~\ref{def: product type}, if $f\in \mathscr{A}$ or $f\in \mathscr{P}$, then $|\mathrm{supp}(f)|$ is power of 2.
\end{remark}

\paragraph{Local affine signatures $\mathscr{L}$}
\begin{definition}
A function $f$ of arity $n$ is in $\mathscr{L}$, if for any $\sigma = s_1s_2\dots s_n$ in the support of $f$, the transformed function
$$
R_{\sigma}f: (x_1,x_2, \dots, x_n) \mapsto \sqrt{\mathfrak{i}}^{\sum_{i=1}^{n} s_ix_i}f(x_1,x_2,\dots,x_n)
$$
is in $\mathscr{A}$. Here each $s_i$ is a 0-1 valued integer, and the sum $\sum_{i=1}^{n} s_ix_i$ is evaluated as an integer (or an integer mod 8).
\end{definition}

\paragraph{Matchgate signatures}

\hspace{.3in}

Matchgates were introduced by Valiant~\cite{val02a, val02b} to give polynomial-time algorithms for a collection of counting problems over planar graphs.
As the name suggests,
problems expressible by matchgates can be reduced to computing a weighted sum of perfect matchings.
The latter problem is tractable over planar graphs by Kasteleyn's algorithm~\cite{Kasteleyn1967},
a.k.a.~the FKT algorithm~\cite{TF61,Kasteleyn1961}.
These counting problems are naturally expressed in the Holant framework using \emph{matchgate signatures}.
We use $\mathscr{M}$ to denote the set of all matchgate signatures;
thus $\PlHolant(\mathscr{M})$ is tractable, as well as
$\PlHolant(\neq_2 \mid \mathscr{M})$.

The parity of a signature is even (resp.~odd) if its support is on entries of even (resp.~odd) Hamming weight.
We say a signature satisfies the Even (resp. Odd) Parity Condition
if all entries of odd  (resp. even) weight are zero. For signatures of arity 
at most $4$, the matchgate signatures are characterized by the following lemma.

\begin{lemma}[\cite{val02b}\cite{jcbook}]\label{lm:is matchgate}
If $f$ has arity $\leq 3$, then $f \in \mathscr{M}$ iff $f$ satisfies the Parity Condition. 
If $f$ has arity 4, then $f\in\M$ iff $f$ satisfies the Even Parity Condition, i.e.,
$$
M_{x_1x_2,x_4x_3}(f) = \left[\begin{matrix} f_{0000} & 0 & 0 & f_{0011} \\ 0 & f_{0110} & f_{0101} & 0 \\ 0 & f_{1010} & f_{1001} & 0 \\ f_{1100} & 0 & 0 & f_{1111} \end{matrix} \right],
$$
and
$
\det M_{\rm{Out}}(f) = \det M_{\rm{In}}(f)
$;
or $f$ satisfies the Odd Parity Condition, i.e.,
$$
M_{x_1x_2,x_4x_3}(f) = \left[\begin{matrix} 0 & f_{0010} & f_{0001} & 0 \\ f_{0100} & 0 & 0 & f_{0111} \\ f_{1000} & 0 & 0 & f_{1011} \\ 0 & f_{1110} & f_{1101} & 0 \end{matrix} \right],
$$
and $f_{0010} f_{1101}-f_{0001} f_{1110} = f_{0100} f_{1011} - f_{0111} f_{1000}$.
\end{lemma}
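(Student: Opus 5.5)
The approach goes through Valiant's Pfaffian characterization of matchgates, treating necessity and sufficiency separately. I use the standard facts that a planar matchgate $\Gamma$ with external nodes $1,\dots,n$ on the outer face, in cyclic order, has signature
\[
f(x)=\operatorname{PerfMatch}(\Gamma - \{i : x_i=1\}),
\]
and that by Kasteleyn's theorem this equals, up to a fixed sign pattern, a Pfaffian of a skew-symmetric matrix. I also use the Grassmann--Plücker (matchgate) identities satisfied by the principal sub-Pfaffians of a single skew-symmetric matrix.

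\emph{Necessity.} A graph with an odd number of vertices has no perfect matching. Removing $|x|$ vertices from $\Gamma$ therefore forces $f$ to vanish on all inputs of one parity, which gives the Parity Condition for every arity. For arity $\le 3$ this is the entire claim. For arity $4$ there is exactly one nontrivial matchgate identity. In the even case it reads
\[
f_{0000}f_{1111}-f_{1100}f_{0011}+f_{1010}f_{0101}-f_{1001}f_{0110}=0,
\]
and it comes from expanding a $4$-index Pfaffian (Pfaffian of the Schur complement, i.e.\ Wick's theorem). Reading off the positions in $M_{x_1x_2,x_4x_3}(f)$, this is exactly $\det M_{\rm Out}(f)=\det M_{\rm In}(f)$. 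The odd case gives the analogous identity: either apply the same Pfaffian identity to the odd-size index sets, or reduce to the even case by attaching one extra external node to an external edge, which flips the parity of one variable.

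\emph{Sufficiency.} I would give explicit planar constructions. For arity $\le 3$, small gadgets (a single edge, a path, a star with one internal node, a triangle) realize arbitrary even or odd parity signatures; this is routine case work.

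For arity $4$ with even parity and $f_{0000}\ne0$, normalize $f_{0000}=1$. Place the four external nodes on a cycle, with weighted edges $\{i,j\}$ of weight $f_{e_i+e_j}$; planarity is handled by routing one diagonal outside the face. The Pfaffian of this $K_4$ then gives
\[
f_{1111}=f_{0011}f_{1100}-f_{0101}f_{1010}+f_{0110}f_{1001},
\]
up to the Kasteleyn signs, and this is exactly the assumed identity. The odd arity-$4$ case reduces to the even case by the parity flip on one variable, i.e.\ an extra pendant node on one external node.

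\emph{Main obstacle.} I expect the hard part to be the degenerate case $f_{0000}=0$, with all the sign bookkeeping that comes with it. If some other entry is nonzero, I would flip two variables by the pendant-node trick to move that entry to position $0000$, and check that the identity is invariant up to sign. If $f_{0000}=f_{1111}=0$, the identity no longer determines anything from weight-$2$ data alone, and a direct small gadget with internal nodes is needed.
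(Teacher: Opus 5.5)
The paper does not prove this lemma; it quotes it from Valiant and Cai--Chen. Your argument therefore has to stand on its own.

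\textbf{What works.} Several parts are correct.
\begin{itemize}
\item The parity argument.
\item The pendant-node trick. Making external node $i$ internal and hanging a new external node on it gives $f'(x)=f(x\oplus e_i)$. One such flip turns the even identity into exactly the stated odd one, so the odd case reduces to the even case in both directions.
\item Flipping two variables changes the even identity only by a sign.
\end{itemize}
That last point already disposes of every degenerate case, including $f_{0000}=f_{1111}=0$. Unless $f\equiv 0$, some even-weight entry is nonzero, and an even number of flips moves it to a corner. So the case you single out as the main obstacle is not one.

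For necessity you rely on the cited fact that the Kasteleyn signs depend only on $Z$ and turn the Pfaffian relation into exactly this identity. That is the content of Valiant's theorem, so this direction is cited rather than proved, which is the same level as the paper.

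\textbf{The gap: generic sufficiency for arity $4$.} A $K_4$ on the four external nodes is not a planar matchgate.
\begin{itemize}
\item $K_4$ is not outerplanar. Routing the diagonal $\{2,4\}$ outside the $4$-cycle encloses node $1$ or node $3$, which is then off the outer face.
\item ``Kasteleyn signs'' cannot fix the sign, because a matchgate signature is $\operatorname{PerfMatch}(\Gamma-Z)$, a sum with no signs.
\item You also mixed conventions. With $x_i=1$ meaning node $i$ is deleted (your own convention), a gadget with no internal nodes fixes $f_{1111}=1$, not $f_{0000}$. Edge $\{i,j\}$ then carries the entry at which the other two nodes are deleted.
\end{itemize}
Under the correct convention, $K_4$ gives $f_{0000}=f_{1100}f_{0011}+f_{1001}f_{0110}+f_{1010}f_{0101}$. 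Hence $\det M_{\rm Out}(f)-\det M_{\rm In}(f)=2f_{1010}f_{0101}$. The six edge weights are forced to equal the six weight-$2$ entries, so no reweighting helps. The minus sign on the crossing pairs $\{1,3\},\{2,4\}$ must come from internal nodes.

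\textbf{A gadget that works.} Take the outer cycle with weights $B',C',A',D'$ on $\{1,2\},\{2,3\},\{3,4\},\{4,1\}$. Add internal nodes $u,v$ joined by an edge of weight $1$, with $u$ adjacent to $1,2$ (weights $a_1,a_2$) and $v$ adjacent to $3,4$ (weights $b_3,b_4$). This is planar with $1,2,3,4$ on the outer face. A direct count gives
\begin{itemize}
\item $f_{1111}=1$, $f_{1100}=A'$, $f_{0011}=B'$;
\item $f_{1001}=C'+a_2b_3$, $f_{0110}=D'+a_1b_4$;
\item $f_{0101}=a_1b_3$, $f_{1010}=a_2b_4$;
\item $f_{0000}=A'B'+C'D'+a_1b_4C'+a_2b_3D'$.
\end{itemize}
Hence $f_{0000}f_{1111}=f_{1100}f_{0011}+f_{1001}f_{0110}-f_{1010}f_{0101}$. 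The crossing product $a_1a_2b_3b_4$ occurs in $f_{1001}f_{0110}$ but not in $f_{0000}$, which produces the required minus sign.

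Set $b_3=b_4=1$, $a_1=f_{0101}$, $a_2=f_{1010}$, $A'=f_{1100}$, $B'=f_{0011}$, $C'=f_{1001}-f_{1010}$, and $D'=f_{0110}-f_{0101}$. This realizes every even $f$ with $f_{1111}=1$ satisfying $\det M_{\rm Out}(f)=\det M_{\rm In}(f)$. Scaling by a disjoint internal edge, together with your flips, then completes arity-$4$ sufficiency.
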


For signatures $f$ with specific forms of signature matrices, we have explicit criteria for determining when $f$ is $\M$-transformable.
In the following, we denote $\widehat{\M} = H\M$, where 
$H=\frac{1}{\sqrt{2}}\left[\begin{smallmatrix}
    1& 1\\
    1& -1\\
\end{smallmatrix}\right].$

\begin{lemma}\cite{CaiFS21}\label{lm: six-vertex M transformable}
A signature $f$ with the signature matrix $M(f)=
\left[\begin{smallmatrix}
0 & 0 & 0 & a \\
0 & b & c & 0 \\
0 & z & y & 0 \\
x & 0 & 0 & 0
\end{smallmatrix}\right]$ is $\mathscr{M}$-transformable iff $f \in \mathscr{{M}}$ or $f \in \mathscr{\widehat{M}}$.
\end{lemma}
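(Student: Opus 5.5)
We prove the two directions separately; here ``$\mathscr{M}$-transformable'' means, by Definition~\ref{def:prelim:trans}, that there is $T\in{\rm GL}_2(\mathbb{C})$ with $(\neq_2)T\in\mathscr{M}$ and $T^{-1}f\in\mathscr{M}$.

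\emph{Sufficiency.} If $f\in\mathscr{M}$, take $T=I$. Then $\neq_2=(0,1,1,0)$ is a binary signature satisfying the Parity Condition, so it lies in $\mathscr{M}$ by Lemma~\ref{lm:is matchgate}. If $f\in\widehat{\mathscr{M}}=H\mathscr{M}$, take $T=H$. Then $(\neq_2)H^{\otimes2}$ is proportional to $(1,0,0,-1)$, which has even parity and so is in $\mathscr{M}$. Also $H^{-1}=H$, so $H^{-1}f\in\mathscr{M}$.

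\emph{Necessity.} Write $T=\left[\begin{smallmatrix}p&q\\r&s\end{smallmatrix}\right]$. A direct computation gives the entries of $(\neq_2)T^{\otimes2}$ as $2pr$ at $00$, $ps+qr$ at $01$ and $10$, and $2qs$ at $11$. This binary signature is in $\mathscr{M}$ iff it satisfies the Parity Condition, so either (A) $ps+qr=0$ or (B) $pr=qs=0$.

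\emph{Case (B).} Since $\det T\ne0$, $T$ is either diagonal or anti-diagonal, i.e.\ $T=D$ or $T=XD$ with $D$ diagonal and $X=\left[\begin{smallmatrix}0&1\\1&0\end{smallmatrix}\right]$. I will show that $\mathscr{M}$ restricted to arity-$4$ signatures is closed under $D^{\otimes4}$ and $X^{\otimes4}$, using the explicit criterion of Lemma~\ref{lm:is matchgate}.
\begin{itemize}
\item $D^{\otimes 4}$ multiplies each entry by a monomial determined by its input. This preserves the support and hence the parity. Let $D=\mathrm{diag}(d_0,d_1)$. On the outer block, $f_{0000}f_{1111}$ and $f_{0011}f_{1100}$ are both scaled by $d_0^4d_1^4$. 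On the inner block, $f_{0110}f_{1001}$ and $f_{0101}f_{1010}$ are also both scaled by $d_0^4d_1^4$. So $\det M_{\rm Out}$ and $\det M_{\rm In}$ scale by the same factor, and the condition $\det M_{\rm Out}=\det M_{\rm In}$ is preserved. The analogous identity for odd-parity signatures is preserved in the same way.
\item $X^{\otimes4}$ complements every input. It preserves parity and maps each of the two determinant identities to itself.
\end{itemize}
Hence $T^{-1}f\in\mathscr{M}$ iff $f\in\mathscr{M}$.

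\emph{Case (A).} Put $K=H^{-1}T=HT$. Since $(\neq_2)T^{\otimes2}=\big((\neq_2)H^{\otimes2}\big)K^{\otimes2}$, this is proportional to $(1,0,0,-1)K^{\otimes2}$. Its $01$ entry is $k_{11}k_{12}-k_{21}k_{22}$, and the hypothesis $ps+qr=0$ should translate exactly into the vanishing of this entry. I will then check that this forces $K$ to be of the form $D$, $XD$, or an ``orthogonal-like'' matrix. In the orthogonal-like subcase I expect to re-express $T$ as $D'$ or $XD'$, i.e.\ to fall back into case (B). Otherwise $T^{-1}f=K^{-1}Hf$, and by the closure from case (B) we get $Hf\in\mathscr{M}$, i.e.\ $f\in H\mathscr{M}=\widehat{\mathscr{M}}$.

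\emph{Expected obstacle.} The main difficulty is this last bookkeeping: the solution set of $ps+qr=0$ must be matched cleanly with $H\cdot\{D,XD\}$ up to the trivial symmetries of $\mathscr{M}$. Instead of relying on group-theoretic closure, I may use the special shape of $f$ (a $6$-vertex support, nonzero only on the entries $a,b,c,z,y,x$). Then I would expand $T^{-1}f$ explicitly and check the parity and determinant conditions of Lemma~\ref{lm:is matchgate} directly, with $q=-ps/r$ when $r\neq0$, and treat $r=0$ separately.
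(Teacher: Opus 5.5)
Your sufficiency direction, the computation of $(\neq_2)T^{\otimes 2}$ with the resulting split into (A) $ps+qr=0$ or (B) $pr=qs=0$, and the closure of arity-$4$ matchgates under $D^{\otimes 4}$ and $X^{\otimes 4}$ in Case (B) are all correct. The gap is in Case (A), and that is where the whole content of the lemma sits.

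If $ps+qr=0$ and $ps-qr\neq 0$, then $\det T=2ps=-2qr$, so all four entries of $T$ are nonzero. With $\lambda=q/p$ you get $s=-\lambda r$ and
\[ T=\left[\begin{smallmatrix} p & \lambda p\\ r & -\lambda r\end{smallmatrix}\right]=\sqrt{2}\,\mathrm{diag}(p,r)\,H\,\mathrm{diag}(1,\lambda). \]
So Cases (A) and (B) are disjoint, and your plan to re-express $T$ as $D'$ or $XD'$ in the ``orthogonal-like'' subcase cannot work. The general solution of $k_{11}k_{12}=k_{21}k_{22}$ is $K=\left[\begin{smallmatrix}\alpha&\beta\\ \beta&\alpha\end{smallmatrix}\right]D$. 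Since $H\left[\begin{smallmatrix}\alpha&\beta\\ \beta&\alpha\end{smallmatrix}\right]=\mathrm{diag}(\alpha+\beta,\alpha-\beta)\,H$, the generic subcase is exactly a non-scalar diagonal factor on the far side of $H$.

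In
\[ T^{-1}=\tfrac{1}{\sqrt2}\,\mathrm{diag}(1,\lambda^{-1})\,H\,\mathrm{diag}(p^{-1},r^{-1}), \]
your closure facts absorb the outer factor $\mathrm{diag}(1,\lambda^{-1})$. But $\mathrm{diag}(p^{-1},r^{-1})$ acts on $f$ \emph{before} $H$, and $\widehat{\mathscr{M}}$ is not closed under diagonal transformations. For example, $(1,1)^{\otimes 4}\in\widehat{\mathscr{M}}$ while $(1,\rho)^{\otimes 4}\notin\widehat{\mathscr{M}}$ for $\rho\neq\pm1$. Indeed, no argument that ignores the support of $f$ can succeed, because the analogous statement fails for eight-vertex signatures. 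Taking $b=2$, $c=3$, $d=6$ in Lemma~\ref{lm:symmetric equality_matchgate-transformable} gives an $\mathscr{M}$-transformable $f$ lying in neither $\mathscr{M}$ nor $\widehat{\mathscr{M}}$.

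The missing idea is to use the six-vertex support of $f$. All six possibly nonzero entries sit at inputs of Hamming weight $2$, so $\mathrm{diag}(p^{-1},r^{-1})^{\otimes 4}f=(pr)^{-2}f$, and hence
\[ T^{-1}f=\frac{1}{4(pr)^{2}}\,\mathrm{diag}(1,\lambda^{-1})^{\otimes 4}\,H^{\otimes 4}f. \]
Your Case (B) closure then gives $T^{-1}f\in\mathscr{M}\iff H^{\otimes 4}f\in\mathscr{M}\iff f\in\widehat{\mathscr{M}}$. This is the only place the shape of $f$ matters, and your proposal never isolates it. The ``expected obstacle'' paragraph only promises an unexecuted entrywise expansion, and the subcase $r=0$ you plan to treat separately cannot occur in Case (A).
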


\begin{lemma}\label{lm: is matchgate hat}
A signature $f$ with the signature matrix $M(f) = \left[\begin{smallmatrix} 0 & 0 & 0 & 0 \\ 0 & c & d & 0 \\ 0 & w & z & 0 \\ 0 & 0& 0& 0 \end{smallmatrix} \right]$ is in $\widehat{\mathscr{M}}$ iff $c = \epsilon z$ and $d = \epsilon w$, where $\epsilon = \pm 1$.
\end{lemma}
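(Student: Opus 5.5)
The plan is to compute $Hf = H^{\otimes 4} f$ directly and test it against Lemma~\ref{lm:is matchgate}. Since $H^{-1}=H$, we have $f \in \widehat{\M}=H\M$ iff $H^{\otimes 4} f \in \M$. Reading off the matrix $M(f)=M_{x_1x_2,x_4x_3}(f)$, the support of $f$ is $\{0110,1001,0101,1010\}$. The values are $f_{0110}=c$, $f_{1001}=z$, $f_{0101}=d$ and $f_{1010}=w$. Note that $(c,z)$ and $(d,w)$ sit on complementary inputs.

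Up to the nonzero scalar $\frac14$, the transformed signature is $\hat g(y)=\sum_x(-1)^{x\cdot y}f(x)$, that is,
\[ \hat g(y)= c(-1)^{y_2+y_3}+z(-1)^{y_1+y_4}+d(-1)^{y_2+y_4}+w(-1)^{y_1+y_3}. \]
The key simplification is that $(-1)^{y_1+y_4}=(-1)^{|y|}(-1)^{y_2+y_3}$ and $(-1)^{y_1+y_3}=(-1)^{|y|}(-1)^{y_2+y_4}$. Hence on even-weight $y$ we get
\[ \hat g(y)=(c+z)(-1)^{y_2+y_3}+(d+w)(-1)^{y_2+y_4}, \]
and on odd-weight $y$ we get
\[ \hat g(y)=(c-z)(-1)^{y_2+y_3}+(d-w)(-1)^{y_2+y_4}. \]

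Next I apply the parity part of Lemma~\ref{lm:is matchgate}: $\hat g$ must vanish on all odd-weight inputs or on all even-weight inputs. Evaluating the odd formula at $y=1000$ and $y=0010$ gives $(c-z)+(d-w)=0$ and $-(c-z)+(d-w)=0$. So the odd part vanishes iff $c=z$ and $d=w$. In the same way, using $y=0000$ and $y=0110$, the even part vanishes iff $c=-z$ and $d=-w$. If neither holds, $\hat g\notin\M$ and so $f\notin\widehat{\M}$. This gives necessity.

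For sufficiency I check the determinant condition in each case. If $c=z$ and $d=w$, $\hat g$ is even. Its outer entries (at $0000,0011,1100,1111$) are $2(c+d),-2(c+d),-2(c+d),2(c+d)$, so $\det M_{\rm Out}=0$. Its inner entries (at $0110,0101,1010,1001$) are $2(c-d),2(d-c),2(d-c),2(c-d)$, so $\det M_{\rm In}=0$. Thus $\det M_{\rm Out}=\det M_{\rm In}$ and $\hat g\in\M$.

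If $c=-z$ and $d=-w$, $\hat g$ is odd with entries $2c(-1)^{y_2+y_3}+2d(-1)^{y_2+y_4}$. I will tabulate the eight odd-weight values and verify the odd-parity identity $f_{0010}f_{1101}-f_{0001}f_{1110}=f_{0100}f_{1011}-f_{0111}f_{1000}$. By the same symmetry as in the even case, I expect both sides to be built from $(c+d)^2$ and $(c-d)^2$ and to agree.

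The only place needing care is the index bookkeeping: matching each of $c,d,w,z$ to its input string given the $x_4x_3$ column ordering, and checking the odd-case identity. There is no conceptual obstacle, so the main risk is a sign slip in that table.
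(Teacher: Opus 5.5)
Your proposal is correct and follows essentially the same route as the paper: apply $H^{\otimes 4}$, use the parity part of Lemma~\ref{lm:is matchgate} to force $c=\epsilon z$ and $d=\epsilon w$, then check the determinant condition (even case) or the odd-parity identity. Your character-sum bookkeeping is just a compact way of writing the paper's explicit $4\times 4$ matrix. The odd-case identity you left as an expectation does hold. With $c=-z$ and $d=-w$ one gets $\hat g_{0010}=\hat g_{1101}=2(d-c)$, $\hat g_{0001}=\hat g_{1110}=2(c-d)$, $\hat g_{0100}=\hat g_{1011}=-2(c+d)$ and $\hat g_{0111}=\hat g_{1000}=2(c+d)$, so both sides equal $0$.
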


\begin{proof}
Let $\hat{f}=H^{\otimes4}f$ be the transformed signature.
A direct calculation shows that $\hat{f}$ has the signature matrix
$$
M(\hat{f})=H^{\otimes2} M(f)H^{\otimes 2} = \frac{1}{4}\left[\begin{smallmatrix} c+d+w+z & -c+d-w+z & c-d+w-z & -c-d-w-z \\ 
-c-d+w+z & c-d-w+z & -c+d+w-z & c+d-w-z \\ 
c+d-w-z & -c+d+w-z & c-d-w+z & -c-d+w+z \\ 
-c-d-w-z & c-d+w-z & -c+d-w+z & c+d+w+z \end{smallmatrix} \right].$$
If $f$ satisfies the Even Parity Condition, then $-c+d-w+z=-c-d+w+z=0$, which implies that $c=z$ and $d=w$.
Plug in we have 
$M(\hat{f})=\frac{1}{2}\left[
\begin{smallmatrix}
    w+z & 0 & 0 & -w-z\\
    0 & -w+z & w-z & 0\\
    0 & w-z & -w+z & 0\\
    -w-z & 0 & 0 & w+z\\
\end{smallmatrix}
\right].$
Thus, $\det M_{\rm Out}(\hat{f})=\det M_{\rm In}(\hat{f})=0$.
Therefore, $\hat{f}\in \M$ by Lemma~\ref{lm:is matchgate}, and thus $f\in \widehat{\M}$.
Similarly, $f$ satisfies the Odd Parity Condition iff $c=-z$ and $d=-w$, in which case $f\in \widehat{\M}$.
\end{proof}

Similarly, by a direct calculation we can prove the following three lemmas:

\begin{lemma}\label{lm: is matchgate hat2}
A signature $f$ with the signature matrix $M(f) = \left[\begin{smallmatrix} 1 & 0 & 0 & 0 \\ 0 & c & 0 & 0 \\ 0 & 0 & z & 0 \\ 0 & 0& 0& 1 \end{smallmatrix} \right]$ is in $\widehat{\mathscr{M}}$ if $c = z$.
\end{lemma}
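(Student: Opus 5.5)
The plan is to follow the template of the proof of Lemma~\ref{lm: is matchgate hat}. We compute the transformed signature $\hat f = H^{\otimes 4} f$, whose matrix is $M(\hat f) = H^{\otimes 2} M(f) H^{\otimes 2}$, and then verify the matchgate criterion of Lemma~\ref{lm:is matchgate}. Since $H^{-1}=H$, the statement $f\in\widehat{\M}=H\M$ is equivalent to $\hat f\in\M$.

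For the computation, write $M(f)=\mathrm{diag}(1,c,z,1)$ with $c=z$. The entry of $M(\hat f)$ in row $(x_1x_2)$ and column $(x_4x_3)$ is
\[
\frac14\sum_{u\in\{00,01,10,11\}} d_u\,(-1)^{\langle x_1x_2,u\rangle+\langle x_4x_3,u\rangle},
\]
where $d=(1,z,z,1)$. Set $s=x_1\oplus x_4$ and $t=x_2\oplus x_3$; the sign factor is then $(-1)^{s u_1+t u_2}$. This gives the following cases.
\begin{itemize}
\item If $s=t=0$, the entry is $\frac14(2+2z)=\frac{1+z}{2}$.
\item If $s=t=1$, the entry is $\frac14(1-z-z+1)=\frac{1-z}{2}$.
\item If exactly one of $s,t$ equals $1$, the entry is $\frac14(1\pm z\mp z-1)=0$.
\end{itemize}
The vanishing in the third case is exactly where the hypothesis $c=z$ is used.

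Now we check the parity. If $s\oplus t=1$, then $x_1+x_2+x_3+x_4\equiv s+t$ is odd, so every odd-weight entry vanishes and $\hat f$ satisfies the Even Parity Condition. The outer matrix consists of the rows and columns indexed by $00,11$. Its diagonal entries $(00,00)$ and $(11,11)$ have $s=t=0$, and its off-diagonal entries have $s=t=1$. Hence
\[
M_{\rm Out}(\hat f)=\frac12\begin{bmatrix}1+z & 1-z\\ 1-z & 1+z\end{bmatrix},\qquad \det M_{\rm Out}(\hat f)=\frac{(1+z)^2-(1-z)^2}{4}=z .
\]
The inner matrix consists of the rows and columns indexed by $01,10$. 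The entries $(01,10)$ and $(10,01)$ have $s=t=0$, and the entries $(01,01)$ and $(10,10)$ have $s=t=1$. Hence
\[
M_{\rm In}(\hat f)=\frac12\begin{bmatrix}1-z & 1+z\\ 1+z & 1-z\end{bmatrix},\qquad \det M_{\rm In}(\hat f)=\frac{(1-z)^2-(1+z)^2}{4}=-z .
\]

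This gives $\det M_{\rm Out}(\hat f)=-\det M_{\rm In}(\hat f)$, which is off by a sign from the criterion. The main obstacle is the index bookkeeping, in particular the column order $x_4x_3$. I would recheck the ordering first, because with the convention $M_{x_1x_2,x_4x_3}$ the roles of $s,t$ may swap and change which positions carry $\frac{1\pm z}{2}$. If the sign persists, I would use a normalization: multiply $f$ by a binary modification with $(0,1,-1,0)$-type twisting, or equivalently use $HZ$-style adjustments. Alternatively, I would observe that replacing $H$ by $H\,\mathrm{diag}(1,-1)$ flips signs and yields the same class up to a diagonal factor. Either way, the conclusion $\hat f\in\M$ follows from Lemma~\ref{lm:is matchgate} once the determinants agree, and therefore $f\in\widehat{\M}$.
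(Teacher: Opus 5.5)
Your approach is the paper's: compute $M(\hat f)=H^{\otimes 2}M(f)H^{\otimes 2}$ and apply Lemma~\ref{lm:is matchgate}. Your $(s,t)$ case values, the Even Parity check, and $\det M_{\rm Out}(\hat f)=z$ are all correct. The error is in the inner matrix.

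In $M_{x_1x_2,x_4x_3}(\hat f)$, the entry in row $01$, column $01$ has $x_1=0$, $x_2=1$, $x_4=0$, $x_3=1$. Hence $s=x_1\oplus x_4=0$ and $t=x_2\oplus x_3=0$, and the same holds for $(10,10)$. Your own formula shows this directly: when the row label equals the column label, the exponent is $2\langle x_1x_2,u\rangle$. So every diagonal entry of $M(\hat f)$ equals $\frac{1+z}{2}$. The entries with $s=t=1$ are $(01,10)$ and $(10,01)$; you appear to have read the inner columns in the order $x_3x_4$ rather than $x_4x_3$. Therefore $M_{\rm In}(\hat f)=\frac12\left[\begin{smallmatrix}1+z&1-z\\1-z&1+z\end{smallmatrix}\right]$, and $\det M_{\rm In}(\hat f)=z=\det M_{\rm Out}(\hat f)$. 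Lemma~\ref{lm:is matchgate} then gives $\hat f\in\M$, i.e.\ $f\in\widehat{\M}$. There is no sign obstacle at all.

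As written, however, your proposal stops at $\det M_{\rm Out}(\hat f)=-\det M_{\rm In}(\hat f)$ and relies on fallback remedies that cannot work.
\begin{itemize}
\item A binary modification produces a different signature, so it says nothing about whether $f$ itself lies in $\widehat{\M}$.
\item An $HZ$-type adjustment targets a different class.
\item Replacing $H$ by $H\,\mathrm{diag}(1,-1)$ leaves the transformed signature literally unchanged, because $\mathrm{diag}(1,-1)^{\otimes 4}$ fixes any signature supported on even Hamming weight.
\end{itemize}
If the sign discrepancy were genuine with $z\neq 0$, $\hat f$ would satisfy the Even Parity Condition but violate $\det M_{\rm Out}=\det M_{\rm In}$. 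Its even-weight entries $\frac{1\pm z}{2}$ are not all zero, so it could not satisfy the Odd Parity Condition either. Then $\hat f\notin\M$, the lemma would be false, and no normalization could rescue it. What is needed is simply to correct the inner-matrix bookkeeping, after which the proof is complete.
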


\begin{lemma}\label{lm: is matchgate hat3}
A signature $f$ with the signature matrix $M(f) = \left[\begin{smallmatrix} 
0 & 0 & 0 & 1 \\ 
0 & 0 & d & 0 \\ 
0 & w & 0 & 0 \\ 
1 & 0 & 0 & 0\end{smallmatrix} \right]$ is $\M$-transformable if and only if $d = w$.
\end{lemma}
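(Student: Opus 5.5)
The plan is to reduce everything to Lemma~\ref{lm: six-vertex M transformable}. The given $M(f)$ is of the six-vertex shape required there, with parameters $a=x=1$, $b=y=0$, $c=d$, $z=w$. So $f$ is $\M$-transformable if and only if $f\in\M$ or $f\in\widehat{\M}$, and it remains to decide each membership by direct computation.

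\emph{Membership in $\M$.} The signature $f$ satisfies the Even Parity Condition, so Lemma~\ref{lm:is matchgate} applies. Its outer and inner determinants are
\[
\det M_{\rm Out}(f)=f_{0000}f_{1111}-f_{0011}f_{1100}=-1,\qquad \det M_{\rm In}(f)=-dw .
\]
Hence $f\in\M$ if and only if $dw=1$.

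\emph{Membership in $\widehat{\M}$.} I will compute $\hat f=H^{\otimes4}f$ entrywise. Write $f=e_{0011}+e_{1100}+d\,e_{0101}+w\,e_{1010}$. For each $\mathbf y\in\{0,1\}^4$,
\[
\hat f_{\mathbf y}=\tfrac14\Big[(-1)^{y_3+y_4}+(-1)^{y_1+y_2}+d(-1)^{y_2+y_4}+w(-1)^{y_1+y_3}\Big].
\]
\begin{itemize}
\item At odd-weight inputs, for instance $\hat f_{1000}=(d-w)/4$, the odd-weight entries vanish if and only if $d=w$.
\item At even-weight inputs, for instance $\hat f_{0000}=(2+d+w)/4$ and $\hat f_{0011}=(2-d-w)/4$, these two cannot both be $0$. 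So the Odd Parity Condition is impossible.
\end{itemize}
Thus $\hat f\in\M$ forces $d=w$. Conversely, when $d=w$ the outer matrix of $M(\hat f)$ has entries $\tfrac{1+d}{2},\tfrac{1-d}{2}$ and the inner matrix has entries $\tfrac{d-1}{2},-\tfrac{1+d}{2}$. A short check gives $\det M_{\rm Out}(\hat f)=\det M_{\rm In}(\hat f)=d$. By Lemma~\ref{lm:is matchgate}, this shows $f\in\widehat{\M}$ if and only if $d=w$.

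\emph{Combining, and a discrepancy.} The two computations give: $f$ is $\M$-transformable if and only if $d=w$ or $dw=1$. This does not quite match the statement. The $dw=1$ branch gives genuine counterexamples to ``only if $d=w$'', for example $d=2$, $w=\tfrac12$: here $f\in\M$, and $\neq_2\in\M$ has odd parity, so $T=I$ already works. I therefore expect the real obstacle to be confirming whether the lemma is meant under an implicit hypothesis that excludes $dw=1$ (such as $dw\neq1$, or the product-type case being handled separately as in Lemma~\ref{lm:is product type?}). Under that hypothesis the argument above yields exactly ``iff $d=w$''; without it, the correct criterion is $d=w$ or $dw=1$.
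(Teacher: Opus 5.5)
Your computations are right, and so is the discrepancy you found: the statement as printed is false. Take $d=2$, $w=\tfrac12$. Then $\det M_{\rm Out}(f)=-1=-dw=\det M_{\rm In}(f)$, so $f\in\M$ by Lemma~\ref{lm:is matchgate}. Since $\neq_2\in\M$, the identity $T=I$ witnesses $\M$-transformability even though $d\neq w$. The correct criterion for $\M$-transformability is therefore ``$d=w$ or $dw=1$''.

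The paper gives no proof beyond ``a direct calculation''. Everything indicates that the intended claim is $f\in\widehat{\M}\Leftrightarrow d=w$:
\begin{itemize}
\item this matches the lemma's label and the two neighboring lemmas, both of which are about $\widehat{\M}$;
\item in its only application, Lemma~\ref{lm:1bd1} (and hence Lemma~\ref{lm: 1bc1}), the signature has $w=0\neq d$, so $dw\neq 1$;
\item there the authors cite Lemma~\ref{lm:is matchgate} separately to exclude $\M$, which would be unnecessary if this lemma already covered $\M$.
\end{itemize}
So no implicit hypothesis is needed, and nothing downstream breaks.

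Your $\widehat{\M}$ computation is exactly that direct calculation, carried out the same way as the proof of Lemma~\ref{lm: is matchgate hat}. You show the odd-weight entries of $\hat f$ are $\pm(d-w)/4$. You rule out the Odd Parity Condition via $\hat f_{0000}+\hat f_{0011}=1$. Finally, you verify $\det M_{\rm Out}(\hat f)=\det M_{\rm In}(\hat f)=d$ when $d=w$.
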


\begin{lemma}\label{lm: M transformable by diagnol holographic transformation}
    Let $M(f) = \left[\begin{smallmatrix} 
    1 & 0 & 0 & b \\ 
    0 & c & d & 0 \\ 
    0 & \epsilon d & \epsilon c & 0 \\ 
    \epsilon b & 0 & 0 & 1 \end{smallmatrix}\right]$
    for some $\epsilon=\pm1$, and $T\in \mathrm{GL_2}(\C)$.
    If $T=\left[\begin{smallmatrix}
        w & 0\\
        0 & z
    \end{smallmatrix}\right]$
    or 
    $T=\left[\begin{smallmatrix}
        0 & x\\
        y & 0
    \end{smallmatrix}\right]$,
    then $T^{\otimes4}f\in \M\Leftrightarrow f\in \M$.
\end{lemma}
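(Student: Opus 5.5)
The plan is a direct computation of how a diagonal or anti-diagonal $T^{\otimes 4}$ acts on the signature matrix, followed by a check of the matchgate criterion in Lemma~\ref{lm:is matchgate}.

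\textbf{Diagonal case.} Let $T=\mathrm{diag}(w,z)$. Then $T^{\otimes 4}$ multiplies the entry $f_{x_1x_2x_3x_4}$ by $w^{4-k}z^{k}$, where $k$ is the Hamming weight of the input. So $M(T^{\otimes 4}f)$ has the same support as $M(f)$, and $T^{\otimes4}f$ satisfies the Even Parity Condition exactly when $f$ does (here both always do). The entries scale as follows:
\begin{itemize}
\item the $(0000)$ entry becomes $w^4$;
\item the $(1111)$ entry becomes $z^4$;
\item every weight-two entry is multiplied by $w^2z^2$.
\end{itemize}
Consequently $\det M_{\rm Out}(T^{\otimes4}f)=w^4z^4-w^4z^4 b\cdot\epsilon b = w^4z^4\det M_{\rm Out}(f)$. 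Similarly $\det M_{\rm In}(T^{\otimes4}f)=w^4z^4\det M_{\rm In}(f)$. Since $wz\neq0$, the equality $\det M_{\rm Out}=\det M_{\rm In}$ holds for $T^{\otimes4}f$ if and only if it holds for $f$. By Lemma~\ref{lm:is matchgate} this gives the equivalence. This argument does not actually use the special $\epsilon$-form.

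\textbf{Anti-diagonal case.} Write $T=\left[\begin{smallmatrix}0&x\\y&0\end{smallmatrix}\right]=\left[\begin{smallmatrix}0&1\\1&0\end{smallmatrix}\right]\mathrm{diag}(y,x)$. By the diagonal case, it suffices to treat $X=\left[\begin{smallmatrix}0&1\\1&0\end{smallmatrix}\right]$. The map $X^{\otimes4}$ complements all inputs, so $M(X^{\otimes4}f)$ is obtained from $M(f)$ by reversing both the row order and the column order. This gives
$$M(X^{\otimes4}f)=\left[\begin{smallmatrix}1&0&0&\epsilon b\\0&\epsilon c&\epsilon d&0\\0&d&c&0\\b&0&0&1\end{smallmatrix}\right].$$
Parity is preserved, and the two determinants are unchanged: each of the outer and inner $2\times2$ blocks undergoes the same double reversal, which preserves the determinant. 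Explicitly, $\det M_{\rm Out}$ is $1-\epsilon b^2$ in both cases, and $\det M_{\rm In}=\epsilon(c^2-d^2)$ in both cases. Hence Lemma~\ref{lm:is matchgate} again gives the equivalence.

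\textbf{Where care is needed.} The only step requiring attention is confirming that the order of application of $T$ versus $X$ does not matter. Because the matchgate criterion is invariant under both operations separately, composing them causes no difficulty.
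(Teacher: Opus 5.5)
Your proof is correct and takes the same route as the paper, which proves this lemma only by ``a direct calculation'' against the criterion of Lemma~\ref{lm:is matchgate}. In the diagonal case both $\det M_{\rm Out}$ and $\det M_{\rm In}$ are scaled by $w^4z^4$, and under complementation both are preserved. Your ordering concern is harmless, because both invariances hold for every even-parity 4-ary signature. Alternatively, writing $T=\mathrm{diag}(x,y)\,X$ applies $X$ first, and $X^{\otimes 4}$ keeps $f$ in the same $\epsilon$-form.
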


For the arrow symmetric equality case, we have an explicit criterion for determining whether $f$ is $\M$-transformable.

\begin{lemma}\label{lm:symmetric equality_matchgate-transformable}
    Let $M(f) = \left[\begin{smallmatrix} 1 & 0 & 0 & b \\ 0 & c & d & 0 \\ 0 & d & c & 0 \\ b & 0 & 0 & 1 \end{smallmatrix}\right]$. If $f$ is $\mathscr{M}$-transformable, then either $f \in \mathscr{M}$, or $d = \pm bc$.
    Conversely, if $d=\pm bc$, then $f$ is $\M$-transformable.
\end{lemma}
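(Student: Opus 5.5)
The plan is to first pin down which $T\in\mathrm{GL}_2(\C)$ can possibly witness $\M$-transformability. Then, for each admissible form of $T$, test $T^{-1}f$ against the matchgate criterion of Lemma~\ref{lm:is matchgate}.

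Write $T=\left[\begin{smallmatrix} t_{00} & t_{01}\\ t_{10} & t_{11}\end{smallmatrix}\right]$. The transformed left signature $(\neq_2)T^{\otimes 2}$ has entries
$$t_{0i}t_{1j}+t_{1i}t_{0j}, \qquad i,j\in\{0,1\}.$$
Since it has arity $2$, it lies in $\M$ iff it satisfies a parity condition. This forces one of two cases.

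\emph{Odd parity.} Here $t_{00}t_{10}=0$ and $t_{01}t_{11}=0$. By invertibility, $T$ is diagonal or anti-diagonal. Then Lemma~\ref{lm: M transformable by diagnol holographic transformation} applies (with $\epsilon=1$), and it gives $T^{-1}f\in\M\Leftrightarrow f\in\M$. So this case contributes exactly the alternative $f\in\M$.

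\emph{Even parity.} Here $t_{00}t_{11}+t_{10}t_{01}=0$. All four entries of $T$ must then be nonzero, since otherwise $T$ would be diagonal or anti-diagonal and we would be back in the previous case (or $T$ would be singular). After factoring out a diagonal matrix on the right and a scalar, which is harmless by Lemma~\ref{lm: M transformable by diagnol holographic transformation} applied on the appropriate side, I will normalize to
$$T^{-1}=\left[\begin{smallmatrix}1&1\\ s&-s\end{smallmatrix}\right]D, \qquad s\neq 0,\ D \text{ diagonal}.$$
Concretely, I expect the even case to reduce to $T^{-1}=\left[\begin{smallmatrix}1&1\\ s&-s\end{smallmatrix}\right]$ up to these harmless factors, with the remaining diagonal parameter absorbed using the arrow-reversal symmetry $f_{\bar x}=f_x$.

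The next step is to compute $\hat f=(T^{-1})^{\otimes 4}f$ as $A^{\otimes2}M(f)(A^{\otimes 2})^{T}$, in the same style as the computation in Lemma~\ref{lm: is matchgate hat}. Thanks to the symmetry $d=w$, $c=z$ and $b=y$, many entries coincide. I then impose the two conditions of Lemma~\ref{lm:is matchgate}.
\begin{itemize}
\item \emph{Parity condition.} Either the odd-weight or the even-weight entries of $\hat f$ vanish. This should give a linear constraint relating $s^2$ to the parameters; I expect something like $s^2(1-b)=\ldots$ or a condition forcing $s^4=1$-type values.
\item \emph{Determinant condition.} $\det M_{\rm Out}(\hat f)=\det M_{\rm In}(\hat f)$ (or its odd analogue).
\end{itemize}
Eliminating $s$ between these two equations should leave exactly the factored relation $(d-bc)(d+bc)=0$, or else a degenerate branch which I would check corresponds to $f\in\M$, i.e.\ $1-b^2=c^2-d^2$.

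The main obstacle is this elimination. The parity condition may split into several branches, and each must be traced to either $d=\pm bc$ or $f\in\M$ without losing cases, such as $b=\pm1$ or $c=0$. I would handle this by writing the entries of $\hat f$ symbolically in $s^2$ and factoring.

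For the converse, given $d=\epsilon bc$, I will exhibit $T$ explicitly. I would try $T^{-1}=\left[\begin{smallmatrix}1&1\\ s&-s\end{smallmatrix}\right]$ with $s$ chosen from the parity equation, and verify directly that $(\neq_2)T^{\otimes2}$ has even parity and that $\hat f$ satisfies the determinant equality. The case $\epsilon=-1$ should follow from $\epsilon=1$ by a binary modification with $\left[\begin{smallmatrix}1&0\\0&-1\end{smallmatrix}\right]$ on one variable, i.e.\ by composing $T$ with a diagonal $\pm1$ factor.
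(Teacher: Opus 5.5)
Your proposal has a genuine gap in how you normalize the even-parity case. The diagonal factor you discard is the essential parameter, and the parameter $s$ you keep is irrelevant.

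Note that $\left[\begin{smallmatrix}1&1\\ s&-s\end{smallmatrix}\right]=\mathrm{diag}(1,s)\left[\begin{smallmatrix}1&1\\ 1&-1\end{smallmatrix}\right]$. A diagonal matrix applied \emph{after} the transformation only rescales $\hat f_x$ by $s^{|x|}$. That preserves both parity conditions and both identities of Lemma~\ref{lm:is matchgate}, since every product appearing there has total Hamming weight $4$.

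The right factor $D=\mathrm{diag}(\delta_1,\delta_2)$ is different: it acts on $f$ \emph{before} the transformation. Up to a scalar, $D^{\otimes4}f$ is the signature with outer corners $\rho,\rho^{-1}$ instead of $1,1$, where $\rho=\delta_1^2/\delta_2^2$. Lemma~\ref{lm: M transformable by diagnol holographic transformation} only compares $D^{\otimes4}f$ with $f$. It says nothing about $A^{\otimes4}D^{\otimes4}f$ versus $A^{\otimes4}f$. Arrow reversal does not help either, since it merely swaps $\rho\leftrightarrow\rho^{-1}$. So nothing absorbs $D$.

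If you drop $D$, the odd-weight entries of $\hat f$ vanish for every $s$, by arrow reversal. The identity $\det M_{\rm Out}(\hat f)=\det M_{\rm In}(\hat f)$ then reduces to $(1+b)(c+d)=(1-b)(c-d)$, i.e.\ $d=-bc$, with $s$ unconstrained. Your family therefore contains only Hadamard-type transformations. Two branches are never examined:
\begin{itemize}
\item the branch $\rho=-1$, which yields $d=bc$, witnessed by $\left[\begin{smallmatrix}1&\ii\\ 1&-\ii\end{smallmatrix}\right]=\left[\begin{smallmatrix}1&1\\ 1&-1\end{smallmatrix}\right]\mathrm{diag}(1,\ii)$;
\item the odd-parity branch $\rho^2=-1$, where one obtains $b=c=d$ and then $b=0$.
\end{itemize}
So the forward implication is not established for all $T$. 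The elimination you anticipate, ending in $(d-bc)(d+bc)=0$, cannot occur.

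The converse has the same defect. In your family the parity equation does not involve $s$, so ``choosing $s$ from the parity equation'' cannot produce the case $d=bc$. Composing with a $\pm1$ diagonal does nothing, because $\mathrm{diag}(1,-1)^{\otimes4}$ fixes every even-support signature. A binary modification on a single variable leaves the family altogether. The map that actually exchanges the two signs is $\mathrm{diag}(1,\ii)^{\otimes4}$, which sends $(b,c,d)\mapsto(-b,-c,-d)$.

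To repair the argument, put the diagonal on the \emph{left}. The even condition on $T$ is equivalent to $T^{-1}=D'\left[\begin{smallmatrix}1&\sigma\\ 1&-\sigma\end{smallmatrix}\right]$ with $D'$ diagonal, and now $D'$ is genuinely harmless. Then:
\begin{itemize}
\item even parity of $\hat f$ forces $\sigma^4=1$, and the determinant identity gives $d=-bc$ for $\sigma=\pm1$ and $d=bc$ for $\sigma=\pm\ii$;
\item odd parity forces $\sigma^4=-1$ and $b=c=d=0$.
\end{itemize}
Alternatively, keep a fully general $T$ subject only to the vanishing-permanent condition and compute $T^{\otimes4}f$ symbolically, as the paper does.
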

\begin{proof}
    First, suppose $f$ is $\M$-transformable by 
     $T = \left[\begin{smallmatrix}
        w & x\\
        y & z
    \end{smallmatrix} \right] \in \operatorname{GL}_2(\mathbb{C})$. We have $\det(T) = wz-xy \neq 0$. The condition that $f$ is $\mathscr{M}$-transformable is equivalent to  $(0,1,1,0)(T^{-1})^{\otimes 2} \in \mathscr{M}$ and $T^{\otimes 4}f \in \mathscr{M}$. 
    Note that $(0,1,1,0)(T^{-1})^{\otimes 2} = \frac{1}{\det{T}^2}(-2yz,xy+wz,xy+wz,-2wx)$. 
    By Lemma~\ref{lm:is matchgate}, $(0,1,1,0)(T^{-1})^{\otimes 2} \in \mathscr{M}$ iff (1) $(yz=0) \wedge (wx = 0)$ or (2) $xy+wz = 0$. 
    If it 
    is case (1),
    then $w = z = 0$ or $x=y=0$ since $T \in \operatorname{GL}_2(\mathbb{C})$.
    Then $T^{\otimes 4}f \in \mathscr{M}$ implies $f \in \mathscr{M}$ by Lemma~\ref{lm: M transformable by diagnol holographic transformation}.  
Assume that (1) does not hold but case (2) holds $xy+wz = 0$.
    Since  $T \in \operatorname{GL}_2(\mathbb{C})$ we have  $xyzw \neq 0$.
    A direct computation shows that $g=T^{\otimes 4}f$ has the following signature matrix $M(g)=T^{\otimes2}M(f)(T^T)^{\otimes 2}$:
    \begin{center}$
        \left[\begin{matrix}
        w^4+2(b+c+d) w^2 x^2+x^4 & w^3 y+x^3 z & w^3y +x^3 z & p(w,x,y,z)\\
        w^3 y+x^3 z & q(w,x,y,z) & r(w,x,y,z) & wy^3+xz^3\\
        w^3 y+x^3 z & r(w,x,y,z) & q(w,x,y,z) & wy^3+xz^3\\
        p(w,x,y,z) & wy^3+xz^3 & wy^3+xz^3 & y^4+2(b+c+d)y^2z^2+z^4
    \end{matrix}\right]$\end{center}
    
    where 
    \begin{equation*}
    \begin{split}
        p(w,x,y,z) = 2(c+d) w x y z+x^2(b y^2+z^2)+w^2(y^2+b z^2)\\
        q(w,x,y,z) = 2(b+d) w x y z+x^2(c y^2+z^2)+w^2(y^2+c z^2)\\
        r(w,x,y,z) = 2(b+c) w x y z+x^2(d y^2+z^2)+w^2(y^2+d z^2)
    \end{split}
    \end{equation*}

    By Lemma~\ref{lm:is matchgate}, if $g\in\M$, then either $g$ satisfies the Even Parity Condition and $\det M_{\rm{Out}}(g) = \det M_{\rm{In}}(g)$, or $g$ satisfies the Odd Parity Condition and $g_{0010} g_{1101}-g_{0001} g_{1110} = g_{0100} g_{1011} - g_{0111} g_{1000}$.
    If $g$ satisfies the Even Parity Condition, then $w^3 y+x^3 z = wy^3+xz^3=0$. Combined with $xy+wz = 0$ and $xyzw \neq 0$, we have $(x = \pm w) \wedge (y = \mp z)$ or $(x = \pm \mathfrak{i}w) \wedge (y = \pm \mathfrak{i}z)$. 
    Combined with $\det M_{\rm{Out}}(g) = \det M_{\rm{In}}(g)$, if it were the first two cases, we have $d = -bc$;
    if it were the second two cases, we have $d = bc$.
    If $g$ satisfies the Odd Parity Condition, then $ p(w,x,y,z) = q(w,x,y,z) = r(w,x,y,z) = 0$. 
    This implies $b=c=d$.
    Plug into $p(w,z,y,z)=0$, we have $x^2z^2+w^2y^2+bx^2y^2+bw^2z^2+4bxywz=0$.
    Combined with $xy+wz=0$, we have 
    \begin{equation}\label{sym equality m-trans eq1}
        x^2z^2+w^2y^2+2bxywz=0.
    \end{equation}
    Also, $g_{0000}=w^4+2(b+c+d)w^2x^2+x^4=0$.
    Plug in $b=c=d$, we have
    \begin{equation}\label{sym equality m-trans eq2}
        w^4+6bw^2x^2+x^4=0.
    \end{equation}
    Let $\alpha=\frac{w}{x}$, $\beta=\frac{z}{y}$, then $\alpha\beta=-1$ by $xy+wz=0$.
    Divide Equation~\ref{sym equality m-trans eq1} by $x^2y^2$, we have $\alpha^2+\beta^2+2b\alpha\beta = 0$.
    Plug in $\beta=-\frac{1}{\alpha}$, we have
    \begin{equation}\label{sym equality m-trans eq3}
        \alpha^4-2b\alpha^2+1=0.
    \end{equation}
    Divide Equation~\ref{sym equality m-trans eq2} by $x^4$, we have
    \begin{equation}\label{sym equality m-trans eq4}
         \alpha^4+6b\alpha^2+1 = 0.
    \end{equation}
    Since $\alpha\neq 0$, we conclude $b=0$ by Equations~\ref{sym equality m-trans eq3} and \ref{sym equality m-trans eq4}.
    Thus, $d=bc$ holds.

    Conversely, if $d=bc$, then $f$ is $\M$-transformable by $T = \left[\begin{smallmatrix}
        1 & \ii\\
        1 & -\ii
    \end{smallmatrix} \right]$; If $d=-bc$, then $f$ is $\M$-transformable by $T = \left[\begin{smallmatrix}
        1 & 1\\
        1 & -1
    \end{smallmatrix} \right]$.
\end{proof}

For the ``arrow symmetric disequality'' case where $(b=-y)\land (c=-z)\land(d=-w)$, we also have an explicit criterion for determining whether $f$ is $\M$-transformable.

\begin{lemma}\label{lm:symmetric disequality_matchgate-transformable}
    Let $M(f) = \left[\begin{smallmatrix} 
    1 & 0 & 0 & b \\ 
    0 & c & d & 0 \\ 
    0 & -d & -c & 0 \\ 
    -b & 0 & 0 & 1 \end{smallmatrix}\right]$. If $f$ is $\mathscr{M}$-transformable, then either $f \in \mathscr{M}$, or $d = \pm \ii bc$.
    Conversely, if $d=\pm \ii bc$, then $f$ is $\M$-transformable.
\end{lemma}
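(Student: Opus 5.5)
The cleanest route is to reduce to Lemma~\ref{lm:symmetric equality_matchgate-transformable} by a diagonal holographic transformation rather than repeating its computation. Let $D=\left[\begin{smallmatrix} 1 & 0\\ 0 & \alpha\end{smallmatrix}\right]$ with $\alpha^4=-1$ (a primitive 8th root of unity, so $\alpha^2=\pm\ii$). First compute $g=D^{\otimes 4}f$. An entry indexed by $x_1x_2x_3x_4$ is multiplied by $\alpha^{|x|}$. Hence $1$ and $b$ are unchanged, $c,d$ become $\alpha^2 c,\alpha^2 d$, and the weight-three-ones rows are untouched since $f$ is even. This is not yet symmetric, because the lower half of the matrix sees the same factors as the upper half.

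So I would use the binary-modification viewpoint instead. Multiply by $\alpha$ only on the variables $x_1,x_2$, i.e.\ apply $T=I\otimes I$ on the column side and $D\otimes D$ on the row side. This is not a single holographic transformation, however. The better choice is $T=\left[\begin{smallmatrix}1&0\\0&\alpha\end{smallmatrix}\right]$ acting on all four variables, together with a transformation of $\neq_2$.

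Concretely, $(0,1,1,0)(T^{-1})^{\otimes2}=\alpha^{-1}(0,1,1,0)$, so $\neq_2$ is preserved up to a scalar. The transformed $f$ has matrix entries scaled by $\alpha^{\mathrm{wt}}$, giving
\[
\left[\begin{smallmatrix}1&0&0&\alpha^2 b\\0&\alpha^2c&\alpha^2d&0\\0&-\alpha^2d&-\alpha^2c&0\\-\alpha^2b&0&0&\alpha^4\end{smallmatrix}\right].
\]
Here $\alpha^4=-1$, so we have not yet reached the symmetric form. I would therefore combine this with the trick of Lemma~\ref{lm:let a=x}: in any $\neq_2$-bipartite instance, sources and sinks pair up, so the value depends only on the product of the $0000$ and $1111$ entries. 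More importantly, arrow-reversal-type parity arguments apply to $M$-transformability itself. Concretely, replacing $f$ by $f'$ with the bottom half rows negated, i.e.\ applying $\left[\begin{smallmatrix}1&0\\0&-1\end{smallmatrix}\right]$ to one variable $x_1$, turns $-d,-c,-b$ into $d,c,b$ but also turns $1111$ into $-1$.

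The step I expect to be the main obstacle is choosing the correct rescaling. With $\beta^2=\ii$ and $T=\mathrm{diag}(1,\beta)$, the $f$ entries scale as $1,\ \ii b,\ \ii c,\ \ii d,\ -\ii d,\ -\ii c,\ -\ii b,\ -1$. After negating variable $x_1$ via $\mathrm{diag}(1,-1)$, which can be absorbed since $\neq_2$ maps to $-\neq_2$, we get $1,\ \ii b,\ \ii c,\ \ii d,\ \ii d,\ \ii c,\ \ii b,\ 1$. This is exactly the form of Lemma~\ref{lm:symmetric equality_matchgate-transformable} with $(b,c,d)\mapsto(\ii b,\ii c,\ii d)$.

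The remaining check is that the transformation sends the pair $(\neq_2,f)$ to $(\lambda\neq_2,f')$ with $T$ invertible, so $f$ is $\M$-transformable iff $f'$ is. Lemma~\ref{lm:symmetric equality_matchgate-transformable} then gives: either $f'\in\M$, or $\ii d=\pm(\ii b)(\ii c)$, i.e.\ $d=\pm\ii bc$. Conversely, $d=\pm\ii bc$ yields $f'$ $\M$-transformable and hence $f$ as well. Finally, $f'\in\M\Leftrightarrow f\in\M$ by Lemma~\ref{lm: M transformable by diagnol holographic transformation}. The obstacle I would watch for is that the sign change on a single variable breaks the uniform $T^{\otimes4}$ form. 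If so, I would instead verify $\M$-membership equivalence directly via Lemma~\ref{lm:is matchgate}, noting that $\det M_{\rm Out}$ and $\det M_{\rm In}$ both transform by $-1$.
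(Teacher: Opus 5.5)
Your reduction to Lemma~\ref{lm:symmetric equality_matchgate-transformable} is a genuinely different and much shorter route than the paper's. The paper takes a general $T=\left[\begin{smallmatrix} w & x\\ y & z\end{smallmatrix}\right]$ and splits according to whether $(\neq_2)(T^{-1})^{\otimes 2}\in\M$ because $T$ is (anti-)diagonal or because $xy+wz=0$. It then computes $T^{\otimes 4}f$ entry by entry, analyzes both parity conditions, and proves the converse with explicit matrices.

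As written, however, your argument has a genuine gap at exactly the step you flagged. Your $f'$ arises from $f$ by $\mathrm{diag}(1,\beta)^{\otimes 4}$ followed by $\mathrm{diag}(1,-1)$ on the single variable $x_1$, and the latter is not a holographic transformation. The remark that $\neq_2$ goes to $-\neq_2$ concerns $\mathrm{diag}(1,-1)^{\otimes 4}$, which fixes $f$ outright because $f$ is supported on even weight. In fact, preserving $\neq_2$ up to a scalar forces $T$ to be diagonal or anti-diagonal. Such a $T$ scales $f_{0101}=d$ and $f_{1010}=-d$ by the same factor, so when $d\neq 0$ no transformation sends $(\neq_2,f)$ to $(\lambda\neq_2,f')$. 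Thus ``$f$ is $\M$-transformable iff $f'$ is'' is unproved, and both directions of the lemma rest on it. Your fallback via Lemma~\ref{lm:is matchgate} only gives $f\in\M\Leftrightarrow f'\in\M$. That is true, since for both signatures $\det M_{\rm Out}=1+b^2$ and $\det M_{\rm In}=d^2-c^2$ (though Lemma~\ref{lm: M transformable by diagnol holographic transformation} does not apply directly). But it says nothing about transformability.

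The missing idea is that the one-variable sign flip is the binary modification by $h=(0,1,-1,0)^T$, and $h$ is invariant up to a scalar under every holographic transformation. Since $M(h)=\left[\begin{smallmatrix}0&1\\-1&0\end{smallmatrix}\right]$ is antisymmetric, $T\,M(h)\,T^{T}=\det(T)\,M(h)$, i.e.\ $T^{\otimes 2}h=\det(T)\,h\in\M$ for every $T$. Hence if $f$ is $\M$-transformable via $T$, so is $\{f,h\}$. Since matchgates are closed under planar gadget construction, so is $f''$, the modification of $f$ by $h$ on $x_1$. As modifying twice returns $f$, the reverse implication holds as well. Finally, $f'=\mathrm{diag}(1,\beta)^{\otimes 4}f''$ and $(\neq_2)\,\mathrm{diag}(1,\beta)^{\otimes 2}=\beta\,(\neq_2)$. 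Together these show that $f''$ is $\M$-transformable via $T$ iff $f'$ is via $T\,\mathrm{diag}(1,\beta^{-1})$. With this inserted, your chain proves both directions and is a valid alternative to the paper's computation.

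Realizing $h$ from $f$ itself by a $\neq_2$-loop, which yields $(c\pm d)h$, would not be enough. It handles only the forward direction (the case $c=d=0$ being trivial there); the converse would still need the observation above or the paper's explicit $T$.
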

\begin{proof}
    Let $T = \left[\begin{smallmatrix}
        w & x\\
        y & z
    \end{smallmatrix} \right] \in \operatorname{GL}_2(\mathbb{C})$.  
    We have $\det(T) = wz-xy \neq 0$.
    Same as the proof of Lemma~\ref{lm:symmetric equality_matchgate-transformable}, the condition that $f$ is $\mathscr{M}$-transformable is equivalent to that $(yz=0) \wedge (wx = 0)$ or $xy+wz = 0$, and $T^{\otimes 4}f\in \M$.
    
    If $(yz=0) \wedge (wx = 0)$, then $w = z = 0$ or $x=y=0$ since $T \in \operatorname{GL}_2(\mathbb{C})$.
    Then $T^{\otimes 4}f \in \mathscr{M}$ implies $f \in \mathscr{M}$ by Lemma~\ref{lm: M transformable by diagnol holographic transformation}. 
    We therefore assume $xy+wz = 0$ and (using $T \in \operatorname{GL}_2(\mathbb{C})$) $xyzw \neq 0$.
     A direct computation shows that $g=T^{\otimes 4}f$ has the following signature matrix $M(g)=T^{\otimes2}M(f)(T^T)^{\otimes 2}$:
    \begin{center}$
        \left[\begin{matrix}
        
        w^4+x^4 & g_{0001} & g_{0010} & w^2 y^2 + x^2 z^2\\
       g_{0100} & w^2 y^2 + x^2 z^2 & w^2 y^2 + x^2 z^2 & g_{0111}\\
        g_{1000} & w^2 y^2 + x^2 z^2 & w^2 y^2 + x^2 z^2 & g_{1011}\\
        w^2 y^2 + x^2 z^2 & g_{1101} & g_{1110} & y^4 + z^4
\end{matrix}\right]$\end{center}
    
    where 
    \begin{equation*}
    \begin{aligned}
       &g_{0001} =  2 (b + c - d) w^2 x z + x^3 z+w^3 y\\
       &g_{0010} = 2 (b - c + d) w^2 x z + x^3 z +  w^3 y \\
       &g_{0100} =  2 (-b + c + d) w^2 x z + x^3 z+w^3 y\\
       &g_{1000} = - 2 (b + c + d) w^2 x z + x^3 z + w^3 y\\
       &g_{0111} = 2 (b + c + d) wyz^2 + x z^3 +w y^3 \\
       &g_{1011} = 2 (b - c - d) wyz^2 + x z^3 +w y^3\\
       &g_{1101} = - 2 (b - c + d) wyz^2 + x z^3 +  wy^3\\
       &g_{1110} =  - 2 (b + c - d) wyz^2 +x z^3 + w y^3
    \end{aligned}
    \end{equation*}
    
 By Lemma~\ref{lm:is matchgate}, if $g\in\M$, then either $g$ satisfies the Even Parity Condition and $\det M_{\rm{Out}}(g) = \det M_{\rm{In}}(g)$, or $g$ satisfies the Odd Parity Condition and $g_{0010} g_{1101}-g_{0001} g_{1110} = g_{0100} g_{1011} - g_{0111} g_{1000}$.
    If $g$ satisfies the Even Parity Condition, then $g_{0001}=g_{0010}=g_{0100}=g_{1000}=g_{0111}=g_{1011}=g_{1101}=g_{1110}=0$. 
    This implies $b=c=d=0$.
    Thus, $d=\ii bc$ holds.
    If $g$ satisfies the Odd Parity Condition, then $w^4+x^4=y^4+z^4=w^2y^2+x^2z^2=0$.
    Combined with $xy+wz = 0$ and $xyzw \neq 0$, we have $(x = \pm \sqrt{\ii}w) \wedge (z = \mp \sqrt{\ii}y)$ or $(x = \pm \mathfrak{i}\sqrt{\mathfrak{i}}w) \wedge (z = \mp \mathfrak{i}\sqrt{\mathfrak{i}}y)$. 
   Combined with $g_{0010} g_{1101}-g_{0001} g_{1110} = g_{0100} g_{1011} - g_{0111} g_{1000}$,  if it were the first two cases, we have $d = -\mathfrak{i}bc$;
    if it were the second two cases, we have $d = \mathfrak{i}bc$.

     Conversely, if $d=\mathfrak{i}bc$, then $f$ is $\M$-transformable by $T = \left[\begin{smallmatrix}
        1 & \mathfrak{i}\sqrt{\mathfrak{i}}\\
        1 & -\mathfrak{i}\sqrt{\mathfrak{i}}
    \end{smallmatrix} \right]$; If $d=-\mathfrak{i}bc$, then $f$ is $\M$-transformable by $T = \left[\begin{smallmatrix}
        1 & \sqrt{\mathfrak{i}}\\
        1 & -\sqrt{\mathfrak{i}}
    \end{smallmatrix} \right]$.
\end{proof}

The above two criterion will be  used extensively in Section~\ref{sec: three or opposite pairs}.

\subsection{Known dichotomies}
\begin{theorem}[Constraint satisfaction problems~\cite{CaiLX14}]\label{thm: CSP dichotomy}
Let $\mathcal{F}$ be any set of complex-valued signatures in Boolean variables. Then $\operatorname{\#CSP}(\mathcal{F})$
is \#P-hard unless
$\mathcal{F}\subseteq\mathscr{A}$ or
$\mathcal{F}\subseteq\mathscr{P}$,
  in which case the problem is computable in polynomial time.
\end{theorem}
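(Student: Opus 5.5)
This is the Boolean complex-weighted $\CSP$ dichotomy of~\cite{CaiLX14}, and the paper uses it as a black box. If I had to reprove it, I would split it into tractability for $\mathscr{A}$ and $\mathscr{P}$, and \#P-hardness otherwise.

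\emph{Tractability.} For $\mathcal{F}\subseteq\mathscr{P}$, the product of all constraints in an instance factors into unary functions times binary $=_2$ and $\neq_2$ constraints. I would build the graph on variables whose edges are these binary constraints. Each connected component is either inconsistent or admits exactly two assignments, complementary to each other. The partition function is then the product over components of a two-term sum of products of unary values, which is polynomial time.

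For $\mathcal{F}\subseteq\mathscr{A}$, the product of the constraints is again of the form $\lambda\chi_{AX=0}\,\mathfrak{i}^{Q(X)}$, with $Q$ quadratic over $\mathbb{Z}_4$ and even cross terms. Gaussian elimination over $\mathbb{Z}_2$ parametrizes the solution space by free variables, and substituting keeps $Q$ of the same shape. It remains to evaluate $\sum_{y}\mathfrak{i}^{Q(y)}$. I would eliminate one variable at a time: if $y_1$ has an odd linear coefficient, complete the square; if it occurs with even coefficients only, summing over it gives either $0$ or a linear constraint mod $2$ on the remaining variables, which again preserves the form. Each step is polynomial, so the whole evaluation is.

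\emph{Hardness.} Assume $\mathcal{F}\not\subseteq\mathscr{A}$ and $\mathcal{F}\not\subseteq\mathscr{P}$. The first step is \emph{pinning}: show $\CSP(\mathcal{F}\cup\{\Delta_0,\Delta_1\})\le_T^p\CSP(\mathcal{F})$, where $\Delta_0,\Delta_1$ are the constant unaries. I would stratify assignments by how many pinned variables take each value and solve the resulting linear system. If $\mathcal{F}$ is closed under bit-flip this fails, and then I would show the flip-symmetric problem is equivalent to one with pins up to a factor of $2$.

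With pins and summation over variables available (existential marginalization is free in $\CSP$), the next step reduces arity. From some $f\notin\mathscr{P}$ and some $g\notin\mathscr{A}$, I would extract either a single binary function $h\notin\mathscr{A}\cup\mathscr{P}$, or a binary function together with a unary that jointly escape both classes. This uses induction on arity: pin or sum out one variable while preserving non-membership, and Lemma~\ref{lm: is binary product?} describes the binary functions of $\mathscr{P}$.

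Finally I would invoke the Boolean complex-weighted spin-system dichotomy (for $\GH(h)$, possibly with unary weights). There the hard cases are exactly those where $h$ is neither affine nor product type after normalizing by unaries. That result in turn comes from interpolating Ising- or Potts-type hard problems via thickening and stretching.

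\emph{Main obstacle.} I expect the arity-reduction step to be hardest. Non-membership in $\mathscr{A}$ can be destroyed by pinning, because affineness is a global condition on the support, which must be an affine subspace, and on the phases. It is also the first place where the two failures, $f\notin\mathscr{P}$ and $g\notin\mathscr{A}$, have to be combined. My first attempt would be to case-split on whether the support of $g$ is an affine subspace. If it is not, I would pin down to a small non-affine support such as a $3$-element support, which is \#P-hard by counting arguments. If it is, I would work with the phase function and show that a non-quadratic phase, or a non-$\mathfrak{i}$-power ratio, survives in some binary or ternary restriction. In both cases, complex cancellations may make Vandermonde-type interpolations degenerate; I would handle them by choosing gadget parameters whose ratios are not roots of unity, in the spirit of Lemma~\ref{lm:general interpolation}.
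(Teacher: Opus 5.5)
The paper does not prove this theorem. It quotes it from~\cite{CaiLX14} and uses it only inside the proof of Lemma~\ref{lm:case d=w}, each time with $\mathcal{F}$ a single symmetric binary function, i.e.\ a Boolean spin system. So your opening remark is exactly the paper's treatment, and your tractability sketches for $\mathscr{P}$ and $\mathscr{A}$ are fine. Read as a standalone reproof, however, the hardness half has a genuine gap, and it sits where you suspect.

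\textbf{Arity reduction.} This step is a plan rather than an argument, and the proposed mechanism fails as stated.

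Pinning and identifying variables cannot by themselves expose a non-affine support. For the exact-one-of-three support $\{100,010,001\}$, every function of arity at most $2$ obtained from it this way has affine (or empty) support. So you must also sum out variables, and with complex weights summation cancels. For support $\{000,001,010,100\}$ with $f(000)=-f(001)$, summing out $x_3$ leaves the affine support $\{01,10\}$, whereas nonnegative weights would leave $\{00,01,10\}$.

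The non-affine-support case is better handled with no pinning at all. Putting $k$ copies of a constraint on the same scope gives the pointwise power $f^k$. Interpolation as in Lemma~\ref{lm:general interpolation}, with target tuple $(1,\dots,1)$ whose lattice is all of $\mathbb{Z}^r$, then yields the $0$-$1$ indicator of $\mathrm{supp}(f)$. The unweighted Creignou--Hermann dichotomy then gives \numP-hardness.

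What remains is the case where every realizable function has affine support, while $\mathcal{F}\not\subseteq\mathscr{A}$ (through phases or non-$\mathfrak{i}$-power ratios) and $\mathcal{F}\not\subseteq\mathscr{P}$, possibly witnessed by different functions. Here the witness must be assembled from several functions at once. Take $\mathcal{F}=\{\chi_{x_1\oplus x_2\oplus x_3=0},\,u\}$ with $u=(1,2)$. Every function of arity at most $2$ realizable from the parity function alone lies in $\mathscr{P}$. One must combine the two, e.g.\ $\sum_{x_3}\chi_{x_1\oplus x_2\oplus x_3=0}\,u(x_3)=u(x_1\oplus x_2)$, whose signature $(1,2,2,1)^T$ lies in neither class. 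Your phrase ``show that it survives in some binary or ternary restriction'' gives no argument for this case, and this is exactly where the complex-weighted theorem is hard.

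\textbf{Two smaller points.}

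First, the spin-system dichotomy you invoke at the end is not an independent input. For a possibly asymmetric binary $h$ together with specific unary functions, it is essentially the binary case of the very theorem you are proving.

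Second, stratifying pins by counts fails when all available unaries have $r_1/r_0=\pm1$. This happens, e.g., when every $f\in\mathcal{F}$ satisfies $f(\overline{x})=-f(x)$, which is not invariant under flipping all bits, so your factor-of-$2$ fallback does not apply either. Instead:
\begin{itemize}
\item Identify all variables pinned to $0$ into one variable $u$, and all variables pinned to $1$ into one variable $v$.
\item Attach $r$ to $u$ and/or $v$. The resulting $4\times 4$ system is the Kronecker square of a $2\times2$ Vandermonde matrix, which is invertible as soon as $r_0\neq r_1$.
\item Such an $r$ exists whenever $\mathcal{F}$ is not flip-invariant. Take some $f$ with $f(\alpha)\neq f(\overline{\alpha})$ and identify its variables into two variables according to $\alpha$. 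This gives a binary $B$ with $B_{01}\neq B_{10}$, and then either $(B_{00},B_{11})$ or $(B_{00}+B_{01},\,B_{10}+B_{11})$ works.
\end{itemize}
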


\begin{theorem}[General eight-vertex model \cite{CaiF17}]\label{thm: general eight-vertex}
Let $f$ be a 4-ary signature with the signature matrix $M(f) = \left[\begin{smallmatrix} a & 0 & 0 & b \\ 0 & c & d & 0 \\ 0 &w &z & 0 \\ y & 0 & 0 &x \end{smallmatrix}\right]$. 
If $ax=0$, then then  $\holant{\neq_2}{f}$ is
equivalent to the six-vertex model $\holant{\neq_2}{f'}$ where 
$M(f') = 
\left[\begin{smallmatrix} 
0 & 0 & 0 & b \\ 
0 & c & d & 0 \\ 
0 & w & z & 0 \\ 
y & 0 &0 & 0 
\end{smallmatrix}\right]$. 
Explicitly, $\holant{\neq_2}{f}$ is \numP-hard except in the following cases:
\begin{itemize}
    \item $f' \in \mathscr{P}$,
    \item $f' \in \mathscr{A}$,
    \item there is at least one zero in each pair $(b,y), (c,z), (d,w)$.
\end{itemize}
If $ax \neq 0$, then \holant{\neq_2}{f} is \numP-hard except in the following cases:
\begin{itemize}
    \item $f$ is $\mathscr{P}$-transformable;
    \item $f$ is $\mathscr{A}$-transformable;
    \item $f$ is $\mathscr{L}$-transformable.
\end{itemize}
In all listed cases, \holant{\neq_2}{f} is computatble in polynomial time.
\end{theorem}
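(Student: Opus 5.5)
Since this theorem is quoted from~\cite{CaiF17}, I would follow its structure: split on whether $ax=0$, prove the listed tractable cases by holographic reductions to known tractable classes, and prove hardness in every other case by gadgets and interpolation that reduce from the \#CSP dichotomy (Theorem~\ref{thm: CSP dichotomy}).

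\textbf{The case $ax=0$.} This uses the same counting fact as in Lemma~\ref{lm:let a=x}. In any valid configuration of $\holant{\neq_2}{f}$ on a 4-regular graph, the number of sources equals the number of sinks. So the Holant value, as a polynomial in $a$ and $x$, depends only on the product $ax$. If $ax=0$, every configuration containing a source or a sink contributes $0$. Hence $\holant{\neq_2}{f}$ and $\holant{\neq_2}{f'}$ take the same value on every instance. The classification then reduces verbatim to the known dichotomy for the six-vertex model on general graphs, which gives exactly the three listed exceptions.
\begin{itemize}
\item If $f'\in\mathscr{P}$ or $f'\in\mathscr{A}$, the problem is tractable: $\neq_2$ lies in both $\mathscr{P}$ and $\mathscr{A}$, so the instance is a $\CSP$ instance over a tractable class.
\item If each pair $(b,y),(c,z),(d,w)$ contains a zero, then at each vertex the orientation of one incident edge forces all the others. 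Every connected component then has at most two valid configurations, so the value is computed componentwise in polynomial time.
\end{itemize}

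\textbf{The case $ax\neq 0$: tractability.} By the argument of Lemma~\ref{lm:let a=x}, I normalize to $a=x=1$. For the tractable side, if $(\neq_2,f)$ is $\mathscr{C}$-transformable for $\mathscr{C}\in\{\mathscr{P},\mathscr{A},\mathscr{L}\}$, Valiant's Holant Theorem~\ref{thm: holographic transformation} moves the instance into a $\CSP$-type problem.
\begin{itemize}
\item For $\mathscr{P}$ and $\mathscr{A}$, the transformed problem is solvable by Theorem~\ref{thm: CSP dichotomy}.
\item For $\mathscr{L}$, I would use the known tractability of $\CSP^2$ with local affine signatures. This applies because the transformed $\neq_2$ forces every variable to appear an even number of times.
\end{itemize}

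\textbf{The case $ax\neq 0$: hardness.} Here I would proceed in four steps.
\begin{enumerate}
\item Loop $f$ by connecting $x_3,x_4$ (or the rotated pairs) via $\neq_2$ to obtain binary signatures of the form $(0,1,t,0)^T$.
\item Iterate these binaries, viewing iteration as a M\"obius transformation on the parameter $t$. When $t$ is not a root of unity, or the induced M\"obius map has infinite order, interpolation yields all weighted disequalities $(0,1,s,0)^T$.
\item Use binary modifications with these binaries to move $f$ into normal forms. Depending on the zero pattern, these are the redundant form (handled by Theorem~\ref{thm:redundant}) or forms from which $=_2$-type signatures and the crossover-free equalities $=_4$ can be realized.
\item From those signatures, build a reduction from $\CSP(\mathcal{F})$ for some $\mathcal{F}$ containing a signature outside both $\mathscr{A}$ and $\mathscr{P}$.
\end{enumerate}
I would organize the case analysis by how many of the pairs $(b,y),(c,z),(d,w)$ vanish and by the ranks of $M_{\rm Out}(f)$ and $M_{\rm In}(f)$, mirroring Lemma~\ref{lm: outer full inner degenerate}.

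\textbf{Main obstacle.} I expect the hardest step to be the degenerate interpolation cases, where all the generated binary parameters are roots of unity or the M\"obius map has finite projective order. There, arbitrary binaries cannot be produced. Instead one must show directly that the parameters place $f$ in one of the $\mathscr{P}$-, $\mathscr{A}$- or $\mathscr{L}$-transformable families, or else find a different gadget that escapes them. The $\mathscr{L}$ family is the delicate one, since it arises only from such root-of-unity coincidences. I would first try handling it via lattice-style interpolation in the spirit of Lemma~\ref{lm:general interpolation}, and otherwise by explicitly checking the finitely many root-of-unity parameter configurations.
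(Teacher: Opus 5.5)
The paper does not prove this theorem; it imports it from \cite{CaiF17}, and its arrow-reversal special cases are quoted again as Lemmas~\ref{lm: eight-vertex ars equality} and~\ref{lm: eight vertex ars-disequality}. Your reduction for $ax=0$ via the source/sink count is fine, as are your tractability arguments for the transformable classes. Your generic hardness pipeline (trigger binary, M\"obius interpolation to all $(0,1,u,0)^T$, binary modification to create zeros) is indeed the engine of \cite{CaiF17} and of Section~\ref{sec: dichotomy for all nonzero}.

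The gap is in where you locate the failure of that pipeline. Root-of-unity triggers are only a finite nuisance. A trigger of order $3$ or $4$ still yields $(0,1,0,0)^T$ (Corollary~\ref{cor: get (0,1,0,0) from a binary third or fourth root}), and the resulting finite check (Lemma~\ref{thm: with (0,1,0,0)}) produces only $\mathscr{A}$-transformable exceptions, plus $\M$ in the planar case. The real obstruction is the whole three-parameter family $a=x$, $(b,c,d)=\epsilon(y,z,w)$ with $\epsilon=\pm1$.
\begin{itemize}
\item For $\epsilon=1$, both $f$ and $\neq_2$ are invariant under flipping all inputs. Hence so is every gadget, and every realizable $(0,s,t,0)^T$ has $t=s$.
\item For $\epsilon=-1$, the $\mathrm{diag}(1,\sqrt{\mathfrak{i}})$ transformation of Lemma~\ref{lm: symmetric disequality dichotomy} makes $f$ flip-antisymmetric, so $t=\pm s$.
\end{itemize}
So on a continuum of parameters there is no trigger and no gadget can escape. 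This is exactly the case excluded by the hypothesis of Lemma~\ref{lm:get one binary when not exists epsilon}, and a check of finitely many root-of-unity configurations cannot cover it.

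This family is where the distinctive content of the theorem lives. By Theorems~\ref{thm: at least one zero} and~\ref{thm: all nonzero entries, not three equal or opposite pairs}, whenever $f$ has a zero entry or lies outside the family, the only tractable types are $\mathscr{P}$, $\mathscr{A}$- and $\M$-transformable. So $\mathscr{L}$-transformability contributes new tractable cases only inside this family.

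Your proposal does not handle it: the lattice-interpolation fallback is only named, and the finite-check alternative is impossible here. In the planar setting the paper needs all of Section~\ref{sec: three or opposite pairs} for this family: the $Z$-transformation, the diagonalization $M(\tilde f)=\frac{1+b}{2}P\,\mathrm{diag}(u,v,w,1)P$ of Lemma~\ref{lm:interpolation xyz}, and conformal lattice interpolation. For the non-planar endpoint it simply cites \cite{CaiF17}.

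There is a related issue with your step~4. Gadgets built from $\neq_2$ and an even-parity $f$ have support of fixed parity, so odd-arity equalities are not realizable. What you naturally reach is therefore a $\CSP^2$-type problem, as in Lemma~\ref{lm: 1cz1}, and such problems are tractable on $\mathscr{L}$, as you note yourself. Testing membership in $\mathscr{A}\cup\mathscr{P}$ via Theorem~\ref{thm: CSP dichotomy} alone cannot detect this class.

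A smaller error, in the case $ax=0$: a zero in each pair does not make one edge force the others. With $b,c,d\neq0$ and $y=z=w=0$ the support is $\{0011,0110,0101\}$, which fixes only $x_1$, and a connected component can have $2^k$ configurations. The case is still tractable (it is part of the six-vertex dichotomy of \cite{CaiFX18}), but by a global argument:
\begin{itemize}
\item If some edge is forced from both ends, the value is $0$.
\item Otherwise the remaining $|V|$ edges must give every vertex in-degree (symmetrically, out-degree) exactly one. The value then factors over the components of that subgraph, each of which must be unicyclic and has exactly two orientations.
\end{itemize}
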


\begin{theorem}[Planar six-vertex model \cite{CaiFS21}]\label{thm:planar six-vertex}
Let $f$ be a signature with the signature matrix $M(f)=\left[\begin{smallmatrix}0 & 0 & 0 & b \\ 0 & c & d & 0 \\ 0 & w & z & 0 \\ y & 0 & 0 & 0\end{smallmatrix}\right]$, where $b, c, d, w, y, z \in$ $\mathbb{C}$. Then \plholant{\neq_2}{f} is polynomial time computable in the following cases, and \numP-hard otherwise:
\begin{enumerate}
\item $f \in \mathscr{P}$ or $\mathscr{A}$;
\item There is a zero in each pair $(b, y),(c, z), (d,w)$;
\item $f \in \mathscr{M}$ or $\widehat{\mathscr{M}}$;
\item $d=w=0$ and
\begin{enumerate}[label=(\roman*).]
    \item $(b y)^{2} = (cz)^2$, or
    \item $y=b \mathfrak{i}^{\alpha}, c=b \sqrt{\mathfrak{i}}^{\beta}$, and $z=b \sqrt{\mathfrak{i}}^{\gamma}$, where $\alpha, \beta, \gamma \in \mathbb{N}$, and $\beta \equiv \gamma (\bmod 2)$;
\end{enumerate}
\end{enumerate}
If $f$ satisfies condition 1 or 2 , then \holant{\neq_{2}}{f} is computable in polynomial time without the planarity restriction; otherwise (the non-planar) \holant{\neq_2}{f} is \numP-hard.
\end{theorem}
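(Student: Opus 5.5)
Since this theorem is imported from~\cite{CaiFS21}, the plan below reconstructs its proof. It has the usual three parts: tractability of the listed cases, \numP-hardness of everything else on planar graphs, and the statement about general graphs.

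\textbf{Tractability.} I would treat each condition separately.
\begin{itemize}
\item Condition 1. Note that $\neq_2\in\mathscr{A}\cap\mathscr{P}$. Then \plholant{\neq_2}{f} is a subproblem of $\CSP(\mathscr{A})$ or $\CSP(\mathscr{P})$, which is tractable by Theorem~\ref{thm: CSP dichotomy} even without planarity.
\item Condition 2. If each pair $(b,y),(c,z),(d,w)$ contains a zero, every vertex forces its two incoming edges to determine its two outgoing edges in a fixed way. The orientations then decompose into independent cycles, each with two orientations, and these can be summed in polynomial time on general graphs.
\item Condition 3. If $f\in\mathscr{M}$, apply Kasteleyn's algorithm directly, since $\neq_2\in\mathscr{M}$. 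If $f\in\widehat{\mathscr{M}}$, apply the holographic transformation by $H$ (Theorem~\ref{thm: holographic transformation}); this maps $\neq_2$ to a matchgate signature as well.
\item Condition 4. Here $d=w=0$ and the support $\{0011,0110,1001,1100\}$ is exactly $(x_1\neq x_3)\wedge(x_2\neq x_4)$. So every configuration is a family of ``straight-through'' circuits in the planar medial structure, and the weights $b,c,z,y$ record how the two circuits through a vertex are oriented relative to each other. For 4(i), with $(by)^2=(cz)^2$, I would use a diagonal holographic transformation together with a parity argument to reach a matchgate-transformable form. For 4(ii), I would apply $Z$ to turn the problem into a planar $\CSP$ with local-affine-type signatures, which is tractable because planarity makes the circuit intersection numbers even.
\end{itemize}

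\textbf{Hardness.} Hardness for general graphs outside conditions 1 and 2 is the case $ax=0$ of Theorem~\ref{thm: general eight-vertex}. For planar hardness I would split on the zero pattern of $(b,y,c,z,d,w)$.

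When all six entries are nonzero, I would form binary signatures by looping, for example joining $x_3,x_4$ with $\neq_2$. Iterating such gadgets yields a M\"obius transformation; if it has infinite order, it produces polynomially many distinct $(0,1,t,0)^T$. These allow arbitrary binary modifications, which normalize $f$ into a form where either Theorem~\ref{thm:redundant} applies or the crossover $\mathcal{S}$ can be interpolated. Lemmas~\ref{lm: interpolate S for N=2} and~\ref{lm:interpolate S} give that interpolation. Once $\mathcal{S}$ is available, Lemma~\ref{lm: swap gate} removes planarity, and the non-planar dichotomy applies.

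When some pairs contain zeros but condition 2 fails, I would chain copies of $f$ via $N$. This gives Vandermonde systems from which one can interpolate either $\mathcal{S}$ or a degenerate inner or outer part. That in turn leads to Lemma~\ref{lm: outer full inner degenerate}-type redundant-form hardness, or to \numP-hard planar spin systems.

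\textbf{Main obstacle.} The hardest part is the subcase $d=w=0$ together with the exceptional M\"obius maps of finite order, which are exactly where the tractable family 4(ii) appears. There, interpolation of $\mathcal{S}$ fails, since the relevant values are roots of unity of order dividing $8$. For this subcase I would first reduce the problem to a planar $\CSP$ via $Z$. Then I would do a careful case analysis using lattice and conformal interpolation (Lemma~\ref{lm:general interpolation}) and the planar $\CSP$ dichotomy. This should show that the only surviving parameters are those in 4(i) and 4(ii).
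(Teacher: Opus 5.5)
This theorem is only imported from \cite{CaiFS21} in the paper, so there is no in-paper proof to compare against. Judged on its own terms, your plan has two concrete gaps.

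\textbf{The crossover you aim for is unattainable.} Every $f$-vertex and every weighted disequality $(0,1,t,0)^T$ (including $(0,1,0,0)^T$) sees equally many $0$s and $1$s, and every $\neq_2$ vertex sees one of each. Summing over a gadget therefore shows that every planar gadget, on either side, is supported on balanced inputs. In particular every construction keeps $f_{0000}=f_{1111}=0$.

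Consequently, in the $\plholant{\neq_2}{\cdot}$ setting where your lemmas live, the crossover $\mathcal{S}$ (with $\mathcal{S}_{0000}=\mathcal{S}_{1111}=1$) can be neither realized nor interpolated. In the stratified linear system, the unknowns for categories in which a copy of $\mathcal{S}$ receives $0000$ or $1111$ have coefficient $0$ in every gadget instance. So you only recover the Holant value with $\mathcal{S}$ replaced by its balanced part (support $\{0101,1010\}$), which is not a crossover.

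For the same reason $(=_2)^{\otimes 2}$, which Lemma~\ref{lm: swap gate} needs on the left, is out of reach. Lemmas~\ref{lm: interpolate S for N=2} and~\ref{lm:interpolate S} concern signatures with $f_{0000}=f_{1111}=1$, which never arise here. The correct target is the disequality crossover $\mathcal{S}'$, used through Lemma~\ref{lm:disequality crossover}. It is itself the six-vertex signature with $b=y=c=z=1$, $d=w=0$, and equivalently it is $\mathcal{S}$ after the $Z$-transformation, since $(Z^{-1})^{\otimes 4}\mathcal{S}=\mathcal{S}'$.

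Two further points on hardness. For $d=w=0$, ``reducing to a planar $\CSP$ via $Z$'' runs in the wrong direction for hardness; moreover $Z$ only yields $\plholant{=_2}{Z^{\otimes 4}f}$, which is a Holant problem, not a $\CSP$. And since the cases in conditions 3 and 4 outside 1 and 2 are tractable only on planar graphs, every crossover interpolation must fail on them. A planar hardness proof therefore has to locate exactly these exceptions and obtain planar-specific hardness around them. Your sketch gives no indication of how $\M$, $\widehat{\M}$ and 4(ii) emerge as the boundary.

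\textbf{Condition 4 is not a local phenomenon.} A diagonal holographic transformation multiplies every entry of a six-vertex signature by the same factor $t^2$, so it changes nothing up to a scalar.

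Case 4(i) in fact needs no new algorithm. Write $f=g(x_1,x_2)\,[x_1\neq x_3][x_2\neq x_4]$ with $g=\left[\begin{smallmatrix} b & c\\ z & y\end{smallmatrix}\right]$.
\begin{itemize}
\item If $by=cz$, then $g$ has rank at most one and $f\in\mathscr{P}$.
\item If $by=-cz$, then $\det M_{\rm Out}(f)=-by=cz=\det M_{\rm In}(f)$, so $f\in\M$.
\end{itemize}
The $\mathscr{P}$ half is in general not $\M$-transformable at all: for example $b=1$, $y=4$, $c=z=2$ lies in neither $\M$ nor $\widehat{\M}$ (see Lemma~\ref{lm: six-vertex M transformable}).

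For the genuinely new cases of 4(ii), no holographic route can work. They are \numP-hard on general graphs by the last sentence of the theorem, so no transformation into a class tractable on general graphs (such as $\mathscr{L}$) exists. The paper also stresses that such cases are provably beyond holographic transformations plus FKT.

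The missing idea is the non-local ``loop space'' argument:
\begin{itemize}
\item Since the support is $(x_1\neq x_3)\wedge(x_2\neq x_4)$, the edges split into straight-through circuits determined by the graph alone. A configuration is a choice of orientation $s_L$ for each circuit, and each crossing contributes $g$ evaluated at the two circuit variables, possibly complemented.
\item Write $g(p,q)=b\,(z/b)^p(c/b)^q\kappa^{pq}$ with $\kappa=by/(cz)$. Complementing an input only inverts $\kappa$ and changes unary factors.
\item Two distinct closed plane curves cross an even number of times. Hence the total interaction between two circuits is $\kappa^{2k\,s_Ls_{L'}}$ for some integer $k$.
\item Under 4(ii) one has $\kappa\in\mathfrak{i}^{\mathbb{Z}}$. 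Each crossing with another circuit gives $s_L$ a unary ratio in $\sqrt{\mathfrak{i}}^{\beta}\mathfrak{i}^{\mathbb{Z}}$; this is where $\beta\equiv\gamma \pmod 2$ is used. There are evenly many such crossings, and self-crossings give ratios in $\mathfrak{i}^{\mathbb{Z}}$.
\end{itemize}
The problem thus becomes an instance of $\CSP(\mathscr{A})$ on a generally non-planar constraint graph. Your mention of even intersection numbers points in this direction, but the $Z$ step and the appeal to local-affine signatures do not.
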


We have the following corollary.

\begin{corollary}\label{cor:exactly 3 nonzero entries}
Let $f$ be a 4-ary signature with the signature matrix $M(f) = \left[\begin{smallmatrix}0 & 0 & 0 & b \\ 0 & c & d & 0 \\ 0 & w & z & 0 \\ y & 0 & 0 & 0\end{smallmatrix}\right]$. If there are exactly three nonzero entries in $\left[\begin{smallmatrix} c & d \\ w & z \end{smallmatrix}\right]$, then $\holant{\neq_2}{f}$ is \numP-hard and $\plholant{\neq_2}{f}$ is \numP-hard unless $f \in \mathscr{M}$ or $\widehat{\mathscr{M}}$. 
\end{corollary}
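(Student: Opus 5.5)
We derive the corollary directly from Theorem~\ref{thm:planar six-vertex} by checking which of its tractability conditions can hold when exactly three of $c,d,w,z$ are nonzero.

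\emph{Condition 2 fails.} Exactly one of $c,d,w,z$ is zero, so at least one of the pairs $(c,z)$ and $(d,w)$ has both entries nonzero. Hence there is not a zero in each pair.

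\emph{Condition 4 fails.} It requires $d=w=0$, which would make two inner entries zero.

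\emph{Condition 1 fails.} By Remark~\ref{rmk: affine or product imply support power of 2}, any $f\in\mathscr{A}$ or $f\in\mathscr{P}$ has support size a power of $2$. The support of $f$ consists of the three nonzero inner entries together with the nonzero entries among $b,y$, so its size is $3$, $4$ or $5$. The power-of-two requirement therefore forces exactly one of $b,y$ to be nonzero, say $b\neq0$ and $y=0$.

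We then rule out this remaining case directly.
\begin{itemize}
\item If $f\in\mathscr{A}$, its support is an affine subspace of $\mathbb{Z}_2^4$. The support contains $0011$ and three of the four weight-two inputs $0110,0101,1010,1001$. An affine subspace of size $4$ is closed under the sum of any three of its points. The sum of three such weight-two inputs is never $0011$ or a fourth weight-two input of the required form, and one can check it lands outside the support. For example, $0110+0101+1010=1001$, and then $0011$ would also have to be combined with these points, which leaves the support. This contradicts affineness.
\item If $f\in\mathscr{P}$, it is a product of unary, equality and disequality functions. Its support is then a product structure whose size-$4$ instances are determined by two free bits, with every other variable tied to them by $=$ or $\neq$. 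A short check on the given four points shows that no such pairing of the variables $x_1,\dots,x_4$ fits them, so again we get a contradiction.
\end{itemize}

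With conditions 1, 2 and 4 excluded, Theorem~\ref{thm:planar six-vertex} gives the conclusion. Its final sentence says the non-planar problem $\holant{\neq_2}{f}$ is \numP-hard, since conditions 1 and 2 fail. The planar problem $\plholant{\neq_2}{f}$ is tractable only under condition 3, that is, only if $f\in\mathscr{M}$ or $f\in\widehat{\mathscr{M}}$, and is \numP-hard otherwise.

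The main obstacle is the support-size-$4$ subcase of condition 1, which needs the explicit affine and product checks above. Everything else is immediate from Theorem~\ref{thm:planar six-vertex}.
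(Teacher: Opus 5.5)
Your proposal is correct and follows the paper's route: rule out conditions 1, 2 and 4 of Theorem~\ref{thm:planar six-vertex}, with affineness excluded because three of the inner weight-two inputs XOR to the fourth, missing one. Your affine bullet states this backwards, though. The sum of any three of $0110,0101,1010,1001$ is always exactly the fourth, since the XOR of all four is $0000$, and that fourth input is the zero inner entry. A product-type support is also an affine subspace, so this single observation rules out both $\mathscr{A}$ and $\mathscr{P}$ directly, and the support-size reduction and separate product check are unnecessary.
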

\begin{proof}
Follow the argument in the proof of Lemma~\ref{lm:is product type?}, one can easily show that $f \notin \mathscr{P}$. Also, $f \notin \mathscr{A}$ because $0101 \oplus 0110 \oplus 1001=1010$, and if $\operatorname{supp}(f)$ were affine, then three entries in $\operatorname{supp}(f)$ would imply the fourth entry also belongs to $\operatorname{supp}(f)$. It cannot be the case that $d=w=0$ neither since there are three nonzero entries in $\left[\begin{smallmatrix} c & d \\ w & z \end{smallmatrix}\right]$. The proof is now complete by Theorem~\ref{thm:planar six-vertex}.
\end{proof}

\begin{theorem}[Spin systems on $k$-regular graphs \cite{k-regular-spin}]\label{thm:spin system}
Fix any integer $k\geq 3$. Suppose $f = (w,x,y,z)$ with $w,x,y,z \in \mathbb{C}$. Then $\holant{=_k}{f}$  is \numP-hard except in the following cases, where the problem is computable in polynomial time.
\begin{itemize}
    \item $f \in \mathscr{P}$: $wz=xy$, or $w=z=0$, or $x=y=0$;
    \item $(=_k\mid f)$ is $\mathscr{A}$-transformable: $wz=-xy$,
    \begin{itemize}
        \item if $k$ is odd, $x^2=\epsilon y^2$ and $z^{2k}=\epsilon w^{2k}$, where $\epsilon=\pm1$.
        \item if $k$ is even, $x^4=y^4$ and $z^{2k}=w^{2k}$.
    \end{itemize}
\end{itemize}
    If the input is restricted to planar graphs, then another case becomes polynomial time computable but everything else remains \numP-hard.
\begin{itemize}
    \item $(=_k \mid f)$ is $\mathscr{M}$-transformable:
    \begin{itemize}
        \item if $k$ is odd, $x=\epsilon y$ and $z^{k}= w^{k}$, where $\epsilon=\pm1$.
        \item if $k$ is even, $x^2=y^2$ and $z^{k} = w^{k}$.
    \end{itemize}
\end{itemize}
\end{theorem}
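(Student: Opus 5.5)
This theorem is quoted from~\cite{k-regular-spin}, and in the paper it is used as a black box. If I had to reprove it, I would split it into a tractability part and a hardness part, normalizing $f$ along the way by holographic transformations that fix $=_k$. Only diagonal transformations $\left[\begin{smallmatrix}1&0\\0&\omega\end{smallmatrix}\right]$ with $\omega^k=1$, together with the bit flip, preserve $=_k$ up to a scalar. Using these, $f=(w,x,y,z)$ can be rescaled by $x\mapsto \omega x$, $y\mapsto \omega y$, $z\mapsto \omega^2 z$. Since all these moves keep $wz$ and $xy$ in the same ratio, the cases $wz=\pm xy$ are invariant. This explains why the stated conditions involve $z^{k}$ or $z^{2k}$ versus $w^{k}$ or $w^{2k}$.

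For tractability, I would verify each listed case directly.
\begin{itemize}
\item If $f\in\mathscr{P}$, the problem is a special case of $\CSP(\mathscr{P})$, which is tractable by Theorem~\ref{thm: CSP dichotomy}.
\item If $wz=-xy$ together with the stated power conditions hold, I would exhibit an explicit $T$ with $(=_k)T\in\mathscr{A}$ and $T^{-1}f\in\mathscr{A}$. The natural starting point is $T=Z\,\mathrm{diag}(1,\lambda)$ or $H\,\mathrm{diag}(1,\lambda)$, with $\lambda$ chosen so that $z^{2k}=\epsilon w^{2k}$ (respectively $z^{2k}=w^{2k}$) makes the transformed $=_k$ affine.
\item In the planar case, I would transform by $H$ or $Z$ so that the image of $=_k$ satisfies the parity condition of Lemma~\ref{lm:is matchgate}. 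This forces $z^k=w^k$, with $x=\epsilon y$ for odd $k$ or $x^2=y^2$ for even $k$. The transformed $f$ is then a binary signature with parity, hence a matchgate, and FKT applies.
\end{itemize}

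For hardness, the plan is the standard interpolation route. First, if $f$ is nondegenerate, I would realize $=_{2k-2}$, and more generally equalities of arity a multiple of $k-2$, by connecting two copies of $=_k$ through $f$ and then interpolating. The main task is then to realize $=_2$ or $=_3$ together with an arbitrary unary signature. The tool for this is gadgets built from $f$ and $=_k$ whose signature matrices have eigenvalue ratios that are not roots of unity, combined with Vandermonde interpolation (Lemma~\ref{lm:general interpolation}). Once all equalities are available, Holant$(=_k\mid f)$ simulates $\CSP(f,\ldots)$, and Theorem~\ref{thm: CSP dichotomy} gives \numP-hardness outside $\mathscr{P}$ and $\mathscr{A}$-transformable. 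For the planar version, I would replace the CSP step by the planar \#CSP dichotomy: planar $\CSP$ is hard unless the signatures are $\mathscr{P}$, $\mathscr{A}$, or $\mathscr{M}$-transformable.

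The hardest step will be the degenerate situations where the eigenvalue ratio of $f$ is a root of unity. There interpolation fails, and one must build auxiliary binary or ternary gadgets (for example, $f$ looped through $=_k$) to escape to a nonroot-of-unity ratio. Each exceptional root of unity is expected to land exactly on one of the listed tractable conditions, and checking that this happens in every case is where most of the casework goes.
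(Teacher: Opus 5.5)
\textbf{This is a genuine gap.} The paper gives no proof of this theorem. It is quoted from \cite{k-regular-spin}, used as a black box, and invoked only for even $k$ (in Lemma~\ref{lm: 1cz1}). Your first sentence is therefore exactly the paper's treatment. The reconstruction that follows is not a proof, however, and parts of it are wrong.

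\textbf{Hardness.} All of the content sits in the step you leave as ``expected'': that every degenerate situation, where interpolation collapses because of root-of-unity relations among $w,x,y,z$, lands exactly in $\mathscr{P}$ or in the $\mathscr{A}$- or $\M$-transformable classes. That claim \emph{is} the theorem. In the literature it takes substantial casework and number-theoretic input, already in the real-weighted predecessor by Cai, Fu, Girstmair and Kowalczyk. Moreover, ``realize all equalities, then apply Theorem~\ref{thm: CSP dichotomy}'' only separates $\mathscr{P}\cup\mathscr{A}$ from the rest, whereas the boundary here is $\mathscr{A}$-\emph{transformability}. For $k=3$ and $f=(\zeta_{12},1,1,-\zeta_{12}^{-1})$ the problem is tractable, yet $f\notin\mathscr{P}\cup\mathscr{A}$. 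So you would have to prove that equalities are realizable precisely outside the tractable set, which is the same missing analysis.

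\textbf{Tractability, $\mathscr{A}$ case.} The witness is diagonal, not $Z\,\mathrm{diag}(1,\lambda)$ or $H\,\mathrm{diag}(1,\lambda)$. If $f\notin\mathscr{P}$, the conditions force $wxyz\neq0$. With $\alpha=x/w$ they give $(y/x)^4=1$ and $\alpha^{4k}=1$. Then $(=_k)\,\mathrm{diag}(1,\alpha)^{\otimes k}=e_{0\cdots0}+\alpha^k e_{1\cdots1}$ is affine, since $\alpha^k\in\{\pm1,\pm\ii\}$, and $\mathrm{diag}(1,\alpha^{-1})^{\otimes 2}f=w(1,1,y/x,-y/x)\in\mathscr{A}$. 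These maps do not preserve $=_k$, so the symmetry group in your first paragraph is not the relevant one; the exponent $2k$ comes from $\alpha^{4k}=1$. Your proposed families genuinely fail on the $\zeta_{12}$ example. Any $T$ of either form that makes $(=_3)T$ affine has $|\lambda|=1$, and then $T^{-1}f$ has full support with entries of unequal moduli (ratio $\sqrt5:\sqrt3$ for $H$, $\sqrt7:1$ for $Z$), so it is not affine.

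\textbf{Tractability, planar case.} For $k\ge3$, every $T$ with $(=_k)T\in\M$ is, up to scalars and a row swap, $\mathrm{diag}(1,\mu)\,H\,\mathrm{diag}(1,\rho)$ with $\mu^{2k}=1$. For odd $k$, $Z$ is unavailable because $(=_k)Z$ has full support. With such a $T$, $T^{-1}f$ has parity iff either $x=y,\ z=\mu^2 w$ or $x=-y,\ z=-\mu^2 w$. For odd $k$ this yields $x=\epsilon y$ and $z^k=\epsilon w^k$, not $z^k=w^k$. Concretely, take $k=3$ and $f=(1,2,-2,-1)$. Then $H$ sends $=_3$ to a multiple of $[1,0,1,0]$ and $f$ to $(0,-1,3,0)$, both matchgates, although $f$ violates the printed condition. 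So the odd-$k$ clause as transcribed appears to be missing an $\epsilon$. This is harmless for the paper, which uses only even $k$, but a correct execution of your computation would not reproduce the condition you claim it ``forces''.
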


\section{Main Theorem and Proof Outline}
\begin{theorem}\label{thm: main theorem}
Let $f$ be a 4-ary signature with the signature matrix
$M(f)=\left[\begin{smallmatrix}
a & 0 & 0 & b\\
0 & c & d & 0\\
0 & w & z & 0\\
y & 0 & 0 & x
\end{smallmatrix}\right]$.
If  $ax=0$, then  $\plholant
{\neq_2}{f}$ is
equivalent to the planar six-vertex model $\plholant{\neq_2}{f'}$
where $f'$ is obtained from $f$ by setting $a=x=0$,  i.e.,
$M(f')=\left[\begin{smallmatrix}
0 & 0 & 0 & b\\
0 & c & d & 0\\
0 & w & z & 0\\
y & 0 & 0 & 0
\end{smallmatrix}\right]$.
The complexity of the problem $\plholant{\neq_2}{f'}$ is characterized by Theorem~\ref{thm:planar six-vertex}.
If $ax\neq 0$, then $\plholant{\neq_2}{f}$
is \numP-hard except in the following cases:
\begin{enumerate}
\item $f$ is $\mathscr{P}$-transformable;
\item $f$ is $\mathscr{A}$-transformable;
\item $f$ is $\mathscr{L}$-transformable;
\item $f$ is $\M$-transformable;
\item $d=w=\pm \sqrt{ax},b=y=-c=-z=\mathfrak{i}\tan(\frac{\beta\pi}{8})\sqrt{ax}$, where $\beta\in \mathbb{N}$ is odd.
\end{enumerate}
In all listed cases, $\plholant{\neq_2}{f}$ is computable in polynomial time.
\end{theorem}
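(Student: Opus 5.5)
The plan has three parts: dispose of the degenerate case $ax=0$, prove tractability for the five listed cases, and prove hardness for everything else.

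\emph{The case $ax=0$.} Every valid even orientation has equally many sources and sinks; this is the observation behind Lemma~\ref{lm:let a=x}. So whenever $ax=0$, every configuration using a source or a sink vertex contributes $0$. Hence the partition function of $f$ coincides with that of $f'$ on every planar instance, and conversely. This reduces the case to Theorem~\ref{thm:planar six-vertex}.

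\emph{Tractability when $ax\neq0$.} By Lemma~\ref{lm:let a=x} we may assume $a=x=1$ after scaling. Cases 1--3 are tractable already on general graphs by Theorem~\ref{thm: general eight-vertex}. Case 4 follows from FKT through holographic transformation. Case 5 needs the new algorithm. With $b=y$, $c=z$, $d=w$ and the arrow reversal symmetry, I would apply the local combinatorial transformation sketched in the introduction. It views each even orientation, relative to a fixed reference orientation, as a green/red edge coloring with even green degree at every vertex. This converts the problem into a planar \textsc{Even Coloring} instance with a symmetric 4-ary signature on the coloring. I would then apply the holographic transformation by $HZ$ and compute that the resulting signature is a planar six-vertex signature of type 4(ii) of Theorem~\ref{thm:planar six-vertex}. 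The weights $\mathfrak{i}\tan(\beta\pi/8)$ with $\beta$ odd should be exactly what makes the transformed entries powers of $\sqrt{\mathfrak{i}}$ up to a common factor, with $d=w=\pm1$ forcing $d=w=0$ after the transform.

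\emph{Hardness when $ax\neq0$.} I would split on the zero pattern of the pairs $(b,y),(c,z),(d,w)$. Within each case the goal is one of two outcomes. The first is to realize the crossover $\mathcal{S}$ or $\mathcal{S}'$ planarly, by interpolation (Lemmas~\ref{lm: interpolate S for N=2}, \ref{lm:interpolate S}, or conformal lattice interpolation via Lemma~\ref{lm:general interpolation}). Then Lemma~\ref{lm:disequality crossover} transfers non-planar hardness from Theorem~\ref{thm: general eight-vertex}, which yields hardness unless $f$ is $\mathscr{P}$-, $\mathscr{A}$- or $\mathscr{L}$-transformable. The second outcome is to obtain a non-singular redundant signature, as in Lemma~\ref{lm: outer full inner degenerate}, and conclude by Theorem~\ref{thm:redundant}. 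For binary modifications, I would realize weighted disequalities $(0,1,t,0)^T$ by looping. Möbius-transformation interpolation then produces arbitrary $t$ when the looped value is not a root of unity of small order. With these in hand one can normalize entries, for example making $b=y$, $c=z$, $d=w$ or their negatives. This reduces to the arrow-symmetric forms, where Lemmas~\ref{lm:symmetric equality_matchgate-transformable} and~\ref{lm:symmetric disequality_matchgate-transformable} identify exactly the $\M$-transformable exceptions.

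\emph{Main obstacle.} I expect the hardest step to be the symmetric zero-field case $b=y$, $c=z$, $d=w$ with root-of-unity parameters. There, interpolation of $\mathcal{S}$ fails because the relevant eigenvalue ratios are roots of unity, so the lattice $L_{\mathbf{x}}$ has full rank (Lemma~\ref{lm: lattice full rank equiv roots of unity}) and gadget binaries have finite-order Möbius maps. In that case I would first construct 4-ary gadgets by chaining $f$ with rotations of itself, to produce new eigenvalues that are not roots of unity. Failing that, I would use cyclotomic field arguments to show that only the parameters of case 5 (and the matchgate-transformable ones) survive. Every other finite-order configuration would be shown to yield either a redundant signature or a crossover.
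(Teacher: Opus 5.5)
Your treatment of $ax=0$ and of the tractable cases matches the paper, with two small corrections. First, for case~5 with $d=w=-1$ you need the $Z$ version of Lemma~\ref{lm:even coloring transformation}, because the $HZ$ version maps the vector $(b,c,d,a)=(b,-b,-1,1)$ to itself. Second, the reference orientation must be the canonical one from a proper 2-colouring of the faces, which alternates around every vertex; otherwise the transformed signature differs from vertex to vertex.

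The genuine gap is in the hardness part. The two outcomes you aim for in every case, a planar crossover or a non-singular redundant signature, are provably unreachable in Case~I.

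\emph{The diagonal case.} Take $M(f)=\mathrm{diag}(1,c,z,1)$, e.g.\ $c=2$, $z=3$, which is \numP-hard by Lemma~\ref{lm: 1cz1}. Every copy of $f$, in any rotation, joins its edges in two adjacent pairs. So any planar gadget built from $f$ is a system of non-crossing weighted strands, supported on a non-crossing pairing of its dangling edges. It can never be a crossover. It can never be redundant either, since non-singularity needs a support of size at least six. Interpolation over such gadgets does not change this.

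\emph{The case $(b,y)=(c,z)=(0,0)$, $d=w$.} Here $\mathcal{S}$ can be interpolated, but a parity count rules out both $\mathcal{S}'$ and a left-hand $(=_2)^{\otimes 2}$. A strand between opposite dangling edges passes through an odd number of vertices, and one between adjacent dangling edges through an even number. Hence neither Lemma~\ref{lm:disequality crossover} (which needs $\mathcal{S}'$, not $\mathcal{S}$) nor Lemma~\ref{lm: swap gate} applies.

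This is why the paper relies on hardness sources absent from your plan:
\begin{itemize}
\item the planar $\#\mathrm{CSP}^2$, i.e.\ even-regular spin-system, reduction (Theorem~\ref{thm:spin system});
\item the non-local reduction from the bipartite Ising model (Lemma~\ref{lm:case d=w});
\item as the main engine of Cases~II and~III, the planar six-vertex dichotomy (Theorem~\ref{thm:planar six-vertex}), reached by gadgets such as $M(f)NM(f^{\pi})$ whose corner entry vanishes, so that Lemma~\ref{lm:let a=x} removes the source/sink weights.
\end{itemize}

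Two further steps in your plan would fail.

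\emph{Normalizing asymmetric signatures.} Reducing an asymmetric $f$ to arrow-symmetric form by binary modifications transfers hardness only when the normal form is hard. Start from the $\M$-transformable signature $\left[\begin{smallmatrix}1&0&0&2\\0&3&6&0\\0&6&3&0\\2&0&0&1\end{smallmatrix}\right]$ (here $d=bc$) and modify $x_1$ by $(0,1,2,0)^T$. The result $f$ has $b\neq\pm y$ and $\det M_{\rm Out}(f)=-6\neq-54=\det M_{\rm In}(f)$. Looping $x_3,x_4$ yields the trigger $(0,1,2,0)^T$, so $f$ is \numP-hard by Lemma~\ref{lm: with-any-binary}. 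Yet every arrow-symmetric form reachable from $f$ by binary modifications, normalized to $a=x=1$, still has $d=\pm bc$ (resp.\ $d=\pm\mathfrak{i}bc$) and is therefore tractable.

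The paper instead uses the binaries to create zero entries, landing in Case~II where $\M$ is the only tractable class, and extracts $\det M_{\rm In}(f)^k=\det M_{\rm Out}(f)^k$ for $k=3,7$. Triggers that are third or fourth roots of unity are handled by pinning with $(0,1,0,0)^T$ (Lemma~\ref{thm: with (0,1,0,0)}); this is where $\mathscr{A}$-transformability enters. The case $by=cz=dw$ needs Lemma~\ref{lm: degenerate inner matrices}.

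\emph{The root-of-unity symmetric case.} It is false that only case~5 and $\M$-transformable parameters survive there. Take $b=c=y=z=\mathfrak{i}\tan(\frac{\pi}{8})$, $d=w=1$. The normalized eigenvalues $\frac{c+d}{1+b},\frac{b-1}{b+1},\frac{d-c}{1+b}$ are $1,-e^{-\mathfrak{i}\pi/4},e^{-\mathfrak{i}\pi/4}$. Moreover $f^{\pi/2}=f$, so chaining with rotations produces only roots of unity. This $f$ is not $\M$-transformable and is \numP-hard. The paper shows this via Lemma~\ref{lm: b=pm c, d=pm 1}: it transforms to $d'=\frac{1+b}{b-1}=-e^{\mathfrak{i}\pi/4}$ in Case~I and uses the bipartite Ising reduction.

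The truly delicate part is lattice rank~2, not rank~3. There one must find a zero off the unit circle of a Laurent polynomial (Lemma~\ref{lm: roots on unit circle}) in order to interpolate a non-singular redundant signature.
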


If $ax=0$, then by Lemma~\ref{lm:let a=x}, we can normalize $a=x=0$.
Then this corresponds to the planar six-vertex model, which has already been classified in Theorem~\ref{thm:planar six-vertex}. 
If $ax\neq 0$, a holographic transformation by $T=\left[\begin{smallmatrix}
    1 & 0\\
    0 & \sqrt[4]{\frac{a}{x}}
\end{smallmatrix}\right]$ normalizes $a=x$.
Since multiplying $f$ by a nonzero scalar does not affect the complexity of the problem $\plholant{\neq_2}{f}$, we may normalize by setting $a = x = 1$.

We divide the proof of our main theorem into three disjoint cases, based on the number and arrangement of zero entries in $\{b, y, c, z, d, w\}$, and on how many of the pairs $(b, y), (c, z), (d, w)$ are zero pairs $(0, 0)$:

\begin{enumerate}\renewcommand{\labelenumi}{\Roman{enumi}.}
    \item
    At least two of the pairs in $\{(b, y), (c, z), (d, w)\}$ are zero pairs $(0, 0)$.

    \item
There is at least one zero entry in $\{b, y, c, z, d, w\}$, but at most one zero pair among $\{(b, y), (c, z), (d, w)\}$.

    \item
All six entries $\{b, y, c, z, d, w\}$ are nonzero.
\end{enumerate}

These three cases are clearly disjoint and collectively exhaustive.

The following is an outline of how Cases I, II, and III are handled.

\noindent{\bf Case I:}
 At least two of the pairs in $\{(b, y), (c, z), (d, w)\}$ are zero pairs $(0, 0)$.
  
    \begin{enumerate}
    \item[1.] \label{Case I.1}
    The inner pair $(d,w)$ is a zero pair and at least one of the outer pairs $(b, y)$,  $(c, z)$ is a zero pair. 
    We prove that $\plholant{\neq_2}{f}$ is tractable if $f \in \mathscr{P}$, or $f$ is $\mathscr{A}$-transformable, or $f$ is $\M$-transformable, and is \#P-hard  otherwise.  
    We reduce spin systems on $k$-regular graph (where $k$ is an even integer) to $\plholant{\neq_2}{f}$ in Lemma~\ref{lm: 1cz1}, and apply the dichotomy of spin systems on $k$-regular graph (Theorem~\ref{thm:spin system}) to get \#P-hardness.
    Tractability of  $\plholant{\neq_2}{f}$ follows from known tractable signatures.  
            
    \item[2.]
    The two outer pairs $(b,y),(c,z)$ are zero pairs.
    We prove that $\plholant{\neq_2}{f}$ is tractable if $f\in \mathscr{P}$, $f\in \mathscr{A}$, or $f\in \mathscr{M}$.
  To address this case, we first introduce a delicate non-local reduction from a (not necessarily planar) $\BiGH$ problem to the planar eight-vertex model, under the special condition that $d = w$ (Lemma~\ref{lm:case d=w}).
    Then we 
    handle this case by realizing a signature satisfying  $(b,y)=(c,z)=(0, 0)$ and  $d =  w$ (Lemma~\ref{lm: 1dw1}).
    \end{enumerate}

\noindent {\bf Case II:} There is at least one zero entry in $\{b, y, c, z, d, w\}$, but at most one zero pair among $\{(b, y), (c, z), (d, w)\}$.

We show that $\plholant{\neq_2}{f}$ is \numP-hard unless $f\in \mathscr{M}$ (Theorem~\ref{thm: at least one zero at most one zero pair}).
The proof of Theorem~\ref{thm: at least one zero at most one zero pair} is divided into subcases based on the number $\sf N$ of zero entries in $M(f)$ and their positions. 
Note that $\sf N \leq 4$, since there can be at most one zero pair in $f$.
\begin{enumerate}
   \item $\sf N = 1$: 
   There is exactly one zero entry, which may appear in either an outer pair (Lemma~\ref{lm: 1bcdwy1}) or an inner pair (Lemma~\ref{lm: 1bcwzy1}).
\item $\sf N = 2$: 
There are two zero entries, which may lie in two distinct pairs (Lemmas~\ref{lm: 1bcdw1},~\ref{lm: 1bcdz1} and~\ref{lm: 1bcwz1}) or in the same pair (Lemma~\ref{lm:1cdwz1} and~\ref{lm:1bycz1}).
\item $\sf N = 3$: 
There are three zero entries. This may include one zero pair (Lemmas~\ref{lm: 1cdw1}, \ref{lm: 1cwz1}, \ref{lm: 1cdz1}, and~\ref{lm: 1bcz1}), or the zeros may occur in three different pairs (Lemma~\ref{lm: 1bcdy1, by=0}).
\item $\sf N = 4$: 
There are four zero entries. 
In this case, there must be a zero pair, which could be either an outer pair (Lemma~\ref{lm:1bd1}) or an inner pair (Lemma~\ref{lm: 1bc1}).
\end{enumerate}

Although the proof involves a case-by-case analysis, the overall strategy for establishing \#P-hardness is to use gadget construction to realize a signature in the planar six-vertex model or Case~I.
In a few cases we also employ polynomial interpolation.

\noindent{\bf Case III:} All six entries $\{b, y, c, z, d, w\}$ are nonzero.

We prove that $\plholant{\neq_2}{f}$ is tractable if $f$ is $\mathscr{P}$-transformable, or $\mathscr{A}$-transformable, or $\mathscr{L}$-transformable, or $\mathscr{M}$-transformable, or $(b=y=-c=-z=\mathfrak{i}\tan(\frac{\beta\pi}{8}))\land (d=w=\pm1)$ for odd integer $\beta$, and is \numP-hard otherwise.

In this case, interpolation using M\"obius transformation is a powerful method. 
However, certain settings prevent the direct application of this technique, most notably the case where $f$ satisfies arrow reversal equality or disequality, i.e., $(a = \epsilon x)\land(b = \epsilon y)\land (c = \epsilon z)$ with $\epsilon = \pm 1$.
\footnote{In physics, the case $\epsilon = 1$ corresponds to the zero-field case of the eight-vertex model, representing the absence of an external electric field~\cite{baxter1971eightvertex}.}
We first address the case $\epsilon = 1$, i.e., when ($b, y), (c, z), (d, w)$ are three identical pairs. This is the most intricate setting, where we employ a combinatorial transformation to {\sc Even Coloring} to identify new tractable classes, and introduce a novel technique—conformal lattice interpolation—to establish \numP-hardness.
We define a lattice of multiplicative relations among the normalized eigenvalues of $M(f)$, after applying a holographic transformation.

    \begin{enumerate}
        \item[1.]
        If $b=-1$, the lattice is not well defined.
        By applying the combinatorial transformation to {\sc Even Coloring}, we get a {redundant} signature (Lemma~\ref{lm: ars b=-1 or c=-1}). 
        Then we appeal to Theorem~\ref{thm:redundant} and show that either $f\in \M$, or the redundant signature is non-singular, and thus $\plholant{\neq_2}{f}$ is \numP-hard.
        
        \item[2.]
        If $b\neq -1$, we apply conformal lattice interpolation (Lemma~\ref{lm:general interpolation}, Lemma~\ref{lm:interpolation xyz}).
        In generic cases, by carefully analyzing the lattice structure, we can either interpolate the disequality crossover $\mathcal{S'}$ (Lemma~\ref{lm:interpolate disequality crossover}), and thus $\plholant{\neq_2}{f}\equiv_T \holant{\neq_2}{f}$ by Lemma~\ref{lm:disequality crossover}; or we get \numP-hardness by interpolating a signature which is known to be \numP-hard in the planar six-vertex model (condition~\ref{cd:six vertex model} in Lemma~\ref{lm:interpolation something hard}), or is known to be \numP-hard in Case I (condition~\ref{cd:y=-1 two 0 pairs} in Lemma~\ref{lm:interpolation something hard}) and Case II (condition~\ref{cd:y=-1 not matchgate},\ref{cd:x=z not matchgate},\ref{cd:x+z=0 not matchgate} in Lemma~\ref{lm:interpolation something hard}).
        
        \item[3.]
        For some exceptional lattices, none of the conditions~\ref{cd:six vertex model},\ref{cd:y=-1 not matchgate},\ref{cd:y=-1 two 0 pairs},\ref{cd:x=z not matchgate},\ref{cd:x+z=0 not matchgate} in Lemma~\ref{lm:interpolation something hard} can be satisfied, and we have to appeal to the reduction from non-singular redundant signatures (condition~\ref{cd: redundent} in Lemma~\ref{lm:interpolation something hard}).

        It's in this case that we use  geometric properties  of polynomials.
        To apply the conformal lattice interpolation, the lattice parameters $x,y,z$ need to be Laurent monomials in terms of some $t\in \C$.
        The redundant signature property imposes an algebraic equation on $(x,y,z)$, which can be seen as a Laurent polynomial $p(t)=0$.
        In other words, we can interpolate a redundant signature if we can find a suitable root of a Laurent polynomial $p(t)$.
        To ensure that the redundant signature is non-singular, we need to find a ``good root'', meaning a root that lies outside the unit circle.
        Lemma~\ref{lm: roots on unit circle} provides a necessary condition for all the roots of a polynomial to lie on the unit circle. By showing that $p(t)$ does not satisfy this condition, we conclude that $p(t)$ has a ``good root''.
        Thus, we can interpolate a non-singular redundant signature.
        It follows from Theorem~\ref{thm:redundant} that $\plholant{\neq_2}{f}$ is \numP-hard.

        \item[4.]
        If the conformal lattice interpolation does not yield \numP-hardness or the disequality crossover $\mathcal{S'}$, even after rotating $f$, we show that $b,c$ and $d$ must satisfy certain relations.
        This either leads to $f$ is $\M$-transformable, or we can appeal to the combinatorial transformation to {\sc Even Coloring} again, which transforms the problem to Case I or to the planar six-vertex model (Lemma~\ref{lm: b=pm c, d=pm 1}).
        It's in the final case (via combinatorial transformation to the planar six-vertex model) where the new tractable type $(b=-c=\mathfrak{i}\tan(\frac{\beta\pi}{8}))\land (d=\pm1)$ is found.
    \end{enumerate}
    The above discussion culminates at Lemma~\ref{lm: symmetric equality dichotomy} and Theorem~\ref{thm: three equal pairs}, which gives a complexity classification on $\plholant{\neq_2}{f}$, where $\{(b,y),(c,z),(d,w)\}$ are three equal pairs.
    We then handle the case where $a = -x$, $b = -y$, and $c = -z$ by applying holographic transformation and binary modification to transform $f$ into the form corresponding to the three equal pairs case (Lemma~\ref{lm: symmetric disequality dichotomy}).
 
Before we can apply interpolation using M\"{o}bius transformations to complete Case III, we must handle a special degenerate setting. This occurs when the matrix that would define the M\"{o}bius transformation is singular; specifically, when $\det\left[\begin{smallmatrix}
    c & d \\
    w & z
    \end{smallmatrix}\right] =0$
    and
    $\det\left[\begin{smallmatrix}
    b & w \\
    d & y
    \end{smallmatrix}\right] =0$. 
    In this case (Lemma~\ref{lm: degenerate inner matrices}), we show that either $f\in \M$, or $\plholant{\neq_2}{f}$ is \numP-hard, or $f$ satisfies arrow symmetric equality $(a=x)\land(b= y)\land(c= z)$, which falls under the subclass already addressed above.

Having resolved this final exception, we are now fully equipped to complete the proof of Case III by Theorem~\ref{thm: all nonzero entries, not three equal or opposite pairs}. 
We show that either $f\in \M$, or $f$ is $\mathscr{A}$-transformable, or $\plholant{\neq_2}{f}$ is \numP-hard, assuming that $(b,y),(c,z),(d,w)$ are not three equal or opposite pairs. 
M\"{o}bius transformation will be used to interpolate arbitrary or specific binary signatures,
allowing us to transform $M(f)$ into a form with zero entries, reducing to previously handled cases and thereby establishing the dichotomy.

The rest of the paper is organized as follows.
In Sections~\ref{sec: at least two zero pairs} and~\ref{sec: At least a zero entry and at most a zero pair}, we handle Cases I and II respectively. 
In Section~\ref{sec: three or opposite pairs}, we handle the special setting where $(b,y),(c,z),(d,w)$ are three equal or opposite pairs of Case III.
In this section we present the new tractable class, and introduce the conformal lattice interpolation as a powerful tool to prove \numP-hardness. 
In Section~\ref{sec: dichotomy for all nonzero}, we finally complete the proof of Case III. 
We summarize the three cases and prove our main theorem in Section~\ref{sec: proof of main theorem}. 

\section{Case I: At Least Two Zero Pairs}\label{sec: at least two zero pairs}
In this section, we consider the case where at least two of the pairs in ${(b, y), (c, z), (d, w)}$ are zero pairs.
There are two subcases based on whether the inner pair $(d, w)$ is a zero pair.

\subsection{The inner pair $(d, w)$ is a zero pair}

We first consider the case that $(d, w)=(0, 0)$.  Then, by rotational symmetry,  we may assume that $f$ has a signature matrix of  the following form $M(f) = \left[\begin{smallmatrix} 1 & 0 & 0 & 0 \\ 0 & c & 0 & 0 \\ 0 & 0 & z & 0 \\ 0 & 0 & 0 & 1 \end{smallmatrix} \right].$

\begin{lemma}\label{lm: 1cz1}
Let $f$ be a 4-ary signature with the signature matrix $M(f) = \left[\begin{smallmatrix} 1 & 0 & 0 & 0 \\ 0 & c & 0 & 0 \\ 0 & 0 & z & 0 \\ 0 & 0 & 0 & 1 \end{smallmatrix} \right]$. Then $\plholant{\neq_2}{f}$ is \numP-hard unless $f \in \mathscr{P}, \mathscr{M}$-transformable, or $\mathscr{A}$-transformable, in which cases it is tractable. Explicitly, $f$ is \numP-hard unless (1) $cz=1$, (2) $(cz=-1) \wedge (c^4=z^4)$ or (3) $c = \pm z$.
\end{lemma}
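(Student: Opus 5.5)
The plan is to reduce from the spin system dichotomy, Theorem~\ref{thm:spin system}, restricted to planar $k$-regular graphs with $k$ even. The structural observation, already used in Lemma~\ref{lm:is 1cz1 affine?}, is that
\[
f(x_1,x_2,x_3,x_4)=g(x_1,x_2)\cdot\chi_{x_1=x_4}\cdot\chi_{x_2=x_3}, \qquad g=(1,c,z,1)^T .
\]
So in any signature grid of $\plholant{\neq_2}{f}$, each $f$-vertex connects $x_1$ to $x_4$ and $x_2$ to $x_3$. These two pairs are cyclically adjacent, so the connections can be drawn without crossing. The edges of the underlying 4-regular planar graph therefore decompose into closed circuits.

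Along each circuit the bit is copied through every $f$-vertex and flipped by every $\neq_2$. Hence a whole circuit carries a single Boolean value, up to this known alternation. Each $f$-vertex then acts as the binary function $g$ between the two circuits passing through it. Depending on the parities at that vertex, the function seen is $g$ or a version of $g$ with some inputs flipped, such as $(1,z,c,1)^T$ or $(c,1,1,z)^T$. Thus $\plholant{\neq_2}{f}$ is a spin system whose spins sit on circuits and whose interactions sit on $f$-vertices.

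For hardness, take a planar $k$-regular instance $G$ of $\holant{=_k}{g}$, with $k$ even. I would form the medial-type construction:
\begin{itemize}
\item replace each vertex $v$ of $G$ by a small circuit passing around $v$;
\item for each edge $uv$ of $G$, place one $f$-vertex where the circuits of $u$ and $v$ meet, so that the two strands of $f$ (the $x_1x_4$ pair and the $x_2x_3$ pair) belong to the two circuits.
\end{itemize}
This construction is planar and uses $\neq_2$ on every edge, as required. The main obstacle is the bookkeeping of the $\neq_2$ flips. I need the local input convention at each $f$-vertex to be consistent, so that every edge of $G$ receives literally $g=(1,c,z,1)^T$, or at least a uniformly flipped copy.

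First, I would choose the rotation of each $f$ copy (rotational symmetry is available) so that every circuit has an even number of $\neq_2$ edges between consecutive interaction points. This is where $k$ even should be used. Failing that, I would insert an extra pair of $f$-vertices on a circuit as a ``phase shifter'' to correct the parity. A global flip of one circuit's value maps $g$ to $(c,1,1,z)$-type functions, and I would need to check that the resulting spin system is still the one covered by Theorem~\ref{thm:spin system}. The conditions in that theorem are invariant under the symmetries $x\leftrightarrow y$ and the simultaneous flip, so this check should go through.

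With the reduction in place, apply Theorem~\ref{thm:spin system} with $(w,x,y,z)=(1,c,z,1)$:
\begin{itemize}
\item the product case $wz=xy$ becomes $cz=1$;
\item the $\mathscr{A}$-transformable case for $k$ even is $cz=-1$, $c^4=z^4$, together with $z^{2k}=w^{2k}$, which holds automatically;
\item the planar $\mathscr{M}$-transformable case $x^2=y^2$, $z^k=w^k$ becomes $c=\pm z$.
\end{itemize}
Every other choice of $(c,z)$ is \numP-hard on planar graphs, and hence so is $\plholant{\neq_2}{f}$.

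For tractability:
\begin{itemize}
\item if $cz=1$, then $f\in\mathscr{P}$ by Lemma~\ref{lm:is product type?};
\item if $c=z$, then $f\in\widehat{\M}$ by Lemma~\ref{lm: is matchgate hat2}, and $\neq_2 H$ is still a matchgate signature;
\item if $c=-z$, I would exhibit a diagonal-times-$H$ transformation, e.g. $T=\left[\begin{smallmatrix}1&1\\ \ii&-\ii\end{smallmatrix}\right]$ up to scaling, and verify by Lemma~\ref{lm:is matchgate} that $T^{\otimes4}f$ and $(\neq_2)T^{-1}$ are both matchgates;
\item if $cz=-1$ and $c^4=z^4$, the same circuit picture gives a #CSP with an affine binary function by Lemma~\ref{lm:is binary affine?}, or a direct $Z$-type transformation makes $f$ affine via Lemma~\ref{lm:is 1cz1 affine?}.
\end{itemize}
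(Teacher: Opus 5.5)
Your strategy is the paper's: reduce $\plholant{=_k}{g}$ ($k$ even) to $\plholant{\neq_2}{f}$ and invoke Theorem~\ref{thm:spin system}. The paper does not build this reduction itself. It takes $\CSP^2(g)\le_T^p\holant{\neq_2}{f}$ from Lemma~4.1 of \cite{CaiF17}, notes that construction is planar, and specializes $\PlCSP^2(g)$ to $\plholant{=_k}{g}$. Your hand-built medial construction has a genuine gap at the parity step, and the fix you lead with cannot work.

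Every edge between two copies of $f$ carries exactly one $\neq_2$. So the $v$-circuit presents the value $s_v\oplus\epsilon_v(e)$ at the copy for edge $e$, where $\epsilon_v$ alternates around $v$ in rotation order; $k$ even only makes this consistent. Rotating a copy by $\pi$ merely swaps the two strands, i.e.\ $g\leftrightarrow g^T$, and since $g(\bar a,\bar b)=g(b,a)$ this suffices when $\epsilon_u(e)=\epsilon_v(e)$. Rotating by $\pi/2$ pairs a $u$-edge with a $v$-edge and destroys the vertex circuits. When $\epsilon_u(e)\neq\epsilon_v(e)$, the edge receives $g(s_u,\bar s_v)$, i.e.\ $(c,1,1,z)$ or $(z,1,1,c)$, however $f$ is placed. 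Making $\epsilon_u=\epsilon_v$ on every edge forces $\epsilon$ to alternate along each face boundary. That is possible only if all faces are even, i.e.\ $G$ is bipartite; the octahedron with $k=4$ already fails. So, as written, you only reduce the bipartite planar problem, or a mixed one with both $g$ and $(c,1,1,z)$. Neither is what Theorem~\ref{thm:spin system} classifies. Your remark that a one-sided flip is harmless is also false: $(c,1,1,z)$ is neither a transpose nor a double flip of $(1,c,z,1)$, and its planar $\M$-condition in Theorem~\ref{thm:spin system} is $c^k=z^k$, not $c=\pm z$.

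The phase-shifter idea does rescue the reduction, but only with a computation you omit, since a careless gadget adds an external field. Inside the empty face bounded by the $v$-circuit, insert two new copies $A_1,A_2$ of $f$ on that circuit, immediately before and after the copy $P_e$. Join their other strands by a two-edge auxiliary circuit with spin $s'$. All circuits stay even, only the parity at $P_e$ flips, and the $v$-strand has one common value $\sigma$ at $A_1$ and $A_2$. Orient the copies so they contribute $g(\sigma,s')$ and $g(\bar s',\sigma)$. Then $\sum_{s'}g(\sigma,s')g(\bar s',\sigma)=c+z$ for both values of $\sigma$, whereas the orientation $g(\sigma,s')g(\sigma,\bar s')$ would give $2c$ versus $2z$. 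Since $c+z\neq0$ outside the tractable case $c=-z$, one shifter per mismatched edge gives $\plholant{=_k}{g}\le_T^p\plholant{\neq_2}{f}$.

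On the tractability side, your $T=\left[\begin{smallmatrix}1&1\\ \ii&-\ii\end{smallmatrix}\right]$ fails for $c=-z$. With $z=-c$, $T^{\otimes4}f$ equals $2$ at $0000$ and $2\ii c$ at $1000$, violating the parity condition of Lemma~\ref{lm:is matchgate}. The paper instead uses $\left[\begin{smallmatrix}\alpha&1\\ 1&\ii\alpha\end{smallmatrix}\right]$ with $\alpha^2=\ii$. Likewise, when $cz=-1$ and $c=\pm\ii z$, $g$ itself is not affine, so the \#CSP picture does not help. There you need the diagonal transformation $\mathrm{diag}(1,\sqrt{\alpha})$, as the paper does.
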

\begin{proof}
Lemma 4.1 in~\cite{CaiF17} proves the reduction
$$\CSP^2(g) \leq_T^p \holant{\neq_2}{f}$$
where $g$ is the binary signature $(1,c,z,1)$. We observe their reduction is planar. Thus we have
$$\PlCSP^2(g) \leq_T^p \plholant{\neq_2}{f}.$$
In particular, we have
$$
\plholant{=_k}{g} \leq_T^p \plholant{\neq_2}{f}
$$
for any even integer $k$.
By Theorem~\ref{thm:spin system}, $\plholant{=_k}{g}$ is \numP-hard and thus $\plholant{\neq_2}{f}$ is \numP-hard unless $g \in \mathscr{P}$, or $(=_k|g)$ is $\mathscr{A}$-transformable, or  $(=_k| g)$ is $\mathscr{M}$-transformable.
If $g \in \mathscr{P}$, we have $cz = 1$ or $c=z=0$ by Lemma~\ref{lm: is binary product?}, in which cases $f \in \mathscr{P}$ by Lemma~\ref{lm:is product type?}.

If $(=_k|g)$ is $\mathscr{A}$-transformable, then $c^4=z^4$ and $cz=-1$ by Theorem~\ref{thm:spin system}.  By $c^4=z^4$ we know $c = \pm z$ or $c = \pm \mathfrak{i} z$. If $c = \pm z$, then $z^2 = \pm 1$ and thus $z = \pm 1$ or $z = \pm \mathfrak{i}$. Then by Lemma~\ref{lm:is 1cz1 affine?} and Lemma~\ref{lm:is binary affine?}, the signature $f$ is affine. If $c = \pm \mathfrak{i} z$, then $z^2 = \pm \mathfrak{i}$ and thus $z = \pm \alpha$ or $z = \pm \mathfrak{i} \alpha$. In all these cases, by a holographic transformation with the 2-by-2 invertible matrix $\left[\begin{smallmatrix} 1 & 0 \\ 0 & \sqrt{\alpha}\end{smallmatrix} \right]$, the signature $(\neq_2)$ is invariant under constant normalization and the signature $f$
becomes an affine signature
$f'$ with the signature matrix $M(f') = \left[\begin{smallmatrix} 1 & 0 & 0 & 0 \\ 0 & \alpha c & 0 & 0 \\ 0 & 0 & \alpha z & 0 \\ 0 & 0 & 0 & \alpha^2\end{smallmatrix} \right]$. By Lemma~\ref{lm:is 1cz1 affine?} and Lemma~\ref{lm:is binary affine?}, we know $f' \in \mathscr{A}$ since $(\alpha c) (\alpha z) (\alpha^2) = -\alpha^4 = 1 = \mathfrak{i}^4$.

If the pair $(=_k| g)$ is $\mathscr{M}$-transformable, then $c=\pm z$ by Theorem~\ref{thm:spin system}. If $c = z$, then $f \in \widehat{\mathscr{M}}$ by Lemma~\ref{lm: is matchgate hat2}. If $c=-z$, then $f$ is $\mathscr{M}$-transformable by $T = \left[\begin{smallmatrix}
    \alpha & 1 \\
    1 & \mathfrak{i}\alpha
\end{smallmatrix}\right] \in \operatorname{GL}_2(\mathbb{C})$. Indeed, since $T$ is symmetric, we have (up to a factor of 2)
$$M(T^{\otimes 4} f) = T^{\otimes 2} M(f) T^{\otimes 2} = \left[\begin{smallmatrix}
    0 & \alpha^3-\alpha c & \alpha^3+\alpha c & 0 \\
    \alpha^3-\alpha c & 0 & 0 & \alpha-\alpha^3 c \\
    \alpha^3+\alpha c & 0 & 0 & \alpha+\alpha^3 c \\
    0 & \alpha-\alpha^3 c & \alpha+\alpha^3 c & 0 \\
\end{smallmatrix} \right]$$
and

$$
M((\neq_2) (T^{-1})^{\otimes 2}) = T^{-1} \left[\begin{smallmatrix}
    0 & 1\\
    1 & 0
\end{smallmatrix} \right] T^{-1} = \left[\begin{smallmatrix}
    -\mathfrak{i}\alpha & 0 \\
    0 & -\alpha
\end{smallmatrix}\right]
$$
where both signatures have been transformed to matchgates by Lemma~\ref{lm:is matchgate}.
\end{proof}

\subsection{Two outer pairs $(b,y)$ and $(c, z)$ are zero pairs}

If the inner pair $(d, w)\neq (0, 0)$, then the two outer pairs must be zero pairs, i.e., $(b, y) = (c, z) = (0, 0)$, since $f$ has at least two zero pairs. 
By rotational symmetry, we may assume that
$M(f) = \left[\begin{smallmatrix} 1 & 0 & 0 & 0 \\ 0 & 0 & d & 0 \\ 0 & w & 0 & 0 \\ 0 & 0 & 0 & 1 \end{smallmatrix} \right]$.
To address this case, we first introduce a non-local reduction from a (not necessarily planar) bipartite $\GH$ problem (more specifically, a bipartite Ising model) to the planar eight-vertex model, under the special condition that $d = w$.

\begin{lemma}\label{lm:case d=w}
Let $f$ be a 4-ary signature with the signature matrix 
$M(f) = \left[\begin{smallmatrix} 1 & 0 & 0 & 0 \\ 0 & 0 & d & 0 \\ 0 & d & 0 & 0\\ 0 & 0 & 0 & 1 \end{smallmatrix} \right]$, then $\plholant{\neq_2}{f}$ is \numP-hard unless $d=0$, $d = \pm1$ or $d = \pm \mathfrak{i}$, in which cases $f\in \mathscr{A}$ and thus the problem is polynomial time solvable.
\end{lemma}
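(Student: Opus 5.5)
The plan is to read the instance geometrically and reduce a bipartite Ising model to it. Tractability is immediate from the definitions: if $d=0$ then $f$ is the indicator of $x_1=x_2=x_3=x_4$, which is affine. If $d\in\{\pm1,\pm\mathfrak{i}\}$, then $f(x_1,x_2,x_3,x_4)=\chi_{x_1=x_3}\chi_{x_2=x_4}\, d^{[x_1\neq x_2]}$, and $d^{[x_1\ne x_2]}$ is $\mathfrak{i}^{Q}$ for a quadratic $Q$ in $x_1,x_2$ with even cross term, since $(x_1\oplus x_2)=x_1+x_2-2x_1x_2$. Hence $f\in\mathscr{A}$, and $\neq_2\in\mathscr{A}$ as well. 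So the whole content is hardness for $d\notin\{0,\pm1,\pm\mathfrak{i}\}$.

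\textbf{Step 1: decomposition into closed curves.} Write $f(x_1,x_2,x_3,x_4)=\chi_{x_1=x_3}\chi_{x_2=x_4}\, d^{[x_1\neq x_2]}$; from the rotation formulas, $f^{\pi/2}$ has the same matrix as $f$. At each vertex of a planar 4-regular instance, pair the opposite edges $(x_1,x_3)$ and $(x_2,x_4)$. The graph then decomposes into straight-ahead closed curves, which cross one another (or themselves) at the vertices. Along a curve, the value is preserved through each vertex and flipped across each $\neq_2$. Thus a curve through $k$ vertices admits exactly two consistent assignments if $k$ is even, and none if $k$ is odd. Each curve $C$ therefore carries a single spin $\sigma_C\in\{0,1\}$, and the local value on any edge of $C$ is $\sigma_C\oplus\pi_e$, where $\pi_e$ is a fixed phase determined by the position of $e$ along $C$. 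At a crossing of $C$ and $C'$, the weight is $d^{[\sigma_C\oplus\sigma_{C'}\oplus \pi]}$ for a phase $\pi$ fixed by the drawing. Consequently the Holant value equals an Ising-type partition function on the curves: each crossing contributes either $(1,d,d,1)$ or $(d,1,1,d)$ in the variables $(\sigma_C,\sigma_{C'})$.

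\textbf{Step 2: the non-local reduction.} Start from an instance $G=(U,V,E)$ of $\BiGH(g)$ with $g=(1,d,d,1)$, or with $g=(1,d^2,d^2,1)$ if only that is achievable. Build a planar arrangement as follows. Draw each $u\in U$ as a long horizontal closed loop and each $v\in V$ as a long vertical closed loop. Loops on the same side are pairwise disjoint, and every $u$-loop meets every $v$-loop, so any two loops cross an even number of times. Then use local detours at the cell of $(u,v)$ to adjust both the number and the phases of the crossings of that pair. For a non-edge, the two crossings should have opposite phases, giving $(1,d,d,1)\cdot(d,1,1,d)=d\cdot(1,1,1,1)$, a constant. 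For an edge, the crossings should have equal phases, giving $(1,d^2,d^2,1)$, or $(1,d,d,1)$ up to a global constant and a bipartite-side relabeling if the phases can be arranged accordingly. All self-consistency constraints (an even number of vertices per loop) are met by inserting crossings in pairs. The construction is planar because we only ever draw curves in the plane, yet it encodes an arbitrary, non-planar bipartite graph; this is the non-local feature.

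\textbf{Step 3: invoking a known dichotomy.} Since the Ising model on bipartite graphs is a $\CSP$ with a single binary signature, Theorem~\ref{thm: CSP dichotomy}, combined with the standard reduction of the bipartite case to the general one, makes $\BiGH(1,t,t,1)$ \numP-hard unless $t\in\{0,\pm1\}$ or the signature is affine. Since $d\notin\{0,\pm1,\pm\mathfrak{i}\}$, we have $d^4\neq 1$, so $t=d$ lands in the hard regime. With $t=d^2$, the exceptional values $d^2=\pm\mathfrak{i}$ would be excluded only if the phase-controlled construction achieves $(1,d,d,1)$ per edge. I expect the main obstacle to be exactly this phase bookkeeping. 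One must show the drawing can be routed so that each edge pair acquires an odd net phase (weight $d$) and each non-edge pair a cancelling phase (weight constant), while every loop keeps an even crossing count. If that fails for some $d$, I would fall back on Lemma~\ref{lm:interpolate S} when $d$ is not a root of unity, and handle the remaining roots of unity by the $d^2$ version, which is hard since $d^4\ne1$.
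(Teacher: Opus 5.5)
\textbf{Step 2 cannot work: the crossing phases are forced by planarity.} Your Step~1 is correct, but the phases it produces are not adjustable. Merge each $\neq_2$ into an edge to get a 4-regular plane graph. It is Eulerian, so its faces are properly 2-colourable. Orienting every black face counterclockwise, exactly as in the proof of Lemma~\ref{lm: even coloring connection in general}, gives a valid assignment $\tau$ in which the two straight-through strands disagree at \emph{every} vertex. Measure each loop's spin $s_C$ relative to $\tau$. Then every crossing of distinct loops $C,C'$ contributes $d^{[s_C=s_{C'}]}$, and every self-crossing contributes $d$, so
\[
\text{Holant}=d^{\,\#\mathrm{self}}\sum_{s\in\{0,1\}^{\mathcal{C}}}\ \prod_{\{C,C'\},\,C\neq C'}\bigl(d^{\,m_{CC'}}\bigr)^{[s_C=s_{C'}]},
\]
where $m_{CC'}$ is the (even) number of crossings of $C$ and $C'$. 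Consequently:
\begin{itemize}
\item All crossings of a given pair have the same phase, and a local detour only adds crossings of that same phase.
\item The exponent is always even, so $(1,d,d,1)$ on an edge is unattainable.
\item Any two loops that cross at all interact non-trivially, so a constant on a non-edge is unattainable.
\end{itemize}
In your layout, where every $u$-loop meets every $v$-loop, you get a weighted complete bipartite graph rather than $G$. You also cannot let only adjacent loops cross in general: the full subdivision of $K_5$ is bipartite but not a string graph. Your ``$d^2$ version'' fails for the same reason.

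The missing ingredient is a phase-neutral crossing, namely the crossover $\mathcal{S}$. For each edge $(x,y)$ the paper makes $C_x$ cross $C_y$ once at an entry vertex labelled $f$ and once at an exit vertex labelled $\mathcal{S}$. It labels all arm--arm crossings $\mathcal{S}$, which gives $\BiGH([1,d,1])\le_T^p\plholant{\neq_2}{f,\mathcal{S}}$. It then interpolates $\mathcal{S}$: by Lemma~\ref{lm:interpolate S} when $d$ is not a root of unity, and otherwise from the $\boxtimes$ gadget, whose inner/outer ratio is $(1+d^4)/(2d)$. Your fallback supplies only the first of these.

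\textbf{Two further points.} First, in Step~3 the passage from bipartite to general graphs is not automatic. A 2-stretch gives $[1+d^2,2d,1+d^2]$, which is affine at $d=\mathfrak{i}(\pm1\pm\sqrt2)$. For those values the paper uses $[1+d^4,2d^2,1+d^4]$, obtained by stretching and doubling edges.

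Second, and more importantly, the identity above shows that when $d^4=-1$ every pair factor is $\mathfrak{i}^{(m_{CC'}/2)[s_C=s_{C'}]}$. This is an affine binary function, so the instance is a $\CSP(\mathscr{A})$ instance and the problem is computable in polynomial time. The paper's own argument breaks at exactly these values: $d+d^5=d(1+d^4)=0$, so $f_\boxtimes$ has zero inner part and Lemma~\ref{lm:interpolate S} does not apply. Hardness is therefore only available for $d\neq 0$ with $d^8\neq 1$, and no repair of your Step~2 can cover the primitive 8th roots of unity.
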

\begin{proof}
It can be directly verified that when $d=0$, $d = \pm1$ or $d = \pm \mathfrak{i}$, the signature $f \in \mathscr{A}$,
and thus the problem $\plholant{\neq_2}{f}$ is tractable.
In the following we assume that $d\neq 0,\pm 1,\pm\mathfrak{i}$.

This proof needs to go through a bipartite version of Ising model, which is a special case of the graph homomorphism problem. 
We use $\BiGH(g)$ to denote the problem, where the underlying graph is a bipartite graph $G=(V,E)$, and every vertex $v\in V$ corresponds to a Boolean variable, every $e\in E$ corresponds to a symmetric binary constraint $g=[1,d,1]$ with matrix 
$\left[\begin{smallmatrix}
    1 & d\\
    d & 1
\end{smallmatrix}\right]$.
We want to show the reduction
\begin{equation}\label{eq: BiGH reduction}
    \BiGH(g) \leq_T^p \plholant{\neq_2}{f}
\end{equation}
We prove  (\ref{eq: BiGH reduction}) in two steps:
First reduce $\BiGH(g)$ to $\plholant{\neq_2}{f,\mathcal{S}}$, and then interpolate $\mathcal{S}$ from $f$ on the RHS.

Let $\Omega$ be an instance for $\BiGH(g)$ with the underlying bipartite (multi)graph $G=(V,E)$, where $|V|=n$, $V= X\dot\cup Y$,  every  $v \in V$ is a Boolean variable, and every edge in the (multi)set $E$ is assigned  the binary constraint $g$.
Without loss of generality, we may assume that $G$ is connected.
We  list the vertices of $X= \{x_1, \ldots, x_{n_1}\}$
and $Y = \{y_1, \ldots, y_{n_2}\}$ where $n_1+n_2=n$, from top to bottom,
on the Left and on the Right respectively,
with all edges between Left and Right. Edge intersections are isolated and in the given order.
Multi-edges between the same $(x, y)$  are drawn as parallel edges.
Being bipartite, there are no self-loops.
The partition function value of $\Omega$ is


\begin{equation} \label{eq: sum of product of BiGH}
\BiGH(\Omega;g) = \sum\limits_{\sigma : V \rightarrow \{0,1\}} \prod\limits_{\{x, y\} \in E} g(\sigma(x),\sigma(y)).
\end{equation}

Now we construct a signature grid $\Omega'$ for the problem $\plholant{\neq_2}{f,\mathcal{S}}$ as follows (See Figure~\ref{fig: BiGH}).
First, replace every vertex $v\in V$ with a small circle $C_v$, such that no two circles overlap or intersect.
Second, for every edge connecting $x \in X$ and $y \in Y$,
we make a protrusion of $C_x$, called an  ``arm'',
along the edge $(x,y)$,
which is a ribbon-like band that makes a local extension of $C_{x}$ to intersect the circle $C_{y}$. This creates two intersection points, one {\it entry vertex} when the
arm enters $C_{y}$ on the
upper side of the arms, and one {\it exit vertex}  as it
leaves $C_{y}$.
As we traverse $C_y$ counterclockwise, the intersection point pairs on $C_y$
occur top to bottom sequentially, for all $x\in X$ such that $(x, y) \in E$, according to the order of $X$ (the same order as they were drawn as edges in  $G$).
Two arms cross if and only if their corresponding edges intersect in the original graph $G$.  Each such crossing creates 4 intersection points.
All such crossings occur at isolated points in the plane, corresponding to
the intersections in $G$,  outside all circles $C_y$.
Now, for each $x \in X$, we rename $C_x$  the cycle which is
the original circle modified by  all these protruding arms.

The drawing of these  $C_x$ and $C_y$  defines a 4-regular plane graph $G_1$, whose vertices are all the intersection points between the cycles $C_x$ and the circles $C_y$, as well as those between (protruding arms of) different cycles $C_x$.
We will define an instance  $\Omega'$ for $\plholant{\neq_2}{f,\mathcal{S}}$ as
follows.
Replace each edge of  $G_1$ by a path of length 2 (this is called a 2-stretch)
to obtain a bipartite $(2, 4)$-biregular plane graph $G'$.
Place a $(\neq_2)$ constraint at each vertex of degree 2.
Assign the signature $f$ to each entry vertex of $G'$. Assign $\mathcal{S}$ to all other
vertices of $G'$; these include all exit vertices and intersection points between
different arms.
Since $f$ and $\mathcal{S}$ are both rotationally symmetric, the
input order of variables of $f$ and $\mathcal{S}$ is immaterial.
We denote by $F$ the set of vertices labeled with $f$, by $S$ those labeled with $\mathcal{S}$, and by $W$ those labeled with $(\neq_2)$.
This completes the construction of the signature grid $\Omega'$,
with underlying graph $G'=(V',E')$.

\begin{figure}[!htbp]
\centering
\includegraphics[width=0.8\textwidth]{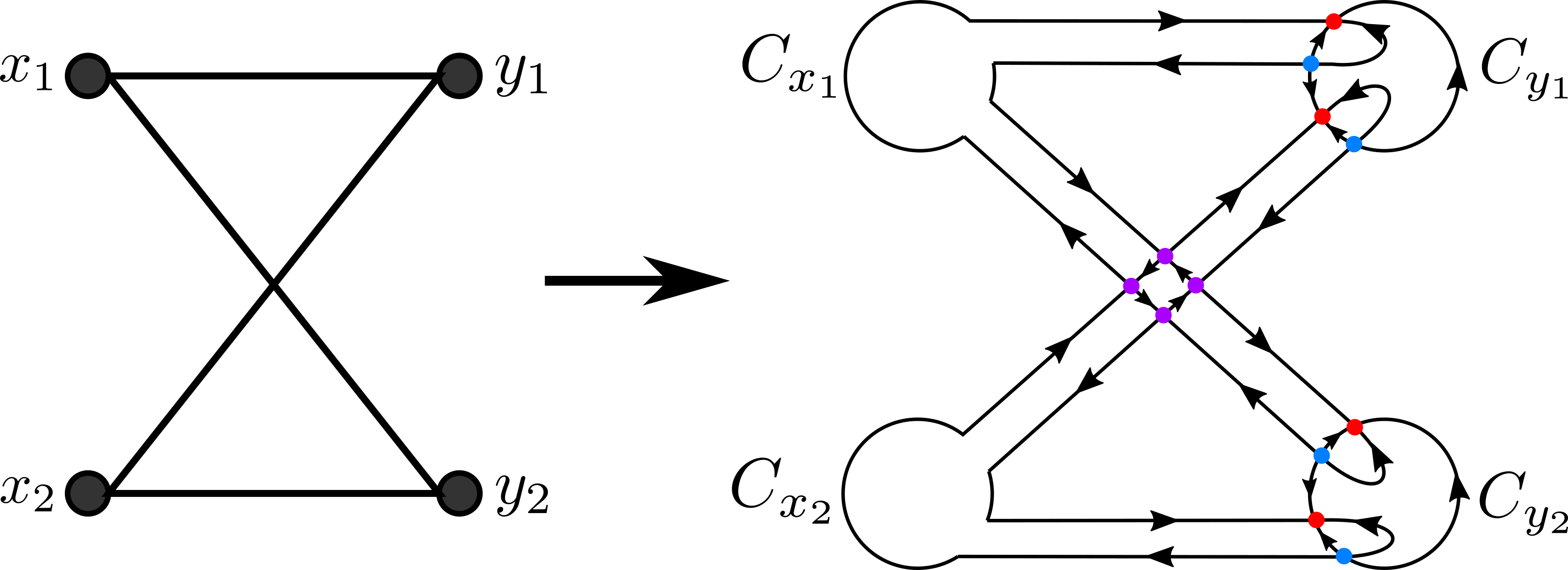}
\caption{Reduction from $\BiGH(g)$ to $\plholant{\neq_2}{f,\mathcal{S}}$.
LHS is an instance $\Omega$ for $\BiGH(g)$, where $x_1,x_2,y_1,y_2$ are Boolean variables and every edge is a binary constraint $g$.
RHS is an instance $\Omega'$ for $\plholant{\neq_2}{f}$. 
There is an implicit $\neq_2$ constraint on every edge.
The red nodes are entry vertices labeled by $f$, the blue nodes are exit vertices labeled by $\mathcal{S}$, and the purple nodes represent intersections between arms, also labeled by $\mathcal{S}$.
An incoming edge corresponds to an input of 0, while an outgoing edge corresponds to an input of 1.
The canonical assignment is illustrated by the orientation shown in the figure.}
\label{fig: BiGH}
\end{figure}

We call the pairs $(x_1, x_3)$ and $(x_2, x_4)$ {\it edge twins},
for $f$ and  $\mathcal{S}$ with input edges $x_1, x_2, x_3, x_4$ arranged counterclockwise.
Note that both $f$ and $\mathcal{S}$ have  support  $(x_1 = x_3) \land (x_2 = x_4)$. 
Thus, for any nonzero term in the sum
\begin{equation}\label{eq: Pl-Holant f,S}
\plholant{\Omega';\neq_2}{f,\mathcal{S}}=\sum_{\sigma:E'\rightarrow \{0, 1\}}\prod_{u\in F}f(\sigma \mid_{E_{(u)}})\prod_{v\in S}\mathcal{S}(\sigma\mid_{E_{(v)}})\prod_{w\in W}(\neq_2)(\sigma\mid_{E_{(w)}}),
\end{equation}
each edge twin must be assigned an equal value.
In $G'$, both $C_x$ and $C_y$  are cycles; for each
 we arbitrarily pick an edge, $e_x$ or $e_y$, to be its \emph{leader edge}.
Then there is a 1-1 correspondence between nonzero terms in (\ref{eq: Pl-Holant f,S})
and  assignments of
 leader edges $\sigma':\{e_v: v \in V \}\rightarrow \{0, 1\}$.
 This is because as we traverse each $C_x$ or $C_y$, we encounter (always  in pairs)
 an even number of
 degree 4 vertices from  $F$ or $S$, interspersed with $(\ne_2)$.
Thus, if $e = (u, w)$ is the leader edge  of a cycle ($u \in F \cup S$, $w \in W$)
taking value $0$ (resp. $1$), then all edges on the cycle must take values $(0, 0, 1, 1, \ldots, 0, 0, 1, 1)$ (resp. $(1, 1, 0, 0, \ldots, 1, 1, 0, 0)$) successively, starting with $e$ at $u$ in the direction away from $w$.
Then,
\begin{equation}\label{eq: Pl-Holant}
\plholant{\Omega';\neq_2}{f,\mathcal{S}}=\sum_{\sigma':\{e_v: v \in V \}\rightarrow \{0, 1\}}\prod_{u\in F}f(\widehat{\sigma'} \mid_{E_{(u)}}),
\end{equation}
where $\widehat{\sigma'}$ is the unique extension of $\sigma'$. We name
the leader edges $\{e_1, e_2, \ldots, e_n\}$.

Next, we show that $\BiGH(\Omega; g) = \plholant{\Omega'; \neq_2}{f, \mathcal{S}}$ by establishing a 1-1 correspondence between the terms in
(\ref{eq: sum of product of BiGH}) and (\ref{eq: Pl-Holant}).
We first show that there exists a canonical assignment to
the leader edges $\{e_1, e_2, \ldots, e_n\}$ such that its evaluation
in (\ref{eq: Pl-Holant}) equals that in (\ref{eq: sum of product of BiGH}) when $x_1 = x_2 = \cdots = x_n = 0$.
Then, we show that this equality is maintained after flipping an
arbitrary set of variables with the corresponding changes to the assignment of leader edges.
\begin{enumerate}
    \item
    Existence of canonical assignment:
    For any $x\in X$, start at any protruding arm of $C_x$ which enters  $C_y$ (for some $y \in Y$),  we assign the value $0$ to the edge on $C_x$ that is incident to its entry vertex; after passing through a $(\ne_2)$, the incident edge
    to  its exit vertex has  value $1$.
    On its way back if it intersects another arm, its incident edge
    has value 0 since the cycle $C_x$
    passed through another $(\ne_2)$, and leaves this arm-intersection (after two
    intersection points) again with
    value 1. There can be multiple arm-intersections and the same reasoning applies. If $C_x$ has another
    protruding arm that enters another $C_{y'}$, the same reasoning applies.
    This gives a valid assignment to all edges on $C_x$
    including its leader edge.
    Similarly, when we traverse $C_y$ counterclockwise,
     we assign the value $0$ to the edge on $C_y$ incident to
    all entry vertices (for various $x\in X$  with $(x,y) \in E$)
  and the value 1 at all exit vertices. There are an even number of these in total
  on $C_y$. This also defines a value to the leader edge of $C_y$.

    The above defines  the canonical assignment, which contributes a term $1$ in the sum of product (\ref{eq: Pl-Holant}), corresponding to the term where $x_1=x_2=\cdots=x_n=0$ in the sum of product (\ref{eq: sum of product of BiGH}).

    \item
    Flipping variables:
    Now, we flip the assignments to an arbitrary set of variables $\{e_{i_1}, e_{i_2}, \ldots, e_{i_k}\}$ in the canonical assignment, along with the corresponding variables $\{x_{i_1}, x_{i_2}, \ldots, x_{i_k}\}$ in the all-zero assignment.
    Then all the variables on $C_{i_1},C_{i_2},\ldots,C_{i_k}$ are flipped.
    Evaluations of $\mathcal{S}$  are unchanged.
    Consider two arbitrary cycles $C_{x}$ and $C_{y}$ for $x\in X$ and $y\in Y$:
    \begin{itemize}
        \item
        If neither variables on  $C_{x}$ nor on $C_{y}$ are flipped, then the signature $f$ at their intersection $C_{x} \cap C_{y}$ remains unaffected and evaluates to $f_{0000} = 1$.
        \item
        If both variables on  $C_{x}$ and on $C_{y}$ are flipped, then all inputs to $f$ are flipped, and it evaluates to $f_{1111} = 1$.
        \item
        If only one of $C_{x}$ or $C_{y}$ has its variables flipped, then $f$ has exactly one pair of edge twin inputs flipped and evaluates to either $f_{0110} = d$ or $f_{1001} = d$.
    \end{itemize}
    In the above three cases, $g(\sigma(x),\sigma(y))$ evaluates to $g_{00}=1$, $g_{11}=1$, and $g_{01}=d$ or $g_{10}=d$, respectively.
    Thus, after flipping variables, the evaluation for the assignment of leader edges $\{e_1, e_2, \ldots, e_n\}$ in (\ref{eq: Pl-Holant}) remains the same for the assignment of $\{x_1, x_2, \ldots, x_n\}$ in (\ref{eq: sum of product of BiGH}).
\end{enumerate}
Therefore, we have shown that
$
\BiGH(\Omega;g)=\plholant{\Omega';\neq_2}{f,\mathcal{S}},
$
and it follows that
$$ \BiGH(g) \leq_T^p \plholant{\neq_2}{f,\mathcal{S}}.$$

Next, we interpolate $\mathcal{S}$ from $f$ on RHS.
If $d\neq 0 $ is not a root of unity, then by Lemma~\ref{lm:interpolate S} we are done. 
Otherwise, consider the ``boxtimes" gadget \tikz[baseline=0.2ex]{
  \draw[thick] (0,0) rectangle (0.3,0.3);
  \draw[thick] (-0.05,-0.05) -- (0.35,0.35);
  \draw[thick] (0.35,-0.05) -- (-0.05,0.35);
}, where every vertex is labeled by $f$, and in the middle of every internal edge is assigned a $(\neq_2)$ constraint. 
Note that the input order of $f$ is immaterial since $f$ is rotationally symmetric.
This gives the 4-ary signature 
$f_\boxtimes$ with $M(f_\boxtimes) = \left[\begin{smallmatrix} 2d^2 & 0 & 0 & 0 \\ 0 & 0 & d+d^5 & 0 \\ 0 & d+ d^5 & 0 & 0 \\ 0 & 0 & 0 & 2d^2 \end{smallmatrix} \right]$.
When $d$ is a root of unity, we have $\frac{d+d^5}{2d^2}$ is not a root of unity unless $d = \pm 1 $ or $d = \pm \mathfrak{i}$. Indeed,  $\left| \frac{d+d^5}{2d^2} \right| =1$ iff $|1+d^4| = 2$. Since $d$ is a root of unity, we have $|d| = 1$ and thus $|1+d^4| = 2$ iff $d^4 = 1$.
This finishes the proof of the reduction~(\ref{eq: BiGH reduction}).

Next, we get \numP-hardness by reducing $\CSP$ problems to $\BiGH(g)$, and then appeal to Theorem~\ref{thm: CSP dichotomy}.
By a 2-stretching, we have the
 reduction $\CSP(g^2)\le_T^p\BiGH(g)$, where
$M(g^2)=\left[\begin{smallmatrix}
    1+d^2 & 2d\\
    2d & 1+d^2
\end{smallmatrix}\right]
=\left[\begin{smallmatrix}
    1 & d\\
    d & 1
\end{smallmatrix}\right]
\cdot\left[\begin{smallmatrix}
    1 & d\\
    d & 1
\end{smallmatrix}\right]
=M(g)^2.
$
By Lemma~\ref{lm: is binary product?}, if $g^2\in \mathscr{P}$, then $(1+d^2)^2-4d^2=0$, or $d=0$, or $1+d^2=0$, contradicting $d\neq 0,\pm1,\pm \mathfrak{i}$.
So $g^2\notin\mathscr{P}$.
By Lemma~\ref{lm:is binary affine?}, if $g^2\in \mathscr{A}$ then $1+d^2=\mathfrak{i}^\beta\cdot2d$ for some $\beta\in\{0,1,2,3\}$.
We either get a contradiction, or
$d=\mathfrak{i}(\epsilon_1+\epsilon_2{\sqrt{2}})$, where $\epsilon_1,\epsilon_2=\pm1$.
Then by Theorem~\ref{thm: CSP dichotomy}, $\CSP(g^2)$ is \numP-hard, and thus $\BiGH(g)$ is \numP-hard, unless $d=\mathfrak{i}(\epsilon_1+\epsilon_2{\sqrt{2}})$, where $\epsilon_1,\epsilon_2=\pm1$.

Now assume that $d=\mathfrak{i}(\epsilon_1+\epsilon_2{\sqrt{2}})$, for some $\epsilon_1,\epsilon_2=\pm1$.
 We prove the reduction $\CSP(h)\leq_T^p$ $\BiGH(g)$, where
$M(h)=\left[\begin{smallmatrix}
    1+d^4 & 2d^2\\
    2d^2 & 1+d^4
\end{smallmatrix}\right]$.
Start from an instance $\Omega_3$ for $\CSP(h)$ with the underlying graph $H$.
We stretch every edge in $H$ by 2 and get the graph $H'$.
Then thicken the edges in $H'$ by 2, i.e., add a multiple edge to every adjacent vertices in $H'$.
Finally, label every edge by $g$.
This gives us an instance $\Omega_3'$ for $\BiGH(g)$.
Note that  $M(h)=\left[\begin{smallmatrix}
    1 & d^2\\
    d^2 & 1
\end{smallmatrix}\right]
\left[\begin{smallmatrix}
    1 & d^2\\
    d^2 & 1
\end{smallmatrix}\right]
$.
Then we have $\CSP(\Omega_3;h)=\BiGH(\Omega_3';g)$, and
the reduction $\CSP(h)\leq_T^p$ $\BiGH(g)$ follows.
Again by Lemma~\ref{lm: is binary product?} and Lemma~\ref{lm:is binary affine?}, we can show that
$h\notin \mathscr{A}\cup\mathscr{P}$  unless $d^2=0,\pm1, \pm \mathfrak{i}$ or $d^2=\mathfrak{i}(\epsilon'_1+\epsilon'_2{\sqrt{2}})$, where $\epsilon'_1,\epsilon'_2\in \{0,1\}$.
This contradicts $d=\mathfrak{i}(\epsilon_1+\epsilon_2{\sqrt{2}})$ for some $\epsilon_1,\epsilon_2=\pm1$.
Therefore, $\CSP(h)$ is \numP-hard, and thus $\BiGH(g)$ is \numP-hard by Theorem~\ref{thm: CSP dichotomy}.
\end{proof}

\begin{remark}\label{remark:sec4-after-lm}
    In the above proof, the bipartiteness of the $\GH$ problem guarantees the existence of a canonical assignment.
    We can show that 
    if the underlying graph of $\GH(g)$ is not bipartite,
    then no canonical assignment exists, regardless of how the cycles are drawn or the $f$’s are labeled.
\end{remark}

\begin{lemma} \label{lm: 1dw1}
Let $f$ be a 4-ary signature with the signature matrix $M(f) = \left[\begin{smallmatrix} 1 & 0 & 0 & 0 \\ 0 & 0 & d & 0 \\ 0 & w & 0 & 0 \\ 0 & 0 & 0 & 1 \end{smallmatrix} \right]$, then $\plholant{\neq_2}{f}$ is \numP-hard unless $f\in \mathscr{P}$, $f\in \mathscr{A}$ or $f \in \mathscr{M}$, in which cases $\plholant{\neq_2}{f}$ is polynomial time solvable. More explicitly, $\plholant{\neq_2}{f}$ is \numP-hard unless $d=w=0$ or $dw = \pm 1$.
\end{lemma}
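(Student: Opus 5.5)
Tractability is the easy direction, so I will check it first. If $d=w=0$ then $f\in\mathscr{P}$ (it is the product of equalities $x_1=x_4$, $x_2=x_3$, $x_1=x_2$). If $dw=1$ then $f\in\mathscr{P}$ by Lemma~\ref{lm:is product type?}. If $dw=-1$ then $\det M_{\rm Out}(f)=1=-dw=\det M_{\rm In}(f)$, and $f$ satisfies the even parity condition, so $f\in\M$ by Lemma~\ref{lm:is matchgate}.

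For hardness, assume $dw\neq\pm1$ and $(d,w)\neq(0,0)$. I would handle exactly one of $d,w$ being zero first. Say $w=0\neq d$. Connecting two rotated copies via $(\neq_2)^{\otimes 2}$ (for instance $M(f)NM(f^{\pi})$) should give a signature whose inner matrix has a single zero, or which can be fed to Corollary~\ref{cor:exactly 3 nonzero entries}/Case~II-type arguments. The simpler route is to form $M(f)NM(f)$: this gives $\mathrm{diag}$-type outer entries $0$ with the inner part becoming $\left[\begin{smallmatrix}0&0\\0&0\end{smallmatrix}\right]$ plus off-diagonal $1$'s, i.e.\ a six-vertex signature. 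Its hardness follows from Theorem~\ref{thm:planar six-vertex}, after checking it is not in $\mathscr{P},\mathscr{A},\M,\widehat{\M}$. If that degenerates, I will instead use Theorem~\ref{thm:redundant} on a gadget whose middle rows coincide.

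The main case is $dw\neq0$. I want to reduce to Lemma~\ref{lm:case d=w} by making $d=w$. I try the holographic diagonal transformation $T=\mathrm{diag}(1,t)$. It fixes $\neq_2$ up to scalar, scales entries by $t^{\text{weight}}$, and all relevant inner entries have weight $2$, so the ratio $d/w$ is unchanged. That fails, so I would use a gadget instead. Connecting $f$ with its rotation $f^{\pi}$ (which swaps $d$ and $w$ in the inner block, since $M(f^\pi)$ has inner matrix $\left[\begin{smallmatrix}0&d\\w&0\end{smallmatrix}\right]$ read appropriately) via $N$ yields $M(f)NM(f^{\pi})$. The inner block should become $\mathrm{antidiag}(dw,dw)$ while the outer block stays $\mathrm{diag}(1,1)$. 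This produces $f'$ of the symmetric form of Lemma~\ref{lm:case d=w} with parameter $d'=dw$ (up to possible transpositions, which I would verify with the explicit rotated matrices). Lemma~\ref{lm:case d=w} then gives hardness unless $dw\in\{0,\pm1,\pm\mathfrak{i}\}$.

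The main obstacle is the leftover case $dw=\pm\mathfrak{i}$, where $f'\in\mathscr{A}$ but $f$ itself is not tractable. Here I would first normalize, taking $d=w$ whenever that is already the case; then Lemma~\ref{lm:case d=w} applies directly, and since $d^2=\pm\mathfrak{i}$ forces $d\notin\{0,\pm1,\pm\mathfrak{i}\}$, it gives hardness. In general, I would aim to use Lemma~\ref{lm:let a=x}-style global counting. Every valid configuration on a planar 4-regular instance has equal numbers of $d$-type and $w$-type vertices, by a sink/source-like parity count of the inner configurations; if this holds, the value depends only on $dw$, so I can replace $(d,w)$ by $(\sqrt{dw},\sqrt{dw})$ and reduce everything to Lemma~\ref{lm:case d=w} directly. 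This would subsume the gadget step and settle all cases with $dw\neq0$. I expect verifying this balancing claim to be the key step; if it fails, I would fall back on iterated gadgets such as $f_\boxtimes$ built from $f$ and $f^\pi$ to reach a parameter outside $\{\pm1,\pm\mathfrak{i}\}$.
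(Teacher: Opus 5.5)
Your tractability checks are right. In the case where exactly one of $d,w$ vanishes, your ``simpler route'' $M(f)NM(f)$ is the paper's gadget: it equals $\left[\begin{smallmatrix}0&0&0&1\\0&0&d^2&0\\0&w^2&0&0\\1&0&0&0\end{smallmatrix}\right]$, and the paper rotates it by $\pi/2$ and applies Corollary~\ref{cor:exactly 3 nonzero entries} with Lemmas~\ref{lm:is matchgate} and~\ref{lm: is matchgate hat}.

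The gap is the main case $dw\neq0$, and your gadget computation there is wrong. Here $f^{\pi}=f$, because rotation by $\pi$ fixes the inner pair. So $M(f)NM(f^{\pi})$ is the six-vertex signature just displayed, with inner entries $d^2,w^2$ and an antidiagonal outer block. The rotation that swaps $d$ and $w$ is $f^{\pi/2}$. But $M(f)NM(f^{\pi/2})$ has outer block $\left[\begin{smallmatrix}0&1\\1&0\end{smallmatrix}\right]$ and inner entries $dw,dw$, which is $\M$-transformable by Lemma~\ref{lm: is matchgate hat3}, so it gives no hardness. More generally, a chain of $k$ copies of $f,f^{\pi/2}$ linked by $N$ has a diagonal outer block only for odd $k$. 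Its inner entries are then $d^jw^{k-j}$ and $w^jd^{k-j}$, whose ratio $(d/w)^{2j-k}$ is an odd power of $d/w$, so chains cannot equalize the inner pair.

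Your fallback balancing claim is false. Join two vertices labeled $f$ by four parallel edges $e_1,\dots,e_4$, each through a $\neq_2$. Draw them so the counterclockwise order is $e_4,e_3,e_2,e_1$ at one vertex and $e_1,e_2,e_3,e_4$ at the other, and take these orders as $x_1,\dots,x_4$. The four valid assignments give the value $2+d^2+w^2$, which is not a function of $dw$. Only the source/sink balance behind Lemma~\ref{lm:let a=x} holds in general.

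The missing idea is to break the $d/w$ asymmetry with a binary modification instead of a chain. A self-loop on $x_1,x_2$ of $f$ via $\neq_2$ produces $(0,1,w/d,0)^T$. Modifying $x_4$ of $f$ with it yields $\left[\begin{smallmatrix}1&0&0&0\\0&0&w&0\\0&w&0&0\\0&0&0&w/d\end{smallmatrix}\right]$. The inner pair is now equal, and the unequal outer diagonal is exactly what Lemma~\ref{lm:let a=x} repairs. After normalizing you reach the signature of Lemma~\ref{lm:case d=w} with parameter $\sqrt{dw}$. That is \numP-hard unless $\sqrt{dw}\in\{\pm1,\pm\mathfrak{i}\}$, i.e.\ $dw=\pm1$. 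In particular, your leftover case $dw=\pm\mathfrak{i}$ never arises.
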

\begin{proof}
If $d=w=0$, then $f\in \mathscr{A}$.
If $dw=1$, then $f\in \mathscr{P}$ by Lemma~\ref{lm:is product type?}.
If $dw=-1$, then $f\in \mathscr{M}$ by Lemma~\ref{lm:is matchgate}.
In above three cases, the problem $\plholant{\neq_2}{f}$ is tractable.
In the following, we assume it is not the case that $d=w=0$ or $dw=\pm1$.

Now suppose there is exactly one zero in $(d,w)$.
We can realize 
$M(g) = M(f) \cdot N \cdot M(f) = \left[\begin{smallmatrix} 0 & 0 & 0 & 1 \\ 0 & 0 & d^2 & 0 \\ 0 & w^2 & 0 & 0 \\ 1 & 0 & 0 & 0 \end{smallmatrix} \right]$.
After rotation we have
$M(g^{\frac{\pi}{2}}) = \left[\begin{smallmatrix} 0 & 0 & 0 & 0 \\ 0 & 1 & w^2 & 0 \\ 0 & d^2 & 1 & 0 \\ 0 & 0 & 0 & 0 \end{smallmatrix} \right]$.
By Corollary~\ref{cor:exactly 3 nonzero entries}, $\plholant{\neq_2}{g^\frac{\pi}{2}}$ is \numP-hard unless $g^\frac{\pi}{2} \in \mathscr{M} \cup \widehat{\mathscr{M}}$. If $g^\frac{\pi}{2} \in \mathscr{M}$, then by Lemma~\ref{lm:is matchgate} we know $d^2w^2 = 1$, contradiction.
If $g^\frac{\pi}{2} \in \widehat{\mathscr{M}}$, by Lemma~\ref{lm: is matchgate hat} we know $d^2=w^2$, contradiction. In summary, we can now assume $dw\neq 0$.

By adding a self loop using $\neq_2$ to variables $x_1,x_2$ of $f$, we can get $(0,1,\frac{w}{d},0)^T$ on RHS.
Doing a binary modification using $(0,1,\frac{w}{d},0)^T$ to the variable $x_4$ of $f$, we get the signature $\left[\begin{smallmatrix} 1 & 0 & 0 & 0 \\ 0 & 0 & w & 0 \\ 0 & w & 0 & 0 \\ 0 & 0 & 0 & \frac{w}{d} \end{smallmatrix} \right]$.
By Lemma~\ref{lm:let a=x} and normalization, we have $$\plholant{\neq_2}{\hat{f}} \leq_T^p \plholant{\neq_2}{f}$$ where $M(f) = \left[\begin{smallmatrix} 1 & 0 & 0 & 0 \\ 0 & 0 & \sqrt{dw} & 0 \\ 0 & \sqrt{dw} & 0 & 0 \\ 0 & 0 & 0 & 1 \end{smallmatrix} \right]$. By Lemma~\ref{lm:case d=w} the problem $\plholant{\neq_2}{\hat{f}}$ is \numP-hard and thus the problem $\plholant{\neq_2}{f}$ is \numP-hard unless $\sqrt{dw} = \pm 1$ or $\sqrt{dw} = \pm \mathfrak{i}$, in which cases $dw = \pm 1$ and the problem $\plholant{\neq_2}{f}$ is tractable.
\end{proof}
\section{Case II: At Least One Zero Entry and at Most One Zero Pair}
\label{sec: At least a zero entry and at most a zero pair}
In this section, we consider the case where some entries among ${b, y, c, z, d, w}$ are zero, with the restriction that at most one of the pairs $(b, y)$, $(c, z)$, and $(d, w)$ is a zero pair. 
We show that $\plholant{\neq_2}{f}$ is \numP-hard unless $f\in \mathscr{M}$ (Theorem~\ref{thm: at least one zero at most one zero pair}).
The proof of Theorem~\ref{thm: at least one zero at most one zero pair} is divided into subcases based on the number $\sf N$ of zero entries in $M(f)$ and their positions. 
Note that $\sf N \leq 4$, since there can be at most one zero pair in $f$.
\begin{enumerate}
   \item $\sf N = 1$: 
   There is exactly one zero entry, which may appear in either an outer pair (Lemma~\ref{lm: 1bcdwy1}) or an inner pair (Lemma~\ref{lm: 1bcwzy1}).
\item $\sf N = 2$: 
There are two zero entries, which may lie in two distinct pairs (Lemmas~\ref{lm: 1bcdw1},~\ref{lm: 1bcdz1} and~\ref{lm: 1bcwz1}) or in the same pair (Lemma~\ref{lm:1cdwz1} and~\ref{lm:1bycz1}).
\item $\sf N = 3$: 
There are three zero entries. This may include one zero pair (Lemmas~\ref{lm: 1cdw1}, \ref{lm: 1cwz1}, \ref{lm: 1cdz1}, and~\ref{lm: 1bcz1}), or the zeros may occur in three different pairs (Lemma~\ref{lm: 1bcdy1, by=0}).
\item $\sf N = 4$: 
There are four zero entries. 
In this case, there must be a zero pair, which could be either an outer pair (Lemma~\ref{lm:1bd1}) or an inner pair (Lemma~\ref{lm: 1bc1}).
\end{enumerate}

Although the proof involves a case-by-case analysis, the overall strategy for establishing \#P-hardness follows a common set of ideas. 
In most cases (e.g., Lemmas~\ref{lm: 1bcdwy1}, \ref{lm: 1bcwzy1}), we prove \#P-hardness by doing gadget construction to realize a signature corresponding to the planar six-vertex model and then applying its classification result (Theorem~\ref{thm:planar six-vertex}).
In a few other cases (e.g., Lemmas~\ref{lm: 1bcdz1}, \ref{lm: 1bcz1}), we realize a signature of the eight-vertex model with at least two zero pairs, which falls under Case I and is handled accordingly.

In some cases, gadget construction alone is not sufficient to establish a complexity dichotomy, and we must appeal to polynomial interpolation.
In Lemma~\ref{lm: 1bcdw1}, we interpolate the crossover $\mathcal{S}$. 
This enables a reduction from the eight-vertex model on general graphs to its planar counterpart. We then establish \#P-hardness by invoking the classification result for the eight-vertex model on general graphs (Theorem~\ref{thm: general eight-vertex}).
In Lemmas~\ref{lm: 1bcdwy1}, \ref{lm: 1bcwzy1}, \ref{lm: 1bcdw1}, we also employ interpolation techniques based on M\"obius transformation (Corollary~\ref{cor:get any binary when exactly three nonzero}) to realize arbitrary binary signatures, which enable us to do binary modification.

\subsection{${\sf N}=1$: one zero entry}\label{sec: N=1}

If ${\sf N}=1$, by rotation symmetry, it suffices to consider the following two cases:
(i) $M(f) = \left[\begin{smallmatrix} 1 & 0 & 0 & b \\ 0 & c & d & 0 \\ 0 & w & 0 & 0 \\ y & 0 & 0 & 1\end{smallmatrix}\right]$ with $bcdyw \neq0$, 
(ii) $M(f) = \left[\begin{smallmatrix} 1 & 0 & 0 & b \\ 0 & c & 0 & 0 \\ 0 & w & z & 0 \\ y & 0 & 0 & 1\end{smallmatrix}\right]$ with $bcyzw \neq0$.

\begin{lemma}\label{lm: 1bcdwy1}
    If $M(f) = \left[\begin{smallmatrix} 1 & 0 & 0 & b \\ 0 & c & d & 0 \\ 0 & w & 0 & 0 \\ y & 0 & 0 & 1\end{smallmatrix}\right]$ with $bcdyw \neq0$, then \plholant{\neq_2}{f} is \numP-hard unless $f\in \mathscr{M}$, or more explicitly $1-by=-dw$, in which case $\plholant{\neq_2}{f}$ can be computed in polynomial time.
\end{lemma}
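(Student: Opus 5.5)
The tractability direction is immediate from Lemma~\ref{lm:is matchgate}. $f$ satisfies the Even Parity Condition, with $\det M_{\rm Out}(f)=1-by$ and $\det M_{\rm In}(f)=c\cdot 0-dw=-dw$. So $f\in\M$ exactly when $1-by=-dw$, and then $\plholant{\neq_2}{f}$ is computable by FKT. For the rest of the argument we assume $1-by\neq -dw$ and aim for \numP-hardness.

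\textbf{Step 1: realize every binary weighted disequality.} Loop variables $x_3,x_4$ of $f$ together through $\neq_2$. This forces $x_3\neq x_4$ and gives the binary signature $(0,\,c+d,\,w,\,0)^T$ on the RHS. By rotational symmetry, looping the other adjacent pairs gives similar binaries, e.g.\ $(0,\,c+w,\,d,\,0)^T$ up to orientation, and some $b,y$-combinations from the rotated matrices.

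The inner matrix $\left[\begin{smallmatrix} c& d\\ w& 0\end{smallmatrix}\right]$ has exactly three nonzero entries. This is the situation of Corollary~\ref{cor:get any binary when exactly three nonzero}, which I expect to apply here (the paper flags this lemma as its first use). From it I would obtain every $(0,1,t,0)^T$, with $t\in\C$ arbitrary, on the RHS. If $c+d=0$ or another degenerate value kills the first binary, I will switch to a rotated loop, whose binary has a different nonzero combination.

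\textbf{Step 2: binary modifications to reach a six-vertex signature.} With arbitrary $(0,1,t,0)^T$ available, modify $x_1$ by $t$ and $x_3$ by $s$. The entries transform as
\[ (a,b,c,d,w,z,y,x)\mapsto(1,\ bs,\ c,\ ds,\ wt,\ 0,\ yt,\ ts). \]
Now connect $f$ with such a modified copy $f'$ (possibly rotated) via $(\neq_2)^{\otimes 2}$, which gives $M(f)NM(f')$. Its $(00,00)$ entry is a linear combination such as $x'+b y'$, i.e.\ $ts+byt$ in the simplest arrangement. Choosing $s=-by$ makes this entry vanish while the other entries stay generic. By Lemma~\ref{lm:let a=x}, a signature with $ax=0$ is equivalent to a planar six-vertex model. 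I can then invoke Theorem~\ref{thm:planar six-vertex}, or Corollary~\ref{cor:exactly 3 nonzero entries} if the inner matrix of the gadget still has exactly three nonzero entries, which the zero $z$ of $f$ tends to preserve.

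Alternatively, binary modifications can be used to make the outer part degenerate, or to place $f$ in a redundant form so that Theorem~\ref{thm:redundant} applies, in the style of Lemma~\ref{lm: outer full inner degenerate}. I keep this as a fallback.

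\textbf{Main obstacle.} The hard step is certifying that the six-vertex signature obtained is outside $\mathscr P$, $\mathscr A$, $\M$ and $\widehat\M$, and outside the $d=w=0$ exceptional families. The free parameter $t$ (or $s$) helps here. The conditions for landing in a tractable class are finitely many polynomial equations in $t$. Unless they hold identically, some value of $t$ avoids all of them, and I expect identical vanishing to force $1-by=-dw$. Concretely, I would first try to keep exactly three nonzero inner entries so that only $\M$ and $\widehat\M$ need checking. Those reduce to a determinant identity and to $c=\pm z$, $d=\pm w$ type equalities, which fail for generic $t$. The degenerate parameter values where this breaks down (e.g.\ $c+d=0$, or $by=1$) would be treated separately using the other rotations of $f$.
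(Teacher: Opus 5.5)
The gap is in your ``main obstacle'' step. Your setup matches the paper's: tractability via Lemma~\ref{lm:is matchgate}, arbitrary $(0,1,t,0)^T$ from Corollary~\ref{cor:get any binary when exactly three nonzero}, then binary modification and an $N$-connection with a vanishing corner, reduced to Theorem~\ref{thm:planar six-vertex} via Lemma~\ref{lm:let a=x}. But the $\M$-test for such a gadget does not depend on the free parameters at all. A binary modification multiplies $\det M_{\rm Out}$ and $\det M_{\rm In}$ by the same factor. Since $M(f)$, $N$ and $M(f')$ are block diagonal with respect to $\{00,11\}$ versus $\{01,10\}$, a connection $M(f_1)NM(f_2)$ has $\det M_{\rm Out}=-\det M_{\rm Out}(f_1)\det M_{\rm Out}(f_2)$, and likewise for $M_{\rm In}$.

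For your gadget (two unrotated copies) the resulting six-vertex signature is in $\M$ iff $(1-by)^2=(dw)^2$, whatever $s,t$ are. Two small slips there: its corner is $y'+b\,a'=yt+b$, killed by $t=-b/y$, and your tuple is a modification of $x_4$, not $x_3$. For two copies of the $\frac{\pi}{2}$-rotation (the paper's first gadget, $f^{\frac{\pi}{2}}_t$ connected to $f^{\frac{3\pi}{2}}$) the gadget is in $\M$ iff $(by-dw)^2=1$. Each gadget thus leaves a whole spurious branch, $by+dw=1$ respectively $dw-by=1$, on which $f\notin\M$ but the gadget is a matchgate for every $t$. No generic choice of $t$ gives hardness there, and your expectation that identical vanishing forces $1-by=-dw$ is false.

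What is missing is a second gadget of the other rotation type, with a spurious branch disjoint from the first, plus a separate argument where that gadget degenerates. The paper is left with $dw-by=1$ after its first gadget, so it connects $f^{\pi}_t$ to $f$ and sets $t=-1$. The matchgate condition $d^2w^2=(1-by)^2$ together with $dw=1+by$ forces $by=0$, a contradiction. At $t=-1$, however, the outer entries are $\pm(1-by)$, so the case $by=1$, $dw=2$ is left over.

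For that case the paper sets $t=0$ and normalizes with Lemma~\ref{lm:let a=x} to a signature $h$. It then realizes $(=_2)^{\otimes2}$ on the RHS as $M(h)NM(h^{\pi/2})NM(h)$. Using this, it converts $f$ into a signature of the same shape with $b'y'-d'w'=0$. That signature is neither in $\M$ nor on the spurious branch, so the first gadget shows it is \numP-hard.

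Your remark that $by=1$ would be ``treated separately using the other rotations'' supplies neither ingredient. As written, the proposal proves nothing on the spurious branch or at $by=1$.
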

\begin{proof}
    By Corollary~\ref{cor:get any binary when exactly three nonzero}, we can interpolate a binary signature $(0,1,t,0)^T$ for any $t \in \mathbb{C}$ on RHS. Rotate $f$ by $\frac{\pi}{2}$, $\pi$ and $\frac{3\pi}{2}$ to get signatures $f^\frac{\pi}{2}$, $f^\pi$ and $f^\frac{3\pi}{2}$ respectively, whose signature matrices are $M(f^\frac{\pi}{2}) = \left[\begin{smallmatrix} 
    1 & 0 & 0 & 0 \\ 
    0 & b & w & 0 \\ 
    0 & d & y & 0 \\ 
    c & 0 & 0 & 1
    \end{smallmatrix}\right]$, 
    $M(f^\pi) = \left[\begin{smallmatrix} 
    1 & 0 & 0 & y \\ 
    0 & 0 & d & 0 \\ 
    0 & w & c & 0 \\ 
    b & 0 & 0 & 1
    \end{smallmatrix}\right]$ and 
    $M(f^\frac{3\pi}{2}) = \left[\begin{smallmatrix} 
    1 & 0 & 0 & c \\ 
    0 & y & w & 0 \\ 
    0 & d & b & 0 \\ 
    0 & 0 & 0 & 1
    \end{smallmatrix}\right]$. For a binary signature $(0,1,t,0)^T$, we can do the binary modification to $f^\frac{\pi}{2}$ on variable $x_4$ to get signature $f^{\frac{\pi}{2}}_t$ whose signature matrix is $M(f^{\frac{\pi}{2}}_t) = \left[\begin{smallmatrix} 
    1 & 0 & 0 & 0 \\ 
    0 & b & wt & 0 \\ 
    0 & d & yt & 0 \\ 
    c & 0 & 0 & t\end{smallmatrix}\right]$. 
    We are going to specify a particular $t \in\mathbb{C}$. Connect $f^{\frac{\pi}{2}}_t$ with $f^{\frac{3\pi}{2}}$ via $(\neq_2)^{\otimes2}$ to get a signature $\Tilde{f}$ on RHS whose signature matrix is $M(\Tilde{f}) = \left[\begin{smallmatrix} 
    0 & 0 & 0 & 1 \\ 
    0 & bd+twy & b^2+tw^2 & 0 \\ 
    0 & d^2+ty^2 & bd+twy & 0 \\ 
    t & 0 & 0 & c+ct\end{smallmatrix}\right]$. By Lemma~\ref{lm:let a=x}, the complexity of \plholant{\neq_2}{\Tilde{f}} is the same as the planar six-vertex model where the bottom-right entry is 0, which we denote to be $\Tilde{f}_0$. 
    We can choose $t\in \mathbb{C}$ such that $t\neq 0$, $bd+twy \neq 0$, $b^2+tw^2\neq 0$, $d^2+ty^2\neq 0$ and $t\neq -1$.
    Such $t$ always exists, since every equation only precludes one point in the complex plane (recall $bcdyw \neq 0$). 
    Furthermore, for such $t$, we have $\Tilde{f}_0 \notin \mathscr{P} \cup \mathscr{A} \cup \mathscr{\widehat{M}}$ 
    (a direct calculation shows that $\Tilde{f}_0 \in \widehat{\mathscr{M}}$  only if $t=-1$). 
    Therefore by Theorem~\ref{thm:planar six-vertex}, \plholant{\neq_2}{\Tilde{f}_0} is \numP-hard and therefore \plholant{\neq_2}{f} is \numP-hard unless $\Tilde{f}_0 \in \mathscr{M}$,
    a condition in this case is equivalent to
    $(bd+twy)^2-(b^2+tw^2)(d^2+ty^2)=-t$, which simplifies to  $(by-dw)^2=1$. 
    If $by-dw=1$, then $f \in \mathscr{M}$ and therefore \plholant{\neq_2}{f} can be computed in polynomial time. 
    Thus it remains to consider $dw-by=1$. 

    We now do the binary modification to $f^{\pi}$ on variable $x_4$ to get the signature $f^\pi_t$ on RHS whose signature matrix is $M(f^\pi_t) = \left[\begin{smallmatrix} 
    1 & 0 & 0 & yt \\ 
    0 & 0 & dt & 0 \\ 
    0 & w & ct & 0 \\ 
    b & 0 & 0 & t\end{smallmatrix}\right]$. 
    Again, we are going to specify some $t\in \mathbb{C}$. 
    Connect $f^\pi_t$ with $f$ via $(\neq_2)^{\otimes2}$ to get a signature $\hat{f}$ on RHS whose signature matrix is 
    $M(\hat{f}) = \left[\begin{smallmatrix} 
    y+ty & 0 & 0 & 1+bty \\ 
    0 & cdt & d^2t & 0 \\ 
    0 & c^2t+w^2 & cdt & 0 \\ 
    t+by & 0 & 0 & b+bt\end{smallmatrix}\right]$. 
    Let $t=-1$, we have $M(\hat{f}_{-1}) = \left[\begin{smallmatrix} 
    0 & 0 & 0 & 1-by \\ 
    0 & -cd & -d^2 & 0 \\ 
    0 & -c^2+w^2 & -cd & 0 \\ 
    by-1 & 0 & 0 & 0\end{smallmatrix}\right]$ which defines a planar six-vertex model. 
    Assume for now that $by\neq 1$. 
    Then, by the size of support we know $\hat{f}_{-1} \notin \mathscr{P} \cup \mathscr{A}$ and by a direct calculation we know $\hat{f}_{-1} \notin \widehat{\mathscr{M}}$. 
    Therefore, by Theorem~\ref{thm:planar six-vertex}, the problem is \numP-hard unless $\hat{f}_{-1} \in \mathscr{M}$,
    a condition in this case is equivalent to
    $c^2d^2+d^2(w^2-c^2) = (1-by)^2$, which simplifies to $d^2w^2 = (1-by)^2$.
    Since $dw = 1+by$, this will force $by=0$, a contradiction. 
    Thus it remains to consider the case when $by=1$ and $dw=2$.

    If we set $t=0$ in $\hat{f}$, we get a signature $\hat{f}_{0}$ whose signature matrix is $M(\hat{f}_{0}) = \left[\begin{smallmatrix} 
    y & 0 & 0 & 1 \\ 
    0 & 0 & 0 & 0 \\ 
    0 & w^2 & 0 & 0 \\ 
    1 & 0 & 0 & b\end{smallmatrix}\right]$.
    We are going to use $\hat{f}_{0}$ to interpolate $(=_2)^{\otimes2}$ on RHS which will be useful later. 
    By Lemma~\ref{lm:let a=x}, $\plholant{\neq_2}{\hat{f}_0}\equiv_T \plholant{\neq_2}{h}$, where $h$ has the signature matrix
    $M(h) = \left[\begin{smallmatrix} 
    1 & 0 & 0 & 1 \\ 
    0 & 0 & 0 & 0 \\ 
    0 & w^2 & 0 & 0 \\ 
    1 & 0 & 0 & 1\end{smallmatrix}\right].$
    Rotate this signature by $\frac{\pi}{2}$ we have $h^{\frac{\pi}{2}}$ on RHS whose signature matrix is 
    $M(h^{\frac{\pi}{2}}) = \left[\begin{smallmatrix} 
    1 & 0 & 0 & 0 \\ 
    0 & 1 & w^2 & 0 \\ 
    0 & 0 & 1 & 0 \\ 
    0 & 0 & 0 & 1\end{smallmatrix}\right]$. 
    Compute that $M(h)NM(h^{\frac{\pi}{2}})NM(h) = \left[\begin{smallmatrix} 
    2 & 0 & 0 & 2 \\ 
    0 & 0 & 0 & 0 \\ 
    0 & 0 & 0 & 0 \\ 
    2 & 0 & 0 & 2\end{smallmatrix}\right]$ and therefore after rotation (ignoring the global factor 2) we have $(=_2)^{\otimes2}= \left[\begin{smallmatrix} 
    1 & 0 & 0 & 0 \\ 
    0 & 1 & 0 & 0 \\ 
    0 & 0 & 1 & 0 \\ 
    0 & 0 & 0 & 1\end{smallmatrix}\right]$ on RHS. 
    Connecting $f$ and $(=_2)^{\otimes2}$ via $(\neq_2)^{\otimes2}$, we get the signature $g$ with the signature matrix $M(g)=M(f)N\left[\begin{smallmatrix} 
    1 & 0 & 0 & 0 \\ 
    0 & 1 & 0 & 0 \\ 
    0 & 0 & 1 & 0 \\ 
    0 & 0 & 0 & 1\end{smallmatrix}\right]
    =
    \left[\begin{smallmatrix} 
    b & 0 & 0 & 1 \\ 
    0 & d & c & 0 \\ 
    0 & 0 & w & 0 \\ 
    1 & 0 & 0 & y\end{smallmatrix}\right]$. 
    Rotate $g$ by $\frac{\pi}{2}$ we get the signature $g^{\frac{\pi}{2}}$, connecting $g^{\frac{\pi}{2}}$ and $(=_2)^{\otimes2}$ via $(\neq_2)^{\otimes2}$, we get a signature on RHS with signature matrix $\left[\begin{smallmatrix} 
    w & 0 & 0 & b \\ 
    0 & 0 & 1 & 0 \\ 
    0 & 1 & c & 0 \\ 
    y & 0 & 0 & d\end{smallmatrix}\right]$. Finally, rotate this signature by $\pi$ and then according to Lemma~\ref{lm:let a=x}, we have a planar eight-vertex model whose RHS is given by $\left[\begin{smallmatrix} 
    1 & 0 & 0 & \frac{y}{\sqrt{2}} \\ 
    0 & \frac{c}{\sqrt{2}} & \frac{1}{\sqrt{2}} & 0 \\ 
    0 & \frac{1}{\sqrt{2}} & 0 & 0 \\ 
    \frac{b}{\sqrt{2}} & 0 & 0 & 1
    \end{smallmatrix}\right]$ which is of the form $\left[\begin{smallmatrix} 1 & 0 & 0 & b' \\ 0 & c' & d' & 0 \\ 0 & w' & 0 & 0 \\ y' & 0 & 0 & 1\end{smallmatrix}\right]$. However, the $b',y',d',w'$ satisfy $b'y'-d'w' = 0$ instead of $\pm 1$. Therefore by the previous argument we know it is \numP-hard. This shows the hardness for the case $dw-by=1$ and thus completes the proof.
\end{proof}

\begin{lemma}\label{lm: 1bcwzy1}
    If $M(f) = \left[\begin{smallmatrix} 1 & 0 & 0 & b \\ 0 & c & 0 & 0 \\ 0 & w & z & 0 \\ y & 0 & 0 & 1\end{smallmatrix}\right]$ with $bcwyz \neq0$, then \plholant{\neq_2}{f} is \numP-hard unless $f\in \mathscr{M}$, or more explicitly $1-by=cz$, in which cases $\plholant{\neq_2}{f}$ can be computed in polynomial time.
\end{lemma}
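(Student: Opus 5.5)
The plan mirrors the proof of Lemma~\ref{lm: 1bcwzy1}'s predecessor, Lemma~\ref{lm: 1bcdwy1}. First, tractability: by Lemma~\ref{lm:is matchgate}, $f$ satisfies the Even Parity Condition, $\det M_{\rm Out}(f)=1-by$ and $\det M_{\rm In}(f)=cz-0\cdot w=cz$. So $f\in\mathscr{M}$ iff $1-by=cz$, and in that case FKT applies.

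For hardness, note that the inner matrix $\left[\begin{smallmatrix} c & 0\\ w & z\end{smallmatrix}\right]$ has exactly three nonzero entries. So Corollary~\ref{cor:get any binary when exactly three nonzero} lets us interpolate $(0,1,t,0)^T$ on the RHS for every $t\in\mathbb{C}$, and we may freely apply binary modifications.

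\textbf{Step 1: generic gadget.} Take a rotation of $f$ (say $f^{\frac{\pi}{2}}$, whose outer corner then carries one of $c,z$), modify one variable by $(0,1,t,0)^T$, and connect via $(\neq_2)^{\otimes2}$ with another rotation of $f$. The rotations are chosen, as in Lemma~\ref{lm: 1bcdwy1}, so that the product has a zero in its top-left corner; this is possible because the zero entry $d$ sits in the inner pair. Lemma~\ref{lm:let a=x} then turns the result into a planar six-vertex signature $\tilde f_0$ whose entries are affine-linear in $t$. Choose $t$ outside the finitely many bad points, so that all six-vertex entries are nonzero and $\tilde f_0\notin\widehat{\mathscr{M}}$. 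Then $\tilde f_0\notin\mathscr{P}\cup\mathscr{A}$ by support size, and Theorem~\ref{thm:planar six-vertex} gives \numP-hardness unless $\tilde f_0\in\mathscr{M}$. The matchgate identity is polynomial in $t$ and must hold for all admissible $t$, since $t$ is free. Comparing coefficients should collapse it to a quadratic condition such as $(1-by)^2=(cz)^2$. The branch $1-by=cz$ is the tractable case.

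\textbf{Step 2: the spurious branch $1-by=-cz$.} Here I would use a second gadget: modify $f^{\pi}$ (or another rotation) and connect it with $f$, then specialize to $t=-1$ to kill the corner entries and obtain another six-vertex signature. Its matchgate condition, combined with $1-by=-cz$, should force a contradiction (e.g.\ $by=0$) unless $by=1$, which gives $cz=0$, impossible. Any remaining degenerate subcase (an analogue of $by=1$ there) I would handle by specializing $t=0$. This yields a signature of the shape $\left[\begin{smallmatrix} * & 0&0&1\\ 0&0&0&0\\ 0&*&0&0\\ 1&0&0&*\end{smallmatrix}\right]$, from which, as before, we build $(=_2)^{\otimes2}$ on the RHS via $M(h)NM(h^{\frac{\pi}{2}})NM(h)$. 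Connecting $f$ and its rotations with $(=_2)^{\otimes2}$ permutes entries, and this produces an eight-vertex signature of the same one-zero type (possibly of the form in Lemma~\ref{lm: 1bcdwy1}) whose parameters violate both $\det M_{\rm Out}=\pm\det M_{\rm In}$. Step~1, or Lemma~\ref{lm: 1bcdwy1}, then gives hardness.

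The main obstacle is Step~2: choosing the rotations and special values of $t$ so that the second six-vertex condition is genuinely independent of the first, and making sure the permuted signature obtained via $(=_2)^{\otimes2}$ lands outside the exceptional relation. I would first try the exact analogues of the two gadgets in Lemma~\ref{lm: 1bcdwy1}, with the roles of $(d,w)$ and $(c,z)$ swapped, and verify their $\widehat{\mathscr{M}}$ exclusions by direct computation via Lemma~\ref{lm: is matchgate hat}.
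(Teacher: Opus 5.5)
Your plan has a genuine gap, and it starts in Step~1. The corner zero you rely on does not exist here. In Lemma~\ref{lm: 1bcdwy1} the vanishing entry $z$ belongs to an outer pair. After a $\frac{\pi}{2}$ rotation it lands in the outer matrix, which kills the corner of $M(f^{\frac{\pi}{2}}_t)NM(f^{\frac{3\pi}{2}})$ for every $t$. In the present lemma the zero is $d$, in the inner pair $(d,w)$, and that pair stays in the inner matrix under every rotation. So every outer entry of every rotation lies in $\{1,b,y,c,z\}$ and is nonzero, and the corners of such two-copy products do not vanish identically in $t$.

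Concretely, your gadget $M(f^{\frac{\pi}{2}}_t)NM(f^{\frac{3\pi}{2}})$ has diagonal corners $z(t+1)$ and $c(t+1)$. Your stated reason (``possible because $d$ sits in the inner pair'') is therefore backwards. Lemma~\ref{lm:let a=x} gives a six-vertex signature only at the isolated value $t=-1$, so the ``identity must hold for all admissible $t$, compare coefficients'' argument is unavailable. At $t=-1$ the matchgate condition is $(by)^2=(1-cz)^2$, not $(1-by)^2=(cz)^2$.

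This changes the whole case structure, which your Step~2 does not anticipate. At $t=-1$ the anti-diagonal corners are $\pm(1-cz)$, so the gadget degenerates when $cz=1$, and that case must be settled first as a base case. The paper takes $t=0$; after Lemma~\ref{lm:let a=x} the outer matrix has all entries equal to $1$. From this it builds $(=_2)^{\otimes 2}$ on the RHS via $M(h)NM(h^{\frac{\pi}{2}})NM(h)$. Composing $f$ with it and rotating gives a signature of the type in Lemma~\ref{lm: 1bcdwy1}, whose tractability would force $by=0$.

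For $cz\neq 1$, the tractable branch is $by=1-cz$ and the spurious branch is $by=cz-1$, not $1-by=-cz$. The second gadget $M(f^{\pi}_t)NM(f)$ at $t=-1$ then excludes everything except $by=1$. But there $cz=2$, which is no contradiction, contrary to what you expect. Closing this residual case needs two steps your plan lacks:
\begin{itemize}
\item If $w^2=c^2$, the $t=-1$ signature has support size $3$ and is hard.
\item Otherwise, take $t=-w^2/c^2$ in the second gadget, then rotate by $\frac{\pi}{2}$ and apply Lemma~\ref{lm:let a=x}. This yields a signature of the same one-zero-in-the-inner-pair type with $c'z'=1$, which reduces back to the base case.
\end{itemize}
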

\begin{proof}
    By Corollary~\ref{cor:get any binary when exactly three nonzero}, we can interpolate the binary signature $(0,1,t,0)^T$ for any $t \in \mathbb{C}$ on RHS. Rotate $f$ by $\frac{\pi}{2}$, $\pi$ and $\frac{3\pi}{2}$ to get signatures $f^{\frac{\pi}{2}}$, $f^\pi$ and $f^{\frac{3\pi}{2}}$ respectively, whose signature matrices are $M(f^{\frac{\pi}{2}}) = \left[\begin{smallmatrix} 
    1 & 0 & 0 & z \\ 
    0 & b & w & 0 \\ 
    0 & 0 & y & 0 \\ 
    c & 0 & 0 & 1\end{smallmatrix}\right]$, 
    $M(f^\pi) = \left[\begin{smallmatrix} 
    1 & 0 & 0 & y \\ 
    0 & z & 0 & 0 \\ 
    0 & w & c & 0 \\ 
    b & 0 & 0 & 1
    \end{smallmatrix}\right]$ and 
    $M(f^{\frac{3\pi}{2}}) = \left[\begin{smallmatrix} 
    1 & 0 & 0 & c \\ 
    0 & y & w & 0 \\ 
    0 & 0 & b & 0 \\ 
    z & 0 & 0 & 1
    \end{smallmatrix}\right]$.
    For a binary signature $(0,1,t,0)^T$, we can do the binary modification to $f^{\frac{\pi}{2}}$ on variable $x_4$ to get signature $f^\frac{\pi}{2}_t$ whose signature matrix is $M(f^\frac{\pi}{2}_t) = \left[\begin{smallmatrix} 
    1 & 0 & 0 & zt \\ 
    0 & b & wt & 0 \\ 
    0 & 0 & yt & 0 \\ 
    c & 0 & 0 & t\end{smallmatrix}\right]$. 
    Connect $f^\frac{\pi}{2}_t$ with $f^\frac{3\pi}{2}$ via $(\neq_2)^{\otimes2}$ we get a signature $\Tilde{f}$ on RHS whose signature matrix is $M(\Tilde{f}) 
    = M(f^\frac{\pi}{2}_t) N M(f^\frac{3\pi}{2})= \left[\begin{smallmatrix} 
    z(t+1) & 0 & 0 & 1+czt\\ 
    0 & twy & b^2+w^2t & 0 \\ 
    0 & ty^2 & twy & 0 \\ 
    t+cz & 0 & 0 & c(t+1)
    \end{smallmatrix}\right]$.
    We first consider the case when $cz=1$. 
    Set $t=0$ in $\Tilde{f}$ and apply Lemma~\ref{lm:let a=x} to get the signature $g$ on RHS whose signature matrix is  $M(\tilde{f}_0)=\left[\begin{smallmatrix} 
    1 & 0 & 0 & 1\\ 
    0 & 0 & b^2 & 0 \\ 
    0 & 0 & 0 & 0 \\ 
    1 & 0 & 0 & 1\end{smallmatrix}\right]$. 
    Connect $\tilde{f}_0,\tilde{f}_0^{\frac{\pi}{2}}$ and $\tilde{f}_0$ via $(\neq_2)^{\otimes 2}$, 
    we get a signature with the signature matrix 
    $ 
    M(\tilde{f}_0)NM(\tilde{f}_0^\frac{\pi}{2})NM(\tilde{f}_0)=
    \left[\begin{smallmatrix} 
    2 & 0 & 0 & 2\\ 
    0 & 0 & 0 & 0 \\ 
    0 & 0 & 0 & 0 \\ 
    2 & 0 & 0 & 2\end{smallmatrix}\right]$, 
    which is $(=_2)^{\otimes2}$ ignoring the factor 2 on RHS. 
    Now, if we connect $f$ and $(=_2)^{\otimes2}$ via $(\neq_2)^{\otimes2}$ and rotate by $\pi$, we will get a signature on RHS with the signature matrix $\left[\begin{smallmatrix}
    b & 0 & 0 & 1 \\
    0 & w & c & 0 \\
    0 & z & 0 & 0 \\
    1 & 0 & 0 & y\end{smallmatrix}\right]$. Under Lemma~\ref{lm:let a=x} this signature is of the form in Lemma~\ref{lm: 1bcdwy1} and it is \numP-hard unless $by-1=-cz$, which is impossible since $cz = 1$ and $by \neq 0$. We conclude our current progress as that $\plholant{\neq_2}{f}$ is \numP-hard if $cz = 1$.

    We now assume $cz \neq 1$. 
    Set $t = -1$ in $\Tilde{f}$, we get the signature  $\Tilde{f}_{-1}$ with the signature matrix 
    $M(\tilde{f}_{-1})=
    \left[\begin{smallmatrix} 
    0 & 0 & 0 & 1-cz\\ 
    0 & -wy & b^2-w^2 & 0 \\ 
    0 & -y^2 & -wy & 0 \\ 
    cz-1 & 0 & 0 & 0\end{smallmatrix}\right]$. 
    Since $cz \neq 1$, its support size is either 5 or 6.
    Then we have $\Tilde{f}_{-1} \notin \mathscr{A}$ and $\Tilde{f}_{-1} \notin \mathscr{P}$.
    A direct computation shows $\Tilde{f}_{-1} \notin \widehat{\mathscr{M}}$. 
    Also, by Lemma~\ref{lm:is matchgate}, $\Tilde{f}_{-1} \in \mathscr{M}$ if and only if $by = \pm(1-cz)$. 
    Therefore, by Theorem~\ref{thm:planar six-vertex}, $\plholant{\neq_2}{f}$ is \numP-hard unless $by = \pm (1-cz)$. Note that if $by = 1-cz$, then $f \in \mathscr{M}$ and thus $\plholant{\neq_2}{f}$ can be computed in polynomial time.

    It remains to consider the case $by = cz -1$. 
    If we do the binary modification to $f^\pi$ on variable $x_4$ to get signature $f^\pi_t$ on RHS whose signature matrix is $M(f^\pi_t) = \left[\begin{smallmatrix} 
    1 & 0 & 0 & yt \\ 
    0 & z & 0 & 0 \\ 
    0 & w & ct & 0 \\ 
    b & 0 & 0 & t\end{smallmatrix}\right]$. 
    Connect $f^\pi_t$ with $f$ via $(\neq_2)^{\otimes2}$ we get a signature $\hat{f}$ on RHS whose signature matrix is $M(\hat{f}) = M(f^\pi_t)NM(f) = \left[\begin{smallmatrix} 
    y(t+1) & 0 & 0 & 1+byt\\ 
    0 & wz & z^2 & 0 \\ 
    0 & c^2t+w^2 & wz & 0 \\ 
    t+by & 0 & 0 & b(t+1)\end{smallmatrix}\right]$. 
    If we set $t = -1$, then we get $M(\hat{f}_{-1}) = \left[\begin{smallmatrix} 
    0 & 0 & 0 & 1-by\\ 
    0 & wz & z^2 & 0 \\ 
    0 & w^2-c^2 & wz & 0 \\ 
    by-1 & 0 & 0 & 0\end{smallmatrix}\right]$ on RHS. 
    If $by\neq 1$, then the support of $\hat{f}_{-1}$ has size 5 or 6. Therefore $\hat{f}_{-1} \notin \mathscr{A}$ and $\hat{f}_{-1} \notin \mathscr{P}$. 
    A direct calculation shows $\hat{f}_{-1} \notin \widehat{\mathscr{M}}$. 
    Also, by Lemma~\ref{lm:is matchgate}, $\hat{f}_{-1} \in \mathscr{M}$ if and only if $(1-by)^2=c^2z^2$. 
    Plug into $by = cz -1$ will yield $by=0$ which is a contradiction. 
    Therefore, when $cz \neq 1$, by Theorem~\ref{thm:planar six-vertex} we know $\plholant{\neq_2}{f}$ is \numP-hard unless $by=1$. 
    If $by=1$, combined with $by=cz-1$ we have $cz = 2$. 
    Moreover, if $by=1$ and $w^2=c^2$, then $M(\hat{f}_{-1}) = \left[\begin{smallmatrix} 
    0 & 0 & 0 & 0\\ 
    0 & wz & z^2 & 0 \\ 
    0 & 0 & wz & 0 \\ 
    0 & 0 & 0 & 0\end{smallmatrix}\right]$ whose support size is 3. Thus it is not in $\mathscr{A}$ nor in $\mathscr{P}$. By Lemma~\ref{lm:is matchgate} and Lemma~\ref{lm: is matchgate hat}, it is not in $\mathscr{M}$ nor in $\widehat{\mathscr{M}}$. Therefore in this case by Theorem~\ref{thm:planar six-vertex} the problem $\plholant{\neq_2}{f}$ is \numP-hard.
    We summarize the current progress: $\plholant{\neq_2}{f}$ is \numP-hard unless $f\in \mathscr{M}$ or $(by=1) \wedge (cz=2)\wedge(w^2\neq c^2)$.
    
    We now set $t = -\frac{w^2}{c^2}$ in $\hat{f}$ (note that since $w^2 \neq c^2$, we have $t \neq -1$) and rotate it by $\frac{\pi}{2}$.
    Plug into $by=1$ and apply Lemma~\ref{lm:let a=x}, the resulting signature matrix is $\left[\begin{smallmatrix}
        1 & 0 & 0 & \frac{wz}{1+t}\\
        0 & 1 & 0 & 0\\
        0 & \frac{z^2}{1+t} & 1 & 0\\
        \frac{wz}{1+z} & 0 & 0 & 1
    \end{smallmatrix}\right]$. 
    This signature is of the form $\left[\begin{smallmatrix} 1 & 0 & 0 & b' \\ 0 & c' & 0 & 0 \\ 0 & w' & z' & 0 \\ y' & 0 & 0 & 1\end{smallmatrix}\right]$ with $c'z' = 1 $ and therefore it is \numP-hard by the previous argument.
    This completes our proof.
\end{proof}

\subsection{${\sf N}=2$: two zero entries}\label{sec: N=2}

For this case, there are two zero entries within $\{(b,y),(c,z),(d,w)\}$. 
If the two zero entries are in different pairs, by the rotational symmetry it suffices to consider the following three cases: 
(i) $M(f) = \left[\begin{smallmatrix} 1 & 0 & 0 & b \\ 0 & c & d & 0 \\ 0 & w & 0 & 0 \\ 0 & 0 & 0 & 1\end{smallmatrix}\right]$ with $bcdw \neq0$, 
(ii) $M(f) = \left[\begin{smallmatrix} 1 & 0 & 0 & b \\ 0 & c & d & 0 \\ 0 & 0 & z & 0 \\ 0 & 0 & 0 & 1\end{smallmatrix}\right]$ with $bcdz \neq 0$, or 
(iii) $M(f) = \left[\begin{smallmatrix} 1 & 0 & 0 & b \\ 0 & c & 0 & 0 \\ 0 & w & z & 0 \\ 0 & 0 & 0 & 1\end{smallmatrix}\right]$ with $bczw \neq 0$. 
If the two zero entries are in a single pair, by the rotation symmetry it suffices to consider the following two cases: (iv) $M(f) = \left[\begin{smallmatrix} 1 & 0 & 0 & 0 \\ 0 & c & d & 0 \\ 0 & w & z & 0 \\ 0 & 0 & 0 & 1\end{smallmatrix}\right]$ with $cdzw \neq 0$
or (v) $M(f) = \left[\begin{smallmatrix} 1 & 0 & 0 & b \\ 0 & c & 0 & 0 \\ 0 & 0 & z & 0 \\ y & 0 & 0 & 1\end{smallmatrix}\right]$ with $bcyz \neq 0$.

\subsubsection{Two zero entries in different pairs}
The following lemma is proved in~\cite{CaiF17}.
\begin{lemma}\label{lm:general eight vertex model N=2 differet pairs}
    If $M(f) = \left[\begin{smallmatrix} 1 & 0 & 0 & b \\ 0 & c & d & 0 \\ 0 & w & 0 & 0 \\ 0 & 0 & 0 & 1\end{smallmatrix}\right]$ with $bcdw \neq0$, then \holant{\neq_2}{f} is \numP-hard.
\end{lemma}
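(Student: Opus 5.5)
The plan is to apply Theorem~\ref{thm: general eight-vertex} directly, since here $ax = 1 \neq 0$. It then suffices to show that $f$ is neither $\mathscr{P}$-transformable, nor $\mathscr{A}$-transformable, nor $\mathscr{L}$-transformable.

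The support of $f$ consists of the six inputs $0000, 0011, 0110, 0101, 1010, 1111$: the entries $1, b, c, d, w, 1$ are nonzero, while $y = z = 0$. A holographic transformation $T$ must satisfy the condition on the left-hand side that $(\neq_2)T^{\otimes 2}$ lies in the class. For $\mathscr{P}$ and $\mathscr{A}$, I would use the standard characterization from \cite{CaiF17}: the transformations that keep $\neq_2$ in the class are, up to scalars, diagonal or anti-diagonal matrices, or lie in a small finite family of exceptions such as $HZ$-type matrices.

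\textbf{Step 1: diagonal and anti-diagonal $T$.} A diagonal or anti-diagonal $T$ preserves the support up to global complement. The transformed signature therefore still has support of size $6$, which is not a power of $2$. By Remark~\ref{rmk: affine or product imply support power of 2}, it is not in $\mathscr{P}\cup\mathscr{A}$. The same support argument rules out $\mathscr{L}$, because local affine signatures also have affine support.

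\textbf{Step 2: the remaining finitely many $T$.} For the finitely many non-diagonal exceptional $T$ (those mapping $\neq_2$ to $=_2$ or to an affine binary signature), I would compute $T^{\otimes 4}f$ explicitly. The goal is to show that its support is again not affine, or that its entries violate the product or affine conditions. The asymmetry of $f$ should make this work: in the pairs $(b,y)$ and $(c,z)$ exactly one entry is zero, while $(d,w)$ is a nonzero pair. Under $Z$-type transformations, $f$ becomes a signature whose entries mix $b$, $c$, $d$ and $w$. One then checks that no choice of $bcdw \neq 0$ yields a signature in $\mathscr{P}$ or $\mathscr{A}$, since doing so would require $y = b$ and $z = c$ up to signs or powers of $\mathfrak{i}$, which is impossible because $y = z = 0$ while $bc \neq 0$.

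\textbf{Main obstacle.} Step 2 is where I expect the difficulty. It requires an explicit enumeration of the admissible $T$ and a verification of each case, which is tedious.

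\textbf{Alternative route if Step 2 stalls.} I would fall back on gadgets. Connecting $f$ with its rotation $f^{\pi}$ via $N$, or doing a self-loop, should produce a binary signature proportional to $(0,1,t,0)^T$ with suitable $t$. Binary modification, together with the Möbius-based construction of Corollary~\ref{cor:get any binary when exactly three nonzero}, would then give access to arbitrary $(0,1,t,0)^T$ in the non-planar setting. With these, one can kill the outer entry $b$ to reach a six-vertex-like signature with exactly three nonzero inner entries. That signature is \#P-hard in general by Corollary~\ref{cor:exactly 3 nonzero entries}, which yields the claim.
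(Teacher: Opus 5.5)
The paper gives no proof of this lemma; it quotes it from \cite{CaiF17}. Your main route is sound in principle: apply Theorem~\ref{thm: general eight-vertex} and exclude $\mathscr{P}$-, $\mathscr{A}$- and $\mathscr{L}$-transformability. Step~1 is fine. But Step~2 carries all the content of the lemma, and it is not carried out. Its premise is also wrong, since the admissible $T$ do not reduce to diagonal or anti-diagonal matrices plus finitely many exceptions. For every $\mu\neq 0$, the matrix $T_\mu=\left[\begin{smallmatrix}1&1\\ \mu&-\mu\end{smallmatrix}\right]=\mathrm{diag}(1,\mu)\left[\begin{smallmatrix}1&1\\1&-1\end{smallmatrix}\right]$ gives $(\neq_2)T_\mu^{\otimes 2}=2\mu\,(1,0,0,-1)$. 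This binary signature lies in $\mathscr{P}$, in $\mathscr{A}$ and in $\mathscr{L}$. The family contains both $\sqrt{2}H$ and $\sqrt{2}Z$, and $T_\mu^{-1}T_\nu$ is diagonal or anti-diagonal only when $\nu=\pm\mu$. So you would have to show $T_\mu^{-1}f\notin\mathscr{P}\cup\mathscr{A}\cup\mathscr{L}$ for a continuum of $\mu$, i.e.\ for $H$ applied to every rescaling $\mathrm{diag}(1,\mu^{-1})^{\otimes 4}f$. You also never treat $\mathscr{L}$ for non-diagonal $T$. The sentence ``doing so would require $y=b$ and $z=c$ up to signs or powers of $\mathfrak{i}$'' is asserted, not derived. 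In fact $T_\mu^{-1}g$ is invariant under complementing all four inputs for every even-parity $g$. So the asymmetry of $f$ is not, by itself, an obstruction that survives the transformation.

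The fallback route does not work as written either. Connecting $f$ and $f^{\pi}$ via $N$ gives a $4$-ary signature, not a binary one. No binary modification by $(0,1,u,0)^T$ removes $b=f_{0011}$ alone: $x_3$ is also $1$ at $c=f_{0110}$ and $w=f_{1010}$, and $x_4$ is also $1$ at $d=f_{0101}$. Moreover, a six-vertex model needs $ax=0$, not $b=0$. Every single modification by $(0,1,0,0)^T$ kills at least one of $c,d,w$, so Corollary~\ref{cor:exactly 3 nonzero entries} is never reached this way. However, the gadget you mention already finishes the proof:
$M(f)NM(f^{\pi})=\left[\begin{smallmatrix} 2b & 0 & 0 & 1\\ 0 & cw & c^2+d^2 & 0\\ 0 & w^2 & cw & 0\\ 1 & 0 & 0 & 0\end{smallmatrix}\right]$.
\begin{itemize}
\item Its entries at $0000$ and $1111$ have product $0$. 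By the first part of Theorem~\ref{thm: general eight-vertex} (equivalently Lemma~\ref{lm:let a=x}, whose proof never uses planarity), it is equivalent to the six-vertex model with those two entries set to $0$.
\item The support of that six-vertex model has size $5$ or $6$, not a power of $2$, so it lies in neither $\mathscr{P}$ nor $\mathscr{A}$.
\item Its pair $(b,y)=(1,1)$ contains no zero.
\end{itemize}
Hence it is \numP-hard, and so is $\holant{\neq_2}{f}$. This is exactly the first step of the paper's planar Lemma~\ref{lm: 1bcdw1}. Without planarity, $\mathscr{M}$ and $\widehat{\mathscr{M}}$ are no longer exceptions in the six-vertex dichotomy, so nothing further is needed.
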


We now show that the only additional tractable class is given by matchgates.
\begin{lemma}\label{lm: 1bcdw1}
    If $M(f) = \left[\begin{smallmatrix} 1 & 0 & 0 & b \\ 0 & c & d & 0 \\ 0 & w & 0 & 0 \\ 0 & 0 & 0 & 1\end{smallmatrix}\right]$ with $bcdw \neq0$, then \plholant{\neq_2}{f} is \numP-hard unless $f\in \mathscr{M}$, or more explicitly $dw=-1$, in which cases \plholant{\neq_2}{f} is computable in polynomial time.
\end{lemma}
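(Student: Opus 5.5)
Here $M(f) = \left[\begin{smallmatrix} 1 & 0 & 0 & b \\ 0 & c & d & 0 \\ 0 & w & 0 & 0 \\ 0 & 0 & 0 & 1\end{smallmatrix}\right]$ with $bcdw\neq0$, and $y=z=0$. The plan splits on whether $f\in\M$.

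\textbf{Tractable side.} By Lemma~\ref{lm:is matchgate}, $f\in\M$ iff $\det M_{\rm Out}(f)=\det M_{\rm In}(f)$. Here $\det M_{\rm Out}(f) = 1\cdot 1 - b\cdot 0 = 1$ and $\det M_{\rm In}(f) = c\cdot 0 - dw = -dw$. So $f\in\M$ exactly when $dw=-1$, and then $\plholant{\neq_2}{f}$ is computable in polynomial time by FKT.

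\textbf{Hardness side.} Assume $dw\neq -1$. By Lemma~\ref{lm:general eight vertex model N=2 differet pairs}, $\holant{\neq_2}{f}$ is \numP-hard. By Lemma~\ref{lm: swap gate} and the remark after Lemma~\ref{lm:disequality crossover}, it therefore suffices to realize, or interpolate, the crossover $\mathcal{S}$ on the RHS together with $(=_2)^{\otimes 2}$ on the LHS. Alternatively, it suffices to obtain $\mathcal{S}'$ on the RHS.

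\textbf{Step 1: binary modifications.} By Corollary~\ref{cor:get any binary when exactly three nonzero}, applied to the loop of $f$ (its inner matrix has exactly three nonzero entries), we can interpolate any weighted disequality $(0,1,t,0)^T$ on the RHS. Binary modifications by these rescale rows and columns of $M(f)$. Using them, together with rotations and Lemma~\ref{lm:let a=x}, I would try to normalize $f$ to the shape of Lemma~\ref{lm: interpolate S for N=2}:
\[
\left[\begin{smallmatrix} 1 & 0 & 0 & k\\ 0 & k & 1 & 0\\ 0 & 1 & 0 & 0\\ 0 & 0 & 0 & 1\end{smallmatrix}\right], \qquad k\neq 0.
\]
That lemma then interpolates $\mathcal{S}$ directly, via the chain $M(f_{2s+1})$ whose off-diagonal entry grows linearly in $s$.

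\textbf{Step 2: the normalization.} Modifying $x_1$ and $x_3$ by parameters $t_1,t_3$ (and $x_2,x_4$ similarly) multiplies the entries $b,c,d,w$ and the corner $1$'s by various products of the $t_i$. We need $d=w$ (up to the $a=x$ renormalization) and $b=c$. The free parameters should allow this. If the exact form cannot be reached, I would instead prove a variant of Lemma~\ref{lm: interpolate S for N=2} for general $b,c$: the chain $(M(f^{\pi/2})NM(f)N)^s M(f^{\pi/2})$ is unipotent-like, so its entries are polynomial in $s$, and a Vandermonde argument in $s$ should still recover $\mathcal{S}$.

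\textbf{Step 3: the LHS gadget.} We also need $(=_2)^{\otimes 2}$ on the LHS for Lemma~\ref{lm: swap gate}. Since $\mathcal{S}$ has been obtained on the RHS, I would bypass this: connect $\mathcal{S}$ with the double disequality $N$ to get $\mathcal{S}'$, then apply Lemma~\ref{lm:disequality crossover} directly.

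\textbf{Fallback and main obstacle.} If interpolation of the crossover fails in a degenerate subcase, I would instead use the arbitrary binaries to create new zeros. For example, connect a modified rotation of $f$ with $f$ via $(\neq_2)^{\otimes2}$ and choose $t$ to zero out the corner entry. This lands in the planar six-vertex model (Theorem~\ref{thm:planar six-vertex}) or Case I, and gives hardness unless $dw=-1$. I expect the main obstacle to be Step 2: tracking exactly which entries the modifications can equalize, and checking nondegeneracy ($k\neq0$) of the interpolation under the constraint $dw\neq-1$.
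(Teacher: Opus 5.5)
Your tractable side is correct, but the hardness plan has two genuine gaps.

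\textbf{The normalization in Step 2 only works when $dw=1$.} Modifying $x_1,\dots,x_4$ by $(0,1,t_i,0)^T$ sends $d\mapsto dt_2t_4$, $w\mapsto wt_1t_3$, and the corner product $ax\mapsto ax\,t_1t_2t_3t_4$. Lemma~\ref{lm:let a=x} preserves $ax$. So $dw/(ax)$ is invariant, and it equals $1$ in the target form of Lemma~\ref{lm: interpolate S for N=2}. Your proposed variant does not get around this. For $dw\neq1$ the chain $M(f^{\pi/2})NM(f)N$ is not unipotent: its inner block has eigenvalue $dw$, while its outer block has eigenvalue $1$.

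The missing idea is that $dw\neq\pm1$ is disposed of by a different, simpler gadget, which is the paper's first step. The product $M(f)NM(f^{\pi})=\left[\begin{smallmatrix}2b&0&0&1\\0&cw&c^2+d^2&0\\0&w^2&cw&0\\1&0&0&0\end{smallmatrix}\right]$ has a zero corner. By Lemma~\ref{lm:let a=x} it is therefore a planar six-vertex signature with non-affine support of size $5$ or $6$. It is not in $\widehat{\M}$, and it is in $\M$ iff $d^2w^2=1$. Theorem~\ref{thm:planar six-vertex} then gives hardness unless $dw=\pm1$.

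Your fallback claims such a gadget gives hardness unless $dw=-1$, and that is false. The ratio $\det M_{\rm In}/\det M_{\rm Out}$ is multiplicative under $(A,B)\mapsto ANB$. It is unchanged by binary modifications and by Lemma~\ref{lm:let a=x}, and it equals $-dw$ for every rotation of $f$. Hence, for $dw=1$, every gadget built from two copies of $f$ is a matchgate. So the six-vertex route handles exactly $dw\neq\pm1$, and crossover interpolation is needed exactly for $dw=1$. That is the reverse of how you assign main route and fallback. For $dw=1$ your Step 2 does succeed: the paper modifies $x_4$ by $(0,1,c/b,0)^T$ and $x_1$ by $(0,1,cd^2/b,0)^T$ and obtains the form with $k=b/d$.

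\textbf{The bypass in Step 3 is not a legal gadget.} In $\plholant{\neq_2}{\cdot}$, hanging $\neq_2$ on two edges of the right-hand signature $\mathcal{S}$ leaves dangling edges that need right-hand neighbours. So $M(\mathcal{S})N$ is not available as a right-hand signature. To close it up you would need $M(\mathcal{S})NM(h)$ with $M(h)=I_4$, i.e.\ $(=_2)^{\otimes2}$ on the right, which you have not constructed.

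The paper instead obtains $(=_2)^{\otimes2}$ on the \emph{left} by pinning, in three steps:
\begin{itemize}
\item Using Corollary~\ref{cor:get any binary when exactly three nonzero} again, modify $g$ by $(0,1,1/k,0)^T$ on $x_3$ and by $(0,1,0,0)^T$ on $x_1,x_4$. This gives $\left[\begin{smallmatrix}1&0\\0&0\end{smallmatrix}\right]\otimes I_2$ on the right.
\item Wrapping all four of its edges in $\neq_2$ gives $\Delta_1^{\otimes2}\otimes(=_2)$ on the left.
\item Take two copies of this and feed the four $\Delta_1$'s into one copy of $g$, which contributes $g_{1111}=1$. This leaves $(=_2)^{\otimes2}$ on the left.
\end{itemize}
Lemma~\ref{lm: swap gate} together with Lemma~\ref{lm:general eight vertex model N=2 differet pairs}, applied to $g$, then finishes the proof.
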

\begin{proof}
    Rotate $f$ by $\pi$, we have $f^\pi$ whose signature matrix is $M(f^\pi) = \left[\begin{smallmatrix}
        1 & 0 & 0 & 0\\
        0 & 0 & d & 0\\
        0 & w & c & 0\\
        b & 0 & 0 & 1
    \end{smallmatrix} \right]$. Connect $f$ and $f^\pi$ by $N$, we have a signature on RHS whose signature matrix is 
    $$
     M(f) N M(f^\pi) = \left[\begin{smallmatrix}
        2b & 0 & 0 & 1\\
        0 & cw & c^2+d^2 & 0\\
        0 & w^2 & cw & 0\\
        1 & 0 & 0 & 0
    \end{smallmatrix} \right].
    $$
    By Lemma~\ref{lm:let a=x}, we then have $\plholant{\neq_2}{\Tilde{f}} \leq^p_T \plholant{\neq_2}{f}$ where the signature matrix of $\Tilde{f}$ is
    $
     M(\Tilde{f}) = \left[\begin{smallmatrix}
        0 & 0 & 0 & 1\\
        0 & cw & c^2+d^2 & 0\\
        0 & w^2 & cw & 0\\
        1 & 0 & 0 & 0
    \end{smallmatrix} \right].
    $ Since $bcdw \neq 0$, the support size of $\Tilde{f}$ is either 5 or 6, and thus $\Tilde{f}$ does not have affine support. 
    By Theorem~\ref{thm:planar six-vertex}, $\plholant{\neq_2}{\Tilde{f}}$ is \numP-hard and thus $\plholant{\neq_2}{f}$ is \numP-hard unless $\Tilde{f} \in \widehat{\mathscr{M}}$ or $\Tilde{f} \in \mathscr{M}$. 
    If $\Tilde{f}\in \widehat{\mathscr{M}}$, then we have $\hat{f}=H^{\otimes4}\tilde{f}\in \M$, where $\hat{f}$ has the signature matrix
    $$
    M(\hat{f}) =  H^{\otimes 2} M(\Tilde{f}) H^{\otimes 2} = 
    \frac{1}{4}
    \left[\begin{smallmatrix}
        2 + d^2 + (c+w)^2 & d^2 + c^2 - w^2 & -d^2 -c^2 + w^2 & 2-d^2-(c+w)^2\\
        -d^2-c^2+w^2 & -2-d^2 -(c-w)^2 &  -2+d^2+(c-w)^2 & d^2 + c^2 - w^2 \\
        d^2+c^2-w^2 & -2+d^2+(c-w)^2 & -2-d^2-(c-w)^2 & -d^2-c^2+w^2 \\
        2-d^2-(c+w)^2 & -d^2 -c^2 + w^2 & d^2+c^2-w^2 & 2+d^2+(c+w)^2
    \end{smallmatrix} \right].
    $$
    If $\hat{f}$ satisfies the Even Parity Condition, then $$(2 + d^2 + (c+w)^2)^2-(2-d^2-(c+w)^2)^2 = (-2-d^2 -(c-w)^2)^2-(-2+d^2+(c-w)^2)^2$$ which implies $cw=0$, a contradiction. 
    If $\hat{f}$ satisfies the Odd Parity Condition, then 
    $$
    2 + d^2 + (c+w)^2 = 2-d^2 -(c+w)^2 = 0
    $$ 
    which implies $4=0$, a contradiction. 
    Thus $\Tilde{f} \notin \widehat{\mathscr{M}}$. 
    If $\Tilde{f} \in \mathscr{M}$, then by Lemma~\ref{lm:is matchgate} we have $dw = \pm 1$. 
    If $dw = -1$, by Lemma~\ref{lm:is matchgate} again we have $f \in \mathscr{M}$. From now on we assume $dw=1$.
    Then 
    $M(f) = \left[\begin{smallmatrix} 1 & 0 & 0 & b \\
    0 & c & d & 0 \\ 
    0 & \frac{1}{d} & 0 & 0 \\ 
    0 & 0 & 0 &1
    \end{smallmatrix}\right].$
    
    By Corollary~\ref{cor:get any binary when exactly three nonzero}, we can realize $(0,1,u,0)^T$ for any $u \in \mathbb{C}$. 
    If we first do a binary modification of $(0,1,\frac{c}{b},0)^T$ on variable $x_4$ of $f$ and then do a binary modification of $(0,1,\frac{cd^2}{b},0)^T$ on variable $x_1$ of $f$, we get the signature $\left[\begin{smallmatrix}
        1 & 0 & 0 & c\\
        0 & c & \frac{cd}{b} & 0\\
        0 & \frac{cd}{b} & 0 & 0\\
        0 & 0 & 0 & \frac{c^2d^2}{b^2}
    \end{smallmatrix}\right]$ on RHS. 
    By Lemma~\ref{lm:let a=x} and then normalize by dividing by $\frac{cd}{b}$, we have $\plholant{\neq_2}{g} \leq^p_T \plholant{\neq_2}{f}$ where $M(g) = \left[\begin{smallmatrix}
        1 & 0 & 0 & k\\
        0 & k & 1 & 0\\
        0 & 1 & 0 & 0\\
        0 & 0 & 0 & 1
    \end{smallmatrix} \right]$ and $k = \frac{b}{d}\neq 0$. 
    We now finish the proof by showing $\plholant{\neq_2}{g}$ is \numP-hard.

    By Lemma~\ref{lm: interpolate S for N=2}, we have $\plholant{\neq_2}{g,\mathcal{S}}\le_T^p \plholant{\neq_2}{g}$.
    By Corollary~\ref{cor:get any binary when exactly three nonzero} again, we can do binary modification of $(0,1,\frac{1}{k},0)^T$ to variable $x_3$ of $g$ and then do binary modifications of $(0,1,0,0)^T$ to variables $x_1$ and $x_4$ of $g$ to get a signature $\left[\begin{smallmatrix}
        1 & 0 & 0 & 0\\
        0 & 1 & 0 & 0\\
        0 & 0 & 0 & 0\\
        0 & 0 & 0 & 0
    \end{smallmatrix}\right] = \left[\begin{smallmatrix}
        1 & 0\\
        0 & 0
    \end{smallmatrix}\right] \otimes \left[ \begin{smallmatrix}
        1 & 0\\
        0 & 1
    \end{smallmatrix}\right]$ on RHS. 
    Connecting all its variables with $\neq_2$, we have $N\left[\begin{smallmatrix}
        1 & 0 & 0 & 0\\
        0 & 1 & 0 & 0\\
        0 & 0 & 0 & 0\\
        0 & 0 & 0 & 0
    \end{smallmatrix}\right]N = \left[\begin{smallmatrix}
        0 & 0 & 0 & 0\\
        0 & 0 & 0 & 0\\
        0 & 0 & 1 & 0\\
        0 & 0 & 0 & 1
    \end{smallmatrix}\right] = \left[\begin{smallmatrix}
        0 & 0\\
        0 & 1
    \end{smallmatrix}\right] \otimes \left[ \begin{smallmatrix}
        1 & 0\\
        0 & 1
    \end{smallmatrix}\right]$ on LHS. 
    This can be seen as  $h=\Delta_1^{\otimes2}\otimes(=_2)$, where $\Delta_1$ is a unary signature which pins its input to be 1.
    Consider $h^{\otimes2}=\Delta_1^{\otimes4}\otimes(=_2)^{\otimes2}$, and connect $\Delta_1^{\otimes4}$ with the 4 variables of $g$.
    This contributes a global factor $g_{1111}=1$ to the Holant value. 
    Thus, we get $(=_2)^{\otimes2}$ on LHS, i.e., $\plholant{\neq_2,(=_2)^{\otimes2}}{g,\mathcal{S}}\le \plholant{\neq_2}{g,\mathcal{S}}$.
    Thus, we have $\plholant{\neq_2,(=_2)^{\otimes2}}{g,\mathcal{S}}\le_T^p\plholant{\neq_2}{g}$.
    Then by Lemma~\ref{lm: swap gate}, we have $\holant{\neq_2}{g}\le_T^p\plholant{\neq_2}{g}$.
    Then $\plholant{\neq_2}{g}$ is \numP-hard by Lemma~\ref{lm:general eight vertex model N=2 differet pairs}.
    It follows that $\plholant{\neq_2}{f}$ is \numP-hard.
\end{proof}

\begin{lemma}\label{lm: 1bcdz1}
    If $M(f) = \left[\begin{smallmatrix} 1 & 0 & 0 & b \\ 0 & c & d & 0 \\ 0 & 0 & z & 0 \\ 0 & 0 & 0 & 1\end{smallmatrix}\right]$ with $bcdz \neq0$, then \plholant{\neq_2}{f} is \numP-hard unless $f\in \mathscr{M}$, or more explicitly $cz=1$, in which cases \plholant{\neq_2}{f} is computable in polynomial time.
\end{lemma}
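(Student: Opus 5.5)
The plan is to dispose of tractability first and then reduce to the planar six-vertex model. Here $\det M_{\rm Out}(f)=1$ and $\det M_{\rm In}(f)=cz$, so Lemma~\ref{lm:is matchgate} gives $f\in\M$ iff $cz=1$, and that case is tractable. For hardness, assume $cz\neq 1$. The first gadget connects $f$ with its rotation $f^{\pi}$ via $(\neq_2)^{\otimes 2}$, where $M(f^\pi)=\left[\begin{smallmatrix}1&0&0&0\\0&z&d&0\\0&0&c&0\\b&0&0&1\end{smallmatrix}\right]$. The product should be
$$M(f)NM(f^\pi)=\left[\begin{smallmatrix}2b&0&0&1\\0&dz&c^2+d^2&0\\0&z^2&dz&0\\1&0&0&0\end{smallmatrix}\right].$$
Because the bottom-right entry is $0$, Lemma~\ref{lm:let a=x} lets me replace the corner entries by $0$. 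This leaves a planar six-vertex signature $g$ with outer pair $(1,1)$ and inner matrix $\left[\begin{smallmatrix}dz&c^2+d^2\\z^2&dz\end{smallmatrix}\right]$.

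Next I apply Theorem~\ref{thm:planar six-vertex} to $g$. The support of $g$ has size $5$ or $6$, so $g\notin\mathscr P\cup\mathscr A$. Condition 2 fails because the pair $(1,1)$ has no zero, and condition 4 fails because $z^2\neq0$. Membership $g\in\M$ would need $-1=d^2z^2-z^2(c^2+d^2)=-c^2z^2$, i.e.\ $(cz)^2=1$. For $g\in\widehat{\M}$ I would redo the $H^{\otimes2}\cdot H^{\otimes2}$ computation as in the proof of Lemma~\ref{lm: 1bcdw1}, expecting a contradiction such as $cw=0$ or $4=0$ there. If $c^2+d^2=0$, the inner matrix has exactly three nonzero entries and Corollary~\ref{cor:exactly 3 nonzero entries} applies, with the same matchgate test. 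Together with $cz\neq1$, this should give \numP-hardness unless $cz=-1$.

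The main obstacle is the residual case $cz=-1$, i.e.\ $z=-1/c$. Here I would try other gadgets: connect $f$ with $f^{\pi/2}$ or $f^{3\pi/2}$, or connect $f^{\pi/2}$ with $f^{3\pi/2}$. I would also use the binary signature obtained by looping $x_1,x_2$ (or $x_3,x_4$) of $f$ through $\neq_2$, which has the form $(0,1,t,0)^T$ with $t$ determined by $c,d,z$. Binary modifications by this signature rescale entries of $f$. The goal is a signature with two zero pairs, so that Lemma~\ref{lm: 1dw1} or Lemma~\ref{lm: 1cz1} applies, or another six-vertex signature outside the exceptional classes of Theorem~\ref{thm:planar six-vertex}. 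Under $cz=-1$ one must check that the resulting parameters avoid $dw=\pm1$, $c'=\pm z'$ and the like. If a single choice fails, I would do a finite case split on the few algebraic relations that can remain, handling each by a different rotation. As a fallback, I would interpolate arbitrary binaries $(0,1,u,0)^T$ via Corollary~\ref{cor:get any binary when exactly three nonzero}. This would reach the form of Lemma~\ref{lm: 1bcdwy1} or Lemma~\ref{lm: 1bcdw1}, whose matchgate conditions are incompatible with $cz=-1$.
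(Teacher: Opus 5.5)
Your treatment of $cz\neq\pm1$ coincides with the paper's. You use the gadget $M(f)NM(f^{\pi})$, then Lemma~\ref{lm:let a=x}, then Theorem~\ref{thm:planar six-vertex}. This gives $g\in\M$ iff $(cz)^2=1$, and $g\notin\widehat{\M}$ by the computation of Lemma~\ref{lm: 1bcdw1}; your $g$ is the $\tilde f$ there with $c,d$ swapped and $w$ replaced by $z$.

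The gap is the residual case $cz=-1$. You correctly flag it as the main obstacle, but you only list candidate gadgets, a case split and a fallback, and carry none of them out. So no hardness is actually proved there. The fallback also cannot work as stated. Binary modifications only rescale entries, so they can never make the zeros $w,y$ of $f$ nonzero. Rotations always keep $(d,w)$ as the inner pair. Hence these operations never bring $f$ to the zero pattern of Lemma~\ref{lm: 1bcdwy1} (a single zero) or of Lemma~\ref{lm: 1bcdw1} (zeros in two outer pairs, inner pair nonzero).

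The missing idea is the connection $M(f^{\frac{\pi}{2}})NM(f^{\frac{3\pi}{2}})$, which you name but do not compute. Its outer matrix is $\left[\begin{smallmatrix}2z&1+cz\\1+cz&2c\end{smallmatrix}\right]$ and its inner matrix is $\left[\begin{smallmatrix}bd&b^2\\d^2&bd\end{smallmatrix}\right]$.
\begin{itemize}
\item When $cz=-1$ the outer off-diagonal entries vanish. After Lemma~\ref{lm:let a=x} the outer matrix has full rank while the inner one is degenerate with all entries nonzero.
\item Lemma~\ref{lm: outer full inner degenerate} then gives \numP-hardness unless $bd+b^2=0$, i.e.\ $b=-d$.
\item In that sub-case, use the reversed connection $M(f^{\frac{3\pi}{2}})NM(f^{\frac{\pi}{2}})=\left[\begin{smallmatrix}2c&0&0&0\\0&0&0&0\\0&2b^2&0&0\\0&0&0&2z\end{smallmatrix}\right]$. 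It has two zero outer pairs and exactly one zero in the inner pair, so Lemma~\ref{lm: 1dw1} gives hardness.
\end{itemize}
Neither gadget suffices on its own. The first leaves $b=-d$ open. The second becomes an affine (tractable) signature when $b^2+d^2=0$, which is exactly what the first gadget excludes, since $b=-d$ forces $b^2+d^2=2b^2\neq0$.
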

\begin{proof}
    Rotate $f$ by $\frac{\pi}{2}$, $\pi$ and $\frac{3\pi}{2}$, we get $f^{\frac{\pi}{2}}$, $f^{\pi}$ and $f^{\frac{3\pi}{2}}$ with signature matrices $M(f^{\frac{\pi}{2}}) = \left[\begin{smallmatrix}
        1 & 0 & 0 & z\\
        0 & b & 0 & 0\\
        0 & d & 0 & 0\\
        c & 0 & 0 & 1
    \end{smallmatrix}\right]$, $M(f^{\pi}) = \left[\begin{smallmatrix}
        1 & 0 & 0 & 0\\
        0 & z & d & 0\\
        0 & 0 & c & 0\\
        b & 0 & 0 & 1
    \end{smallmatrix} \right]$, 
    and $M(f^{\frac{3\pi}{2}}) = \left[\begin{smallmatrix}
        1 & 0 & 0 & c\\
        0 & 0 & 0 & 0\\
        0 & d & b & 0\\
        z & 0 & 0 & 1
    \end{smallmatrix}\right]$ respectively. 
    Connecting $f$ and $f^\pi$ by two $\neq_2$, 
    we get a signature with the signature matrix $M(f)NM(f^\pi) = \left[\begin{smallmatrix}
        2b & 0 & 0 & 1\\
        0 & dz & c^2+d^2 & 0\\
        0 & z^2 & dz & 0 \\
        1 & 0 & 0 & 0
    \end{smallmatrix}\right]$. Be Lemma~\ref{lm:let a=x}, we get $\plholant{\neq_2}{g} \leq^p_T \plholant{\neq_2}{f}$ where $M(g) = \left[\begin{smallmatrix}
        0 & 0 & 0 & 1\\
        0 & dz & c^2+d^2 & 0\\
        0 & z^2 & dz & 0 \\
        1 & 0 & 0 & 0
    \end{smallmatrix} \right]$. 
    Note that the support size of $g$ is 5 or 6, and thus $g$ does not have affine support. By Theorem~\ref{thm:planar six-vertex}, $\plholant{\neq_2}{g}$ is \numP-hard and thus $\plholant{\neq_2}{f}$ is \numP-hard unless $g \in \widehat{\mathscr{M}}$ or $g \in \mathscr{M}$. 
    A direct calculation similar to the proof of Lemma~\ref{lm: 1bcdw1} shows that $g\notin \widehat{\mathscr{M}}$. Also, $g \in \mathscr{M}$ iff $cz=\pm1$ by Lemma~\ref{lm:is matchgate}. 
    If $cz=1$, then by Lemma~\ref{lm:is matchgate} again we know $f \in \mathscr{M}$ and thus $\plholant{\neq_2}{f}$ is tractable. 
    If $cz=-1$, connecting $f^{\frac{\pi}{2}}$ and $f^{\frac{3\pi}{2}}$ by two $\neq_2$,
    we get a signature $h$ with the signature matrix $M(h)=M(f^{\frac{\pi}{2}})NM(f^{\frac{3\pi}{2}}) = \left[\begin{smallmatrix}
        2c & 0 & 0 & 1+cz\\
        0 & bd & b^2 & 0\\
        0 & d^2 & bd & 0\\
        1+cz & 0 & 0 & 2z
    \end{smallmatrix}\right] =  \left[\begin{smallmatrix}
        2c & 0 & 0 & 0\\
        0 & bd & b^2 & 0\\
        0 & d^2 & bd & 0\\
        0 & 0 & 0 & 2z
    \end{smallmatrix}\right]$. By Lemma~\ref{lm: outer full inner degenerate}, $\plholant{\neq_2}{h}$ is \numP-hard and thus $\plholant{\neq_2}{f}$ is \numP-hard unless $b=-d$. Assume $b=-d$.
    Connecting $f^{\frac{3\pi}{2}}$ and $f^{\frac{\pi}{2}}$ by two $(\neq_2)$, we get a signature $h_1$ with the signature matrix $M(h_1)=M(f^{\frac{3\pi}{2}})NM(f^{\frac{\pi}{2}}) = \left[\begin{smallmatrix}
        2c & 0 & 0 & 1+cz\\
        0 & 0 & 0 & 0\\
        0 & b^2+d^2 & 0 & 0\\
        1+cz & 0 & 0 & 2z
    \end{smallmatrix}\right] = \left[\begin{smallmatrix}
        2c & 0 & 0 & 0\\
        0 & 0 & 0 & 0\\
        0 & 2b^2 & 0 & 0\\
        0 & 0 & 0 & 2z
    \end{smallmatrix}\right].$ By Lemma~\ref{lm: 1dw1}, we know $\plholant{\neq_2}{h_1}$ is \numP-hard and thus $\plholant{\neq_2}{f}$ is \numP-hard. This finishes the proof.
\end{proof}

Through exactly the same procedure in the proof of Lemma~\ref{lm: 1bcdz1}, one can prove the following.

 
\begin{lemma}\label{lm: 1bcwz1}
    If $M(f) = \left[\begin{smallmatrix} 1 & 0 & 0 & b \\ 0 & c & 0 & 0 \\ 0 & w & z & 0 \\ 0 & 0 & 0 & 1\end{smallmatrix}\right]$ with $bcwz \neq0$, then \plholant{\neq_2}{f} is \numP-hard unless $f\in \mathscr{M}$, or more explicitly $cz=1$, in which cases \plholant{\neq_2}{f} is computable in polynomial time.
\end{lemma}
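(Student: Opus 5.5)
The plan is to mirror the proof of Lemma~\ref{lm: 1bcdz1}. First I record the rotations of $f$ (here $d=y=0$):
$M(f^{\frac{\pi}{2}})=\left[\begin{smallmatrix}1&0&0&z\\0&b&w&0\\0&0&0&0\\c&0&0&1\end{smallmatrix}\right]$,
$M(f^{\pi})=\left[\begin{smallmatrix}1&0&0&0\\0&z&0&0\\0&w&c&0\\b&0&0&1\end{smallmatrix}\right]$, and
$M(f^{\frac{3\pi}{2}})=\left[\begin{smallmatrix}1&0&0&c\\0&0&w&0\\0&0&b&0\\z&0&0&1\end{smallmatrix}\right]$.

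\emph{Step 1.} Connect $f$ with $f^\pi$ via $(\neq_2)^{\otimes2}$. This gives
$M(f)NM(f^\pi)=\left[\begin{smallmatrix}2b&0&0&1\\0&cw&c^2&0\\0&w^2+z^2&cw&0\\1&0&0&0\end{smallmatrix}\right]$.
By Lemma~\ref{lm:let a=x} we may replace this signature by the six-vertex signature $g$ obtained by zeroing the $(1,1)$ and $(4,4)$ corners.

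Since $cw\neq0$, the support of $g$ has size 5 or 6, so $g\notin\mathscr{A}\cup\mathscr{P}$. Condition 2 of Theorem~\ref{thm:planar six-vertex} fails because $cw\neq 0$, and condition 4 fails because $cw\neq0$ as well.

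To exclude $g\in\widehat{\mathscr{M}}$, I expand $H^{\otimes2}M(g)H^{\otimes2}$ exactly as in Lemma~\ref{lm: 1bcdw1}. The Even Parity Condition on the transform forces a relation implying $cw=0$. The Odd Parity Condition forces two corner entries to vanish simultaneously, which gives $4=0$.

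It remains to test $g\in\mathscr{M}$. By Lemma~\ref{lm:is matchgate} this holds iff $0-1=c^2w^2-c^2(w^2+z^2)$, i.e. $cz=\pm1$. If $cz=1$, then $\det M_{\rm Out}(f)=1=\det M_{\rm In}(f)$, so $f\in\mathscr{M}$ and the problem is tractable. Otherwise Theorem~\ref{thm:planar six-vertex} gives \numP-hardness unless $cz=-1$.

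\emph{Step 2 (case $cz=-1$).} Connect $f^{\frac{\pi}{2}}$ with $f^{\frac{3\pi}{2}}$ via $(\neq_2)^{\otimes2}$. Using $1+cz=0$, this yields
$M(h)=\left[\begin{smallmatrix}2z&0&0&0\\0&0&b^2+w^2&0\\0&0&0&0\\0&0&0&2c\end{smallmatrix}\right]$.
After normalizing the corners via Lemma~\ref{lm:let a=x}, this is of the form in Lemma~\ref{lm: 1dw1}. Its inner pair is $(b^2+w^2,0)$, so its product is $0\neq\pm1$. Hence the problem is \numP-hard unless $b^2+w^2=0$, i.e. $b=\pm\mathfrak{i}w$.

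\emph{Step 3 (case $b=\pm\mathfrak{i}w$).} Connect in the other order, $f^{\frac{3\pi}{2}}$ with $f^{\frac{\pi}{2}}$, which yields
$M(h_1)=\left[\begin{smallmatrix}2c&0&0&0\\0&bw&w^2&0\\0&b^2&bw&0\\0&0&0&2z\end{smallmatrix}\right]$.
After normalizing $a=x$ by Lemma~\ref{lm:let a=x}, the outer matrix is $\left[\begin{smallmatrix}\sqrt{4cz}&0\\0&\sqrt{4cz}\end{smallmatrix}\right]$, which has full rank. The inner matrix is degenerate, and all four inner entries are nonzero. Moreover $bw+w^2=w^2(1\pm\mathfrak{i})\neq0$. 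So Lemma~\ref{lm: outer full inner degenerate} gives \numP-hardness, which completes the argument.

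The main obstacle I expect is the routine but fiddly verification in Step 1 that $g\notin\widehat{\mathscr{M}}$. I would handle it by copying the explicit $H^{\otimes2}$ expansion from Lemma~\ref{lm: 1bcdw1} and checking both parity cases.
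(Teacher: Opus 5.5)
Your proposal is correct and is essentially the paper's proof: the paper proves this lemma by the same procedure as Lemma~\ref{lm: 1bcdz1}. That procedure connects $f$ with $f^{\pi}$ to reach a six-vertex signature forcing $cz=\pm1$. For $cz=-1$ it then uses the two connections of $f^{\frac{\pi}{2}}$ and $f^{\frac{3\pi}{2}}$, together with Lemma~\ref{lm: outer full inner degenerate} and Lemma~\ref{lm: 1dw1}.

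The only difference is the order of your last two steps. Moving the zero from $w$ to $d$ swaps which connection order produces the degenerate inner matrix, so you apply Lemma~\ref{lm: 1dw1} first (forcing $b^2+w^2=0$) and Lemma~\ref{lm: outer full inner degenerate} second. The template would instead force $b=-w$ first. Since these two conditions are incompatible when $w\neq 0$, either order closes the argument.
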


\subsubsection{Two zero entries in a single pair}
We proceed to consider the case when $M(f) = \left[\begin{smallmatrix} 1 & 0 & 0 & 0 \\ 0 & c & d & 0 \\ 0 & w & z & 0 \\ 0 & 0 & 0 & 1\end{smallmatrix}\right]$ with $cdzw \neq 0$.
\begin{lemma}\label{lm:1cdwz1}
    If $M(f) = \left[\begin{smallmatrix} 1 & 0 & 0 & 0 \\ 0 & c & d & 0 \\ 0 & w & z & 0 \\ 0 & 0 & 0 & 1\end{smallmatrix}\right]$ with $cdzw \neq 0$, then \plholant{\neq_2}{f} is \numP-hard unless $f\in\mathscr{M}$, or more explicitly $cz-dw = 1$, in which case \plholant{\neq_2}{f} is computable in polynomial time.
\end{lemma}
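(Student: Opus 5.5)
Tractability is immediate: with $(b,y)=(0,0)$ we have $\det M_{\rm Out}(f)=1$ and $\det M_{\rm In}(f)=cz-dw$, so by Lemma~\ref{lm:is matchgate} the condition $cz-dw=1$ gives $f\in\mathscr{M}$ and the problem is computable in polynomial time via FKT. For hardness, assume $cz-dw\neq 1$ and $cdzw\neq0$. The plan is to build gadgets that land in cases already classified: the planar six-vertex model (Theorem~\ref{thm:planar six-vertex}), the degenerate inner matrix case (Lemma~\ref{lm: outer full inner degenerate}), or Case~I (Lemmas~\ref{lm: 1cz1}, \ref{lm: 1dw1}).

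\textbf{Step 1: the main six-vertex gadget.} Rotate $f$ by $\frac{\pi}{2}$ and $\frac{3\pi}{2}$:
\[
M(f^{\frac{\pi}{2}})=\left[\begin{smallmatrix}1&0&0&z\\0&0&w&0\\0&d&0&0\\c&0&0&1\end{smallmatrix}\right],\qquad
M(f^{\frac{3\pi}{2}})=\left[\begin{smallmatrix}1&0&0&c\\0&0&w&0\\0&d&0&0\\z&0&0&1\end{smallmatrix}\right].
\]
Connecting them via $(\neq_2)^{\otimes2}$ gives a signature whose outer matrix is $\left[\begin{smallmatrix}2z&1+z^2\\1+c^2&2c\end{smallmatrix}\right]$ and whose inner matrix is $\left[\begin{smallmatrix}0&w^2\\d^2&0\end{smallmatrix}\right]$. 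The other order of connection and products like $M(f)NM(f)$ give similar variants.

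\textbf{Step 2: use the corner entries.} The corner entries are $2z$ and $2c$, both nonzero, so Lemma~\ref{lm:let a=x} cannot directly reduce this to a six-vertex signature. Instead I would rotate the gadget by $\frac{\pi}{2}$. This moves $(1+z^2,1+c^2)$ into an outer pair and $(d^2,w^2)$ into an inner pair, and the result is a Case~II-type signature with a zero pair. If $c^2=-1$ or $z^2=-1$, I would feed it to earlier lemmas with more zeros (Lemmas~\ref{lm: 1bcdz1}, \ref{lm: 1bcz1}-type statements, or Lemma~\ref{lm: 1dw1} when both vanish). Alternatively, I would use binary modification: looping $x_3,x_4$ of $f$ via $\neq_2$ gives the binaries $(0,c+d,w+z,0)^T$ and $(0,c+w,d+z,0)^T$. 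Modifying $f$ by these, or by an interpolated $(0,1,t,0)^T$, can make $f$ degenerate and allows tuning the corners in combinations like $f^{\pi}_tNf$ with $t=-1$, so that corner entries vanish exactly as in the proofs of Lemmas~\ref{lm: 1bcdwy1} and~\ref{lm: 1bcwzy1}.

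\textbf{Step 3: classify the resulting six-vertex signature.} For the six-vertex signature obtained, I would check it against Theorem~\ref{thm:planar six-vertex}. It is not in $\mathscr{P}$ or $\mathscr{A}$ by support size or by Lemma~\ref{lm:is product type?}, and not in $\widehat{\mathscr{M}}$ by direct computation as in Lemma~\ref{lm: 1bcdw1}. Membership in $\mathscr{M}$ will reduce to $(cz-dw)^2=1$. This leaves the exceptional case $cz-dw=-1$. There I would connect $f$ with $f^{\pi}$ and try to produce either a degenerate inner matrix with a full-rank outer matrix, so Lemma~\ref{lm: outer full inner degenerate} applies, or a signature with $(b,y)=(c,z)=(0,0)$ and $dw\neq\pm1$, so Lemma~\ref{lm: 1dw1} applies. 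This mirrors the $cz=-1$ branch of Lemma~\ref{lm: 1bcdz1}.

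\textbf{Main obstacle.} I expect the main difficulty to be this exceptional branch $cz-dw=-1$ and the algebraic side conditions, such as $c=\pm z$ or $d=\pm w$, under which the chosen gadget accidentally becomes a matchgate. These subcases may need a second gadget or the interpolation Corollary~\ref{cor:get any binary when exactly three nonzero} to realize suitable binary modifications.
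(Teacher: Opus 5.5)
There is a genuine gap. It lies in the generic case (your Step~3), not in the branch $cz-dw=-1$ that you flagged. The natural six-vertex gadget is $M(f)NM(f^{\pi})=\left[\begin{smallmatrix}0&0&0&1\\0&cw+dz&c^2+d^2&0\\0&w^2+z^2&cw+dz&0\\1&0&0&0\end{smallmatrix}\right]$, and it is indeed in $\M$ iff $(cz-dw)^2=1$. But your exclusion of the other tractable classes rests on a support-size argument that fails whenever $cw+dz=0$ or $c^2+d^2=w^2+z^2=0$. There the support has size $4$, and the gadget can be tractable although $f\notin\M$ (for instance via case~4 of Theorem~\ref{thm:planar six-vertex}, which your Step~3 omits). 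Concretely, $c=z=\zeta_{16}/\sqrt{2}$ and $d=w=\ii c$ give $cz-dw=\zeta_8\neq 1$. Yet the gadget (which here equals $M(f)NM(f)$, since $f^{\pi}=f$) is $\left[\begin{smallmatrix}0&0&0&1\\0&\zeta_8^3&0&0\\0&0&\zeta_8^3&0\\1&0&0&0\end{smallmatrix}\right]$, tractable by case~4(ii).

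Your fallback tools cannot repair this:
\begin{itemize}
\item A modification of $f$ by $(0,1,t,0)^T$ with $t\neq 0$ just multiplies $\det M_{\rm In}(f)$ by $t$, so it never makes $f$ degenerate.
\item Corollary~\ref{cor:get any binary when exactly three nonzero} needs exactly three nonzero inner entries, whereas here all four are nonzero.
\item When $c=\epsilon z$, $d=\epsilon w$ (as in this example), the arrow-reversal (anti)symmetry of $f$ is inherited by every gadget. So every binary $(0,1,t,0)^T$ you can build has $t=\pm1$, and no M\"obius interpolation can start.
\end{itemize}

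What is missing is the paper's three-way split.
\begin{itemize}
\item If $M_{\rm In}(f)$ is singular, Lemma~\ref{lm: outer full inner degenerate} applies, except when $c=z=-d=-w$, which $M(f)NM(f)$ settles.
\item If $M_{\rm In}(f)$ has full rank and it is not the case that $c=\epsilon z\wedge d=\epsilon w$, Lemma~\ref{lm:get one binary when not exists epsilon} provides a trigger $(0,1,t,0)^T$ with $t\neq\pm1$. Lemma~\ref{lm: get any binary from a binary not fifth root} together with Corollary~\ref{cor: get (0,1,0,0) from a binary third or fourth root} then yields the pinning signature $(0,1,0,0)^T$. From it one builds $h=(1,0,0,c)^T\otimes(1,0,0,c)^T$ and $M(f)NM(h)=\left[\begin{smallmatrix}0&0&0&c^2\\0&cd&c^2&0\\0&cz&cw&0\\1&0&0&0\end{smallmatrix}\right]$. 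All six of its entries are nonzero, so its support is never affine, $\widehat{\M}$ is ruled out by computation, and hardness follows unless $cz-dw=1$.
\item The arrow-(anti)symmetric case needs a bootstrapping argument (Claims~A--C). First classify via $M(f)NM(f)$ away from $c^2+d^2=0$ and $c^2-d^2=-1$. Then apply that partial result to the rotated, normalized gadget $M(f^{\frac{\pi}{2}})NM(f^{\frac{\pi}{2}})$, which for $c=z$, $d=w$ is your Step~1 gadget. Finally reduce $\epsilon=-1$ to $\epsilon=1$ using $(0,1,-1,0)^T$.
\end{itemize}

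Your Step~1 computation is also off: the outer off-diagonal entries of $M(f^{\frac{\pi}{2}})NM(f^{\frac{3\pi}{2}})$ are both $1+cz$. After Lemma~\ref{lm:let a=x} and a $\frac{\pi}{2}$ rotation, this gadget is again of the lemma's own form, with inner matrix proportional to $\left[\begin{smallmatrix}1+cz&d^2\\w^2&1+cz\end{smallmatrix}\right]$. So your Step~2 is circular unless it is combined with a partial classification as above. Its one genuine use is that this inner determinant, $(1+cz-dw)(1+cz+dw)$, vanishes when $cz-dw=-1$. So the branch you feared is in fact dispatched by Lemma~\ref{lm: outer full inner degenerate}, with the subcase $d=-w$ falling into the $c'=z'=-d'=-w'$ exception.
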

\begin{proof}
    We first suppose $\det\left[\begin{smallmatrix}c & d \\ w & z\end{smallmatrix}\right]=0$. If $c+d \neq 0$ or $z+d \neq 0$, then \plholant{\neq_2}{f} is \numP-hard by Lemma~\ref{lm: outer full inner degenerate}. Otherwise, $c=z=-d \neq 0$ and since $\operatorname{det}\left[\begin{smallmatrix}c & d \\ w & z\end{smallmatrix}\right]=0$, we have $d=w$. So $\left[\begin{smallmatrix}c & d \\ w & z\end{smallmatrix}\right]=\left[\begin{smallmatrix}
        c & -c \\ -c & c
    \end{smallmatrix}\right]$. By connecting two copies of $f$ using $\neq 2$ we get a signature $f_{1}$ with the signature matrix
    $
    M\left(f_{1}\right)=M(f) N M(f)=\left[\begin{smallmatrix}
    0 & 0 & 0 & 1 \\
    0 & -2 c^{2} & 2 c^{2} & 0 \\
    0 & 2 c^{2} & -2 c^{2} & 0 \\
    1 & 0 & 0 & 0
    \end{smallmatrix}\right] .
    $
    The support of $f_{1}$ is not an affine subspace. By Theorem~\ref{thm:planar six-vertex}, $\plholant{\neq_2}{f_1}$ is \numP-hard and thus $\plholant{\neq_2}{f}$ is \numP-hard unless $f_1 \in \mathscr{M}$ or $\widehat{\mathscr{M}}$. By Lemma~\ref{lm:is matchgate}, $f\notin \mathscr{M}$. 
    Also, note that
    $H^{\otimes 2} M(f) H^{\otimes 2} = \left[\begin{smallmatrix}
        2 & 0 & 0 & 2\\
        0 & -2-8c^2 & -2+8c^2 & 0\\
        0 & -2+8c^2 & -2-8c^2 & 0\\
        2 & 0 & 0 & 2
    \end{smallmatrix}\right] $. 
    By Lemma~\ref{lm:is matchgate}, $f\in \widehat{\M}$ iff $c=0$, a contradiction. 
    In summary, if $\det \left[\begin{smallmatrix}
        c & d\\
        w & z
    \end{smallmatrix}\right]=0$, then $\plholant{\neq_2}{f}$ is \numP-hard.

    We now assume the inner matrix has full rank. By Lemma~\ref{lm:get one binary when not exists epsilon}, either it is the case that $(c=\epsilon z)\wedge (d = \epsilon w)$ for some $\epsilon = \pm 1$, or we have $(0,1,t,0)^T$ on RHS where $t \neq \pm 1$. Assume now it is the latter case. 
    Then by Lemma~\ref{lm: get any binary from a binary not fifth root} and Corollary~\ref{cor: get (0,1,0,0) from a binary third or fourth root}, we have $(0,1,0,0)^T$ on RHS. 
    By doing a binary modification on variables $x_1$ and $x_4$ or $f$ using $(0,1,0,0)^T$, we get the signature $g$ on RHS whose signature matrix is $M(g) = \left[\begin{smallmatrix}
        1 & 0 & 0 & 0\\
        0 & c & 0 & 0\\
        0 & 0 & 0 & 0\\
        0 & 0 & 0 & 0
    \end{smallmatrix}\right] = \left[\begin{smallmatrix}
        1 & 0\\
        0 & 0
    \end{smallmatrix}\right] \otimes \left[\begin{smallmatrix}
        1 & 0\\
        0 & c
    \end{smallmatrix}\right]$.
    This can be seen as $\Delta_0^{\otimes2}\otimes(1,0,0,c)^T$, where $\Delta_0$ is the unary signature which pins its input to be 0.
    Note that $g^{\otimes2}=\Delta_0^{\otimes4}\otimes(1,0,0,c)^T\otimes(1,0,0,c)^T$.
    Connect $\Delta_0^{\otimes4}$ with one copy of $f$ using four $\neq_2$. 
    This contributes a global factor $f_{1111}=1$ to the Holant value.
    Thus, we realize the signature  $h=(1,0,0,c)^T\otimes(1,0,0,c)^T$ on RHS, whose signature matrix is 
    $M(h)=\left[\begin{smallmatrix}
        1 & 0 & 0 & 0\\
        0 & c & 0 & 0\\
        0 & 0 & c & 0\\
        0 & 0 & 0 & c^2
    \end{smallmatrix}\right]$.
    By connecting $f$ and $h$ using $\neq_2$, we get the signature $\Tilde{f}$ on RHS whose signature matrix is 
    $M(\Tilde{f}) 
    =M(f)NM(h)= \left[\begin{smallmatrix}
        0 & 0 & 0 & c^2\\
        0 & cd & c^2 & 0\\
        0 & cz & cw & 0\\
        1 & 0 & 0 & 0
    \end{smallmatrix}\right]$. The support of $\Tilde{f}$ is not an affine subspace, and thus by Theorem~\ref{thm:planar six-vertex} $\plholant{\neq_2}{\Tilde{f}}$ is \numP-hard unless $\Tilde{f} \in \mathscr{M}$ or $\widehat{\mathscr{M}}$. If $\Tilde{f} \in \mathscr{M}$, then by Lemma~\ref{lm:is matchgate} we have $dw - cz = -1$ which implies $f \in \mathscr{M}$. 
    If $\Tilde{f} \in \widehat{\mathscr{M}}$, then $\Tilde{f}_1\in \M$, where 
    $$
    M(\Tilde{f}_1)=
    H^{\otimes 2} M(\Tilde{f}) H^{\otimes 2} = \frac{1}{4}\left[\begin{smallmatrix}
        1 + 2c^2 + cd+ cw + cz & 1-cd+cw-cz & 1-2c^2+cd-cw+cz & 1-cd-cw-cz\\
        -1-cd+cw+cz & -1-2c^2+cd+cw-cz & -1-cd-cw+cz & -1+2c^2+cd-cw-cz\\
        -1+2c^2+cd-cw-cz & -1-cd-cw+cz & -1-2c^2+cd+cw-cz & -1-cd+cw+cz\\
        1-cd-cw-cz & 1-2c^2+cd-cw+cz & 1-cd+cw-cz & 1+2c^2+cd+cw+cz
    \end{smallmatrix}\right].
    $$
    If the $\Tilde{f}_1$ satisfies the Even Parity Condition, then $1-cd+cw-cz =-1-cd+cw+cz=-1+2c^2+cd-cw-cz=0$, 
    which implies $d=w,cz=1$ and $c^2=1$. 
    A direct computation shows that $\det M_{\text{In}}(\Tilde{f}_1)-\det M_{\text{Out}}(\Tilde{f}_1)=-\frac{1}{2}c(1+c^2)(d+w)\neq 0$, contradicting Lemma~\ref{lm:is matchgate}.
    If $\Tilde{f}_1$ satisfies the Odd Parity Condition, then $1+2c^2+cd+cw+cz = -1-2c^2+cd+cw-cz = -1-cd-cw+cz = 0$, 
    which implies $d=-w,cz=1$ and $c^2=-1$. 
    A direct computation shows that $(\Tilde{f}_1)_{0010} (\Tilde{f}_1)_{1101}-(\Tilde{f}_1)_{0001} (\Tilde{f}_1)_{1110} - (\Tilde{f}_1)_{0100} (\Tilde{f}_1)_{1011} + (\Tilde{f}_1)_{0111} (\Tilde{f}_1)_{1000}=\frac{1}{2}c(-1+c^2)(d-w)\neq 0$, contradicting Lemma~\ref{lm:is matchgate}.
    
    Therefore, it suffices to consider the case $(c=\epsilon z) \wedge (d = \epsilon w)$ for some $\epsilon=\pm 1$ from now on.

    We first consider the case when $\epsilon=1$, i.e.\ $M(f) = \left[\begin{smallmatrix}
        1 & 0 & 0 & 0\\
        0 & c & d & 0\\
        0 & d & c & 0\\
        0 & 0 & 0 & 1
    \end{smallmatrix}\right]$. This case is dealt by the following three claims.
    \paragraph{Claim A} If $f\notin \mathscr{M}$, then $\plholant{\neq_2}{f}$ is \numP-hard if $c^2+d^2 \neq 0$ and $c^2-d^2 \neq -1$.
    
    Connect two copies of $f$ with $N$ we get a signature $f_2$ whose signature matrix given by $M(f_2)= M(f) N M(f) = \left[\begin{smallmatrix}
        0 & 0 & 0 & 1\\
        0 & 2cd & c^2+d^2 & 0\\
        0 & c^2+d^2 & 2cd & 0\\
        1 & 0 & 0 & 0
    \end{smallmatrix}\right]$. Assume for now that $c^2+d^2 \neq 0$. Then the support of $f_2$ is not an affine subspace. By Theorem~\ref{thm:planar six-vertex}, $\plholant{\neq_2}{f_2}$ is \numP-hard and thus $\plholant{\neq_2}{f}$ is \numP-hard unless $f_2 \in \widehat{\mathscr{M}}$ or $\mathscr{M}$. If $f_2 \in \widehat{\mathscr{M}}$, then 
    $$H^{\otimes 2} M(f) H^{\otimes 2} = \left[\begin{smallmatrix}
        2+2c^2+4cd+2d^2 & 0 & 0 & 2-2c^2-4cd-2d^2\\
        0 & -2-2c^2+4cd-2d^2 & -2+2c^2-4cd+2d^2 & 0\\
        0 & -2+2c^2-4cd+2d^2 & -2-2c^2+4cd-2d^2 & 0\\
        2-2c^2-4cd-2d^2 & 0 & 0 & 2+2c^2+4cd+2d^2
    \end{smallmatrix}\right] \in \mathscr{M}.$$ By Lemma~\ref{lm:is matchgate}, this will imply $cd=0$, a contradiction. If $f_2\in \mathscr{M}$, then by Lemma~\ref{lm:is matchgate} this will imply $c^2-d^2 = \pm1$. If $c^2-d^2 = 1$, then $f\in\mathscr{M}$. This finishes the proof of Claim A.
    
    \paragraph{Claim B} If $f\notin \mathscr{M}$, then $\plholant{\neq_2}{f}$ is \numP-hard if $c^2+d^2 \neq 0$.

    By Claim A, we can assume $c^2-d^2 = -1$. Rotate $f$ by $\frac{\pi}{2}$ to get the signature $f^{\frac{\pi}{2}}$ whose signature matrix is $M(f^{\frac{\pi}{2}}) = \left[\begin{smallmatrix}
        1 & 0 & 0 & c\\
        0 & 0 & d & 0\\
        0 & d & 0 & 0\\
        c & 0 & 0 & 1
    \end{smallmatrix}\right]$. Connect two copies of $f^{\frac{\pi}{2}}$ with $N$ we have $M(f^{\frac{\pi}{2}})NM(f^{\frac{\pi}{2}}) = \left[\begin{smallmatrix}
        2c & 0 & 0 & 1+c^2\\
        0 & 0 & d^2 & 0\\
        0 & d^2 & 0 & 0\\
        1+c^2 & 0 & 0 & 2c
    \end{smallmatrix}\right]$. Rotate it by $\frac{\pi}{2}$ and normalize by $2c$, we get the signature $f_3$ on RHS whose signature matrix is $M(f_3) = \left[\begin{smallmatrix}
        1 & 0 & 0 & 0\\
        0 & \frac{1+c^2}{2c} & \frac{d^2}{2c} & 0\\
        0 & \frac{d^2}{2c} & \frac{1+c^2}{2c} & 0\\
        0 & 0 & 0 & 1
    \end{smallmatrix}\right] = \left[\begin{smallmatrix}
        1 & 0 & 0 & 0\\
        0 & \frac{d^2}{2c} & \frac{d^2}{2c} & 0\\
        0 & \frac{d^2}{2c} & \frac{d^2}{2c} & 0\\
        0 & 0 & 0 & 1
    \end{smallmatrix}\right]$. The signature matrix of $f_3$ has the same form as that of $f$. Therefore by Claim A ($d\neq 0$), we know $\plholant{\neq_2}{f_3}$ is \numP-hard and thus $\plholant{\neq_2}{f}$ is \numP-hard.
    
    \paragraph{Claim C} If $f\notin \mathscr{M}$, then $\plholant{\neq_2}{f}$ is \numP-hard.

    By Claim B, it suffices to consider the case when $c^2+d^2 = 0 \Rightarrow d = \pm \mathfrak{i}c$. If $256c^{16} \neq 1$, then $2cd$ is not a power of $\sqrt{\mathfrak{i}}$, as $(2cd)^8 = (\pm 2\mathfrak{i}c^2)^8 = 256c^{16}\neq 1$. 
    We then have $M(f_2) = \left[\begin{smallmatrix}
        0 & 0 & 0 & 1\\
        0 & 2cd & 0 & 0\\
        0 & 0 & 2cd & 0\\
        1 & 0 & 0 & 0
    \end{smallmatrix}\right]$ is \numP-hard by Theorem~\ref{thm:planar six-vertex}, Lemma~\ref{lm:is product type?}, the definition of $\mathscr{A}$, Lemma~\ref{lm:is matchgate}, and a direct calculation for $H^{\otimes 2} M(f) H^{\otimes 2}$.
    If $256c^{16}=1$, then $|c| = \frac{\sqrt{2}}{2}$. On the other hand, $f_3 = \left[\begin{smallmatrix}
        1 & 0 & 0 & 0\\
        0 & \frac{1+c^2}{2c} & \frac{d^2}{2c} & 0\\
        0 & \frac{d^2}{2c} & \frac{1+c^2}{2c} & 0\\
        0 & 0 & 0 & 1
    \end{smallmatrix}\right]$ (note that here we don't have the condition $c^2-d^2 = -1$). By Claim B, $\plholant{\neq_2}{f_3}$ is \numP-hard and thus $\plholant{\neq_2}{f}$ is \numP-hard unless $(\frac{1+c^2}{2c})^2+(\frac{d^2}{2c})^2=0 \Rightarrow c^4+c^2+1 = 0$ since $c^2+d^2 = 0$. But then by quadratic formula $c^2 = \frac{-1\pm \sqrt{3}\mathfrak{i}}{2} \Rightarrow |c| = 1$, contradicting $|c| = \frac{\sqrt{2}}{2}$. This finishes the proof of Claim C.

    It remains to consider the case when $\epsilon = -1$, i.e.\ $M(f) = \left[\begin{smallmatrix}
        1 & 0 & 0 & 0\\
        0 & c & d & 0\\
        0 & -d & -c & 0\\
        0 & 0 & 0 & 1
    \end{smallmatrix}\right]$. In this case, we can get the binary $(0,1,-1,0)^T$ on RHS and do the binary modification on variable $x_2$ to get $\left[\begin{smallmatrix}
        1 & 0 & 0 & 0\\
        0 & c & d & 0\\
        0 & d & c & 0\\
        0 & 0 & 0 & -1
    \end{smallmatrix}\right]$, whose complexity is equivalent to $\left[\begin{smallmatrix}
        1 & 0 & 0 & 0\\
        0 & -\mathfrak{i}c & -\mathfrak{i}d & 0\\
        0 & -\mathfrak{i}d & -\mathfrak{i}c & 0\\
        0 & 0 & 0 & 1
    \end{smallmatrix}\right]$, which is of the form when $\epsilon = 1$. Therefore we can apply Claim C, and thus $\plholant{\neq_2}{f}$ is \numP-hard unless $-c^2+d^2 = 1$, in which case $f \in \mathscr{M}$.

    Our proof is now complete.
\end{proof}

\begin{lemma}\label{lm:1bycz1}
    If $M(f) = \left[\begin{smallmatrix} 1 & 0 & 0 & b \\ 0 & c & 0 & 0 \\ 0 & 0 & z & 0 \\ y & 0 & 0 & 1\end{smallmatrix}\right]$ with $bcyz \neq 0$, then \plholant{\neq_2}{f} is \numP-hard unless $f\in\mathscr{M}$, or more explicitly $cz+by = 1$, in which cases \plholant{\neq_2}{f} is computable in polynomial time.
\end{lemma}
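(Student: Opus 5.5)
Tractability is immediate: by Lemma~\ref{lm:is matchgate}, $f$ satisfies the Even Parity Condition, and $\det M_{\rm Out}(f)=1-by$ while $\det M_{\rm In}(f)=cz$, so $f\in\M$ exactly when $cz+by=1$. For hardness we assume $cz+by\neq 1$ and follow the template of Lemmas~\ref{lm: 1bcdz1} and \ref{lm:1cdwz1}: build a gadget whose signature falls into an already classified case.

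The first gadget comes from rotating $f$ by $\frac{\pi}{2}$. This gives $M(f^{\frac{\pi}{2}})=\left[\begin{smallmatrix}1&0&0&z\\0&b&0&0\\0&0&y&0\\c&0&0&1\end{smallmatrix}\right]$, which has the same shape as $f$ with the roles of $(b,y)$ and $(c,z)$ swapped. We then connect $f$ to rotated copies via $N$. For instance, $M(f)NM(f^{\pi})$, with $M(f^\pi)=\left[\begin{smallmatrix}1&0&0&y\\0&z&0&0\\0&0&c&0\\b&0&0&1\end{smallmatrix}\right]$, gives
\[
\left[\begin{smallmatrix}2b&0&0&1+by\\0&0&cz&0\\0&cz&0&0\\1+by&0&0&2y\end{smallmatrix}\right].
\]
By Lemma~\ref{lm:let a=x} we normalize the corners. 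The result has a zero pair $(c',z')$ and a nonzero inner pair $(d',w')=(cz,cz)$.

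\begin{itemize}
\item If $by=-1$, the outer pair also vanishes. The signature is then $\left[\begin{smallmatrix}1&0&0&0\\0&0&d'&0\\0&d'&0&0\\0&0&0&1\end{smallmatrix}\right]$ with $d'=cz/(2\sqrt{by})$, and Lemma~\ref{lm:case d=w} gives hardness unless $d'\in\{\pm1,\pm\ii\}$.
\item If $by\neq -1$, the signature has a single zero pair with the remaining entries nonzero. This is case ${\sf N}=2$ in a single pair, in its rotated form, so Lemma~\ref{lm:1cdwz1} applies after rotation: it is \numP-hard unless the new signature is a matchgate. That condition is a polynomial equation in $by$ and $cz$.
\end{itemize}

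We play the same game with $M(f)NM(f)$, which gives the outer entries $b+b$ and $1+b^2$-type combinations, and with the $\frac{\pi}{2}$-rotated chains. Each construction yields a polynomial constraint on the pair $(by,cz)$; the hope is that these constraints, combined with $cz+by\neq1$, are jointly inconsistent.

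If some exceptional values survive (typically a finite set such as $by=-1$ with $c^2z^2=-4by$, or similar), I would try the interpolation route. Lemma~\ref{lm:get one binary when not exists epsilon} together with Lemma~\ref{lm: get any binary from a binary not fifth root} would give binaries $(0,1,t,0)^T$. Binary modification then makes $b$ and $y$ (or $c$ and $z$) take arbitrary values subject to preserving the product structure, which breaks the coincidence and reduces to the generic case. Alternatively, if we can realize $(0,1,0,0)^T$ and pin, we could reach a signature covered by Lemma~\ref{lm: 1cz1}.

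The main obstacle is bookkeeping these leftover exceptional parameter values. I expect one exceptional family in which every gadget lands in $\M$ or $\widehat{\M}$, and handling it will require a second-order gadget (composing the normalized signature again, as in Claims~A--C of Lemma~\ref{lm:1cdwz1}) or the M\"obius interpolation.
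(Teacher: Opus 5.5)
You picked the right gadgets: the paper's proof uses exactly $M(f)NM(f^{\pi})$ and the rotated chain $M(f^{\frac{\pi}{2}})NM(f^{\frac{3\pi}{2}})$. But the proposal stops where the proof begins. You never derive the two constraints and combine them. You only hope they are inconsistent, and you then predict a surviving exceptional family that would need M\"obius interpolation or second-order gadgets. No such family exists.

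Your fallback is also not available as stated. Lemma~\ref{lm: get any binary from a binary not fifth root} assumes $acdzw\neq 0$, but here the inner pair $(d,w)=(0,0)$ stays the inner pair under every rotation. Lemma~\ref{lm:get one binary when not exists epsilon} does not apply when $(b,c)=\pm(y,z)$.

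There is also a computational slip. In fact $M(f)NM(f^{\pi})=\left[\begin{smallmatrix}2b&0&0&1+by\\0&0&c^2&0\\0&z^2&0&0\\1+by&0&0&2y\end{smallmatrix}\right]$, so the inner pair is $(c^2,z^2)$, not $(cz,cz)$; the pair $(cz,cz)$ comes from $M(f)NM(f)$. Only the product $c^2z^2$ enters the conditions, so your numbers survive. However, in the branch $by=-1$ the signature falls under Lemma~\ref{lm: 1dw1}, not Lemma~\ref{lm:case d=w}.

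The missing step is short.
\begin{itemize}
\item Normalize the corners with Lemma~\ref{lm:let a=x} (the new corner $a'$ has $a'^2=4by$) and rotate by $\frac{\pi}{2}$. For $by\neq-1$, the first gadget is covered by Lemma~\ref{lm:1cdwz1}. It is \numP-hard unless $4by=(1+by)^2-c^2z^2$, i.e.\ $cz=\pm(1-by)$.
\item Symmetrically, for $cz\neq -1$, the second gadget $\left[\begin{smallmatrix}2z&0&0&1+cz\\0&0&b^2&0\\0&y^2&0&0\\1+cz&0&0&2c\end{smallmatrix}\right]$ is \numP-hard unless $by=\pm(1-cz)$.
\item A plus sign in either relation gives $cz+by=1$, i.e.\ $f\in\M$. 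Minus signs in both give $cz=by-1$ and $by=cz-1$, which is impossible.
\item If $by=-1$, the first gadget has two zero pairs, and Lemma~\ref{lm: 1dw1} gives hardness unless $c^2z^2=\pm 4$. Then $cz\neq-1$, so the second gadget forces $(1-cz)^2=b^2y^2=1$. Since $cz\neq 0$, this gives $cz=2$, and again $cz+by=1$.
\item The case $cz=-1\neq by$ is symmetric.
\end{itemize}
No interpolation and no $\widehat{\M}$ analysis is needed.
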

\begin{proof}
    Rotate $f$ by $\frac{\pi}{2}$ and $\frac{3\pi}{2}$, we have signatures $f^{\frac{\pi}{2}}$ and $f^{\frac{3\pi}{2}}$ on RHS whose signature matrices are given by $M(f^{\frac{\pi}{2}}) = \left[\begin{smallmatrix} 1 & 0 & 0 & z \\ 0 & b & 0 & 0 \\ 0 & 0 & y & 0 \\ c & 0 & 0 & 1\end{smallmatrix}\right]$ and $M(f^{\frac{3\pi}{2}}) = \left[\begin{smallmatrix} 1 & 0 & 0 & c \\ 0 & y & 0 & 0 \\ 0 & 0 & b & 0 \\ z & 0 & 0 & 1\end{smallmatrix}\right]$ respectively. 
    Connect $f^{\frac{\pi}{2}}$ and $f^{\frac{3\pi}{2}}$ via $(\neq_2)^{\otimes 2}$, we have the signature $\Tilde{f}$ whose signature matrix is $M(\Tilde{f}) = M(f^{\frac{\pi}{2}})NM(f^{\frac{3\pi}{2}})=
    \left[\begin{smallmatrix} 2z & 0 & 0 & 1+cz \\ 0 & 0 & b^2 & 0 \\ 0 & y^2 & 0 & 0 \\ 1+cz & 0 & 0 & 2c\end{smallmatrix}\right]$. We divide our discussions depending on whether $cz= -1$. 
    
    If $cz \neq -1$, then $1+cz \neq 0$. By Lemma~\ref{lm:let a=x} and Lemma~\ref{lm: 1bcdw1} (after rotation by $\frac{\pi}{2}$), we know $\plholant{\neq_2}{\Tilde{f}}$ is \numP-hard and thus $\plholant{\neq_2}{f}$ is \numP-hard unless $4cz = (1+cz)^2-b^2y^2$, which implies $by = 1-cz$ or $by = cz-1$. If $by = 1-cz$, then by Lemma~\ref{lm:is matchgate} $f\in \mathscr{M}$ and the problem $\plholant{\neq_2}{f}$ is thus tractable. Therefore when $cz \neq -1$, it remains to consider the case $by = cz-1$. 
    If we rotate $f$ by $\pi$, we have signature $f^{\pi}$ on RHS whose signature matrix is $M(f^{\pi}) = \left[\begin{smallmatrix} 1 & 0 & 0 & y \\ 0 & z & 0 & 0 \\ 0 & 0 & c & 0 \\ b & 0 & 0 & 1\end{smallmatrix}\right]$. 
    Connect $f$ and $f^{\pi}$ via $(\neq_2)^{\otimes 2}$, we have signature $\hat{f}$ on RHS whose signature matrix is $M(\hat{f}) = \left[\begin{smallmatrix} 2b & 0 & 0 & 1+by \\ 0 & 0 & c^2 & 0 \\ 0 & z^2 & 0 & 0 \\ 1+by & 0 & 0 & 2y\end{smallmatrix}\right]$. Plug into $by+1 = cz$, we have $M(\hat{f}) = \left[\begin{smallmatrix} 2b & 0 & 0 & cz \\ 0 & 0 & c^2 & 0 \\ 0 & z^2 & 0 & 0 \\ cz & 0 & 0 & 2y\end{smallmatrix}\right]$. By Lemma~\ref{lm:let a=x} and Lemma~\ref{lm: 1bcdw1} (after rotation by $\frac{\pi}{2}$), we know $\plholant{\neq_2}{\hat{f}}$ is \numP-hard and therefore \plholant{\neq_2}{f} is \numP-hard.

    If $cz = -1$, then $M(\Tilde{f}) = \left[\begin{smallmatrix} 2z & 0 & 0 & 0 \\ 0 & 0 & b^2 & 0 \\ 0 & y^2 & 0 & 0 \\ 0 & 0 & 0 & 2c\end{smallmatrix}\right]$. By Lemma~\ref{lm:let a=x} and Lemma~\ref{lm: 1dw1}, we know $\plholant{\neq_2}{\Tilde{f}}$ is \numP-hard and thus \plholant{\neq_2}{f} is \numP-hard unless $b^2y^2 = \pm 4cz = \mp 4$. In all cases, $1+by \neq 0$ since $|by|=2$. Therefore by Lemma~\ref{lm:let a=x} and Lemma~\ref{lm: 1bcdw1} (after rotation by $\frac{\pi}{2}$),  we know $\plholant{\neq_2}{\hat{f}}$ is \numP-hard and therefore \plholant{\neq_2}{f} is \numP-hard unless $4by = (1+by)^2-c^2z^2$, which implies $(1-by)^2 = 1$. Since $by \neq 0$, we have $by=2$, and thus $cz+by = 1$ and $f\in \mathscr{M}$. This finishes our proof.
\end{proof}

This finishes our discussion when ${\sf N}=2$.


\subsection{${\sf N=3}:$ three zero entries}\label{subsec: N=3}

There are exactly three nonzero entries within $\{(b,y),(c,z),(d,w)\}$.
If two of the three nonzero entries are in the same pair, by rotation symmetry, it suffices to consider the following four cases:
(i) $M(f) = \left[\begin{smallmatrix} 1 & 0 & 0 & 0 \\ 0 & c & d & 0 \\ 0 & w & 0 & 0 \\ 0 & 0 & 0 & 1\end{smallmatrix}\right]$ with $cdw \neq0$,
(ii) $M(f) = \left[\begin{smallmatrix} 1 & 0 & 0 & 0 \\ 0 & c & 0 & 0 \\ 0 & w & z & 0 \\ 0 & 0 & 0 & 1\end{smallmatrix}\right]$ with $cwz \neq0$, 
(iii) $M(f) = \left[\begin{smallmatrix} 1 & 0 & 0 & 0 \\ 0 & c & d & 0 \\ 0 & 0 & z & 0 \\ 0 & 0 & 0 & 1\end{smallmatrix}\right]$ with $cdz \neq0$,
(iv) $M(f) = \left[\begin{smallmatrix} 1 & 0 & 0 & b \\ 0 & c & 0 & 0 \\ 0 & 0 & z & 0 \\ 0 & 0 & 0 & 1\end{smallmatrix}\right]$ with $bcz \neq0$.
If the three nonzero entries are in different pairs, by rotational symmetry, it suffices to consider the following two cases:
(v) $M(f) = \left[\begin{smallmatrix} 1 & 0 & 0 & b \\ 0 & c & d & 0 \\ 0 & 0 & 0 & 0 \\ 0 & 0 & 0 & 1\end{smallmatrix}\right]$ with $bcd \neq0$,
(vi) $M(f) = \left[\begin{smallmatrix} 1 & 0 & 0 & 0 \\ 0 & c & d & 0 \\ 0 & 0 & 0 & 0 \\ y & 0 & 0 & 1\end{smallmatrix}\right]$ with $cdy \neq0$.

The following lemma is proved in~\cite{CaiF17}.
\begin{lemma}\label{lm:general eight vertex model N=3}
    If $M(f) =  \left[\begin{smallmatrix} 1 & 0 & 0 & 0 \\ 0 & c & d & 0 \\ 0 & w & 0 & 0 \\ 0 & 0 & 0 & 1\end{smallmatrix}\right]$ with $cdw \neq 0$, then \holant{\neq_2}{f} is \numP-hard.
\end{lemma}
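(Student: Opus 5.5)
The plan is to use a single gadget that produces a six-vertex signature, and then apply the non-planar dichotomy, Theorem~\ref{thm: general eight-vertex}. Planarity does not matter here, so any gadget is allowed; the connection gadget via $(\neq_2)^{\otimes2}$ is enough.

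First, rotate $f$ by $\pi$. This gives $M(f^{\pi})=\left[\begin{smallmatrix}1&0&0&0\\0&0&d&0\\0&w&c&0\\0&0&0&1\end{smallmatrix}\right]$. Connect $f$ and $f^{\pi}$ via $(\neq_2)^{\otimes 2}$ to obtain $g$ on the RHS. Right-multiplication by $N$ reverses the column order, so $M(f)N=\left[\begin{smallmatrix}0&0&0&1\\0&d&c&0\\0&0&w&0\\1&0&0&0\end{smallmatrix}\right]$. A direct multiplication then gives
$$M(g)=M(f)NM(f^{\pi})=\left[\begin{smallmatrix}0&0&0&1\\0&cw&c^2+d^2&0\\0&w^2&cw&0\\1&0&0&0\end{smallmatrix}\right].$$
This is already a six-vertex signature with $a=x=0$, so $\holant{\neq_2}{g}\le_T^p\holant{\neq_2}{f}$.

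Next, apply the $ax=0$ case of Theorem~\ref{thm: general eight-vertex} to $g$. The outer pair $(b,y)=(1,1)$ has no zero, so the third exceptional case is excluded. It remains to rule out $g\in\mathscr{P}$ and $g\in\mathscr{A}$. Since $cw\neq0$ and $w^2\neq0$, the nonzero entries of $g$ are $1,1,cw,cw,w^2$, together with $c^2+d^2$ when that is nonzero. Hence $|\mathrm{supp}(g)|\in\{5,6\}$, which is not a power of $2$. By the Remark following Lemma~\ref{lm:is product type?}, $g\notin\mathscr{P}\cup\mathscr{A}$. Therefore $\holant{\neq_2}{g}$ is \numP-hard, and so is $\holant{\neq_2}{f}$.

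The only step needing care is the matrix product. In particular, the entry $c^2+d^2$ may vanish, which is exactly why the support size is $5$ or $6$ rather than fixed. Beyond that there is no real obstacle. If the support argument failed for some reason, the fallback would be to verify directly that $f$ is not $\mathscr{P}$-, $\mathscr{A}$- or $\mathscr{L}$-transformable using the $ax\neq0$ case of the theorem, but that is much messier.
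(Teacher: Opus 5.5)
Your proof is correct. It does not follow the paper, which gives no argument for this lemma and simply cites \cite{CaiF17}; yours is a direct, self-contained derivation.

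The gadget $M(f)NM(f^{\pi})$ you build is the same one the paper uses in the first step of Lemma~\ref{lm: 1cdw1}, and your $M(g)$ matches the paper's computation. In the planar setting this gadget only gives hardness unless $g\in\M$, i.e.\ $dw=\pm1$. That is why the paper later needs this lemma, together with interpolating $\mathcal{S}$ and realizing $(=_2)^{\otimes 2}$ on the left, to settle the remaining case $dw=1$.

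Without planarity the matchgate exception disappears. For $g$ the outer pair is $(b,y)=(1,1)$ and $|\mathrm{supp}(g)|\in\{5,6\}$, so the non-planar six-vertex dichotomy gives hardness for every $cdw\neq 0$ in one step. Your approach therefore shows that this lemma is really a corollary of the six-vertex classification, whereas the paper leaves it as a black box.

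One point of presentation: this lemma is itself an ingredient imported from \cite{CaiF17}. It is therefore cleaner to base your argument only on the $ax=0$ part of Theorem~\ref{thm: general eight-vertex}, or equivalently on the non-planar clause of Theorem~\ref{thm:planar six-vertex}, which rests on the independent six-vertex classification. Since that is the only part you actually use, there is no real circularity. The fallback you mention at the end is not needed.
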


We now show that the only additional tractable class is given by matchgates.


\begin{lemma}\label{lm: 1cdw1}
    If $M(f) = \left[\begin{smallmatrix} 1 & 0 & 0 & 0 \\ 0 & c & d & 0 \\ 0 & w & 0 & 0 \\ 0 & 0 & 0 & 1\end{smallmatrix}\right]$ with $cdw \neq 0$, then \plholant{\neq_2}{f} is \numP-hard unless $f\in \mathscr{M}$, or more explicitly $dw=-1$, in which cases \plholant{\neq_2}{f} is computable in polynomial time.
\end{lemma}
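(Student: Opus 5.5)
Tractability is immediate: by Lemma~\ref{lm:is matchgate}, $f$ satisfies the Even Parity Condition and $\det M_{\rm Out}(f)=1$, $\det M_{\rm In}(f)=-dw$, so $f\in\M$ iff $dw=-1$. For hardness I would mirror the proof of Lemma~\ref{lm: 1bcdw1}: first use self-compositions and rotations to land in the planar six-vertex model, and then handle the single residual case $dw=1$ by interpolating the crossover and invoking the non-planar result, Lemma~\ref{lm:general eight vertex model N=3}.

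\textbf{Step 1.} Connect $f$ and $f^\pi$ via $(\neq_2)^{\otimes 2}$. Here $M(f^\pi)=\left[\begin{smallmatrix}1&0&0&0\\0&0&d&0\\0&w&c&0\\0&0&0&1\end{smallmatrix}\right]$, so
\[
M(f)NM(f^\pi)=\left[\begin{smallmatrix}0&0&0&1\\0&cw&c^2+d^2&0\\0&w^2&cw&0\\1&0&0&0\end{smallmatrix}\right].
\]
This is a six-vertex signature whose support has size $5$ or $6$, so it is neither affine nor product type. The same $H^{\otimes2}$ computation as in Lemma~\ref{lm: 1bcdw1} should show it is not in $\widehat{\M}$, using $cw\neq0$. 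Theorem~\ref{thm:planar six-vertex} then gives \numP-hardness unless the signature is in $\M$, i.e.\ $(cw)^2-w^2(c^2+d^2)=-1$, which means $dw=\pm1$. The case $dw=-1$ is the tractable one, so only $dw=1$ remains.

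\textbf{Step 2 ($dw=1$).} Since $\det M_{\rm In}(f)=-dw\neq0$ and $(c,z)=(c,0)$ has exactly one zero, I expect Corollary~\ref{cor:get any binary when exactly three nonzero} (or Lemma~\ref{lm: get any binary from a binary not fifth root}) to provide arbitrary $(0,1,t,0)^T$ on the RHS, as in Lemmas~\ref{lm: 1bcdwy1} and~\ref{lm: 1bcdw1}.
\begin{itemize}
\item Use binary modifications to normalize $f$ to $\left[\begin{smallmatrix}1&0&0&0\\0&c'&1&0\\0&1&0&0\\0&0&0&1\end{smallmatrix}\right]$, possibly rotated by $\frac{\pi}{2}$.
\item Build a chain gadget as in Lemma~\ref{lm: interpolate S for N=2}, whose entries grow linearly in $s$, and Vandermonde-interpolate $\mathcal{S}$.
\item Use $(0,1,0,0)^T$ modifications and pinning, exactly as in Lemma~\ref{lm: 1bcdw1}, to obtain $(=_2)^{\otimes2}$ on the LHS.
\end{itemize}
Lemma~\ref{lm: swap gate} then gives $\holant{\neq_2}{f}\le_T^p\plholant{\neq_2}{f}$, and Lemma~\ref{lm:general eight vertex model N=3} supplies \numP-hardness.

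\textbf{Main obstacle.} The hard part is the interpolation of $\mathcal{S}$ in Step 2. Lemma~\ref{lm: interpolate S for N=2} uses a specific matrix with an outer entry $k$, which $f$ lacks. If no chain gives unbounded distinct values, I would fall back on a different move. I would modify $x_1,x_4$ by $(0,1,0,0)^T$ to get $\left[\begin{smallmatrix}1&0\\0&0\end{smallmatrix}\right]\otimes\left[\begin{smallmatrix}1&0\\0&c\end{smallmatrix}\right]$, and form $h=(1,0,0,c)^{\otimes 2}$ as in Lemma~\ref{lm:1cdwz1}. Then $M(f)NM(h)$ is a six-vertex signature with inner matrix $\left[\begin{smallmatrix}cd&c^2\\0&cw\end{smallmatrix}\right]$ (up to ordering), which has exactly three nonzero inner entries. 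Corollary~\ref{cor:exactly 3 nonzero entries} would then give hardness unless it lies in $\M\cup\widehat{\M}$. Membership in $\M$ requires $c^2dw=\pm1$ combined with $dw=1$, and I would check that $\widehat{\M}$ is excluded. Residual cases such as $c^2=\pm1$ would be eliminated by a further composition.
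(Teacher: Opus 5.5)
Your proposal is correct and is essentially the paper's proof. Step~1 is identical. For $dw=1$ the paper likewise reduces to $\left[\begin{smallmatrix}1&0&0&0\\0&k&1&0\\0&1&0&0\\0&0&0&1\end{smallmatrix}\right]$ with $k=c/d$, interpolates $\mathcal{S}$, builds $(=_2)^{\otimes 2}$ on the LHS as in Lemma~\ref{lm: 1bcdw1}, and concludes with Lemmas~\ref{lm: swap gate} and~\ref{lm:general eight vertex model N=3}; the only difference is that the paper normalizes via the self-loop binary $(0,1,d^2,0)^T$ and Lemma~\ref{lm:let a=x} rather than Corollary~\ref{cor:get any binary when exactly three nonzero}.

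The obstacle you flag is not real. Chaining $2s+1$ unrotated copies of this matrix through $N$ keeps the outer block equal to the identity and gives inner entry $(2s+1)k$, so no outer $k$ and no fallback are needed. Your fallback would also go through once corrected: $M(f)NM(h)\in\M$ iff $dw=-1$ (not $c^2dw=\pm1$), and it is never in $\widehat{\M}$ because its inner off-diagonal entries $c^2$ and $0$ are not equal up to sign.
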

\begin{proof}
    Rotate $f$ by $\pi$ we have the signature $f^{\pi}$ on RHS with signature matrix $M(f^{\pi}) = \left[\begin{smallmatrix} 1 & 0 & 0 & 0 \\ 
    0 & 0 & d & 0 \\
    0 & w & c & 0 \\ 
    0 & 0 & 0 & 1
    \end{smallmatrix}\right]$. Connect $f$ and $f^{\pi}$ via $(\neq_2)^{\otimes 2}$, we have the signature $\Tilde{f}$ on RHS whose signature matrix is $M(\Tilde{f}) = M(f)NM(f^{\pi}) = \left[\begin{smallmatrix} 0 & 0 & 0 & 1 \\ 
    0 & cw & c^2+d^2 & 0 \\ 
    0 & w^2 & cw & 0 \\ 
    1 & 0 & 0 & 0
    \end{smallmatrix}\right]$.
    The support size of $\Tilde{f}$ is either $5$ or $6$, so $\Tilde{f}$ doesn't have affine support.
    By Theorem~\ref{thm:planar six-vertex}, \plholant{\neq_2}{\Tilde{f}} is \numP-hard and thus \plholant{\neq_2}{f} is \numP-hard unless $\Tilde{f} \in \widehat{\mathscr{M}}$ or $\Tilde{f} \in \mathscr{M}$.
    If $\Tilde{f} \in \widehat{\mathscr{M}}$, then we have $\title{f}_1\in \M$, where $\Tilde{f}_1$ has the signature matrix
    $$
    M(\Tilde{f}_1)=
    H^{\otimes 2} M(\Tilde{f}) H^{\otimes 2} = \frac{1}{4}\left[\begin{smallmatrix} 
    2+c^2+d^2+2cw+w^2 & c^2+d^2-w^2 & -c^2-d^2+w^2 & 2-c^2-d^2-2cw-w^2 \\ 
    -c^2-d^2+w^2 & -2-c^2-d^2+2cw-w^2 & -2+c^2+d^2-2cw+w^2 & c^2+d^2-w^2 \\ 
    c^2+d^2-w^2 & -2+c^2+d^2-2cw+w^2 & -2-c^2-d^2+2cw-w^2 & -c^2-d^2+w^2 \\
    2-c^2-d^2-2cw-w^2 & -c^2-d^2+w^2 & c^2+d^2-w^2 & 2+c^2+d^2+2cw+w^2\end{smallmatrix}\right].
    $$
    If $\Tilde{f}_1$ satisfies the Odd Parity Condition, then $2+c^2+d^2+2cw+w^2=2-c^2-d^2-2cw-w^2= 0$, which implies $4=0$, contradiction. 
    Thus, it satisfies the Even Parity Condition, 
    but then the determinant condition reduces to $cw=0$, a contradiction again. Thus we conclude that $\Tilde{f} \notin \widehat{\mathscr{M}}$. It remains to consider the case when $\Tilde{f} \in \mathscr{M}$. By Lemma~\ref{lm:is matchgate}, it implies that $d^2w^2 = 1$, or equivalently $dw = \pm 1$. If $dw = -1$, then $f \in \mathscr{M}$ by Lemma~\ref{lm:is matchgate} and thus it is tractable. We will from now on assume $dw = 1$ in this proof.

    Now 
    $M(f) = \left[\begin{smallmatrix} 1 & 0 & 0 & 0 \\ 
    0 & c & d & 0\\ 
    0 & \frac{1}{d} & 0 & 0\\ 
    0 & 0 & 0 & 1\end{smallmatrix}\right].$
    Rotate $f$ by $\frac{\pi}{2}$ to get the signature $f^{\frac{\pi}{2}}$ on RHS whose signature matrix is 
    $M(f^{\frac{\pi}{2}}) = \left[\begin{smallmatrix} 1 & 0 & 0 & 0 \\ 
    0 & 0 & \frac{1}{d} & 0 \\ 
    0 & d & 0 & 0 \\ 
    c & 0 & 0 & 1\end{smallmatrix}\right]$. 
    Connect its variables $x_3$ and $x_4$ using $(\neq_2)$, we get the binary signature $(0,1,d^2,0)^T$ on RHS. Do binary modification using $(0,1,d^2,0)^T$ to the variable $x_4$, we get $\left[\begin{smallmatrix}1 & 0 & 0 & 0 \\ 
    0 & 0 & d & 0 \\ 
    0 & d & 0 & 0 \\ 
    c & 0 & 0 & d^2
    \end{smallmatrix}\right]$ on RHS. By Lemma~\ref{lm:let a=x}, it suffices to consider the complexity of $\plholant{\neq_2}{\hat{f}}$ where $M(\hat{f}) = \left[\begin{smallmatrix} d & 0 & 0 & 0 \\ 
    0 & 0 & d & 0 \\ 
    0 & d & 0 & 0 \\ 
    c & 0 & 0 & d
    \end{smallmatrix}\right]$. Since dividing by a constant does not change the complexity, we divide the signature matrix of $\hat{f}$ by $d$ but still call it $\hat{f}$. Now $M(\hat{f}) = \left[\begin{smallmatrix} 1 & 0 & 0 & 0 \\ 
    0 & 0 & 1 & 0 \\ 
    0 & 1 & 0 & 0 \\ 
    k & 0 & 0 & 1
    \end{smallmatrix}\right]$ where $k = \frac{c}{d}$. Rotate $\hat{f}$ by $\frac{3\pi}{2}$ we have the signature $\hat{f}^{\frac{3\pi}{2}}$ whose signature matrix is $M(\hat{f}^{\frac{3\pi}{2}}) = \left[\begin{smallmatrix} 1 & 0 & 0 & 0 \\ 0 & k & 1 & 0 \\ 0 & 1 & 0 & 0 \\ 0 & 0 & 0 & 1\end{smallmatrix}\right]$. 
    Connect $2s+1$ many $\hat{f}^{\frac{3\pi}{2}}$ with $2s$ many $(\neq_2) \otimes (\neq_2)$, we have the signatures on RHS with signature matrices $\left[\begin{smallmatrix} 1 & 0 & 0 & 0 \\ 0 & (2s+1)k & 1 & 0 \\ 0 & 1 & 0 & 0 \\ 0 & 0 & 0 & 1\end{smallmatrix}\right]$. 
    Therefore we can interpolate the signature $\mathcal{S}$. 
    Moreover, by Corollary~\ref{cor:get any binary when exactly three nonzero}, we have the binary signatures $(0,1,\frac{1}{k},0)^T$ and $(0,1,0,0)^T$. 
    By doing binary modification with $(0,1,\frac{1}{k},0)^T$ on variable $x_3$ of $\hat{f}^{\frac{3\pi}{2}}$ and then with $(0,1,0,0)^T$ on variables $x_1$ and $x_4$ of $\hat{f}^{\frac{3\pi}{2}}$, we get the signature $\left[\begin{smallmatrix} 1 & 0 & 0 & 0 \\ 0 & 1 & 0 & 0 \\ 0 & 0 & 0 & 0 \\ 0 & 0 & 0 & 0\end{smallmatrix}\right]$ on RHS. 
    By the same argument in Lemma~\ref{lm: 1bcdw1}, we can realize $(=_2)^{\otimes 2}$ on LHS.
    Then by Lemma~\ref{lm: swap gate} and Lemma~\ref{lm:general eight vertex model N=3}, we have $\plholant{\neq_2}{\hat{f}} \equiv_T \holant{\neq_2}{\hat{f}}$ is \numP-hard when $dw=1$. This completes the proof.
\end{proof}

\begin{lemma}\label{lm: 1cwz1}
    $M(f) = \left[\begin{smallmatrix} 1 & 0 & 0 & 0 \\ 0 & c & 0 & 0 \\ 0 & w & z & 0 \\ 0 & 0 & 0 & 1\end{smallmatrix}\right]$ with $cwz \neq0$, then \plholant{\neq_2}{f} is \numP-hard unless $f\in\mathscr{M}$, or more explicitly $cz=1$, in which cases \plholant{\neq_2}{f} is computable in polynomial time.
\end{lemma}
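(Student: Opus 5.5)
The plan is to follow the proof of Lemma~\ref{lm: 1bcdz1}. First, compose $f$ with its rotations via $(\neq_2)^{\otimes 2}$ to get a planar six-vertex signature and apply Theorem~\ref{thm:planar six-vertex}. This should leave only $cz=\pm1$. The case $cz=-1$ is then killed by a second gadget that lands in Case~I. Tractability when $cz=1$ is immediate from Lemma~\ref{lm:is matchgate}, since $\det M_{\rm Out}(f)=1$ and $\det M_{\rm In}(f)=cz$.

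\emph{Step 1.} Rotate $f$ by $\pi$, which gives $M(f^\pi)=\left[\begin{smallmatrix}1&0&0&0\\0&z&0&0\\0&w&c&0\\0&0&0&1\end{smallmatrix}\right]$. Then form
\[
M(f)NM(f^\pi)=\left[\begin{smallmatrix}0&0&0&1\\0&cw&c^2&0\\0&w^2+z^2&cw&0\\1&0&0&0\end{smallmatrix}\right].
\]
This is already a six-vertex signature $\tilde f$ with $cw\,c^2\neq 0$, so its support has size $5$ or $6$ and is not affine. Hence $\tilde f\notin\mathscr{P}\cup\mathscr{A}$, and condition 2 of Theorem~\ref{thm:planar six-vertex} fails. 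In the degenerate case $w^2+z^2=0$ the inner matrix has exactly three nonzero entries, and Corollary~\ref{cor:exactly 3 nonzero entries} applies directly.
\begin{itemize}
\item Next I would exclude $\tilde f\in\widehat{\mathscr{M}}$ by computing $H^{\otimes2}M(\tilde f)H^{\otimes2}$, as in Lemmas~\ref{lm: 1bcdw1} and~\ref{lm: 1cdw1}. The Odd Parity Condition forces both corner entries $\tfrac14(2\pm\cdots)$ to vanish, which gives $4=0$. The Even Parity Condition should force $cw=0$. Both are contradictions.
\item Lemma~\ref{lm:is matchgate} then gives $\tilde f\in\mathscr{M}$ iff $-1=c^2w^2-c^2(w^2+z^2)=-c^2z^2$, i.e.\ $cz=\pm1$.
\item If $cz=1$, then $f\in\mathscr{M}$ and we are done.
\end{itemize}

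\emph{Step 2: the case $cz=-1$.} Use $M(f^{\frac{\pi}{2}})=\left[\begin{smallmatrix}1&0&0&z\\0&0&w&0\\0&0&0&0\\c&0&0&1\end{smallmatrix}\right]$ and $M(f^{\frac{3\pi}{2}})=\left[\begin{smallmatrix}1&0&0&c\\0&0&w&0\\0&0&0&0\\z&0&0&1\end{smallmatrix}\right]$, and compute
\[
M(f^{\frac{\pi}{2}})NM(f^{\frac{3\pi}{2}})=\left[\begin{smallmatrix}2z&0&0&1+cz\\0&0&w^2&0\\0&0&0&0\\1+cz&0&0&2c\end{smallmatrix}\right].
\]
When $cz=-1$ the off-diagonal outer entries vanish. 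Lemma~\ref{lm:let a=x} and normalization then give a signature of the form in Lemma~\ref{lm: 1dw1}, with inner pair $(w^2/\sqrt{4cz},\,0)$. Exactly one entry of this pair is zero, so neither $d=w=0$ nor $dw=\pm1$ holds, and Lemma~\ref{lm: 1dw1} gives \numP-hardness.

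The main obstacle is the $\widehat{\mathscr{M}}$ exclusion in Step~1. It is a routine but error-prone $H^{\otimes 2}$ computation. I expect it to mirror the one in Lemma~\ref{lm: 1cdw1} with $d^2$ replaced by $z^2$, with the determinant condition collapsing to $cw=0$.
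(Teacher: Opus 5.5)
Your proposal is correct and follows essentially the same route as the paper. The $f\cdot f^{\pi}$ gadget reduces the problem to the planar six-vertex dichotomy and leaves only $cz=\pm1$, with $\widehat{\M}$ excluded because the $H^{\otimes 2}$ computation forces $cw=0$. The $f^{\frac{\pi}{2}}\cdot f^{\frac{3\pi}{2}}$ gadget then disposes of $cz=-1$ via Lemma~\ref{lm:let a=x} and Lemma~\ref{lm: 1dw1}. Your separate remark about the subcase $w^2+z^2=0$ is harmless but unnecessary, since the general argument already covers it.
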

\begin{proof}
    Rotate $f$ by $\frac{\pi}{2}$, $\pi$ and $\frac{3\pi}{2}$ to get signatures $f^{\frac{\pi}{2}}$, $f^{\pi}$ and $f^{\frac{3\pi}{2}}$ respectively, whose signature matrices are $M(f^{\frac{\pi}{2}}) = \left[\begin{smallmatrix} 
    1 & 0 & 0 & z \\ 
    0 & 0 & w & 0 \\ 
    0 & 0 & 0 & 0 \\ 
    c & 0 & 0 & 1\end{smallmatrix}\right]$, 
    $M(f^{\pi}) = \left[\begin{smallmatrix} 
    1 & 0 & 0 & 0 \\ 
    0 & z & 0 & 0 \\ 
    0 & w & c & 0 \\ 
    0 & 0 & 0 & 1\end{smallmatrix}\right]$ and 
    $M(f^{\frac{3\pi}{2}}) = \left[\begin{smallmatrix} 
    1 & 0 & 0 & c \\ 
    0 & 0 & w & 0 \\ 
    0 & 0 & 0 & 0 \\ 
    z & 0 & 0 & 1\end{smallmatrix}\right]$. 
    Connect $f$ and $f^{\pi}$ via $(\neq_2)^{\otimes2}$ we get a signature $\Tilde{f}$ whose signature matrix is $M(\Tilde{f} )=M(f)NM(f^\pi) = 
    \left[\begin{smallmatrix} 
    0 & 0 & 0 & 1 \\ 
    0 & cw & c^2 & 0 \\ 
    0 & w^2+z^2 & cw & 0 \\ 
    1 & 0 & 0 & 0\end{smallmatrix}\right]$. 
    Since $cwz \neq 0$, the support size of $\Tilde{f}$ is either 5 or 6, and thus $\Tilde{f}$ does not have affine support. 
    By Theorem~\ref{thm:planar six-vertex}, the problem \plholant{\neq_2}{\Tilde{f}} is \numP-hard and thus the problem \plholant{\neq_2}{f} is \numP-hard unless $\Tilde{f} \in \mathscr{M}$ or $\Tilde{f} \in \widehat{\mathscr{M}}$. By Lemma~\ref{lm:is matchgate}, $\Tilde{f} \in \mathscr{M}$ if and only if $cz = \pm 1$. 
    If $cz = 1$, then $f \in \mathscr{M}$ and therefore $\plholant{\neq_2}{f}$ can be computed in polynomial time. 
    Also, $\Tilde{f} \in \widehat{\mathscr{M}}$ if and only if $H^{\otimes 4}\Tilde{f} \in \mathscr{M}$, which simplifies to $cw=0$, a contradiction. 
    Therefore, we can assume $cw = -1$. 
    Now connect $f^{\frac{\pi}{2}}$ and $f^{\frac{3\pi}{2}}$ via $(\neq_2)^{\otimes2}$ to get a signature $g$ whose signature matrix is $M(g)=M(f^{\frac{\pi}{2}})NM(f^{\frac{3\pi}{2}}) =  
    \left[\begin{smallmatrix} 
    2z & 0 & 0 & 1+cz \\ 
    0 & 0 & w^2 & 0 \\ 
    0 & 0 & 0 & 0 \\ 
    1+cz & 0 & 0 & 2c\end{smallmatrix}\right] = 
    \left[\begin{smallmatrix} 
    2z & 0 & 0 & 0 \\ 
    0 & 0 & w^2 & 0 \\ 
    0 & 0 & 0 & 0 \\ 
    0 & 0 & 0 & 2c\end{smallmatrix}\right]$. 
    By Lemma~\ref{lm:let a=x} and Lemma~\ref{lm: 1dw1}, $\plholant{\neq_2}{g}$ is \numP-hard, and thus $\plholant{\neq_2}{f}$ is \numP-hard.
    This finishes our proof.
\end{proof}

\begin{lemma}\label{lm: 1cdz1}
    If $M(f) = \left[\begin{smallmatrix} 1 & 0 & 0 & 0 \\ 0 & c & d & 0 \\ 0 & 0 & z & 0 \\ 0 & 0 & 0 & 1\end{smallmatrix}\right]$ with $cdz \neq0$, then \plholant{\neq_2}{f} is \numP-hard unless $f\in\mathscr{M}$, or more explicitly $cz=1$, in which cases \plholant{\neq_2}{f} is computable in polynomial time.
\end{lemma}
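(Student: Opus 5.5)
The plan mirrors the proofs of Lemma~\ref{lm: 1cwz1} and Lemma~\ref{lm: 1bcdz1}. First I build a planar six-vertex signature and invoke Theorem~\ref{thm:planar six-vertex}; this forces $cz=\pm1$. The case $cz=-1$ is then killed by a second gadget that lands in Case~I. The rotations are obtained by specializing the general formulas to $b=y=w=0$:
$$M(f^{\frac{\pi}{2}})=\left[\begin{smallmatrix}1&0&0&z\\0&0&0&0\\0&d&0&0\\c&0&0&1\end{smallmatrix}\right],\quad
M(f^{\pi})=\left[\begin{smallmatrix}1&0&0&0\\0&z&d&0\\0&0&c&0\\0&0&0&1\end{smallmatrix}\right],\quad
M(f^{\frac{3\pi}{2}})=\left[\begin{smallmatrix}1&0&0&c\\0&0&0&0\\0&d&0&0\\z&0&0&1\end{smallmatrix}\right].$$

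\textbf{Step 1.} Connect $f$ and $f^\pi$ via $(\neq_2)^{\otimes2}$. This gives
$$M(\tilde f)=M(f)NM(f^\pi)=\left[\begin{smallmatrix}0&0&0&1\\0&dz&c^2+d^2&0\\0&z^2&dz&0\\1&0&0&0\end{smallmatrix}\right],$$
which is already a planar six-vertex signature. Since $dz\neq0$ and $z^2\neq 0$, its support has size 5 or 6. Hence $\tilde f$ has no affine support, and $\tilde f\notin\mathscr P\cup\mathscr A$; condition 4 of the theorem also fails because the inner pair is nonzero.

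By Theorem~\ref{thm:planar six-vertex}, the problem is \numP-hard unless $\tilde f\in\M\cup\widehat{\M}$. By Lemma~\ref{lm:is matchgate}, $\tilde f\in\M$ iff $\det M_{\rm Out}=-1$ equals $\det M_{\rm In}=d^2z^2-z^2(c^2+d^2)=-c^2z^2$, that is, iff $cz=\pm1$.

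For $\widehat{\M}$ I compute $H^{\otimes2}M(\tilde f)H^{\otimes2}$ exactly as in Lemma~\ref{lm: 1bcdw1}. The Odd Parity Condition would force $4=0$. The Even Parity Condition with equal determinants should reduce to $dz=0$, which is a contradiction. This routine computation is the only place I expect any friction, and it is identical in shape to the earlier lemmas.

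If $cz=1$, then $\det M_{\rm Out}(f)=1=cz=\det M_{\rm In}(f)$, so $f\in\M$ and the problem is tractable.

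\textbf{Step 2.} It remains to treat $cz=-1$. Connect $f^{\frac{\pi}{2}}$ and $f^{\frac{3\pi}{2}}$ via $(\neq_2)^{\otimes2}$:
$$M(f^{\frac{\pi}{2}})NM(f^{\frac{3\pi}{2}})=\left[\begin{smallmatrix}2z&0&0&1+cz\\0&0&0&0\\0&d^2&0&0\\1+cz&0&0&2c\end{smallmatrix}\right]=\left[\begin{smallmatrix}2z&0&0&0\\0&0&0&0\\0&d^2&0&0\\0&0&0&2c\end{smallmatrix}\right].$$

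After Lemma~\ref{lm:let a=x} and normalization, this signature has the form of Lemma~\ref{lm: 1dw1}: the outer pairs are $(0,0)$ and the inner pair is $(0,d^2)$. Exactly one entry of the inner pair is nonzero, so it is neither the case $d=w=0$ nor the case $dw=\pm1$. Lemma~\ref{lm: 1dw1} therefore gives \numP-hardness, and hence $\plholant{\neq_2}{f}$ is \numP-hard. This completes the dichotomy: the problem is tractable iff $cz=1$, that is, iff $f\in\M$.
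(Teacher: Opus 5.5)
Your proposal is correct and follows the paper's route, since the paper simply defers to the proof of Lemma~\ref{lm: 1cwz1}: the gadget $M(f)NM(f^{\pi})$ gives a planar six-vertex signature that is \numP-hard unless $cz=\pm1$, and for $cz=-1$ the gadget $M(f^{\frac{\pi}{2}})NM(f^{\frac{3\pi}{2}})$ lands, after Lemma~\ref{lm:let a=x}, in the case of Lemma~\ref{lm: 1dw1} with exactly one nonzero inner entry. The $\widehat{\M}$ check you hedged on does go through: with $p=dz$, $q=c^2+d^2$, $r=z^2$, the Even Parity Condition plus equal determinants gives $8(2p+q+r)=-8(2p-q-r)$, hence $dz=0$, a contradiction.
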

\begin{proof}
    The proof follows exactly as the proof of Lemma~\ref{lm: 1cwz1}.
\end{proof}

\begin{lemma}\label{lm: 1bcz1}
    If $M(f) = \left[\begin{smallmatrix} 1 & 0 & 0 & b \\ 0 & c & 0 & 0 \\ 0 & 0 & z & 0 \\ 0 & 0 & 0 & 1\end{smallmatrix}\right]$ with $bcz \neq0$, then \plholant{\neq_2}{f} is \numP-hard unless $f\in\mathscr{M}$, or more explicitly $cz=1$, in which cases \plholant{\neq_2}{f} is computable in polynomial time.
\end{lemma}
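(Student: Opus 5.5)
The plan is to mirror the proof of Lemma~\ref{lm: 1cwz1}: build a single gadget that lands in a case already classified, and check the membership condition $f\in\mathscr{M}$ separately. First, tractability. By Lemma~\ref{lm:is matchgate}, $f$ satisfies the Even Parity Condition with $\det M_{\rm Out}(f)=1$ and $\det M_{\rm In}(f)=cz$, so $f\in\mathscr{M}$ iff $cz=1$. In that case the problem is tractable, and from now on we assume $cz\neq 1$.

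The key gadget connects $f^{\frac{\pi}{2}}$ and $f^{\frac{3\pi}{2}}$ via $(\neq_2)^{\otimes 2}$. With $d=w=y=0$ we have
\[
M(f^{\frac{\pi}{2}})=\left[\begin{smallmatrix}1&0&0&z\\0&b&0&0\\0&0&0&0\\c&0&0&1\end{smallmatrix}\right]
\quad\text{and}\quad
M(f^{\frac{3\pi}{2}})=\left[\begin{smallmatrix}1&0&0&c\\0&0&0&0\\0&0&b&0\\z&0&0&1\end{smallmatrix}\right].
\]
A direct multiplication should give
\[
M(g)=M(f^{\frac{\pi}{2}})NM(f^{\frac{3\pi}{2}})=\left[\begin{smallmatrix}2z&0&0&1+cz\\0&0&b^2&0\\0&0&0&0\\1+cz&0&0&2c\end{smallmatrix}\right].
\]
We then split on whether $1+cz$ vanishes.

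\emph{Case $cz\neq -1$.} Apply Lemma~\ref{lm:let a=x} to replace the diagonal $(2z,2c)$ by $2\sqrt{cz}$, and normalize. The resulting signature has the following shape:
\begin{itemize}
\item the outer pair $(c,z)$ is a zero pair;
\item the outer pair $(b,y)$ equals $\bigl(\tfrac{1+cz}{2\sqrt{cz}},\tfrac{1+cz}{2\sqrt{cz}}\bigr)$, both entries nonzero;
\item the inner pair is $(b^2/2\sqrt{cz},\,0)$, with exactly one zero.
\end{itemize}
After a suitable rotation (by $\frac{\pi}{2}$ or $\frac{3\pi}{2}$), this nonzero outer pair moves into the $(c,z)$ position and the zero pair into $(b,y)$. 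The signature then has exactly the form of Lemma~\ref{lm: 1cdz1} (or of Lemma~\ref{lm: 1cwz1}, depending on which inner entry is nonzero). That lemma gives \numP-hardness unless the product of the two outer entries equals $1$, that is, unless $(1+cz)^2=4cz$. This forces $(1-cz)^2=0$, so $cz=1$, which we have excluded. Hence the problem is \numP-hard.

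\emph{Case $cz=-1$.} Here $g$ becomes $\left[\begin{smallmatrix}2z&0&0&0\\0&0&b^2&0\\0&0&0&0\\0&0&0&2c\end{smallmatrix}\right]$. After Lemma~\ref{lm:let a=x} and normalization, this is the form of Lemma~\ref{lm: 1dw1}, with inner pair $(d,w)=(b^2/(2\sqrt{cz}),0)$. Since exactly one entry of this pair is zero, we have neither $d=w=0$ nor $dw=\pm1$, so Lemma~\ref{lm: 1dw1} gives \numP-hardness.

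The main point to verify carefully is the rotation bookkeeping in the first case. We must confirm that the rotated normalized $g$ really matches the matrix shape required by Lemma~\ref{lm: 1cdz1} or Lemma~\ref{lm: 1cwz1}, including which inner entry is the nonzero one. We must also confirm that the tractability condition of that lemma translates exactly to $(1+cz)^2=4cz$, taking into account the normalization from Lemma~\ref{lm:let a=x}. If the shapes fail to match, the fallback is to use instead the gadget $M(f)NM(f^{\pi})$. By Lemma~\ref{lm:let a=x}, that gadget yields a planar six-vertex signature with nonzero outer pair $(1,1)$ and inner pair $(c^2,z^2)$. We would then check it directly against Theorem~\ref{thm:planar six-vertex}.
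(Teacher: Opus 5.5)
Your proof is correct and takes the same approach as the paper's: the gadget $M(f^{\frac{\pi}{2}})NM(f^{\frac{3\pi}{2}})$, the split on whether $1+cz=0$, and the reductions via Lemma~\ref{lm:let a=x}. To resolve your hedge, the rotated normalized signature is $\left[\begin{smallmatrix}1&0&0&0\\0&\beta&0&0\\0&\gamma&\beta&0\\0&0&0&1\end{smallmatrix}\right]$ with $\beta=\frac{1+cz}{2\sqrt{cz}}$ and $\gamma=\frac{b^2}{2\sqrt{cz}}$, which is the shape of Lemma~\ref{lm: 1cwz1} (the one the paper uses). For $cz=-1$, your appeal to Lemma~\ref{lm: 1dw1} is the correct citation, since the inner pair has exactly one zero; the paper's reference to Lemma~\ref{lm:case d=w} at that step appears to be a slip.
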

\begin{proof}
    Rotate $f$ by $\frac{\pi}{2}$ and $\frac{3\pi}{2}$ to get signatures $f^{\frac{\pi}{2}}$ and $f^{\frac{3\pi}{2}}$ respectively, whose signature matrices are $M(f^{\frac{\pi}{2}}) = \left[\begin{smallmatrix} 
    1 & 0 & 0 & z \\ 
    0 & b & 0 & 0 \\ 
    0 & 0 & 0 & 0 \\ 
    c & 0 & 0 & 1\end{smallmatrix}\right]$ and 
    $M(f^{\frac{3\pi}{2}}) = \left[\begin{smallmatrix} 
    1 & 0 & 0 & c \\ 
    0 & 0 & 0 & 0 \\ 
    0 & 0 & b & 0 \\ 
    z & 0 & 0 & 1\end{smallmatrix}\right]$. 
    Connect $f^{\frac{\pi}{2}}$ with $f^{\frac{3\pi}{2}}$ via $(\neq_2)^{\otimes2}$ to get a signature $\Tilde{f}$ whose signature matrix is $M(\Tilde{f})=M(f^{\frac{\pi}{2}})NM(f^{\frac{3\pi}{2}}) =  
    \left[\begin{smallmatrix} 
    2z & 0 & 0 & 1+cz \\ 
    0 & 0 & b^2 & 0 \\ 
    0 & 0 & 0 & 0 \\ 
    1+cz & 0 & 0 & 2c
    \end{smallmatrix}\right]$ which after rotation by $\frac{\pi}{2}$ amounts to the signature 
    $
    M(\Tilde{f}^{\frac{\pi}{2}})=
    \left[\begin{smallmatrix} 
    2z & 0 & 0 & 0 \\ 
    0 & 1+cz & 0 & 0 \\ 
    0 & b^2 & 1+cz & 0 \\ 
    0 & 0 & 0 & 2c\end{smallmatrix}\right]$. 
    If $1+cz=0$, then by Lemma~\ref{lm:let a=x} and Lemma~\ref{lm:case d=w}, $\plholant{\neq_2}{\Tilde{f}}$ is \numP-hard, and thus $\plholant{\neq_2}{f}$ is \numP-hard.
    If $1+cz\neq 0$,
    by Lemma~\ref{lm:let a=x} and Lemma~\ref{lm: 1cwz1}, $\plholant{\neq_2}{\Tilde{f}}$ is \numP-hard, and thus $\plholant{\neq_2}{f}$ is \numP-hard, unless $(1+cz)^2=4cz \Leftrightarrow cz =1$. This finishes the proof.
\end{proof}

Now we handle the cases
(v) $M(f) = \left[\begin{smallmatrix} 1 & 0 & 0 & b \\ 0 & c & d & 0 \\ 0 & 0 & 0 & 0 \\ 0 & 0 & 0 & 1\end{smallmatrix}\right]$ with $bcd \neq0$,
and
(vi) $M(f) = \left[\begin{smallmatrix} 1 & 0 & 0 & 0 \\ 0 & c & d & 0 \\ 0 & 0 & 0 & 0 \\ y & 0 & 0 & 1\end{smallmatrix}\right]$ with $cdy \neq0$
in the following lemma.

\begin{lemma}\label{lm: 1bcdy1, by=0}
    If $M(f) = \left[\begin{smallmatrix} 1 & 0 & 0 & b \\ 0 & c & d & 0 \\ 0 & 0 & 0 & 0 \\ y & 0 & 0 & 1\end{smallmatrix}\right]$ with $cd \neq0$ and $by=0$, then \plholant{\neq_2}{f} is \numP-hard.
\end{lemma}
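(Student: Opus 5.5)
The plan is to reduce to cases already classified, using the two facts that the row indexed by $x_1x_2=10$ of $M(f)$ is identically zero and that one of $b,y$ vanishes. By rotational symmetry, the hypothesis covers both configurations (v) ($y=0$) and (vi) ($b=0$). I will treat them in parallel, keeping $b$ and $y$ symbolic with $by=0$.

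\emph{Step 1 (gadgets).} Any product $M(f)\,N\,M(g)$ with $f$ on the left inherits the zero row $x_1x_2=10$. So I will put rotated copies on the left instead and compute $M(f^{\frac{\pi}{2}})NM(f)$, $M(f)NM(f^{\frac{3\pi}{2}})$ and $M(f^{\frac{\pi}{2}})NM(f^{\frac{3\pi}{2}})$. For instance, with $M(f^{\frac{\pi}{2}})=\left[\begin{smallmatrix}1&0&0&0\\0&b&0&0\\0&d&y&0\\c&0&0&1\end{smallmatrix}\right]$ one gets
\[
M(f^{\frac{\pi}{2}})NM(f)=\left[\begin{smallmatrix}y&0&0&1\\0&0&0&0\\0&cy&dy&0\\1+cy&0&0&b+c\end{smallmatrix}\right].
\]
The aim is to choose, according to whether $b=0$ or $y=0$, a gadget that after Lemma~\ref{lm:let a=x} (and normalization) has exactly three nonzero entries among $\{b,y,c,z,d,w\}$, two of them in one pair. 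Lemmas~\ref{lm: 1cdw1}, \ref{lm: 1cwz1}, \ref{lm: 1cdz1} then give \numP-hardness unless the gadget is a matchgate. If instead the gadget has $ax=0$, I will appeal to Corollary~\ref{cor:exactly 3 nonzero entries} / Theorem~\ref{thm:planar six-vertex}, or to Lemma~\ref{lm:case d=w} and Lemma~\ref{lm: 1dw1} when two zero pairs appear. In each case the escape condition ($\det M_{\rm Out}=\det M_{\rm In}$, or membership in $\widehat{\M}$ via Lemma~\ref{lm: is matchgate hat}) becomes a polynomial equation in $b,c,d,y$. Since $f$ itself has a zero inner row and one zero in $(b,y)$, $f\notin\M$ is easy to check: $\det M_{\rm In}(f)=0\neq 1=\det M_{\rm Out}(f)$. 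So no tractable case should survive.

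\emph{Step 2 (binaries).} A self-loop on $x_3,x_4$ of $f$ gives $(c+d)(0,1,0,0)^T$ on RHS, and a loop on $x_1,x_2$ of suitable rotations gives further binaries such as $(0,1,t,0)^T$ with $t$ a ratio of $c,d,b$ or $y$. When $c+d\neq 0$, binary modification with $(0,1,0,0)^T$ pins variables. This lets me kill outer entries and produce signatures such as $\Delta_0\otimes\cdots$, or reduce to the zero-pair forms of Lemma~\ref{lm: 1cz1} and Lemma~\ref{lm: 1dw1}. I would use this to clean up the exceptional parameter values left by Step 1.

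\emph{Main obstacle.} The degenerate branches will be the hardest part. These are $c+d=0$, where the looping binary vanishes, and parameter values such as $1+cy=0$ or $b+c=0$, where the Step 1 gadget drops into an already-tractable six-vertex class. For these I would first try composing two different gadgets (e.g.\ the above one with its rotation by $\pi$), since the escape equations of the two gadgets should be incompatible. Failing that, I would form a chain of $2s+1$ copies to interpolate the crossover $\mathcal{S}$, as in Lemma~\ref{lm: interpolate S for N=2}. That would give $\holant{\neq_2}{f}\le_T^p\plholant{\neq_2}{f}$ via Lemma~\ref{lm: swap gate}, and hardness then follows from the general eight-vertex dichotomy (Theorem~\ref{thm: general eight-vertex}), which excludes this $f$ since it is not $\mathscr{P}$-, $\mathscr{A}$-, or $\mathscr{L}$-transformable given its support of size $5$ or $6$.
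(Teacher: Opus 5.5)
\textbf{There is a genuine gap.} Your proposal never fixes a gadget that works for every admissible $(b,c,d,y)$, and the three gadgets you list do not cover the parameter space. Take $b=0$, $c=1$, $y=-1$, $d=\ii$; all three fail at once.
\begin{itemize}
\item $M(f^{\frac{\pi}{2}})NM(f)$ has $cy=-1$. After Lemma~\ref{lm:let a=x} it has exactly one zero in each of the three pairs, so (after a $\frac{\pi}{2}$ rotation) it is the very form you are trying to prove hard. Nothing is gained.
\item $M(f)NM(f^{\frac{3\pi}{2}})=\left[\begin{smallmatrix} b&0&0&1+bc\\0&d(c+y)&bc&0\\0&0&0&0\\1&0&0&c+y\end{smallmatrix}\right]$ collapses to support size $2$ because $c+y=0$.
\item $M(f^{\frac{\pi}{2}})NM(f^{\frac{3\pi}{2}})=\left[\begin{smallmatrix}0&0&0&1\\0&bd&b^2&0\\0&d^2+y^2&bd&0\\1&0&0&2c\end{smallmatrix}\right]$ collapses to support size $2$ because $d^2+y^2=0$.
\end{itemize}
So ``no tractable case should survive'' is only a heuristic: $f\notin\M$ does not stop a gadget from being tractable or circular.

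Your fallbacks do not close this. The $\mathcal{S}$-interpolation of Lemma~\ref{lm: interpolate S for N=2} relies on the special linear-growth structure of that particular matrix, which you have not exhibited for this $f$. Your appeal to Theorem~\ref{thm: general eight-vertex} also has two problems. First, the support is misstated: $f$ has support of size $4$ or $5$, namely $a,x,c,d$ plus at most one of $b,y$. Second, the support of $f$ says nothing about $\mathscr{P}$-, $\mathscr{A}$- or $\mathscr{L}$-transformability, since a holographic transformation changes the support.

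\textbf{The missing idea} is the rotation you left out: put $f^{\pi}$ on the left. Its matrix $M(f^{\pi})=\left[\begin{smallmatrix}1&0&0&y\\0&0&d&0\\0&0&c&0\\b&0&0&1\end{smallmatrix}\right]$ has a zero column in position $01$. The map $N$ pairs exactly that column with the zero row $10$ of $M(f)$, so nothing else is killed, and
\[
M(f^{\pi})NM(f)=\left[\begin{smallmatrix}2y&0&0&1+by\\0&cd&d^2&0\\0&c^2&cd&0\\1+by&0&0&2b\end{smallmatrix}\right].
\]
Since $by=0$, the corner product is $4by=0$ and the off-diagonal outer entries equal $1$. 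Lemma~\ref{lm:let a=x} then yields the six-vertex signature $\tilde g$ with $M(\tilde g)=\left[\begin{smallmatrix}0&0&0&1\\0&cd&d^2&0\\0&c^2&cd&0\\1&0&0&0\end{smallmatrix}\right]$. This signature satisfies the following:
\begin{itemize}
\item Its support has size $6$, so it is neither affine nor of product type.
\item $\det M_{\rm In}(\tilde g)=0\neq -1=\det M_{\rm Out}(\tilde g)$, so $\tilde g\notin\M$.
\item $H^{\otimes 4}\tilde g\in\M$ would force $cd=0$, so $\tilde g\notin\widehat{\M}$.
\end{itemize}
Theorem~\ref{thm:planar six-vertex} then gives \numP-hardness uniformly. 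There is no case split on $b=0$ versus $y=0$ and no degenerate branches left to handle.
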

\begin{proof}
Rotate $f$ by $\pi$ to get the signature $f^\pi$ with the signature matrix $M(f^\pi) = \left[\begin{smallmatrix} 
    1 & 0 & 0 & y \\ 
    0 & 0 & d & 0 \\ 
    0 & 0 & c & 0 \\ 
    b & 0 & 0 & 1\end{smallmatrix}\right].$ 
    Connect $f^\pi$ and $f$ via $(\neq_2)^{\otimes2}$, we get the signature $g$ with the signature matrix
    $M(g) = \left[\begin{smallmatrix} 
    2b & 0 & 0 & 1+by \\ 
    0 & cd & d^2 & 0 \\ 
    0 & c^2 & cd & 0 \\ 
    1+by & 0 & 0 & 2y
    \end{smallmatrix}\right]
    =
    \left[\begin{smallmatrix} 
    2b & 0 & 0 & 1 \\ 
    0 & cd & d^2 & 0 \\ 
    0 & c^2 & cd & 0 \\ 
    1 & 0 & 0 & 2y
    \end{smallmatrix}\right].$
    By Lemma~\ref{lm:let a=x}, we have $\plholant{\neq_2}{g}\equiv_T\plholant{\neq_2}{\Tilde{g}}$, where $\Tilde{g}$ has the signature matrix
    $
    M(\tilde{g})=
    \left[\begin{smallmatrix} 
    0 & 0 & 0 & 1 \\ 
    0 & cd & d^2 & 0 \\ 
    0 & c^2 & cd & 0 \\ 
    1 & 0 & 0 & 0
    \end{smallmatrix}\right].$
    Since the support size of $\tilde{g}$ is 6, it doesn't have affine support.
    Then by Theorem~\ref{thm:planar six-vertex}, $\plholant{\neq_2}{\tilde{g}}$ is \numP-hard, and thus $\plholant{\neq_2}{f}$ is \numP-hard, unless $\tilde{g}\in \M$ or $\tilde{g}\in \widehat{\M}$.
    Note that $\tilde{g}\notin \M$ by Lemma~\ref{lm:is matchgate}.
    If $\tilde{g}\in \widehat{\M}$, then $H^{\otimes4}\tilde{g}\in \M$, which reduces to $cd=0$, contradiction.
\end{proof}
This finishes our discussion when ${\sf N}=3$.

\subsection{${\sf N}=4$: four zero entries} \label{subsec: N=4}

If ${\sf N}=4$ and there is at most one zero pair among $\{(b,y),(c,z),(d,w)\}$, by rotation symmetry, it suffices to consider the following three cases:
(i) $M(f) = \left[\begin{smallmatrix} 
1 & 0 & 0 & 0 \\ 
0 & c & d & 0 \\ 
0 & 0 & 0 & 0 \\ 
0 & 0 & 0 & 1
\end{smallmatrix}\right]$ with $cd \neq0$, 
(ii) $M(f) = \left[\begin{smallmatrix} 
1 & 0 & 0 & b \\ 
0 & 0 & d & 0 \\ 
0 & 0 & 0 & 0 \\ 
0 & 0 & 0 & 1
\end{smallmatrix}\right]$ with $bd \neq0$, 
(iii) $M(f) = \left[\begin{smallmatrix} 
1 & 0 & 0 & b \\ 
0 & c & 0 & 0 \\ 
0 & 0 & 0 & 0 \\ 
0 & 0 & 0 & 1
\end{smallmatrix}\right]$ with $bc \neq0$.
Case (i) can be subsumed by Lemma~\ref{lm: 1bcdy1, by=0}.
In the following we handle Case (ii) and Case (iii).

\begin{lemma}\label{lm:1bd1}
    If $M(f) = \left[\begin{smallmatrix}
        1 & 0 & 0 & b\\
        0 & 0 & d & 0\\
        0 & 0 & 0 & 0\\
        0 & 0 & 0 & 1
    \end{smallmatrix} \right]$ with $bd \neq 0$, then \plholant{\neq_2}{f} is \numP-hard.
\end{lemma}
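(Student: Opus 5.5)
The plan is to realize, by a single planar connection, a signature whose complexity is already settled by the planar six-vertex dichotomy (Theorem~\ref{thm:planar six-vertex}). Here $c=z=w=y=0$, so the only nonzero entries of $f$ are $f_{0000}=f_{1111}=1$, $f_{0011}=b$ and $f_{0101}=d$.

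First rotate $f$ by $\pi$ to get $M(f^\pi)=\left[\begin{smallmatrix}1&0&0&0\\0&0&d&0\\0&0&0&0\\b&0&0&1\end{smallmatrix}\right]$. Then connect $f$ with $f^\pi$ via $(\neq_2)^{\otimes 2}$. A short computation ($N$ reverses the row order of $M(f^\pi)$) gives
\[
M(f)NM(f^\pi)=\left[\begin{smallmatrix}2b&0&0&1\\0&0&d^2&0\\0&0&0&0\\1&0&0&0\end{smallmatrix}\right].
\]
Since the bottom-right entry is $0$, Lemma~\ref{lm:let a=x} lets us replace the top-left entry $2b$ by $0$. This gives $\plholant{\neq_2}{g}\le_T^p\plholant{\neq_2}{f}$, where
\[
M(g)=\left[\begin{smallmatrix}0&0&0&1\\0&0&d^2&0\\0&0&0&0\\1&0&0&0\end{smallmatrix}\right],
\]
a planar six-vertex signature with $b'=y'=1$, $d'=d^2\neq 0$ and $c'=z'=w'=0$.

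Next we check that $g$ avoids every tractable case of Theorem~\ref{thm:planar six-vertex}.
\begin{itemize}
\item Condition 1 fails: $|\mathrm{supp}(g)|=3$ is not a power of $2$, so $g\notin\mathscr{P}\cup\mathscr{A}$ by Remark~\ref{rmk: affine or product imply support power of 2}.
\item Condition 2 fails: the pair $(b',y')=(1,1)$ contains no zero.
\item Condition 4 fails: $d'\neq 0$.
\item $g\notin\mathscr{M}$: by Lemma~\ref{lm:is matchgate}, $\det M_{\rm Out}(g)=-1\neq 0=\det M_{\rm In}(g)$.
\end{itemize}

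The remaining task, and the only step needing real computation, is to rule out $g\in\widehat{\mathscr{M}}$. We compute $M(\hat g)=H^{\otimes2}M(g)H^{\otimes2}$ in the same way as in Lemma~\ref{lm: is matchgate hat}. Up to a factor $\frac14$, each entry has the form $\pm 2\pm d^2$ or $\pm d^2$, arising from the contributions of the entries $1,1$ (positions $0011$, $1100$) and $d^2$ (position $0110$).
\begin{itemize}
\item For the Even Parity Condition, the odd-weight entries $\hat g_{0001}$, $\hat g_{0010}$, etc. are all $\pm d^2/4$. They cannot vanish since $d\ne0$, so the Even Parity Condition fails.
\item For the Odd Parity Condition, we need $\hat g_{0000}=\hat g_{0011}=0$. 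These read $(2+d^2)/4=0$ and $(-2+d^2)/4=0$ (or a similar pair with the signs arranged), which together force $4=0$, a contradiction.
\end{itemize}
Hence $g\notin\widehat{\mathscr{M}}$, so $\plholant{\neq_2}{g}$ is \numP-hard, and therefore so is $\plholant{\neq_2}{f}$.

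If the sign bookkeeping in the $\widehat{\mathscr{M}}$ check turned out differently, the fallback is the other natural gadget. Connecting $f^{\frac{\pi}{2}}=\left[\begin{smallmatrix}1&0&0&0\\0&b&0&0\\0&d&0&0\\0&0&0&1\end{smallmatrix}\right]$ with $f^{\frac{3\pi}{2}}$ via $(\neq_2)^{\otimes2}$ yields a signature with two zero pairs or a single nonzero inner entry. That signature can then be handled by Lemma~\ref{lm: 1dw1} or Lemma~\ref{lm: 1cwz1}.
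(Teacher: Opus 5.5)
Your proof is correct and is essentially the paper's argument. Both use the same connection of $f$ with $f^{\pi}$ via $(\neq_2)^{\otimes 2}$, then Lemma~\ref{lm:let a=x} to zero the corner $2b$, then Theorem~\ref{thm:planar six-vertex}. The one difference is that you rule out $\widehat{\mathscr{M}}$ by computing $H^{\otimes 2}M(g)H^{\otimes 2}$ directly, and your parity argument there is right; the paper instead cites Lemma~\ref{lm: is matchgate hat3}, since the inner entries $d^2$ and $0$ of $g$ differ.

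You should drop the closing fallback paragraph. It is not needed, and it is inaccurate as written: $M(f^{\frac{\pi}{2}})NM(f^{\frac{3\pi}{2}})=\left[\begin{smallmatrix}0&0&0&1\\0&bd&b^2&0\\0&d^2&bd&0\\1&0&0&0\end{smallmatrix}\right]$ has a rank-one inner matrix with full support. Also, Lemmas~\ref{lm: 1dw1} and~\ref{lm: 1cwz1} apply to signatures with nonzero corner entries, which this one does not have. Finally, $d^2$ sits at input $0101$, not $0110$.
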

\begin{proof}
    Rotate $f$ by $\pi$ to get signature $f^{\pi}$ with signature matrix $M(f^{\pi}) = \left[\begin{smallmatrix}
        1 & 0 & 0 & 0\\
        0 & 0 & d & 0\\
        0 & 0 & 0 & 0\\
        b & 0 & 0 & 1
    \end{smallmatrix} \right]$. 
    Connect $f$ and $f^{\pi}$ via $(\neq_2)^{\otimes2}$, we get the signature $\Tilde{f}$ with the signature matrix $M(\Tilde{f}) = \left[\begin{smallmatrix}
        2b & 0 & 0 & 1\\
        0 & 0 & d^2 & 0\\
        0 & 0 & 0 & 0\\
        1 & 0 & 0 & 0
    \end{smallmatrix} \right]$. 
    By Lemma~\ref{lm:let a=x} and Theorem~\ref{thm:planar six-vertex}, the problem \plholant{\neq_2}{\Tilde{f}} is \numP-hard and therefore the problem \plholant{\neq_2}{f} is \numP-hard unless the signature $\left[\begin{smallmatrix}
        0 & 0 & 0 & 1\\
        0 & 0 & d^2 & 0\\
        0 & 0 & 0 & 0\\
        1 & 0 & 0 & 0
    \end{smallmatrix} \right]$ is one of the tractable cases. Since $d\neq0$, its support is not affine and thus it is not of product or affine type. 
    By Lemma~\ref{lm:is matchgate} and Lemma~\ref{lm: is matchgate hat3}, we know it is not matchgate transformable. 
    This finishes the proof. 
\end{proof}

\begin{lemma}\label{lm: 1bc1}
    If $M(f) = \left[\begin{smallmatrix} 1 & 0 & 0 & b \\ 0 & c & 0 & 0 \\ 0 & 0 & 0 & 0 \\ 0 & 0 & 0 & 1\end{smallmatrix}\right]$ with $bc \neq0$, then \plholant{\neq_2}{f} is \numP-hard.
\end{lemma}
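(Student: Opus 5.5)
We follow the same route as in Lemma~\ref{lm:1bd1}: build a signature in the planar six-vertex model by self-connecting $f$, then appeal to Theorem~\ref{thm:planar six-vertex}.

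\emph{Step 1 (gadget).} Rotate $f$ by $\pi$. This gives $f^{\pi}$ with
$M(f^\pi)=\left[\begin{smallmatrix}1&0&0&0\\0&0&0&0\\0&0&c&0\\b&0&0&1\end{smallmatrix}\right]$,
using $M(f^\pi)=M_{x_3x_4,x_2x_1}(f)$ with $y=z=d=w=0$. Connecting $f$ and $f^\pi$ via $(\neq_2)^{\otimes 2}$ realizes on the RHS the signature $\tilde f$ with
$$M(\tilde f)=M(f)NM(f^\pi)=\left[\begin{smallmatrix}2b&0&0&1\\0&0&c^2&0\\0&0&0&0\\1&0&0&0\end{smallmatrix}\right].$$
This is a routine product: $N$ reverses the row order of $M(f^\pi)$, and the only cross term is $b\cdot 1$ in the top-left corner.

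\emph{Step 2 (reduce to six-vertex).} The product of the corner entries of $M(\tilde f)$ is $2b\cdot 0=0$. By Lemma~\ref{lm:let a=x}, $\plholant{\neq_2}{\tilde f}$ is therefore equivalent to the planar six-vertex problem $\plholant{\neq_2}{\tilde f_0}$, where
$M(\tilde f_0)=\left[\begin{smallmatrix}0&0&0&1\\0&0&c^2&0\\0&0&0&0\\1&0&0&0\end{smallmatrix}\right]$.
Also, $\plholant{\neq_2}{\tilde f}\le_T^p\plholant{\neq_2}{f}$.

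\emph{Step 3 (rule out tractable cases).} We check $\tilde f_0$ against each tractable case of Theorem~\ref{thm:planar six-vertex}.
\begin{itemize}
\item Since $c\neq0$, the support has size $3$, which is not a power of $2$. By Remark~\ref{rmk: affine or product imply support power of 2}, $\tilde f_0\notin\mathscr{P}\cup\mathscr{A}$.
\item The pair $(b,y)$ of $\tilde f_0$ is $(1,1)$, so condition~2 fails.
\item The inner pair is $(c^2,0)\neq(0,0)$, so condition~4 fails.
\item $\det M_{\rm Out}(\tilde f_0)=-1\neq 0=\det M_{\rm In}(\tilde f_0)$, so $\tilde f_0\notin\mathscr{M}$ by Lemma~\ref{lm:is matchgate}.
\item Its inner entries are $c^2\neq 0$, so $\tilde f_0$ is not $\mathscr{M}$-transformable by Lemma~\ref{lm: is matchgate hat3}. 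In particular $\tilde f_0\notin\widehat{\mathscr{M}}$.
\end{itemize}
Hence $\plholant{\neq_2}{\tilde f_0}$ is \numP-hard, and so is $\plholant{\neq_2}{f}$.

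There is no real obstacle here: the argument is a direct analogue of Lemma~\ref{lm:1bd1}. The only point needing care is getting the rotated matrix $M(f^\pi)$ right, so that the inner block of the product is exactly $(c^2,0)$ rather than something with a vanishing inner pair. If it were instead $(0,0)$, we would fall back to connecting $f^{\frac{\pi}{2}}$ with $f^{\frac{3\pi}{2}}$ and invoking Lemma~\ref{lm: 1dw1}.
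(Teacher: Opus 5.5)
Your proposal is correct and is essentially the paper's proof. The paper proves this lemma by repeating the argument of Lemma~\ref{lm:1bd1}:
\begin{itemize}
\item connect $f$ with $f^{\pi}$ via $(\neq_2)^{\otimes 2}$ to obtain $\left[\begin{smallmatrix}2b&0&0&1\\0&0&c^2&0\\0&0&0&0\\1&0&0&0\end{smallmatrix}\right]$;
\item zero the corners using Lemma~\ref{lm:let a=x};
\item rule out each tractable case of Theorem~\ref{thm:planar six-vertex}.
\end{itemize}
Your computations of $M(f^{\pi})$ and of the product are right, so the fallback you mention at the end is never needed.
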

\begin{proof}
    The proof follows exactly as the proof of Lemma~\ref{lm:1bd1}.
\end{proof}

\subsection{Put things together}
We summarize Case II where there is at least one zero entry and at most one zero pair. 

\begin{theorem}\label{thm: at least one zero at most one zero pair}
    Let $f$ be a 4-ary signature with the signature matrix
$M(f)=\left[\begin{smallmatrix}
1 & 0 & 0 & b\\
0 & c & d & 0\\
0 & w & z & 0\\
y & 0 & 0 & 1
\end{smallmatrix}\right]$. 
If there is at least one zero entry in $\{(b,y),(c,z),(d,w)\}$ and at most one pair of $(b, y)$, $(c, z)$ and $(d, w)$ is a zero pair, then $\plholant{\neq_2}{f}$ is \numP-hard unless $f\in \mathscr{M}$.
\end{theorem}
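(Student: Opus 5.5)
The theorem is a summary statement, so the plan is to reduce it to the lemmas already proved in this section. The reduction is a case split on the number ${\sf N}$ of zero entries among $\{b,y,c,z,d,w\}$. By hypothesis ${\sf N}\ge 1$. Since at most one pair is a zero pair, each of the other two pairs contributes at most one zero, so ${\sf N}\le 4$. The main tool for cutting down cases is rotational symmetry. Replacing $f$ by $f^{\frac{\pi}{2}},f^{\pi},f^{\frac{3\pi}{2}}$ permutes the entries as in Figure~\ref{fig:rotate_asymmetric_signature}, and the problem $\plholant{\neq_2}{f}$ is unchanged. Membership in $\M$ is also invariant, because the matchgate condition of Lemma~\ref{lm:is matchgate} is preserved by rotation. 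Rotation fixes the inner pair $(d,w)$ as a set and exchanges the roles of the outer pairs $(b,y)$ and $(c,z)$. Within each pair it can also swap the two entries (for example, $f^{\pi}$ swaps $c\leftrightarrow z$ and $d\leftrightarrow w$, while $b\leftrightarrow y$ are exchanged).

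I would then check, for each ${\sf N}$, that every zero pattern maps under rotation onto one of the normal forms listed at the start of Sections~\ref{sec: N=1}--\ref{subsec: N=4}.

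For ${\sf N}=1$, the single zero lies either in an outer pair, which can be moved to position $z$ (Lemma~\ref{lm: 1bcdwy1}), or in the inner pair, which can be moved to position $d$ (Lemma~\ref{lm: 1bcwzy1}).

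For ${\sf N}=2$ with both zeros in one pair, the zero pair is either $(d,w)$ (Lemma~\ref{lm:1cdwz1}) or an outer pair, which can be rotated to $(d,w)$ as well. Here I need care: an outer zero pair such as $(b,y)=(0,0)$ gives $M(f)$ with $b=y=0$, which is exactly the matrix of Lemma~\ref{lm:1cdwz1}. If instead the outer zero pair is $(c,z)$, then $f^{\frac{\pi}{2}}$ moves it to $(b,y)$, so it also reduces to Lemma~\ref{lm:1cdwz1}. The remaining form, in which the zero pair is $(d,w)$ and all of $b,c,y,z$ are nonzero, is Lemma~\ref{lm:1bycz1}.

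For ${\sf N}=2$ with the zeros in different pairs, the possibilities are one outer zero together with one inner zero, giving forms (i) and (ii), or one zero in each outer pair, giving form (iii). These are handled by Lemmas~\ref{lm: 1bcdw1}, \ref{lm: 1bcdz1} and \ref{lm: 1bcwz1}.

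For ${\sf N}=3$, either there is one zero pair plus one further zero, handled by Lemmas~\ref{lm: 1cdw1}, \ref{lm: 1cwz1}, \ref{lm: 1cdz1} and \ref{lm: 1bcz1}, or the three zeros lie in three different pairs, handled by Lemma~\ref{lm: 1bcdy1, by=0}.

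For ${\sf N}=4$, there is one zero pair and one zero in each of the other two pairs. This gives forms (i)--(iii), handled by Lemma~\ref{lm: 1bcdy1, by=0}, Lemma~\ref{lm:1bd1} and Lemma~\ref{lm: 1bc1}.

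In each case the cited lemma concludes that the problem is \numP-hard unless $f\in\M$; Lemmas~\ref{lm: 1bcdy1, by=0}, \ref{lm:1bd1} and \ref{lm: 1bc1} give \numP-hardness outright. This is exactly the statement of the theorem.

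The main obstacle is the combinatorial bookkeeping in the ${\sf N}=2$ and ${\sf N}=3$ cases. There I must verify that the rotation orbits of zero patterns are exactly covered by the listed representatives. The rotation group is cyclic of order four, and it acts on the six positions as the $4$-cycle $b\to w\to y\to d\to b$ together with the transposition $c\leftrightarrow z$ (Figure~\ref{fig:rotate_asymmetric_signature}). I would first enumerate the orbits of this action on zero sets of each size that contain at most one full pair, and then match one representative of each orbit to a lemma.

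A subtle point is that some lemmas are stated with a specific normalization, such as $a=x=1$, $by=0$ as in Lemma~\ref{lm: 1bcdy1, by=0}, or exact nonzero conditions. I need to check that rotation preserves $a=x=1$, which holds because the $0000$ and $1111$ entries are fixed by rotation. I also need to check that the nonzero hypotheses of each lemma match the given pattern exactly.
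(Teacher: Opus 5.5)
Your plan is the paper's own argument. The theorem is just the union of the Section~\ref{sec: At least a zero entry and at most a zero pair} lemmas, with each normal form justified by rotational symmetry, and the lemma you finally cite for each zero pattern is the right one. However, the orbit bookkeeping is the entire content of this proof, and several of your statements about it are wrong.

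\textbf{The rotation action.} Under $f\mapsto f^{\frac{\pi}{2}}$, the zero positions move by the $4$-cycle $b\to c\to y\to z\to b$ together with the transposition $d\leftrightarrow w$; this is Figure~\ref{fig:rotate_asymmetric_signature}. They do not move by $b\to w\to y\to d\to b$ with $c\leftrightarrow z$. Your version would exchange the inner pair with $(b,y)$, contradicting your own correct remark that rotation preserves the inner pair. Also, $f^{\pi}$ swaps $b\leftrightarrow y$ and $c\leftrightarrow z$ but fixes $d$ and $w$ individually. If you enumerated orbits with the permutation you wrote down, you would get representatives that do not match the lemmas at all.

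\textbf{Two zeros in one pair.} Lemma~\ref{lm:1cdwz1} is the case $(b,y)=(0,0)$, and Lemma~\ref{lm:1bycz1} is the case $(d,w)=(0,0)$. No outer pair can be rotated onto $(d,w)$. Your subsequent sentences do get this right, but your opening sentence states the opposite.

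\textbf{Two zeros in different pairs.} Form (i) ($y=z=0$) is the case of one zero in each outer pair. Forms (ii) ($w=y=0$) and (iii) ($d=y=0$) are the outer-plus-inner case. These last two are genuinely distinct orbits:
\begin{itemize}
\item $\{y,w\},\{z,d\},\{b,w\},\{c,d\}$;
\item $\{y,d\},\{z,w\},\{b,d\},\{c,w\}$.
\end{itemize}
The reason is that odd rotations swap $d,w$ while even ones fix them. So the rotation that moves the outer zero also determines where the inner zero lands.

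\textbf{Correct orbit counts.} With the correct action, the number of orbits is:
\begin{itemize}
\item $2$ for ${\sf N}=1$;
\item $2+3$ for ${\sf N}=2$;
\item $4+2$ for ${\sf N}=3$, where both orbits with zeros in three different pairs fall under Lemma~\ref{lm: 1bcdy1, by=0}, since it allows $b=0$ or $y=0$;
\item $3$ for ${\sf N}=4$.
\end{itemize}
These match the paper's normal forms exactly, so the argument goes through once these descriptions are corrected.
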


Combining Lemma~\ref{lm: 1cz1} and Lemma~\ref{lm: 1dw1}, we have the following theorem which handles Case I and Case II.

\begin{theorem}\label{thm: at least one zero}
Let $f$ be a 4-ary signature with the signature matrix
$M(f)=\left[\begin{smallmatrix}
1 & 0 & 0 & b\\
0 & c & d & 0\\
0 & w & z & 0\\
y & 0 & 0 & 1
\end{smallmatrix}\right]$.
If there is at least one zero entry in $\{(b,y),(c,z),(d,w)\}$, then $\plholant{\neq_2}{f}$ is \numP-hard unless $f\in \mathscr{P}$, or $f$ is $\mathscr{A}$-transformable, or $f$ is $\M$-transformable, in which cases $\plholant{\neq_2}{f}$ is tractable.
\end{theorem}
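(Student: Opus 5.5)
The statement concerns the normalized signature $M(f)$ with $a=x=1$, under the assumption that at least one of $b,y,c,z,d,w$ is zero. The proof is a case split on how many of the pairs $(b,y),(c,z),(d,w)$ are zero pairs $(0,0)$. If at most one pair is a zero pair, we are in Case~II. Theorem~\ref{thm: at least one zero at most one zero pair} then gives \numP-hardness unless $f\in\M$. In that case $f$ is trivially $\M$-transformable via $T=I$, and $\plholant{\neq_2}{f}$ is tractable by FKT. So all the remaining work is Case~I, where at least two pairs are zero pairs.

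Case~I splits by whether the inner pair $(d,w)$ is zero.

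First suppose $(d,w)=(0,0)$. Then at least one outer pair is also zero. Every rotation keeps $(d,w)$ as the inner pair, while rotation by $\frac{\pi}{2}$ exchanges the roles of the two outer pairs $(b,y)$ and $(c,z)$. Since a rotation of $f$ is realizable from $f$ and vice versa, the complexity is unchanged, and we may assume $(b,y)=(0,0)$. This gives exactly the form of Lemma~\ref{lm: 1cz1}. That lemma yields \numP-hardness unless $f\in\mathscr{P}$, or $f$ is $\mathscr{A}$-transformable, or $f$ is $\M$-transformable. One small point: in Lemma~\ref{lm: 1cz1}, the $\mathscr{P}$ and $\mathscr{A}$ conclusions are stated for the rotated signature. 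We need to check that $f\in\mathscr{P}$ and $\mathscr{A}$-/$\M$-transformability are invariant under rotation. This holds because each class is closed under permuting variables and $\neq_2$ is symmetric.

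Now suppose $(d,w)\ne(0,0)$. Then both outer pairs must be zero pairs, which is the form of Lemma~\ref{lm: 1dw1}. That lemma gives \numP-hardness unless $d=w=0$ (excluded here) or $dw=\pm1$. For $dw=1$ it gives $f\in\mathscr{P}$, and for $dw=-1$ it gives $f\in\M$.

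Tractability in every listed case is standard. $\PlHolant$ over $\mathscr{P}$ or $\mathscr{A}$ is tractable even without planarity. Tractability of the transformable classes follows from Valiant's Holant Theorem (Theorem~\ref{thm: holographic transformation}), together with the fact that $\neq_2$ is itself in $\mathscr{P}$, $\mathscr{A}$ and $\M$. Hence membership $f\in\mathscr{P}$ gives $\mathscr{P}$-transformability with $T=I$.

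The only step needing care is the rotation bookkeeping in Case~I. We must confirm that every configuration with two zero pairs maps, under some rotation, to one of the two canonical forms. We must also confirm that the tractability conditions transfer back across that rotation. Everything else is direct citation of the earlier lemmas.
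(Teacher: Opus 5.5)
Your proposal is correct and follows the paper's argument. The paper obtains this theorem by combining Theorem~\ref{thm: at least one zero at most one zero pair} (Case II, where only $f\in\M$ survives) with Lemma~\ref{lm: 1cz1} and Lemma~\ref{lm: 1dw1} for the two subcases of Case I, exactly as you do; your remarks on rotation invariance and on taking $T=I$ only make explicit bookkeeping the paper leaves implicit.
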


The following lemma shows that in a particular case where there are exactly three nonzeros in the inner matrix, the only tractable class is matchgate.
This lemma will be useful in Section~\ref{sec: dichotomy for all nonzero}.
\begin{lemma}\label{lm: three nonzeros in inner matrix}
Let $f$ be a 4-ary signature with the signature matrix
$M(f)=\left[\begin{smallmatrix}
a & 0 & 0 & b\\
0 & 0 & d & 0\\
0 & w & z & 0\\
y & 0 & 0 & x
\end{smallmatrix}\right]$
with $dwz\neq 0$, then either $f\in \M$, or
$\plholant{\neq_2}{f}$ is \numP-hard.
\end{lemma}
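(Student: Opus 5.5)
The plan is to split on whether $ax=0$, reducing each branch to a dichotomy already established, and to check that the only surviving tractable case is $f\in\M$.

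\emph{Case $ax\neq 0$.} I would normalize $a=x=1$ using the diagonal holographic transformation $T=\left[\begin{smallmatrix}1&0\\0&\sqrt[4]{a/x}\end{smallmatrix}\right]$ followed by scalar normalization. This transformation fixes $\neq_2$ up to a nonzero scalar. It also preserves membership in $\M$ (up to scalar), and I would verify that directly rather than rely on Lemma~\ref{lm: M transformable by diagnol holographic transformation}, whose hypothesis fixes a special form of $f$ that our $f$ need not have. Write $\tau=\sqrt[4]{a/x}$. Since $T^{\otimes 4}$ multiplies an entry of Hamming weight $k$ by $\tau^k$, it fixes $a$, multiplies $b,y,c,z,d,w$ by $\tau^2$, and multiplies $x$ by $\tau^4$. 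Both $\det M_{\rm Out}(f)$ and $\det M_{\rm In}(f)$ are therefore scaled by the same factor $\tau^4$, so the criterion of Lemma~\ref{lm:is matchgate} is unchanged. After normalization we have $c=0$ and $dwz\neq0$. Hence there is at least one zero entry among $\{b,y,c,z,d,w\}$, while $(c,z)$ and $(d,w)$ are not zero pairs. So at most one zero pair can occur, namely $(b,y)$. Theorem~\ref{thm: at least one zero at most one zero pair} then gives directly that $\plholant{\neq_2}{f}$ is \numP-hard unless $f\in\M$.

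\emph{Case $ax=0$.} By Lemma~\ref{lm:let a=x}, the problem is equivalent to $\plholant{\neq_2}{f'}$, where $f'$ is obtained from $f$ by setting $a=x=0$. Now $f'$ is a six-vertex signature whose inner matrix $\left[\begin{smallmatrix}0&d\\w&z\end{smallmatrix}\right]$ has exactly three nonzero entries. Corollary~\ref{cor:exactly 3 nonzero entries} gives hardness unless $f'\in\M$ or $f'\in\widehat{\M}$, and we handle the two exceptions as follows.
\begin{itemize}
\item For $f'\in\M$: the matchgate condition reads $-by=-dw$. Because $ax=0$, this is exactly the condition $ax-by=0\cdot z-dw$ for $f$ itself, so $f'\in\M$ if and only if $f\in\M$.
\item For $f'\in\widehat{\M}$: I would compute $H^{\otimes2}M(f')H^{\otimes2}$ as in the proofs of Lemmas~\ref{lm: is matchgate hat} and~\ref{lm: 1bcdw1}. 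Under either parity condition, the vanishing of the required entries should force $c=\pm z$ (together with constraints on $b,y$). Since $c=0$ and $z\neq0$, this is a contradiction, so $f'\notin\widehat{\M}$.
\end{itemize}

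The main obstacle is the exclusion of $\widehat{\M}$ when $ax=0$. Lemma~\ref{lm: is matchgate hat} covers only $b=y=0$, so I must redo the $H$-transform computation with general $b,y$ and check that every parity case is contradicted using $z\neq 0$ and $c=0$. Everything else is a direct invocation of earlier results.
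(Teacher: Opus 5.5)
Your proof is correct. In the branch $ax\neq 0$ it is the paper's argument in different packaging. The paper normalizes $a=x=1$ via Lemma~\ref{lm:let a=x} and cites Lemmas~\ref{lm: 1bcdwy1}, \ref{lm: 1bcdw1} and~\ref{lm: 1cdw1} (up to rotation). These are exactly the sub-cases of Theorem~\ref{thm: at least one zero at most one zero pair} that can occur when $c=0$ and $dwz\neq 0$. Citing the theorem instead is legitimate, and there is no circularity, since the present lemma is only used later, in Lemma~\ref{lm: with-any-binary}. Your check that $\mathrm{diag}(1,\tau)$ scales both $\det M_{\mathrm{Out}}$ and $\det M_{\mathrm{In}}$ by $\tau^4$ correctly shows that $\M$-membership is unaffected.

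In the branch $ax=0$ your route differs. The paper separates $b=y=0$ (support of size $3$, then a $\widehat{\M}$ check as in Lemma~\ref{lm: is matchgate hat}) from the case where at most one of $b,y$ vanishes, and settles the latter by quoting Theorem~5.2 of \cite{CaiFS21}. You instead apply Corollary~\ref{cor:exactly 3 nonzero entries} uniformly and exclude $\widehat{\M}$ for arbitrary $b,y$.

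The step you flagged as the main obstacle is in fact easy. With $\sigma_i=(-1)^{x_i}$ one has $4(H^{\otimes 4}f')(x)=b\sigma_3\sigma_4+c\sigma_2\sigma_3+d\sigma_2\sigma_4+w\sigma_1\sigma_3+z\sigma_1\sigma_4+y\sigma_1\sigma_2$. Vanishing of the weight-one entries (Even Parity) is equivalent to $b=y$, $c=z$, $d=w$. Vanishing of the weight-zero and weight-two entries (Odd Parity) is equivalent to $b=-y$, $c=-z$, $d=-w$. So the parity condition alone forces $c=\pm z$, contradicting $c=0\neq z$, and no determinant condition is needed.

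Your version is therefore self-contained within the paper at the price of this one computation. The paper avoids that computation in the generic sub-case by appealing to the external six-vertex result.
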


\begin{proof}
    If $ax=0$, then by Lemma~\ref{lm:let a=x}, $\plholant{\neq_2}{f}\equiv_T \plholant{\neq_2}{f'}$, where $f'$ has the signature matrix
    $M(f')=\left[\begin{smallmatrix}
    0 & 0 & 0 & b\\
    0 & 0 & d & 0\\
    0 & w & z & 0\\
    y & 0 & 0 & 0
    \end{smallmatrix}\right].$
    If at most one of $b$ and $y$ is 0, then by Theorem 5.2 in \cite{CaiFS21}, $\plholant{\neq_2}{f'}$ is \numP-hard unless $f'\in \M \Leftrightarrow by=dw \Leftrightarrow f\in \M$.
    The lemma follows.
    If $b=y=0$, then the support size of $f'$ is $3$.
    Thus, $f'\notin \mathscr{P}$ and $f'\notin \mathscr{A}$.
    By Theorem~\ref{thm:planar six-vertex}, $\plholant{\neq_2}{f'}$ is \numP-hard unless $f'\in \M$ or $f'\in \widehat{\M}$.
    A direct computation shows that $f'\notin \widehat{\M}$.
    Thus, $\plholant{\neq_2}{f'}$ is \numP-hard unless $f'\in \M$, which is equivalent to $f\in \M$ as $ax=0$.
    The lemma follows.

    If $ax\neq 0$, normalizing $a=x=1$ by Lemma~\ref{lm:let a=x} and appeal to Lemma~\ref{lm: 1bcdw1}, Lemma~\ref{lm: 1cdw1} and Lemma~\ref{lm: 1bcdwy1} (possibly with some rotation) the lemma follows.
\end{proof}

\section{Case III: Symmetric Condition}\label{sec: three or opposite pairs}
Let $f$ be a 4-ary signature with the signature matrix $M(f) = \left[\begin{smallmatrix} 
1 & 0 & 0 & b \\ 
0 & c & d & 0 \\ 
0 & w & z & 0 \\ 
y & 0 & 0 & 1 \end{smallmatrix}\right]$, where $byczdw\neq 0$. 
In this section, we obtain the complexity classification of $\plholant{\neq_2}{f}$ where $(b,y),(c,z),(d,w)$ are three equal (Theorem~\ref{thm: three equal pairs}) or opposite pairs (Lemma~\ref{lm: symmetric disequality dichotomy}).  
This is the most intricate setting, where we employ a combinatorial transformation to the {\sc Even Coloring} model to identify new tractable classes, and apply conformal lattice interpolation to establish \numP-hardness.

We first consider the case $(b=y)\land(c=z)\land(d=w)$. 
In Section~\ref{subsec: transformation from even coloring}, we introduce the transformation to {\sc Even Coloring} (Lemma~\ref{lm: even coloring connection in general}) and provide a brief example illustrating how it can be used to establish \numP-hardness (Lemma~\ref{lm: ars b=-1 or c=-1}). 
This transformation will also be used later to identify a newly discovered tractable type in Lemma~\ref{lm: b=pm c, d=pm 1}.

In Section~\ref{subsec: apply conformal lattice interpolation}, we define a lattice $L_{(u,v,w)}$ specified by the multiplicative relations among the normalized eigenvalues of $M(f)$ after a holographic transformation.
We apply conformal lattice interpolation to interpolate a signature $g$ from $f$ (Lemma~\ref{lm:interpolation xyz}).
We proceed with a rank-based analysis in Section~\ref{subsec: rank<=1} and Section~\ref{subsec: rank2}, corresponding to the cases where the rank of $L_{(u,v,w)}$ is at most 1 or exactly 2, respectively. The rank-3 case is addressed in the final dichotomy for the three equal pairs case (Lemma~\ref{lm: symmetric equality dichotomy}).

If the signature $g$ we interpolate in Lemma~\ref{lm:interpolation xyz} happens to be the disequality crossover $\mathcal{S'}$ (Lemma~\ref{lm:interpolate disequality crossover}), then by Lemma~\ref{lm:disequality crossover}, the problem $\plholant{\neq_2}{f}$ becomes equivalent to its non-planar counterpart $\holant{\neq_2}{f}$.
Then we are done by the previous result:

\begin{lemma}\cite{CaiF17}\label{lm: eight-vertex ars equality}
    Let $f$ be a signature with the signature matrix
    $M(f)=\left[\begin{smallmatrix}
    1 & 0 & 0 & b\\
    0 & c & d & 0\\
    0 & d & c & 0\\
    b & 0 & 0 & 1\\
    \end{smallmatrix}\right]$ with $bcd\neq 0$,
    then $\operatorname{Holant}(\neq_2|f)$ is
    $\#\operatorname{P}$-hard, or $f$ is
    $\mathscr{P}$-transformable, or $\mathscr{A}$-transformable,
    or $\mathscr{L}$-transformable.
\end{lemma}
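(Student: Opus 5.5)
This lemma is quoted from~\cite{CaiF17}, so the plan is to specialize the general eight-vertex dichotomy, Theorem~\ref{thm: general eight-vertex}, rather than build a new argument. Since $a=x=1$, we have $ax\neq 0$. The second half of Theorem~\ref{thm: general eight-vertex} then says directly that $\holant{\neq_2}{f}$ is \numP-hard unless $f$ is $\mathscr{P}$-transformable, $\mathscr{A}$-transformable, or $\mathscr{L}$-transformable. That is exactly the trichotomy claimed, so as a formal matter the lemma is an immediate corollary and no case analysis is needed.

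If instead one wants a self-contained argument in the non-planar setting, I would proceed as follows. First, use the arrow reversal symmetry $(b=y)\land(c=z)\land(d=w)$ and apply the holographic transformation by $Z$ (or $H$). This maps $\neq_2$ to a multiple of $=_2$ and diagonalizes the symmetric structure of $M(f)$. The result is a signature whose support lies on even-weight inputs with entries $1+b$, $c+d$, and so on, essentially an Even Coloring type signature.

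Second, because we are in the non-planar setting, the crossover is freely available. With it, one can realize binary equalities and use rotations and $N$-compositions to produce, by Vandermonde interpolation, powers of the three normalized eigenvalue ratios. If these ratios are not all roots of unity (Lemma~\ref{lm: lattice full rank equiv roots of unity}), one interpolates a signature that reduces a \numP-hard $\CSP$ via Theorem~\ref{thm: CSP dichotomy}. If they are all roots of unity, a finite check of the possible values places the transformed signature in $\mathscr{P}$, $\mathscr{A}$ or $\mathscr{L}$.

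The main obstacle in that direct route is this last finite classification of root-of-unity configurations, which is where the $\mathscr{L}$ class appears. I would therefore simply rely on the cited result and Theorem~\ref{thm: general eight-vertex}.
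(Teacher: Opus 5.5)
Your argument is correct and matches the paper, which gives no proof of its own and simply imports this lemma from \cite{CaiF17}: with $a=x=1$ we have $ax\neq 0$, and the second half of Theorem~\ref{thm: general eight-vertex} (also from \cite{CaiF17}) yields exactly the stated trichotomy, without even using $bcd\neq 0$. The optional ``self-contained'' sketch is unnecessary and should be dropped, since it is imprecise in places. For instance, $H$ transforms $\neq_2$ into $[1,0,-1]$ rather than a multiple of $=_2$; only the $Z$-transformation does that.
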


If the signature $g$ happens to define a \numP-hard planar six-vertex model (condition~\ref{cd:six vertex model} in Lemma~\ref{lm:interpolation something hard}), or a \numP-hard planar eight-vertex model in Case I (condition~\ref{cd:y=-1 two 0 pairs} in Lemma~\ref{lm:interpolation something hard}) or Case II (conditions~\ref{cd:y=-1 not matchgate},\ref{cd:x=z not matchgate},\ref{cd:x+z=0 not matchgate} in Lemma~\ref{lm:interpolation something hard}), the \numP-hardness of $\plholant{\neq_2}{g}$ then implies the \numP-hardness of $\plholant{\neq_2}{f}$.

However, there are several tricky exceptional cases (\ref{case: p_2=0,p_1=1},~\ref{case: p_2=0,p_1=2},~\ref{case: p_2=1},~\ref{case: p_2=-1} in Lemma~\ref{lm: rank2 c neq pm bd}) in the rank-2 case.
In these cases, no easy arguments suffice, and we have to appeal to the hardness of non-singular redundant signatures (condition~\ref{cd: redundent} in Lemma~\ref{lm:interpolation something hard}).
Moreover, $\M$-transformable signatures bury deeply in these cases.
To establish \numP-hardness, we employ geometric properties of polynomials, as captured in Lemma~\ref{lm: roots on unit circle} and Corollary~\ref{cor: roots on unit circle}.

If the conformal lattice interpolation does not yield \numP-hardness or the disequality crossover $\mathcal{S'}$, even after rotating $f$, we show that $b,c$ and $d$ must satisfy certain relations.
This either leads to $f$ is $\M$-transformable, or we can appeal to the combinatorial transformation to {\sc Even Coloring} plus a holographic transformation (Lemma~\ref{lm:even coloring transformation}), which transforms the problem to Case I or to the planar six-vertex model (Lemma~\ref{lm: b=pm c, d=pm 1}).
It's in the final case (via combinatorial transformation to the planar six-vertex model) where the new tractable type $(b=-c=\mathfrak{i}\tan(\frac{\beta\pi}{8}))\land (d=\pm1)$ is found.

 The above discussion culminates at Section~\ref{subsec: Dichotomy for three equal pairs} (Lemma~\ref{lm: symmetric equality dichotomy}, Theorem~\ref{thm: three equal pairs}), which gives a complexity classification on $\plholant{\neq_2}{f}$, where $(b,y),(c,z),(d,w)$ are three equal pairs.
 Finally, in Section~\ref{subsec: Dichotomy for three opposite pairs}, we handle the case where $(b,y),(c,z),(d,w)$ are three opposite pairs by applying holographic transformation and binary modification to transform $f$ into the form corresponding to the three equal pairs case (Lemma~\ref{lm: symmetric disequality dichotomy}).

\subsection{Combinatorial transformation from the {\sc Even Coloring} model}\label{subsec: transformation from even coloring}

The (planar) {\sc Even Coloring} problem is the problem $\PlHolant(g)$, equivalently $\plholant{=_2}{g}$, where $g$ is a $2n$-ary signature supported on even Hamming weight.
In other words, given a $2n$-regular planar graph $G$, a valid configuration for the {\sc Even Coloring} problem assigns each edge either {\it green} or {\it red}, such that every vertex has an even number of incident green edges.
This assignment is called an {\it even coloring} of $G$.
We denote by $\mathscr{C}(G)$  the set of all even colorings of $G$.
The goal is to compute the sum over $\mathscr{C}(G)$ of the product of local weights.

The following lemma generalizes a result previously established for the case $n=2$ in \cite{CaiL20}. 
Using the same underlying idea, we obtain a proof that extends to arbitrary $n$ in a simpler and more direct way.

\begin{lemma}\label{lm: even coloring connection in general}
Let $f$ and $g$ both be $2n$-ary signatures supported on even Hamming weight.
If $f(x_1,x_2,\ldots,x_{2n-1},x_{2n})=g(x_1,\overline{x_2},\ldots,x_{2n-1},\overline{x_{2n}})=g(\overline{x_1},x_2,\ldots,\overline{x_{2n-1}},x_{2n})$ ($g$ with even or odd index input bits flipped) for all $x_1,x_2,\ldots,x_{2n}\in \{0,1\}$, then $\plholant{\neq_2}{f}\equiv_p^T\plholant{=_2}{g}$.
\end{lemma}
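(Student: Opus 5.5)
The plan is a direct, instance-by-instance bijection between the two partition functions, using the planar embedding to 2-color the faces (or edges) consistently.

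Take an instance $\Omega$ of $\plholant{\neq_2}{f}$. It is a $2n$-regular plane graph $G$ with every edge subdivided by $\neq_2$. The instance $\Omega'$ of $\plholant{=_2}{g}$ uses the same $G$, with $\neq_2$ on each edge replaced by $=_2$. An even orientation of an edge $e=(u,v)$ records bit $\sigma_u(e)$ at $u$ and $\sigma_v(e)=1-\sigma_u(e)$ at $v$. The goal is to choose, at each vertex $v$, a labelling of its incident edges as ``odd'' or ``even'' positions (counterclockwise order $x_1,\dots,x_{2n}$) such that every edge is odd at exactly one endpoint and even at the other.

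Given such a labelling, define the coloring $\tau(e)=\sigma_u(e)\oplus[\text{$e$ is even at }u]$ at either endpoint. At the other endpoint $v$ the parity label is swapped and the bit is complemented, so $\tau(e)$ is well defined. In the local inputs, flipping the bits at even positions of $f$'s input gives $g$'s input, and the hypothesis $f(x)=g(x_1,\overline{x_2},\dots)$ makes the weights agree. At vertices where the opposite convention is used (flipping odd positions), the second form $f(x)=g(\overline{x_1},x_2,\dots)$ applies. So the hypothesis lets each vertex independently choose either parity convention.

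The real requirement is therefore that for each edge $e=(u,v)$ exactly one of ``$e$ flipped at $u$'' and ``$e$ flipped at $v$'' holds. Then $\sigma_u(e)\oplus\text{flip}_u=\sigma_v(e)\oplus\text{flip}_v$ because $\sigma_u\neq\sigma_v$. At each vertex we must flip either exactly the edges at even cyclic positions or exactly those at odd positions. Cyclically consecutive edges around $v$ bound a face, so the choice amounts to assigning, at each vertex, alternating flip/no-flip marks around it.

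A clean sufficient construction uses the faces. The dual of the Eulerian plane graph $G$ (all degrees $2n$ even) is bipartite, so the faces admit a proper 2-coloring, black and white. At vertex $v$, flip edge $e$ iff the face immediately counterclockwise after $e$ around $v$ is black. Faces around $v$ alternate black and white, so the flipped edges are exactly one parity class, as required. For edge $e=(u,v)$, the face counterclockwise after $e$ at $u$ and the face counterclockwise after $e$ at $v$ are the two different faces on either side of $e$, so exactly one of them is black. This gives the exclusive-or condition.

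With $\tau$ defined this way, $\sigma\mapsto\tau$ is a bijection from even orientations to even colorings. It preserves the weight vertex by vertex, and parity of support is preserved since $n$ positions flip (an even or odd count, handled by the even-support hypothesis on both sides; one checks supports match through the hypothesis directly). The construction is computed in polynomial time from the embedding, and the same map run in reverse gives the converse reduction. Hence $\plholant{\neq_2}{f}\equiv\plholant{=_2}{g}$.

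The main obstacle is the consistent global choice of flips. Locally each vertex has two options, but the edge constraint couples neighbours, and planarity (via face 2-colorability of Eulerian plane graphs) is exactly what resolves it. Verifying that the counterclockwise face convention gives different faces at the two endpoints is the delicate orientation check.
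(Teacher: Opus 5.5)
Your proof is correct and is essentially the paper's argument. Your black-face flip marks are exactly the paper's canonical assignment $\tau$ (each edge oriented counterclockwise along its black face, which alternates in/out around every vertex), and your coloring is the paper's $\sigma'=\sigma\oplus\tau$. Your parity-of-support aside is unnecessary, since the weights agree pointwise by hypothesis and the bijection can be taken over all orientations and all colorings, but it does no harm.
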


\begin{proof}
Given a signature grid $\Omega = (G, \pi)$ of $\plholant{\neq_2}{f}$, we define a signature grid $\Omega'$ of $\plholant{=_2}{g}$ on the same underlying graph $G$, where every labeling $\neq_2$ is replaced by $=_2$, and $f$ is replaced by $g$.
We prove $\plholant{\Omega;\neq_2}{f}=\plholant{\Omega';=_2}{g}$ by showing a one-to-one correspondence between terms in the two sums of product.

Notice that every vertex labeled by $\neq_2$ in $G$ has degree 2.
If we remove such a vertex and merge its two incident edges, we obtain a $2n$-regular graph $G_1$. 
A valid $0$–$1$ edge assignment of $\Omega$ then corresponds to an even orientation of $G_1$, where $0$ denotes an incoming edge and $1$ denotes an outgoing edge.
Also, a valid $0$–$1$ edge assignment of $\Omega'$ corresponds to an even coloring of $G_1$, where $0$ denotes green and $1$ denotes red.

Without loss of generality, we may assume that $G$ is connected, as well as $G_1$. 
It is well known that a connected planar graph is Eulerian if and only if its dual is bipartite~\cite{WELSH1969375}. 
Since $G_1$ is planar and $2n$-regular, its dual $G_1^*$ is bipartite. 
Hence, we can color the faces of $G_1$ with two colors, say {\it black} and {\it white}, so that any two adjacent faces (i.e., faces sharing an edge) have different colors. 
We assume that the outer face of $G_1$ is colored white. 
We then orient the edges along each black face counterclockwise. 
This yields an Eulerian orientation of $G_1$, where the in-degree equals the out-degree at every vertex, as shown in Figure~\ref{fig: BW_orientation}. 
We refer to this as the {\it canonical orientation} of $G_1$, 
and to the corresponding $0$-$1$ edge assignment of $\Omega$ as the {\it canonical assignment}, 
denoted by $\tau_1$ and $\tau$, respectively.

\begin{figure}[h!]
\centering

\includegraphics[width=0.44\linewidth]{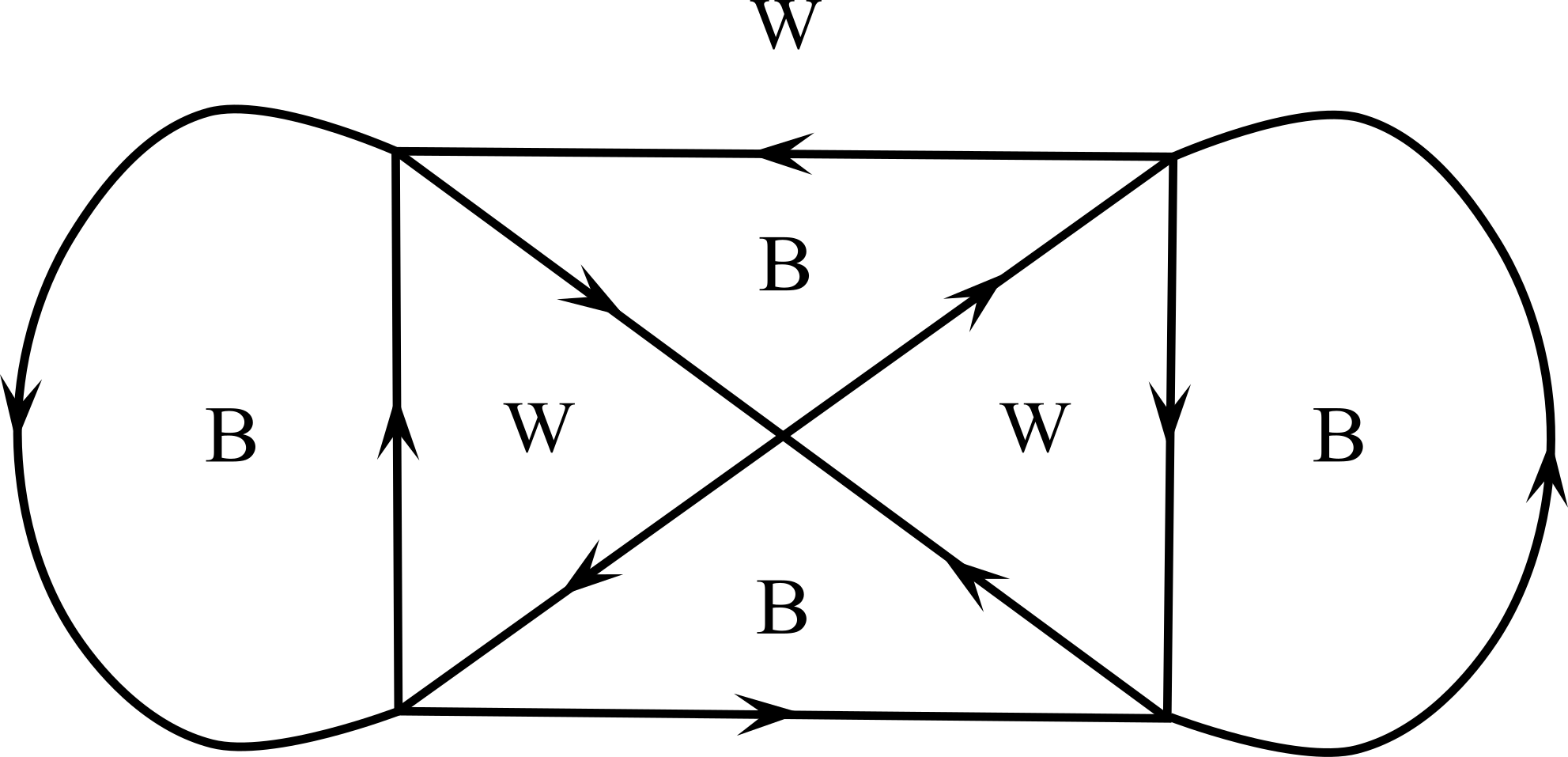}

\caption{A proper 2-coloring of the faces of a planar 4-regular graph $G_1$ and its canonical orientation $\tau_1$. 
All black faces are orientated counterclockwise.}
\label{fig: BW_orientation}
\end{figure}

For any even orientation $\sigma_1\in \mathscr{O}(G_1)$, define an even coloring $\sigma'_1\in \mathscr{C}(G_1)$ by taking ``XOR" between $\sigma_1$ and the canonical orientation $\tau_1$:
$\sigma'_1(e)=green$ if $\sigma_1(e) = \tau_1(e)$, and $\sigma'_1(e)=red$ otherwise.
This defines a bijection between
$\mathscr{O}(G_1)$ and $\mathscr{C}(G_1)$.
Note that $\sigma_1$ and $\sigma'_1$ correspond to two valid $0$-$1$ edge assignments $\sigma$ and $\sigma'$ of $\Omega$ and $\Omega'$, with the relation $\sigma'(e)=\sigma(e)\oplus\tau(e)$ for every edge $e$ of $G$.
This defines a bijection between valid $0$-$1$ edge assignments of $\Omega$ and $\Omega'$.

Next, we show that $\sigma$ and $\sigma'$ have the same evaluation in $\plholant{\Omega;\neq_2}{f}$ and $\plholant{\Omega';=_2}{g}$.
Suppose in $\Omega$, a signature $f$ labeled at an arbitrary vertex $v$ evaluates to $f(x_1,x_2,\ldots,x_{2n-1},x_{2n})=a$ under $\sigma$. 
There are two possible local configurations at vertex $v$ with respect to the black–white face coloring around it. 
Namely, the faces can be colored W–B-$\ldots$-W-B or B–W–$\ldots$-B-W in counterclockwise order, starting from the face to the left of the leading edge variable of $f$.
Since we orient all black faces counterclockwise, the canonical assignment $\tau_1$ around $v$ is $10\ldots10$ or $01\ldots01$.
So $\sigma'|_{E(v)}=\sigma|_{E(v)}\oplus \tau|_{E(v)}=x_1x_2\ldots x_{2n-1}x_{2n}\oplus01\ldots01=x_1\overline{x_2}\ldots,x_{2n-1}\overline{x_{2n}}$ or $\sigma'|_{E(v)}=\overline{x_1}x_2\ldots\overline{x_{2n-1}}x_{2n}$.
Thus, in $\Omega'$, the signature $g$ labeled at $v$ evaluates to $g(x_1,\overline{x_2},\ldots,x_{2n-1},\overline{x_{2n}})=a$ or $g(\overline{x_1},x_2,\ldots,\overline{x_{2n-1}},x_{2n})=a$.
Therefore, we prove that the corresponding assignments $\sigma$ and $\sigma'$ has the same evaluation in $\plholant{\Omega;\neq_2}{f}$ and $\plholant{\Omega';=_2}{g}.$
The lemma follows.
\end{proof}

\begin{corollary}\cite{CaiL20}\label{cor: combinatorial transformation from even coloring}
    $\plholant{\neq_2}{\left[\begin{smallmatrix} a& 0 & 0 & b \\ 
    0 & c & d & 0 \\ 
    0 & d & c & 0 \\ 
    b & 0 & 0 & a \end{smallmatrix}\right]}
    \equiv_T^p
    \plholant{=_2}{\left[\begin{smallmatrix} d & 0 & 0 & c \\ 
    0 & b & a & 0 \\ 
    0 & a & b & 0 \\ 
    c & 0 & 0 & d \end{smallmatrix}\right]}$.
\end{corollary}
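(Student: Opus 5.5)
The plan is to apply Lemma~\ref{lm: even coloring connection in general} with $n=2$, taking $f$ to be the left-hand signature with matrix $\left[\begin{smallmatrix} a& 0 & 0 & b \\ 0 & c & d & 0 \\ 0 & d & c & 0 \\ b & 0 & 0 & a \end{smallmatrix}\right]$ and $g$ the right-hand one. Both are supported on even Hamming weight. It then suffices to verify the identities
$$f(x_1,x_2,x_3,x_4)=g(x_1,\overline{x_2},x_3,\overline{x_4})=g(\overline{x_1},x_2,\overline{x_3},x_4)$$
for all inputs.

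The verification is a finite table check on the eight even-weight inputs. The two flip operations differ by a global complement, so I will use that both signatures satisfy arrow reversal equality: $f_{0000}=f_{1111}=a$, $f_{0011}=f_{1100}=b$, and so on, and likewise for $g$. This means the second equality follows from the first. For the first equality, I read off the entries of $f$ from $M_{x_1x_2,x_4x_3}$:
$$f_{0000}=a,\quad f_{0011}=b,\quad f_{0110}=c,\quad f_{0101}=d,$$
together with their complements. Flipping $x_2$ and $x_4$ sends $0000\mapsto 0101$, $0011\mapsto 0110$, $0110\mapsto 0011$, and $0101\mapsto 0000$. So the condition requires
$$g_{0101}=a,\quad g_{0110}=b,\quad g_{0011}=c,\quad g_{0000}=d.$$
These are exactly the entries of the stated matrix: $g_{0000}=d$ and $g_{0011}=c$ in the outer block, and $g_{0110}=b$ and $g_{0101}=a$ in the inner block at positions $(2,2)$ and $(2,3)$. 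The complements match by symmetry.

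The only step needing care is the index convention in the signature matrix. The column index is $x_4x_3$, so $f_{0011}$ sits at position $(1,4)$ and $f_{0101}$ at position $(2,3)$. With the entries placed correctly, Lemma~\ref{lm: even coloring connection in general} gives the claimed equivalence directly.
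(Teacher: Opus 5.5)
Your proposal is correct and follows the paper's route: the paper obtains this corollary directly from Lemma~\ref{lm: even coloring connection in general} with $n=2$, without a separate proof. Your entry-by-entry check under the $M_{x_1x_2,x_4x_3}$ convention is exactly the verification the paper leaves implicit, including the use of arrow reversal symmetry of $g$ to get the second flipped identity from the first.
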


Note that a holographic transformation by $Z$ or $HZ$ transforms $=_2$ into $\neq_2$.
Combined with Corollary~\ref{cor: combinatorial transformation from even coloring}, we have the following transformation within the planar eight-vertex model.

\begin{lemma}\cite{CaiL20}\label{lm:even coloring transformation}
    Let $M_{\rm Z}^{\rm PL} = \frac{1}{2}\left[\begin{smallmatrix}
        1 & -1 & 1 & -1\\
        -1 & 1 & 1 & -1\\
        1 & 1 & 1 & 1\\
        1 & 1 & -1 & -1
    \end{smallmatrix}\right]$, $M_{\rm HZ}^{\rm PL} = \frac{1}{2}\left[\begin{smallmatrix}
        1 & -1 & 1 & 1\\
        -1 & 1 & 1 & 1\\
        1 & 1 & 1 & -1\\
        1 & 1 & -1 & 1
    \end{smallmatrix}\right].$
    For any signature  $f$  with
    $M(f) = \left[\begin{smallmatrix} 
    a & 0 & 0 & b \\ 
    0 & c & d & 0 \\ 
    0 & d & c & 0 \\ 
    b & 0 & 0 & a \end{smallmatrix}\right]$,
    define $M(f') = \left[\begin{smallmatrix} a' & 0 & 0 & b' \\
    0 & c' & d' & 0 \\ 
    0 & d' & c' & 0 \\
    b' & 0 & 0 & a' \end{smallmatrix}\right]$ with $\left[\begin{smallmatrix}b' \\ c' \\ d' \\ a'\end{smallmatrix}\right]=
    M \left[\begin{smallmatrix}b \\ c \\ d \\ a\end{smallmatrix}\right]$, where $M\in \{M_{\rm Z}^{\rm PL},M_{\rm HZ}^{\rm PL}\}.$
    Then $\plholant{\neq_2}{f} \equiv_T^p \plholant{\neq_2}{f'}$.
\end{lemma}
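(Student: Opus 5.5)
The plan is to compose the combinatorial transformation of Corollary~\ref{cor: combinatorial transformation from even coloring} with a holographic transformation that turns $=_2$ back into $\neq_2$, and then to compute the resulting linear map on the parameter vector $(b,c,d,a)^T$.

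\emph{Step 1 (Even Coloring).} By Corollary~\ref{cor: combinatorial transformation from even coloring}, $\plholant{\neq_2}{f}\equiv_T^p\plholant{=_2}{g}$, where
\[
M(g)=\left[\begin{smallmatrix} d&0&0&c\\0&b&a&0\\0&a&b&0\\c&0&0&d\end{smallmatrix}\right].
\]
Reading off the entries, $g_{0000}=g_{1111}=d$, $g_{0011}=g_{1100}=c$, $g_{0110}=g_{1001}=b$, and $g_{0101}=g_{1010}=a$.

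\emph{Step 2 (holographic transformation).} Apply Theorem~\ref{thm: holographic transformation} with $T\in\{Z,HZ\}$. One checks directly that $(=_2)T^{\otimes 2}$ is a nonzero multiple of $\neq_2$, since $T^TT$ is antidiagonal with equal entries. Up to a global scalar, which does not affect complexity, this gives $\plholant{=_2}{g}\equiv\plholant{\neq_2}{T^{-1}g}$. Planarity is preserved because the holographic transformation does not change the underlying graph.

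\emph{Step 3 (compute $T^{-1}g$).} The remaining task is to show that $f'=(T^{-1})^{\otimes4}g$ again has the symmetric eight-vertex form, and that its parameters $(b',c',d',a')$ are obtained from $(b,c,d,a)$ by the stated matrix. The computation is organized as follows.
\begin{itemize}
\item Write $g$ as a combination of four basis signatures, each an indicator of a pair of antipodal inputs: $\{0000,1111\}$, $\{0011,1100\}$, $\{0110,1001\}$, and $\{0101,1010\}$.
\item Transform each basis signature separately. For $T=Z$, the matrix $Z^{-1}$ is proportional to $\left[\begin{smallmatrix}1&-\mathfrak{i}\\1&\mathfrak{i}\end{smallmatrix}\right]$. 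The transform of $e_{x}+e_{\bar x}$ is then $\propto\prod_j(\pm\mathfrak{i})$-type products plus their conjugates. Evaluate these on each of the eight even-weight inputs.
\item Check two facts: every odd-weight entry cancels, and the even-weight entries are arrow-reversal symmetric with $c'$ paired with $c'$ and $d'$ paired with $d'$.
\item Collect the coefficients into the $4\times4$ matrix $M_{\rm Z}^{\rm PL}$. Repeat with the extra $H$ factor to obtain $M_{\rm HZ}^{\rm PL}$.
\end{itemize}
One consistency check is available: the argument is symmetric, so each matrix should be involutive up to a scalar.

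\emph{Main obstacle.} The expected difficulty is Step 3, which is bookkeeping rather than a conceptual problem. It requires tracking the sign and phase factors $\mathfrak{i}^{k}$ so that odd-weight entries vanish and the normalization matches the $\frac12$ factors. In particular, the $Z$ case gives real coefficients only after the $\mathfrak{i}$ phases pair up. If the directly computed matrix differs by a row permutation or by signs, the discrepancy can be absorbed using two tools: rotational symmetry (rotating by $\frac{\pi}{2}$ swaps the roles of $b$ and $c$), and a rescaling $\mathrm{diag}(1,\pm1)$ or $\mathrm{diag}(1,\pm\mathfrak{i})$, which fixes $\neq_2$ up to scalar.
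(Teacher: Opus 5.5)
Your plan follows the paper's argument: compose Corollary~\ref{cor: combinatorial transformation from even coloring} with the holographic transformation by $Z$ or $HZ$; the paper gives no further computation and cites \cite{CaiL20} for the matrices. One caution on Step~3: carried out literally, $g\mapsto (HZ)^{-1}g$ gives $M_{\rm HZ}^{\rm PL}$ exactly, but $g\mapsto Z^{-1}g$ gives $DM_{\rm Z}^{\rm PL}D$ with $D=\mathrm{diag}(1,1,1,-1)$ (already for $d=1$ and $a=b=c=0$ you get $a'=+\frac12$, not $-\frac12$), so your sign-absorption fallback is genuinely needed; negating $a$ is harmless since the Holant value depends only on $f_{0000}f_{1111}$, as in Lemma~\ref{lm:let a=x}, or equivalently via $\mathrm{diag}(1,\mathfrak{i})$; also, your involution check fails for $M_{\rm Z}^{\rm PL}$, whose square is $DM_{\rm Z}^{\rm PL}D$ rather than a scalar, although it holds for $M_{\rm HZ}^{\rm PL}$.
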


\begin{remark}
    In the following we will only apply Lemma~\ref{lm:even coloring transformation} to obtain \numP-hardness result and to discover new tractable type.
    However, the general Lemma~\ref{lm: even coloring connection in general} is of independent interest. The combined paradigm of combinatorial transformation using Lemma~\ref{lm: even coloring connection in general} and holographic transformation via $Z$ or $HZ$ may offer a general framework for discovering new tractable cases in counting even orientations on planar graphs.
    If, after applying these two transformations, the resulting problem corresponds to a tractable case of counting Eulerian orientations on planar graphs, then we also obtain a tractable case for counting even orientations on planar graphs.
    However, few tractable cases are currently known for planar counting of Eulerian orientations (apart from affine, product-type, matchgate signatures). 
    Thus, this direction remains open for future investigation.
\end{remark}

In the following, we directly apply Lemma~\ref{lm:even coloring transformation} and invoke the \numP-hardness of non-singular redundant signatures to handle the case where $(b = -1)\lor(c = -1)$, which is exactly the setting where the conformal lattice interpolation in the next subsection does not apply.

\begin{lemma}\label{lm: ars b=-1 or c=-1}
    Let $M(f)=
    \left[\begin{smallmatrix} 
    1 & 0 & 0 & b \\ 
    0 & c & d & 0 \\
    0 & d & c & 0 \\ 
    b & 0 & 0& 1
    \end{smallmatrix}\right]$.
    If $b=-1$ or $c=-1$, then $\plholant{\neq_2}{f}$ is \numP-hard unless $f\in \M$, in which case $\plholant{\neq_2}{f}$ is tractable.

\end{lemma}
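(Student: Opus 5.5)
The plan is to use the combinatorial transformation of Lemma~\ref{lm:even coloring transformation}. When $b=-1$ it should turn $f$ directly into a redundant signature, and Theorem~\ref{thm:redundant} then applies. The case $c=-1$ I would reduce to $b=-1$ by rotational symmetry. Rotating by $\frac{\pi}{2}$ gives $M(f^{\frac{\pi}{2}})=M_{x_2x_3,x_1x_4}(f)$. For our symmetric $f$ (with $y=b$, $z=c$, $w=d$, $a=x=1$) this matrix is $\left[\begin{smallmatrix} 1 & 0 & 0 & c \\ 0 & b & d & 0 \\ 0 & d & b & 0 \\ c & 0 & 0 & 1\end{smallmatrix}\right]$, i.e.\ $b$ and $c$ are interchanged. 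Since $f$ and $f^{\frac{\pi}{2}}$ have the same complexity, and membership in $\M$ is symmetric in $b,c$ for this form ($\det M_{\rm Out}=1-b^2$, $\det M_{\rm In}=c^2-d^2$, and the condition $1-b^2=c^2-d^2$ is symmetric under $b\leftrightarrow c$), it suffices to treat $b=-1$.

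So assume $b=-1$ and $a=1$. Apply Lemma~\ref{lm:even coloring transformation} with $M=M_{\rm Z}^{\rm PL}$. This gives $\plholant{\neq_2}{f}\equiv_T^p\plholant{\neq_2}{f'}$, where
\[
b'=\tfrac{b-c+d-a}{2}=\tfrac{-2-c+d}{2},\quad c'=\tfrac{-b+c+d-a}{2}=\tfrac{c+d}{2},\quad d'=\tfrac{b+c+d+a}{2}=\tfrac{c+d}{2},\quad a'=\tfrac{b+c-d-a}{2}=\tfrac{c-d-2}{2}.
\]
Because $c'=d'$, the middle two rows of $M(f')$ coincide, and so do the middle two columns. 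Hence $f'$ is in redundant form. Its compressed $3\times 3$ matrix is $\left[\begin{smallmatrix} a' & 0 & b'\\ 0 & c' & 0\\ b' & 0 & a'\end{smallmatrix}\right]$, with determinant $c'(a'-b')(a'+b')$. Here $a'-b'=c-d$ and $a'+b'=-2$, so the determinant equals $-(c+d)(c-d)$.

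If $c\neq\pm d$, then $f'$ is a non-singular redundant signature. Theorem~\ref{thm:redundant} then makes $\plholant{\neq_2}{f'}$ \numP-hard, and therefore $\plholant{\neq_2}{f}$ is \numP-hard. If $c=\pm d$, then $\det M_{\rm Out}(f)=1-b^2=0$ and $\det M_{\rm In}(f)=c^2-d^2=0$. By Lemma~\ref{lm:is matchgate}, $f\in\M$, which is tractable.

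The only delicate points are the following. First, I need to confirm the arithmetic of $M_{\rm Z}^{\rm PL}$ in the ordering $(b,c,d,a)$; if the equality $c'=d'$ fails, the fallback is to try $M_{\rm HZ}^{\rm PL}$ instead. Second, I need to check that Definition~\ref{def: redundant} is applied to the correct one of the four signature matrices. This should be the main place where care is needed; the rest is direct.
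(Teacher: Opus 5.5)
Your proof is correct and follows the paper's argument: you reduce $c=-1$ to $b=-1$ via the $\frac{\pi}{2}$ rotation, then apply Lemma~\ref{lm:even coloring transformation} with $M_{\rm Z}^{\rm PL}$ to get a redundant $f'$ whose compressed determinant is a nonzero multiple of $(c+d)(c-d)$, and you fall back to $f\in\M$ when $c=\pm d$. Your arithmetic is right (so the $M_{\rm HZ}^{\rm PL}$ fallback is not needed), and applying Definition~\ref{def: redundant} to $M_{x_1x_2,x_4x_3}(f')$ is the correct choice; the only difference from the paper is that the paper drops the factor $\frac12$.
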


\begin{proof}
    We only need to prove the case $b=-1$, as a $\pi/2$ rotation $f^{\frac{\pi}{2}}$
    exchanges $b$ and $c$, and $f\in \M \Leftrightarrow f^{\frac{\pi}{2}}\in \M$ by Lemma~\ref{lm:is matchgate}.
    
    We have 
    $M_{\rm Z}^{\rm PL}\left[\begin{smallmatrix} 
    -1 \\ c \\ d \\ 1
    \end{smallmatrix}\right]=
    \left[\begin{smallmatrix} 
    -2-c+d \\ c+d \\ c+d \\ -2+c-d
    \end{smallmatrix}\right]. 
  $
 Then $\plholant{\neq_2}{f}\equiv_T^p\plholant{\neq_2}{f'}$ by Lemma~\ref{lm:even coloring transformation}, where $f'$ is a signature with the signature matrix
  $ M(f')=\left[\begin{smallmatrix} 
    -2+c-d & 0 & 0 & -2-c+d \\ 
    0 & c+d & c+d & 0 \\
    0 & c+d & c+d & 0 \\ 
    -2-c+d & 0 & 0& -2+c-d
    \end{smallmatrix}\right].$
    By Theorem~\ref{thm:redundant}, if 
    $ \mathrm{det}\left[\begin{smallmatrix} 
    -2+c-d & 0  & -2-c+d \\ 
    0 & c+d& 0 \\
    -2-c+d & 0& -2+c-d
    \end{smallmatrix}\right]\neq0$, 
    i.e., $(c+d)(c-d)\neq 0$, then  $\plholant{\neq_2}{f'}$ is \numP-hard, which implies that $\plholant{\neq_2}{f}$ is \numP-hard.
    If $(c+d)(c-d)=0$, then $1-b^2=c^2-d^2=0$. If follows that $f\in \M$ by Lemma~\ref{lm:is matchgate}.
\end{proof}

\subsection{Apply conformal lattice interpolation}\label{subsec: apply conformal lattice interpolation}

After a $Z$-holographic transformation to the $\PlHolant{(\tilde{f})}$ setting, $M(\tilde{f})$ can be diagonalized with eigenvalues $c+d,-1+b,-c+d$ and $1+b$.
The case $b=-1$ has been handled in Lemma~\ref{lm: ars b=-1 or c=-1}.
Thus, we can normalize by $1+b$ and the partition function $\plholant{\Omega;\neq_2}{f}$ can be expressed as a multivariate polynomial of those normalized eigenvalues $(u,v,w)=(\frac{c+d}{1+b},\frac{-1+b}{1+b},\frac{-c+d}{1+b})$.
Thus, by conformal lattice interpolation, if a signature $M(g)$ has a finer lattice defined by its eigenvalues (after $Z$-transformation and normalizing), we can compute $\PlHolant_{\Omega}$ for $\plholant{\neq_2}{g}$ by computing it for $\plholant{\neq_2}{f}$:

\begin{lemma}\label{lm:interpolation xyz}
    Let $f$ be a signature with the signature matrix $M(f) = \left[\begin{smallmatrix} 1 & 0 & 0 & b \\ 0 & c & d & 0 \\ 0 & d & c & 0 \\ b & 0 & 0 & 1 \end{smallmatrix}\right]$ with $bcd \neq 0,b\neq -1$. Let $(u,v,w)=(\frac{c+d}{1+b},\frac{-1+b}{1+b},\frac{-c+d}{1+b})$, and $(x,y,z)\in \C^3$. 
    If the respective lattices corresponding to the two tuples satisfy $L_{(u,v,w)}\subseteq L_{(x,y,z)}$, then for any signature set $\mathcal{F}$ containing $f$, we have $\plholant{\neq_2}{\mathcal{F}\cup g}\le_T^p\plholant{\neq_2}{\mathcal{F}}$, where
    $M(g) = \left[\begin{smallmatrix} 
    1-y & 0 & 0 & 1+y \\ 
    0 & x-z & x+z & 0 \\ 
    0 & x+z & x-z & 0 \\ 
    1+y & 0 & 0 & 1-y \end{smallmatrix}\right]$.
\end{lemma}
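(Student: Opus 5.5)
The plan is a standard interpolation argument: build a chain gadget from $f$, stratify any instance by the inputs to the replaced copies of $g$, and then apply Lemma~\ref{lm:general interpolation}. The first step is to find a basis in which $M(f)$ is diagonal. Put $P=\frac{1}{\sqrt2}\left[\begin{smallmatrix}1&0&0&1\\0&1&1&0\\0&1&-1&0\\1&0&0&-1\end{smallmatrix}\right]$, which is orthogonal and symmetric. Then $PM(f)P=\mathrm{diag}(1+b,\,c+d,\,d-c,\,1-b)$, up to the ordering forced by the block structure. One checks that $PNP=\mathrm{diag}(1,1,-1,-1)$, again up to ordering. This is the concrete form of the $Z$-transformation remark: in the $\neq_2$-connected setting, $N$ is diagonal in the same basis, with entries $\pm1$.

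For $s\ge1$, let $f_s$ be the chain of $s$ copies of $f$ linked by $N$, so $M(f_s)=M(f)(NM(f))^{s-1}$. Conjugating by $P$ gives a diagonal matrix whose entries, normalized by $(1+b)^s$, are $1$, $u^s$, $\pm v^s$, $\pm w^s$. The signs $\pm$ are fixed signs of the form $(-1)^{s-1}$ coming from $PNP$. To remove them I would use only odd $s$, i.e. $s=2l+1$; alternatively I would absorb them by noting that the target $g$ has the same diagonal form with entries $1+y$ or $1-y$, and so on. Similarly, $P M(g) P=\mathrm{diag}(2,\,2x,\,-2z,\,-2y)$ up to ordering, so normalized by $2$ it has entries $1,x,\pm z,\pm y$. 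I need to track carefully which eigenvalue pairs with which coordinate, namely $u\leftrightarrow x$, $v\leftrightarrow y$, $w\leftrightarrow z$; that correspondence is what the form of $M(g)$ encodes.

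Next, take an instance $\Omega$ of $\plholant{\neq_2}{\mathcal{F}\cup g}$ with $m$ occurrences of $g$. Write each copy of $g$ as $P D_g P$, where $D_g$ is diagonal. I would cut each occurrence at the $P$ factors, absorbing the $P$'s into the rest of the grid, which becomes a tensor $A$ expressed in the $P$-basis. The value is then $\sum$ over assignments of eigen-indices to the $m$ copies. Stratifying by the counts $(j_1,j_2,j_3)$ of copies in the $u$-, $v$-, $w$-eigenspaces gives
\[
\PlHolant_\Omega=2^m\sum_{j_1+j_2+j_3\le m} x^{j_1}y^{j_2}z^{j_3}\,Z_{j_1,j_2,j_3},
\]
where any fixed signs are absorbed into $Z$. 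Replacing each $g$ with $f_{s}$ gives $(1+b)^{sm}\sum (u^{j_1}v^{j_2}w^{j_3})^{s}Z_{j_1,j_2,j_3}$, with the same $Z$ because the sign pattern is constant once $s$ is restricted to odd values. Planarity is preserved since the chain is planar. The one subtlety is parity: with $s=2l+1$ the exponent is $2l+1$ rather than $l$. I would fix this by writing $(u^{j_1}v^{j_2}w^{j_3})^{2l+1}=\mu\cdot(\mu^2)^{l}$, where $\mu=u^{j_1}v^{j_2}w^{j_3}$, and applying Lemma~\ref{lm:general interpolation} to the squared tuple $(u^2,v^2,w^2)$ with unknowns $\mu Z$.

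That parity step is the main obstacle. Conformal lattice interpolation needs $L_{(u^2,v^2,w^2)}\subseteq L_{(x^2,y^2,z^2)}$ together with a way to recover $\sum x^{j}\cdots Z$ from the coset sums. But $L_{(u^2,\dots)}$ can be strictly larger than $L_{(u,v,w)}$. My preferred way out is to avoid squaring altogether by using all $s\ge1$ and checking that the signs from $PNP$ multiply consistently. Since $N^2=I$ and $N$ commutes with $M(f)$, we have $M(f)(NM(f))^{s-1}=N^{s-1}M(f)^s$. The leftover $N^{s-1}$ is either identity or $N$, depending on parity. I would handle the two parities as two separate families of the form $c_\epsilon\sum\mu^s Z^{(\epsilon)}$, using for instance a single extra $N$ placed at the gadget boundary so that both families use the same $Z$. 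Since $N$ contributes only $\pm1$ in the eigenbasis, multiplying the $s$-th equation by a known sign should yield a genuine power-sum system $\sum \mu^{s}Z'$ for every $s=1,\dots,\binom{m+3}{3}$. Lemma~\ref{lm:general interpolation} with $L_{(u,v,w)}\subseteq L_{(x,y,z)}$ then computes $\sum x^{j_1}y^{j_2}z^{j_3}Z'$, which is $\PlHolant_\Omega$ up to the factor $2^m$. The hypotheses $bcd\ne0$ and $b\ne-1$ ensure the normalization by $1+b$ is valid.
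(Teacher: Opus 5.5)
Your route is the paper's: diagonalize, chain $f$ into $f_s$, stratify by eigen-index, and finish with Lemma~\ref{lm:general interpolation}. The only difference is that you stay in the $\neq_2$ frame and diagonalize $M(f)$ and $N$ together. The paper instead first applies the $Z$-transformation, which turns the connector $N$ into the identity, so the chain is literally $\frac{(1+b)^s}{2}PD^sP$. That difference is harmless. However, your last paragraph is built around a parity obstacle that does not exist, and the devices you propose to get around it do not work as stated.

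The error is in the spectrum. Let $e_1,e_2,e_3,e_4$ be $(1,0,0,1)^T,(0,1,1,0)^T,(0,1,-1,0)^T,(1,0,0,-1)^T$. On $e_3$ the eigenvalue of $M(f)$ is $c-d$, not $d-c$, and on $e_4$ it is $1-b$.

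Write $M(f_s)=M(f)(NM(f))^{s-1}=(M(f)N)^sN$. The matrix $M(f)N$ has eigenvalues $(1+b)(1,u,w,v)$ on $e_1,\dots,e_4$. This is exactly why $v$ and $w$ are defined with numerators $b-1$ and $d-c$, and it is exactly the spectrum of the paper's $M(\tilde f)$. Likewise $M(g)N$ has eigenvalues $2(1,x,z,y)$. The trailing $N$ contributes the same fixed signs $(1,1,-1,-1)$ in both cases. Hence, for every $s\ge 1$,
\[
M(f_s)=\tfrac{(1+b)^s}{2}\,M(g)\big|_{(x,y,z)=(u^s,v^s,w^s)} .
\]
Each entry of $M(g)$ is affine in $(x,y,z)$. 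So the value of $\Omega$ is a polynomial $Q(x,y,z)=\sum_{j_1+j_2+j_3\le m}x^{j_1}y^{j_2}z^{j_3}c_{j_1j_2j_3}$, and the value of $\Omega_s$ is $\bigl(\tfrac{(1+b)^s}{2}\bigr)^m Q(u^s,v^s,w^s)$. Lemma~\ref{lm:general interpolation} then applies directly with $\ell=s=1,\dots,\binom{m+3}{3}$.

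You should drop all three workarounds:
\begin{itemize}
\item Restricting to odd $s$ changes the lattice, as you noticed yourself.
\item An extra $N$ at the gadget boundary would leave the dangling edges on $\neq_2$ vertices, so the result is no longer a legal RHS gadget.
\item Multiplying the $s$-th equation by one global sign could not remove a sign such as $(-1)^{(s-1)(j_2+j_3)}$, which varies with the stratum.
\end{itemize}
What actually makes the system a genuine power sum is the cancellation $(-1)^{s-1}(c-d)^s=-(d-c)^s$ and $(-1)^{s-1}(1-b)^s=-(b-1)^s$.
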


\begin{proof}
     By a holographic transformation with $Z =\frac{1}{\sqrt{2}} \left[\begin{smallmatrix} 
     1 & 1 \\ 
     \mathfrak{i} & -\mathfrak{i}
     \end{smallmatrix}\right]$, 
     we have $(\neq_2) (Z^{-1})^{\otimes 2} = (=_2)$, and 
     therefore, by Valiant's Holant Theorem~\ref{thm: holographic transformation}, $\plholant{\neq_2}{\mathcal{F}}\equiv_T^p\plholant{=_2}{\widetilde{\mathcal{F}}}$,
     where $\widetilde{\mathcal{F}} = Z {\mathcal{F}}$ denotes the set of transformed signatures.
     For the given $f \in \mathcal{F}$, the  signature matrix of
     $\tilde{f} = Z^{\otimes 4}f  \in \widetilde{\mathcal{F}}$ is
     $$M({\tilde{f}}) = Z^{\otimes 2} M(f)(Z^{T})^{\otimes 2} = \frac{1}{2}\left[\begin{smallmatrix}
        1+b+c+d & 0 & 0 & -1-b+c+d\\
        0 & -1+b-c+d & -1+b+c-d & 0\\
        0 & -1+b+c-d & -1+b-c+d & 0\\
        -1-b+c+d & 0 & 0 & 1+b+c+d
    \end{smallmatrix}\right].$$ 
    Let $P= \left[\begin{smallmatrix}
        1 & 0 & 0 & 1\\
        0 & 1 & 1 & 0\\
        0 & 1 & -1 & 0\\
        1 & 0 & 0 & -1
    \end{smallmatrix} \right]$. Then $P^2 = 2I_4$ where $I_4$ denotes the identity matrix of size 4. Also, we have $$M({\tilde{f}})  = \frac{1}{2}P\left[\begin{smallmatrix}
        c+d & 0 & 0 & 0\\
        0 & -1+b & 0 & 0\\
        0 & 0 & -c+d & 0\\
        0 & 0 & 0 & 1+b
    \end{smallmatrix} \right]P
    =\frac{1+b}{2}P\left[\begin{smallmatrix}
        u & 0 & 0 & 0\\
        0 & v & 0 & 0\\
        0 & 0 & w & 0\\
        0 & 0 & 0 & 1
    \end{smallmatrix} \right]P,$$
    and 
    $$M({\tilde{g}}) =Z^{\otimes 2}M(g)(Z^T)^{\otimes2} =\left[\begin{smallmatrix}
        1+x & 0 & 0 & -1+x\\
        0 & y+z & y-z & 0\\
        0 & y-z & y+z & 0\\
        -1+x & 0 & 0 & 1+x
    \end{smallmatrix} \right]
    =P\left[\begin{smallmatrix}
        x & 0 & 0 & 0\\
        0 & y & 0 & 0\\
        0 & 0 & z & 0\\
        0 & 0 & 0 & 1
    \end{smallmatrix} \right]P.$$

    To prove the lemma we only need to   show $\plholant{=_2}{\widetilde{\mathcal{F}}\cup\tilde{g}}\le_T^p\plholant{=_2}{\widetilde{\mathcal{F}}}$, where $\tilde{g}=Z^{\otimes4}g$. 
    We construct a series of gadgets $\tilde{f}_s$ by a chain of $s$ many copies of $\tilde{f}$ linked by the double {\sc Equality}.
    Clearly $\tilde{f}_{s}$ has the signature matrix 
$M(\tilde{f}_{s})=(\frac{1+b}{2})^s(P D P)^s=\frac{(1+b)^s}{2}PD^sP,
$
where $D=\mathrm{diag}(u,v,w,1)$, and $D^s=\mathrm{diag}(u^s,v^s,w^s,1)$.

The matrix $M(\tilde{f_s})$ has a good form for polynomial interpolation. 
Suppose $\tilde{g}$ appears $m$ times in an instance $\Omega$ of $\plholant{=_2}{{\widetilde{\mathcal{F}}}\cup\tilde{g}}$.
For every $s\ge 1$, we replace each appearance of $\tilde{g}$ by a copy of
the gadget $\tilde{f}_{s}$ to get an instance $\Omega_{s}$ of $\plholant{=_2}{\widetilde{\mathcal{F}}}$.
We can treat each of the $m$ appearances of $\tilde{f_s}$ in $\Omega_s$ as a new gadget composed of three signatures in sequence $P$, $D^s$ and $P$ (ignoring the constant factor $(1+b)^s/2$), and denote this new instance by $\Omega_s'$.
We divide $\Omega_{s}'$ into two parts A and B.
Part A consists of $m$ signatures $D^s$ and the signature of part A is  $(D^s)^{\otimes{m}}$. 
We can write $(D^s)^{\otimes{m}}$ as a column vector indexed by $\{0, 1\}^{4m}$.
Part B consists of the rest of $\Omega_{s}'$, including $2m$ occurrences of $P$, and the signature of part B is  a tensor expressed as a row vector $R$. 
Then, the Holant value of $\Omega_{s}'$ is the dot product $\langle R, (D^s)^{\otimes{m}}\rangle$, which is a summation  over all $\{0, 1\}$-assignments for  the $4m$ edges connecting the two parts. 
We can stratify all $\{0, 1\}$-assignments of these $4m$ bits having a nonzero evaluation of a term in Pl-Holant$_{\Omega_{s}'}$ into the following categories:
\begin{itemize}
\item
There are $j_1$ many copies of $D^s$ receiving inputs $0000$ 
\item
There are $j_2$ many copies of $D^s$ receiving inputs $0101$;
\item
There are $j_3$ many copies of $D^s$ receiving inputs $1010$;
\end{itemize}
where $j_1+j_2+j_3\le m$ (and $m-(j_1+j_2+j_3)$ many copies of $D^s$ receiving inputs $1111$).

For any assignment in the category with parameter $(j_1,j_2,j_3)$, the evaluation of
$(D^s)^{\otimes{m}}$ is clearly $\frac{(1+b)^{sm}}{2^m}(u^{j_1}v^{j_2}w^{j_3})^s$. 
Let $c_{j_1j_2j_3}$ be the summation of values of  part B over all assignments in the category $(j_1,j_2,j_3)$.
Note that $c_{j_1j_2j_3}$ is independent from the value of $s$ since we view the gadget $\tilde{f}_{s}$ as a block. 
Then, we rewrite the dot product summation and get 
$$\PlHolant_{\Omega_{s}}=\PlHolant_{\Omega_{s}'}=\langle R, (D^s)^{\otimes{m}}\rangle=
\sum\limits_{\substack{j_1,j_2,j_3 \geq 0 \\ j_1+j_2+j_3\leq m}} \frac{(1+b)^{sm}}{2^m}(u^{j_1}v^{j_2}w^{j_3})^s c_{j_1,j_2,j_3}.$$
Under this stratification, the Holant value of $\Omega$
is
$$\PlHolant_{\Omega} =
\sum\limits_{\substack{j_1,j_2,j_3 \geq 0 \\ j_1+j_2+j_3\leq m}} x^{j_1}y^{j_2}z^{j_3} c_{j_1,j_2,j_3}.$$
Since $L_{(u,v,w)}\subseteq L_{(x,y,z)}$, we can compute $\PlHolant_{\Omega}$ in polynomial time given $\PlHolant_{\Omega_{s}}$ for $1\le s \le \binom{m+3}{3}$ by Lemma~\ref{lm:general interpolation}.
This proves that $\plholant{=_2}{\widetilde{\mathcal{F}}\cup\tilde{g}}\le_T^p\plholant{=_2}{\widetilde{\mathcal{F}}}$.
\end{proof}

\begin{lemma}\label{lm:interpolate disequality crossover}
    Let $M(f) = \left[\begin{smallmatrix} 1 & 0 & 0 & b \\ 0 & c & d & 0 \\ 0 & d & c & 0 \\ b & 0 & 0 & 1 \end{smallmatrix}\right]$ with $bcd \neq 0,b\neq -1$. 
    Let $(u,v,w)=(\frac{c+d}{1+b},\frac{-1+b}{1+b},\frac{-c+d}{1+b})$.
    If $L_{(u,v,w)}\subseteq L_{(1,1,-1)}$, then  $\plholant{\neq_2}{\mathcal{F}}\equiv_T^p \holant{\neq_2}{\mathcal{F}}$ for any signature set $\mathcal{F}$ containing $f$.
\end{lemma}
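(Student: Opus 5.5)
We apply Lemma~\ref{lm:interpolation xyz} with $(x,y,z)=(1,1,-1)$. The hypothesis $L_{(u,v,w)}\subseteq L_{(1,1,-1)}$ is exactly the lattice condition required there. Hence for any $\mathcal{F}\ni f$ we get $\plholant{\neq_2}{\mathcal{F}\cup g}\le_T^p\plholant{\neq_2}{\mathcal{F}}$. With these values the signature $g$ has
\[
M(g)=\left[\begin{smallmatrix} 1-y & 0 & 0 & 1+y \\ 0 & x-z & x+z & 0 \\ 0 & x+z & x-z & 0 \\ 1+y & 0 & 0 & 1-y \end{smallmatrix}\right]
=\left[\begin{smallmatrix} 0 & 0 & 0 & 2 \\ 0 & 2 & 0 & 0 \\ 0 & 0 & 2 & 0 \\ 2 & 0 & 0 & 0 \end{smallmatrix}\right]
=2\,M(\mathcal{S}').
\]
This is a nonzero scalar multiple of the disequality crossover. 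Scalars do not affect complexity, so $\mathcal{S}'$ can be freely added on the RHS:
\[
\plholant{\neq_2}{\mathcal{F}\cup\mathcal{S}'}\le_T^p\plholant{\neq_2}{\mathcal{F}}.
\]

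Next, Lemma~\ref{lm:disequality crossover} gives
\[
\holant{\neq_2}{\mathcal{F}}\le_T^p\plholant{\neq_2}{\mathcal{F}\cup\mathcal{S}'}.
\]
Chaining the two reductions yields $\holant{\neq_2}{\mathcal{F}}\le_T^p\plholant{\neq_2}{\mathcal{F}}$.

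The reverse reduction $\plholant{\neq_2}{\mathcal{F}}\le_T^p\holant{\neq_2}{\mathcal{F}}$ is trivial, since planar instances are special instances. This gives the claimed equivalence.

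The only point requiring care is to verify the entries of $M(g)$ at the specific triple, and to note that the factor $2$ is harmless. Everything else is already supplied by Lemmas~\ref{lm:interpolation xyz} and~\ref{lm:disequality crossover}. The interpolation step itself was the hard part, and it was done in Lemma~\ref{lm:interpolation xyz}, using $b\neq -1$ and $bcd\ne 0$ so that $(u,v,w)$ is well-defined.
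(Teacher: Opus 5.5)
Your proposal is correct and matches the paper's proof exactly: you interpolate $g$ via Lemma~\ref{lm:interpolation xyz} at $(x,y,z)=(1,1,-1)$, recognize $M(g)=2M(\mathcal{S}')$, and then use Lemma~\ref{lm:disequality crossover}. The reverse direction is trivial, just as the paper notes.
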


\begin{proof}
    We only need to show $\holant{\neq_2}{\mathcal{F}}\le_T^p\plholant{\neq_2}{\mathcal{F}}$; the other direction is trivial. 
    By Lemma~\ref{lm:interpolation xyz}, we can interpolate signature $g$ with signature matrix $
    M(g) = \left[\begin{smallmatrix} 
    0 & 0 & 0 & 2 \\ 
    0 & 2 & 0 & 0 \\ 
    0 & 0 & 2 & 0 \\ 
    2 & 0 & 0 & 0 \end{smallmatrix}\right]$, i.e., $\plholant{\neq_2}{\mathcal{F}\cup g}\le_T^p\plholant{\neq_2}{\mathcal{F}}.$ 
    Note that $g$ is the disequality crossover $\mathcal{S'}$ up to a scalar. 
    Then by Lemma~\ref{lm:disequality crossover}, we have $\holant{\neq_2}{\mathcal{F}}\le_T^p \plholant{\neq_2}{\mathcal{F}\cup g}$.
    The lemma follows.
\end{proof}

\begin{lemma}\label{lm:interpolation something hard}
    Let $M(f) = \left[\begin{smallmatrix} 1 & 0 & 0 & b \\ 0 & c & d & 0 \\ 0 & d & c & 0 \\ b & 0 & 0 & 1 \end{smallmatrix}\right]$ with $bcd \neq 0,b\neq -1$. 
    Let $u=\frac{c+d}{1+b},v=\frac{-1+b}{1+b},w=\frac{-c+d}{1+b}$, and $x,y,z\in \C$.
    If $L_{(u,v,w)}\subseteq L_{(x,y,z)}$ and $(x,y,z)$ satisfies any one of the following conditions, then $\plholant{\neq_2}{f}$ is \numP-hard.
    \begin{enumerate}[label=\textup{(\roman*)}]
        \item \label{cd:six vertex model}
        $(y=1)\land(x\neq \pm z)\land(xz\neq 1);$ 
        
        \item \label{cd:y=-1 not matchgate}
        $(y=-1)\land(x\neq \pm z)\land(xz\neq -1);$

        \item \label{cd:y=-1 two 0 pairs}
        $(y=-1)\land (x=z)\land (x\neq 0,\pm1,\pm i)$;
        \item \label{cd:x=z not matchgate} 
        $(x=z)\land(y\neq \pm 1)\land(x\neq 0)\land(y\neq x^2);$
        
        \item \label{cd:x+z=0 not matchgate}
        $(x+z=0)\land(y\neq \pm 1)\land(x\neq z)\land(y\neq -x^2);$
        
        \item \label{cd: redundent}$(1+y=x+z)\land(y\neq-1)\land (x\neq 1)\land(z\neq1).$
    \end{enumerate}
\end{lemma}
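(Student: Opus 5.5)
The uniform strategy is to apply Lemma~\ref{lm:interpolation xyz}. It gives $\plholant{\neq_2}{g}\le_T^p\plholant{\neq_2}{f}$ for
$M(g)=\left[\begin{smallmatrix}1-y&0&0&1+y\\0&x-z&x+z&0\\0&x+z&x-z&0\\1+y&0&0&1-y\end{smallmatrix}\right]$.
Each condition \ref{cd:six vertex model}--\ref{cd: redundent} then places $g$, after normalization and possibly a rotation, into a setting that is already classified, and we check that $g$ lies outside the tractable part of that classification.

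\textbf{Conditions \ref{cd:six vertex model}--\ref{cd:y=-1 two 0 pairs}.}

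Under \ref{cd:six vertex model}, $y=1$ makes the diagonal entries $1-y$ vanish. Then $g$ is a six-vertex signature with $b'=y'=2$, $c'=z'=x-z$, $d'=w'=x+z$.
\begin{itemize}
\item $x\neq\pm z$ makes every pair nonzero, so the support has size 6 and $g\notin\mathscr{P}\cup\mathscr{A}$.
\item $g\in\M$ requires $4=(x-z)^2-(x+z)^2=-4xz$, i.e.\ $xz=-1$. I will double-check this sign against the stated exclusion $xz\neq1$; it may be the $\widehat{\M}$ test that produces $xz=1$.
\item The $\widehat{\M}$ test is the direct $H^{\otimes2}$ computation, as in Lemma~\ref{lm: is matchgate hat}.
\end{itemize}
Theorem~\ref{thm:planar six-vertex} then gives hardness.

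Under \ref{cd:y=-1 not matchgate} and \ref{cd:y=-1 two 0 pairs}, $y=-1$ gives $M(g)$ with diagonal $2$, outer pair $0$, and inner entries $x\mp z$.
\begin{itemize}
\item Under \ref{cd:y=-1 not matchgate}, normalize by $2$. This is Lemma~\ref{lm:1cdwz1} with a single zero pair $(b,y)$. Its matchgate condition $c'z'-d'w'=1$ reads $((x-z)^2-(x+z)^2)/4=-xz=1$, which is excluded.
\item Under \ref{cd:y=-1 two 0 pairs}, $x=z$ also kills $c',z'$. This leaves Lemma~\ref{lm:case d=w} with $d=x$, which is hard exactly when $x\neq0,\pm1,\pm\mathfrak{i}$.
\end{itemize}

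\textbf{Conditions \ref{cd:x=z not matchgate} and \ref{cd:x+z=0 not matchgate}.}

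Under \ref{cd:x=z not matchgate}, $x=z$ kills the pair $(c',z')$. Normalize the diagonal to $1$ (allowed since $y\neq1$). This gives a signature with outer pair $b'=y'=\frac{1+y}{1-y}\neq0$ (as $y\neq-1$), inner pair $d'=w'=\frac{2x}{1-y}\neq0$, and exactly one zero pair. By Theorem~\ref{thm: at least one zero at most one zero pair}, in its rotated form Lemma~\ref{lm:1bycz1}, it is hard unless matchgate. The matchgate condition $\det M_{\rm Out}=\det M_{\rm In}$ becomes $(1-y)^2-(1+y)^2=-4x^2$, i.e.\ $y=x^2$, which is excluded.

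Condition \ref{cd:x+z=0 not matchgate} is symmetric: $d'=w'=0$, $c'=z'=2x$, and the lemma applied is Lemma~\ref{lm: 1cz1}/the $c,z$ variant. Its matchgate condition is $y=-x^2$. Alternatively, rotating by $\pi/2$ swaps the roles of the pairs and reduces it to the same check.

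\textbf{Condition \ref{cd: redundent}.}

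When $1+y=x+z$, the two middle rows of $M(g)$ are $(0,x-z,x+z,0)$ and $(0,x+z,x-z,0)$. These are not identical, so $g$ is not redundant in this form. Instead I would rotate by $\pi/2$, giving $M(g^{\pi/2})$ with middle block $\left[\begin{smallmatrix}1+y&x-z\\x+z&1+y\end{smallmatrix}\right]$ up to arrangement. This has identical middle rows and columns exactly when $1+y=x+z$ (or $x-z$); I will verify which arrangement the rotation yields.

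Then compute the $3\times3$ compressed determinant. I expect it to factor as a nonzero constant times $(1-y)$-type factors times $(1-x)(1-z)$ (or similar). The hypotheses $y\neq-1$, $x\neq1$, $z\neq1$ make it nonzero, and Theorem~\ref{thm:redundant} gives hardness.

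The main obstacle is this determinant factorization and the correct rotation. I would carry it out symbolically after substituting $z=1+y-x$.
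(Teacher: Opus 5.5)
Your plan is the same as the paper's proof: interpolate $g$ by Lemma~\ref{lm:interpolation xyz}, then for each condition place $g$ into an already-classified setting. Your (ii) and (iii) match the paper exactly. Several individual checks are wrong as written, however, and must be repaired before the argument goes through.

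\emph{Condition (i).} Here $\det M_{\rm Out}(g)=\det\left[\begin{smallmatrix}0&2\\2&0\end{smallmatrix}\right]=-4$, not $4$. So $g\in\M$ iff $-4xz=-4$, i.e.\ $xz=1$, which is exactly what (i) excludes. With your sign, the case $xz=-1$ would be left open. The $\widehat{\M}$ test does not produce $xz=1$ either. Indeed $H^{\otimes2}M(g)H^{\otimes2}$ is proportional to $\left[\begin{smallmatrix}1+x&0&0&1-x\\0&-1-z&-1+z&0\\0&-1+z&-1-z&0\\1-x&0&0&1+x\end{smallmatrix}\right]$. Its matchgate condition is $4x=4z$, which $x\neq\pm z$ excludes.

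\emph{Conditions (iv) and (v).} Rotations fix the inner pair $(d,w)$ and only swap the outer pairs $(b,y)$ and $(c,z)$. So (v) cannot be rotated into (iv), and your specific citations are misplaced.
\begin{itemize}
\item In (iv) the zero pair is $(c,z)$, which a $\pi/2$ rotation moves to $(b,y)$. The relevant result is Lemma~\ref{lm:1cdwz1}, not Lemma~\ref{lm:1bycz1}.
\item In (v) the zero pair is the inner one, while $b'=y'=1+y\neq0$ and $c'=z'=2x\neq0$ (note $x\neq0$ since $x=-z\neq z$). So Lemma~\ref{lm:1bycz1} is the one that applies. Lemma~\ref{lm: 1cz1} requires two zero pairs and does not apply.
\end{itemize}
You also invoke Theorem~\ref{thm: at least one zero at most one zero pair}, which covers both cases, and your matchgate conditions $y=x^2$ and $y=-x^2$ are correct. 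So this is only a citation repair.

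\emph{Condition (vi).} $M(g^{\pi/2})$ has outer matrix $\left[\begin{smallmatrix}1-y&x-z\\x-z&1-y\end{smallmatrix}\right]$, and all four middle entries equal $x+z=1+y$. The entry $x-z$ moves to the outer pair, not into the middle block. The compressed determinant is $(1+y)\bigl((1-y)^2-(x-z)^2\bigr)$. Substituting $z=1+y-x$ gives $4(1+y)(1-x)(1-z)$, with a $(1+y)$ factor rather than a $(1-y)$ one. It is nonzero exactly under the hypotheses of (vi), and Theorem~\ref{thm:redundant} then applies. The paper reaches the same conclusion differently: vanishing forces $(1-y)^2=(x-z)^2$; together with $(1+y)^2=(x+z)^2$ this gives $y=xz$, and hence $(x-1)(z-1)=0$.
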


\begin{proof}
     By Lemma~\ref{lm:interpolation xyz}, we can interpolate signature $g$ with the signature matrix 
    $$
    M(g)=
    \left[\begin{smallmatrix} 
    1-y & 0 & 0 & 1+y \\ 
    0 & x-z & x+z & 0 \\ 
    0 & x+z & x-z & 0 \\ 
    1+y & 0 & 0 & 1-y
    \end{smallmatrix}\right]
    $$ and $\plholant{\neq_2}{g}\le_T^p\plholant{\neq_2}{f}.$
    
    Suppose $(x,y,z)$ satisfies condition~\ref{cd:six vertex model}, then $
    M(g)=
    \left[\begin{smallmatrix} 
    0 & 0 & 0 & 2 \\ 
    0 & x-z & x+z & 0 \\ 
    0 & x+z & x-z & 0 \\ 
    2 & 0 & 0 & 0
    \end{smallmatrix}\right].
    $
    Since $x\pm z\neq 0$, $\plholant{\neq_2}{g}$ is \numP-hard unless $g\in \M$ or $\widehat{\M}$ by Theorem~\ref{thm:planar six-vertex}.
    If $g\in \M$, we have $\det M_{\rm{Out}}(g) = \det M_{\rm{In}}(g)$ by Lemma~\ref{lm:is matchgate}, i.e., $(x-z)^2-(x+z)^2=-4$, contradicting $xz\neq 1$.
    If $g\in \widehat{\M}$, we have $g'\in \M$, where $g'$ has the signature matrix
    $
    M(g')=H^{\otimes2}M(g)H^{\otimes2}=
    \left[\begin{smallmatrix} 
    1+x & 0 & 0 & 1-x \\ 
    0 & -1-z & -1+z & 0 \\ 
    0 & -1+z & -1-z & 0 \\ 
    1-x & 0 & 0 & 1+x
    \end{smallmatrix}\right].
    $
    By Lemma~\ref{lm:is matchgate} we have $\det M_{\rm{Out}}(g') = \det M_{\rm{In}}(g')$, which implies $x=z$, contradicting condition~\ref{cd:six vertex model}.
    Thus, $g\notin \M \cup \widehat{\M}$, and $\plholant{\neq_2}{g}$ is \numP-hard.
    Then $\plholant{\neq_2}{f}$ is \numP-hard.

    Suppose $(x,y,z)$ satisfies condition~\ref{cd:y=-1 not matchgate}, then
    $
    M(g)=
    \left[\begin{smallmatrix} 
    2 & 0 & 0 & 0 \\ 
    0 & x-z & x+z & 0 \\ 
    0 & x+z & x-z & 0 \\ 
    0 & 0 & 0 & 2
    \end{smallmatrix}\right].
    $
    Since $x\pm z\neq 0$, $\plholant{\neq_2}{g}$ is \numP-hard unless $g\in \M$ by Lemma~\ref{lm:1cdwz1}.
    If $g\in \M$, we have $(x-z)^2-(x+z)^2=4$ by Lemma~\ref{lm:is matchgate}, contradicting $xz\neq -1$.
    Thus, $\plholant{\neq_2}{g}$ is \numP-hard and so $\plholant{\neq_2}{f}$ is \numP-hard.

     Suppose $(x,y,z)$ satisfies condition~\ref{cd:y=-1 two 0 pairs}, then
    $
    M(g)=
    2\left[\begin{smallmatrix} 
    1 & 0 & 0 & 0 \\ 
    0 & 0 & x & 0 \\ 
    0 & x & 0 & 0 \\ 
    0 & 0 & 0 & 1
    \end{smallmatrix}\right].
    $
    Since $x\neq 0,\pm1,\pm i$, $\plholant{\neq_2}{g}$ is \numP-hard by Lemma~\ref{lm:case d=w}.

    Suppose $(x,y,z)$ satisfies condition~\ref{cd:x=z not matchgate}, then
    $
    M(g)=
    \left[\begin{smallmatrix} 
    1-y & 0 & 0 & 1+y \\ 
    0 & 0 & 2x & 0 \\ 
    0 & 2x & 0 & 0 \\ 
    1+y & 0 & 0 & 1-y
    \end{smallmatrix}\right].
    $ 
    Since $y\neq\pm1$ and $x+z=2x\neq0$, $\plholant{\neq_2}{g}$ is \numP-hard unless $g\in \M$ by Lemma~\ref{lm:1cdwz1} (after a rotation).
    If $g\in \M$, we have $-4x^2=(1-y)^2-(1+y)^2$ by Lemma~\ref{lm:is matchgate}, contradicting $y\neq x^2$.
    Thus, $\plholant{\neq_2}{g}$ is \numP-hard and so $\plholant{\neq_2}{f}$ is \numP-hard.

    Suppose $(x,y,z)$ satisfies condition~\ref{cd:x+z=0 not matchgate}, then
    $
    M(g)=
    \left[\begin{smallmatrix} 
    1-y & 0 & 0 & 1+y \\ 
    0 & 2x & 0 & 0 \\ 
    0 & 0 & 2x & 0 \\ 
    1+y & 0 & 0 & 1-y
    \end{smallmatrix}\right].
    $ 
    Since $y\neq\pm1$ and $x-z=2x\neq0$, $\plholant{\neq_2}{g}$ is \numP-hard unless $g\in \M$ by Lemma~\ref{lm:1bycz1}.
    If $g\in \M$, we have $4x^2=(1-y)^2-(1+y)^2$ by Lemma~\ref{lm:is matchgate}, contradicting $y\neq -x^2$.
    Thus, $\plholant{\neq_2}{g}$ is \numP-hard and so $\plholant{\neq_2}{f}$ is \numP-hard.

     Suppose $(x,y,z)$ satisfies condition~\ref{cd: redundent}, then
    $
    M(g)=
    \left[\begin{smallmatrix} 
    1-y & 0 & 0 & 1+y \\ 
    0 & x-z & 1+y & 0 \\ 
    0 & 1+y & x-z & 0 \\ 
    1+y & 0 & 0 & 1-y
    \end{smallmatrix}\right].
    $
    Note that
    $
    M(g^{\frac{\pi}{2}})=
    \left[\begin{smallmatrix} 
    1-y & 0 & 0 & x-z \\ 
    0 & 1+y & 1+y & 0 \\ 
    0 & 1+y & 1+y & 0 \\ 
    x-z & 0 & 0 & 1-y
    \end{smallmatrix}\right]
    $
    is redundant signature.
    By Theorem~\ref{thm:redundant}, $\plholant{\neq_2}{g}\equiv_T^p \plholant{\neq_2}{g^{\frac{\pi}{2}}}$ is \numP-hard unless $\det
    \left[\begin{smallmatrix} 
    1-y &  0 & x-z \\ 
    0   & 1+y & 0 \\ 
    x-z & 0  & 1-y
    \end{smallmatrix}\right]
    =(1+y)((1-y)^2-(x-z)^2)=0$.
    If $(1+y)((1-y)^2-(x-z)^2)=0$, then $(1-y)^2=(x-z)^2$ since $y\neq -1$.
    Then $(1+y)^2-(1-y)^2=(x+z)^2-(x-z)^2$ since $1+y=x+z$. It follows that $y=xz$.
    Plug $y=xz$ into $1+y=x+z$, we have $(x-1)(z-1)=0$, contradicting $x\neq 1$ and $z\neq 1$.
    Thus, $\plholant{\neq_2}{g}$ is \numP-hard and so $\plholant{\neq_2}{f}$ is \numP-hard.
\end{proof}

\subsection{Lattice rank at most 1}\label{subsec: rank<=1}

\begin{lemma}\label{lm: rank<=1}
      Let $f$ be a signature with the signature matrix 
      $M(f) = \left[\begin{smallmatrix} 1 & 0 & 0 & b \\ 0 & c & d & 0 \\ 0 & d & c & 0 \\ b & 0 & 0 & 1 \end{smallmatrix}\right],$ with $bcd \neq 0,b\neq -1$. 
    Let $(u,v,w)=(\frac{c+d}{1+b},\frac{-1+b}{1+b},\frac{-c+d}{1+b})$. If $\mathrm{rank}(L_{(u,v,w)})\leq 1 $, then one of the following holds:
    \begin{itemize}
        \item 
        $\plholant{\neq_2}{\mathcal{F}}\equiv_T^p \holant{\neq_2}{\mathcal{F}}$ for any signature set $\mathcal{F}$ containing $f$;
        \item $\plholant{\neq_2}{f}$ is \numP-hard;
        \item $f$ is $\M$-transformable.
    \end{itemize}
\end{lemma}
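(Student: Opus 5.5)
The plan is to split on $\mathrm{rank}(L_{(u,v,w)})$ and, in each case, exhibit a target tuple $(x,y,z)$ with $L_{(u,v,w)}\subseteq L_{(x,y,z)}$. That tuple is then fed either into Lemma~\ref{lm:interpolate disequality crossover}, which gives the first alternative, or into one of the hardness conditions of Lemma~\ref{lm:interpolation something hard}, which gives the second.

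If the rank is $0$, then $L_{(u,v,w)}=\{0\}$ is contained in every lattice, in particular in $L_{(1,1,-1)}$. Lemma~\ref{lm:interpolate disequality crossover} then gives $\plholant{\neq_2}{\mathcal{F}}\equiv_T^p\holant{\neq_2}{\mathcal{F}}$ directly.

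If the rank is $1$, write $L_{(u,v,w)}=\braket{(p,q,r)}$ with $(p,q,r)\neq 0$. For any candidate $(x,y,z)$, the containment $L_{(u,v,w)}\subseteq L_{(x,y,z)}$ is equivalent to the single equation $x^py^qz^r=1$. So the whole argument reduces to finding a solution of one monomial equation that also satisfies one of the open-type conditions \ref{cd:six vertex model}--\ref{cd:x+z=0 not matchgate}.

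First, if $r$ is even, then $(1,1,-1)$ satisfies $1^p1^q(-1)^r=1$, and again Lemma~\ref{lm:interpolate disequality crossover} gives the first alternative. So assume $r$ is odd.

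\begin{itemize}
\item Try condition~\ref{cd:six vertex model} first. Set $y=1$ and let $x$ range over a one-parameter family $x=s^{r}$, taking $z=\zeta s^{-p}$ with $\zeta^r=1$. This satisfies $x^pz^r=1$, and for all but finitely many $s$ the conditions $x\neq\pm z$ and $xz\neq1$ hold. The exceptions arise only when the resulting identities hold identically in $s$. Since $r$ is odd, $-1$ is not an $r$-th root of unity, so $x=-z$ cannot be forced for all $s$. The identity $x=z$ for all $s$ forces $p=-r$; choosing $\zeta\neq1$ when $|r|\ge3$ breaks it. The identity $xz=1$ for all $s$ forces $p=r$.
\item For the leftover generators, essentially $(p,q,r)=(\mp1,q,\pm1)$ up to sign, I would switch to the other conditions. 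With $(-1,q,1)$ (so $z=x(y)^{-q}$), I would use $y=-1$ and $x=z$ (condition~\ref{cd:y=-1 two 0 pairs}) when $q$ is even. When $q$ is odd, $y=-1$ forces $z=-x$, and I would use condition~\ref{cd:x+z=0 not matchgate} or~\ref{cd:x=z not matchgate} with a generic $y\neq\pm1$; both are open conditions, so a generic point of the curve works unless the curve lies inside the excluded set. With $(1,q,1)$ (so $xz=y^{-q}$), I would use $y=-1$ in condition~\ref{cd:y=-1 not matchgate}, or a generic $y$ in condition~\ref{cd: redundent} by solving $1+y=x+z$, $xz=y^{-q}$ as a quadratic in $x$ and checking that generically $x,z\neq1$.
\end{itemize}

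The main obstacle is the residual generators for which every open condition fails identically on the solution curve. A typical case is where the curve forces $y=\pm x^2$ or $xz=\pm1$ exactly in the relevant conditions. In those cases the generator itself is a very specific multiplicative relation among $u,v,w$, for instance $uw=\pm v$ or $u=\pm v\,w^{\pm1}$. I would translate such a relation back into $b,c,d$ using $u=\frac{c+d}{1+b}$, $v=\frac{b-1}{1+b}$, $w=\frac{d-c}{1+b}$; for example, $uw=-v$ becomes $d^2-c^2=1-b^2$, i.e.\ $f\in\M$. Similarly I expect the other residual relations to reduce to $d=\pm bc$, which Lemma~\ref{lm:symmetric equality_matchgate-transformable} identifies as $\M$-transformable, yielding the third alternative. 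The delicate part is enumerating these finitely many exceptional generators carefully, including sign and root-of-unity choices of $\zeta$, and confirming that each one lands in either the crossover case, a hardness condition, or a matchgate identity.
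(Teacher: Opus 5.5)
Your plan is correct and follows the paper's proof: $r$ even gives $\mathcal{S}'$ via $(1,1,-1)$, the $y=1$ family gives condition~(i) unless $p=\pm r=\pm1$, and the leftover generators are handled by conditions (ii)--(vi), leaving exactly $(1,1,-1)$, $(1,-1,-1)$, $(1,-1,1)$, i.e.\ $d=bc$, $d=-bc$, $1-b^2=c^2-d^2$. Three minor deviations are all fine: the crossover instead of hardness in rank~$0$; condition~(vi) for $(1,q,1)$; and, for $(-1,q,1)$ with $q$ odd, conditions (iv)/(v), where you should note that these force $y^q=\pm1$, so $y$ must be a nontrivial root of unity rather than generic, which is possible iff $|q|\ge 3$.

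One slip in your illustration: the matchgate residual is $uw=v$ (generator $(1,-1,1)$), which gives $c^2-d^2=1-b^2$. The relation $uw=-v$ you wrote gives $d^2-c^2=1-b^2$, which is not $\det M_{\rm Out}(f)=\det M_{\rm In}(f)$. It also never arises here, since the generator relation is exactly $u^pv^qw^r=1$.
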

\begin{proof}
    Suppose $\mathrm{rank}(L_{(u,v,w)})=0$, i.e.,  $L_{(u,v,w)}= \{(0,0,0)\}$.
    We can choose any $(x,y,z)\in \C^3$ that satisfies condition~\ref{cd:six vertex model} in Lemma~\ref{lm:interpolation something hard}, and clearly $L_{(u,v,w)}\subseteq L_{(x,y,z)}$.
    Then by Lemma~\ref{lm:interpolation something hard}, $\plholant{\neq_2}{f}$ is \numP-hard.

    Now suppose $\mathrm{rank}(L_{(u,v,w)})=1$ and $L_{(u,v,w)}=\braket{(p,q,r)}$,
    the lattice with a basis vector $(p,q,r)$, where $p,q,r\in \Z$. 
    Without loss of generality, we may assume $p\ge 0$.
    Then for $x,y,z\in \C$, $L_{(u,v,w)}\subseteq L_{(x,y,z)}$ is equivalent to $x^py^qz^r=1$. 
    In the following, corresponding to different values of $(p,q,r)$, we do polynomial interpolation and show that either  $\plholant{\neq_2}{f}\equiv_T^p \holant{\neq_2}{f}$, or $\plholant{\neq_2}{f}$ is \numP-hard, or $(p,q,r)\in\{(1,1,-1),(1,-1,-1),(1,-1,1)\}$.

    If $r$ is even, let $(x,y,z)=(1,1,-1)$.
    Then $x^py^qz^r=1$, thus $L_{(u,v,w)}\subseteq L_{(x,y,z)}$.
    By Lemma~\ref{lm:interpolate disequality crossover}, $\plholant{\neq_2}{\mathcal{F}}\equiv_T^p \holant{\neq_2}{\mathcal{F}}$ for any signature set $\mathcal{F}$ containing $f$.
    
    If $p\pm r\neq 0$, let $(x,y,z)=(t^r,1,t^{-p})$, where $t\in \C$ is to be determined.
    Then $x^py^qz^r=1$.
    Note that $x=\pm z$ is equivalent to $t^{r+p}=\pm 1$, and $xz=1$ is equivalent to $t^{r-p}=1$.
    Since $p\pm r\neq 0$, there are finitely many $t$ satisfying $t^{r+p}=\pm 1$ or $t^{r-p}=1$. 
    Select $t\in \C$ such that $(t^{r+p}\neq \pm1)\land (t^{r-p}\neq 1)$, then $(x\neq \pm z)\land (xz\neq 1)$.
    Thus, $(x,y,z)$ satisfies condition~\ref{cd:six vertex model} in Lemma~\ref{lm:interpolation something hard}, and $\plholant{\neq_2}{f}$ is \numP-hard.

    If $p=r\ge 3$, let $(x,y,z)=(t\zeta_p,1,t^{-1})$, where $t\in \C$ is to be determined.
    Then $x^py^qz^r=1$, and $xz=\zeta_p\neq 1$.
    Note that $x\pm z=0$ is equivalent to $t^2=\pm \zeta_p^{-1}$.
    Select $t$ such that $t^2\neq \pm \zeta_p^{-1}$, then $x\pm z\neq 0$.
    Thus, $(x,y,z)$ satisfies condition~\ref{cd:six vertex model} in Lemma~\ref{lm:interpolation something hard}, and $\plholant{\neq_2}{f}$ is \numP-hard.

    If $p=-r\ge 3$, let $(x,y,z)=(t\zeta_p,1,t)$, where $t\in \C$ is to be determined.
    Then $x^py^qz^r=1$.
    Note that $x\pm z=0$ is equivalent to $t=0$.
    Select $t\neq 0$, then $x\pm z\neq 0$.
    Thus, $(x,y,z)$ satisfies condition~\ref{cd:six vertex model} in Lemma~\ref{lm:interpolation something hard}, and $\plholant{\neq_2}{f}$ is \numP-hard.

    For the remaining cases we may assume $p=\pm r=1$. 
    Next we claim $|q|=1$, otherwise $\plholant{\neq_2}{f}$ is \numP-hard.

    If $|q|\ge 3$, let $(x,y,z)=(1,\zeta_{|q|},1)$.
    Then $x^py^qz^r=1$.
    Note that $(x,y,z)$ satisfies condition~\ref{cd:x=z not matchgate} in Lemma~\ref{lm:interpolation something hard}.
    Thus, $\plholant{\neq_2}{f}$ is \numP-hard.

    If $|q|=2$, let $y=-1$. 
    If $p=r=1$, let $x=t$ and $z=\frac{1}{t}$, where $t\in\C$ is to be determined.
    Then $x^py^qz^r=1$.
    Note that $x\pm z=0$ is equivalent to $t^2=\pm 1$. 
    Select $t\in \C$ such that $t^2\neq \pm 1$, then $(x,y,z)$ satisfies condition~\ref{cd:y=-1 not matchgate} in Lemma~\ref{lm:interpolation something hard}.
    Thus, $\plholant{\neq_2}{f}$ is \numP-hard.
    If $p=-r=1$, let $x=z=2$.
    Then $x^py^qz^r=1$.
    By Lemma~\ref{lm:interpolation xyz}, we can interpolate $g$ with the signature matrix
   $M(g) = \left[\begin{smallmatrix} 
    1-y & 0 & 0 & 1+y \\ 
    0 & x-z & x+z & 0 \\ 
    0 & x+z & x-z & 0 \\ 
    1+y & 0 & 0 & 1-y \end{smallmatrix}\right]
    =2\left[\begin{smallmatrix} 
    1 & 0 & 0 & 0 \\ 
    0 & 0 & 2 & 0 \\ 
    0 & 2 & 0 & 0 \\ 
    0 & 0 & 0 & 1 \end{smallmatrix}\right].$
    By Lemma~\ref{lm:case d=w}, $\plholant{\neq_2}{g}$ is \numP-hard, so $\plholant{\neq_2}{f}$ is \numP-hard.
    
    Now assume $p=\pm r=1,q=\pm 1$. 
    If $p=q=r=1$, let $x=t,y=t^{-2},z=t$, where $t\in \C$ is to be determined.
    Then $x^py^qz^r=1$.
    Note that $x+z\neq 0$ if $t\neq 0$, and $y=x^2$ is equivalent to $t^4=1$.
    Select $t\in \C$ such that $t\neq 0$ and $t^4\neq 1$, then $(x,y,z)$ satisfies condition~\ref{cd:x=z not matchgate} in Lemma~\ref{lm:interpolation something hard}.
    Thus, $\plholant{\neq_2}{f}$ is \numP-hard.

    So far, we conclude that either  $\plholant{\neq_2}{f}\equiv_T^p \holant{\neq_2}{f}$, or $\plholant{\neq_2}{f}$ is \numP-hard, or $(p,q,r)\in\{(1,1,-1),(1,-1,-1),(1,-1,1)\}$.
    If $(p,q,r)=(1,1,-1)$, we have $uvw^{-1}=1$, which simplifies to $d=bc$.
    In this case, $f$ is $\M$-transformable by Lemma~\ref{lm:symmetric equality_matchgate-transformable}.
    If $(p,q,r)=(1,-1,-1)$, we have $uv^{-1}w^{-1}=1$, which simplifies to $d=-bc$.
    Again $f$ is $\M$-transformable by Lemma~\ref{lm:symmetric equality_matchgate-transformable}.
    If $(p,q,r)=(1,-1,1)$, we have $uv^{-1}w=1$, which simplifies to $1-b^2=c^2-d^2$.
    In this case, $f\in \M$ by Lemma~\ref{lm:is matchgate}.
\end{proof}

\begin{corollary}\label{cor:lattice contains certain elements then f is matchgate}
   Let $f$ be a signature with the signature matrix
   $M(f) = \left[\begin{smallmatrix} 
   1 & 0 & 0 & b \\ 
   0 & c & d & 0 \\ 
   0 & d & c & 0 \\ 
   b & 0 & 0 & 1 \end{smallmatrix}\right],$ with $bcd \neq 0,b\neq -1$. 
    Let $u=\frac{c+d}{1+b},v=\frac{-1+b}{1+b},w=\frac{-c+d}{1+b}$. 
    If $(1,1,-1)\in L_{(u,v,w)}$ or $(1,-1,-1)\in L_{(u,v,w)}$, then $f$ is $\M$-transformable.
    If $(1,-1,1)\in L_{(u,v,w)}$, then $f\in \M$.
\end{corollary}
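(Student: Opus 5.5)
The plan is to use only the definition of the lattice, $(p,q,r)\in L_{(u,v,w)}$ iff $u^pv^qw^r=1$. Each of the three membership conditions then becomes a single algebraic identity in $b,c,d$, and the conclusion follows from the criteria already established. This mirrors the last paragraph of the proof of Lemma~\ref{lm: rank<=1}, except that there the triple was a basis of the lattice. Here we only need it to be an element of the lattice, which is all that paragraph actually used.

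First take $(1,1,-1)\in L_{(u,v,w)}$, i.e.\ $uv=w$. Since $bcd\neq 0$ and $b\neq -1$, all of $u,v,w$ are well defined. Here $w\neq 0$ requires $c\neq d$; if $w=0$ a negative exponent is meaningless, so a lattice element with nonzero third coordinate already forces $w\neq 0$ under the convention $0^0=1$. Clearing the denominator $(1+b)^2$, the relation $uv=w$ reads $(c+d)(b-1)=(d-c)(1+b)$. Expanding gives $bc+bd-c-d=bd+d-bc-c$, which simplifies to $2bc=2d$, i.e.\ $d=bc$. By the converse direction of Lemma~\ref{lm:symmetric equality_matchgate-transformable}, $f$ is $\M$-transformable, using the explicit transformation $T=\left[\begin{smallmatrix}1&\ii\\1&-\ii\end{smallmatrix}\right]$.

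Next take $(1,-1,-1)\in L_{(u,v,w)}$, i.e.\ $u=vw$ with $v,w\neq 0$. This gives $(c+d)(1+b)=(b-1)(d-c)$, i.e.\ $c+d+bc+bd=bd-bc-d+c$, which simplifies to $d=-bc$. Again Lemma~\ref{lm:symmetric equality_matchgate-transformable} applies, now with $T=\left[\begin{smallmatrix}1&1\\1&-1\end{smallmatrix}\right]$, so $f$ is $\M$-transformable.

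Finally take $(1,-1,1)\in L_{(u,v,w)}$, i.e.\ $uw=v$. This gives $(c+d)(d-c)=(b-1)(1+b)$, i.e.\ $d^2-c^2=b^2-1$, equivalently $1-b^2=c^2-d^2$. This says $\det M_{\rm Out}(f)=\det M_{\rm In}(f)$. Since $f$ has even parity support, Lemma~\ref{lm:is matchgate} gives $f\in\M$.

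There is no real obstacle beyond the sign bookkeeping in the three expansions. The one point to state explicitly is that membership of a vector with negative entries implicitly requires the corresponding coordinates $v$ or $w$ to be nonzero, so the divisions above are legitimate.
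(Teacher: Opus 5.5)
Your proposal is correct and follows the paper's argument. The paper gives no separate proof: the corollary is read off from the final paragraph of the proof of Lemma~\ref{lm: rank<=1}. There each membership is translated into $d=bc$, $d=-bc$, or $1-b^2=c^2-d^2$, and Lemma~\ref{lm:symmetric equality_matchgate-transformable} or Lemma~\ref{lm:is matchgate} is then invoked. You also observe that only membership, not being a basis vector, is needed, and that negative exponents force the corresponding coordinates to be nonzero; both points are correct and make that step slightly more careful than the paper's.
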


\subsection{Lattice rank 2}\label{subsec: rank2}
In some cases when $\mathrm{rank}(L_{(u,w,v)})=2$, none of conditions~\ref{cd:six vertex model}, \ref{cd:y=-1 not matchgate}, \ref{cd:y=-1 two 0 pairs}, \ref{cd:x=z not matchgate}, \ref{cd:x+z=0 not matchgate} in Lemma~\ref{lm:interpolation something hard} is helpful to prove \numP-hardness of the problem $\plholant{\neq_2}{f}$.
Thus, we have to appeal to the most intricate condition~\ref{cd: redundent} in Lemma~\ref{lm:interpolation something hard}.
Condition~\ref{cd: redundent} requires us to find a ``good'' solution $(x,y,z)$ to $1+y=x+z$, where ``good'' means $x,z\neq 1$ and $y\neq -1$.
In these most intricate cases (\ref{case: p_2=0,p_1=1},~\ref{case: p_2=0,p_1=2},~\ref{case: p_2=1},~\ref{case: p_2=-1} in Lemma~\ref{lm: rank2 c neq pm bd}), the condition $L_{(u,v,w)}\subseteq L_{(x,y,z)}$ implies that $x,y,z$ can be expressed in monomials of some $t\in \C$, with norm 1 coefficients. 
Then we are asked to analyze the distribution of zeros of the Laurent polynomial $P(t):=x+z-y-1$ in terms of $t$.
If we are able to find a zero $t$ of $P(t)$ that lies outside the unit circle, then clearly $x,z\neq 1$ and $y\neq -1$.
It turns out that the following lemma is very powerful in the analysis.

\begin{lemma}\label{lm: roots on unit circle}
    Let $p(z)=\sum\limits_{k=0}^na_kz^k$ be a polynomial with complex coefficients, where $n\in \N, a_n\neq 0$. 
    If all the roots of $p(z)$ lie on the unit circle, then there exists $\theta\in \R$, such that $a_k=e^{\mathfrak{i}\theta}\overline{a_{n-k}}$ for $0\le k\le n$.
\end{lemma}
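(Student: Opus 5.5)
We factor $p$ over its roots and compare $p$ with its reciprocal (conjugate-reversed) polynomial; if all roots lie on the unit circle, the two polynomials have the same roots and must be proportional. Write $p(z)=a_n\prod_{j=1}^n(z-\alpha_j)$ with $|\alpha_j|=1$ for all $j$. In particular every $\alpha_j\neq 0$, so $a_0=a_n\prod_j(-\alpha_j)\neq 0$. Define the reciprocal polynomial
\[
p^*(z)=z^n\overline{p(1/\overline{z})}=\sum_{k=0}^n\overline{a_{n-k}}\,z^k .
\]
The coefficient identity on the right is a direct reindexing: $z^n\overline{a_k}\,\overline{z}^{-k}$ conjugated appropriately gives $\overline{a_k}z^{n-k}$.

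Next we compute $p^*$ from the factorization:
\[
p^*(z)=z^n\,\overline{a_n}\prod_j\left(\tfrac1z-\overline{\alpha_j}\right)=\overline{a_n}\prod_j\left(1-\overline{\alpha_j}z\right).
\]
Since $|\alpha_j|=1$ we have $\overline{\alpha_j}=1/\alpha_j$. Hence $1-\overline{\alpha_j}z=-\overline{\alpha_j}(z-\alpha_j)$, and
\[
p^*(z)=\overline{a_n}\prod_j(-\overline{\alpha_j})\cdot\prod_j(z-\alpha_j)=\lambda\,p(z),\qquad \lambda=\frac{\overline{a_n}}{a_n}\prod_j(-\overline{\alpha_j}).
\]
Every factor in $\lambda$ has modulus $1$, so $|\lambda|=1$ and we may write $\lambda=e^{-\mathfrak{i}\theta}$. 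Comparing coefficients of $z^k$ in $p^*=\lambda p$ gives $\overline{a_{n-k}}=e^{-\mathfrak{i}\theta}a_k$, that is, $a_k=e^{\mathfrak{i}\theta}\overline{a_{n-k}}$ for $0\le k\le n$.

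The only step that needs care is the degenerate case $n=0$. There the hypothesis holds vacuously, and the conclusion $a_0=e^{\mathfrak{i}\theta}\overline{a_0}$ holds by choosing $\theta=2\arg a_0$. Beyond this, the main point to verify is the identity $1-\overline{\alpha}z=-\overline{\alpha}(z-\alpha)$ for $|\alpha|=1$, which is a routine check.
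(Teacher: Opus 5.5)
Your proposal is correct and follows essentially the same route as the paper: both define the reciprocal polynomial $p^*(z)=z^n\overline{p(1/\overline{z})}$, use $\overline{\alpha_j}=1/\alpha_j$ to show $p^*=\lambda p$ with $|\lambda|=1$, and compare coefficients. Your $\lambda=\frac{\overline{a_n}}{a_n}\prod_j(-\overline{\alpha_j})$ equals the paper's $\overline{a_n}/a_0$. Your separate treatment of $n=0$ is harmless but unnecessary, since the empty product already gives $\lambda=\overline{a_0}/a_0$.
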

\begin{proof}
    By the Fundamental Theorem of Algebra, $p(z)$ has $n$ roots  $z_1,z_2,...,z_n$ 
    counting multiplicity.
    Then $p(z)=a_n\prod\limits_{k=1}^n(z-z_k)$, and $a_0=(-1)^na_n\prod\limits_{k=1}^nz_k$.
    By assumption we have $|z_k|=1$ for $1\le k \le n$.
    From $\prod\limits_{k=1}^nz_k=(-1)^n\frac{a_0}{a_n}$ we have $|a_0|=|a_n|\neq 0$.
    Consider the polynomial $p^*(z)=z^n\overline{p(\frac{1}{\overline{z}})}$.
    We have
    \begin{equation*}
        \begin{aligned}
            p^*(z) =&z^n\overline{a_n\prod\limits_{k=1}^n(\frac{1}{\overline{z}}-z_k)}
            =\overline{a_n}z^n\prod\limits_{k=1}^n(\frac{1}{z}-\overline{z_k})
            =\overline{a_n}z^n\prod\limits_{k=1}^n(\frac{1}{z}-\frac{1}{z_k})
            =\overline{a_n}z^n\prod\limits_{k=1}^n(\frac{z_k-z}{z_kz})\\
            =&\frac{\overline{a_n}(-1)^n}{\prod\limits_{k=1}^n{z_k}}
            \prod\limits_{k=1}^n({z-z_k})
            =\frac{\overline{a_n}a_n}{a_0}\prod\limits_{k=0}^n(z-z_k)
            =\frac{\overline{a_n}}{a_0}p(z).
        \end{aligned}
    \end{equation*}
    Since $|a_0|=|a_n|\neq 0$, there exists $\theta\in \R$ such that $\frac{\overline{a_n}}{a_0}=e^{-\mathfrak{i}\theta}$.
    Thus,
    \begin{equation*}
        \begin{aligned}
            p(z)
            =e^{\mathfrak{i}\theta}p^*(z)
            =e^{\mathfrak{i}\theta}z^n\overline{p(\frac{1}{\overline{z}})}
            =e^{\mathfrak{i}\theta}z^n\overline{\sum\limits_{k=0}^na_k(\frac{1}{\overline{z}})^k}
            =e^{\mathfrak{i}\theta}\sum\limits_{k=0}^n \overline{a_k}z^{n-k}
            =\sum\limits_{k=0}^n e^{\mathfrak{i}\theta}\overline{a_{n-k}}z^{k}.
        \end{aligned}
    \end{equation*}
    Therefore, $a_k=e^{\mathfrak{i}\theta}\overline{a_{n-k}}$ for $0\le k\le n$.
\end{proof}

\begin{corollary}\label{cor: roots on unit circle}
    Let $p(z)=\sum\limits_{k=-r}^na_kz^k$ be a Laurent polynomial with complex coefficients, where $n,r\in \N, a_{-r}\neq 0,a_n\neq 0$. 
    If all the zeros of $p(z)$ lie on the unit circle, then there exists $\theta\in \R$, such that $a_k=e^{\mathfrak{i}\theta}\overline{a_{n-r-k}}$ for $-r\le k\le n$.
\end{corollary}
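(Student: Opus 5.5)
Multiplying $p$ by $z^r$ turns it into an ordinary polynomial, so the plan is to reduce the corollary to Lemma~\ref{lm: roots on unit circle} and then reindex.

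Set $q(z)=z^{r}p(z)=\sum_{k=-r}^{n}a_kz^{k+r}=\sum_{j=0}^{n+r}b_jz^j$, where $b_j=a_{j-r}$. Then $q$ is a genuine polynomial of degree $n+r$, and its leading coefficient $b_{n+r}=a_n$ is nonzero. Because $a_{-r}\neq 0$, we also have $b_0=a_{-r}\neq 0$, so $z=0$ is not a root of $q$. For $z\neq 0$, $q(z)=0$ exactly when $p(z)=0$. Hence every root of $q$, counted with multiplicity, is a zero of $p$ and therefore lies on the unit circle.

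If $n+r=0$, then $q$ is a nonzero constant and the conclusion is trivial. Otherwise, Lemma~\ref{lm: roots on unit circle} applies to $q$ with degree $N=n+r$. It gives $\theta\in\R$ such that $b_j=e^{\mathfrak{i}\theta}\overline{b_{N-j}}$ for $0\le j\le N$. Substituting $j=k+r$ with $-r\le k\le n$ turns this into $a_k=e^{\mathfrak{i}\theta}\overline{b_{n-k}}$. Since $b_{n-k}=a_{n-k-r}$, this is $a_k=e^{\mathfrak{i}\theta}\overline{a_{n-r-k}}$, and the index $n-r-k$ stays in the range $[-r,n]$.

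There is no real obstacle. The only point needing care is that $a_{-r}\neq 0$ is exactly what ensures multiplying by $z^r$ introduces no spurious root at $0$, which would violate the unit-circle hypothesis. The remaining work is the index bookkeeping in the last step.
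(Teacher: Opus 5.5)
Your proposal is correct and is essentially the paper's own proof. Both multiply by $z^r$, use $a_{-r}\neq 0$ to rule out a root at $0$, apply Lemma~\ref{lm: roots on unit circle} to the resulting degree-$(n+r)$ polynomial, and reindex $j=k+r$ to obtain $a_k=e^{\mathfrak{i}\theta}\overline{a_{n-r-k}}$.
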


\begin{proof}
    Let $\tilde{p}(z)=z^rp(z)=\sum\limits_{k=-r}^n a_kz^{k+r}= \sum\limits_{j=0}^{n+r}a_{j-r}z^j$.
    Since $a_{-r}\neq 0$, we have $0$ is not a root of $\tilde{p}(z)$.
    Thus, every root of $\tilde{p}(z)$ is a zero of $p(z)$.
    By assumption, all  roots of $\tilde{p}(z)$ lie on the unit circle.
    Then by Lemma~\ref{lm: roots on unit circle}, there exists $\theta\in \R$ such that $a_{j-r}=e^{\mathfrak{i}\theta}\overline{a_{n-j}}$, for $0\le j\le n+r$. 
    The corollary follows.
\end{proof}

\begin{lemma}\label{lm:c=bd}
    Let $f$ be a signature with  signature matrix 
      $M(f) = \left[\begin{smallmatrix} 1 & 0 & 0 & b \\ 0 & c & d & 0 \\ 0 & d & c & 0 \\ b & 0 & 0 & 1 \end{smallmatrix}\right],$ with $bcd \neq 0,b\neq -1$. 
    Let $(u,v,w)=(\frac{c+d}{1+b},\frac{-1+b}{1+b},\frac{-c+d}{1+b})$.
    If $(2,2,-2)\in L_{(u,v,w)}$, then $d=bc$ or $c=bd$.
    If $(2,-2,-2)\in L_{(u,v,w)}$, then $d=-bc$ or $c=-bd$.
\end{lemma}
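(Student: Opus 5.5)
The plan is to unwind the lattice condition into a polynomial identity in $b,c,d$ and factor it. By definition of $L_{(u,v,w)}$, the statement $(2,2,-2)\in L_{(u,v,w)}$ means $u^2v^2w^{-2}=1$, that is, $(uv)^2=w^2$, so $uv=\epsilon w$ for some $\epsilon=\pm1$. Here $w\neq 0$ is automatic, since the lattice membership with a negative exponent requires $w$ to be invertible; equivalently $c\neq d$. Similarly $(2,-2,-2)\in L_{(u,v,w)}$ means $u^2=(vw)^2$, that is, $u=\epsilon vw$. In this case $v,w\neq 0$, since their inverses appear; in particular $b\neq 1$.

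For the first claim I would substitute the definitions and clear the denominator $(1+b)^2$, which is legitimate because $b\neq -1$. The relation $uv=\epsilon w$ becomes
\[(c+d)(b-1)=\epsilon(d-c)(1+b).\]
I then split on $\epsilon$.
\begin{itemize}
\item For $\epsilon=1$, expanding gives $cb-c+db-d=d+db-c-cb$, which simplifies to $2cb=2d$, so $d=bc$.
\item For $\epsilon=-1$, expanding gives $cb-c+db-d=-d-db+c+cb$, which simplifies to $2db=2c$, so $c=bd$.
\end{itemize}
These are exactly the two alternatives in the statement. This matches Corollary~\ref{cor:lattice contains certain elements then f is matchgate}: the case $\epsilon=1$ corresponds to $(1,1,-1)\in L_{(u,v,w)}$, and rotating by $\frac{\pi}{2}$ swaps the roles of $b$ and $c$.

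For the second claim, $u=\epsilon vw$ becomes
\[(c+d)(1+b)=\epsilon(b-1)(d-c).\]
Again I split on $\epsilon$.
\begin{itemize}
\item For $\epsilon=1$, the right side is $bd-bc-d+c$ and the left side is $c+d+bc+bd$, which simplifies to $2bc=-2d$, so $d=-bc$.
\item For $\epsilon=-1$, we get $c+d+bc+bd=-bd+bc+d-c$, which simplifies to $2c=-2bd$, so $c=-bd$.
\end{itemize}

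There is no real obstacle here. The only points needing care are the use of $b\neq-1$ to clear denominators, and checking the bookkeeping of signs and exponents when translating the lattice vector into $u^2v^{\pm2}w^{-2}=1$ and then taking the square root as $\pm$. The hypothesis $bcd\neq 0$ is not needed for the derivation; it only keeps us in the setting where $L_{(u,v,w)}$ is defined as in the section.
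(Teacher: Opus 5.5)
Your proof is correct and is the same direct computation the paper uses: the paper just observes that $u^2v^2w^{-2}=1$ expands to $(d-bc)(c-bd)=0$. That is exactly your split on $\epsilon$, since $(1+b)^2(uv-w)=2(bc-d)$ and $(1+b)^2(uv+w)=2(bd-c)$, and the $(2,-2,-2)$ case works the same way. Your aside about the $\frac{\pi}{2}$ rotation is unnecessary and does not explain the $c=bd$ branch, since $d=bc$ is symmetric in $b$ and $c$, but it plays no role in the argument.
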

\begin{proof}
    If $(2,2,-2)\in L_{(u,v,w)}$, then $u^2v^2w^{-2}=1$.
    A direct computation shows that $(d-bc)(c-bd)=0$.
    It's similar for the case $(2,-2,-2)\in L_{(u,v,w)}$.
\end{proof}

\begin{lemma}\label{lm: rank2 c neq pm bd}

Let $f$ be a signature with the signature matrix 
$M(f) = \left[\begin{smallmatrix} 1 & 0 & 0 & b \\ 0 & c & d & 0 \\ 0 & d & c & 0 \\ b & 0 & 0 & 1 \end{smallmatrix}\right],$ with $bcd \neq 0,b\neq -1,$ and $c\neq \pm bd$. 
Let $(u,v,w)=(\frac{c+d}{1+b},\frac{-1+b}{1+b},\frac{-c+d}{1+b})$.
If $\mathrm{rank}(L_{(u,v,w)})=2 $, then one of the following holds:
    \begin{itemize}
        \item $\plholant{\neq_2}{\mathcal{F}}\equiv_T^p \holant{\neq_2}{\mathcal{F}}$ for any signature set $\mathcal{F}$ containing $f$;
        \item $\plholant{\neq_2}{f}$ is \numP-hard;
        \item 
        $f$ is $\M$-transformable.
    \end{itemize}
\end{lemma}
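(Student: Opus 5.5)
Since $\mathrm{rank}(L_{(u,v,w)})=2$, the lattice is $\braket{\mathbf{b}_1,\mathbf{b}_2}$ with $\mathbf{b}_1,\mathbf{b}_2\in\Z^3$. The condition $L_{(u,v,w)}\subseteq L_{(x,y,z)}$ then means exactly that $(x,y,z)$ satisfies the two monomial equations given by $\mathbf{b}_1,\mathbf{b}_2$. The solution set is a one-parameter family. Concretely, let $(p_1,p_2,p_3)=\mathbf{b}_1\times\mathbf{b}_2$, divided by the gcd of its entries; this is the primitive integer vector orthogonal to both basis vectors. Then there is a finite group of root-of-unity twists $(\zeta,\zeta',\zeta'')$ satisfying the relations, and every choice $(x,y,z)=(\zeta t^{p_1},\zeta' t^{p_2},\zeta'' t^{p_3})$ with $t\in\C^*$ lies in $L_{(x,y,z)}\supseteq L_{(u,v,w)}$. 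I would first bring the lattice basis to a normal form (Hermite form) so that the relations read $x^{\alpha}=\zeta\,y^{\beta}z^{\gamma}$-type equations, and then do a case analysis on the orthogonal vector $(p_1,p_2,p_3)$, using the sign symmetry $t\mapsto t^{-1}$ to normalize.

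The generic step is the following. Whenever $p_2=0$, set $y=1$ (or $y=\zeta'$ with $\zeta'=\pm1$ as forced) and vary $t$. If $y=1$ and $p_1\neq\pm p_3$, condition~\ref{cd:six vertex model} of Lemma~\ref{lm:interpolation something hard} holds for all but finitely many $t$. If instead $y=-1$ is forced, I would use conditions~\ref{cd:y=-1 not matchgate} or~\ref{cd:y=-1 two 0 pairs}. Whenever $p_1=p_3$ and the twist allows $x=z$, condition~\ref{cd:x=z not matchgate} applies with $y=\zeta't^{p_2}$, since $y\ne\pm1$ and $y\ne x^2$ fail only at finitely many $t$. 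Similarly, condition~\ref{cd:x+z=0 not matchgate} applies when $x=-z$ is allowed. If some admissible point is $(1,1,-1)$, Lemma~\ref{lm:interpolate disequality crossover} gives the planar/nonplanar equivalence. If $(1,\pm1,-1)$ or $(1,-1,1)$ lies in the lattice, Corollary~\ref{cor:lattice contains certain elements then f is matchgate} gives $\M$-transformability. The hypothesis $c\neq\pm bd$, together with Lemma~\ref{lm:c=bd}, is what excludes $(2,\pm2,-2)$ from the lattice. This removes the degenerate subcases in which $x=z$ and $x=-z$ are simultaneously forced in a way that kills the conditions above.

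The remaining exceptional shapes of $(p_1,p_2,p_3)$ are roughly $p_2=0$ with $|p_1|=|p_3|\in\{1,2\}$, and $p_2=\pm1$ with small $p_1,p_3$. Here every admissible point makes $x=\pm z$ and $y=\pm1$ degenerate, and I would use condition~\ref{cd: redundent}. Substituting the monomial parametrization gives the Laurent polynomial $P(t)=\zeta t^{p_1}+\zeta''t^{p_3}-\zeta't^{p_2}-1$, whose coefficients all have modulus $1$. If $P$ has a zero with $|t|\neq1$, then $x,z\neq1$ and $y\neq-1$ hold automatically, because the monomials have modulus $\neq1$ when their exponents are nonzero. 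Exponents equal to zero need separate checking, and I would handle those by hand. To produce such a zero, I would apply Corollary~\ref{cor: roots on unit circle}: if all zeros were on the unit circle, the coefficients would satisfy $a_k=e^{\mathfrak{i}\theta}\overline{a_{n-r-k}}$. Matching the four terms against this self-reciprocity yields explicit root-of-unity relations among $\zeta,\zeta',\zeta''$. These relations should in turn translate back, via $u,v,w$, into $d=\pm bc$ (hence $\M$-transformable by Lemma~\ref{lm:symmetric equality_matchgate-transformable}), into $1-b^2=c^2-d^2$ (hence $f\in\M$), or into the excluded $c=\pm bd$.

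The main obstacle is this last step. It requires enumerating the exceptional lattices precisely, tracking the root-of-unity twists, and showing that self-reciprocity of $P$ forces one of the tractable relations rather than some new one. For each such case I would first try the direct coefficient comparison. Failing that, I would pick a different twist $(\zeta,\zeta',\zeta'')$ from the admissible finite group to break the symmetry of $P$.
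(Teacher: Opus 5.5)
Your toolkit is the paper's: admissible points $(x,y,z)$ with $L_{(u,v,w)}\subseteq L_{(x,y,z)}$ are fed into Lemma~\ref{lm:interpolation something hard}, Lemma~\ref{lm:interpolate disequality crossover} and Corollary~\ref{cor:lattice contains certain elements then f is matchgate}, with condition~(vi) and Corollary~\ref{cor: roots on unit circle} as the last resort. But the proposal stops where the proof has to be done, and the split into ``generic'' and ``exceptional'' lattices is not correct.

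Your generic step covers only $p_2=0$ and components on which $x=\pm z$ holds identically. When $p_2\neq 0$ and $p_1\neq\pm p_3$, the equalities $y=\pm1$ and $x=\pm z$ hold only at finitely many roots of unity $t$. Whether the side conditions of (i)--(v) survive at those points is a genuine arithmetic question. That is what the paper's reductions are for, in its normalization $\{(p_1,-q_1,0),(p_2,-q_2,r_2)\}$: the points $(1,1,\zeta_{r_2})$ and $(\zeta_{p_1},1,\zeta_{p_1}^{-p_2})$, and the separately handled pair $(p_1,p_2)=(8,3)$.

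Nor are the leftovers confined to ``$p_2=0$, or $p_2=\pm1$ with small $p_1,p_3$''. Take the saturated lattice orthogonal to $(4,3,5)$, whose admissible points are $(t^4,t^3,t^5)$. Here $y=1$ forces $xz=t^9=1$, $y=-1$ forces $xz=-1$, $x=z$ forces $t=1$, and $x=-z$ forces $t=-1$, so (i)--(v) all fail. Also $(1,1,-1)$ is not admissible, and none of $(1,\pm1,-1)$, $(1,-1,1)$, $(2,\pm2,-2)$ is orthogonal to $(4,3,5)$. Only a root of $t^5+t^4-t^3-1$ off the unit circle gives hardness.

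Your parametrization does permit a cleaner route here than the paper's. Suppose $L$ is saturated and $0,p_1,p_2,p_3$ are pairwise distinct. By Corollary~\ref{cor: roots on unit circle}, all zeros of $P$ lie on the unit circle only if the four exponents split into two pairs with equal sums. That is exactly orthogonality of $(1,-1,1)$, $(1,1,-1)$ or $(1,-1,-1)$ to $(p_1,p_2,p_3)$, hence membership in $L$. So condition~(vi) alone settles every such lattice. You would have to state and use it that way, and it still leaves two difficulties.

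The first difficulty is torsion. Your closing step (self-reciprocity ``should'' translate into $d=\pm bc$, $1-b^2=c^2-d^2$ or $c=\pm bd$) is asserted, not proved. It is false on a single component once $\mathbb{Z}^3/L$ has torsion, because a pairing of exponents only puts the special vector in the saturation of $L$, not in $L$. Take $L=\langle(1,-2,1),(3,3,-3)\rangle$, whose admissible points are $(\omega^{\ell}t,\,t^2,\,\omega^{-\ell}t^3)$ with $\omega=\zeta_3$.
\begin{itemize}
\item On the component $\ell=0$, $P(t)=t^3-t^2+t-1=(t-1)(t^2+1)$, and all of (i)--(vi) fail.
\item $(1,1,-1)$ is orthogonal to $(1,2,3)$, but only $3(1,1,-1)\in L$ and $(2,\pm2,-2)\notin L$, so no tractable relation follows.
\item Hardness comes only from $\ell=1$, e.g.\ the point $(\omega^2,\omega^2,\omega^2)$, which satisfies (iv).
\end{itemize}
You must therefore prove that a suitable twist always exists. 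The paper does this with explicit twists ($\zeta_{p_1}^{\pm1}$, or signs on $x$ and $z$) chosen so that the coefficients of $P$ cannot be self-reciprocal. It uses the hypothesis $c\neq\pm bd$, via Lemma~\ref{lm:c=bd}, in the subcases where $(2,\pm2,-2)\in L$.

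The second difficulty is coincident or vanishing exponents. There $P$ has fewer than four terms, or a coordinate is constant, possibly $x=1$ or $y=-1$, which rules out (vi). The paper treats these separately. None of this enumeration is in your proposal, so as it stands it is a plan rather than a proof.
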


We remark that in our proof, we need to 
set $x,y,z$ to some special complex numbers (roots of unity) to carry out interpolations (even if we only intend to prove a dichotomy
for real-valued signatures).
This is one of the technical reasons to prove dichotomy for complex signatures.

\begin{proof}
    Let $\{(p_1,-q_1,r_1),(p_2,-q_2,r_2)\}$ be a lattice basis for $L_{(u,v,w)}$,
    where $p_i,q_i,r_i\in \Z$ for $i=1,2$.
    Applying the Euclidean algorithm, we may assume $r_1=0$, i.e., $L_{(u,v,w)}=\braket{(p_1,-q_1,0),(p_2,-q_2,r_2)}$.
    Furthermore, we may assume $p_1\ge 0,r_2\ge 0$; and if $p_1>0$, then $|p_2|\le \frac{1}{2}p_1$.
    Note that for $x,y,z\in \C$, $L_{(u,v,w)}\subseteq L_{(x,y,z)}$ is equivalent to 
    \begin{equation}\label{eq:lattice containing condition}
    \left\{
    \begin{aligned}
        &x^{p_1} = y^{q_1}, \\
        &x^{p_2} z^{r_2} = y^{q_2}.
    \end{aligned}
    \right.
    \end{equation}

    In the following, we first show that either $\plholant{\neq_2}{\mathcal{F}}\equiv_T^p \holant{\neq_2}{\mathcal{F}}$ for any signature set $\mathcal{F}$ containing $f$, or $\plholant{\neq_2}{f}$ is \numP-hard for ``generic'' choices of $p_1,q_1,p_2,q_2,r_2$, by selecting proper $x,y,z\in \C$ satisfying Equation~\ref{eq:lattice containing condition} and appealing to Lemma~\ref{lm:interpolate disequality crossover} or  Lemma~\ref{lm:interpolation something hard}.
    Then we deal with the ``exceptional'' choices of $p_1,q_1,p_2,q_2,r_2$.

    We first divide cases in terms of  $r_2$, and show that either $\plholant{\neq_2}{f}\equiv_T^p \holant{\neq_2}{f}$ or $\plholant{\neq_2}{f}$ is \numP-hard unless $r_2=1$.
    
\begin{itemize}
    \item If $r_2$ is even, let $(x,y,z)=(1,1,-1)$, which satisfies Equation~\ref{eq:lattice containing condition}.
    Then by Lemma~\ref{lm:interpolate disequality crossover}, $\plholant{\neq_2}{\mathcal{F}}\equiv_T^p \holant{\neq_2}{\mathcal{F}}$ for any signature set $\mathcal{F}$ containing $f$.
    \item 
     If $r_2\ge 3$, let $(x,y,z)=(1,1,\zeta_{r_2})$. 
    Then $(x,y,z)$ satisfies Equation~\ref{eq:lattice containing condition} and condition~\ref{cd:six vertex model} in Lemma~\ref{lm:interpolation something hard}.
    Thus, $\plholant{\neq_2}{f}$ is \numP-hard.
\end{itemize}
    Now we may assume $r_2=1$.
    Then Equation~\ref{eq:lattice containing condition} becomes
    \begin{equation}\label{eq:lattice containing condition r_2=1}
    \left\{
    \begin{aligned}
        &x^{p_1} = y^{q_1}, \\
        &x^{p_2}  z = y^{q_2}.
    \end{aligned}
    \right.
    \end{equation}
    
    Next we divide cases in terms of  $p_2$.

    \begin{itemize}
        \item 
   
    Assume $|p_2|\ge2$.
    
    \begin{itemize}
        \item If $p_1=0$, let $(x,y,z)=(t,1,t^{-p_2})$, where $t\in \C$ is to be determined.
    Then $(x,y,z)$ satisfies Equation~\ref{eq:lattice containing condition r_2=1}.
    Note that $x=\pm z$ is equivalent to $t^{1+p_2}=\pm1$, and $xz=1$ is equivalent to $t^{1-p_2}=1$.
    Since $|p_2|\ge 2$, there are finitely many $t$ such that $t^{1+p_2}=\pm1$ or $t^{1-p_2}=1$.
    Select $t\in \C$ such that $t^{1+p_2}\neq \pm1$ and $t^{1-p_2}\neq 1$, then $(x,y,z)$ satisfies condition~\ref{cd:six vertex model} in Lemma~\ref{lm:interpolation something hard}.
    Thus, $\plholant{\neq_2}{f}$ is \numP-hard.
    \item 
    If $p_1>0$, then $p_1\ge 4$ by $|p_2|\le \frac{1}{2}p_1$ and $|p_2|\ge 2$.
    Let $(x,y,z)=(\zeta_{p_1},1,\zeta_{p_1}^{-p_2})$, then $(x,y,z)$ satisfies Equation~\ref{eq:lattice containing condition r_2=1}.
    Note that if $x=z$, then $\zeta_{p_1}^{1+p_2}=1$, which implies $p_1\mid 1+p_2$.
    Since $|p_2|\ge 2$, we have $1+p_2\neq 0$ and so $p_1\le |1+p_2|$. This contradicts $|p_2|\le \frac12p_1$ and $p_1\ge 4$.
    Thus, $x\neq z$.
    Similarly, we can check $xz\neq 1$.
    If $x+z\neq 0$, then $(x,y,z)$ satisfies condition~\ref{cd:six vertex model} in Lemma~\ref{lm:interpolation something hard} and so $\plholant{\neq_2}{f}$ in \numP-hard.
    If $x+z=0$, we have $\zeta_{p_1}^{1+p_2}=-1$, which implies $1+p_2=(k+\frac{1}{2})p_1$ for some $k\in \Z$.
    Since $|p_2|\le \frac{1}{2}p_1$, we have $k=0$ and $1+p_2=\frac{1}{2}p_1$.
    Combined with $|p_2|\ge 2$ and $p_1\ge 4$, we have $p_1\ge 6$.
    In this case, $(x,y,z)=(\zeta_{p_1},1,-\zeta_{p_1})$.
    By Lemma~\ref{lm:interpolation xyz}, we have $\plholant{\neq_2}{f,g}\le_T^p\plholant{\neq_2}{f}$, where $g$ has the signature matrix
    $$
    M(g)=
    \left[\begin{smallmatrix} 
    1-y & 0 & 0 & 1+y \\ 
    0 & x-z & x+z & 0 \\ 
    0 & x+z & x-z & 0 \\ 
    1+y & 0 & 0 & 1-y
    \end{smallmatrix}\right]
    =2
    \left[\begin{smallmatrix} 
    0 & 0 & 0 & 1 \\ 
    0 & \zeta_{p_1} & 0 & 0 \\ 
    0 & 0 & \zeta_{p_1} & 0 \\ 
    1 & 0 & 0 & 0
    \end{smallmatrix}\right]
    $$ 
    By $p_1\ge6$ and Theorem~\ref{thm:planar six-vertex}, $\plholant{\neq_2}{g}$ is \numP-hard unless $p_1=8$.
    When $p_1=8$, we have $p_2=3$ by $1+p_2=\frac{1}{2}p_1$.
    Thus, $\plholant{\neq_2}{f}$ is \numP-hard unless $p_1=8,p_2=3$.
    \end{itemize}
    
    \item
    Assume $p_2=0$.
    \begin{itemize}
        \item If $p_1=0$, let $(x,y,z)=(t,1,1)$, where $t\in \C$ is to be determined. 
        Then $(x,y,z)$ satisfies Equation~\ref{eq:lattice containing condition r_2=1}. 
        Select $t\neq \pm1$, then $(x,y,z)$ satisfies condition~\ref{cd:six vertex model} in Lemma~\ref{lm:interpolation something hard} and so $\plholant{\neq_2}{f}$ is \numP-hard.
        \item If $p_1\ge 3$, let $(x,y,z)=(\zeta_{p_1},1,1)$, which satisfies Equation~\ref{eq:lattice containing condition r_2=1} and condition~\ref{cd:six vertex model} in Lemma~\ref{lm:interpolation something hard}.
        Thus, $\plholant{\neq_2}{f}$ is \numP-hard. 
    \end{itemize}
 \end{itemize}

    To summarize, either $\plholant{\neq_2}{f}\equiv_T^p\holant{\neq_2}{f}$ or $\plholant{\neq_2}{f}$ is \numP-hard, unless $r_2=1$ and $p_1,p_2$ belong to the following cases:
    \begin{enumerate}
        \item $p_2=3,p_1=8$;
        \item $p_2=0,p_1=1$;
        \item $p_2=0,p_1=2$;
        \item $p_2=1$;
        \item $p_2=-1$.
    \end{enumerate}

    \vspace{.1in}
    \noindent
    In the following we deal with these five exceptional cases to complete the
    proof of Lemma~\ref{lm: rank2 c neq pm bd}.

    \begin{enumerate}
        \item $p_2=3,p_1=8.$\label{case: p_2=3,p_1=8}
        
        In this case, $\{(8,-q_1,0),(3,-q_2,1)\}$ is a lattice basis for $L_{(u,v,w)}$.
        Now for $x,y,z\in \C$, $L_{(u,v,w)}\subseteq L_{(x,y,z)}$ is equivalent to 
    \begin{equation}\label{eq:lattice containing condition p1=8,p2=3}
    \left\{
    \begin{aligned}
        &x^8 = y^{q_1}, \\
        &x^3z = y^{q_2}.
    \end{aligned}
    \right.
    \end{equation}
        \begin{itemize}
            \item 
            If $q_1,q_2$ are both odd, let $(x,y,z)=(\zeta_{16},-1,\zeta_{16}^5)$, which satisfies Equation~\ref{eq:lattice containing condition p1=8,p2=3} and condition~\ref{cd:y=-1 not matchgate} in Lemma~\ref{lm:interpolation something hard}.
            Thus, $\plholant{\neq_2}{f}$ is \numP-hard.
            
            \item 
            If $q_1$ is odd and $q_2$ is even, let $(x,y,z)=(\zeta_{16},-1,\zeta_{16}^{13})$, by the same reason above, $\plholant{\neq_2}{f}$ is \numP-hard.
            
            \item 
            If $q_1$ is even and $q_2$ is odd, let $(x,y,z)=(\zeta_8,-1,\zeta_8)$, which satisfies Equation~\ref{eq:lattice containing condition p1=8,p2=3} and condition~\ref{cd:y=-1 two 0 pairs} in Lemma~\ref{lm:interpolation something hard}.
            Thus, $\plholant{\neq_2}{f}$ is \numP-hard.

            \item 
            Assume $q_1,q_2$ are both even.
            Note that $(2,2q_2-q_1,-2)\in L_{(u,v,w)}$.
            We claim $2q_2-q_1\neq 0$.
            Otherwise $u^2w^{-2}=1$, which simplifies to $cd=0$, contradicting the assumption of this Lemma.
            If $|2q_2-q_1|= 2$,
            then $(2,2,-2)\in L_{(u,v,w)}$ or $(2,-2,-2)\in L_{(u,v,w)}$.
            By Lemma~\ref{lm:c=bd}, either $d=\pm bc$ and $f$ is $\M$-transformable by Lemma~\ref{lm:symmetric equality_matchgate-transformable}, or $c=\pm bd$, contradicting the assumption of this Lemma.
            Now assume $|2q_2-q_1|\ge 4$.
            Let $y=\zeta_{|2q_2-q_1|}$, then $y\neq \pm1$.
            Choose an arbitray $x$ such that $x^4=-y^{q_2}$ and set $z=-x$.
            Then $x^8=y^{2q_2}=y^{q_1}$ and $x^3z=-x^4=y^{q_2}$, i.e.,  $(x,y,z)$ satisfies Equation~\ref{eq:lattice containing condition p1=8,p2=3}.
            If $y=-x^2$, we replace $x$ by $\mathfrak{i}x$, and $z$ by $\mathfrak{i}z$;
            note that after the replacement $(x,y,z)$
            still satisfies Equation~\ref{eq:lattice containing condition p1=8,p2=3}.
            Then $y\neq -x^2$ and so $(x,y,z)$ satisfies condition~\ref{cd:x+z=0 not matchgate} in Lemma~\ref{lm:interpolation something hard}.
            Thus, $\plholant{\neq_2}{f}$ is \numP-hard.
        \end{itemize}
        
        \item $p_2=0,p_1=1.$\label{case: p_2=0,p_1=1}
        
         In this case, $\{(1,-q_1,0),(0,-q_2,1)\}$ is a lattice basis for $L_{(u,v,w)}$.
         Note that $(1,q_2-q_1,-1)\in L_{(u,v,w)}$ and $(1,-(q_1+q_2),1)\in L_{(u,v,w)}$.
         We claim $q_2\neq q_1$.
         Otherwise, $(1,0,-1)\in L_{(u,v,w)}$ and so $uw^{-1}=1$, which simplifies to $c=0$, contradicting the assumption of this Lemma.
         If $|q_2-q_1|=1$, then $(1,1,-1)\in L_{(u,v,w)}$ or $(1,-1,-1)\in L_{(u,v,w)}$, which implies $f\in \M$ by Corollary~\ref{cor:lattice contains certain elements then f is matchgate}.
         If $q_1+q_2=1$, then $(1,-1,1)\in L_{(u,v,w)}$ and so $f$ is $\M$-transformable by Corollary~\ref{cor:lattice contains certain elements then f is matchgate}.
         In the following of this case, we assume $|q_2-q_1|\ge 2$ and $q_1+q_2\neq 1$.
         
        \begin{claim}\label{claim: exist good root p2=0 p1=1}
            Assume $|q_2-q_1|\ge2$ and $q_1+q_2\neq 1$.
            If $q_1,q_2\notin\{0,1\}$, then there exists $t\in \C,|t|\neq 1$ such that $t$ is a zero of $P(t):=t^{q_2}+t^{q_1}-t-1$. ($P(\cdot)$ is a Laurent polynomial. If 
            $q_1<0$ or $q_2 < 0$, then $t=0$ is a pole, but \emph{not} a zero of $P(\cdot)$.)
        \end{claim}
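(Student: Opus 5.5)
Assume, for contradiction, that every zero of $P(t)=t^{q_1}+t^{q_2}-t-1$ in $\C\setminus\{0\}$ lies on the unit circle. We will show that Corollary~\ref{cor: roots on unit circle} then forces a coefficient pattern that cannot occur under the hypotheses. By symmetry between $q_1$ and $q_2$ in $P$, the conditions $|q_2-q_1|\ge 2$ and $q_1+q_2\neq 1$ are symmetric, so we may assume $q_1<q_2$. Since $q_1,q_2\notin\{0,1\}$ and $q_1\neq q_2$, the four exponents $0,1,q_1,q_2$ are pairwise distinct. Hence $P$ is a genuine Laurent polynomial with exactly four nonzero coefficients $+1,+1,-1,-1$, attached to $q_1,q_2$ and to $1,0$ respectively. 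In particular $P\not\equiv 0$. Its lowest exponent is $-r=\min(q_1,0)$ and its highest is $n=\max(q_2,1)$.

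\textbf{Step 1 (a nonconstant zero exists).} We need $P$ to have at least one zero in $\C^*$, i.e.\ $t^rP(t)$ must have positive degree. The span $n+r=\max(q_2,1)-\min(q_1,0)$ is at least $1$ because there are at least two distinct exponents. So zeros exist, and all of them are nonzero because $a_{-r}\neq 0$.

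\textbf{Step 2 (apply the corollary).} If all zeros lie on the unit circle, Corollary~\ref{cor: roots on unit circle} gives some $\theta$ with $a_k=e^{\mathfrak{i}\theta}\overline{a_{n-r-k}}$. Set $s=n-r$, so the support must be symmetric under the reflection $k\mapsto s-k$, and coefficient values transform by the fixed unimodular factor $e^{\mathfrak{i}\theta}$. All coefficients here are real $\pm1$, so $e^{\mathfrak{i}\theta}=\pm1$. The reflection is an involution on the four-element support $\{0,1,q_1,q_2\}$ that reverses order. Therefore it pairs the smallest exponent with the largest, and the two middle exponents with each other.

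\textbf{Step 3 (case check).} We split on the order type of $\{0,1,q_1,q_2\}$.

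\emph{Case $q_1<q_2<0$.} The order is $q_1<q_2<0<1$, so the pairings are $q_1\leftrightarrow 1$ and $q_2\leftrightarrow 0$. This forces $q_1+1=q_2+0$, i.e.\ $q_2-q_1=1$, contradicting $|q_2-q_1|\ge 2$.

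\emph{Case $q_1<0<1<q_2$.} The pairings are $q_1\leftrightarrow q_2$ and $0\leftrightarrow 1$. This forces $q_1+q_2=0+1=1$, excluded by hypothesis.

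\emph{Case $1<q_1<q_2$.} The order is $0<1<q_1<q_2$, so the pairings are $0\leftrightarrow q_2$ and $1\leftrightarrow q_1$. This forces $q_2=1+q_1$, which again gives $|q_2-q_1|=1$, a contradiction.

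These are all the possible order types, since neither $q_i$ equals $0$ or $1$. In every case the support fails to be reflection-symmetric, so some zero of $P$ lies off the unit circle, which proves the claim. Sign consistency of the coefficients never needs to be checked.

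The only delicate point is Step 1: one must make sure $P$ is not a monomial, which holds because the exponents are distinct and no coefficients cancel. With Step 1 in place the proof is just this bookkeeping of exponents.
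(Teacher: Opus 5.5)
Your proof is correct and takes the same approach as the paper. Assume all zeros lie on the unit circle, and Corollary~\ref{cor: roots on unit circle} forces the exponent set $\{0,1,q_1,q_2\}$ to be symmetric under a reflection, which gives $|q_2-q_1|=1$ or $q_1+q_2=1$. Your order-type case analysis only makes explicit which pairing occurs, where the paper just observes that two of the exponents must sum to the other two.
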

        \begin{claimproof}{\ref{claim: exist good root p2=0 p1=1}}
            Suppose by contradiction that all zeros of $P(\cdot)$ lie on the unit circle.
            By assumption, we have $t^{q_1},t^{q_2},-t,-1$ are distinct monomials.
            Then by Corollary~\ref{cor: roots on unit circle}, these four monomials must be grouped in pairs 
            such that the sum of the degrees in each pair is the same.
            In other words, the sum of some two elements in ${q_1, q_2, 1, 0}$ equals the sum of the other two.
            This implies $|q_1-q_2|=1$ or $q_1+q_2=1$, contradiction.
        \end{claimproof} 

        Now we handle the case where $p_2=0,p_1=1$ using Claim~\ref{claim: exist good root p2=0 p1=1}.
        As $p_2=0,p_1=1$, for $x,y,z\in \C$, $L_{(u,v,w)}\subseteq L_{(x,y,z)}$ is equivalent to 
        \begin{equation}\label{eq:lattice containing condition p2=0,p1=1}
        \left\{
        \begin{aligned}
            &x = y^{q_1}, \\
            &z = y^{q_2}.
        \end{aligned}
        \right.
        \end{equation}

        \begin{itemize}
            \item 
            If $q_1,q_2\notin \{0,1\}$,
            by Claim~\ref{claim: exist good root p2=0 p1=1}, let $t$ be a root of $P(t)=t^{q_2}+t^{q_1}-t-1$ where $|t|\neq 1$. 
            Let $(x,y,z)=(t^{q_1},t,t^{q_2})$, then $(x,y,z)$ satisfies Equation~\ref{eq:lattice containing condition p2=0,p1=1} and $1+y=x+z$.
            Also, $(y\neq -1)\land(x\neq 1)\land (z\neq 1)$, since $|t|\neq 1$.
            Thus, $(x,y,z)$ satisfies condition~\ref{cd: redundent} in Lemma~\ref{lm:interpolation something hard} and so $\plholant{\neq_2}{f}$ is \numP-hard.

            \item 
            If $q_1=0$, then $|q_2|\ge 2$ by $|q_2-q_1|\ge2$.
            Let $(x,y,z)=(1,\zeta_{2|q_2|},-1)$, then $(x,y,z)$ satisfies Equation~\ref{eq:lattice containing condition p2=0,p1=1} and condition~\ref{cd:x+z=0 not matchgate} in Lemma~\ref{lm:interpolation something hard}.
            Thus, $\plholant{\neq_2}{f}$ is \numP-hard.

            \item 
            If $q_1=1$, then $|q_2-1|\ge 2$. 
            Let $(x,y,z)=(\zeta_{2|q_2-1|},\zeta_{2|q_2-1|},-\zeta_{2|q_2-1|})$, then $(x,y,z)$ satisfies Equation~\ref{eq:lattice containing condition p2=0,p1=1} and condition~\ref{cd:x+z=0 not matchgate} in Lemma~\ref{lm:interpolation something hard}.
            Thus, $\plholant{\neq_2}{f}$ is \numP-hard.

            \item 
            If $q_2=0$ or $q_2=1$, by the symmetry of $q_1,q_2$, it's the same as the case where $q_1=0$ or $q_2=1$ and thus $\plholant{\neq_2}{f}$ is \numP-hard.
        \end{itemize}
        The above covers all cases of $q_1, q_2$.
        The case  $p_2=0,p_1=1$ has been proved.
        
        \item $p_2=0,p_1=2.$\label{case: p_2=0,p_1=2}
        
         In this case, $\{(2,-q_1,0),(0,-q_2,1)\}$ is a lattice basis for $L_{(u,v,w)}$.
         Note that $(2,2q_2-q_1,-2)\in L_{(u,v,w)}$.
         We claim $q_1\neq 2q_2$.
         Otherwise, $(2,0,-2)\in L_{(u,v,w)}$ and so $u^2w^{-2}=1$, which simplifies to $cd=0$, a contradiction to the assumption of the lemma.
         
        \begin{claim}\label{claim: exist good root p2=0 p1=2}
            Assume $q_1\neq 2q_2$. 
            If  $q_1\neq 0$ and $q_2\neq 0,1$,
            then there exists $t\in \C,|t|\neq 1$ such that 
            $t$ is a zero of $P(t):=t^{2q_2}-t^{q_1}-t^2-1$.
        \end{claim}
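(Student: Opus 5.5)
The plan mirrors the proof of Claim~\ref{claim: exist good root p2=0 p1=1}. Suppose, for contradiction, that every zero of the Laurent polynomial $P(t)=t^{2q_2}-t^{q_1}-t^2-1$ lies on the unit circle; we then apply Corollary~\ref{cor: roots on unit circle}. The first step is to confirm that the four monomials $t^{2q_2}$, $-t^{q_1}$, $-t^2$, $-1$ have pairwise distinct exponents, because otherwise terms could merge or cancel. We have $2q_2\neq 0$ (since $q_2\neq 0$) and $2q_2\neq 2$ (since $q_2\neq 1$). We have $q_1\neq 0$ by hypothesis and $q_1\neq 2q_2$ by hypothesis. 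The only remaining coincidence is $q_1=2$, which must be handled separately (see below). Assuming the exponents are distinct, $P$ has exactly four nonzero coefficients, all of modulus one, and the corollary applies.

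With lowest degree $-r$ and highest degree $n$, Corollary~\ref{cor: roots on unit circle} states that the coefficient at exponent $k$ equals $e^{\ii\theta}$ times the conjugate of the coefficient at exponent $n-r-k$. This forces the exponents $\{2q_2,q_1,2,0\}$ to be symmetric about $(n-r)/2$, so they split into two pairs with equal sums. There are three possible pairings.

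\textbf{Pairing $2q_2+q_1=2+0$.} The corresponding coefficients are $1,-1$ and $-1,-1$. The relation gives $1=e^{\ii\theta}\overline{(-1)}$, so $e^{\ii\theta}=-1$, and also $-1=e^{\ii\theta}\overline{(-1)}$, so $e^{\ii\theta}=1$. This is a contradiction.

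\textbf{Pairing $2q_2+2=q_1+0$.} The coefficients are $1,-1$ and $-1,-1$, giving the same contradiction.

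\textbf{Pairing $2q_2+0=q_1+2$.} The coefficients are $1,-1$ and $-1,-1$, again the same contradiction.

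In every pairing, the pair containing $t^{2q_2}$ has coefficients of opposite sign, while the other pair has equal signs. This sign asymmetry is what kills all three cases, and it is the key point of the argument. Note that, unlike Claim~\ref{claim: exist good root p2=0 p1=1}, the combinatorial pairing condition alone does not give a contradiction; the coefficient phases are needed.

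A degenerate pairing, where a monomial pairs with itself at the centre, cannot occur with four distinct exponents, because the symmetry $k\mapsto n-r-k$ has at most one fixed point and the number of exponents is even.

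The main obstacle is the exceptional case $q_1=2$. There $P(t)=t^{2q_2}-2t^2-1$ has three terms, with unequal moduli $1,2,1$. If all zeros were on the unit circle, the corollary would force the coefficient at exponent $2$ to be $e^{\ii\theta}$ times the conjugate of the coefficient at the mirrored exponent. The mirror of $2$ cannot be $2$ itself, because then the centre would be $2$ while $\{2q_2,0\}$ must be symmetric about $q_2$, forcing $q_2=1$, which is excluded. So exponent $2$ is paired with $0$ or with $2q_2$, whose coefficients have modulus $1$, and this contradicts the modulus $2$. Thus even this case yields a root $t$ with $|t|\neq1$, completing the claim.
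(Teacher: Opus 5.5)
The four-term case ($q_1\neq 2$) is correct and matches the paper's proof. The exponents $2q_2,q_1,2,0$ are distinct, and Corollary~\ref{cor: roots on unit circle} forces two mirror pairs. In every pairing, the pair containing $t^{2q_2}$ has coefficients $1,-1$, which forces $e^{\mathfrak{i}\theta}=-1$. The other pair has coefficients $-1,-1$, which forces $e^{\mathfrak{i}\theta}=1$.

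There is a gap in the exceptional case $q_1=2$. You claim the exponent $2$ cannot be its own mirror image, because the centre would be $2$ and $\{2q_2,0\}$ is symmetric about $q_2$, ``forcing $q_2=1$''. But a centre at $2$ means $2q_2+0=2\cdot 2$, i.e.\ $q_2=2$, not $q_2=1$. The value $q_2=2$ is allowed by the hypotheses, since $q_1=2\neq 4=2q_2$.

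In that subcase $P(t)=t^4-2t^2-1$. The coefficient moduli $1,2,1$ at exponents $4,2,0$ are symmetric about the centre, so your modulus argument gives no contradiction and the case is left unproved. You need the phases here, which is exactly the paper's argument for $q_1=2$:
\begin{itemize}
\item the self-paired centre coefficient gives $-2=e^{\mathfrak{i}\theta}\,\overline{(-2)}$, so $e^{\mathfrak{i}\theta}=1$;
\item the outer pair gives $1=e^{\mathfrak{i}\theta}\,\overline{(-1)}$, so $e^{\mathfrak{i}\theta}=-1$.
\end{itemize}
These are incompatible. Alternatively, solve directly: $t^2=1\pm\sqrt2$, so $|t|\neq 1$.

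Your modulus argument does correctly handle the off-centre configurations, where exponent $2$ is mirrored to $0$ or to $2q_2$; the paper leaves these implicit. With the $q_1=q_2=2$ subcase patched as above, the proof is complete.
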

        \begin{claimproof}{\ref{claim: exist good root p2=0 p1=2}}
            First consider the case where $q_1=2$.
            Then $P(t)=t^{2q_2}-2t^{2}-1$, where $t^{2q_2},-2t^{2},-1$ are different 
            monomial terms.
            If all 
            zeros of $P(t)$ lie on the unit circle, then there exists $\theta\in \R$ such that $1=e^{\mathfrak{i}\theta}\cdot(-1)$ and $-2=e^{\mathfrak{i}\theta}\cdot(-2)$ by Corollary~\ref{cor: roots on unit circle}; 
            however  such a $\theta$ clearly doesn't exist, a contradiction.
            Thus, there exists $t\in \C,|t|\neq 1$ such that $P(t)=0$. 
            
            If $q_1\neq 2$, then $t^{2q_2},-t^{q_1},-t^2,-1$ are different monomial terms. 
            Suppose all  zeros of $P(t)$ lie on the unit circle.
            By Corollary~\ref{cor: roots on unit circle}, these monomials must be grouped in pairs such that the coefficients in each pair differ by a global constant $e^{\mathfrak{i}\theta}$ up to conjugation.
            But this is impossible, as the coefficients are $1,-1,-1,-1$.
        \end{claimproof} 
        
        Now we handle the case where $p_2=0,p_1=2$ using Claim~\ref{claim: exist good root p2=0 p1=2}. 
        As $p_2=0,p_1=2$, for $x,y,z\in \C$, $L_{(u,v,w)}\subseteq L_{(x,y,z)}$ is equivalent to 
        \begin{equation}\label{eq:lattice containing condition p2=0,p1=2}
        \left\{
        \begin{aligned}
            &x^2 = y^{q_1}, \\
            &z = y^{q_2}.
        \end{aligned}
        \right.
        \end{equation}

        \begin{itemize}
            \item 
            If $q_1\neq 0$ and $q_2\neq 0,1$,
            by Claim~\ref{claim: exist good root p2=0 p1=2}, let $t$ be a root of $P(t)=t^{2q_2}-t^{q_1}-t^2-1$ where $|t|\neq 1$. 
            Let $(x,y,z)=(-t^{q_1},t^2,t^{2q_2})$, then $(x,y,z)$ satisfies Equation~\ref{eq:lattice containing condition p2=0,p1=2} and $1+y=x+z$.
            Also, $(y\neq -1)\land(x\neq 1)\land (z\neq 1)$, since $|t|\neq 1$.
            Thus, $(x,y,z)$ satisfies condition~\ref{cd: redundent} in Lemma~\ref{lm:interpolation something hard} and so $\plholant{\neq_2}{f}$ is \numP-hard.

            \item 
            If $q_1=0$, then $q_2\neq 0$ by $q_1\neq2q_2$.
            \begin{itemize}
            \item If $|q_2|\ge 2$, let $(x,y,z)=(1,\zeta_{2|q_2|},-1)$, which satisfies Equation~\ref{eq:lattice containing condition p2=0,p1=2} and condition~\ref{cd:x+z=0 not matchgate} in Lemma~\ref{lm:interpolation something hard}.
            Thus, $\plholant{\neq_2}{f}$ is \numP-hard.
            \item If $q_2=1$, then $\{(2,0,0),(0,-1,1)\}$ is a lattice basis for $L_{(u,v,w)}$.
            Then $(u=-1)\land (v=w)$, which simplifies to $(c=-b)\land(d=-1)$, contradicting the assumption that $c\neq bd$.
            \item If $q_2=-1$, then $=\{(2,0,0),(0,1,1)\}$ is a lattice basis for $L_{(u,v,w)}$.
            Note that $(2,-2,-2)\in L_{(u,v,w)}$.
            By Lemma~\ref{lm:c=bd}, either $d=-bc$ and so $f$ is $\M$-transformable by Lemma~\ref{lm:symmetric equality_matchgate-transformable}, or $c=-bd$,
            contradicting the assumption of this lemma.
            \end{itemize}

            \item 
            If $q_2=0$, then $q_1\neq 0$ by $q_1\neq2q_2$.
            \begin{itemize}
            \item If $|q_1|>2$, let $(x,y,z)=(-1,\zeta_{|q_1|},1)$, which satisfies Equation~\ref{eq:lattice containing condition p2=0,p1=2} and condition~\ref{cd:x+z=0 not matchgate} in Lemma~\ref{lm:interpolation something hard}.
            Thus, $\plholant{\neq_2}{f}$ is \numP-hard.

            \item If $q_1=\pm 1$, let $(x,y,z)=(\mathfrak{i},-1,1)$, which satisfies Equation~\ref{eq:lattice containing condition p2=0,p1=2} and condition~\ref{cd:y=-1 not matchgate} in Lemma~\ref{lm:interpolation something hard}.
            Thus, $\plholant{\neq_2}{f}$ is \numP-hard.

            \item If $q_1=2$, then $\{(2,-2,0),(0,0,1)\}$ is a lattice basis for $L_{(u,v,w)}$.
            Then $(uv^{-1}=-1)\land (w=1)$, which simplifies to $(c=-b)\land(d=1)$, contradicting the assumption that $c\neq -bd$.

            \item If $q_1=-2$, then $\{(2,2,0),(0,0,1)\}$ is a lattice basis for $L_{(u,v,w)}$.
            Note that $(2,2,-2)\in L_{(u,v,w)}$.
            By Lemma~\ref{lm:c=bd}, either $d=bc$ and so $f$ is $\M$-transformable by Lemma~\ref{lm:symmetric equality_matchgate-transformable}, or $c=bd$,
            contradicting the assumption of this lemma.
            \end{itemize}

            \item 
            If $q_2=1$, then $q_1\neq 2$ by $q_1\neq2q_2$.
        Also, we may assume $q_1\neq 0$, since the case where $q_1=0$ has been discussed above.
        \begin{itemize}
            \item If $q_1\ge 5$ or $q_1\le -1$, then $|q_1-2|\ge 3$.
            Let $(x,y,z)=(-\zeta_{|q_1-2|},\zeta_{|q_1-2|},\zeta_{|q_1-2|})$, which satisfies Equation~\ref{eq:lattice containing condition p2=0,p1=2} and condition~\ref{cd:x+z=0 not matchgate} in Lemma~\ref{lm:interpolation something hard}.
            Thus, $\plholant{\neq_2}{f}$ is \numP-hard.

            \item If $q_1=1$ or $q_1=3$, let $(x,y,z)=(\mathfrak{i},-1,-1)$, which satisfies Equation~\ref{eq:lattice containing condition p2=0,p1=2} and condition~\ref{cd:y=-1 not matchgate} in Lemma~\ref{lm:interpolation something hard}.
            Thus, $\plholant{\neq_2}{f}$ is \numP-hard.

            \item If $q_1=4$, then $\{(2,-4,0),(0,-1,1)\}$ is a lattice basis for $L_{(u,v,w)}$.
            Note that $(2,-2,-2)\in L_{(u,v,w)}$.
            By Lemma~\ref{lm:c=bd}, either $d=-bc$ and so $f$ is $\M$-transformable by Lemma~\ref{lm:symmetric equality_matchgate-transformable}, or $c=-bd$,
            contradicting the assumption of this lemma.
            \end{itemize}
        \end{itemize}
        
        The above covers all cases of $q_1,q_2$.
        The case $p_2=0,p_1=2$ has been proved.

        \item $p_2=1$. \label{case: p_2=1}
        
        In this case, $\{(p_1,-q_1,0),(1,-q_2,1)\}$ is a lattice basis for $L_{(u,v,w)}$.
        We may assume $p_1\ge0$. 
        If $q_2=1$, then $(1,-1,1)\in L_{(u,v,w)}$ and so $f\in \M$ by Corollary~\ref{cor:lattice contains certain elements then f is matchgate}.
        In the following, we assume $q_2\neq 1$.

        \begin{claim}\label{claim: exist good root p2=1}
            Assume $q_2\neq 1$. If $p_1\ge3$, then there exists $t\in \C,|t|\neq 1$ such that $P(t):=\omega t^{q_1}+\omega^{-1}t^{p_1q_2-q_1}-t^{p_1}-1=0$, where $\omega=\zeta_{p_1}$.
        \end{claim}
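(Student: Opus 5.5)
Following Claims~\ref{claim: exist good root p2=0 p1=1} and~\ref{claim: exist good root p2=0 p1=2}, we argue by contradiction via Corollary~\ref{cor: roots on unit circle}. Suppose every zero of $P(t)=\omega t^{q_1}+\omega^{-1}t^{p_1q_2-q_1}-t^{p_1}-1$ lies on the unit circle. Since $p_1\ge 3$, $\omega=\zeta_{p_1}\neq\pm1$ and $\omega^2\neq 1$; this is the only feature of $\omega$ we use. The Laurent polynomial has at most four monomials, with exponents $q_1$, $p_1q_2-q_1$, $p_1$, $0$ and coefficients $\omega,\omega^{-1},-1,-1$. First we record which exponents can coincide. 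The term $-t^{p_1}$ and the constant $-1$ are distinct because $p_1\neq 0$.

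\textbf{Generic case: all four exponents distinct.} Corollary~\ref{cor: roots on unit circle} requires an involution on the set of exponents, $k\mapsto n-r-k$. This involution pairs the terms, or fixes one at the center, and there is a unimodular $e^{\mathfrak{i}\theta}$ with $a_k=e^{\mathfrak{i}\theta}\overline{a_{n-r-k}}$. Since four distinct exponents are symmetric about a center, each exponent is either matched with another or is itself the center; at most one can be the center, so with four terms all are paired. The possible pairings are as follows.

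\emph{$\{p_1,0\}$ paired, and $\{q_1,p_1q_2-q_1\}$ paired.} Then $-1=e^{\mathfrak{i}\theta}(-1)$ forces $e^{\mathfrak{i}\theta}=1$, and $\omega=\overline{\omega^{-1}}=\omega$ holds. This pairing needs $q_1+(p_1q_2-q_1)=p_1$, i.e.\ $p_1q_2=p_1$, i.e.\ $q_2=1$, which is excluded by hypothesis. This is the key place where $q_2\neq1$ is used.

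\emph{$\omega t^{q_1}$ paired with $-1$ or $-t^{p_1}$.} Then $\omega=-e^{\mathfrak{i}\theta}$, and the remaining pair gives $\omega^{-1}=-e^{\mathfrak{i}\theta}$. Hence $\omega^2=1$, a contradiction.

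\textbf{Degenerate cases: some exponents coincide.} There are four ways this can happen.

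\emph{$q_1=0$ or $q_1=p_1$, with $p_1q_2-q_1$ different.} Then the coinciding terms merge into a coefficient $\omega-1$ on one of the exponents $0$ or $p_1$. The polynomial then has three terms, with coefficients $\omega-1$, $-1$, $\omega^{-1}$. The extreme coefficients must have equal modulus.
\begin{itemize}
\item If the merged term is not extreme, we compare $|{-1}|$ with $|\omega^{-1}|$, which are equal, and then check the middle coefficient: $\omega-1=e^{\mathfrak{i}\theta}\overline{(\omega-1)}$. The relation between the extremes fixes $\theta$: $-1=e^{\mathfrak{i}\theta}\overline{\omega^{-1}}=e^{\mathfrak{i}\theta}\omega$. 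Substituting gives $\omega-1=-\omega^{-1}(\overline{\omega}-1)$, which holds identically. This is the troublesome case; there we would instead pick $t$ as an explicit root, or else obtain a contradiction from the requirement that the center exponent be the average.
\item The center must be the middle exponent. So we need $0$ (or $p_1$) to be the average of the other two exponents, i.e.\ $p_1q_2-q_1=-p_1$ (respectively the analogous equation). This forces a specific $q_2$ value, $q_2=-1$ or $q_2=2$, and these are then handled by direct inspection of the roots. For $q_2=-1$ with $q_1=0$, $P(t)=\omega+\omega^{-1}t^{-p_1}-t^{p_1}-1$ becomes a quadratic in $s=t^{p_1}$, namely $s^2+(1-\omega)s-\omega^{-1}=0$. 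The product of its roots has modulus $1$, so both roots have modulus $1$ only if $|1-\omega|\le 2$ together with the reciprocity condition. We check directly that a root with $|s|\neq 1$ exists; otherwise we derive a contradiction using $\omega\neq\pm1$.
\end{itemize}

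\emph{$p_1q_2-q_1\in\{0,p_1\}$.} This case is symmetric to the previous one under $q_1\leftrightarrow p_1q_2-q_1$ and $\omega\leftrightarrow\omega^{-1}$.

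\emph{$q_1=p_1q_2-q_1$.} The coefficients merge to $\omega+\omega^{-1}=2\cos(2\pi/p_1)$, which is real and nonzero unless $p_1=4$. Corollary~\ref{cor: roots on unit circle} then forces $q_1=p_1/2$, the average of $0$ and $p_1$, which would give $q_2=1$, excluded. If $p_1=4$, the merged coefficient vanishes and $P=-t^{4}-1$, whose roots all lie on the unit circle; in that subcase the claim would need a different choice of $\omega$, for example $\omega=\zeta_{p_1}^{k}$ with another primitive power. We would verify that the main argument permits this. If $p_1\ge 3$ and $q_1=2q_1-p_1q_2$, the exponents also coincide with $0$ or $p_1$ only when $q_2\in\{0,2\}$.

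\textbf{Main obstacle.} The main obstacle is this bookkeeping of the degenerate coincidences. In particular, the three-term cases need the exact symmetry check (both moduli and phases), and for these we expect to fall back on explicit quadratic-in-$t^{p_1}$ computations using $\omega\neq\pm1$.
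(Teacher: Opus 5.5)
Your generic case (four distinct exponents) is correct and matches the paper: the only admissible pairing is $\{q_1,p_1q_2-q_1\},\{p_1,0\}$, which forces $q_2=1$. The degenerate cases are where the proposal fails.

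\textbf{The $\omega-1$ case is a miscalculation.} Take the three-term case with merged coefficient $\omega-1$ in the middle. With $e^{\mathfrak{i}\theta}=-\omega^{-1}$, the center condition reads $\omega-1=\omega^{-1}-\omega^{-2}$, which is equivalent to $(\omega-1)^2(\omega+1)=0$. So it does \emph{not} hold identically. It is exactly the contradiction $\omega^2=1$ that the paper uses. This case is easy to close, but you leave it open with an ``explicit root'' fallback that you never carry out.

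\textbf{The case $q_1=p_1q_2-q_1$ is a real problem.} Your step ``$\omega+\omega^{-1}$ is real and nonzero, so the merged term must be the center'' needs $|\omega+\omega^{-1}|\neq 1$. That fails for $p_1\in\{3,6\}$. When in addition $q_1\in\{0,p_1\}$, the merged coefficient $\omega+\omega^{-1}-1$ vanishes for $p_1=6$. In these situations the claim as stated is false:
\begin{itemize}
\item $p_1=3$, $(q_1,q_2)=(6,4)$ gives $P(t)=-(t^6+t^3+1)$, whose zeros are the primitive $9$th roots of unity.
\item $p_1=6$, $(q_1,q_2)=(6,2)$ gives $P\equiv-1$, which has no zeros at all.
\item $p_1=4$, $q_1=2q_2$ gives $P=-t^4-1$.
\end{itemize}
For $p_1=4$, your repair of passing to another primitive power of $\zeta_4$ changes the statement and does not help, since $-\mathfrak{i}+\mathfrak{i}=0$ as well. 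So your proposal cannot be completed as a proof of the claim as written.

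The paper's own proof has the same hole. Its assertions $|\omega+\omega^{-1}|\neq 1$ and $|2\cos\frac{2\pi}{p_1}-1|<1$ are false for $p_1=3,4,6$.

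\textbf{What the application actually needs.} Lemma~\ref{lm: rank2 c neq pm bd} only uses $\omega^{p_1}=1$, in Equation~\ref{eq:lattice containing condition p2=1}. So it suffices to have a root with $|t|\neq 1$ for \emph{some} $p_1$-th root of unity $\omega$:
\begin{itemize}
\item Taking $\omega=1$ turns the first example into $(2t^3+1)(t^3-1)$, which has a root off the circle.
\item Taking $\omega=-1$ turns the second example into $-3t^6-1$, which also has a root off the circle.
\item The $p_1=4$ family cannot occur in the lemma, because it puts $(-2,0,2)$ in $L_{(u,v,w)}$ and forces $cd=0$.
\item The first example does occur there, e.g.\ for $L_{(u,v,w)}=\langle(3,-6,0),(1,-4,1)\rangle$.
\end{itemize}
A correct write-up must either exclude these parameter values or restate the claim so that $\omega$ may be chosen among all $p_1$-th roots of unity, with the exceptional cases handled by such a choice.
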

        \begin{claimproof}{\ref{claim: exist good root p2=1}}
        Suppose by contradiction that all zeros of $P(t)$ lie on the unit circle.
        First, assume $q_1,p_1q_2-q_1,p_1,0$ are all distinct.
        By Corollary~\ref{cor: roots on unit circle}, the monomial terms $\omega t^{q_1},\omega^{-1}t^{p_1q_2-q_1},-t^{p_1},-1$ must be grouped in pairs such that the coefficients in each pair differ by a global constant $e^{\mathfrak{i}\theta}$ up to conjugation.
        The other groupings being clearly impossible, 
        the only possible way of grouping is $\{(\omega t^{q_1},\omega^{-1}t^{p_1q_2-q_1}),(-t^{p_1},-1)\}$.
        Also, by Corollary~\ref{cor: roots on unit circle}, we have $q_1+(p_1q_2-q_1)=p_1$, contradicting $q_2\neq 1$. 
        Thus, there exists $t\in \C,|t|\neq 1$ such that $P(t)=0$.

        Next, we deal with the case where $q_1,p_1q_2-q_1,p_1,0$ are not all distinct.
        \begin{itemize}
            \item 
            If $q_1=p_1q_2-q_1$, then $P(t)=(\omega+\omega^{-1})t^{q_1}-t^{p_1}-1$.
            If $q_1,p_1,0$ are all distinct, by Corollary~\ref{cor: roots on unit circle}, as $|\omega+\omega^{-1}| \ne 1$, $t^{q_1}$ cannot be grouped
            to the other terms,  we have $p_1=2q_1$.
            Combined with $q_1=p_1q_2-q_1$
          and $p_1\ge 3$, 
            we have $q_2=1$, contradiction.
            If $q_1,p_1,0$ are not all distinct, as $p_1\ge 3$, it could only be that
            $q_1 =p_1$ or $q_1 =0$.
            If $q_1=p_1$, then $P(t)=(\omega+\omega^{-1}-1)t^{p_1}-1$.
            Since $|\omega+\omega^{-1}-1| = |2 \cos \frac{2 \pi}{p_1} - 1| < 1$, as
            $p_1 \ge 3$, the roots of $P(t)$ lie outside the unit circle, 
            contradiction.
            If $q_1=0$, then $P(t)=-t^{p_1}+\omega+\omega^{-1}-1$.
            Similarly, we get a contradiction.
           
            \item 
            Now assume $q_1\neq p_1q_2-q_1$. 
            If $q_1=p_1$, then $P(t)=(\omega-1)t^{p_1}$+$\omega^{-1}t^{p_1(q_2-1)}-1$, and $q_2\neq 2$ by $q_1\neq p_1q_2-q_1$.
            Also $p_1\ge 3$ and $q_2\neq 1$ by assumptions of the claim.
            Thus, $(\omega-1)t^{p_1}$, $\omega^{-1}t^{p_1(q_2-1)}, -1$ are monomial terms of distinct degrees.
            By Corollary~\ref{cor: roots on unit circle}, we have $(w^{-1}=e^{\mathfrak{i}
            \theta}\cdot(-1))\land(\omega-1=e^{\mathfrak{i}\theta}\overline{\omega-1})$, which implies $w^2=1$, contradiction.
            
            \item 
            Now assume $q_1\neq p_1q_2-q_1$ and $q_1\neq p_1$.
            If $p_1q_2-q_1=p_1$, then $P(t)=\omega t^{q_1}+(\omega^{-1}-1)t^{p_1}-1$, and $q_1\neq 0$ by $q_2\neq 1$.
            Thus, $\omega t^{q_1},(\omega^{-1}-1)t^{p_1},-1$ are monomial terms of distinct degrees.
            By Corollary~\ref{cor: roots on unit circle}, we have $(\omega=e^{\mathfrak{i}
\theta}\cdot(-1))\land(\omega^{-1}-1=e^{\mathfrak{i}\theta}\overline{\omega^{-1}-1})$, which implies $\omega^2=1$, contradiction.

            \item 
            Now assume $q_1,p_1q_2-q_1,p_1$ are distinct.
            If $q_1=0$, then $P(t)=\omega^{-1}t^{p_1q_2-q_1}-t^{p_1}+\omega-1$.
            By a similar argument above, we get a contradiction.
            It's similar for the case where $p_1q_2-q_1=0$.
        \end{itemize}  
        \end{claimproof}
        %
       
        Now we handle the case $p_2=1$ using Claim~\ref{claim: exist good root p2=1}.
        As $p_2=1$, for $x,y,z\in \C$, $L_{(u,v,w)}\subseteq L_{(x,y,z)}$ is equivalent to 
        \begin{equation}\label{eq:lattice containing condition p2=1}
        \left\{
        \begin{aligned}
            &x^{p_1} = y^{q_1}, \\
            &xz = y^{q_2}.
        \end{aligned}
        \right.
        \end{equation}

        \begin{itemize}
        \item 
        If $p_1\ge 3$, by Claim~\ref{claim: exist good root p2=1}, let $t$ be a root of $P(t)=\omega t^{q_1}+\omega^{-1}t^{p_1q_2-q_1}-t^{p_1}-1$, where $\omega=\zeta_{p_1}$ and $|t|\neq 1$.
        Let $(x,y,z)=(\omega t^{q_1},t^{p_1},\omega^{-1}t^{p_1q_2-q_1})$, which satisfies Equation~\ref{eq:lattice containing condition p2=1}.
        Also, $(x,y,z)$ satisfies condition~\ref{cd: redundent} in Lemma~\ref{lm:interpolation something hard} by $|t|\neq 1$.
        Thus, $\plholant{\neq_2}{f}$ is \numP-hard.

        \item 
        If $p_1=0$, then $\{(0,-q_1,0),(1,-q_2,1)\}$ is a lattice basis for $L_{(u,v,w)}$.
        We may assume $q_1\ge 3$.
        Otherwise, $v^2=1$ and so $b=0$.
        We may further assume $|q_2|\le \frac{1}{2} q_1$.
        Let $y=\zeta_{q_1}$ and $x=-z=\sqrt{-\zeta_{q_1}^{q_2}}$, then $(x,y,z)$ satisfies Equation~\ref{eq:lattice containing condition p2=1}.
        Note that $y\neq \pm1$ by $q_1\ge 3$.
        Also, $y=-x^2$ implies $y=y^{q_2}$, and so $q_1\mid |q_2-1|$.
        Since $q_2\neq 1$, we have $|q_2-1|\ge q_1$. 
        This contradicts $|q_2|\le \frac{1}{2}q_1$ and $q_1\ge 3$.
        Thus, $y\neq -x^2$ and $(x,y,z)$ satisfies condition~\ref{cd:x+z=0 not matchgate} in Lemma~\ref{lm:interpolation something hard}.
        Thus, $\plholant{\neq_2}{f}$ is \numP-hard.

        \item 
        If $p_1=1$, then $\{(1,-q_1,0),(0,q_1-q_2,1)\}$ is a lattice basis for $L_{(u,v,w)}$ and it's reduced to the case where $p_2=0,p_1=1$.
        
        \item 
        If $p_1=2$, then $\{(2,-q_1,0),(1,-q_2,1)\}$ is a lattice basis for $L_{(u,v,w)}$.
        Note that $(1,q_2-q_1,-1)\in L_{(u,v,w)}$. 
        We claim $q_1\neq q_2$, otherwise $u=w$, contradicting $c\neq 0$.
        If $|q_2-q_1|=1$, then $f\in \M$ by Corollary~\ref{cor:lattice contains certain elements then f is matchgate}.
        In the following we assume $|q_2-q_1|\ge 2$.  We will finish the
        proof of the sub-case $p_1=2$ after proving Claim~\ref{claim: exist good root p2=1, p1=2}.
        \begin{claim}\label{claim: exist good root p2=1, p1=2}
            Assume $|q_2-q_1|\ge 2$ and $q_2\neq 1$. 
            There exists $t\in \C,|t|\neq 1$ such that $P(t):=t^{q_1}+t^{2q_2-q_1}+t^2+1=0$.
        \end{claim}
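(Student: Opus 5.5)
We argue by contradiction, in the same way as Claims~\ref{claim: exist good root p2=0 p1=1} and~\ref{claim: exist good root p2=0 p1=2}. Suppose every zero of the Laurent polynomial $P(t)=t^{q_1}+t^{2q_2-q_1}+t^2+1$ lies on the unit circle. After collecting equal powers, $P$ has at least two monomials of distinct degrees, and all its coefficients are positive integers, so no cancellation occurs. Hence $t^rP(t)$ is a genuine polynomial of positive degree with nonzero constant term, and it has zeros, all of them nonzero. Corollary~\ref{cor: roots on unit circle} then applies. It says the coefficient sequence is conjugate-palindromic up to a unimodular factor $e^{\mathfrak{i}\theta}$: the multiset of exponents is symmetric about its center $\frac{n-r}{2}$, and the coefficients of symmetric exponents have equal modulus. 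The proof is to show that every way of realizing this symmetry forces $q_2=1$ or $|q_2-q_1|\le 1$, both of which are excluded.

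\emph{Generic case: the exponents $q_1,\ 2q_2-q_1,\ 2,\ 0$ are distinct.} Then $P$ has four terms, all with coefficient $1$. The four exponents must split into two pairs with the same sum, since an even number of distinct exponents cannot have one of them sitting at the center. There are three possible pairings.
\begin{itemize}
\item Pairing $q_1$ with $2q_2-q_1$ and $2$ with $0$ requires $2q_2=2$, i.e.\ $q_2=1$.
\item Pairing $q_1$ with $2$ and $2q_2-q_1$ with $0$ requires $q_1+2=2q_2-q_1$, i.e.\ $q_2-q_1=1$.
\item Pairing $q_1$ with $0$ and $2q_2-q_1$ with $2$ requires $q_1-q_2=1$.
\end{itemize}
Each pairing contradicts the hypotheses.

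\emph{Degenerate case: some exponents coincide.} The coincidence $q_1=2q_2-q_1$ means $q_1=q_2$, which is excluded. If $q_1=2$ and also $2q_2-q_1=0$, then $q_2=1$, also excluded. So exactly one of the following four coincidences holds, and in each $P$ has three terms with coefficients $2,1,1$. The coefficient $2$ can only be matched with itself under the modulus condition, so its exponent must be the center, and the two exponents with coefficient $1$ must be symmetric about it.
\begin{itemize}
\item $q_1=2$: here $P=2t^2+t^{2q_2-2}+1$. The center condition gives $2q_2-2=4$, so $q_2=3$ and $|q_2-q_1|=1$.
\item $q_1=0$: here $P=2+t^{2q_2}+t^2$. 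The center condition gives $2q_2+2=0$, so $q_2=-1$ and $|q_2-q_1|=1$.
\item $2q_2-q_1=2$: here $P=t^{q_1}+2t^2+1$. The center condition gives $q_1=4$, hence $q_2=3$ and $|q_2-q_1|=1$.
\item $2q_2-q_1=0$: here $P=t^{q_1}+2+t^2$. The center condition gives $q_1=-2$, hence $q_2=-1$ and $|q_2-q_1|=1$.
\end{itemize}
Every case contradicts $|q_2-q_1|\ge 2$ or $q_2\neq 1$. Therefore $P$ has a zero $t$ with $|t|\neq 1$.

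The main obstacle is making sure this case enumeration is exhaustive, in particular the coincidences among exponents that change the coefficient pattern. The observation that a coefficient of modulus $2$ can only be paired with itself, and so must sit at the center exponent, handles all of these uniformly.
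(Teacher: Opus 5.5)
Your proof is correct and follows the paper's argument essentially step for step. You assume all zeros lie on the unit circle and apply Corollary~\ref{cor: roots on unit circle}; in the generic case the three possible exponent pairings force $q_2=1$ or $|q_2-q_1|=1$, and in each degenerate case the coefficient-$2$ monomial must sit at the center, again giving $|q_2-q_1|=1$ (your claim that exactly one coincidence occurs also needs the trivial remark that $q_1=0$ with $2q_2-q_1=2$ forces $q_2=1$).
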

        \begin{claimproof}{\ref{claim: exist good root p2=1, p1=2}}
        Suppose by contradiction that all the roots of $P(t)$ lie on the unit circle.
        First, assume $q_1,2q_2-q_1,2,0$ are all distinct.
        By Corollary~\ref{cor: roots on unit circle}, either $q_1+(2q_2-q_1)=2+0$, contradicting $q_2\neq 1$, or $q_1+2=(2q_2-q_1)+0$, contradicting $|q_2-q_1|\ge 2$, or $q_1+0=(2q_2-q_1)+2$, contradicting $|q_2-q_1|\ge 2$.

        Next, we deal with the case where $q_1,2q_2-q_1,2,0$ are not all distinct.
        Note that $q_1\neq 2q_2-q_1$ by $|q_2-q_1|\ge 2$.
        \begin{itemize}
            \item
            If $q_1=2$, then $P(t)=t^{2q_2-2}+2t^2+1$.
            By $q_2\neq q_1$ and $q_2\neq 1$, $t^{2q_2-2},2t^2,1$ are monomial terms of distinct degrees.
            By Corollary~\ref{cor: roots on unit circle}, we have $(2q_2-2)+0=2+2$.
            Then $q_2=3$, contradicting $|q_2-q_1|\ge 2$.

            \item 
            Now assume $q_1\neq 2$.
            If $2q_2-q_1=2$, then $P(t)=t^{q_1}+2t^2+1$.
            Note that $q_1\neq 0$, otherwise $q_2=1$.
            Then $t^{q_1},2t^2,1$ are monomials of distinct degrees.
            By Corollary~\ref{cor: roots on unit circle}, we have $q_1=4$, then $q_2=3$, contradicting $|q_2-q_1|\ge 2$.

            \item 
            Now assume $q_1,2q_2-q_1,2$ are all distinct.
            If $q_1=0$, then $P(t)=t^{2q_2}+t^2+2$, where $t^{2q_2},t^2,2$ are monomial terms of distinct degrees.
            By Corollary~\ref{cor: roots on unit circle}, we have $2q_2+2=0$.
            Then $q_2=-1$, contradicting $|q_2-q_1|\ge 2$.
            It's similar if $2q_2-q_1=0$.
        \end{itemize}
        \end{claimproof}

        Now we handle the sub-case where $p_1=2$.
        By Claim~\ref{claim: exist good root p2=1, p1=2}, let $t$ be a root of $P(t)=t^{q_1}+t^{2q_2-q_1}+t^2+1=0$, where $|t|\neq 1$.
        Let $(x,y,z)=(-t^{q_1},t^2,-t^{2q_2-q_1})$, which satisfies Equation~\ref{eq:lattice containing condition p2=1} and condition~\ref{cd: redundent} in Lemma~\ref{lm:interpolation something hard}.
        Thus, $\plholant{\neq_2}{f}$ is \numP-hard.
        \end{itemize}

        The above covers all cases of $p_1,q_1,q_2$.
        The case $p_2=1$ has been proved.
        
        \item $p_2=-1$.\label{case: p_2=-1}
        
        In this case, $\{(p_1,-q_1,0),(-1,-q_2,1)\}$ is a lattice basis for $L_{(u,v,w)}$.
        We may assume $p_1\ge0$.
        We claim $q_2\neq 0$. 
        Otherwise, $(-1,0,1)\in L_{(u,v,w)}$ and $u^{-1}w=1$, contradicting $c=0$.
        If $q_2=\pm1$, then $(1,1,-1)\in L_{(u,v,w)}$ or $(1,-1,-1)\in L_{(u,v,w)}$ and so $f$ is $\M$-transformable by Corollary~\ref{cor:lattice contains certain elements then f is matchgate}.
        In the following, we may assume $|q_2|>1$.
        
        \begin{claim}\label{claim: exist good root p_2=-1}
            Assume $|q_2|>1$. 
            If $p_1\ge 3$, then there exists $t\in \C,|t|\neq 1$ such that $P(t):=\omega t^{q_1}+\omega t^{p_1q_2+q_1}-t^{p_1}-1=0$, where $\omega=\zeta_{p_1}$.
        \end{claim}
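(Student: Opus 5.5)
We argue by contradiction, exactly as in Claim~\ref{claim: exist good root p2=1}: suppose every zero of the Laurent polynomial $P(t)=\omega t^{q_1}+\omega t^{p_1q_2+q_1}-t^{p_1}-1$ lies on the unit circle. We then apply Corollary~\ref{cor: roots on unit circle}. This forces the monomials, after merging equal degrees, to pair off. Paired monomials must have degrees summing to the common value $n-r$, and their coefficients must agree up to a common unimodular factor $e^{\mathfrak{i}\theta}$ and conjugation. A monomial whose degree is the midpoint must carry a coefficient of the form $e^{\mathfrak{i}\theta/2}\cdot(\text{real})$.

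\textbf{Generic case.} Assume the four degrees $q_1$, $p_1q_2+q_1$, $p_1$, $0$ are distinct. Since $|q_2|>1$ and $p_1\ge 3$, the exponents $q_1$ and $p_1q_2+q_1$ differ by $|p_1q_2|\ge 2p_1$. There are three possible pairings, and we treat each in turn.
\begin{itemize}
\item Pairing $\{(\omega t^{q_1},\omega t^{p_1q_2+q_1}),(-t^{p_1},-1)\}$. This requires $2q_1+p_1q_2=p_1$, so $p_1\mid 2q_1$. The coefficient condition gives $\omega=e^{\mathfrak{i}\theta}\overline{\omega}$ and $-1=e^{\mathfrak{i}\theta}(-1)$. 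Hence $e^{\mathfrak{i}\theta}=1$ and $\omega^2=1$, which contradicts $p_1\ge 3$.
\item Pairing of $\omega t^{q_1}$ with $-t^{p_1}$ and of $\omega t^{p_1q_2+q_1}$ with $-1$ (or the symmetric pairing). The coefficient conditions read $\omega=e^{\mathfrak{i}\theta}(-1)$ and $-1=e^{\mathfrak{i}\theta}\overline{\omega}$. These give $e^{\mathfrak{i}\theta}=-\omega$ and $e^{\mathfrak{i}\theta}=-\omega$, so they are consistent. The degree conditions, however, read $q_1+p_1=p_1q_2+q_1+0$, i.e.\ $q_2=1$, which contradicts $|q_2|>1$.
\item The symmetric pairing requires $q_1+0=p_1q_2+q_1+p_1$, i.e.\ $q_2=-1$, which is again a contradiction.
\end{itemize}
Thus the generic case is settled.

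\textbf{Degenerate cases.} Next we treat the cases where two of the degrees coincide. We cannot have $q_1=p_1q_2+q_1$, since $q_2\neq 0$. The remaining coincidences are handled one at a time.
\begin{itemize}
\item If $q_1=p_1$, then $P=(\omega-1)t^{p_1}+\omega t^{p_1(q_2+1)}-1$. Here $q_2\neq -1$ and $q_2\neq 0$, so the three degrees are distinct. The corollary forces the outer two terms to pair with matching moduli; the extremal terms have coefficients $\omega$ and $-1$, or $\omega-1$, depending on ordering. If $\omega-1$ is extremal, its modulus $|\omega-1|=2\sin(\pi/p_1)$ must equal $1$, which only happens when $p_1=6$. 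In that case we check the phase condition on the middle term directly, as in Claim~\ref{claim: exist good root p2=1}: from $\omega=e^{\mathfrak{i}\theta}(-1)$ together with the self-conjugacy of the middle coefficient, $\omega-1=e^{\mathfrak{i}\theta}\overline{\omega-1}$, we derive $\omega^2=1$, a contradiction.
\item If $p_1q_2+q_1=p_1$, the same argument applies to $\omega t^{q_1}+(\omega-1)t^{p_1}-1$.
\item If $q_1=0$, then $P=\omega t^{p_1q_2}-t^{p_1}+(\omega-1)$.
\item If $p_1q_2+q_1=0$, then $P=\omega t^{q_1}-t^{p_1}+(\omega-1)$.
\end{itemize}
In the last two cases, either the degrees pair so as to force $q_2=\pm1$, or the phase condition forces $\omega^2=1$; both are contradictions. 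If two coincidences occur simultaneously, $P$ has at most three terms. We handle that situation by a modulus comparison, as in the claim for $p_2=1$: the constant term and the leading coefficient must have equal modulus, and this fails because $|\omega\pm1|\neq1$ or $|2\omega|\neq 1$.

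The main obstacle is the bookkeeping in the degenerate cases, particularly when $|\omega-1|=1$ (that is, $p_1=6$), where the modulus test alone does not suffice. There we must use the phase relation $e^{\mathfrak{i}\theta}$ determined by the outer pair and check it against the self-conjugacy condition on the middle coefficient. Having obtained the contradiction in every case, some zero $t$ of $P$ satisfies $|t|\neq1$, and this proves the claim.
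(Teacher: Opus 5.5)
Your overall strategy (contradiction via Corollary~\ref{cor: roots on unit circle}) and your four-term case are the same as the paper's and are correct. The three-term cases do not go through as written.

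\textbf{The phase argument fails.} For $q_1=p_1$ with $\omega-1$ extremal ($p_1=6$), you invoke $\omega=-e^{\mathfrak{i}\theta}$ and $\omega-1=e^{\mathfrak{i}\theta}\overline{\omega-1}$. These equations describe a different arrangement: $\omega$ paired with $-1$, and $\omega-1$ in the middle. They are also not contradictory. Taking $e^{\mathfrak{i}\theta}=-\omega$ satisfies both, since $-\omega(\overline{\omega}-1)=\omega-1$. Indeed $\omega t^{2p_1}+(\omega-1)t^{p_1}-1=(\omega t^{p_1}-1)(t^{p_1}+1)$ has all its zeros on $|t|=1$. The computation you borrowed from Claim~\ref{claim: exist good root p2=1} works there only because the coefficient is $\omega^{-1}$. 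Here that arrangement is excluded solely by the degree relation, which forces $q_2=1$, and you never state this.

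\textbf{Other slips.} In the cases $q_1=0$ and $p_1q_2+q_1=0$, the phase condition never yields $\omega^2=1$. Your double-coincidence remark also gives a false reason. For example, $q_1=p_1$ together with $p_1q_2+q_1=0$ would give $P=(\omega-1)(t^{p_1}+1)$, all of whose zeros lie on the unit circle. Such coincidences are simply impossible, because they force $q_2\in\{0,\pm1\}$.

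\textbf{The fix.} The correct bookkeeping is uniform across the four three-term cases.
\begin{itemize}
\item The exponents are $0$, $p_1$ and $kp_1$, with $k=q_2+1$, $1-q_2$, $q_2$ or $-q_2$ respectively. The coefficient of $t^{kp_1}$ is $\omega$, and $-1$ and $\omega-1$ sit at $0$ and $p_1$ in some order.
\item Symmetry of the exponent set forces $k=2$ or $k=-1$. One of these gives $q_2=\pm1$, which is excluded.
\item The other gives $|q_2|=2$, with middle coefficient $-1$ and outer pair $\{\omega,\omega-1\}$. The corollary then gives $e^{\mathfrak{i}\theta}=1$ from the middle term, and $\omega=\overline{\omega-1}$, i.e.\ $\omega-\overline{\omega}=-1$, from the pair.
\item This is impossible because $\omega-\overline{\omega}=2\mathfrak{i}\sin(2\pi/p_1)$ is purely imaginary. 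No modulus test or special role for $p_1=6$ is needed.
\end{itemize}
The paper's own write-up of these four cases records only the option leading to $q_2=\pm1$ (e.g.\ ``$p_1(q_2+1)=2p_1$'' when $q_1=p_1$) and omits the $|q_2|=2$ arrangements. So you were right that extra work is needed exactly there, but it must be done with the correct pairing.
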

        
        \begin{claimproof}{\ref{claim: exist good root p_2=-1}}
        Suppose by contradiction that all the roots of $P(t)$ lie on the unit circle.
        First, assume $q_1,p_1q_2+q_1,p_1,0$ are all distinct.
        By Corollary~\ref{cor: roots on unit circle}, the monomials $\omega t^{q_1},\omega t^{p_1q_2+q_1},-t^{p_1},-1$ must be grouped in pairs such that the coefficients in each pair differ by a global constant $e^{\mathfrak{i}\theta}$ up to conjugation.
        Since $p_1\ge 3$, there are two possible ways of grouping: $\{(w t^{q_1},-t^{p_1}),(\omega t^{p_1q_2+q_1},-1)\}$ or $\{(w t^{q_1},-1),(\omega t^{p_1q_2+q_1},-t^{p_1})\}$.
        Also, by Corollary~\ref{cor: roots on unit circle}, either $q_1+p_1=p_1q_2+q_1$  or $q_1=(p_1q_2+q_1)+p_1$, both contradicting $|q_2|>1$.
        Thus, there exists $t\in \C,|t|\neq 1$ such that $P(t)=0$.
        
        Next, we deal with the case where $q_1,p_1q_2+q_1,p_1,0$ are not all distinct.
        Note that $q_1\neq p_1q_2+q_1$ since $p_1q_2\neq 0$.
        \begin{itemize}
            \item 
            If $q_1=p_1$, then $P(t)=\omega^{p_1(q_2+1)}+(\omega-1)t^{p_1}-1$, where $\omega^{p_1(q_2+1)},(\omega-1)t^{p_1},-1$ are monomials of distinct degrees. 
            By Corollary~\ref{cor: roots on unit circle}, we have $p_1(q_2+1)=2p_1$, contradicting $q_1\neq -1$.

            \item 
            Now assume $q_1\neq p_1$.
            If $p_1q_2+q_1=p_1$, then $q_1\neq 0$, otherwise $q_2=1$, contradiction.
            Then $P(t)=\omega t^{q_1}+(\omega-1)t^{p_1}-1$, where $\omega t^{q_1},(\omega-1)t^{p_1},-1$ are monomials of distinct degrees.
            By Corollary~\ref{cor: roots on unit circle}, we have $q_1=2p_1$.
            This implies $q_2=-1$ combined with $p_1q_2+q_1=p_1$, contradiction.

            \item 
            Now assume $q_1,p_1q_2+q_1,p_1$ are all distinct.
            If $q_1=0$, then $P(t)=\omega t^{p_1q_2}-t^{p_1}+\omega-1$, where $\omega t^{p_1q_2},-t^{p_1},\omega-1$ are monomials of distinct degrees.
            By Corollary~\ref{cor: roots on unit circle}, we have $p_1q_2+p_1=0$, contradicting $q_2\neq -1$.
            If $p_1q_2+q_1=0$, then $P(t)=\omega t^{q_1}-t^{p_1}+\omega-1$, where $\omega t^{q_1},-t^{p_1},\omega-1$ are monomials of distinct degrees.
            By Corollary~\ref{cor: roots on unit circle}, we have $q_1+p_1=0$.
            This implies $q_2=1$ combined with $p_1q_2+q_1=0$, contradiction.
            
        \end{itemize}
       
        \end{claimproof}
        Now we handle the case where $p_2=-1$ using Claim~\ref{claim: exist good root p_2=-1}.
        As $p_2=-1$, for $x,y,z\in \C$, $L_{(u,v,w)}\subseteq L_{(x,y,z)}$ is equivalent to 
        \begin{equation}\label{eq:lattice containing condition p2=-1}
        \left\{
        \begin{aligned}
            &x^{p_1} = y^{q_1}, \\
            &x^{-1}z = y^{q_2}.
        \end{aligned}
        \right.
        \end{equation}
        
        \begin{itemize}
            \item If $p_1\ge 3$, by Claim~\ref{claim: exist good root p_2=-1}, let $t$ be a root of $P(t)=\omega t^{q_1}+\omega t^{p_1q_2+q_1}-t^{p_1}-1$, where $\omega=\zeta_{p_1}$ and $|t|\neq 1$.
            Let $(x,y,z)=(\omega t^{q_1},t^{p_1},\omega t^{p_1q_2+q_1})$, which satisfies Equation~\ref{eq:lattice containing condition p2=-1}.
            Also, $(x,y,z)$ satisfies condition~\ref{cd: redundent} in Lemma~\ref{lm:interpolation something hard} by $|t|\neq 1$.
            Thus, $\plholant{\neq_2}{f}$ is \numP-hard.
            
            \item If $p_1=0$, then $\{(0,-q_1,0),(-1,-q_2,1)\}$ is a lattice basis for $L_{(u,v,w)}$.
            Same as the case where $p_2=1,p_1=0$, we may assume $q_1\ge 3$ and $|q_2|\le \frac{1}{2} q_1$.
            Let $y=\zeta_{q_1}$, we claim $1+y^{q_2}\ne 0$.
            Otherwise, $q_2=(k+\frac{1}{2})q_1$ for some $k\in \Z$.
            Since $|q_2|\le \frac{1}{2}q_1$, we have $k=0$ or $k=-1$, and so $q_2=\frac{1}{2}q_1$ or $q_2=-\frac{1}{2}q_1$.
            Then $(-2,0,2)\in L_{(u,v,w)}$, since $(-2,\pm 2q_1-q_2,2)\in L_{(u,v,w)}$.
            This implies $u^{-2}w^{2}=1$, contradicting $cd=0$.
            Since $1+y^{q_2}\neq 0$, we let $x=\frac{1+y}{1+y^{q_2}}$ and $z=\frac{(1+y)y^{q_2}}{1+y^{q_2}}$, then $(x,y,z)$ satisfies Equation~\ref{eq:lattice containing condition p2=-1}.
            Also, $1+y=x+z$ and $y\neq \pm 1$.
            Note that $x=1$ implies $y^{q_2-1}=1$, and $q_1\mid |q_2-1|$.
            Since $q_2\neq 1$, we have $|q_2-1|>q_1$. 
            This contradicts $|q_2|\le \frac{1}{2}q_1$ and $q_1\ge 3$. 
            Thus, $x\neq 1$.
            Similarly, we can check that $z\neq 1$.
            Then $(x,y,z)$ satisfies condition~\ref{cd: redundent} in Lemma~\ref{lm:interpolation something hard}.
            Thus, $\plholant{\neq_2}{f}$ is \numP-hard.

            \item If $p_1=1$, then $\{(1,-q_1,0),(0,-q_1-q_2,1)\}$ is a lattice basis for $L_{(u,v,w)}$ and it's reduced to the case where $p_2=0,p_1=1$.

            \item If $p_1=2$, then $\{(1,-q_1,0), (1,-q_1-q_2,1)\}$ is a lattice basis for $L_{(u,v,w)}$ and it's reduced to the case where $p_2=1$.
        \end{itemize}
    The above covers all cases of $p_1,q_1,q_2$.
    The case $p_2=-1$ has been proved.
    \end{enumerate}
    So far all the exceptional cases of $p_1,p_2$ has been handled.
    This completes the proof of Lemma~\ref{lm: rank2 c neq pm bd}.
\end{proof}

In the following (Lemma~\ref{lm: rank=2 dichotomy}), we are going to obtain the complexity dichotomy in the case where $\mathrm{rank}(L_{(u,v,w)})=2$.
Note that in Lemma~\ref{lm: rank=2 dichotomy} we no longer assume $c\neq \pm bd$.
First, we need a lemma to handle the special case where $(b=\pm c)\land (d=\pm1)$.
It is in this special case that the new tractable type is found:

\begin{lemma} \label{lm: b=pm c, d=pm 1}
Let $f$ be a signature with the signature matrix 
    $
    M(f)=
    \left[\begin{smallmatrix} 
    1 & 0 & 0 & b \\ 
    0 & c & d & 0 \\ 
    0 & d & c & 0 \\ 
    b & 0 & 0 & 1
    \end{smallmatrix}\right]
    $.
    Assume $d=\pm 1$, then 
    \begin{itemize}
        \item 
        If $b=c$, then $\plholant{\neq_2}{f}$ is \numP-hard unless $b=0,\pm1, \pm \mathfrak{i}$, in which cases $\plholant{\neq_2}{f}$ is tractable.

        \item 
        If $b=-c$, then $\plholant{\neq_2}{f}$ is \numP-hard unless $b=0,\pm1,\pm \mathfrak{i},\mathfrak{i}\tan(\frac{\beta\pi}{8})$,  where $\beta$ is odd, in which cases $\plholant{\neq_2}{f}$ is tractable.
    \end{itemize}
\end{lemma}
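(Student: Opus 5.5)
The plan is to use the {\sc Even Coloring} transformation of Lemma~\ref{lm:even coloring transformation} to move $f$ into a form already classified. With $M_{\rm HZ}^{\rm PL}$ (or $M_{\rm Z}^{\rm PL}$) applied to $(b,c,d,a)^T$ with $a=1$, the special relations $d=\pm1$ and $b=\pm c$ will make two of the four new entries vanish. I would then land either in Case~I (Lemmas~\ref{lm: 1cz1} and~\ref{lm:case d=w}), or, after using Lemma~\ref{lm:let a=x} to discard a zero diagonal, in the planar six-vertex model (Theorem~\ref{thm:planar six-vertex}).

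Concretely, take $d=\epsilon=\pm1$ and $b=\delta c$. The $M_{\rm Z}^{\rm PL}$ image is
\[
(b',c',d',a')=\tfrac12\bigl(b-c+d-1,\ -b+c+d-1,\ b+c+d+1,\ b+c-d-1\bigr).
\]
\textbf{Case $b=c$, $d=1$.} Here $b'=c'=0$, so $f'$ has $(b',y')=(c',z')=(0,0)$. This is the two-zero-pair Case~I with inner pair $d'=b+1$ and outer diagonal $a'=b-1$. After normalizing by Lemma~\ref{lm:let a=x}, Lemma~\ref{lm:case d=w} gives hardness unless $\frac{b+1}{b-1}\in\{0,\pm1,\pm\mathfrak{i}\}$ or $a'=0$. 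Solving gives $b\in\{-1,0,\pm\mathfrak{i},1\}$; the value $b=\pm1$ comes from the degenerate diagonal, which has to be checked separately.

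\textbf{Case $b=c$, $d=-1$.} I expect the alternative transform $M_{\rm HZ}^{\rm PL}$, or a rotation exchanging the roles of $d$ and $a$, to produce the analogous zero pattern. Alternatively one can first multiply by the binary modification $(0,1,-1,0)^T$ on two variables to flip the sign of $d$.

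\textbf{Case $b=-c$.} Now $b'=\tfrac12(2b+d-1)$ and $c'=\tfrac12(-2b+d-1)$. With $d=1$ this gives $b'=b$, $c'=-b$, $d'=1$, $a'=-1$. To reach the six-vertex setting I would instead choose the transform that kills $a'$; with $M_{\rm HZ}^{\rm PL}$ and $d=\pm1$ one of the combinations $-b+c+d+1$ or $b+c-d+1$ should vanish. By Lemma~\ref{lm:let a=x} the result is a planar six-vertex signature with $d'=w'$ and with $b'=y'$, $c'=z'$ given by affine expressions in $b$. Theorem~\ref{thm:planar six-vertex} then gives hardness unless one of its conditions holds:
\begin{itemize}
\item membership in $\mathscr{P}$ or $\mathscr{A}$ gives $b\in\{0,\pm1,\pm\mathfrak{i}\}$;
\item membership in $\M$ or $\widehat{\M}$ gives further finite values;
\item condition 4, when $d'=w'=0$, reduces to $(b'y')^2=(c'z')^2$ or a $\sqrt{\mathfrak{i}}$-power condition.
\end{itemize}
I expect the genuinely new family $b=\mathfrak{i}\tan(\beta\pi/8)$ to come from condition~4(ii). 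Writing $c'/b'=\frac{1+\mathfrak{i}t}{1-\mathfrak{i}t}$ with $b=\mathfrak{i}t$, requiring this ratio to be a power of $\sqrt{\mathfrak{i}}=\zeta_8$ is exactly the Möbius/Cayley parametrization $t=\tan(\beta\pi/8)$, and the parity constraint forces $\beta$ odd.

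For tractability, each listed value is checked through the same equivalence. The Case~I values are affine or product type. The new values land in Theorem~\ref{thm:planar six-vertex}, item 4(ii), which is planar tractable, so $f$ is tractable by Lemma~\ref{lm:even coloring transformation}.

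The main obstacle will be bookkeeping: choosing the right transform ($M_{\rm Z}^{\rm PL}$ versus $M_{\rm HZ}^{\rm PL}$, possibly after a rotation) so that the correct entries vanish in each of the four sign cases. A second difficulty is cleanly converting the six-vertex condition~4(ii) into the $\tan(\beta\pi/8)$ parametrization while ruling out the even $\beta$ cases, which I expect to coincide with the already-listed values $0,\pm1$ or $\infty$. Degenerate cases where $a'=0$ or an entry vanishes unexpectedly (such as $b=\pm1$) must be routed to Theorem~\ref{thm: at least one zero}.
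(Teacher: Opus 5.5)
Your plan is the paper's proof. Apply Lemma~\ref{lm:even coloring transformation} with $M_{\rm Z}^{\rm PL}$ or $M_{\rm HZ}^{\rm PL}$, land in Lemma~\ref{lm:case d=w} when $b=c$ and in Theorem~\ref{thm:planar six-vertex} when $b=-c$, and read off the new family from condition~4 via the Cayley map. Your $b=c$, $d=1$ computation matches the paper exactly. A few details need fixing.

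\textbf{Case $b=-c$, $d=-1$.} Your claim that $M_{\rm HZ}^{\rm PL}$ kills $a'$ for both signs of $d$ is false here: it sends $(b,-b,-1,1)$ to itself. You need $M_{\rm Z}^{\rm PL}$ instead, which gives $(b-1,-b-1,0,0)$.

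\textbf{Case $b=-c$, $d=1$.} Here $M_{\rm HZ}^{\rm PL}$ gives $(1+b,1-b,0,0)$.

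\textbf{Consequence for both $b=-c$ sub-cases.} The entry $d'$ vanishes together with $a'$. So you are directly in the six-vertex model with $d'=w'=0$, $b'=y'$, $c'=z'$, and no appeal to Lemma~\ref{lm:let a=x} is needed. Every condition of Theorem~\ref{thm:planar six-vertex} then collapses to $k:=c'/b'$ being $0$ or an eighth root of unity.

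\textbf{Odd $\beta$.} The restriction to odd $\beta$ does not come from the parity requirement in 4(ii); that requirement is vacuous because $c'=z'$. Even $\beta$ simply reproduce $b\in\{0,\pm\mathfrak{i}\}$, and $\beta\equiv 4 \pmod 8$ gives no finite $b$.

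\textbf{Degenerate values.} Where the entry you normalize by vanishes (always $b=\pm 1$), just note that $f\in\M$, since $1-b^2=0=c^2-d^2$. Also, in the case $b=c$, $d=1$, the value $b=-1$ is not degenerate; it comes from $\frac{1+b}{b-1}=0$.
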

\begin{proof}
    In the following proof, all tractability and hardness results are obtained by applying the combinatorial transformation to {\sc Even Coloring} plus a holographic transformation (Lemma~\ref{lm:even coloring transformation}). 
    The transformed problem can then be handled either by Case I (Lemma~\ref{lm:case d=w}) or by the planar six-vertex model (Theorem~\ref{thm:planar six-vertex}).
    
    If $(b=c)\land(d=1)$, then 
    $
    M(f)=
    \left[\begin{smallmatrix} 
    1 & 0 & 0 & b \\ 
    0 & b & 1 & 0 \\ 
    0 & 1 & b & 0 \\ 
    b & 0 & 0 & 1
    \end{smallmatrix}\right].
    $
    Note that $
        M_{\rm Z}^{\rm PL}
        \left[\begin{smallmatrix} 
        b \\ b\\ 1\\ 1
        \end{smallmatrix}\right]
        =
        \left[\begin{smallmatrix} 
        0 \\ 0\\ 1+b\\ -1+b
        \end{smallmatrix}\right]$.
        By Lemma~\ref{lm:even coloring transformation}, $\plholant{\neq_2}{f}\equiv_T^p\plholant{\neq_2}{f'}$, where $f'$ has the signature matrix
        $
        M(f')=
        \left[\begin{smallmatrix} 
        -1+b & 0 & 0 & 0 \\ 
        0 & 0 & 1+b & 0 \\ 
        0 & 1+b & 0 & 0 \\ 
        0 & 0 & 0 & -1+b
        \end{smallmatrix}\right].
        $
        If $b=1$, then $f\in \M$ by Lemma~\ref{lm:is matchgate}.
        If $b\neq 1$, normalize $f'$ by $-1+b$, we have $\plholant{\neq_2}{f}\equiv_T^p\plholant{\neq_2}{f''}$, where $f''$ has the signature matrix
        $
        M(f'')=
        \left[\begin{smallmatrix} 
        1 & 0 & 0 & 0 \\ 
        0 & 0 & \frac{1+b}{-1+b} & 0 \\ 
        0 & \frac{1+b}{-1+b} & 0 & 0 \\ 
        0 & 0 & 0 & 1
        \end{smallmatrix}\right].
        $
        By Lemma~\ref{lm:case d=w}, $\plholant{\neq_2}{f''}$ is \numP-hard unless $\frac{1+b}{-1+b}=0,\pm1,\pm \mathfrak{i}$ , in which cases $\plholant{\neq_2}{f''}$ is tractable.
        Thus, $\plholant{\neq_2}{f}$ is \numP-hard unless $b=0,\pm1, \pm \mathfrak{i}$, in which cases $\plholant{\neq_2}{f}$ is tractable.

        If $(b=c)\land(d=-1)$, then
        $
        M(f)=
        \left[\begin{smallmatrix} 
        1 & 0 & 0 & b \\ 
        0 & b & -1 & 0 \\ 
        0 & -1 & b & 0 \\ 
        b & 0 & 0 & 1
        \end{smallmatrix}\right].
        $
        Note that 
        $
        M_{\rm HZ}^{\rm PL}
        \left[\begin{smallmatrix} 
        b \\ b\\ -1\\ 1
        \end{smallmatrix}\right]
        =
        \left[\begin{smallmatrix} 
        0 \\ 0\\ -1+b\\ 1+b
        \end{smallmatrix}\right]$.
        By Lemma~\ref{lm:even coloring transformation}, $\plholant{\neq_2}{f}\equiv_T^p\plholant{\neq_2}{f'}$, where $f'$ has the signature matrix
        $
        M(f')=
        \left[\begin{smallmatrix} 
        1+b & 0 & 0 & 0 \\ 
        0 & 0 & -1+b & 0 \\ 
        0 & -1+b & 0 & 0 \\ 
        0 & 0 & 0 & 1+b
        \end{smallmatrix}\right].
        $
        Similar to the argument in the case where $(b=c)\land (d=1)$, $\plholant{\neq_2}{f}$ is \numP-hard unless $b=0,\pm1,\pm \mathfrak{i}$, in which cases $\plholant{\neq_2}{f}$ is tractable.

        If $(b=-c)\land(d=1)$, then
        $
        M(f)=
        \left[\begin{smallmatrix} 
        1 & 0 & 0 & b \\ 
        0 & -b & 1 & 0 \\ 
        0 & 1 & -b & 0 \\ 
        b & 0 & 0 & 1
        \end{smallmatrix}\right].
        $
        Note that 
        $
        M_{\rm HZ}^{\rm PL}
        \left[\begin{smallmatrix} 
        b \\ -b\\ 1\\ 1
        \end{smallmatrix}\right]
        =
        \left[\begin{smallmatrix} 
        1+b \\ 1-b\\ 0\\ 0
        \end{smallmatrix}\right]$.
        By Lemma~\ref{lm:even coloring transformation}, $\plholant{\neq_2}{f}\equiv_T^p\plholant{\neq_2}{f'}$, where $f'$ has the signature matrix
        $
        M(f')=
        \left[\begin{smallmatrix} 
        0 & 0 & 0 & 1+b \\ 
        0 & 1-b & 0 & 0 \\ 
        0 & 0 & 1-b & 0 \\ 
        1+b & 0 & 0 & 0
        \end{smallmatrix}\right].
        $
        If $b=-1$, then $f\in \M$ by Lemma~\ref{lm:is matchgate}.
        If $b\neq -1$, normalize $f'$ by $1+b$, we have $\plholant{\neq_2}{f}\equiv_T^p\plholant{\neq_2}{f''}$, where $f''$ has the signature matrix
        $
        M(f'')=
        \left[\begin{smallmatrix} 
        0 & 0 & 0 & 1 \\ 
        0 & \frac{1-b}{1+b} & 0 & 0 \\ 
        0 & 0 & \frac{1-b}{1+b} & 0 \\ 
        1 & 0 & 0 & 0
        \end{smallmatrix}\right].
        $
        By Theorem~\ref{thm:planar six-vertex}, $\plholant{\neq_2}{f''}$ is \numP-hard unless $\frac{1-b}{1+b}=0$ or $\sqrt{\mathfrak{i}}^{-\beta}$, where $\beta\in \Z$.
        Thus, $\plholant{\neq_2}{f}$ is \numP-hard unless $b=0,\pm1,\pm \mathfrak{i},\frac{\sqrt{\mathfrak{i}}^\beta-1}{\sqrt{\mathfrak{i}}^\beta+1}=\mathfrak{i}\tan(\frac{\beta\pi}{8})$,  where $\beta$ is odd, in which cases $\plholant{\neq_2}{f}$ is tractable.

        If $(b=-c)\land(d=-1)$, then
        $
        M(f)=
        \left[\begin{smallmatrix} 
        1 & 0 & 0 & b \\ 
        0 & -b & -1 & 0 \\ 
        0 & -1 & -b & 0 \\ 
        b & 0 & 0 & 1
        \end{smallmatrix}\right].
        $
        Note that 
        $
        M_{\rm Z}^{\rm PL}
        \left[\begin{smallmatrix} 
        b \\ -b\\ -1\\ 1
        \end{smallmatrix}\right]
        =
        \left[\begin{smallmatrix} 
        -1+b \\ -1-b\\ 0\\ 0
        \end{smallmatrix}\right]$.
        By Lemma~\ref{lm:even coloring transformation}, $\plholant{\neq_2}{f}\equiv_T^p\plholant{\neq_2}{f'}$, where $f'$ has the signature matrix
        $
        M(f')=
        \left[\begin{smallmatrix} 
        0 & 0 & 0 & -1+b \\ 
        0 & -1-b & 0 & 0 \\ 
        0 & 0 & -1-b & 0 \\ 
        -1+b & 0 & 0 & 0
        \end{smallmatrix}\right].
        $
        Similar to the argument in the case where $(b=-c)\land(d=1)$, $\plholant{\neq_2}{f}$ is \numP-hard unless $b=0,\pm1,\pm \mathfrak{i},\frac{\sqrt{\mathfrak{i}}^\beta-1}{\sqrt{\mathfrak{i}}^{\beta}+1}=\mathfrak{i}\tan(\frac{\beta\pi}{8})$, where $\beta$ is an odd integer, in which cases $\plholant{\neq_2}{f}$ is tractable.
\end{proof}

\begin{lemma}\label{lm: rank=2 dichotomy}
    Let $f$ be a signature with the signature matrix 
    $
    M(f)=
    \left[\begin{smallmatrix} 
    1 & 0 & 0 & b \\ 
    0 & c & d & 0 \\ 
    0 & d & c & 0 \\ 
    b & 0 & 0 & 1
    \end{smallmatrix}\right]
    $, with $bcd\neq 0,b\neq -1,c\neq -1$.
    Let $(u,v,w)=(\frac{c+d}{1+b},\frac{-1+b}{1+b},\frac{-c+d}{1+b})$.
    If $\mathrm{rank}(L_{(u,v,w)})=2$,
    then one of the following holds:
    \begin{itemize}
        \item $\plholant{\neq_2}{f}\equiv_T^p \holant{\neq_2}{f}$;
        \item $\plholant{\neq_2}{f}$ is \numP-hard;
        \item 
        $f$ is $\M$-transformable.
    \end{itemize}
\end{lemma}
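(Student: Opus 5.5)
The plan is to reduce to Lemma~\ref{lm: rank2 c neq pm bd}, whose conclusion is exactly the one claimed here. Only the hypothesis $c\neq \pm bd$ is missing, so it remains to handle $c=bd$ and $c=-bd$ separately.

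First I use the $\frac{\pi}{2}$ rotation symmetry. The matrix $M(f^{\frac{\pi}{2}})$ has the same arrow-symmetric form with $b$ and $c$ exchanged, and $d$ unchanged. The assumption $c\neq -1$ ensures that the rotated signature still satisfies the hypothesis "$b\neq -1$" needed for the interpolation machinery; this is why that hypothesis appears in the statement. Call the rotated normalized triple $(u',v',w')=(\frac{b+d}{1+c},\frac{-1+c}{1+c},\frac{-b+d}{1+c})$.

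If $c=\epsilon bd$ with $\epsilon=\pm1$, then for $f^{\frac{\pi}{2}}$ the analogous condition "$b'=\pm c'd'$" becomes $c=\pm bd$, which holds, so the rotated signature violates the hypothesis too. So I split further.

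\textbf{Case $d\neq \pm 1$.} Then $b=\epsilon' cd$ fails for the rotated signature (otherwise $c=\epsilon bd=\epsilon\epsilon' cd^2$ forces $d^2=\pm1$; the subcase $d=\pm\mathfrak{i}$ needs separate care). Hence $f^{\frac{\pi}{2}}$ satisfies $c'\neq\pm b'd'$, where $c'=b$, $b'=c$. Next I check the rank of $L_{(u',v',w')}$:
\begin{itemize}
\item if the rank is $2$, Lemma~\ref{lm: rank2 c neq pm bd} applies directly;
\item if the rank is at most $1$, Lemma~\ref{lm: rank<=1} applies;
\item if the rank is $3$, all of $u',v',w'$ are roots of unity by Lemma~\ref{lm: lattice full rank equiv roots of unity}.
\end{itemize}
In the rank-$3$ situation I would use the multiplicative relation $c=\epsilon bd$ to show that $L_{(u,v,w)}$ contains a vector forcing it to contain one of $(1,\pm1,\mp1)$ or $(1,-1,1)$, after which Corollary~\ref{cor:lattice contains certain elements then f is matchgate} gives $\M$-transformability. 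Since $\mathrm{Pl}$-hardness and equivalence with the general problem transfer between $f$ and $f^{\frac{\pi}{2}}$, and $\M$-transformability is preserved under rotation, this case concludes.

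\textbf{Case $d=\pm 1$.} Then $c=\pm b$, and Lemma~\ref{lm: b=pm c, d=pm 1} applies. It gives hardness except when $b\in\{0,\pm1,\pm\mathfrak{i}\}$ or $b=\mathfrak{i}\tan(\beta\pi/8)$.
\begin{itemize}
\item $b=0$ is excluded by $bcd\neq0$, and $b=-1$ is excluded by hypothesis.
\item For $b=1$ or $b=\pm\mathfrak{i}$, I check directly that $d=\pm bc$ or $1-b^2=c^2-d^2$, giving $\M$-transformability via Lemma~\ref{lm:symmetric equality_matchgate-transformable} or membership in $\M$.
\item For $b=\mathfrak{i}\tan(\beta\pi/8)$, I check that $u,v,w$ are then all roots of unity, so the lattice has rank $3$, contradicting rank $2$.
\end{itemize}

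The main obstacle is the middle case, where $c=\pm bd$ and the rotated lattice is rank $3$, or $d=\pm\mathfrak{i}$. There I expect to have to compute explicitly, using $c=\epsilon bd$: I would express $u,v,w$ in terms of $b$ and $d$, factor them, and find an explicit lattice vector with small entries. If that fails, I would fall back on choosing $(x,y,z)$ directly to meet condition~\ref{cd:y=-1 not matchgate} or~\ref{cd:x+z=0 not matchgate} of Lemma~\ref{lm:interpolation something hard}.
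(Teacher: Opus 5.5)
Your skeleton matches the paper's. You rotate by $\frac{\pi}{2}$ (this is where $c\neq -1$ is used) and feed $f^{\frac{\pi}{2}}$ to Lemma~\ref{lm: rank<=1} or Lemma~\ref{lm: rank2 c neq pm bd}. The doubly degenerate situation $c=\pm bd$, $b=\pm cd$ (i.e.\ $d^2=\pm1$) goes to Lemma~\ref{lm: b=pm c, d=pm 1}. Your handling of $d=\pm1$ is correct. The deferred subcase $d=\pm\mathfrak{i}$ is immediate: $c=\pm bd$ gives $c^2=-b^2$, hence $c^2-d^2=1-b^2$ and $f\in\M$ by Lemma~\ref{lm:is matchgate}. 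Your sentence claiming $f^{\frac{\pi}{2}}$ ``violates the hypothesis too'' is wrong as stated, since the relevant condition for $f^{\frac{\pi}{2}}$ is $b\neq\pm cd$; you repair this in the next paragraph. The genuine gap is the subcase $\mathrm{rank}(L_{(u^*,v^*,w^*)})=3$ for the rotated triple, which you leave as the ``main obstacle''.

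You propose to extract $(1,\pm1,\mp1)$ or $(1,-1,1)$ from the multiplicative relation $c=\epsilon bd$, but that relation alone cannot do it. From $c=bd$ you only get $u=d$ and $w=-uv$, i.e.\ $(2,2,-2)\in L_{(u,v,w)}$; from $c=-bd$ you get $u=-vw$, i.e.\ $(2,-2,-2)$. Passing to $(1,\pm1,-1)$ requires $v^2=-1$, which is information about $b$ you never obtain.

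The missing idea is that roots of unity have modulus $1$:
\begin{itemize}
\item $|v^*|=1$ forces $c\in\mathbb{R}\mathfrak{i}$.
\item $|u^*|=|w^*|$ forces $\mathrm{Re}(b\bar d)=0$. Together with $c=\pm bd$, this gives $(b\in\mathbb{R},\ d\in\mathbb{R}\mathfrak{i})$ or $(b\in\mathbb{R}\mathfrak{i},\ d\in\mathbb{R})$.
\item $|u^*|=1$ then reads $|b|^2+|d|^2=1+|b|^2|d|^2$, so $|b|=1$ or $|d|=1$.
\end{itemize}
With $d\neq\pm1,\pm\mathfrak{i}$, only two possibilities remain. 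Either $b=1$, so $c=\pm d$ and $f\in\M$. Or $b=\pm\mathfrak{i}$, so $d=\pm bc$ and $f$ is $\M$-transformable by Lemma~\ref{lm:symmetric equality_matchgate-transformable}. In lattice terms, $v=\pm\mathfrak{i}$ and $(1,\pm1,-1)\in L_{(u,v,w)}$, which is exactly the vector you were hoping for, but it is reached through the modulus argument.

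This also shows that your fallback, meeting condition~\ref{cd:y=-1 not matchgate} or~\ref{cd:x+z=0 not matchgate} of Lemma~\ref{lm:interpolation something hard}, aims at the wrong target. Every signature surviving in this subcase is $\M$-transformable, hence tractable. The task there is to recognize tractability, not to prove hardness.
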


\begin{proof}
    If $c\neq \pm bd$, then by Lemma~\ref{lm: rank2 c neq pm bd} we are done.
    
    In the following we assume $c=\pm bd$.
    Consider the rotated signature $f^{\frac{\pi}{2}}$ with the signature matrix
    $
    M(f^{\frac{\pi}{2}})=
    \left[\begin{smallmatrix} 
    1 & 0 & 0 & c \\ 
    0 & b & d & 0 \\ 
    0 & d & b & 0 \\ 
    c & 0 & 0 & 1
    \end{smallmatrix}\right].
    $
    Let $u^*=\frac{b+d}{1+c},v^*=\frac{-1+c}{1+c},w^*=\frac{-b+d}{1+c}$.
    \begin{itemize}
        \item 
        If $\mathrm{rank}(L_{(u^*,v^*,w^*)})\le 1$, then by Lemma~\ref{lm: rank<=1}, either $\plholant{\neq_2}{f^{\frac{\pi}{2}}}\equiv_T^p \holant{\neq_2}{f^{\frac{\pi}{2}}}$, which is equivalent to $\plholant{\neq_2}{f}\equiv_T^p \holant{\neq_2}{f}$; 
        or $\plholant{\neq_2}{f^{\frac{\pi}{2}}}$ is \numP-hard, which is equivalent to  $\plholant{\neq_2}{f}$ is \numP-hard; 
        or $f^{\frac{\pi}{2}}$ is $\M$-transformable, which is equivalent to that $f$ is $\M$-transformable by Lemma~\ref{lm:symmetric equality_matchgate-transformable}.

        \item 
        If $\mathrm{rank}(L_{(u^*,v^*,w^*)})=2$ and $b\neq \pm cd$, apply Lemma~\ref{lm: rank2 c neq pm bd} to $f^{\frac{\pi}{2}}$ then we are done.
        Now we assume that $\mathrm{rank}(L_{(u^*,v^*,w^*)})=2$ and $b= \pm cd$.
        \begin{itemize}
        \item If $(c=bd)\land (b=cd)$, then $(b=c)\land (d=1)$ or $(b=-c)\land (d=-1)$.

        \begin{itemize}
        \item 
        If it was the first case, then by Lemma~\ref{lm: b=pm c, d=pm 1}, $\plholant{\neq_2}{f}$ is \numP-hard unless $b=0,\pm1,\pm\mathfrak{i}$.
        If $b=0,\pm \mathfrak{i}$, one can check that $u,v,w$ are all roots of unity.
        Then by Lemma~\ref{lm: lattice full rank equiv roots of unity}, $\mathrm{rank}(L_{(u,v,w)})=3$, contradiction.
        Note that $b=-1$ is forbidden by the assumption of this lemma.
        If $b=1$, then $(u,v,w)=(1,0,0)$ and so $\mathrm{rank}(L_{(u,v,w)})=1$, contradiction.
        
        \item 
        If it was the second case, 
        then by Lemma~\ref{lm: b=pm c, d=pm 1}, $\plholant{\neq_2}{f}$ is \numP-hard unless $b=0,\pm1,\pm\mathfrak{i},\mathfrak{i}\tan(\frac{\beta\pi}{8})=\frac{\sqrt{\mathfrak{i}}^\beta-1}{\sqrt{\mathfrak{i}}^\beta+1}$, where $\beta$ is an odd integer.
        If $b=0,\pm \mathfrak{i},\frac{\sqrt{\mathfrak{i}}^\beta-1}{\sqrt{\mathfrak{i}}^\beta+1}$, where $\beta$ is odd, one can check that $u,v,w$ are all roots of unity.
        Then by Lemma~\ref{lm: lattice full rank equiv roots of unity}, $\mathrm{rank}(L_{(u,v,w)})=3$, contradiction.
        Note that $b=-1$ is forbidden by the assumption of this lemma.
        If $b=1$, then $(u,v,w)=(-1,0,0)$ and so $\mathrm{rank}(L_{(u,v,w)})=1$, contradiction.
        
        \end{itemize}

        \item If $(c=bd)\land(b=-cd)$, then $(b=\mathfrak{i}c)\land(d=-\mathfrak{i})$ or $(b=-\mathfrak{i}c)\land(d=\mathfrak{i})$.
        So $1-b^2=c^2-d^2$ and $f\in \M$ by Lemma~\ref{lm:is matchgate}.
        
        \item If $(c=-bd)\land (b=cd)$, then $(b=\mathfrak{i}c)\land(d=\mathfrak{i})$ or $(b=-\mathfrak{i}c)\land(d=-\mathfrak{i})$.
        So $1-b^2=c^2-d^2$ and $f\in \M$ by Lemma~\ref{lm:is matchgate}.
        
        \item If $(c=-bd)\land (b=-cd)$, then $(b=c)\land(d=-1)$ or $(b=-c)\land (d=1)$.
        Similar to the argument in the case where $(c=bd)\land(b=cd)$, $\plholant{\neq_2}{f}$ is \numP-hard.
        \end{itemize}
        
        \item 
        If $\mathrm{rank}(L_{(u^*,v^*,w^*)})=3$, then $u^*,v^*,w^*$ are all roots of unity by Lemma~\ref{lm: lattice full rank equiv roots of unity}.
        As $|v^*|=|\frac{-1+c}{1+c}|=1$, we have $c\in \R\mathfrak{i}$.
        Also, $|u^*|=|w^*|$ implies $|b+d|=|-b+d|$, so $b$ and $d$ are orthogonal in the complex plane.
        Combined with $c=\pm bd$ and $c\in \R\mathfrak{i}$, we have either $(b\in \R)\land(d\in \R\mathfrak{i})$, or $(b\in \R\mathfrak{i})\land(d\in \R)$.
        By $|u^*|=|\frac{b+d}{1+c}|=|\frac{b+d}{1\pm bd}|=1$, we have $|b|^2+|d|^2=1+|b|^2|d|^2$, so $|b|=1$ or $|d|=1$.
        
        First, assume $(b\in \R)\land(d\in \R\mathfrak{i})$.
        \begin{itemize}
            \item 
            If $|b|=1$, then  $b=\pm1$ by $b\in \R$.
            Combined with $c=\pm bd$, we have $c=\pm d$ and $1-b^2=c^2-d^2=0$, so $f\in \M$ by Lemma~\ref{lm:is matchgate}.

            \item 
            If $|d|=1$, then $d=\pm i$ by $d\in \R \mathfrak{i}$.
            Combined with $c=\pm bd$, we have $c=\pm \mathfrak{i}b$ and $1-b^2=c^2-d^2$, so $f\in \M$ by Lemma~\ref{lm:is matchgate}.
        \end{itemize}

        Second, assume $(b\in \R\mathfrak{i})\land(d\in \R)$.
        \begin{itemize}
            \item 
            If $|b|=1$, then $d=\pm\frac{c}{b}=\pm\overline{b}c=\mp bc$ by $c=\pm bd$ and $b\in \R\mathfrak{i}$.
            So $f$ is $\M$-transformable by Lemma~\ref{lm:symmetric equality_matchgate-transformable}.

            \item If $|d|=1$, then $(b=\pm c)\land(d=\pm 1)$ by $d\in \R$ and $c=\pm bd$.
            By the same argument as the case where $\mathrm{rank}(L_{(u^*,v^*,w^*)})=2$, $\plholant{\neq_2}{f}$ is \numP-hard.
        \end{itemize}
    \end{itemize}
\end{proof}

\subsection{Dichotomy for three equal pairs}\label{subsec: Dichotomy for three equal pairs}

\begin{lemma}\label{lm: symmetric equality dichotomy}
    Let $f$ be a signature with the signature matrix 
    $
    M(f)=
    \left[\begin{smallmatrix} 
    1 & 0 & 0 & b \\ 
    0 & c & d & 0 \\ 
    0 & d & c & 0 \\ 
    b & 0 & 0 & 1
    \end{smallmatrix}\right]
    $, with $bcd\neq 0$.
    Then one of the following holds:
    \begin{itemize}
        \item $\plholant{\neq_2}{\mathcal{F}}\equiv_T \holant{\neq_2}{\mathcal{F}}$ for any signature set $\mathcal{F}$ containing $f$;
        \item $\plholant{\neq_2}{f}$ is \numP-hard;
        \item 
        $\plholant{\neq_2}{f}$ is tractable. 
        More specifically,
        \begin{enumerate}
            \item $f$ is $\M$-transformable;
            \item $d=\pm1, b=c=0$;
            \item $d=\pm1$, $b=-c=\mathfrak{i}\tan(\frac{\beta\pi}{8})$, where $\beta\in \Z$ is odd.
        \end{enumerate}
    \end{itemize}
\end{lemma}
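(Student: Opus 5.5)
We split on the degenerate values $b=-1$ or $c=-1$, and then on the rank of the lattice $L_{(u,v,w)}$ with $(u,v,w)=(\frac{c+d}{1+b},\frac{-1+b}{1+b},\frac{-c+d}{1+b})$.

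\textbf{Degenerate values.} If $b=-1$ or $c=-1$, Lemma~\ref{lm: ars b=-1 or c=-1} gives directly that $\plholant{\neq_2}{f}$ is \numP-hard unless $f\in\M$. In the latter case $f$ is trivially $\M$-transformable. So we may assume $b\neq -1$ and $c\neq -1$, and then $(u,v,w)$ is well defined.

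\textbf{Rank at most 2.} If $\mathrm{rank}(L_{(u,v,w)})\le 1$, Lemma~\ref{lm: rank<=1} yields one of the three conclusions (equivalence with the non-planar problem, \numP-hardness, or $\M$-transformability). If $\mathrm{rank}(L_{(u,v,w)})=2$, Lemma~\ref{lm: rank=2 dichotomy} does the same. In the first alternative, Lemma~\ref{lm: rank<=1} already gives the equivalence for every signature set $\mathcal{F}\ni f$. For rank $2$ the equivalence is produced inside Lemma~\ref{lm: rank2 c neq pm bd} via Lemma~\ref{lm:interpolate disequality crossover}, which also holds for arbitrary $\mathcal{F}$; the passage through the rotation $f^{\frac{\pi}{2}}$ preserves this because $f^{\frac{\pi}{2}}$ is realizable from $f$.

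\textbf{Rank 3.} By Lemma~\ref{lm: lattice full rank equiv roots of unity}, $u,v,w$ are all roots of unity, and this case carries most of the work.
\begin{enumerate}
\item From $|v|=1$ we get $|1-b|=|1+b|$, so $b\in\R\mathfrak{i}$.
\item From $|u|=|w|$ we get $|c+d|=|c-d|$, so $c\perp d$ in the complex plane.
\item From $|u|=|1+b|^{-1}|c+d|=1$ we get $|c|^2+|d|^2=1+|b|^2$.
\end{enumerate}
Next consider the rotation $f^{\frac{\pi}{2}}$ with parameters $(u^*,v^*,w^*)$, where $b$ and $c$ are swapped. If its lattice has rank $\le 2$, we conclude by the rank $\le 2$ arguments above applied to $f^{\frac{\pi}{2}}$. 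This is legitimate since $\M$-transformability, hardness and planar/non-planar equivalence transfer between $f$ and $f^{\frac{\pi}{2}}$, using Lemma~\ref{lm:symmetric equality_matchgate-transformable} and the fact that each is realizable from the other. Otherwise $u^*,v^*,w^*$ are also roots of unity. This gives additionally $c\in\R\mathfrak{i}$, $b\perp d$, and $|b|^2+|d|^2=1+|c|^2$.

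\textbf{Solving the rank-3 constraints.} Subtracting the two norm identities gives $|d|=1$. With $b,c\in\R\mathfrak{i}$, orthogonality of $d$ to both nonzero $b$ and $c$ forces $d\in\R$, hence $d=\pm1$, and then $|b|=|c|$, i.e.\ $b=\pm c$. Now Lemma~\ref{lm: b=pm c, d=pm 1} applies. It gives \numP-hardness unless $b\in\{0,\pm1,\pm\mathfrak{i}\}$, or $b=-c=\mathfrak{i}\tan(\frac{\beta\pi}{8})$ with $\beta$ odd. We then discard the exceptional values that cannot occur:
\begin{itemize}
\item $b=0$ is excluded by $bcd\neq0$;
\item $b=\pm1$ is not in $\R\mathfrak{i}$;
\item for $b=\pm\mathfrak{i}$ we have $1-b^2=2=c^2-d^2$ only when $c=\pm b$ and $d^2=-3$, so we should check instead that $f\in\M$ or $f$ is $\M$-transformable (via $d=\pm bc$ or $1-b^2=c^2-d^2$), and otherwise it remains tractable by Lemma~\ref{lm: b=pm c, d=pm 1}, which still places it in the tractable list;
\item the $\tan$ family is exactly item 3.
\end{itemize}
Item 2 as written ($b=c=0$) does not arise under $bcd\neq0$ but is harmless to list.

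The main obstacle is this final bookkeeping. One must confirm that every surviving tractable parameter from Lemma~\ref{lm: b=pm c, d=pm 1} either satisfies one of the matchgate-transformability criteria (Lemmas~\ref{lm:is matchgate} and \ref{lm:symmetric equality_matchgate-transformable}) or is exactly the new type 3. One must also ensure that the case analysis on $f^{\frac{\pi}{2}}$ does not lose the ``for any $\mathcal{F}$'' form of the first alternative.
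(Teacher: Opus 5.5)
Your proposal follows the paper's proof essentially step for step. You dispose of $b=-1$ or $c=-1$ via Lemma~\ref{lm: ars b=-1 or c=-1}, invoke Lemmas~\ref{lm: rank<=1} and~\ref{lm: rank=2 dichotomy} for both $f$ and $f^{\frac{\pi}{2}}$, and in the doubly full-rank case use the unit-modulus conditions to force $d=\pm1$ and $b=\pm c$ before finishing with Lemma~\ref{lm: b=pm c, d=pm 1}. Your check that the ``for any $\mathcal{F}$'' form survives the rank-2 and rotation steps is a point the paper passes over silently.

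Two small fixes are needed:
\begin{itemize}
\item $|d|=1$ comes from \emph{adding} the two norm identities; subtracting them gives $|b|=|c|$.
\item The case $b=\pm\mathfrak{i}$ closes at once. Since $c=\pm b$, we have $bc=\pm b^2=\mp 1$, so $d=\pm bc$, and $f$ is $\M$-transformable by Lemma~\ref{lm:symmetric equality_matchgate-transformable}. Your fallback ``it is tractable, hence in the list'' is not needed, and by itself it would not place $f$ in one of the three listed types.
\end{itemize}
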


\begin{proof}
    If $b=-1$ or $c=-1$, by Lemma~\ref{lm: ars b=-1 or c=-1} we are done.
    Now we assume $b\neq -1$ and $c\neq -1$.
    Let $(u,v,w)=(\frac{c+d}{1+b},\frac{-1+b}{1+b},\frac{-c+d}{1+b})$ and $(u^*,v^*,w^*)=(\frac{b+d}{1+c},\frac{-1+c}{1+c},\frac{-b+d}{1+c})$.
    If $\mathrm{rank}(L_{(u,v,w)})\le2$, then by Lemma~\ref{lm: rank<=1} and Lemma~\ref{lm: rank=2 dichotomy} we are done.
    If $\mathrm{rank}(L_{(u^*,v^*,w^*)})\le2$, apply Lemma~\ref{lm: rank<=1} and Lemma~\ref{lm: rank=2 dichotomy} to $f^{\frac{\pi}{2}}$ then we are done. 
    Now assume $\mathrm{rank}(L_{(u,v,w)})=\mathrm{rank}(L_{(u^*,v^*,w^*)})=3$.
    By Lemma~\ref{lm: lattice full rank equiv roots of unity}, we have $u,v,w,u^*,v^*,w^*$ are all roots of unity.
    Then $|v|=|v^*|=1$, which implies that $b,c\in \R\mathfrak{i}$.
    Also, $|u|=|w|$ implies that $|c+d|=|-c+d|$, so $d\in \R$.
    So $|u|^2=\frac{|c|^2+|d|^2}{1+|b|^2}=1$, and $|u^*|^2=\frac{|b|^2+|d|^2}{1+|c|^2}=1$.
    This implies $(b=\pm c)\land (d=\pm1)$.
    If $b=c$, then by Lemma~\ref{lm: b=pm c, d=pm 1}, $\plholant{\neq_2}{f}$ is \numP-hard unless $b=0,\pm1,\pm \mathfrak{i}$.
    Note that the case where $b=\pm 1$ can be subsumed to $f\in \M$ by Lemma~\ref{lm:is matchgate}, and $b=\pm \mathfrak{i}$ can be subsumed to that $f$ is $\M$-transformable by Lemma~\ref{lm:symmetric equality_matchgate-transformable}.
    If $b=-c$, then by Lemma~\ref{lm: b=pm c, d=pm 1}, $\plholant{\neq_2}{f}$ is \numP-hard unless $b=0,\pm1,\pm \mathfrak{i},\mathfrak{i}\tan(\frac{\beta\pi}{8})$,  where $\beta$ is odd.
    Also, the case where $b=\pm 1$ or $b=\pm \mathfrak{i}$ can be subsumed to that $f\in \M$ or $f$ is $\M$-transformable.
    The lemma follows.
\end{proof}

    \begin{theorem}\label{thm: three equal pairs}
    Let $f$ be a signature with the signature matrix 
    $
    M(f)=
    \left[\begin{smallmatrix} 
    1 & 0 & 0 & b \\ 
    0 & c & d & 0 \\ 
    0 & d & c & 0 \\ 
    b & 0 & 0 & 1
    \end{smallmatrix}\right]
    $, with $bcd\neq 0$.
    Then $\plholant{\neq_2}{f}$ is \numP-hard unless in the following cases where $\plholant{\neq_2}{f}$ is tractable:
    \begin{enumerate}
        \item  
        $f$ is $\mathscr{A}$-transformable, or $\mathscr{P}$-transformable, or $\mathscr{L}$-transformable;
        
        \item 
        $f$ is $\M$-transformable;

        \item 
        $d=\pm 1,b=-c=\mathfrak{i}\tan(\frac{\beta\pi}{8})$, where $\beta\in \Z$ is odd.
    \end{enumerate}
    \end{theorem}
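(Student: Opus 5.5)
This theorem should follow by combining Lemma~\ref{lm: symmetric equality dichotomy} with the non-planar classification in Lemma~\ref{lm: eight-vertex ars equality}. Let $f$ be as in the statement, with $bcd\neq 0$. Lemma~\ref{lm: symmetric equality dichotomy} leaves three possibilities, and I handle them in order.

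\textbf{(a)} $\plholant{\neq_2}{f}$ is \numP-hard. Nothing further is needed.

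\textbf{(b)} One of the listed tractable cases holds. The case ``$f$ is $\M$-transformable'' is item~2 of the theorem, and the case $b=-c=\mathfrak{i}\tan(\frac{\beta\pi}{8})$ with $d=\pm 1$ and $\beta$ odd is item~3. The case $b=c=0$ cannot occur, since it contradicts $bcd\neq 0$. Tractability in item~3 comes from Lemma~\ref{lm: b=pm c, d=pm 1}: there, the combinatorial transformation to {\sc Even Coloring} plus the $HZ$/$Z$ holographic transformation (Lemma~\ref{lm:even coloring transformation}) lands in case~4(ii) of Theorem~\ref{thm:planar six-vertex}. Tractability for $\M$-transformable $f$ follows from FKT.

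\textbf{(c)} $\plholant{\neq_2}{\mathcal F}\equiv_T\holant{\neq_2}{\mathcal F}$, applied with $\mathcal F=\{f\}$. Now invoke Lemma~\ref{lm: eight-vertex ars equality}. Since $bcd\neq 0$ and $f$ has exactly the form required there, $\holant{\neq_2}{f}$ is either \numP-hard or $f$ is $\mathscr{P}$-, $\mathscr{A}$- or $\mathscr{L}$-transformable. In the hard subcase the planar problem is hard by the equivalence. Otherwise we are in item~1, and the problem is tractable even without planarity by Theorem~\ref{thm: general eight-vertex}, hence also in the planar setting.

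The converse direction of the statement, that every listed case is tractable, is the union of the tractability arguments just given. Possible overlaps between the listed classes (for instance, an $\M$-transformable $f$ that also falls under case~(c)) cause no trouble, since we only need hardness outside the union.

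The only point needing care is that Lemma~\ref{lm: symmetric equality dichotomy} is stated with $a=x=1$. This is already the normalization in force here, because the theorem fixes the outer entries to $1$, so no extra reduction is needed. I do not expect a real obstacle: the substantive work lies in Lemmas~\ref{lm: rank<=1}, \ref{lm: rank=2 dichotomy} and \ref{lm: b=pm c, d=pm 1}, and this theorem is bookkeeping.
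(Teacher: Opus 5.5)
Your proposal is correct and follows the paper's proof, which simply combines Lemma~\ref{lm: symmetric equality dichotomy} with Lemma~\ref{lm: eight-vertex ars equality}. The only difference is minor: the paper disposes of the tractable case $d=\pm1$, $b=c=0$ by noting that $f\in\mathscr{A}$ there, whereas you observe that this case is excluded by $bcd\neq 0$; both treatments are valid.
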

    \begin{proof}
        Note that $f\in \mathscr{A}$ if $d=\pm1,b=c=0$.
        Then by Lemma~\ref{lm: symmetric equality dichotomy} and Lemma~\ref{lm: eight-vertex ars equality}, the theorem follows.
    \end{proof}

\subsection{Dichotomy for three opposite pairs}\label{subsec: Dichotomy for three opposite pairs}

The following lemma is proved in~\cite{CaiF17}.

\begin{lemma}\label{lm: eight vertex ars-disequality}
    Let $f$ be a signature with the signature matrix
    $M(f)=\left[\begin{smallmatrix}
    1 & 0 & 0 & b\\
    0 & c & d & 0\\
    0 & -d & -c & 0\\
    -b & 0 & 0 & 1\\
    \end{smallmatrix}\right]$ with $bcd\neq 0$,
    then $\operatorname{Holant}(\neq_2|f)$ is
    $\#\operatorname{P}$-hard, or $f$ is
     $\mathscr{P}$-transformable, or $\mathscr{A}$-transformable,
     or $\mathscr{L}$-transformable.
\end{lemma}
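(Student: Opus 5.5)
The plan is to reduce the three opposite pairs form to the three equal pairs form of Lemma~\ref{lm: eight-vertex ars equality}, using a binary modification followed by the normalization of Lemma~\ref{lm:let a=x}. The argument can be carried out in the non-planar setting throughout, which is all the statement requires.

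\emph{Step 1: realize $(0,1,-1,0)^T$ on the right-hand side.} Connect variables $x_3,x_4$ of $f$ by $\neq_2$. The resulting binary signature has entries $f_{x_1x_2,01}+f_{x_1x_2,10}$. These are $0$ for $x_1x_2\in\{00,11\}$, $c+d$ for $x_1x_2=01$, and $-(c+d)$ for $x_1x_2=10$. So we get $(c+d)(0,1,-1,0)^T$. If $c+d=0$, we instead connect $x_1,x_2$ by $\neq_2$, which gives $(c-d)(0,1,-1,0)^T$. We cannot have $c+d=c-d=0$, since that would force $c=0$, contradicting $bcd\neq0$.

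\emph{Step 2: modify $f$ and normalize.} Apply a binary modification with $(0,1,-1,0)^T$ to variable $x_1$. This multiplies every entry with $x_1=1$ by $-1$. The rows $10,11$ of $M(f)$ become $(0,d,c,0)$ and $(b,0,0,-1)$, so we obtain
\[
M(g)=\left[\begin{smallmatrix}1&0&0&b\\0&c&d&0\\0&d&c&0\\b&0&0&-1\end{smallmatrix}\right].
\]
By Lemma~\ref{lm:let a=x}, we may replace the corner entries $1,-1$ by $\mathfrak{i},\mathfrak{i}$, since only their product $-1$ matters. Dividing by $\mathfrak{i}$ then gives a signature $\hat g$ with equal pairs $(-\mathfrak{i}b,-\mathfrak{i}c,-\mathfrak{i}d)$, all nonzero.

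\emph{Step 3: go back from $\hat g$ to $f$.} The same construction applied to $g$ recovers $f$. Indeed, a self-loop on $g$ produces the binary $(0,1,-1,0)^T$ up to scalar, because the entries $-1$ and $1$ in the corners play the required role, and modifying $x_1$ once more restores $f$. Hence $\holant{\neq_2}{f}\equiv_T\holant{\neq_2}{\hat g}$. Lemma~\ref{lm: eight-vertex ars equality} then shows that the problem is \numP-hard unless $\hat g$ is $\mathscr{P}$-, $\mathscr{A}$- or $\mathscr{L}$-transformable.

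\emph{Main obstacle: transferring transformability back to $f$.} The hardness direction is immediate from Steps 1--3. The delicate point is to conclude that $f$ itself is $\mathscr{P}$-, $\mathscr{A}$- or $\mathscr{L}$-transformable whenever $\hat g$ is. Two operations separate $f$ from $\hat g$. The first is the $\sqrt{ax}$-normalization with factor $\mathfrak{i}$; it can be written as a diagonal holographic transformation $\mathrm{diag}(1,\alpha)$ with $\alpha^4=-1$, combined with a global scalar. The second is multiplying variable $x_1$ by the unary $(1,-1)$.

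I would try to show that each of $\mathscr{P}$, $\mathscr{A}$ and $\mathscr{L}$ is preserved, after composing the witnessing $T$ with these operations. The key facts to use are: the unary $(1,-1)$ is affine and of product type; and $\mathrm{diag}(1,\alpha)$ with $\alpha$ an 8th root of unity lies in the stabilizer groups that appear in the transformability characterizations of~\cite{CaiF17}. If this direct route becomes messy, the fallback is to read off the explicit parameter conditions for the three tractable classes from Lemma~\ref{lm: eight-vertex ars equality}. Translating them through $(b,c,d)\mapsto(-\mathfrak{i}b,-\mathfrak{i}c,-\mathfrak{i}d)$ should then let us exhibit explicit transformations for $f$ by hand.
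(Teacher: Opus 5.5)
Your hardness half is sound. Steps 1--2 correctly produce $\hat g$ with three equal pairs $-\mathfrak{i}(b,c,d)$, and give $\holant{\neq_2}{\hat g}\le_T^p\holant{\neq_2}{f}$. The source/sink count behind Lemma~\ref{lm:let a=x} works verbatim without planarity, and this reduction is all the hardness direction needs.

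The paper gives no proof of this lemma; it is imported from \cite{CaiF17}. Its planar analogue, Lemma~\ref{lm: symmetric disequality dichotomy}, uses exactly your $\mathrm{diag}(1,\sqrt{\mathfrak{i}})$ plus $(0,1,-1,0)^T$ reduction, but only to carry hardness. For the non-planar tractable side it invokes this cited lemma for $f$ directly rather than deriving it.

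The genuine gap is precisely that tractable side, which is everything the statement adds to Lemma~\ref{lm: eight-vertex ars equality}. You must show that $\mathscr{P}$-, $\mathscr{A}$- or $\mathscr{L}$-transformability of $\hat g$ forces one of these for $f$. You leave this as a plan, and the tools you name miss the point.
\begin{itemize}
\item The diagonal step needs no stabilizer facts. Transformability quantifies over arbitrary $T$, so a transformation fixing $\neq_2$ up to scalar is simply absorbed into the witness.
\item The binary modification is not a holographic transformation. What matters is what it becomes after the witness $T$, not whether the unary $(1,-1)$ lies in the class.
\item For $\mathscr{A}$ and $\mathscr{P}$ the gap can be closed. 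Write $f$ as a gadget on $g$, one $\neq_2$, and $m=(0,1,-1,0)^T$. Since $m$ is antisymmetric, $T^{-1}m=\det(T)^{-1}m\in\mathscr{A}\cap\mathscr{P}$ for every $T$. So $(\neq_2)T\in\mathscr{C}$ and $T^{-1}g\in\mathscr{C}$ give $T^{-1}f\in\mathscr{C}$ for $\mathscr{C}\in\{\mathscr{A},\mathscr{P}\}$, by closure under gadgets.
\end{itemize}

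For $\mathscr{L}$ the same argument fails.
\begin{itemize}
\item $m\notin\mathscr{L}$: for $\sigma=01$ one gets $(0,\sqrt{\mathfrak{i}},-1,0)^T$, whose nonzero ratio is not a power of $\mathfrak{i}$.
\item $\mathscr{L}$ is not closed under contraction with $m$: contracting $(=_2)\in\mathscr{L}$ with $m$ returns $m$ itself.
\end{itemize}
So on your route, $\mathscr{L}$-transformability of $\hat g$ tells you nothing about $f$. Your fallback does not help either, because Lemma~\ref{lm: eight-vertex ars equality} as stated gives no explicit parameter conditions. You would need the internal description of the $\mathscr{L}$-transformable case from \cite{CaiF17}, and then either exhibit a witness for $f$ or prove $f$ hard directly.

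This is not a formality. In case~3 of Lemma~\ref{lm: symmetric disequality dichotomy}, the equal-pairs image $f'$ is tractable while $f$ is \numP-hard, and an extra gadget $f_3$ is needed to show it.

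Separately, your Step~3 claim is false. A $\neq_2$-loop on $g$ only sees the middle block $\left[\begin{smallmatrix}c&d\\d&c\end{smallmatrix}\right]$ and gives $(c+d)(0,1,1,0)^T$. More generally, an arrow-reversal count shows that every binary gadget built from $g$ and $\neq_2$ has equal middle entries. That reverse reduction is not needed for the lemma, though.
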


Now we generalize Lemma~\ref{lm: eight vertex ars-disequality} to the planar setting.

\begin{lemma}\label{lm: symmetric disequality dichotomy}
    Let $f$ be a signature with the signature matrix
    $M(f)=\left[\begin{smallmatrix}
    1 & 0 & 0 & b\\
    0 & c & d & 0\\
    0 & -d & -c & 0\\
    -b & 0 & 0 & 1\\
    \end{smallmatrix}\right]$ with $bcd\neq 0$,
    Then $\plholant{\neq_2}{f}$ is \numP-hard unless $f$ is $\mathscr{P}$-transformable, or $\mathscr{A}$-transformable, or $\mathscr{L}$-transformable, or $\M$-transformable.
\end{lemma}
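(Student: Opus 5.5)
We reduce the three-opposite-pairs case to the three-equal-pairs case, Theorem~\ref{thm: three equal pairs}, using a diagonal holographic transformation together with binary modifications. The outline in Section~\ref{sec: three or opposite pairs} suggests exactly this route.

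\textbf{Step 1 (obtain the needed binary).} We first realize $(0,1,-1,0)^T$ on the RHS. Connect $x_3,x_4$ of $f$ with $\neq_2$. This gives $(0,c+d,-d-c,0)^T=(c+d)(0,1,-1,0)^T$ provided $c+d\neq0$. If $c+d=0$, we instead loop $x_1,x_2$, or loop a rotated copy $f^{\frac{\pi}{2}}$, which yields $(b+\ldots)$-type combinations. If every such loop vanishes, then $f$ falls into a degenerate subfamily. That subfamily is handled directly: it is either in $\M$ by Lemma~\ref{lm:is matchgate}, or its hardness follows from Lemma~\ref{lm: outer full inner degenerate}.

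\textbf{Step 2 (transform to three equal pairs).} A binary modification with $(0,1,-1,0)^T$ on variable $x_2$ multiplies by $-1$ every entry with $x_2=1$. The resulting matrix is $\left[\begin{smallmatrix}1&0&0&b\\0&-c&-d&0\\0&-d&-c&0\\b&0&0&-1\end{smallmatrix}\right]$. Next, a holographic transformation by $\mathrm{diag}(1,\sqrt[4]{-1})$ preserves $\neq_2$ up to a scalar. This is exactly the move used in the proof of Lemma~\ref{lm:1cdwz1} for $\epsilon=-1$; alternatively, Lemma~\ref{lm:let a=x} normalizes the outer diagonal. Either way we obtain $f'$ with $M(f')=\left[\begin{smallmatrix}1&0&0&\ii b\\0&-\ii c&-\ii d&0\\0&-\ii d&-\ii c&0\\\ii b&0&0&1\end{smallmatrix}\right]$ up to rescaling, so that $\plholant{\neq_2}{f'}\le_T\plholant{\neq_2}{f}$.

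\textbf{Step 3 (classify $f'$).} Theorem~\ref{thm: three equal pairs} applies to $f'$ with $(b',c',d')=(\ii b,-\ii c,-\ii d)$. Hence $\plholant{\neq_2}{f}$ is \numP-hard unless $f'$ is $\mathscr{A}$-, $\mathscr{P}$-, $\mathscr{L}$-, or $\M$-transformable, or $f'$ lies in the new tractable family $d'=\pm1$, $b'=-c'=\ii\tan(\beta\pi/8)$.

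In the transformable cases we must pull tractability back to $f$. The reduction in Step 2 is one-directional, so we need to check that $f$ itself is tractable. For $\M$ we compare Lemma~\ref{lm:symmetric equality_matchgate-transformable} with Lemma~\ref{lm:symmetric disequality_matchgate-transformable}. Their conditions $d'=\pm b'c'$ and $d=\pm\ii bc$ correspond exactly: $-\ii d=\pm(\ii b)(-\ii c)=\pm bc$. Likewise the condition $1-b'^2=c'^2-d'^2$ becomes $1+b^2=-(c^2-d^2)$, which is the matchgate condition for $f$. For $\mathscr{A},\mathscr{P},\mathscr{L}$ we would rather not verify each correspondence by hand. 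Instead we note that if $f'$ lies in one of these classes, then $\holant{\neq_2}{f'}$ is tractable by Theorem~\ref{thm: general eight-vertex}. In that situation we then invoke Lemma~\ref{lm: eight vertex ars-disequality} directly for $f$. The reason is that the Step 2 gadgets are planar and invertible: modifying by $(0,1,-1,0)^T$ again undoes the modification. This gives $\plholant{\neq_2}{f}\equiv_T\plholant{\neq_2}{f'}$, and similarly for the non-planar problems.

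\textbf{Main obstacle.} The key point is to rule out the new family, i.e., to show it never arises from the opposite-pairs case with $f$ tractable only through it. If $f'$ were in the family, then $d=\pm\ii$ and $c=\ii b\cdot(\ldots)$. I expect this to reduce to $f$ being $\M$-transformable via Lemma~\ref{lm:symmetric disequality_matchgate-transformable}. Failing that, the family itself still gives tractability for $f$ by the equivalence above. The statement as written lists no new family, so this correspondence must be checked by a direct computation, and I expect that computation to be the delicate step.
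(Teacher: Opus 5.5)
Your proof breaks at the equivalence $\plholant{\neq_2}{f}\equiv_T\plholant{\neq_2}{f'}$ claimed in Step~3. Both your pull-back of the $\mathscr{P}$/$\mathscr{A}$/$\mathscr{L}$ cases and your fallback for the new family depend on it.

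\textbf{Why the equivalence fails.} Undoing the modification requires $(0,1,-1,0)^T$ to be available in $\plholant{\neq_2}{f'}$. But $f'$ and $\neq_2$ are both invariant under complementing all inputs, so every gadget built from them is too, and $(0,1,-1,0)^T$ is not. All you actually have is $\plholant{\neq_2}{f'}\le_T\plholant{\neq_2}{f}$.

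The equivalence is in fact false. In your normalization $(b',c',d')=(\ii b,-\ii c,-\ii d)$, the new family corresponds to $b=c=\tan(\beta\pi/8)$ and $d=\pm\ii$. There $\plholant{\neq_2}{f'}$ is tractable, but $f$ is not $\M$-transformable:
\begin{itemize}
\item $d'=\pm1$ while $b'c'=\tan^2(\beta\pi/8)\in\{3\pm2\sqrt2\}$, so $d'\neq\pm b'c'$;
\item $1+b^2\neq d^2-c^2$, so $f\notin\M$.
\end{itemize}
Nor is $f$ $\mathscr{P}$-, $\mathscr{A}$- or $\mathscr{L}$-transformable. The family is \numP-hard on general graphs, and your gadgets also give $\holant{\neq_2}{f'}\le_T\holant{\neq_2}{f}$. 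So the lemma you are proving asserts that $\plholant{\neq_2}{f}$ is \numP-hard on this family. Your hope that the family collapses to $\M$-transformability is false, and your fallback (tractability via the equivalence) would contradict the statement.

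Step~1 is harmless: looping $x_1,x_2$ gives $(c-d)(0,1,-1,0)^T$. Since $cd\neq0$, one of $c\pm d$ is nonzero, so no degenerate subfamily occurs.

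\textbf{What is missing.} Two ingredients.
\begin{itemize}
\item Apply Lemma~\ref{lm: symmetric equality dichotomy} to $f'$ rather than Theorem~\ref{thm: three equal pairs}. Its first alternative gives $\plholant{\neq_2}{\mathcal{F}}\equiv_T\holant{\neq_2}{\mathcal{F}}$ for \emph{every} $\mathcal{F}\ni f'$. Taking $\mathcal{F}=\{f,f'\}$ yields $\holant{\neq_2}{f}\le_T\plholant{\neq_2}{f}$. Lemma~\ref{lm: eight vertex ars-disequality} applied to $f$ itself then finishes, so no tractable class ever has to be pulled back from $f'$ to $f$.
\item The new family needs a hardness gadget, not a tractability argument. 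The paper uses $M(f_3)=M(f)NM(f^{\pi})NM(f)$. This stays in the three-opposite-pairs form with $d_3=-d$. By the tangent triple-angle formula, $b_3=\tan(3\beta\pi/8)$ and $c_3=\mp b_3$ when $c=\pm b$, i.e.\ the relative sign is flipped.
\end{itemize}
After the same modification and holographic transformation, $f_3'$ has $b_3'=c_3'$, $d_3'=\pm1$ and $|b_3'|=|\tan(3\beta\pi/8)|\notin\{0,1\}$. This avoids every tractable alternative of Lemma~\ref{lm: symmetric equality dichotomy}. Hence $\plholant{\neq_2}{f}$ is either \numP-hard, or falls into the planar-equals-non-planar alternative, which is handled as above.
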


\begin{proof}
    By the holographic transformation using
    $\left[\begin{smallmatrix}
    1 & 0\\
    0 & \sqrt{\mathfrak{i}}\\
    \end{smallmatrix}\right]$, we have
    \[
    \plholant{\neq_2}{f}\equiv_T^p\plholant{\neq_2}{\tilde{f}},
    \]
    where $M(\tilde{f})=\left[\begin{smallmatrix}
     1 & 0 & 0 & \mathfrak{i} b\\
     0 & \mathfrak{i} c & \mathfrak{i} d & 0\\
     0 & -\mathfrak{i} d & -\mathfrak{i} c & 0\\
     -\mathfrak{i} b & 0 & 0 & -1\\
    \end{smallmatrix}\right]$.
    By connecting the variables $x_3, x_4$ (resp. $x_1,x_2$) of ${f}$ using $(\neq_2)$, we get the binary signature
    $(c+d)(0,1, -1, 0)^T$ (resp. $(c-d)(0,1, -1, 0)^T$).
    Since $cd\neq 0$, there is at least one nonzero in $\{c+d, c-d\}$.
    Thus, we can realize $(0,1,-1,0)^T$ after normalizing by a nonzero scalar.
    By doing a binary modification to the variable $x_1$ of $\tilde{f}$ using $(0, 1, -1, 0)^T$,
    we get a signature $f'$ with the signature matrix
    $M(f')=
    \left[\begin{smallmatrix}
     1 & 0 & 0 & b'\\
     0 & c' & d' & 0\\
     0 & d' & c' & 0\\
     b' & 0 & 0 & 1\\
    \end{smallmatrix}\right],
    $
    where $b'=\mathfrak{i}b,c'=\mathfrak{i}c,d'=\mathfrak{i}d$.
    
    
    By Lemma~\ref{lm: symmetric equality dichotomy}, we have one of the following holds:
    \begin{itemize}
        \item 
        $\plholant{\neq_2}{\mathcal{F}}\equiv_T \holant{\neq_2}{\mathcal{F}}$ for any signature set $\mathcal{F}$ containing $f$.
        In this case, let $\mathcal{F}=\{f,f'\}$, then we have $\holant{\neq_2}{f}\le_T^p \holant{\neq_2}{\mathcal{F}}\equiv_T\plholant{\neq_2}{\mathcal{F}}\le \plholant{\neq_2}{f}$.
        So $\plholant{\neq_2}{f}\equiv_T\holant{\neq_2}{f}$.
        By Lemma~\ref{lm: eight vertex ars-disequality}, we are done.
        
        \item 
        $\plholant{\neq_2}{f'}$ is \numP-hard.
        This implies that $\plholant{\neq_2}{f}$ is \numP-hard.
        \item 
        $\plholant{\neq_2}{f'}$ is tractable. 
        More specifically,
        \begin{enumerate}
            \item $f'$ is $\M$-transformable.
            
            Note that $f'$ is $\M$-transformable $\Leftrightarrow d'=\pm b'c' \Leftrightarrow d=\pm \mathfrak{i}bc \Leftrightarrow f$ is $\M$-transformable by Lemma~\ref{lm:symmetric equality_matchgate-transformable} and Lemma~\ref{lm:symmetric disequality_matchgate-transformable}. 
            
            \item $d'=\pm1, b'=c'=0$.

            In this case, $d=\pm \mathfrak{i}, b=c=0$. This implies $f\in \mathscr{A}$.
            
            \item $d'=\pm1$, $b'=-c'=\mathfrak{i}\tan(\frac{\beta\pi}{8})$, where $\beta\in \Z$ is odd.

            In this case, $d=\pm \mathfrak{i}, b=-c=\tan(\frac{\beta\pi}{8})$.
            Connecting three copies of $f$ with the middle one rotated by $\pi$, we get a signature $f_3$ with the signature matrix
            $$
            M(f_3)=M(f)NM(f^\pi)NM(f)=
            \begin{bNiceMatrix}
             1 & 0 & 0 & b_3\\
             0 & c_3 & -d & 0\\
             0 & d & -c_3 & 0\\
             -b_3 & 0 & 0 & 1\\
            \end{bNiceMatrix},
            $$
            where $b_3=c_3=\frac{b(3-b^2)}{1-3 b^2}$.
            A direct computation shows that $|b_3|\neq 1$ for any odd integer $\beta$.

            By the same argument above, we can realize a signature $f_3$ with the signature matrix
            $M(f_3')=
            \left[\begin{smallmatrix}
             1 & 0 & 0 & \mathfrak{i}b_3\\
             0 & \mathfrak{i}c_3 & -\mathfrak{i}d & 0\\
             0 & -\mathfrak{i}d & \mathfrak{i}c_3 & 0\\
             \mathfrak{i}b_3 & 0 & 0 & 1\\
            \end{smallmatrix}\right].
            $
            Note that $f_3'$ has the arrow reversal symmetry property.
            If $f_3'$ satisfies the first two conditions in Lemma~\ref{lm: symmetric equality dichotomy}, then by the same argument above we are done.
            Now we check that $f_3'$ satisfies none of the third conditions in Lemma~\ref{lm: symmetric equality dichotomy}.
            Note that $\det M_{\rm{Out}}(f_3')-\det M_{\rm{In}}(f_3')=\det M_{\rm{Out}}(f'^{\frac{\pi}{2}}_3)-\det M_{\rm{In}}(f'^{\frac{\pi}{2}}_3)=1+ 2 b_3^2 -d^2=2(1+b_3'^2)\neq 0$, since $|b_3|\neq 1$.
            So $f'_3\notin \M$ and $f'^{\frac{\pi}{2}}_3\notin \M$ by Lemma~\ref{lm:is matchgate}.
            Also, $-\mathfrak{i}d=\pm(\mathfrak{i}b_3)\cdot (\mathfrak{i}c_3)$ implies $|b_3|=1$, contradiction.
            So $-\mathfrak{i}d\neq \pm(\mathfrak{i}b_3)\cdot (\mathfrak{i}c_3)$ and
            $f_3'$ is not $\M$-transformable by Lemma~\ref{lm:symmetric equality_matchgate-transformable}.
            Moreover, $(\mathfrak{i}b_3,\mathfrak{i}c_3)$ is not the case that $\mathfrak{i}b_3=\mathfrak{i}c_3=0$ or $\mathfrak{i}b_3=-\mathfrak{i}c_3=\mathfrak{i}\tan(\frac{\beta\pi}{8})$.
            Thus, $f_3'$ satisfies none of the third conditions in Lemma~\ref{lm: symmetric equality dichotomy}.
            The lemma follows.
        \end{enumerate}
    \end{itemize}    
\end{proof}

\section{Case III: Asymmetric Condition}\label{sec: dichotomy for all nonzero}
Let $f$ be a 4-ary signature with the signature matrix
$M(f)=\left[\begin{smallmatrix}
a & 0 & 0 & b\\
0 & c & d & 0\\
0 & w & z & 0\\
y & 0 & 0 & a\\
\end{smallmatrix}\right]$.
with $abcdyzw\neq 0$.
After Section~\ref{sec: three or opposite pairs}, we may assume it's not the case that $b=\epsilon y,c=\epsilon z$ and $d=\epsilon w$ for some $\epsilon=\pm1$.
Under this assumption, we appeal to M\"obius transformation to generate binary signatures to do binary modification.
The M\"obius transformation is well-defined only when the inner matrix of $M(f)$ has full rank. In Section~\ref{subsec: degenerate inner matrices}, we first address the exceptional case $by = cz = dw$, where the inner matrix remains degenerate even after any rotation of $f$.
Then in Section~\ref{subsec: achieving dichotomy by binary modification}, we are free to apply binary modifications, thanks to Lemma~\ref{lm: get any binary from a binary not fifth root} and Corollary~\ref{cor: get (0,1,0,0) from a binary third or fourth root}. 
This allows us to transform $M(f)$ into a form with zero entries, reducing to previously handled cases and thereby establishing the dichotomy.

\subsection{Degenerate inner matrices}\label{subsec: degenerate inner matrices}

Now we consider the case that $by=cz=dw$.
This is precisely when the inner matrix of the  signature matrix
is always degenerate under all possible rotations of the variables.
Note that to use  M\"{o}bius transformations to generate binary signatures
using Lemma~\ref{lm: get any binary from a binary not fifth root} or Corollary~\ref{cor: get (0,1,0,0) from a binary third or fourth root},
we need a signature matrix having a  full-rank inner matrix.
So we have to treat this case separately.

\begin{lemma}\label{lm: degenerate inner matrices}
Let $f$ be a 4-ary signature with the signature matrix
$M(f)=\left[\begin{smallmatrix}
a & 0 & 0 & b\\
0 & c & d & 0\\
0 & w & z & 0\\
y & 0 & 0 & a\\
\end{smallmatrix}\right]$ satisfying
 $abcdyzw\neq 0$ and $by=cz=dw$, then either $f\in \M$, or $\plholant{\neq_2}{f}$ is \numP-hard, or $f$ satisfies arrow reversal equality, and thus the complexity of $\plholant{\neq_2}{f}$ is characterized by Theorem~\ref{thm: three equal pairs}.
\end{lemma}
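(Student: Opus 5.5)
First I will normalize. By a diagonal holographic transformation (which preserves $\neq_2$ up to scalar) and Lemma~\ref{lm:let a=x}, I would take $a=1$. Let $\lambda=by=cz=dw\neq 0$. The inner matrix $\left[\begin{smallmatrix} c & d\\ w & z\end{smallmatrix}\right]$ then has determinant $cz-dw=0$, so it is degenerate. The outer matrix has determinant $1-\lambda$. The same holds for every rotation of $f$, since rotation only permutes the pairs $(b,y),(c,z),(d,w)$.

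\textbf{Step 1: the case $\lambda=1$.} If $\lambda=1$, then $\det M_{\rm Out}(f)=\det M_{\rm In}(f)=0$. By Lemma~\ref{lm:is matchgate}, $f\in\M$ and we are done.

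\textbf{Step 2: the case $\lambda\neq 1$.} Now $M_{\rm Out}(f)$ has full rank and $M_{\rm In}(f)$ is degenerate with $czdw\neq0$. By Lemma~\ref{lm: outer full inner degenerate}, $\plholant{\neq_2}{f}$ is \numP-hard unless $c+d=0$ and $z+d=0$, that is, $c=z=-d$. Then $cz=dw$ forces $d^2=dw$, so $w=d$. I would apply the same lemma to the rotations $f^{\frac{\pi}{2}}$ and $f^{\frac{3\pi}{2}}$. Their inner matrices are $\left[\begin{smallmatrix} b & w\\ d & y\end{smallmatrix}\right]$ and $\left[\begin{smallmatrix} y & w\\ d & b\end{smallmatrix}\right]$, their outer determinants are $1-cz=1-\lambda\neq0$, and their inner determinants are zero. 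So hardness follows unless $b+w=0$ and $y+w=0$, i.e.\ $b=y=-w$. Combining the two rotations gives $c=z=b=y=-d=-w$. This is in particular the arrow reversal equality case $(b=y)\land(c=z)\land(d=w)$, which is characterized by Theorem~\ref{thm: three equal pairs}.

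\textbf{Main obstacle.} I expect the only delicate point to be checking that Lemma~\ref{lm: outer full inner degenerate} applies verbatim to the rotated signatures. This means confirming that the rotated matrices still have equal diagonal outer entries $a$, so no renormalization is needed, and that their inner entries are all nonzero. Both hold under the hypothesis $abcdyzw\neq0$. Apart from that check, the proof is a short case split on whether $\lambda=1$.
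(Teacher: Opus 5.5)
Your proposal is correct and follows essentially the same route as the paper: normalize $a=1$, settle $by=1$ via the matchgate criterion of Lemma~\ref{lm:is matchgate}, then apply Lemma~\ref{lm: outer full inner degenerate} to $f$ and to $f^{\frac{\pi}{2}}$ to force $c=z=-d=-w$ and $b=y=-d$, which is the arrow reversal equality case. The extra application to $f^{\frac{3\pi}{2}}$ is redundant, since it yields the same two conditions as $f^{\frac{\pi}{2}}$. Likewise, dividing by $a\neq 0$ already gives the normalization, so no holographic transformation is needed.
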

\begin{proof}
By normalizing $a=1$, $M(f)=\left[\begin{smallmatrix}
1 & 0 & 0 & b\\
0 & c & d & 0\\
0 & w & z & 0\\
y & 0 & 0 & 1\\
\end{smallmatrix}\right]$.
If $by=1$, then $\det M_{\rm{Out}}(f)=\det M_{\rm{In}}(f)$, so $f\in \M$ by Lemma~\ref{lm:is matchgate}.
In the following we assume that $by\neq 1$.
If $c+d\neq 0$ or $z+d\neq 0$,
then $\plholant{\neq_2}{f}$ is $\#\operatorname{P}$-hard
by Lemma~\ref{lm: outer full inner degenerate}.
Otherwise, $c=z=-d$, $w = \frac{cz}{d}=d$, and  we have
$M(f)=\left[\begin{smallmatrix}
1 & 0 & 0 & b\\
0 & -d & d & 0\\
0 & d & -d & 0\\
y & 0 & 0 & 1\\
\end{smallmatrix}\right]$. Here from $by=cz$ we have $by=d^2$.
Now we consider
$M(f^{\frac{\pi}{2}}) =
\left[\begin{smallmatrix}
1 & 0 & 0 & -d\\
0 & b & d & 0\\
0 & d & y & 0\\
-d & 0 & 0 & 1\\
\end{smallmatrix}\right]$.
By $(-d)^2 = by \neq 1$ we can repeat the argument above
to show that either  $\plholant{\neq_2}{f}$ is \numP-hard,
or $b=y =-d$.
In the latter case, 
$M(f)=\left[\begin{smallmatrix}
1 & 0 & 0 & -d\\
0 & -d & d & 0\\
0 & d & -d & 0\\
-d & 0 & 0 & 1\\
\end{smallmatrix}\right]$.
Then $f$ satisfies arrow reversal equality, and the complexity of $\plholant{\neq_2}{f}$ is characterized by Theorem~\ref{thm: three equal pairs}.
\end{proof}

\subsection{Achieving dichotomy by binary modification}\label{subsec: achieving dichotomy by binary modification}
After Section~\ref{subsec: degenerate inner matrices}, we assume that the inner matrix $\left[\begin{smallmatrix}
 c & d \\
 w & z \\
\end{smallmatrix}\right]$ has full rank.
By Lemma~\ref{lm:get one binary when not exists epsilon}, we can realize a binary signature $g=(0,1,t,0)$ for some $t\in \C$ and $t\neq \pm 1$.
By Lemma~\ref{lm: get any binary from a binary not fifth root}, it $t^i$ are all distinct for $1\le i \le 5$, we can realize $(0,1,u,0)^T$ for any $u\in\C$.
Then, by carefully selecting binary signatures of the form $(0, 1, u, 0)^T$ and applying binary modifications to $f$ using these signatures, we can transform $M(f)$ into a form with zero entries, reducing to previously handled cases and thereby establishing the dichotomy.

\begin{lemma}\label{lm: with-any-binary}
Let $g=(0,1,t,0)^T$ be a binary signature where $t^i$ are distinct for $1\leq i\leq 5$, and $f$ be a 4-ary signature with the signature matrix
$M(f)=\left[\begin{smallmatrix}
a & 0 & 0 & b\\
0 & c & d & 0\\
0 & w & z & 0\\
y & 0 & 0 & a\\
\end{smallmatrix}\right]$ with $abcdyzw\neq 0$, where
$\left[\begin{smallmatrix}
 c & d \\
 w & z \\
\end{smallmatrix}\right]$ has full rank, then either $\plholant{\neq_2}{f,g}$ is \numP-hard or $f\in \M$.
\end{lemma}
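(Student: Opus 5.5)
By Lemma~\ref{lm: get any binary from a binary not fifth root}, the hypothesis that $t,t^2,\ldots,t^5$ are distinct lets us realize $(0,1,u,0)^T$ on the RHS for every $u\in\C$. By Lemma~\ref{lm:let a=x} we may also normalize $a=1$. The plan is to use binary modifications, with suitably chosen $u$, on a rotation of $f$ or on a small gadget built from $f$, so as to create a zero entry. This lands us in Case~II (Theorem~\ref{thm: at least one zero at most one zero pair}) or in the planar six-vertex model (Theorem~\ref{thm:planar six-vertex}). Both are already classified, and in each the only surviving tractable class is $\M$. We then check that tractability of the derived signature forces $f\in\M$, i.e.\ $by-1=dw-cz$ up to normalization (Lemma~\ref{lm:is matchgate}).

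\textbf{Step 1.} Apply modifications with $(0,1,s,0)^T$ on $x_1$ and $(0,1,r,0)^T$ on $x_4$. The resulting signature has matrix
\[
\left[\begin{smallmatrix} 1&0&0&rb\\0&c&rd&0\\0&sw&srz&0\\ sy&0&0&sr\end{smallmatrix}\right].
\]
Choosing $sr=-1$ and then using Lemma~\ref{lm:let a=x} does not kill $a$. Instead, the zero should come from a \emph{sum}. Connect $f$ with $f^\pi_u$ (a modification of $f^\pi$) via $N$, exactly as in the proofs of Lemmas~\ref{lm: 1bcdwy1} and~\ref{lm: 1bcwzy1}. The outer corner entries of the product are linear in $u$, of the form $y(1+u)$ and $b(1+u)$. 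Taking $u=-1$ makes $a=x=0$, which yields a planar six-vertex signature whose inner entries are quadratic expressions in $c,d,w,z$ with $u=-1$ substituted.

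\textbf{Step 2.} Classify that six-vertex signature by Theorem~\ref{thm:planar six-vertex}. Support-size arguments exclude $\mathscr P$ and $\mathscr A$, a direct $H^{\otimes 2}$ computation excludes $\widehat{\M}$, and the case $d=w=0$ of condition~4 is avoided generically. We are left with the matchgate determinant identity, a polynomial relation among $b,c,d,w,y,z$. Repeat the construction with the other rotations $f^{\frac{\pi}{2}}$, $f^{\frac{3\pi}{2}}$ and with other choices of $u$ that make a single inner entry vanish. For example, the first-row/column entries of the form $b^2+uw^2$ or $d^2+uy^2$ can be zeroed by taking $u=-b^2/w^2$. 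This lands the signature in the Case~II lemmas with one zero entry (Lemmas~\ref{lm: 1bcdwy1}, \ref{lm: 1bcwzy1}), each of which yields a further identity unless the gadget is in $\M$.

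\textbf{Step 3 (main obstacle).} Combine the identities to conclude $1-by=cz-dw$, i.e.\ $f\in\M$, or else a contradiction with $abcdyzw\neq0$ or with full rank of the inner matrix. I expect the difficulty to be twofold. First, the free parameter $u$ must avoid finitely many degenerate values (e.g.\ $u\neq-1$ where needed, or making two entries vanish at once), which is fine since $u$ is arbitrary. Second, the algebra splits into sign branches ($by-dw=\pm1$ type), as in Lemma~\ref{lm: 1bcdwy1}. In the bad branch I would first try a fresh gadget with a different rotation and a new $u$ chosen to reduce to a form already shown hard (as that lemma does with $by-dw=0$). If that fails, I would interpolate $(=_2)^{\otimes2}$ and conjugate, as done there.
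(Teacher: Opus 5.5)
There is a genuine gap, and it sits where you placed the ``main obstacle'': Step~3 contains no argument, and the gadgets of Steps~1--2 cannot supply one.

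The fact you are not using is that inner and outer blocks compose multiplicatively. For a connection via $N$ we have $M_{\rm In}(f_1Nf_2)=M_{\rm In}(f_1)\left[\begin{smallmatrix}0&1\\1&0\end{smallmatrix}\right]M_{\rm In}(f_2)$, and likewise for $M_{\rm Out}$. A modification by $(0,1,u,0)^T$ multiplies both $\det M_{\rm In}$ and $\det M_{\rm Out}$ by $u$. Hence a chain of $k$ copies of rotations $f^{\rho_i}$ is in $\mathscr{M}$ iff $\prod_i\det M_{\rm In}(f^{\rho_i})=\prod_i\det M_{\rm Out}(f^{\rho_i})$. With $a=1$, rotations by $0,\pi$ contribute the pair $(cz-dw,\,1-by)$, and rotations by $\frac{\pi}{2},\frac{3\pi}{2}$ contribute $(by-dw,\,1-cz)$.

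All of your gadgets use two copies of $f$. At best they yield squared identities such as $(1-by)^2=(cz-dw)^2$, which is exactly the sign split you already ran into. Worse, take $dw=1$ and $by=cz\neq1$, e.g.\ $b=2$, $y=1$, $c=1$, $z=2$, $d=w=1$. All hypotheses hold and $f\notin\mathscr{M}$. Yet every rotation has $\det M_{\rm In}=-\det M_{\rm Out}\neq 0$. So every chain gadget with an even number of copies, for any rotations and any $u$, is itself a matchgate, and the Case~II lemmas give no hardness. Hence ``a fresh gadget with a different rotation'' cannot close the bad branch. The $(=_2)^{\otimes 2}$ trick of Lemma~\ref{lm: 1bcdwy1} relied on the special values available there ($z=0$, $by=1$, $dw=2$), which you do not have here.

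The missing idea is to use an odd number of copies. The paper first forms $f_1=f\,N\,f^{\pi}_{t_1}$, choosing $t_1$ to avoid finitely many values so that $c_1d_1w_1\neq0$ and $dwc_1z_1-czd_1w_1\neq0$. It then forms $f_2=f_1\,N\,f_{t_2}$ with $t_2=-cd_1/(wc_1)$, which kills exactly one inner entry. By Lemma~\ref{lm: three nonzeros in inner matrix} the problem is hard unless $f_2\in\mathscr{M}$, i.e.\ $\det M_{\rm In}(f)^3=\det M_{\rm Out}(f)^3$. A $7$-copy gadget $f_2\,N\,(f_1\,N\,f_{1t_3})$ gives the seventh-power identity, and coprimality of $3$ and $7$ forces $\det M_{\rm In}(f)=\det M_{\rm Out}(f)$.

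Making exactly one inner entry vanish matters for a second reason. Your route through the six-vertex model at $u=-1$ leaves no free parameter, so ``avoided generically'' is not available. For example, with $c=d$, $z=-w$ and $by-1=-2cw$, we have $f\notin\mathscr{M}$. The derived six-vertex signature then has $d'=w'=0$ and $(b'y')^2=(c'z')^2$, so it is tractable by condition~4 of Theorem~\ref{thm:planar six-vertex}. By contrast, Lemma~\ref{lm: three nonzeros in inner matrix} leaves $\mathscr{M}$ as the only tractable class.
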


\begin{proof}
By normalizing $a=1$, $M(f)=\left[\begin{smallmatrix}
1 & 0 & 0 & b\\
0 & c & d & 0\\
0 & w & z & 0\\
y & 0 & 0 & 1\\
\end{smallmatrix}\right]$.
Let $\mathcal{B}=\{g_1, g_2, g_3\}$ be a set of three binary signatures $g_i=(0, 1, t_i, 0)^T$, for some $t_i \in \mathbb{C}$ to be determined. 
By Lemma \ref{lm: get any binary from a binary not fifth root}, we have $\plholant{\neq_2}{\{f\}\cup \mathcal{B}}\leqslant \plholant{\neq_2}{f, g}.$  We will show that $\plholant{\neq_2}{\{f\}\cup \mathcal{B}}$ is \#P-hard and it follows that $\plholant{\neq_2}{f, g}$ is \#P-hard, unless $f\in \M$.

Rotate $f$ by $\pi$ we get the signature $f^{\pi}$ with the signature matrix 
$M(f^{\pi})=\left[ \begin{smallmatrix}
1 & 0 & 0 & y \\
0 & z & d & 0\\
0 & w & c & 0\\
b & 0 & 0 & 1\\
\end{smallmatrix} \right]$.
By doing a binary modification to the variable $x_1$ of $f^{\pi}$ using $g_1$,
we get the signature $f^{\pi}_{t_1}$ with the signature matrix
$M(f^\pi_{t_1})=\left[ \begin{smallmatrix}
1 & 0 & 0 & y \\
0 & z & d & 0\\
0 & t_1w & t_1c & 0\\
t_1b & 0 & 0 & t_1\\
\end{smallmatrix} \right]$. 
Note that $\det M_{\text{In}}(f^{\pi}_{t_1})=t_1\det M_{\text{In}}(f)$, and $\det M_{\text{Out}}(f^{\pi}_{t_1})=t_1\det M_{\text{Out}}(f)$.
Connecting variables $x_4, x_3$ of $f$ with variables $x_1$, $x_2$ of $f^\pi_{t_1}$ both using $(\neq_2)$, we get a signature $f_1$ with the signature matrix
$$M(f_1)=\left[ \begin{matrix}
a_1 & 0 & 0 & b_1 \\
0 & c_1 & d_1 & 0\\
0 & w_1 & z_1 & 0\\
y_1 & 0 & 0 & x_1\\
\end{matrix} \right]=M(f)NM(f^\pi_{t_1})=
\left[ \begin{matrix}
bt_1+b & 0 & 0 & t_1+by \\
0 & cwt_1+dz & c^2t_1+d^2 & 0\\
0 & w^2t_1+z^2 & cwt_1+dz & 0\\
byt_1+1 & 0 & 0 & yt_1+y\\
\end{matrix} \right].$$
We first show there is a $t_1\neq 0$ such that $c_1d_1w_1\neq0$ and $dwc_1z_1-czd_1w_1\neq 0$.
Consider the quadratic polynomial
$$
p(t)=dw(cwt+dz)(cwt+dz)-cz(c^2t+d_2)(w_2t+z_2).
$$
We have $p(t_1)=dwc_1z_1-czd_1w_1$.
Notice that the coefficient of the quadratic term in $p(t)$ is $c^2w^2(dw -cz)\neq 0$, since $cw\neq 0$ and
$\left[\begin{smallmatrix}
 c & d \\
 w & z \\
\end{smallmatrix}\right]$ has full rank.
That is, $p(t)$ has degree 2, and hence it has at most two roots.
Also, we have the following implications by the definitions of $c_1,d_1,w_1:c_1=0\Rightarrow t_1=-\frac{dz}{cw}; d_1=0\Rightarrow t_1=-\frac{d^2}{c^2};w_1=0\Rightarrow t_1=-\frac{z^2}{w^2}.$
Therefore, we can choose a $t_1\notin\{0,-\frac{dz}{cw},-\frac{d^2}{c^2},-\frac{z^2}{w^2}\}$  and $t_1$ is not a root of $p(t)$.
Then we have $t_1\neq 0,c_1d_1w_1\neq 0$ and $dwc_1z_1-czd_1w_1\neq 0$.

By doing a binary modification to the variable $x_1$ of $f$ using $g_2$,
we get the signature $f_{t_2}$ with the signature matrix
$M(f_{t_2})=\left[ 
\begin{smallmatrix}
1 & 0 & 0 & b \\
0 & c & d & 0\\
0 & t_2w & t_2z & 0\\
t_2y & 0 & 0 & t_2\\
\end{smallmatrix} \right]$. 
Note that $\det M_{\text{In}}(f_{t_2})=t_2\det M_{\text{In}}(f)$, and
$\det M_{\text{Out}}(f_{t_2})=t_2\det M_{\text{Out}}(f).$ 
Connecting variables $x_4, x_3$ of $f_1$ with variables $x_1$, $x_2$ of $f_{t_2}$ both using $(\neq_2)$, we get a signature $f_2$ with  the signature matrix
\begin{equation*}
\begin{aligned}
M(f_2)
=&\left[ \begin{matrix}
a_2 & 0 & 0 & b_2 \\
0 & c_2 & d_2 & 0\\
0 & w_2 & z_2 & 0\\
y_2 & 0 & 0 & x_2\\
\end{matrix} \right]
=M(f_1)NM(f_{t_2})\\
=&\left[ \begin{matrix}
ya_1t_2+b_1 & 0 & 0 &  a_1t_2+bb_1\\
0 & wc_1t_2+cd_1 & zc_1t_2+dd_1  & 0\\
0 & ww_1t_2+cz_1 & zw_1t_2+dz_1 & 0\\
yy_1t_2+x_1  & 0 & 0 & y_1t_2+bx_1\\
\end{matrix} \right].
\end{aligned}
\end{equation*}
Since $wc_1\neq 0$ and $cd_1\neq 0$, we can let $t_2=-\frac{cd_1}{wc_1}$ and $t_2\neq 0$.
Then $c_2=wc_1t_2+cd_1=0$.
Since $dwc_1z_1-czd_1w_1\neq 0$, we have $z_2=zw_1t_2+dz_1=\frac{1}{wc_1}(dwc_1z_1-czd_1w_1)\neq 0$.
Note that 
\begin{equation*}
\begin{aligned}
\det M_{\text{In}}(f_2)
&=\det M_{\text{In}}(f_1)\cdot (-1)\cdot \det M_{\text{In}}(f_{t_2})\\
&=\det M_{\text{In}}(f)\cdot(-1)\cdot \det M_{\text{In}}(f^\pi_{t_1})\cdot(-1)\cdot \det M_{\text{In}}(f_{t_2})\\
&=t_1t_2\det M_{\text{In}}(f)^3\\
&\neq 0.
\end{aligned}
\end{equation*}
Also, we have $\det M_{\text{In}}(f_2)=c_2z_2-d_2w_2=-d_2w_2$.
So $d_2w_2\neq 0$.
Therefore, $M(f_2)$ has the form $
\left[ \begin{smallmatrix}
a_2 & 0 & 0 & b_2 \\
0 & 0 & d_2 & 0\\
0 & w_2 & z_2 & 0\\
y_2 & 0 & 0 & x_2\\
\end{smallmatrix}\right],
$
where $d_2w_2z_2\neq 0$.
By Lemma~\ref{lm: three nonzeros in inner matrix}, $\plholant{\neq_2}{f}$ is \numP-hard unless $f_2\in \M$, i.e., $\det M_{\text{In}}(f_2)=\det M_{\text{Out}}(f_2)$.
Since $\det M_{\text{In}}(f_2)=t_1t_2\det M_{\text{In}}(f)^3$ and similarly $\det M_{\text{Out}}(f_2)=t_1t_2\det M_{\text{Out}}(f)^3$, where $t_1t_2\neq 0$, we have $\det M_{\text{In}}(f)^3=\det M_{\text{Out}}(f)^3$.

In the following, we construct a signature $f_4$ by connecting 7 copies of $f$, possibly with some rotation and binary modification, such that $\plholant{\neq_2}{f_4}$ is \numP-hard unless $\det M_{\text{In}}(f_4)=\det M_{\text{Out}}(f_4)$, which would imply $\det M_{\text{In}}(f)^7=\det M_{\text{Out}}(f)^7$.
Combined with $\det M_{\text{In}}(f)^3=\det M_{\text{Out}}(f)^3$, we would have $\det M_{\text{In}}(f)=\det M_{\text{Out}}(f)$, then $f\in \M$ by Lemma~\ref{lm:is matchgate}.

By doing a binary modification to the variable $x_1$ of $f_1$ using $g_3$,
we get the signature $f_{1t_3}$ with the signature matrix
$M(f_{1t_3})=\left[ \begin{smallmatrix}
a_1 & 0 & 0 & b_1 \\
0 & c_1 & d_1 & 0\\
0 & t_3w_1 & t_3z_1 & 0\\
t_3b_1 & 0 & 0 & t_3x_1\\
\end{smallmatrix} \right]$. 
Note that $\det M_{\text{In}}(f_{1t_3})=t_3\det M_{\text{In}}(f_1)$ and $\det M_{\text{Out}}(f_{1t_3})=t_3\det M_{\text{Out}}(f_1)$.
Connecting variables $x_4, x_3$ of $f_1$ with variables $x_1$, $x_2$ of $f_{1t_3}$ both using $(\neq_2)$, we get a signature $f_3$ with  the signature matrix

\begin{equation*}
\begin{aligned}
M(f_3)
=&\left[ \begin{matrix}
a_3 & 0 & 0 & b_3 \\
0 & c_3 & d_3 & 0\\
0 & w_3 & z_3 & 0\\
y_3 & 0 & 0 & x_3\\
\end{matrix} \right]
=M(f_1)NM(f_{1t_3})\\
=&
\left[ \begin{matrix}
a_1y_1t_3+a_1b_1 & 0 & 0 & a_1x_1t_3+b_1^2  \\
0 & c_1(w_1t_3+d_1) & c_1z_1t_3+d_1^2 & 0\\
0 & w_1^2t_3+c_1z_1 & z_1(w_1t_3+d_1) & 0\\
y_1^2t_3+a_1x_1 & 0 & 0 & x_1y_1t_3+b_1x_1 \\
\end{matrix} \right].
\end{aligned}
\end{equation*}
Since $d_1w_1\neq 0$, we can let $t_3=-\frac{d_1}{w_1}$ and $t_3\neq 0$.
Then $c_3=z_3=0$.
Notice that 
\begin{equation*}
\begin{aligned}
\det M_{\text{In}}(f_3)&=\det M_{\text{In}}(f_1)\cdot(-1)\cdot \det M_{\text{In}}(f_{1t_3})\\
&=-\det M_{\text{In}}(f_1)\cdot t_3\det M_{\text{In}}(f_{1})\\
&=-t_3 
\left[ 
\det M_{\text{In}}(f)\cdot(-1)\cdot\det M_{\text{In}}(f^\pi_{t_1})
\right]^2\\
&=-t_3t_1^2\det M_{\text{In}}(f)^4  \\ 
&\neq 0.\\
\end{aligned}
\end{equation*}
Also, we have $\det M_{\text{In}}(f_3)=c_3z_3-d_3w_3=-d_3w_3$.
Therefore, $M(f_3)$ has the form
$\left[ \begin{smallmatrix}
a_3 & 0 & 0 & b_3 \\
0 & 0 & d_3 & 0\\
0 & w_3 & 0 & 0\\
y_3 & 0 & 0 & x_3\\
\end{smallmatrix} \right],
$
where $d_3w_3\neq 0$.

Finally, connecting variables $x_4, x_3$ of $f_2$ with variables $x_1$, $x_2$ of $f_{3}$ both using $(\neq_2)$, we get a signature $f_4$ with  the signature matrix
$$M(f_4)=\left[ \begin{matrix}
a_4 & 0 & 0 & b_4 \\
0 & c_4 & d_4 & 0\\
0 & w_4 & z_4 & 0\\
y_4 & 0 & 0 & x_4\\
\end{matrix} \right]=M(f_2)NM(f_3)=
\left[ \begin{matrix}
a_3b_2 + a_2 y_3 & 0 & 0 & b_2 b_3 + a_2 x_3 \\
0 & 0 & d_2 d_3 & 0\\
0 & w_2 w_3 & z_2 d_3 & 0\\
a_3 x_2 + y_2 y_3 & 0 & 0 & b_3 x_2 + x_3 y_2\\
\end{matrix} \right],$$
where $d_2d_3\neq 0,w_2w_3\neq 0,z_2d_3\neq 0$.
By Lemma~\ref{lm: three nonzeros in inner matrix}, $\plholant{\neq_2}{f_4}$ is \numP-hard unless $f_4\in \M$, i.e. $\det M_{\text{In}}(f_4)=\det M_{\text{Out}}(f_4)$.
Notice that 
\begin{equation*}
\begin{aligned}
\det M_{\text{In}}(f_4)&=\det M_{\text{In}}(f_2)\cdot(-1)\cdot \det M_{\text{In}}(f_{3})\\
&=t_1t_2\det M_{\text{In}}(f)^3\cdot t_3t_1^2\det M_{\text{In}}(f)^4\\
&=t_3t_2t_1^3\det M_{\text{In}}(f)^7.
\end{aligned}
\end{equation*}
Similarly, $\det M_{\text{Out}}(f_4)=t_3t_2t_1^3\det M_{\text{Out}}(f)^7$, where $t_3t_2t_1\neq 0$.
Thus, we have $\det M_{\text{In}}(f)^7=\det M_{\text{Out}}(f)^7$.

We conclude that $\plholant{\neq_2}{f,g}$ is \numP-hard, unless $\det M_{\text{In}}(f)^3=\det M_{\text{Out}}(f)^3\neq 0$ and 
$\det M_{\text{In}}(f)^7=\det M_{\text{Out}}(f)^7\neq 0$.
This implies $\det M_{\text{In}}(f)^4=\det M_{\text{Out}}(f)^4$, and further $\det M_{\text{In}}(f)=\det M_{\text{Out}}(f)$, i.e., $f\in \M$.
\end{proof}

\begin{lemma}\label{thm: with (0,1,0,0)}
Let $f$ be a 4-ary signature with the signature matrix
$M(f)=\left[\begin{smallmatrix}
a & 0 & 0 & b\\
0 & c & d & 0\\
0 & w & z & 0\\
y & 0 & 0 & a\\
\end{smallmatrix}\right]$ where $abcdyzw\neq 0$ and
$\left[\begin{smallmatrix}
 c & d \\
 w & z \\
\end{smallmatrix}\right]$ has full rank, then either $\plholant{\neq_2}{f, (0, 1, 0, 0)^T}$ is \numP-hard, or $f\in \M$, or $f$ is  $\mathscr{A}$-transformable.
\end{lemma}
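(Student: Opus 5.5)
Normalize $a=1$. The unary-like binary $(0,1,0,0)^T$ acts by binary modification as a projection: applied to variable $x_i$ it kills every entry with $x_i=1$. The plan is to mimic Lemma~\ref{lm: with-any-binary}. We first build gadgets from $f$ (rotations and $(\neq_2)^{\otimes2}$ connections) whose matrices have a usable structure. Then we use $(0,1,0,0)^T$ to zero out selected entries, landing in Case I or Case II, where Theorem~\ref{thm: at least one zero} and Lemma~\ref{lm: three nonzeros in inner matrix} give the full picture.

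The first step is to get some nontrivial weighted disequality. Loop $x_3,x_4$ of $f$ (respectively $x_1,x_2$) using $\neq_2$; this gives $(c+d)(0,1,\frac{w+z}{c+d},0)^T$-type binaries, and similarly from rotations. If any of these has ratio $t$ with $t^i$ distinct for $1\le i\le 5$, we are done by Lemma~\ref{lm: with-any-binary}, which gives hardness or $f\in\M$.

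The second step treats the remaining case, where every binary obtained this way has a ratio that is $0$, $\infty$, or a root of unity of small order. Here we use $(0,1,0,0)^T$ directly. Connect $f$ with $f^\pi$ via $N$, as in the proof of Lemma~\ref{lm: with-any-binary}, to get $f_1$. Then apply binary modification by $(0,1,0,0)^T$ on one variable of $f_1$ or of $f$, and connect the result to another copy. The goal is a signature whose inner matrix has exactly three nonzero entries, i.e.\ one with $c'=0$ and $d'w'z'\ne0$. We argue, as before, that $\det M_{\rm In}$ and $\det M_{\rm Out}$ scale multiplicatively, except that modification by $(0,1,0,0)^T$ zeroes a determinant. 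So the bookkeeping must instead track the surviving entries explicitly. Lemma~\ref{lm: three nonzeros in inner matrix} then forces a polynomial relation among $b,c,d,w,y,z$, and comparing two constructions should force $\det M_{\rm In}(f)=\det M_{\rm Out}(f)$, hence $f\in\M$.

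The third step handles the degenerate outcomes, where the constructed signature collapses into a Case I form such as $\mathrm{diag}$-type $[1,c',z',1]$ or $[1,0,0,0;0,0,d';\dots]$. For these we invoke Lemmas~\ref{lm: 1cz1} and~\ref{lm: 1dw1}, which leave exactly the affine or product exceptions. We then check that these exceptions pull back to $f$ being $\mathscr{A}$-transformable; this is the source of that clause in the statement. Concretely, we expect the surviving constraints to be that $by,cz,dw$ are fourth roots of unity times each other, with appropriate phases, matching the $\mathscr{A}$-transformability criterion.

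The main obstacle is the second step: with only the projection $(0,1,0,0)^T$, we lose the free parameter $t$ that made the genericity arguments (avoiding roots of $p(t)$) work. So we must pick explicit gadget sequences and handle, case by case, the coincidences where a needed entry vanishes. We would try first to mirror the $f_1,f_2,f_3,f_4$ construction with all $t_i$ replaced by combinations realizable from $(0,1,0,0)^T$ and rotations, and fall back to a case split on which of $c\pm d$, $z\pm w$ vanish.
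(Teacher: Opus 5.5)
Your proposal has a genuine gap. The second step carries the whole lemma, and it is a plan rather than an argument; the concrete mechanism you suggest cannot reach its target.

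\textbf{Why the suggested mechanism fails.} Lemma~\ref{lm: three nonzeros in inner matrix} needs an inner matrix with exactly three nonzero entries, and such a matrix is automatically nonsingular. A signature modified by $(0,1,0,0)^T$ on one input has that input pinned. In every rotation, that input lies in either the row pair or the column pair, so a whole row or column of the inner block vanishes (e.g.\ on $x_1$, rows $10,11$ die and $(w,z)\mapsto(0,0)$). Since $M_{\rm In}(XNY)=M_{\rm In}(X)\left[\begin{smallmatrix}0&1\\1&0\end{smallmatrix}\right]M_{\rm In}(Y)$, every signature obtained by mirroring the $f_1,\dots,f_4$ chain of Lemma~\ref{lm: with-any-binary} with some $t_i=0$ has a singular inner block. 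The bookkeeping there, giving $\det M_{\rm In}(f)^3=\det M_{\rm Out}(f)^3$ and $\det M_{\rm In}(f)^7=\det M_{\rm Out}(f)^7$, needed nonzero, freely chosen $t_i$, and you have no substitute for it.

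\textbf{The exceptional family must be found explicitly.} No comparison of constructions can force $\det M_{\rm In}(f)=\det M_{\rm Out}(f)$ outside an explicitly identified family. Take $M(f)=\left[\begin{smallmatrix}\mathfrak{i}&0&0&1\\0&1&1&0\\0&-1&1&0\\1&0&0&\mathfrak{i}\end{smallmatrix}\right]$. It satisfies all hypotheses and is not in $\M$, since $\det M_{\rm Out}=-2\neq 2=\det M_{\rm In}$. Yet it becomes affine under $\mathrm{diag}(1,\gamma)$ with $\gamma^2=\mathfrak{i}$, so the problem with $(0,1,0,0)^T$ is tractable. For this $f$ your first step only yields $(0,2,0,0)^T$ and $(0,0,2,0)^T$. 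Your third step merely guesses this family. Moreover, a gadget landing in an affine or product exception of Lemma~\ref{lm: 1cz1} or~\ref{lm: 1dw1} does not by itself show that $f$ is $\mathscr{A}$-transformable.

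\textbf{The missing idea.} Use $(0,1,0,0)^T=(1,0)^T\otimes(0,1)^T$ as a pin on two adjacent inputs, to extract the \emph{entries} of $f$ as binary weights. The paper's route is as follows.
\begin{enumerate}
\item Normalize $d=1$. Pinning gives $(0,1,u,0)^T$ for each $u\in\{b,y,c,z\}$. By Lemma~\ref{lm: with-any-binary}, unless $f\in\M$ or the problem is hard, every such $u$ lies in $\{\pm1,\pm\mathfrak{i},\omega,\omega^2\}$.
\item Modifying $x_1,x_3$ of $f$ by $(0,1,z^{-1},0)^T$ and $(0,1,c^{-1},0)^T$ gives $f_1$ with inner matrix $\left[\begin{smallmatrix}1&1\\ w/(cz)&1\end{smallmatrix}\right]$. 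A loop and a pin on $f_1$ force $w/(cz)=-1$. Combining with $(0,1,-1,0)^T$ eliminates $\omega$ and $\omega^2$. Now $b,y,c,z,w$ are all powers of $\mathfrak{i}$, and a further loop gives a parity condition on the exponents.
\item The signature $f_2=f_1Nf_1^{\pi}$ has inner matrix $\left[\begin{smallmatrix}0&2\\2&0\end{smallmatrix}\right]$. Pinning $f_2^{\pi/2}$ gives either $a^2+by=0$ or $|a^2+by|=2$.
\item If $|a^2+by|=2$ and $a^2+by\neq0$: Lemma~\ref{lm:1cdwz1} applied to $f_2^{\pi/2}$ gives $\det M_{\rm In}(f)^2=\det M_{\rm Out}(f)^2$. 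Hence $f\in\M$; the other sign contradicts the case assumption.
\item If $a^2+by=0$: an explicit $\mathrm{diag}(1,\gamma)$ transformation makes $f$ affine.
\end{enumerate}
This reduction to a finite, explicitly parametrized family is what your plan lacks.
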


The following proof is similar to that of Theorem 8.2 in~\cite{CaiF17}, with the key difference that we cannot perform arbitrary pinning using $(1, 0)^T \otimes (0, 1)^T$, as we are restricted to planar gadget constructions. Additionally, matchgate signatures are embedded within the proof.

\begin{proof}
By a normalization in $f$, we may assume that $d=1$.

Note that we have $(0, 1, 0, 0)^T=(1, 0)^T\otimes (0, 1)^T$. 
Thus we can set $x_i=0$ and $x_j=1$ for any two adjacent variables $x_i, x_j$ of $f$.
By setting $x_1=0$ and $x_2=1$, $x_3=0$ and $x_4=1$, $x_1=0$ and $x_4=1$, $x_3=0$ and $x_2=1$ respectively using $(0, 1, 0, 0)^T$,
we get the binary signatures
$(0, 1, c, 0)^T$,
$(0, 1, z, 0)^T$,
$(0, 1, b, 0)^T$, and
$(0, 1, y, 0)^T$,
i.e., for any $u\in\{c,z,b,y\}$, we can get $(0, 1, u, 0)^T$.
Thus if there exists $u\in\{c,z,b, y\}$ such that $u^i$ are distinct for $1\leq i\leq 5$, then
$\plholant{\neq_2}{f, (0, 1, 0, 0)^T}$ is \numP-hard by Lemma~\ref{lm: with-any-binary}.
So we may assume  all of $\{c,z,b, y\}$ are in $\{\pm 1, \pm \mathfrak{i}, \omega, \omega^2\}$,
where $\omega = e^{2 \pi {\mathfrak{i}}/3}$.

By Lemma~\ref{lm:inverse binary}, we have $(0, 1, c^{-1}, 0)^T$
and $(0, 1, z^{-1}, 0)^T$.
By binary modifications using $(0, 1, z^{-1}, 0)^T$
and $(0, 1, c^{-1}, 0)^T$ to the variables
$x_1, x_3$ of $f$ respectively, we get a signature $f_1$ with the signature matrix
$M(f_1)=\left[\begin{smallmatrix}
a & 0 & 0 & \frac{b}{c}\\
0 & 1 & 1 & 0\\
0 & \frac{w}{cz} & 1 & 0\\
\frac{y}{z} & 0 & 0 & \frac{a}{cz}\\
\end{smallmatrix}\right]$.
Since 
$\left[\begin{smallmatrix}
 c & 1\\
 w & z \\
\end{smallmatrix}\right]$ has full rank, we have $\frac{w}{cz}\neq 1$.
By doing a loop to variables $x_1,x_2$ of $f_1$ using $\neq_2$, we get the binary signature 
$h=M(f)(\neq_2)=(0, 2, 1+\frac{w}{cz}, 0)^T$.
We may assume that $|1+\frac{w}{cz}|=2$ or $\frac{w}{cz}=-1$.
Otherwise, $\plholant{\neq_2}{f,(0,2,1+\frac{w}{cz},0)^T}$ is \numP-hard by Lemma~\ref{lm: with-any-binary}. 
Thus $\plholant{\neq_2}{f}$ is \numP-hard.

We claim that $\frac{w}{cz}=-1$ or we are done.
By pinning $x_4=0$ and $x_3=1$ using $(0, 1, 0, 0)^T$, we get the binary signature $(0, 1, \frac{w}{cz}, 0)^T$.
If $(\frac{w}{cz})^i$ are distinct for $1\leq i\leq 5$, then
the problem
$\plholant{\neq_2}{f, (0, 1, \frac{w}{cz}, 0)^T)}$ is \numP-hard by Lemma~\ref{lm: with-any-binary}.
Thus $\plholant{\neq_2}{f, (0, 1, 0, 0)^T}$ is \numP-hard.
So we may assume that $\frac{w}{cz}\in\{-1, \pm\mathfrak{i}, \omega, \omega^2\}$.
If $\frac{w}{cz}=\pm\mathfrak{i},\omega,\omega^2$, then $|1+\frac{w}{cz}|\neq 2$.
So we may assume that $(0, 1, \frac{w}{cz}, 0)^T=(0, 1, -1, 0)^T$.

Suppose there exists  $u\in\{c,z,b, y\}$ such that $u=\omega$ or $\omega^2$.
By linking $(0, 1, u, 0)^T$ and $(0, 1, -1, 0)^T$ using $\neq_2$, we get
the binary signature $(0, 1, -u, 0)^T$.
Note that $(-u)^i$ are distinct for $1\leq i\leq 6$, therefore
$\operatorname{Holant}(\neq_2|f, (0, 1, -u, 0)^T)$ is \numP-hard by Lemma~\ref{lm: with-any-binary}.
It follows that  $\plholant{\neq_2}{f, (0, 1, 0, 0)^T}$ is \numP-hard.

Now we can assume
$\{c,z,b, y\}\subseteq\{1, -1, \mathfrak{i}, -\mathfrak{i}\}$.
Then by $\frac{w}{cz}=-1$, we have $w^4=1$.
Thus $\{c,z,b,y,w\}\subseteq\{1, -1, \mathfrak{i}, -\mathfrak{i}\}$.
  There exist
  $j, k,  \ell, m, n\in\{0, 1, 2, 3\}$, such that
\[b=\mathfrak{i}^{j},~~~ y=\mathfrak{i}^{k},~~~ w=\mathfrak{i}^{\ell}, ~~~
  c=\mathfrak{i}^{m}, ~~~ z=\mathfrak{i}^{n}.\]
The signature matrices of $f$ and $f_1$ are respectively
$$M(f)=\left[\begin{smallmatrix}
a & 0 & 0 & \mathfrak{i}^{j}\\
0 &  \mathfrak{i}^{m} &1 & 0\\
0 & \mathfrak{i}^{\ell} & \mathfrak{i}^{n} & 0\\
\mathfrak{i}^{k}& 0 & 0 & a
\end{smallmatrix}\right],~~~~\mbox{and}~~~~
M(f_1)=\left[\begin{smallmatrix}
a & 0 & 0 & \mathfrak{i}^{j-m}\\
0 & 1 & 1 & 0\\
0 & -1 & 1 & 0\\
\mathfrak{i}^{k-n}& 0 & 0 & a\mathfrak{i}^{-m-n}
\end{smallmatrix}\right].$$
We have $\ell-m-n\equiv 2 \pmod 4$ by $\frac{w}{cz}=\mathfrak{i}^{\ell-m-n}=-1$.

Note that 
$M(f^{\frac{\pi}{2}})=
\left[\begin{smallmatrix}
a & 0 & 0 & \mathfrak{i}^{n}\\
0 & \mathfrak{i}^{j} & \mathfrak{i}^{\ell} & 0\\
0 & 1 & \mathfrak{i}^{k} & 0\\
\mathfrak{i}^{m}& 0 & 0 & a
\end{smallmatrix}\right]$.
By doing a loop to variables $x_3,x_4$ of $f^{\frac{\pi}{2}}$ using $(0, 1, \mathfrak{i}^{-k}, 0)^T$, we get the binary signature
\[
M(f^{\frac{\pi}{2}})N(0, \mathfrak{i}^{-k}, 1, 0)^T
=(0, \mathfrak{i}^{j}(1+\mathfrak{i}^{\ell-j-k}),2 , 0)^T.
\]
If $\ell-j-k\equiv 1 \pmod 2$, then $|\mathfrak{i}^{j}(1+\mathfrak{i}^{\ell-j-k})|
= \sqrt{2}$ and
Holant$(\neq_2|f, (0, 2, \mathfrak{i}^{j}(1+\mathfrak{i}^{\ell-j-k}), 0)^T)$ is $\#$P-hard by Lemma~\ref{lm: with-any-binary}.
Thus $\plholant{\neq_2}{f,(0,1,0,0)^T}$ is \numP-hard.
Hence we may assume $\ell-j-k\equiv 0 \pmod 2$.
By $\ell-m-n\equiv 2 \pmod 4$, we have
\[ j+k + m+n \equiv 0 \pmod 2.\]

By connecting two copies of $f_1$, with the second copy rotated by $\pi$, we get a signature $f_2$ with the signature matrix
\[M(f_2)=M(f_1)NM(f_1^{\pi})=
\left[\begin{smallmatrix}
2 \mathfrak{i}^{j-m} a& 0 & 0 & \mathfrak{i}^{-m-n} (a^2+\mathfrak{i}^{j+k})\\
0 & 0 & 2 & 0\\
0 & 2 & 0 & 0\\
\mathfrak{i}^{-m-n}(a^2+\mathfrak{i}^{j+k}) & 0 & 0 & 2 \mathfrak{i}^{k-m-2n}a
\end{smallmatrix}\right].
\]
Notice that 
\begin{equation}\label{eq: with (0,1,0,0)}
\begin{aligned}
    &\det M_{\text{In}}(f_2)=-\det M_{\text{In}}(f_1)^2=-\frac{1}{c^2z^2}\det M_{\text{In}}(f)^2 \\
    &\det M_{\text{Out}}(f_2)=-\det M_{\text{Out}}(f_1)^2=-\frac{1}{c^2z^2}\det M_{\text{Out}}(f)^2
\end{aligned}
\end{equation}
Also, note that 
\[M(f^{\frac{\pi}{2}}_2)=M(f_1)NM(f_1^{\pi})=
\left[\begin{smallmatrix}
2 \mathfrak{i}^{j-m} a& 0 & 0 & 0\\
0 & \mathfrak{i}^{-m-n} (a^2+\mathfrak{i}^{j+k}) & 2 & 0\\
0 & 2 & \mathfrak{i}^{-m-n}(a^2+\mathfrak{i}^{j+k}) & 0\\
0 & 0 & 0 & 2 \mathfrak{i}^{k-m-2n}a
\end{smallmatrix}\right].
\]
By setting $x_3=0$ and $x_4=1$ using $(0,1,0,0)^T$, we get a binary signature $(0,2,\mathfrak{i}^{-m-n}(a^2+\mathfrak{i}^{j+k}),0)^T$.
Then either $a^2+\mathfrak{i}^{j+k}=0$ or $|a^2+\mathfrak{i}^{j+k}|=2$, otherwise $\plholant{\neq_2}{f,(0,2,\mathfrak{i}^{-m-n}(a^2+\mathfrak{i}^{j+k}),0)^T}$ is \numP-hard by Lemma~\ref{lm: with-any-binary}, and thus $\plholant{\neq_2}{f}$ is \numP-hard.

We first assume $a^2+\mathfrak{i}^{j+k}\neq 0$.
By Lemma~\ref{lm:1cdwz1}, either $f_2\in \M$, or $\plholant{\neq_2}{f_2}$ is \numP-hard, and thus $\plholant{\neq_2}{f}$ is \numP-hard.
If $f_2\in \M$, then $\det M_{\text{In}}(f)^2=\det M_{\text{Out}}(f)^2$ by Lemma~\ref{lm:is matchgate} and Equation~\ref{eq: with (0,1,0,0)}.
If $\det M_{\text{In}}(f)=\det M_{\text{Out}}(f)$, then $f\in \M$.
Otherwise, $\det M_{\text{In}}(f)=-\det M_{\text{Out}}(f)$.
Then $a^2=\mathfrak{i}^{j+k}+\mathfrak{i}^\ell-\mathfrak{i}^{m+n}$.
As $\ell-m-n\equiv 2\pmod 4$, $a^2=\mathfrak{i}^{j+k}+2\mathfrak{i}^\ell$. 
If $\ell-j-k\equiv 0 \pmod 4$, then $a^2=3\mathfrak{i}^{j+k}$, contradicting $|a^2+\mathfrak{i}^{j+k}|=2$.
If $\ell-j-k\equiv 2 \pmod4$, then $a^2=-\mathfrak{i}^{j+k}$, contradicting the assumption that $a^2+\mathfrak{i}^{j+k}\neq 0$.

Now we assume $a^2+\mathfrak{i}^{j+k}=0$.
Then $
M(f)=\left[\begin{smallmatrix}
\mathfrak{i}^{\frac{j+k}{2}+\epsilon} & 0 & 0 & \mathfrak{i}^{j}\\
0 &  \mathfrak{i}^{m} &1 & 0\\
0 & \mathfrak{i}^{\ell} & \mathfrak{i}^{n} & 0\\
\mathfrak{i}^{k}& 0 & 0 & \mathfrak{i}^{\frac{j+k}{2}+\epsilon}
\end{smallmatrix}\right]
$, where $\epsilon=\pm 1$.
Let $r=j+m$.
We apply a holographic transformation
defined by
$\left[\begin{smallmatrix}
 1 & 0\\
 0 & \gamma
\end{smallmatrix}\right]$, where $\gamma^2=\mathfrak{i}^{\frac{j+k}{2}+r+\epsilon}$, then we get
$
\plholant{\neq_2}{f}\equiv_{T}\plholant{\neq_2}{\hat{f}},
$
where $\hat{f}=\left[\begin{smallmatrix}
 1 & 0\\
 0 & \gamma
\end{smallmatrix}\right]^{\otimes 4}f$, 
and its signature matrix is
$M(\hat{f})=\mathfrak{i}^{\frac{j+k}{2}+\epsilon}\left[\begin{smallmatrix}
1 & 0 & 0 & \mathfrak{i}^{j+r}\\
0 &  \mathfrak{i}^{m+r} &\mathfrak{i}^{r} & 0\\
0  & \mathfrak{i}^{\ell+r} & \mathfrak{i}^{n+r}& 0\\
\mathfrak{i}^{k+r}& 0 & 0 & -\mathfrak{i}^{j+k+2r}
\end{smallmatrix}\right]$.
By $\ell \equiv m+n+ 2 \pmod 4$, we have
\[M(\hat{f})=\mathfrak{i}^{\frac{j+k}{2}+\epsilon}\left[\begin{smallmatrix}
1 & 0 & 0 & \mathfrak{i}^{j+r}\\
0 & \mathfrak{i}^{r} & \mathfrak{i}^{m+r} & 0\\
0 & \mathfrak{i}^{n+r} & -\mathfrak{i}^{m+n+r} & 0\\
\mathfrak{i}^{k+r}& 0 & 0 & -\mathfrak{i}^{j+k+2r}
\end{smallmatrix}\right].\]
This function is an affine function; indeed
 let
\[Q(x_1, x_2, x_3)=(k-n-r)x_1x_2+(2j+2)x_1x_3+(2j)x_2x_3+(n+r)x_1+rx_2+(j+r)x_3,\]
then
 $\hat{f}(x_1, x_2, x_4, x_3)=\mathfrak{i}^{\frac{j+k}{2}+\epsilon}\cdot
\mathfrak{i}^{Q(x_1, x_2, x_3)}$ on the support
of $\hat{f}$: $x_1 + x_2 + x_3 + x_4 \equiv 0 \pmod 2$.
 Moreover, $k-n-r \equiv j + k + n + m \equiv 0 \pmod 2$,
all cross terms have even coefficients.
Thus $\hat{f}\in\mathscr{A}$.
\end{proof}

\begin{theorem}\label{thm: all nonzero entries, not three equal or opposite pairs}
    Let $f$ be a 4-ary signature with the signature matrix
    $M(f)=\left[\begin{smallmatrix}
    a & 0 & 0 & b\\
    0 & c & d & 0\\
    0 & w & z & 0\\
    y & 0 & 0 & a\\
    \end{smallmatrix}\right]$
    with $abcdyzw\neq 0$.
    If it is not the case where $b=\epsilon y,c=\epsilon z$, and $d=\epsilon w$ for some $\epsilon=\pm 1$,
    then either $f\in \M$, or $f$ is $\mathscr{A}$-transformable, or $\plholant{\neq_2}{f}$ is $\numP$-hard.
\end{theorem}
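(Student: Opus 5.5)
We first normalize $a=1$ and split on whether some rotation of $f$ has a full-rank inner matrix. If $by=cz=dw$, every rotated inner matrix is degenerate, and Lemma~\ref{lm: degenerate inner matrices} applies. It gives $f\in\M$, or \numP-hardness, or arrow reversal equality. The last outcome means $b=y$, $c=z$, $d=w$, which the hypothesis excludes. So we may assume the three products $by,cz,dw$ are not all equal. Since rotation permutes the roles of $(b,y),(c,z),(d,w)$ while keeping $(d,w)$ inner, some rotation $f^{\theta}$ has $cz\neq dw$, i.e.\ a full-rank inner matrix. Rotation preserves membership in $\M$, $\mathscr{A}$-transformability and complexity, so we replace $f$ by this rotation and assume $\det M_{\rm In}(f)\neq 0$.

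Next we produce a binary signature $(0,1,t,0)^T$ with $t\neq\pm1$ on the right. This is Lemma~\ref{lm:get one binary when not exists epsilon}, which requires the inner matrix to have full rank and $f$ not to satisfy $(c=\epsilon z)\wedge(d=\epsilon w)$ for some $\epsilon=\pm1$.

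There is a gap here. The hypothesis only excludes all three of $b=\epsilon y$, $c=\epsilon z$, $d=\epsilon w$ holding at once. If $c=\epsilon z$ and $d=\epsilon w$ but $b\neq\epsilon y$, we rotate by $\pi/2$. This moves the outer pair $(b,y)$ into the position of $(c,z)$ and keeps $(d,w)$ inner, so the new inner matrix has $(b,y)$ in the $c,z$ slots and is not of the forbidden form. We must also check that this rotated inner matrix is full rank, i.e.\ $by\neq dw$. If instead $by=dw$, we try the $3\pi/2$ rotation, or we fall back to the $\epsilon$-structure. In the remaining subcase, $by=dw$ together with $c=\epsilon z$, $d=\epsilon w$ gives $by=\epsilon dw\cdot\epsilon=dw$, hence $cz=dw$ as well, so $by=cz=dw$, which was already handled. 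The hard part is making sure that the case split over rotations and $\epsilon$ really covers everything.

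With $t\neq\pm1$ in hand, we consider the powers $t^i$.

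\textbf{Case 1: $t^i$ distinct for $1\le i\le5$.} Lemma~\ref{lm: with-any-binary} gives \numP-hardness unless $f\in\M$.

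\textbf{Case 2: otherwise.} Since $t\neq\pm1$, $t$ is a primitive third or fourth root of unity (a fifth root gives distinct $t^i$ for $i\le 4$, but $t^5=1$ coincides with nothing in range, so it also falls under Case 1). Corollary~\ref{cor: get (0,1,0,0) from a binary third or fourth root} then realizes $(0,1,0,0)^T$. Lemma~\ref{thm: with (0,1,0,0)} now gives \numP-hardness, or $f\in\M$, or $f$ is $\mathscr{A}$-transformable.

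In both cases the conclusion transfers back to the original $f$, because rotation does not change complexity or the tractable classes. The main obstacle is the bookkeeping step above: choosing a rotation whose inner matrix is full rank and is not of the $\epsilon$-form. We handle it by a finite check over the three rotations, reducing every failure to $by=cz=dw$.
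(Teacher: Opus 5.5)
Your outline is the paper's: dispose of $by=cz=dw$ via Lemma~\ref{lm: degenerate inner matrices}, rotate to a full-rank inner matrix, get a trigger $(0,1,t,0)^T$ with $t\neq\pm1$ from Lemma~\ref{lm:get one binary when not exists epsilon}, and finish with Lemma~\ref{lm: with-any-binary}, or with Corollary~\ref{cor: get (0,1,0,0) from a binary third or fourth root} followed by Lemma~\ref{thm: with (0,1,0,0)}. You are more careful than the paper in keeping the $f\in\M$ and \numP-hard outcomes of Lemma~\ref{lm: degenerate inner matrices}. However, the bookkeeping you call the main obstacle rests on a misreading of Lemma~\ref{lm:get one binary when not exists epsilon}, and your patch for it contains a false step.

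That lemma assumes exactly the theorem's hypothesis (that $b=\epsilon y$, $c=\epsilon z$, $d=\epsilon w$ do not all hold) and nothing about rank. Its first bullet handles a singular inner matrix, and it switches to $f^{\frac{\pi}{2}}$ itself when $c=\epsilon z$, $d=\epsilon w$. Your patch instead looks for one rotation that is both full rank and not of $\epsilon$-form. In the subcase $c=\epsilon z$, $d=\epsilon w$, $b\neq\epsilon y$, $by=dw$, you assert that $cz=dw$ follows. It does not, since $cz=\epsilon z^2$ and $dw=\epsilon w^2$ are not constrained by $by=dw$.

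Take $a=1$, $b=2$, $y=\frac12$, $c=z=3$, $d=w=1$. The hypotheses hold and $cz=9\neq 1=by=dw$. The full-rank rotations $f$ and $f^{\pi}$ are both of $\epsilon$-form. Both $f^{\frac{\pi}{2}}$ and $f^{\frac{3\pi}{2}}$ have inner determinant $by-dw=0$, so trying $f^{\frac{3\pi}{2}}$ cannot help. No single rotation does both jobs.

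The fix is to separate the two roles. Once $(0,1,t,0)^T$ is realized from some rotation of $f$, it is available on the right-hand side whichever rotation you then feed to Lemma~\ref{lm: with-any-binary} or Lemma~\ref{thm: with (0,1,0,0)}. In the example, a $\neq_2$ self-loop on the degenerate $f^{\frac{\pi}{2}}$ yields $t=\frac12$, and you apply the later lemmas to the full-rank $f$. This is exactly what the paper does: it fixes the full-rank rotation and simply invokes Lemma~\ref{lm:get one binary when not exists epsilon}.

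Separately, your Case~2 misses $t=0$, which that lemma can return (its full-rank branch outputs $(0,1,0,0)^T$ outright). For $t=0$ the $t^i$ are not distinct, yet $t$ is not a root of unity, so Corollary~\ref{cor: get (0,1,0,0) from a binary third or fourth root} does not apply. But you already hold $(0,1,0,0)^T$ and can go straight to Lemma~\ref{thm: with (0,1,0,0)}, as the paper does. With these two repairs your argument coincides with the paper's proof.
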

\begin{proof}
If $by=cz=dw$, then by Lemma~\ref{lm: degenerate inner matrices} $f$ satisfies arrow reversal equality, contradiction.
Otherwise, by the rotational symmetry of the outer pairs
$(b, y), (c, z)$,
we can assume that $\left[\begin{smallmatrix}
c & d\\
w & z
\end{smallmatrix}\right]$ has full rank.

By Lemma~\ref{lm:get one binary when not exists epsilon} we can realize $(0, 1, t, 0)^T$ for $t
\neq  \pm 1$.
If $t=0$, we have $(0, 1, 0, 0)^T$, then we are done by Lemma~\ref{thm: with (0,1,0,0)}.
Otherwise, we have $(0, 1, t, 0)^T$, where $t\neq 0$ and $t \not = \pm  1$.
If $t^i$ are distinct for $1\leq i\leq 5$, then we are done by Lemma~\ref{lm: with-any-binary}.
Otherwise, $t$ is an $n$-th primitive root of unity for $n=3$ or 4.
Then we have $(0, 1, 0, 0)^T$ by
Corollary~\ref{cor: get (0,1,0,0) from a binary third or fourth root},
and we are done by Lemma~\ref{thm: with (0,1,0,0)} once again.    
\end{proof}

Finally we are ready to prove the main theorem: Theorem~\ref{thm: main theorem}.

\section{Proof of Main Theorem}\label{sec: proof of main theorem}
If $ax=0$, by Lemma~\ref{lm:let a=x}, we have $\plholant{\neq_2}{f}\equiv_T \plholant{\neq_2}{f'}$, where $f'$ has the signature matrix
$M(f')=\left[\begin{smallmatrix}
0 & 0 & 0 & b\\
0 & c & d & 0\\
0 & w & z & 0\\
y & 0 & 0 & 0
\end{smallmatrix}\right]$.
This corresponds to the planar six-vertex model, whose complexity is fully characterized by Theorem~\ref{thm:planar six-vertex}.

Now we assume $ax\neq 0$.
By a holographic transformation using $T=\left[\begin{smallmatrix}
    1 & 0\\
    0 & \sqrt[4]{\frac{a}{x}}
\end{smallmatrix}\right]$,
$f$ is transformed to the signature $\tilde{f}$ with the signature matrix
$$M(\tilde{f})=
a\left[\begin{matrix}
1 & 0 & 0 & \frac{b}{\sqrt{ax}}\\
0 & \frac{c}{\sqrt{ax}} & \frac{d}{\sqrt{ax}} & 0\\
0 & \frac{w}{\sqrt{ax}} & \frac{z}{\sqrt{ax}} & 0\\
\frac{z}{\sqrt{ax}} & 0 & 0 & 1
\end{matrix}\right],$$
and $\plholant{\neq_2}{f}\equiv_T^p\plholant{\neq_2}{\tilde{f}}$ by Theorem~\ref{thm: holographic transformation}.
If there is at least one zero in $\{b, y, c, z, d, w\}$, then by Theorem~\ref{thm: at least one zero}, either $\plholant{\neq_2}{\tilde{f}}$ is \numP-hard and thus $\plholant{\neq_2}{f}$ is \numP-hard; 
or $\tilde{f}$ is in $\mathscr{P}$, or $\mathscr{A}$-transformable, or $\mathscr{M}$-transformable, which implies that $f$ is $\mathscr{P}$-transformable, or $\mathscr{A}$-transformable, or $\mathscr{M}$-transformable, respectively.

Below we assume there are no zeros in  $\{b, y, c, z, d, w\}$.

If $\{(b,y),(c,z),(d,w)\}$ are three equal pairs, then by Theorem~\ref{thm: three equal pairs}, either $\plholant{\neq_2}{\tilde{f}}$ is \numP-hard and thus $\plholant{\neq_2}{f}$ is \numP-hard; or $\tilde{f}$ is $\mathscr{P}$-transformable, or $\mathscr{A}$-transformable, or $\mathscr{L}$-transformable, or $\mathscr{M}$-transformable; which implies that $f$ is $\mathscr{P}$-transformable, or $\mathscr{A}$-transformable, or $\mathscr{L}$-transformable, or $\mathscr{M}$-transformable, repectively;
or $(\frac{d}{\sqrt{ax}}=\frac{w}{\sqrt{ax}}=\pm1)\land (\frac{b}{\sqrt{ax}}=-\frac{c}{\sqrt{ax}}=\mathfrak{i}\tan(\frac{\beta\pi}{8}))$, which implies that $(d=w=\pm \sqrt{ax})\land (b=-c=\mathfrak{i}\tan(\frac{\beta\pi}{8})\sqrt{ax})$, for some odd integer $\beta$.
Thus, the theorem holds in the case of three equal pairs.
Similarly, if $\{(b,y),(c,z),(d,w)\}$ are three opposite pairs, the theorem follows from Lemma~\ref{lm: symmetric disequality dichotomy}.

If it's not the case that $\{(b,y),(c,z),(d,w)\}$ are three equal or opposite pairs, the theorem follows from Theorem~\ref{thm: all nonzero entries, not three equal or opposite pairs}.
\qed

\appendix

\section{Interpolation via Möbius transformation}\label{subsec: mobius}
We first show that under the assumption it's not the case that $b=\epsilon y,c=\epsilon z$ and $d=\epsilon w$ for some $\epsilon=\pm1$, we can realize a binary signature $g = (0,1,t,0)^T$ with $t \ne \pm1$, which serves as a “trigger” for generating polynomially many binary signatures using M\"{o}bius transformation techniques (Lemma~\ref{lm: get any binary from a binary not fifth root}, Corollary~\ref{cor: get (0,1,0,0) from a binary third or fourth root}).
\begin{lemma}\label{lm:get one binary when not exists epsilon}
Let $f$ be a 4-ary signature with the signature matrix $M(f) = \left[\begin{smallmatrix}a & 0 & 0 & b \\ 0 & c & d & 0 \\ 0 & w & z & 0 \\ y & 0 & 0 & a \end{smallmatrix}\right]$. 
If it is not the case that $b=\epsilon y,c= \epsilon z$ and $d = \epsilon w$ where $\epsilon = \pm 1$, then
$$
\plholant{\neq_2}{(0,1,t,0),f} \leq^p_T \plholant{\neq_2}{f},
$$
for some $t\in \mathbb{C}$ and $t \neq \pm 1$.
\end{lemma}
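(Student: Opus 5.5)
The plan is to realize the binary signature directly by self-loops on $f$ and its rotations. Connecting two adjacent variables of a 4-ary signature by $\neq_2$ produces a binary signature of the form $(0,\alpha,\beta,0)^T$. This is the looping gadget from the preliminaries, with $g=(\neq_2)$, for which $g_{00}=g_{11}=0$.

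First I compute the loops. Looping $x_3,x_4$ of $f$ by $\neq_2$ gives the row sums $(0,c+d,w+z,0)^T$. Looping $x_1,x_2$ gives the column sums $(0,c+w,d+z,0)^T$. The rotated forms $f^{\frac{\pi}{2}}$ and $f^{\frac{3\pi}{2}}$ have inner matrices $\left[\begin{smallmatrix} b & w\\ d & y\end{smallmatrix}\right]$ and $\left[\begin{smallmatrix} y & w\\ d & b\end{smallmatrix}\right]$. Looping these gives $(0,b+w,d+y,0)^T$, $(0,b+d,w+y,0)^T$, and the analogous pairs. After normalization, any loop $(0,\alpha,\beta,0)^T$ with $\alpha\neq 0$ and $\beta\neq\pm\alpha$ yields the desired $t=\beta/\alpha\neq\pm1$. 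If $\alpha=0\neq\beta$, rotating by $\pi$ gives $(0,0,1,0)^T$ up to scale, which is again fine since $t=0$ or its reverse $(0,1,0,0)^T$ is allowed. So the only bad situation is that every available loop is a scalar multiple of $(0,1,1,0)^T$, $(0,1,-1,0)^T$, or zero.

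Second, when all plain loops are bad, I enlarge the supply of binaries. I can compose gadgets before looping, for example $M(f)NM(f^{\pi})$ or chains with rotated copies. I can also do binary modification of $f$ by an already obtained $(0,1,-1,0)^T$ before looping. Each of these gives a new pair $(\alpha,\beta)$ that is polynomial in the entries. The case analysis then splits on which of $c+d=\pm(w+z)$, $c+w=\pm(d+z)$, $b+w=\pm(d+y)$, $b+d=\pm(w+y)$, and the like hold. The goal is to show that if every such binary is degenerate, the resulting system of sign equations forces $b=\epsilon y$, $c=\epsilon z$, and $d=\epsilon w$ with a common $\epsilon$. For instance, $c+d=w+z$ and $c+w=d+z$ together give $c=z$ and $d=w$. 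Adding $b+w=d+y$ then gives $b=y$. The mixed-sign combinations are where I expect to need the composite gadgets: a pattern such as $c+d=-(w+z)$ with $c+w=d+z$ yields $c=-w$ and $d=-z$, which is not the symmetric case, so some second-order gadget must break it.

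The main obstacle is exactly this bookkeeping over mixed signs. I would handle it by grouping the sign patterns into a few orbits under rotation. For each orbit I would compute one composite gadget, most likely $M(f)NM(f^\pi)$, whose inner matrix is $\left[\begin{smallmatrix} cw+dz & c^2+d^2\\ w^2+z^2 & cw+dz\end{smallmatrix}\right]$ in the relevant reduced case. I would then check that looping this gadget gives a ratio other than $\pm1$, unless the symmetric $\epsilon$ condition already holds. If some pattern resists this, I would fall back on binary modification by $(0,1,-1,0)^T$ to flip signs on one variable, which converts mixed-sign cases into uniform-sign ones.
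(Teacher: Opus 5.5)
Your toolkit is the right one: self-loops by $\neq_2$ on $f$ and its rotations, plus looping with $(0,1,-1,0)^T$ once that binary is available. But the proposal never proves the decisive claim. That claim is: if every binary so produced is zero or has ratio $\pm1$, then $b=\epsilon y$, $c=\epsilon z$, $d=\epsilon w$ for a common $\epsilon$. It is stated as a goal, and the ``bookkeeping over mixed signs'' is deferred.

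The concrete repair you suggest for the obstruction you found also fails. If $w=-c$ and $z=-d$, the inner matrix of $M(f)NM(f^{\pi})$ is $(c^2+d^2)\left[\begin{smallmatrix}-1&1\\1&-1\end{smallmatrix}\right]$. Its loops therefore vanish or have ratio $-1$. Its $\frac{\pi}{2}$-rotation has inner matrix $\left[\begin{smallmatrix}a^2+by&c^2+d^2\\c^2+d^2&a^2+by\end{smallmatrix}\right]$, whose loops again vanish or have ratio $\pm1$.

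Even the sign-flip fallback depends on which pair you loop. In this pattern, looping $x_3,x_4$ of $f$ with $(0,1,-1,0)^T$ gives $(c-d)(0,1,-1,0)^T$. Only looping $x_1,x_2$, which gives a multiple of $(0,c,d,0)^T$, helps. So as written, the case analysis is not closed.

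The missing idea is the paper's split on whether $A=\left[\begin{smallmatrix}c&d\\w&z\end{smallmatrix}\right]$ is singular. The case $c=\epsilon z$, $d=\epsilon w$ is handled by rerunning the argument on $f^{\frac{\pi}{2}}$, whose inner matrix is $\left[\begin{smallmatrix}b&w\\d&y\end{smallmatrix}\right]$, using $b\neq\epsilon y$.

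\emph{Nonsingular $A$.} The loops give $(0,c+d,w+z,0)^T$ and $(0,c+w,d+z,0)^T$.
\begin{itemize}
\item A vanishing entry gives $t=0$, since full rank prevents both entries from vanishing.
\item Both ratios equal to $1$ force $c=z$, $d=w$, which is excluded.
\item If either ratio is $-1$, then $c+d+w+z=0$ and $(0,1,-1,0)^T$ is available. The loop $(0,c-d,w-z,0)^T$ having ratio $\pm1$ then forces either $c=-z$, $d=-w$ (excluded) or $c=-w$, $d=-z$, which makes $A$ singular.
\end{itemize}
So your mixed-sign pattern cannot occur in this branch; it belongs to the singular branch.

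\emph{Singular $A$.} After rotation, normalize $c=1$ (or $d=1$). Then $A=\left[\begin{smallmatrix}1&d\\w&dw\end{smallmatrix}\right]$, and the two loops are exactly $(1+w)(0,1,d,0)^T$ and $(1+d)(0,1,w,0)^T$.
\begin{itemize}
\item If $d,w\in\{\pm1\}$, this is the excluded case with $\epsilon=dw$.
\item Otherwise one loop is already good, or one of $d,w$ equals $-1$ while the other is not $\pm1$.
\item In that last subcase, the nonzero loop is a multiple of $(0,1,-1,0)^T$. Looping the other pair with it gives $2(0,1,w,0)^T$ or $2(0,1,d,0)^T$.
\end{itemize}
No composite gadget is needed. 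What your proposal lacks is this structure, which turns the open-ended sign bookkeeping into a finite, checkable argument.
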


\begin{proof}
We first assume it is not the case that $c=\epsilon z$ and $d=\epsilon w$.
\begin{itemize}
\item 
First deal with the case $\det A=0$,
where $A =
\left[\begin{smallmatrix}
c & d\\
w & z
\end{smallmatrix}\right]$.
Since it is not the case that $c= \epsilon z$ and $d = \epsilon w$ where $\epsilon = \pm 1$, the four entries of $A$ cannot be all 0.
By the rotational symmetry of $f$, we may assume $c \not =0$ or $d\neq 0$.

We first consider the case $c\neq 0$.
Normalizing $c =1$ we have
$A
=
\left[\begin{smallmatrix}
1 & d\\
w & dw
\end{smallmatrix}\right]$.
By a loop on $x_1, x_2$ of $f$ via $(\neq_2)$,
and a loop on $x_3, x_4$ of $f$ via $(\neq_2)$,
we get both $(1+w)(0, 1, d, 0)^T$ and $(1+d)(0, 1, w, 0)^T$.
If $w = \pm 1$ and $d = \pm 1$,
then we have $1=\epsilon\cdot dw$ and $d=\epsilon \cdot w$ for $\epsilon=dw=\pm1$, contradiction.
So we have
$w  \not = \pm 1$  or $d \not = \pm 1$.
Assume $w  \not = \pm 1$; the other case is symmetric.
Since we have $(1+d) \cdot (0, 1, w, 0)^T$,
if $d  \not = - 1$, then we have $(0, 1, w, 0)^T$ with $w\neq \pm1$
satisfying the requirement.
Suppose $d = -1$, then from $(1+w) \cdot (0, 1, d, 0)^T$,
we get $(0, 1, -1, 0)^T$.
By a loop on $x_3, x_4$ of $f$ via $(0, 1, -1, 0)^T$,
we get $(1-d)  \cdot (0, 1, w, 0)^T = 2 (0, 1, w, 0)^T$ with $w\neq \pm1$
satisfying the requirement.

Now we consider the case $d\neq 0$.
Normalizing $d=1$ we have 
$A=
\left[\begin{smallmatrix}
c & 1\\
cz & z
\end{smallmatrix}\right]$.
By a loop on $x_1, x_2$ of $f$ via $(\neq_2)$,
and a loop on $x_3, x_4$ of $f$ via $(\neq_2)$,
we get both $(1+z)(0, c, 1, 0)^T$ and $(1+c)(0, 1, z, 0)^T$.
If $c = \pm 1$ and $z = \pm 1$,
then we have $c=\epsilon\cdot z$ and $1=\epsilon \cdot cz$ for $\epsilon=cz=\pm1$, contradiction.
So we have
$c  \not = \pm 1$  or $z \not = \pm 1$.
Assume $c \not = \pm 1$; the other case is symmetric.
Since we have $(1+z) \cdot (0, c, 1, 0)^T$,
if $z  \not = - 1$, then we have $(0, c, 1, 0)^T$, and also $(0,1,c,0)^T$ after a rotation, with $c\neq 1$
satisfying the requirement.
Suppose $z = -1$, then from $(1+c) \cdot (0, 1, z, 0)^T$,
we get $(0, 1, -1, 0)^T$.
By a loop on $x_1, x_2$ of $f$ via $(0, 1, -1, 0)^T$,
we get $(1-z)  \cdot (0, c, 1, 0)^T = 2 (0, c, 1, 0)^T$
satisfying the requirement.

\item 

In the following we assume
 $\det A \not =0$.
By connecting $x_3$ and $x_4$ of $f$ via $(\neq_2)$
we get $(0, c+d, w+z, 0)^T$.
If  $w+z =0$, then $c+d \not = 0$ as $\det A\neq 0$.
So we have $(0,1,0,0)^T$ with $t=0$ satisfying the requirement. 
Otherwise we can write
 $(0, c+d, w+z, 0)^T = (w+z) \cdot (0, \frac{c+d}{w+z}, 1, 0)^T$.
If $\frac{c+d}{w+z} \not = \pm 1$ then we are done.
Hence we may assume $\frac{c+d}{w+z} = \pm 1$.
Similarly we may assume $\frac{c+w}{d+z} = \pm 1$,
by connecting $x_1$ and $x_2$  of $f$ via $(\neq_2)$
and getting $(0, c+w, d+z, 0)$.

If either $\frac{c+d}{w+z} = -1$ or $\frac{c+w}{d+z} = -1$, then we have
\begin{equation}\label{sum-cdwz-is-zero}
c+d + w +z  =  0,
\end{equation}
and we realize the signature $(0,1,-1,0)^T$.
By connecting $x_3$ and $x_4$ of $f$  via $(0, 1, -1, 0)^T$
we get the binary signature $(0, c-d, w-z, 0)^T$.
If one of $c-d$ or $w-z=0$, then the other one is nonzero
as $\det A \not = 0$,
and we get $(0,1,0,0)^T$ satisfying the requirement.
So we may assume $c-d \not = 0$ and $w-z \not = 0$,
and we have $(0,1, \frac{w-z}{c-d}, 0)^T$.
If $\frac{w-z}{c-d} \not = \pm 1$, then we are done.
If  $\frac{w-z}{c-d} = 1$, then $c-d = w-z$.
Combined with (\ref{sum-cdwz-is-zero}), we get $c= -z$ and $d= -w$, contradicting the assumption.
If  $\frac{w-z}{c-d} = -1$, then $c-d = z-w$.
Combined with (\ref{sum-cdwz-is-zero}), we get $c= -w$ and $d = -z$.
This contradicts $\det A \not = 0$.

Thus we may assume
$\frac{c+d}{w+z} = \frac{c+w}{d+z}=1$.
Then from $c+d = w+z$ and $c+w = d+z$,  we get $c= z$ and $d= w$, contradicting the assumption.
\end{itemize}
If it is the case that $c=\epsilon z$ and $d=\epsilon w$ for some $\epsilon=\pm1$, we consider $f^{\frac{\pi}{2}}$ with the signature matrix
$M(f^{\frac{\pi}{2}})=
\left[\begin{smallmatrix}
a & 0 & 0 & z\\
0 & b & w & 0\\
0 & d & y & 0\\
c & 0 & 0 & a
\end{smallmatrix}\right]$.
We have $b\neq \epsilon y$ by the assumption of this lemma.
Then it is not the case that $b=\epsilon y$ and $w=\epsilon d$ where $\epsilon=\pm1$.
The same argument as above applies.
\end{proof}

The following lemma follows from standard argument using polynomial interpolation.

\begin{lemma}\cite{CaiFS21}\label{lm: mobius: any binary}
Let $\mathcal{F}$ be an arbitrary signature set.
Suppose there is a planar gadget construction in $\plholant{\neq_2}{\mathcal{F}}$ that can produce a sequence of polynomially many distinct binary signatures of the form $g = (0,1,t,0)^T$ of polynomial size description, then for any $u\in \mathbb{C}$,
$$
\plholant{\neq_2}{\mathcal{F}, (0,1,u,0)^T} \leq^p_T \plholant{\neq_2}{\mathcal{F}}. 
$$
\end{lemma}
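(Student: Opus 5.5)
Fix an instance $\Omega$ of $\plholant{\neq_2}{\mathcal{F},(0,1,u,0)^T}$ in which $(0,1,u,0)^T$ occurs $m$ times. The plan is the standard univariate Vandermonde interpolation, stratified by how each binary copy is evaluated. The constraint $(0,1,u,0)^T$ is nonzero only on inputs $01$ (value $1$) and $10$ (value $u$). Let $c_i$ be the sum, over all edge assignments, of the product of every other signature in $\Omega$, restricted to assignments in which exactly $i$ of the $m$ copies receive input $10$ and the rest receive $01$. Then
\[
\PlHolant_{\Omega}=\sum_{i=0}^{m} u^{i} c_i .
\]
The coefficients $c_i$ depend only on the rest of $\Omega$, not on the binary signature placed there. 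So it suffices to compute every $c_i$.

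Next, take the hypothesized sequence of gadgets $g_s=(0,1,t_s,0)^T$ with pairwise distinct $t_s$. Since the sequence is polynomially long and of polynomial size description, we can choose $m+1$ of them, say $t_0,\dots,t_m$, in time polynomial in $m$. For each $s$, form $\Omega_s$ by replacing every copy of $(0,1,u,0)^T$ with the planar gadget realizing $g_s$, taking care to respect input orientation. The replacement is legal in the bipartite setting: each gadget sits on the RHS, and its dangling edges attach to the same $\neq_2$ vertices. Because each gadget is planar with dangling edges on its outer face, $\Omega_s$ is a planar instance of $\plholant{\neq_2}{\mathcal{F}}$, and
\[
\PlHolant_{\Omega_s}=\sum_{i=0}^{m} t_s^{i} c_i .
\]

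The main obstacle is making sure the stratification is literally identical across all $s$. This needs every gadget signature to have exactly the normalized form $(0,1,t_s,0)^T$. Any nonzero global scalar on the $01$ entry can be divided out, since it contributes a known factor of (scalar)$^m$. Under that normalization the same $c_i$ appear for every $s$. The system is then a Vandermonde system with distinct nodes $t_s$, hence invertible. Solving it in polynomial time recovers all $c_i$, and substituting $u$ yields $\PlHolant_\Omega$. The remaining bookkeeping is that the $t_s$ have polynomial bit size, so the linear algebra runs in polynomial time. This gives $\plholant{\neq_2}{\mathcal{F},(0,1,u,0)^T}\le_T^p\plholant{\neq_2}{\mathcal{F}}$.
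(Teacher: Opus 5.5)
Your proposal is correct. It is exactly the standard Vandermonde interpolation that the paper invokes: the paper gives no separate proof, saying only that the lemma ``follows from standard argument using polynomial interpolation'' and citing \cite{CaiFS21}. Your argument stratifies by the number of copies of $(0,1,u,0)^T$ receiving input $10$, substitutes the gadgets for $(0,1,t_s,0)^T$ (normalized), and inverts the system using distinctness of the $t_s$, in the same style as the paper's own interpolation in Lemma~\ref{lm:interpolate S}.
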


\begin{corollary}\label{cor: mobius: with (0,1,t,0) not of unity}
The conclusion of Lemma~\ref{lm: mobius: any binary} holds if $\mathcal{F}$ contains some binary signature $g=(0,1,t,0)^T$, where $t \neq 0$ and is not a root of unity.
\end{corollary}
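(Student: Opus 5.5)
The plan is to reduce the statement to Lemma~\ref{lm: mobius: any binary}: from the single binary signature $g=(0,1,t,0)^T$ we build, by planar gadgets, polynomially many pairwise distinct binary signatures $(0,1,t_s,0)^T$ with polynomial-size descriptions. Lemma~\ref{lm: mobius: any binary} then yields $(0,1,u,0)^T$ for every $u\in\mathbb{C}$.

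The gadget is a chain. Connect $s$ copies of $g$ in series, with $\neq_2$ on every internal edge. Each $g$ acts as the diagonal-like matrix $\left[\begin{smallmatrix}0&1\\t&0\end{smallmatrix}\right]$, and composing through $N$ multiplies the weights on the $1$-side. After suitable normalization (using Lemma~\ref{lm:inverse binary} to rotate or invert where needed), the chain of length $s$ has signature $(0,1,t^s,0)^T$. This construction is planar and stays in the bipartite $\neq_2\mid\cdot$ setting, since a binary signature composed via $\neq_2$ with another binary signature lands again on the RHS.

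Because $t\neq 0$ is not a root of unity, the powers $t^s$ for $1\le s\le$ poly$(n)$ are pairwise distinct. Each has a description of size $O(s)$ times the size of $t$, so a polynomial number of them suffices for the Vandermonde interpolation inside Lemma~\ref{lm: mobius: any binary}.

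The step needing the most care is checking the chain algebra exactly. Concretely, we must confirm that linking $(0,1,t,0)^T$ with $(0,1,t',0)^T$ through $\neq_2$ gives $(0,1,tt',0)^T$ up to a nonzero scalar, and not $(0,t,t',0)^T$ or a signature whose orientation flips at each step. If the orientation does alternate, we instead use the rotated copy $(0,t,1,0)^T$ on alternate links, or we take only even-length chains, giving the sequence $t^{2s}$. These are still distinct, since $t$ is not a root of unity. Once the correct parity is fixed, applying Lemma~\ref{lm: mobius: any binary} completes the argument.
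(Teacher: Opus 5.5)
Your proposal is correct and matches the paper's proof: chain $s$ copies of $g$ through $\neq_2$ to get the pairwise distinct $(0,1,t^s,0)^T$, then apply Lemma~\ref{lm: mobius: any binary}. The step you flagged is immediate, since $\left[\begin{smallmatrix}0&1\\t&0\end{smallmatrix}\right]\left[\begin{smallmatrix}0&1\\1&0\end{smallmatrix}\right]\left[\begin{smallmatrix}0&1\\t'&0\end{smallmatrix}\right]=\left[\begin{smallmatrix}0&1\\tt'&0\end{smallmatrix}\right]$, so you need no normalization, rotation, or parity restriction.
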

\begin{proof}
Connecting $s$ copies of $g$ using $\neq_2$, we get the binary signature $g_s$ with signature matrix
$
M(g_s)=\left[\begin{smallmatrix} 0 & 1 \\ t & 0 \end{smallmatrix}\right] (\left[\begin{smallmatrix} 0 & 1 \\ 1 & 0 \end{smallmatrix}\right]
\left[\begin{smallmatrix} 0 & 1 \\ t & 0 \end{smallmatrix}\right])^{s-1} = 
\left[\begin{smallmatrix} 0 & 1 \\ t^{s} & 0 \end{smallmatrix}\right]$.
Since $t$ is not root of unity, these gadgets $g_s$ are all distinct.
\end{proof}

The following lemma employs M\"obius transformation to show that, given a binary “trigger” $g = (0,1,t,0)^T$ where the powers $t^i$ are all distinct for $1 \le i \le 5$, we can realize any binary signature of the form $(0,1,u,0)^T$.
This is useful in Lemma~\ref{lm: with-any-binary}.
The proof is a nontrivial generalization of Lemma 4.1 in~\cite{CaiF17}, as we are restricted to planar gadget constructions. 
Specifically, we encounter a M\"obius transformation defined by the matrix $\left[\begin{smallmatrix} 
1 & \lambda \\ 
\lambda & -1 \end{smallmatrix}\right]$, where $\lambda \in \R \mathfrak{i}$, which has projective order 2 (see the case $\lambda+\overline{\lambda}=e^{2\mathfrak{i} \theta}\lambda+\bar{\lambda}=0$ in the proof).
Thus, we cannot use it to generate polynomially many binary signatures in this setting. Instead, we perform a limit analysis and invoke facts about cyclotomic extensions to construct a binary signature $(0,1,u,0)^T$ where $u$ is not a root of unity. 
We then conclude the proof by applying Corollary~\ref{cor: mobius: with (0,1,t,0) not of unity}.

\begin{lemma}\label{lm: get any binary from a binary not fifth root}
Let $g = (0,1,t,0)^T$ be a binary signature where $t^i$ are distinct for $1 \leq i \leq 5$, and $f$ be a signature with the signature matrix $M(f) = \left[\begin{smallmatrix} a & 0 & 0 & b \\ 0 & c & d & 0 \\ 0 & w & z & 0 \\ y & 0 & 0 & a \end{smallmatrix} \right]$ with $acdzw \neq 0$, where $\left[\begin{smallmatrix} c & d \\ w & z\end{smallmatrix}\right]$ has full rank, then for any $u \in \mathbb{C}$, $\plholant{\neq_2}{f,(0,1,u,0)^T} \leq^p_{T} \plholant{\neq_2}{f,g}$.
\end{lemma}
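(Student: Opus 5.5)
The plan is to build a one-parameter family of binary signatures by attaching $g$ and $f$ in a planar way, and to show that the map from one family member to the next is a Möbius transformation of infinite projective order. Then either polynomially many distinct binaries $(0,1,t_s,0)^T$ arise, and Lemma~\ref{lm: mobius: any binary} finishes, or we can produce a binary $(0,1,u',0)^T$ with $u'$ nonzero and not a root of unity, and Corollary~\ref{cor: mobius: with (0,1,t,0) not of unity} finishes.

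\textbf{The basic gadget.} Given any binary $(0,1,s,0)^T$ on the right-hand side, loop variables $x_3,x_4$ of $f$ through it via $(\neq_2)$. This yields the binary $(0,\,c+sd,\;w+sz,\,0)^T$, which after normalization is $(0,1,\varphi(s),0)^T$ with
\[
\varphi(s)=\frac{zs+w}{ds+c}.
\]
This is a Möbius map because $\left[\begin{smallmatrix} c & d \\ w & z\end{smallmatrix}\right]$ has full rank. Modifying first with powers $g^{k}=(0,1,t^k,0)^T$ (chains of $g$ as in Corollary~\ref{cor: mobius: with (0,1,t,0) not of unity}) composes $\varphi$ with the dilations $s\mapsto t^k s$. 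We also have the rotated version coming from $x_1,x_2$, and we may insert $g$ on either side. So we obtain a semigroup of Möbius maps generated by $\varphi$ and by $s\mapsto ts$, each applicable in a planar way.

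\textbf{Main steps.}
\begin{enumerate}
\item If $t$ is not a root of unity (and $t\neq0$), conclude at once by Corollary~\ref{cor: mobius: with (0,1,t,0) not of unity}.
\item If $t=0$, get $(0,1,0,0)^T$, and then $(0,1,w/c,0)^T$ and related binaries via $\varphi$. Iterate to reach a non-root-of-unity binary, or reduce to case 3.
\item In the main case, $t$ is a primitive $n$-th root of unity with $n\ge 6$, since the $t^i$ are distinct for $i\le5$. Consider the maps $\psi_k(s)=\varphi(t^k s)$ for $0\le k<n$. A Möbius map of finite projective order is elliptic and is conjugate to a rotation. If some $\psi_k$ has infinite order, then iterating it from a generic seed (for example $s=t$) yields polynomially many distinct values, unless the seed is a fixed point; in that case use a different $k$ or seed.
\item If all the $\psi_k$ have finite order, compare traces. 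The quantity $\operatorname{tr}^2/\det$ of $\left[\begin{smallmatrix} z t^k & w \\ d t^k & c\end{smallmatrix}\right]$ equals $(zt^k+c)^2/\bigl(t^k(cz-dw)\bigr)$, and this must lie in the finite set $\{4\cos^2(\pi j/m)\}$ for every $k$. Since $n\ge6$ supplies many $k$, these constraints force a rigid form. Normalizing $c=1$, I expect this to reduce to the unitary-type case $\varphi\sim\mathcal{M}(\lambda,e^{i\theta})$, that is, a matrix of the form $\left[\begin{smallmatrix}1&\lambda\\ \lambda&-1\end{smallmatrix}\right]$ with $\lambda\in\R\mathfrak{i}$, which is an involution.
\end{enumerate}

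\textbf{The expected main obstacle} is this order-2 situation, where iteration produces only two values. Here I would first run the limit analysis: take compositions $\varphi\circ(s\mapsto t^k s)\circ\varphi$. These are no longer involutions unless $t^k=\pm1$. Evaluate at the seed to obtain explicit values $u_k$, which are rational functions of $t^k$ and $\lambda$. Then show that some $u_k$ is not a root of unity, arguing by contradiction with cyclotomic fields. If every $u_k$ were a root of unity, then $|u_k|=1$ for all $k$ forces an algebraic identity in $\lambda$ and $t^k$, and letting $k$ range over $n\ge6$ residues overdetermines $\lambda$. Any surviving $\lambda$ makes each $u_k$ lie in $\mathbb{Q}(\zeta_n,\lambda)$. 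A degree argument on $[\mathbb{Q}(\zeta_N):\mathbb{Q}]=\phi(N)$ then bounds which roots of unity can occur, and a small check rules them out. With such a $u_k$ in hand, Corollary~\ref{cor: mobius: with (0,1,t,0) not of unity} gives every $(0,1,u,0)^T$, completing the reduction $\plholant{\neq_2}{f,(0,1,u,0)^T}\le_T^p\plholant{\neq_2}{f,g}$.
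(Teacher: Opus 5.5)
You restrict yourself to compositions of $\varphi$ with the dilations $s\mapsto t^k s$. Step~4, on which your argument rests, is unproved and in fact false.

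First, in your main case the hypothesis only gives that $t$ has order $n\ge 5$, not $n\ge 6$ (and $t=0$ is excluded). The case $n=5$ is exactly where your scheme breaks. Place an icosahedron on the Riemann sphere with a $5$-fold axis through $0$ and $\infty$, so that $s\mapsto\zeta_5 s$ is one of its rotations, and let $\varphi$ be one of its rotations of order $3$. Such a rotation neither fixes a vertex nor sends it to its antipode, so $\varphi(0),\varphi(\infty)\notin\{0,\infty\}$. Hence $c,d,w,z$ are all nonzero and the inner matrix is invertible.

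Yet $\langle\varphi,\ s\mapsto\zeta_5 s\rangle$ lies in the rotation group $A_5$ of order $60$. So every $\psi_k$, every composition of $\varphi$ with dilations by powers of $t$ on either side, and in particular your fallback $\varphi\circ(s\mapsto t^k s)\circ\varphi$, has finite order. You obtain only finitely many binaries, and $\varphi$ is not an involution. Also, $\{4\cos^2(\pi j/m)\}$ is not a finite set, since nothing bounds $m$.

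The missing idea is the paper's first reduction. Each $\varphi(t^i)$ is itself a realizable binary. If some $\varphi(t^i)$, $1\le i\le 5$, is not $0$, $\infty$ or a root of unity, Corollary~\ref{cor: mobius: with (0,1,t,0) not of unity} finishes at once. Otherwise at least three points of $S^1$ are mapped into $S^1$, which forces $\varphi=\mathcal{M}(\lambda,e^{\mathfrak{i}\theta})$. In the icosahedral example this already settles matters, since $\varphi$ tilts the equator.

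Instead of dilations, the paper then uses the gadget $M(f)NM(f^{\pi})$ (or $M(f^{\pi})NM(f)$). Its inner matrix normalizes to $\left[\begin{smallmatrix}1&\delta\\ \bar\delta&1\end{smallmatrix}\right]$, with real eigenvalues $1\pm|\delta|$ of different moduli, hence of infinite projective order. This works unless $\lambda+\bar\lambda=e^{2\mathfrak{i}\theta}\lambda+\bar\lambda=0$, which isolates the single hard case $e^{\mathfrak{i}\theta}=-1$, $\lambda\in\mathbb{R}\mathfrak{i}$.

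Your treatment of that involution case does not work either. There $\varphi$ and all dilations by $t^k$ preserve $S^1$, so $|u_k|=1$ holds identically and gives no constraint on $\lambda$. A degree bound in $\mathbb{Q}(\zeta_n,\lambda)$ applied to the finitely many values $u_k$, $k<n$, cannot give a contradiction, because finitely many roots of unity always fit in one number field.

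The paper needs an infinite family with unbounded orders. Composing $f$ with itself replaces $r$ by $T(r)=\frac{r^2+1}{2r}$, and the iterates $r_n$ converge strictly monotonically to $\pm1$. So the arguments $2\arctan r_n$ of $v_n=\frac{1+r_n\mathfrak{i}}{1-r_n\mathfrak{i}}$ converge strictly to $\pm\pi/2$. If all $v_n$ were roots of unity, their orders would be unbounded, contradicting $v_n\in\mathbb{Q}(\zeta_p,\mathfrak{i})$ for a fixed $p$.

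Once you have the $S^1$ reduction, your dilation idea can in fact be completed differently. A finite group of $S^1$-preserving M\"obius maps containing $s\mapsto ts$ must preserve $\{0,\infty\}$, which $\varphi$ does not since $cdwz\neq0$. So by Schur's theorem on periodic linear groups, some positive word in $\varphi$ and $s\mapsto ts$ has infinite order. That is not the argument you gave, however.
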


\begin{proof}

By connecting two copies of $g$ using $(\neq_2)$, we get a signature $g_2$ with the signature matrix 
$
M(g_2)=
\left[\begin{smallmatrix}
 0& 1\\
 t& 0\\
\end{smallmatrix}\right]
\left[\begin{smallmatrix}
 0& 1\\
 1& 0\\
\end{smallmatrix}\right]
\left[\begin{smallmatrix}
 0& 1\\
 t& 0\\
\end{smallmatrix}\right]=
\left[\begin{smallmatrix}
 0& 1\\
 t^2& 0\\
\end{smallmatrix}\right]
.$
Similarly, we can construct
at least five distinct  $g_i=(0, 1, t^i, 0)^T$ for $1\leq i\leq 5$.


By connecting the variables $x_3$ and $x_4$ of the signature $f$ with the variables $x_1$ and $x_2$ of $g_i$ using $\neq_2$
for $1 \leq i \leq 5$ respectively,  we get binary signatures
$$h_i=M(f)Ng_i=
\left[\begin{smallmatrix}
a & 0& 0& b\\
0 & c& d& 0\\
0 & w& z& 0\\
y & 0 & 0 & a
\end{smallmatrix}\right]
\left[\begin{smallmatrix}
0\\
t^i\\
1\\
0\\
\end{smallmatrix}\right]
=\left[\begin{smallmatrix}
0\\
ct^i+d\\
wt^i+z\\
0\\
\end{smallmatrix}\right].
$$
Let $\varphi(\mathfrak z)=\dfrac{c\mathfrak z+d}{w\mathfrak z+z}$.
Since $\det\left[\begin{smallmatrix}
c & d\\
w & z
\end{smallmatrix}\right]\neq 0$,
$\varphi(\mathfrak z)$ is a M\"{o}bius transformation  of the  extended  complex  plane $\widehat{\mathbb{C}}$.
We rewrite $h_i$ as $(wt^i+z)(0, \varphi(t^i), 1, 0)^T$, with the understanding that if $wt^i+z=0$, then $ \varphi(t^i)=\infty$, and we define
$(wt^i+z)(0, \varphi(t^i), 1, 0)^T$ to be $(0, ct^i+d, 0, 0)^T$.
Having $(0, \varphi(t^i), 1, 0)^T$ is equivalent to
having $(0, 1, \varphi(t^i), 0)^T$.
If there is a $t^i$ such that $\varphi(t^i)\neq 0, \infty$ or a root of unity for $1\leq i\leq 5$,
 then by
 Corollary~\ref{cor: mobius: with (0,1,t,0) not of unity}
 $\plholant{\neq_2}{f, (0, 1, u, 0)^T}\leq^p_T\plholant{\neq_2}{f, (0, 1, \varphi(t^i), 0)^T}$,
for all $u \in \mathbb{C}$.
 Otherwise, $\varphi(t^i)$ is $0, \infty$ or a root of unity for $1\leq i\leq 5$. Since
 $\varphi(\mathfrak z)$ is a bijection of $\widehat{\mathbb{C}}$, there is at most one $t^i$ such that $\varphi(t^i)=0$ and at most one $t^i$ such that $\varphi(t^i)=\infty$.
That means, there are at least three $t^i$ such that $|\varphi(t^i)|=1$.
 Since a M\"{o}bius transformation is determined by any 3 distinct points, mapping 3 distinct points from $S^1$ to $S^1$ implies that
 this $\varphi(\mathfrak z)$ maps $S^1$ homeomorphically onto $S^1$.
 So in fact,  $|\varphi(t^i)|=1$, for all $1\leq i\leq 5$.


A M\"{o}bius transformation mapping 3 distinct points from $S^1$ to $S^1$ has a special form
$\mathcal{M}(\lambda, e^{\mathfrak{i}\theta})$:
${\mathfrak z} \mapsto e^{\mathfrak{i}\theta}\dfrac{\mathfrak z+\lambda}{1+\bar{\lambda}\mathfrak z}$,
where $|\lambda|\neq 1$.
By normalization in $f$, we may assume $z=1$, since $z \not =0$.
 Comparing coefficients with $\varphi(\mathfrak z)$ we have
 $c= e^{\mathfrak{i}\theta}$,
$d = e^{\mathfrak{i}\theta} \lambda$ and
$w = \bar{\lambda}$.
Since $w \not = 0$ we have $\lambda \not = 0$.
Thus, $$M(f)=\left[\begin{matrix}
a & 0 & 0 & b\\
0 & e^{\mathfrak{i}\theta} & e^{\mathfrak{i}\theta}\lambda & 0\\
0 & \bar{\lambda} & 1 & 0\\
y & 0 & 0 & a\\
\end{matrix}\right].$$
Rotate $f$ by $\pi$ we have
$M(f^{\pi}) = \left[\begin{smallmatrix}
a & 0 & 0 & y \\
0 & 1 & e^{\mathfrak{i}\theta}\lambda & 0 \\
0 & \overline{\lambda} & e^{\mathfrak{i}\theta} & 0 \\
b & 0 & 0 & a
\end{smallmatrix}\right].$ 
By connecting $f$ and $f^{\pi}$ using $(\neq_2)^{\otimes 2}$,
we get a signature $f_1$ with the signature matrix
 $$
 M(f_1)=M(f) N M(f^\pi) = \left[ \begin{matrix} * & 0 & 0 & * \\ 0 & e^{\mathfrak{i} \theta}(\lambda+\bar{\lambda}) & e^{2\mathfrak{i}\theta }(\lambda^2+1) & 0 \\ 0 & {\overline{\lambda}}^2+1 & e^{\mathfrak{i} \theta}(\lambda+\overline{\lambda}) & 0 \\ * & 0 & 0 & * \end{matrix} \right].
 $$
\begin{itemize}
    \item 
    If $\lambda+\overline{\lambda}\neq 0$,
we can normalize $M(f_1)$ by the nonzero scalar $e^{\mathfrak{i}\theta}(\lambda + \overline{\lambda})$,
and get
$M(f_1)=\left[\begin{smallmatrix}
* & 0 & 0 & *\\
0 & 1 & \delta & 0\\
0 & \bar{\delta} & 1 & 0\\
* & 0 & 0 & *\\
\end{smallmatrix}\right]$,
where $\delta=e^{\mathfrak{i}\theta}\frac{\lambda^2+1}{\lambda+\overline{\lambda}}$.
Since $|\lambda|\neq 1$, we have $\delta\neq 0$.
Note that
$\left[\begin{smallmatrix}
1 & \delta\\
\bar{\delta} & 1
\end{smallmatrix}\right]
=M_{\rm{In}}(f_1)=M_{\rm{In}}(f)M_{\rm{In}}(N)M_{\rm{In}}(f^\pi)
$ is
the product of three non-singular $2 \times 2$ matrices, thus
it is also non-singular.
So $|\delta| \not =1$.
The two eigenvalues of
$\left[\begin{smallmatrix}
1 &  \delta \\
\bar{\delta} & 1
\end{smallmatrix}\right]$
 are $1 + |\delta|$ and $1- |\delta|$, both are real and must be nonzero.

Obviously $|{1+ |\delta|}| \neq  |{1- |\delta|}|$,
since $|\delta| \neq 0$.
This implies that
 there are no integer $n> 0$ and complex number $\mu$ such that
$\left[\begin{smallmatrix}
1 &  {\delta} \\
  \bar{\delta} & 1
\end{smallmatrix}\right]^n=\mu I$,
i.e., $\left[\begin{smallmatrix}
1 &  \delta \\
\bar{\delta} & 1
\end{smallmatrix}\right]$ has infinite projective order.
Note that
$\left[\begin{smallmatrix}
1 &  \delta \\
\bar{\delta} & 1
\end{smallmatrix}\right]$
defines a M\"{o}bius transformation $\psi({\mathfrak z})$
of the form
$\mathcal{M}(\lambda, e^{{\mathfrak{i}}\theta})$ with $\lambda = \delta$
and $\theta =0$:
${\mathfrak z} \mapsto \psi({\mathfrak z})=
\dfrac{{\mathfrak z}+\delta}{1+\bar{\delta}{\mathfrak z}}$,
mapping  $S^1$ to $S^1$.
Since $\left[\begin{smallmatrix}
1 &  \delta \\
\bar{\delta} & 1
\end{smallmatrix}\right]$ has infinite projective order,
the  M\"{o}bius transformation $\psi({\mathfrak z})$
defines an infinite group.
%

We can connect the binary signature
$g_i(x_1, x_2)$ via $N$ to $f_1$. This gives us
the binary signatures  (for $1 \leq i \leq 5$)
$$h^{(1)}_i=M(f_1)Ng_i=
\left[\begin{smallmatrix}
* & 0& 0& *\\
0 & 1 &  \delta & 0\\
0 & \bar{\delta}& 1 & 0\\
* & 0& 0& *\\
\end{smallmatrix}\right]
\left[\begin{smallmatrix}
0\\
t^i\\
1\\
0\\
\end{smallmatrix}\right]
=C_{(i, 0)}\left[\begin{smallmatrix}
0\\
\psi(t^i)\\
1\\
0\\
\end{smallmatrix}\right],
$$
where
$\psi({\mathfrak z})$ is the
 M\"{o}bius transformation
defined by
 the matrix
$\left[\begin{smallmatrix}
1 &  {\delta} \\
\bar{\delta}  & 1
\end{smallmatrix}\right]$,
and $C_{(i, 0)} = 1 + \bar{\delta} t^i$.
Since $\psi$ maps $S^1$ to $S^1$, and $|t^i| =1$,
clearly $C_{(i, 0)} \not =0$, and
$\psi(t^i) \in S^1$.

Now we can use $h^{(1)}_i(x_2, x_1)
= (0, 1, \psi(t^i), 0)^{T}$ in place of
$g_i(x_1, x_2) = (0, 1, t^i, 0)^{T}$
and repeat this construction. Then we get
\[
h^{(2)}_i=
\left[\begin{smallmatrix}
* & 0& 0& *\\
0 & 1 &  \delta & 0\\
0 & \bar{\delta}& 1 & 0\\
* & 0& 0& *\\
\end{smallmatrix}\right]
\left[\begin{smallmatrix}
0\\
\psi(t^i)\\
1\\
0\\
\end{smallmatrix}\right]
=C_{(i, 1)}\left[\begin{smallmatrix}
0\\
\psi^2(t^i)\\
1\\
0\\
\end{smallmatrix}\right],
\]
where $\psi^2$ is the composition $\psi \circ \psi$,
corresponding to
$\left[\begin{smallmatrix}
1 &  {\delta} \\
\bar{\delta}  & 1
\end{smallmatrix}\right]^2$,
and $C_{(i, 1)} = 1 + \bar{\delta} \psi(t^i) \not =0$.

We can iterate this process and get polynomially many
 $h^{(k)}_i = (0, \psi^k(t^i), 1, 0)^{T}$
for $1 \le i \le 5$ and $k \ge 1$.

If for each $i\in\{1, 2, 3\}$, there is some $n_i > 0$ such that
$\psi^{n_i}(t^i)=t^i$,  then $\psi^{n_0}(t^i)=t^i$,
 for $n_0=n_1n_2n_3 >0$,
and all $1\leq i\leq 3$, i.e., the M\"{o}bius transformation $\psi^{n_0}$
 fixes three distinct
complex numbers $t, t^2, t^3$.
So the M\"{o}bius transformation is the identity map, i.e.,
$\psi^{n_0}(\mathfrak z)=\mathfrak z$ for all $\mathfrak z\in\mathbb{C}$.
This implies that $\psi(\mathfrak z)$ defines a group of finite order,
a contradiction.
Therefore, there is an $i \in \{1,2,3\}$
such that $\psi^{n}(t^i) \not = t^i$ for all  $n\in\mathbb{N}$.
This implies that
 $(1, \psi^{n}(t^i))$ are all distinct for $n\in\mathbb{N}$,
since $\psi$ maps $S^1$  1-1 onto $S^1$.
Then we can generate polynomially many distinct binary signatures
of the form $(0, 1, \psi^{n}(t^i), 0)^T$. By Lemma~\ref{lm: mobius: any binary}, for any $u\in\mathbb{C}$ we have
$\plholant{\neq_2}{f, (0, 1, u, 0)^T}\leq^p_T \plholant{\neq_2}{f, g}.$
\item
If $\lambda+\overline{\lambda}=0$ (thus $\lambda \in \mathfrak{i}\R$)  but $e^{2\mathfrak{i} \theta}\lambda+\bar{\lambda}\neq 0$, by connecting $f^\pi$ and $f$ using $(\neq_2)^{\otimes2}$, we get a signature $f_2$ with the signature matrix
 $$
 M(f_2)=M(f^\pi) N M(f) = \left[ \begin{matrix} 
 * & 0 & 0 & * \\ 
 0 & e^{2\mathfrak{i} \theta}\lambda+\bar{\lambda} & 1+e^{2\mathfrak{i}\theta}\lambda^2 & 0 \\ 
 0 & e^{2\mathfrak{i}\theta}+{\overline{\lambda}}^2 & e^{2\mathfrak{i} \theta}\lambda+\overline{\lambda} & 0 \\ 
 * & 0 & 0 & * \end{matrix} \right].
 $$

Normalize $M(f_2)$ by the nonzero scalar $e^{2\mathfrak{i}\theta} \lambda + \overline{\lambda}$,
we get
$M(f_2)=\left[\begin{smallmatrix}
* & 0 & 0 & *\\
0 & 1 & \delta & 0\\
0 & \bar{\delta} & 1 & 0\\
* & 0 & 0 & *\\
\end{smallmatrix}\right]$,
where $\delta=\frac{1+e^{2\mathfrak{i}\theta}\lambda^2}{e^{2\mathfrak{i} \theta}\lambda+\overline{\lambda}}$.  (We have used the fact that $\lambda \in \mathfrak{i}\R$.)
Again the inner matrix 
$\left[\begin{smallmatrix}
1 &  {\delta} \\
\bar{\delta}  & 1
\end{smallmatrix}\right]$
has infinite projective order and defines a M\"obius transformation from $S^1$ to $S^1$.
The same argument as above applies.

\item 
If $\lambda+\overline{\lambda}=e^{2\mathfrak{i} \theta}\lambda+\bar{\lambda}=0$,
then $e^{\mathfrak{i}\theta}=\pm 1$ as $\lambda\neq 0$.
If $e^{\mathfrak{i}\theta}=1$, then 
$M_{\rm In}(f)=\left[\begin{smallmatrix}
1 & \lambda\\
\overline{\lambda} & 1
\end{smallmatrix}\right]
$
has infinite projective order and defines a M\"obius transformation from $S^1$ to $S^1$.
The same argument as above applies.

In the following we assume $e^{\mathfrak{i}\theta}=-1$. This is a tricky case,
and we will use some fact about  cyclotomic fields from algebraic number theory. 
After a normalization by $-1$, and note that $\overline{\lambda}=-\lambda$, the signature matrix of $f$ can be written as
$$M(f)=\left[\begin{matrix}
* & 0 & 0 & *\\
0 & 1 & \lambda & 0\\
0 & \lambda & -1 & 0\\
* & 0 & 0 & *\\
\end{matrix}\right]
=\left[\begin{matrix}
* & 0 & 0 & *\\
0 & 1 & r\mathfrak{i} & 0\\
0 & r\mathfrak{i} & -1 & 0\\
* & 0 & 0 & *\\
\end{matrix}\right],$$
where $r\in \R$ and $r\neq0,\pm1$ by $\lambda\neq 0$ and $|\lambda|\neq 1$.
By connecting variables $x_1$ and $x_2$ of $f$ using $\neq_2$, we realize the binary signature $g_0=(0,1+r\mathfrak{i},r\mathfrak{i}-1,0)^T$.
Connecting two copies of $f$ using $(\neq_2)^{\otimes2}$, we get the signature $f_3$ with the signature matrix 
$$
 M(f_3)=M(f) N M(f) = \left[ \begin{matrix} 
 * & 0 & 0 & * \\ 
 0 & 2r\mathfrak{i} & -r^2-1 & 0 \\ 
 0 & -r^2-1 & -2r\mathfrak{i} & 0 \\ 
 * & 0 & 0 & * \end{matrix} \right]
=2r\mathfrak{i}\left[ \begin{matrix} 
 * & 0 & 0 & * \\ 
 0 & 1 & \frac{r^2+1}{2r}\mathfrak{i} & 0 \\ 
 0 & \frac{r^2+1}{2r}\mathfrak{i} & -1 & 0 \\ 
 * & 0 & 0 & * \end{matrix} \right].
$$
Define the map $$T(r):=\frac{r^2+1}{2r}, \text{ for } r\in\R,r\neq 0,\pm1.$$
Note that if $r\neq 0,\pm1$, then $T(r)\neq 0,\pm1$.
By connecting variables $x_1$ and $x_2$ of $f_3$ using $\neq_2$, we realize the binary signature $g_1=(0,1+T(r)\mathfrak{i},T(r)\mathfrak{i}-1,0)^T$.
If we replace $f$ with $f_3$ and also $r$ with $T(r)$, we can further realize $g_2=(0,1+T^{(2)}(r)\mathfrak{i},T^{(2)}(r)\mathfrak{i}-1)^T$.
Repeating the process, we can realize $g_n=(0,1+T^{(n)}(r)\mathfrak{i},T^{(n)}(r)\mathfrak{i}-1)^T$, for arbitrary fixed $n\in \N$.
Let $u_n:=\frac{1+T^{(n)}(r)\mathfrak{i}}{T^{(n)}(r)\mathfrak{i}-1}$ and $v_n:=-u_n=\frac{1+T^{(n)}(r)\mathfrak{i}}{1-T^{(n)}(r)\mathfrak{i}}$.
If there exists $n\in \N$ such that $u_n$ is not a root of unity, 
then we have the binary $g_n=(0,1,u_n,0)$ and by Corollary~\ref{cor: mobius: with (0,1,t,0) not of unity} we are done.
Suppose by contradiction that for every $n\in \N$, $u_n$ is a root of unity, and thus $v_n$ is a root of unity.
Define the sequence $\{r_n\}_{n\in \N}$ by $r_0=r$ and $r_{n+1}=T(r_n)$. 
Then for every $n\in \N$, $v_n=\frac{1+r_n\mathfrak{i}}{1-r_n\mathfrak{i}}$ is a root of unity.
Suppose $v_0=\frac{1+r\mathfrak{i}}{1-r\mathfrak{i}}$ is a $p$-th root of unity.
It's clear that $\{v_n\}_{n\in \N}\subseteq \Q[\zeta_p,\mathfrak{i}]$, which is a  fixed finite extension of $\Q$ (as $p$ is fixed). 
\begin{itemize}
    \item Assume $r>0$.
    Since $r\neq 1$ we have $r_1=T(r)=\frac{r^2+1}{2r}>1$.
    Given $r_n>1$, we have $r_{n+1}=T(r_n)=\frac{r_n^2+1}{2r_n}>1$.
    Also, $r_{n+1}-r_n=\frac{1-r_n^2}{2r_n}<0$.
    So $\{r_n\}_{n\ge1}$ is strictly
    monotonically decreasing, thus
    $\lim\limits_{n\to\infty}r_n$ exists and is bounded below by 1
    (thus the limit is nonzero), and by taking the limit in 
    $r_{n+1}=\frac{r_n^2+1}{2r_n}$, we get
     $\lim\limits_{n\to\infty}r_n=1$.
    So $\theta_n:=\arg(v_n)=2\arctan(r_n)$ is strictly decreasing and $\lim\limits_{n\to \infty}\theta_n=\frac{\pi}{2}$.
    For any $\epsilon>0$, there exists $N\in \N$ such that $\theta_N\in (\frac{\pi}{2},\frac{\pi}{2}+\epsilon)$.
    By assumption, $v_N$ is a root of unity of (minimal) order $m$.
    Then $m\theta_N=2k \pi$ for some $k\in \N$.
    This implies $$\frac{\theta_N}{\pi}-\frac{1}{2}=\frac{4k-m}{2m}.$$
    Since $\theta_N>\frac{\pi}{2}$, we have $4k-m\ge1$.
    Thus, 
    $$
    \frac{1}{2m}\le \frac{4k-m}{2m}=\frac{\theta_N}{\pi}-\frac{1}{2}<\frac{\epsilon}{\pi}.
    $$
    So $m>\frac{\pi}{2\epsilon}$. 
    When $\epsilon$ is arbitrarily small, $m$ is arbitrarily large, and $\deg_{\Q}(v_N)=\varphi(m)>\frac{m}{\log\log m}$ can be arbitrarily large, 
    where $\varphi(\cdot)$ is Euler's totient function, which is
    the  degree of the field extension over $\Q$, for the cyclotomic field $\Q_m$,
    contradicting $v_N\in \Q[\zeta_p,\mathfrak{i}]$.

    \hspace{0.3cm}
    \item 
    Assume $r<0$.
    Similarly we can show that $\{r_n\}_{n\ge1}$ is increasing and $\lim\limits_{n\to \infty}r_n=-1$. 
    Note that $\overline{v_n}=\frac{1-r_n\mathfrak{i}}{1+r_n\mathfrak{i}}$, where $\{-r_n\}_{n\ge 1}$ is decreasing and $\lim\limits_{n\to \infty}(-r_n)=1$.
    Then the same argument as in the case $r>0$ applies.
\end{itemize}
\end{itemize}
\end{proof}

\begin{corollary}\label{cor: get (0,1,0,0) from a binary third or fourth root}
Let $g = (0,1,t,0)^T$ be a binary signature where $t$ is an $n$-th primitive root of unity for $n = 3$ or 4, and let $f$ be a signature with the signature matrix $M(f) = \left[\begin{smallmatrix} a & 0 & 0 & b \\ 0 & c & d & 0 \\ 0 & w & z & 0 \\ y & 0 & 0 & a \end{smallmatrix} \right]$ with $acdzw \neq 0$, where $\left[\begin{smallmatrix} c & d \\ w & z\end{smallmatrix}\right]$ has full rank, then $\plholant{\neq_2}{f,(0,1,0,0)^T} \leq^p_{T} \plholant{\neq_2}{f,g}$.
\end{corollary}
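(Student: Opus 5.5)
We reuse the M\"obius machinery from the proof of Lemma~\ref{lm: get any binary from a binary not fifth root}, with three cases. Since $t$ is a primitive $n$-th root of unity with $n\in\{3,4\}$, chaining copies of $g$ via $\neq_2$ gives the $n$ distinct signatures $g_i=(0,1,t^i,0)^T$ for $0\le i\le n-1$, including $t^0=1$. Closing $x_3,x_4$ of $f$ onto $g_i$ gives $(wt^i+z)(0,\varphi(t^i),1,0)^T$, where $\varphi(\mathfrak z)=\frac{c\mathfrak z+d}{w\mathfrak z+z}$ is a genuine M\"obius transformation because the inner matrix is non-singular. Any output $\varphi(t^i)=0$ already supplies $(0,1,0,0)^T$ (after rotation), and any $\varphi(t^i)=\infty$ supplies $(0,1,0,0)^T$ from the degenerate form $(0,ct^i+d,0,0)^T$. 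In both situations we are done.

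\emph{Case 1: some $\varphi(t^i)$ is neither $0$, $\infty$, nor a root of unity.} Corollary~\ref{cor: mobius: with (0,1,t,0) not of unity} then yields every $(0,1,u,0)^T$, and in particular $u=0$.

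\emph{Case 2: every $\varphi(t^i)$ has modulus $1$.} If at least three of the points $t^i$ are sent into $S^1$, then $\varphi$ maps $S^1$ onto $S^1$ and has the form $\mathcal M(\lambda,e^{\mathfrak i\theta})$. We repeat the trichotomy from the proof of Lemma~\ref{lm: get any binary from a binary not fifth root}. In the subcases $\lambda+\bar\lambda\ne0$, or $e^{2\mathfrak i\theta}\lambda+\bar\lambda\ne0$, or $e^{\mathfrak i\theta}=1$, we obtain an inner matrix $\left[\begin{smallmatrix}1&\delta\\ \bar\delta&1\end{smallmatrix}\right]$ with $|\delta|\ne0,1$, which has infinite projective order. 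In the remaining subcase $e^{\mathfrak i\theta}=-1$, the cyclotomic-degree limit argument produces a $u_n$ that is not a root of unity. Whenever the resulting M\"obius group is infinite, iterating it on $t^i$ yields polynomially many distinct binaries, and Lemma~\ref{lm: mobius: any binary} again gives $u=0$.

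\emph{Main obstacle.} The delicate step is the orbit argument. Previously, infinitude of the orbit of some $t^i$ used three distinct fixed points $t,t^2,t^3$. Now we only have the $n\ge3$ distinct points $1,t,\dots,t^{n-1}$. For $n=3$ or $4$ this still provides three distinct points of $S^1$, so if all of their orbits were finite, some power $\psi^{N}$ would fix three points. Then $\psi^N$ would be the identity, contradicting infinite projective order.

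The weaker hypothesis also means we must recount how many points land on $S^1$. With $n=3$, if one of the three values is $0$ and another is $\infty$, we are already done, so that configuration needs no further treatment. Otherwise either Case 1 applies or all three points map into $S^1$, so three points on the circle are available and Case 2 applies. For $n=4$, the same counting leaves at least two points on $S^1$ when both $0$ and $\infty$ occur, but that configuration already gives $(0,1,0,0)^T$. So whenever the $0,\infty$ shortcut does not apply, we land in Case 1 or have at least three points on $S^1$, and the argument goes through.
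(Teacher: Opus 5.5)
Your proposal is correct and takes essentially the same route as the paper. If some $\varphi(t^i)\in\{0,\infty\}$ you read off $(0,1,0,0)^T$ directly; otherwise you rerun the proof of Lemma~\ref{lm: get any binary from a binary not fifth root} on three distinct points of $S^1$ (your $1,t,\dots,t^{n-1}$ are the paper's $t,t^2,t^3$ up to relabeling, since $t^n=1$), and three points now suffice because the $0/\infty$ exceptions no longer need to be budgeted for. Your recount paragraph is redundant, since the $0/\infty$ shortcut already covers a single occurrence of either value.
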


\begin{proof}
Suppose we have the binary signature  $(0, 1, t, 0)^T$
where $t$ is
an $n$-th primitive root of unity, for $n=3$ or $n=4$.
Then $(0, 1, t^i, 0)^T$ are distinct for $i=1, 2, 3$.
In the proof of Lemma~\ref{lm: get any binary from a binary not fifth root},
if one of $\varphi(t^i)=0$ or $\infty$, then we have $(0, 1, 0, 0)^T$
or $(0, 0, 1, 0)^T$, which means we have both.
If none of $\varphi(t^i)=0$ or $\infty$ for  $i=1, 2, 3$,
then the proof of  Lemma~\ref{lm: get any binary from a binary not fifth root}
shows that we have $(0, 1, u, 0)^T$ for
any $u\in\mathbb{C}$. Thus we have $(0, 1, 0, 0)^T$ as well.
The corollary follows.
\end{proof}

\begin{corollary}\label{cor:get any binary when exactly three nonzero}
Let $f$ be a 4-ary signature with the signature matrix $M(f) = \left[\begin{smallmatrix} a & 0 & 0 & b \\ 0 & c & d & 0 \\ 0 & w & z & 0 \\ y & 0 & 0 & a \end{smallmatrix} \right]$. If there are exactly three nonzero entries in the inner matrix $\left[\begin{smallmatrix} c& d \\ w & z \end{smallmatrix}\right]$ then for any $u \in \mathbb{C}$,
$$
\plholant{\neq_2}{f,(0,1,u,0)^T} \leq_T^p \plholant{\neq_2}{f}.
$$
\end{corollary}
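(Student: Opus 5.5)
The plan is to reduce to Lemma~\ref{lm: get any binary from a binary not fifth root} or Corollary~\ref{cor: mobius: with (0,1,t,0) not of unity} by first producing a suitable ``trigger'' binary signature directly from $f$. Up to rotation (which permutes the inner matrix entries $c,d,w,z$ only among $\{c,z\}$ and swaps $d,w$; more precisely the inner pair $(d,w)$ stays inner while $c,z$ move to outer positions), we must handle which entry of $\left[\begin{smallmatrix} c& d \\ w & z \end{smallmatrix}\right]$ is zero. Because the inner matrix has exactly three nonzero entries, its determinant is a product of two nonzero entries, hence nonzero. So $\varphi(\mathfrak z)=\frac{c\mathfrak z+d}{w\mathfrak z+z}$ is a genuine M\"obius transformation, and it sends $0$ or $\infty$ to $0$ or $\infty$ in a lopsided way. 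For example, if $z=0$, then $\varphi(\mathfrak z)=\frac{c}{w}+\frac{d}{w\mathfrak z}$.

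The first step is to loop $f$. Connecting $x_3,x_4$ via $\neq_2$ gives $(0,c+d,w+z,0)^T$, and connecting $x_1,x_2$ gives $(0,c+w,d+z,0)^T$. With exactly one zero in the inner matrix, say $z=0$, these become $(0,c+d,w,0)^T$ and $(0,c+w,d,0)^T$. Each has a nonzero second entry in the right slot, so after normalization and rotation by $\pi$ we obtain binaries $(0,1,t,0)^T$ with explicit $t$, such as $t=\frac{w}{c+d}$ (when $c+d\ne0$) or $t=0$ (when $c+d=0$ or $c+w=0$).

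If some such $t$ is $0$, we have $(0,1,0,0)^T$. Then the orbit $\varphi(0),\varphi^2(0),\dots$ obtained by repeatedly attaching $f$ via $N$ produces the binaries $(0,1,\varphi^k(0),0)^T$. The key point is that $\varphi$, represented by a triangular-like matrix with one zero entry (for $z=0$, the matrix $\left[\begin{smallmatrix} c& d \\ w & 0 \end{smallmatrix}\right]$), cannot preserve the unit circle, since $\mathcal{M}(\lambda,e^{\mathfrak i\theta})$ forms require all four entries nonzero when $\lambda\ne0$ and a diagonal form when $\lambda=0$. Hence either the orbit is infinite and we can apply Lemma~\ref{lm: mobius: any binary}, or some iterate is neither $0$, $\infty$, nor a root of unity and we apply Corollary~\ref{cor: mobius: with (0,1,t,0) not of unity}. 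If $t\neq 0$, we similarly iterate $\varphi$ on $t$, or directly on powers $t^i$ if $t$ is not a root of unity.

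The main obstacle is showing that the orbit of the M\"obius map is genuinely infinite, which means excluding that $\varphi$ has finite projective order with the orbit cycling through $\{0,\infty\}\cup S^1$. My plan is to use the eigenvalue criterion: a finite projective order forces the ratio of eigenvalues of $\left[\begin{smallmatrix} c& d \\ w & z \end{smallmatrix}\right]$ to be a root of unity other than $1$. In the remaining finite-order cases I would switch to the composite gadget $M(f)NM(f^{\pi})$, as in the proof of Lemma~\ref{lm: get any binary from a binary not fifth root}, whose inner matrix has a different trace/determinant ratio. I would then verify by a short trace computation that both cannot simultaneously have finite order while keeping exactly three nonzeros.
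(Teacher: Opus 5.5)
Your argument has a genuine gap at the point you yourself flag as the main obstacle, and neither device you propose for it works.

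\textbf{The inference from non-preservation of $S^1$ fails.} You argue that because $\varphi$ does not preserve $S^1$, the orbit of $0$ is either infinite or contains a point other than $0$, $\infty$ or a root of unity. This does not follow. Non-preservation of $S^1$ only forbids $\varphi$ from sending three distinct points of $S^1$ into $S^1$. A finite cycle passing through $0$ and $\infty$ can still consist entirely of $0$, $\infty$ and roots of unity. Take $c=w=1$, $d=-1$, $z=0$:
\begin{itemize}
\item the $x_3x_4$-loop gives exactly $t=0$, since $c+d=0$;
\item $\varphi(\mathfrak z)=1-1/\mathfrak z$, and the orbit is $0\mapsto\infty\mapsto 1\mapsto 0$;
\item so iterating $\varphi$, even interleaved with $t\mapsto 1/t$, only ever produces $(0,1,0,0)^T$, $(0,0,1,0)^T$ and $(0,1,1,0)^T$.
\end{itemize}
Here the other loop happens to give $t=-1/2$, but nothing in your argument guarantees such a rescue in general.

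\textbf{The finite-order fallback is false as stated.} You claim $\varphi$ and the M\"obius map of $M(f)NM(f^{\pi})$ cannot both have finite projective order. For $c=2\ii$, $d=4$, $w=1$, $z=0$ the two inner matrices are $\left[\begin{smallmatrix}2\ii&4\\1&0\end{smallmatrix}\right]$ and $\left[\begin{smallmatrix}2\ii&12\\1&2\ii\end{smallmatrix}\right]$. Both have $\operatorname{tr}^2/\det=1$, so both have projective order $3$. As written, you therefore have no mechanism that provably yields either infinitely many distinct binaries or a parameter that is not $0$, $\infty$ or a root of unity.

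\textbf{The missing idea.} With exactly one zero in the inner matrix, $f$ and $f^{\pi}$, together with $\pi$-rotations of the binaries, realize two \emph{affine} maps with reciprocal slopes. Their composition is a nontrivial translation, which has infinite orbits.
\begin{itemize}
\item Case $z=0$. Connecting the rotated binary $(0,t,1,0)^T$ to $x_3,x_4$ of $f$ gives $(0,c+dt,w,0)^T$. Its $\pi$-rotation is $w\,(0,1,\frac{dt+c}{w},0)^T$. Connecting $(0,1,s,0)^T$ to $f^{\pi}$ gives $(0,d,ws+c,0)^T=d\,(0,1,\frac{ws+c}{d},0)^T$. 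The composite is $t\mapsto t+2c/d$, with $c\neq 0$.
\item Starting point. The loop $(0,c+d,w,0)^T$, after scaling and possibly a $\pi$-rotation, supplies a finite parameter $t_0$. The values $t_0+2kc/d$ are pairwise distinct and have polynomial-size descriptions, so Lemma~\ref{lm: mobius: any binary} finishes.
\item Case $w=0$. One gets $t\mapsto (ct+d)/z$ and $s\mapsto (zs+d)/c$, which compose to $t\mapsto t+2d/c$.
\item Remaining cases. $c=0$ reduces to $z=0$ via $f^{\pi}$, and $d=0$ is handled analogously to $w=0$.
\end{itemize}
For comparison, the paper itself simply invokes Corollary~4.4 of~\cite{CaiF17}, notes that its construction is planar, and reduces by rotation to $z=0$ or $w=0$.
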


\begin{proof}
The proof of Corollary 4.4 in~\cite{CaiF17} preserves planarity. 
By the rotation symmetry, we can assume that $(z= 0) \wedge (cdw \neq0)$ or $(w=0)\wedge (cdz \neq 0)$, where the former case is handled in the proof there and the latter case can be done similarly.
\end{proof}

\bibliographystyle{plain}
\bibliography{reference.bib}

@article{CaiF17,
title = {Complexity classification of the eight-vertex model},
journal = {Information and Computation},
volume = {293},
pages = {105064},
year = {2023},
issn = {0890-5401},
doi = {https://doi.org/10.1016/j.ic.2023.105064},
url = {https://www.sciencedirect.com/science/article/pii/S0890540123000676},
author = {Jin-Yi Cai and Zhiguo Fu},
}

@inproceedings{CaiFS21,
  author    = {Jin{-}Yi Cai and
               Zhiguo Fu and
               Shuai Shao},
  title     = {New Planar {P}-time Computable Six-Vertex Models and a Complete Complexity
               Classification},
  booktitle = {{SODA}},
  pages     = {1535--1547},
  publisher = {{SIAM}},
  year      = {2021}
}

@article{CaiLX14,
  author    = {Jin{-}Yi Cai and
               Pinyan Lu and
               Mingji Xia},
  title     = {The complexity of complex weighted Boolean {\#}{CSP}},
  journal   = {J. Comput. Syst. Sci.},
  volume    = {80},
  number    = {1},
  pages     = {217--236},
  year      = {2014}
}

@article{CaiL20,
  author    = {Jin{-}Yi Cai and
               Tianyu Liu},
  title     = {{FPRAS} via {MCMC} where it mixes torpidly (and very little effort)},
  journal   = {CoRR},
  volume    = {abs/2010.05425},
  year      = {2020}
}

@article{CaiFX18,
  author    = {Jin{-}Yi Cai and
               Zhiguo Fu and
               Mingji Xia},
  title     = {Complexity classification of the six-vertex model},
  journal   = {Inf. Comput.},
  volume    = {259},
  number    = {Part},
  pages     = {130--141},
  year      = {2018}
}

@article{k-regular-spin,

author    = {Peng Yang and
               Yuan Huang and
               Zhiguo Fu},
title = {A complexity trichotomy for k-regular asymmetric spin systems with complex edge functions},
journal = {Theoretical Computer Science},
volume = {1020},
pages = {114835},
year = {2024},
}

@article{caiguowilliams13,
  title={A complete dichotomy rises from the capture of vanishing signatures},
  author={Cai, Jin-Yi and Guo, Heng and Williams, Tyson},
  journal={SIAM Journal on Computing},
  volume={45},
  number={5},
  pages={1671--1728},
  year={2016},
  publisher={SIAM}
}

@article{val02b,
  title={Expressiveness of matchgates},
  author={Valiant, Leslie G.},
  journal={Theoretical Computer Science},
  volume={289},
  number={1},
  pages={457--471},
  year={2002},
  publisher={Elsevier}
}

@book{jcbook,
  title={Complexity Dichotomies for Counting Problems: Volume 1, Boolean Domain},
  author={Cai, Jin-Yi and Chen, Xi},
  year={2017},
  publisher={Cambridge University Press}
}

@article{Val08,
  title={Holographic algorithms},
  author={Valiant, Leslie G.},
  journal={SIAM Journal on Computing},
  volume={37},
  number={5},
  pages={1565--1594},
  year={2008},
  publisher={SIAM}
}

@book{ahlfors,
  title={Complex analysis: an introduction to the theory of analytic functions of one complex variable},
  author={Ahlfors, Lars},
  year={1979},
  publisher={McGraw-Hill}
}

@article{bulatov,
  title={The complexity of the counting constraint satisfaction problem},
  author={Bulatov, Andrei},
  journal={Journal of the ACM (JACM)},
  volume={60},
  number={5},
  pages={1--41},
  year={2013},
  publisher={ACM New York, NY, USA}
}

@article{caichen,
  title={{Complexity of counting CSP with complex weights}},
  author={Cai, Jin-Yi and Chen, Xi},
  journal={Journal of the ACM (JACM)},
  volume={64},
  number={3},
  pages={1--39},
  year={2017},
  publisher={ACM New York, NY, USA}
}

@article{caifu16,
  title={Holographic algorithm with matchgates is universal for planar \#{CSP} over boolean domain},
  author={Cai, Jin-Yi and Fu, Zhiguo},
  journal={SIAM Journal on Computing},
  volume={special issue of STOC 17},
  pages={50--151},
  year={2022},
  publisher={SIAM}
}

@article{Cai-Lu-Xia,
  title={Computational complexity of {H}olant problems},
  author={Cai, Jin-Yi and Lu, Pinyan and Xia, Mingji},
  journal={SIAM Journal on Computing},
  volume={40},
  number={4},
  pages={1101--1132},
  year={2011},
  publisher={SIAM}
}

@article{dyer-richerby,
  title={An effective dichotomy for the counting constraint satisfaction problem},
  author={Dyer, Martin and Richerby, David},
  journal={SIAM Journal on Computing},
  volume={42},
  number={3},
  pages={1245--1274},
  year={2013},
  publisher={SIAM}
}

@article{goldberg-jerrum-patterson,
  title={The computational complexity of two-state spin systems},
  author={Goldberg, Leslie Ann and Jerrum, Mark and Paterson, Mike},
  journal={Random Structures \& Algorithms},
  volume={23},
  number={2},
  pages={133--154},
  year={2003},
  publisher={Wiley Online Library}
}

@article{goldberg-jerrum-potts,
  title={Approximating the partition function of the ferromagnetic {Potts} model},
  author={Goldberg, Leslie Ann and Jerrum, Mark},
  journal={Journal of the ACM (JACM)},
  volume={59},
  number={5},
  pages={1--31},
  year={2012},
  publisher={ACM New York, NY, USA}
}

@article{jerrum-sinclair,
  title={{Polynomial-time approximation algorithms for the Ising model}},
  author={Jerrum, Mark and Sinclair, Alistair},
  journal={SIAM Journal on Computing},
  volume={22},
  number={5},
  pages={1087--1116},
  year={1993},
  publisher={SIAM}
}

@incollection{Kasteleyn1961,
  title={The statistics of dimers on a lattice},
  author={Kasteleyn, Pieter W.},
  booktitle={Classic Papers in Combinatorics},
  pages={281--298},
  year={2009},
  publisher={Springer}
}

@article{Kasteleyn1967,
  title={Graph theory and crystal physics},
  author={Kasteleyn, Pieter W.},
  journal={Graph theory and theoretical physics},
  pages={43--110},
  year={1967},
  publisher={Academic Press}
}

@inproceedings{lu-ying-li,
  title={Correlation decay up to uniqueness in spin systems},
  author={Li, Liang and Lu, Pinyan and Yin, Yitong},
  booktitle={Proceedings of the twenty-fourth annual ACM-SIAM symposium on Discrete algorithms},
  pages={67--84},
  year={2013},
  organization={SIAM}
}

@article{Pauling,
  title={The structure and entropy of ice and of other crystals with some randomness of atomic arrangement},
  author={Pauling, Linus},
  journal={Journal of the American Chemical Society},
  volume={57},
  number={12},
  pages={2680--2684},
  year={1935},
  publisher={ACS Publications}
}

@article{TF61,
  title={Dimer problem in statistical mechanics-an exact result},
  author={Temperley, Harold N. V. and Fisher, Michael E.},
  journal={Philosophical Magazine},
  volume={6},
  number={68},
  pages={1061--1063},
  year={1961},
  publisher={Taylor \& Francis}
}

@article{val02a,
  title={Quantum circuits that can be simulated classically in polynomial time},
  author={Valiant, Leslie G.},
  journal={SIAM Journal on Computing},
  volume={31},
  number={4},
  pages={1229--1254},
  year={2002},
  publisher={SIAM}
}

@article{bulatov2012csp,
  title={The complexity of weighted and unweighted \#{CSP}},
  author={Bulatov, Andrei and Dyer, Martin and Goldberg, Leslie Ann and Jalsenius, Markus and Jerrum, Mark and Richerby, David},
  journal={Journal of Computer and System Sciences},
  volume={78},
  number={2},
  pages={681--688},
  year={2012},
  publisher={Elsevier}
}

@article{cai-chen-lu-nonnegative-csp,
  title={Nonnegative weighted \#{CSP}: An effective complexity dichotomy},
  author={Cai, Jin-Yi and Chen, Xi and Lu, Pinyan},
  journal={SIAM Journal on Computing},
  volume={45},
  number={6},
  pages={2177--2198},
  year={2016},
  publisher={SIAM}
}

@inproceedings{real-holant-focs2021,
  title={A Dichotomy for Real {Boolean} {Holant} Problems},
  author={Shao, Shuai and Cai, Jin-Yi},
  booktitle={Proceedings of the 61st IEEE Annual Symposium on Foundations of Computer Science (FOCS)},
   pages= {1091--1102},
  year={2020}
}

@article{cai2015holant,
  title={{FKT is not universal} — {A planar Holant dichotomy for symmetric constraints}},
  author={Cai, Jin-Yi and Fu, Zhiguo and Guo, Heng and Williams, Tyson},
  journal={Theory of Computing Systems},
  volume={66},
  number={1},
  pages={143--308},
  year={2022},
  publisher={Springer}
}

@inproceedings{de2011quantum,
  title={Quantum algorithms for classical lattice models},
  author={De\hspace{1ex}las\hspace{1ex}Cuevas, Gemma and D{\"u}r, Wolfgang and Van\hspace{1ex}den\hspace{1ex}Nest, Maarten and MartinDelgado, Miguel A},
  journal={New Journal of Physics},
  volume={13},
  number={9},
  pages={093021},
  year={2011},
  publisher={IOP Publishing}
}

@article{sutherland1970two,
  title={Two-Dimensional Hydrogen Bonded Crystals without the Ice Rule},
  author={Sutherland, Bill},
  journal={Journal of Mathematical Physics},
  volume={11},
  number={11},
  pages={3183--3186},
  year={1970},
  publisher={American Institute of Physics}
}

@article{fan1970general,
  title={General lattice model of phase transitions},
  author={Fan, Chungpeng and Wu, F Yu},
  journal={Physical Review B},
  volume={2},
  number={3},
  pages={723},
  year={1970},
  publisher={APS}
}

@article{WELSH1969375,
title = "Euler and bipartite matroids",
journal = "Journal of Combinatorial Theory",
volume = "6",
number = "4",
pages = "375 - 377",
year = "1969",
issn = "0021-9800",
doi = "https://doi.org/10.1016/S0021-9800(69)80033-5",
url = "http://www.sciencedirect.com/science/article/pii/S0021980069800335",
author = "D.J.A. Welsh",
}

@article{Valiant79,
  title={The Complexity of Enumeration and Reliability Problems},
  author={Valiant, Leslie G.},
  journal={SIAM Journal on Computing},
  volume={8},
  number={3},
  pages={410--421},
  year={1979},
  publisher={SIAM}
}

@article{baxter1971eightvertex,
  author       = {Baxter, R. J.},
  title        = {Eight‑Vertex Model in Lattice Statistics},
  journal      = {Physical Review Letters},
  volume       = {26},
  number       = {14},
  pages        = {832--834},
  month        = apr,
  year         = {1971},
  doi          = {10.1103/PhysRevLett.26.832}
}

\end{document}